\newtheorem{thm}{Theorem}[section]
\newtheorem{cor}{Corollary}[section]
\newtheorem{appxlem}{Lemma}[chapter]
\newtheorem{example}{Example}   
\newbox\qedbox
\newenvironment{proof}{\smallskip\noindent{\bf Proof.}\hskip \labelsep}
                        {\hfill\penalty10000\copy\qedbox\par\medskip}
\begin{document}
\allowdisplaybreaks


   \thispagestyle{empty}%
        \null\vskip2in%
        \begin{center}
                {\LARGE{Generalized EMP and nonlinear Schr\"odinger-type reformulations\\ of some scalar field cosmological models}}
        \end{center}
        \vfill        \begin{center}
                 A Dissertation Presented\\
                        \hfill \\
                            by\\
                        \hfill \\
                    {Jennie D'Ambroise}  
        \end{center}
        \vfill
        \begin{center}
          Submitted to the Graduate School of the \\
          University of Massachusetts Amherst in partial fulfillment \\
          of the requirements for the degree of\\
                \hfill \\
                Doctor of Philosophy \\
                \hfill \\
                May 2010\\
                \hfill \\
                Department of Mathematics and Statistics \\
        \end{center}\vskip.5in\newpage

        \thispagestyle{empty}%
        \null\vfill
        \begin{center}
                \copyright\ Copyright by Jennie D'Ambroise \@2010\\
                All Rights Reserved
        \end{center}
        \vfill\newpage

\newpage
   \thispagestyle{empty}%
   \renewcommand{\abstractname}{\large Dedication}
   \begin{abstract}
\begin{center}
This paper is dedicated to my family,\\ for supporting me even when I am incomprehensible.
\end{center}
\end{abstract}
\pagebreak





\newpage
   \thispagestyle{empty}%
   \renewcommand{\abstractname}{\large Acknowledgements}
\begin{abstract}
\begin{quote}
One would like to believe that life is simple and that mathematics is easy to understand.  I would like to thank Floyd Williams for showing me that sometimes it is possible for both to be true.
 \end{quote}
 \vspace{-.09in}
 \begin{quote}
To my thesis committee members, Professors Panayotis Kevrekidis, Robert Kusner and Jennie Traschen, thank you for many useful suggestions on my research.  Also, thanks to Michael Satz for always entertaining mathematical conversation.
\end{quote}
 \vspace{-.09in}
\begin{quote}
I would like to thank the countless individuals who have made possible a rich and diverse experience over the course of more than a decade of education at the University of Massachusetts at Amherst.  The programs in place have allowed me to have numerous abroad and professional experiences, and the friendly and inquisitive nature of the individuals in the Department of Mathematics and Statistics has made for an enjoyable and rigorous graduate experience. 
\end{quote}
 \vspace{-.09in}
\begin{quote}
To my students, thank you for surprising me with your willingness to listen to me talk about mathematics.
\end{quote}
 \vspace{-.09in}
\begin{quote}
To my fellow martial artists, thank you for always welcoming me without question.
\end{quote}
 \vspace{-.09in}
\begin{quote}
To my dear friend John Gusler, thank you for always helping me to get where I need to go.
\end{quote}
\end{abstract}

   \renewcommand{\abstractname}{\large Abstract}
\begin{abstract}
We show that Einstein's gravitational field equations for the Friedmann-Robertson-Lema\^itre-Walker (FRLW) and for two conformal versions of the Bianchi I and Bianchi V  perfect fluid scalar field cosmological models, can be equivalently reformulated in terms of a single equation of either generalized Ermakov-Milne-Pinney (EMP) or (non)linear Schr\"odinger (NLS) type.  This work generalizes or presents an alternative to similar reformulations published by the authors who inspired this thesis:  R. Hawkins, J. Lidsey, T. Christodoulakis, T. Grammenos, C. Helias, P. Kevrekidis, G. Papadopoulos and F. Williams.  In particular we cast much of these authors' works into a single framework via straightforward derivations of the EMP and NLS equations from a simple linear combination of the relevant Einstein equations.  By rewriting the resulting expression in terms of the volume expansion factor and performing a change of variables, we obtain an uncoupled EMP or NLS equation that is independent of the imposition of additional conservation equations.  Since the correspondences shown here present an alternative route for obtaining exact solutions to Einstein's equations, we reconstruct many known exact solutions via their EMP or NLS  counterparts and show by numerical analysis the stability properties of many solutions.\end{abstract}

\pagenumbering{roman}
\tableofcontents
\listoftables
\listoffigures


\newpage

\pagenumbering{arabic}
\chapter{Introduction}                     

\section{Einstein's field equations (EFE)}
Einstein's gravitational field equations (EFE)
\begin{equation}G_{ij}=-\kappa T_{ij}+\Lambda g_{ij}\label{eq: theEFE}\end{equation}
 for $i,j\in\{0, \dots, d\}$, are the essential equations of general relativity in $d+1$ spacetime dimensions.  The Einstein tensor is $G_{ij}\stackrel{def.}{=}R_{ij}-\frac{1}{2}Rg_{ij}$ where $R_{ij}$ is the Ricci tensor, $R$ is the scalar curvature and $g_{ij}$ is the metric.  Also $\Lambda$ is the cosmological constant and $\kappa$ is a generalization of $8\pi G$, for $G$ Newton's constant, to $d+1$ spacetime dimensions.  In terms of Christoffel symbols of the second kind
\begin{equation}\Gamma_{ij}^k\stackrel{def.}{=}\frac{1}{2}\displaystyle\sum_{s=0}g^{sk}\left(g_{si,j}-g_{ij,s}+g_{js,i}\right)\label{eq: christoffel}\end{equation}
for $i,j,k\in\{0, 1, \dots, d\}$, we define the Ricci tensor to be
\begin{equation}R_{ij}\stackrel{def.}{=}\displaystyle\sum_{k=0}^d\left(\Gamma_{kj,i}^k-\Gamma_{ij,k}^k\right)+\displaystyle\sum_{m=0}^d\displaystyle\sum_{n=0}^d\Gamma_{im}^n\Gamma_{nj}^m-\displaystyle\sum_{m=0}^d\displaystyle\sum_{n=0}^d\Gamma_{ij}^m\Gamma_{nm}^n\label{eq: Rij-Gammaij}\end{equation}
for $i,j\in\{0,1,\dots, d\}$ (note that others may define the Ricci tensor to be the negative of (\ref{eq: Rij-Gammaij}), in which case the Einstein equations would be $G_{ij}=\kappa T_{ij}-\Lambda g_{ij}$).   In (\ref{eq: christoffel}), the subscript $_{,n}$ denotes differentiation with respect to $x_n$.

We will consider an energy-momentum tensor
\begin{equation}T_{ij}=T_{ij}^{(1)}+T_{ij}^{(2)}\label{eq: Tij=sumTijs}\end{equation}
that is the sum of two terms.  The first term is the energy-momentum tensor for a minimally coupled scalar field $\phi$ with potential $V$ so that 
\begin{equation}T_{ij}^{(1)}=
\partial_i\phi\partial_j\phi-g_{ij}\left[\frac{1}{2}\displaystyle\sum_{k=0}^d\partial^k\phi\partial_k\phi + V\circ\phi\right],\label{eq: Tijminimallycoupledphi}\end{equation}
where  $\circ$ denotes composition and $\partial^k\phi \stackrel{def.}{=} \sum_{l=0}^d g^{kl}\partial_l\phi$.    We take $\phi(t)$ to depend only on time $x_0=t$ so that (\ref{eq: Tijminimallycoupledphi}) reduces to
\begin{equation}T_{ij}^{(1)}=\delta_{0i}\delta_{0j}\dot\phi^2-g_{ij}\left[\frac{1}{2}g^{00}\dot\phi^2+V\circ\phi\right]\label{eq: Tijminimallycoupledphi(tonly)}\end{equation}
where $\delta  _{0i}\stackrel{def.}{=}\left\{\begin{array}{ll}1&\mbox{ if }i=0\\0&\mbox{ if }i\neq 0\end{array}\right.$.  
The second term in (\ref{eq: Tij=sumTijs}) is defined as
\begin{equation}T^{(2)}=\left( \begin{array}{ccccc} -\rho(t)g_{00}&&&&\\ &p(t)g_{11}&&&\\ &&p(t)g_{22}&&\\ &&&\ddots &\\ &&&&p(t)g_{dd} \end{array}\right)\label{eq: T2matrix}\end{equation}
for density and pressure functions $\rho(t), p(t)$, and where the off-diagonal entries are zero.

In the special case when the metric is diagonal and $g_{00}=-1$, (\ref{eq: Tijminimallycoupledphi(tonly)}) shows that 
\begin{equation}T_{00}^{(1)}=\frac{1}{2}\dot\phi^2+V\circ\phi\label{eq: T00scalarfldperffld}\end{equation}
and
\begin{equation}T_{ii}^{(1)}= g_{ii}\left(\frac{1}{2}\dot\phi^2-V\circ\phi\right)\label{eq: Tiiscalarfldperffld}\end{equation}
for $1\leq i\leq d$.
That is, $T_{ij}^{(1)}$ reduces to the energy-momentum tensor for a perfect fluid with density and pressure
\begin{equation}\rho_\phi(t)=\frac{1}{2}\dot\phi^2+V\circ\phi\quad\mbox{and} \quad p_\phi(t)=\frac{1}{2}\dot\phi^2-V\circ\phi,\end{equation}
respectively, in terms of the scalar field and potential.

In Chapter 2 we show the equivalence of solving three types of ordinary differential equations:  a generalized Ermakov-Milne-Pinney type equation, a non-linear Schr\"odinger type equation and a third equation that we will see arises in each of the cosmological models considered in this thesis.  Since the third type of equation is derived from Einstein's field equations in terms of scale factors, we will refer to it as a {\it scale factor equation}.    Each subsection of Chapters 3-7 shows the application of a correspondence established in Chapter 2 to a specific cosmological model.   Following each theorem in Chapters 3-7, we consider special cases where exact solutions of Einstein's equations are derived from exact solutions of an NLS or EMP.  For many examples we show numerical (in-)stability graphs of the exact solutions.  The appendices show some extra calculations including some computations with exact solutions to EMP and NLS equations \cite{Bazeia, Dabrowska, Levai}.  

This thesis is a partial response to the proposal by R. Hawkins and J. Lidsey that EMP equations may appear in certain pure scalar field or other classes of cosmological models \cite{HL}.  The results in Chapters 3-7 either generalize or present an alternative to the reformulations of Einstein's equations seen in \cite{1, 1.5, JDconfBI, JD1, JDFW, HL, 4, L, LP, 8, FWEMPBI}.   In \cite{1} and \cite{FWEMPBI}, the presence of an exponential term in the EMP formulation of a Bianchi I model couples the system to a second equation; in contrast, we find a simpler term so that the Bianchi I EMP here is not coupled to a second equation.  The methods here have been noticed by other researchers \cite{Gumjud, Gumjudexact} and have been used in some cosmological applications.  A brief overview of the methods in this thesis can be found in a shorter paper by the author \cite{JDVarna}.   For future work one may consider whether the methods presented here can be extended to non-Bianchi universes such as a Kantowski-Sachs.

\section{Conservation equations}

The divergence of Einstein's tensor $G_{ij}=R_{ij}-\frac{1}{2}Rg_{ij}$ is zero.  Therefore by Einstein's equations (\ref{eq: theEFE}) the divergence of the energy-momentum tensor is zero.  That is, the conservation equation
\begin{equation}div(T)_l=\displaystyle\sum_{i,j}g^{ij}\left[\partial_i T_{lj}-\displaystyle\sum_k\Gamma^k_{ij}T_{lk}-\displaystyle\sum_k\Gamma^k_{il}T_{kj}\right]=0\label{eq: divTij=0}\end{equation}
for $i, j, k, l\in\{0, 1, \dots d\}$ is automatically satisfied by any solution $g_{ij}$ of Einstein's equations.  When $T_{ij}$ is taken to be a sum as in (\ref{eq: Tij=sumTijs}), one can further impose the condition 
\begin{equation}div(T^{(2)})_l=0\label{eq: T2divzero}\end{equation}
for $0\leq l\leq d$.  The results of this thesis do not rely on $T_{ij}^{(2)}$ satisfying (\ref{eq: T2divzero}), which is not imposed in the theorems in Chapters 3-7.  For example, if the scalar field $\phi(t)$ and the potential $V$ are non-constant, the conservation equation (\ref{eq: T2divzero}) may not hold for arbitrary density $\rho(t)$ and pressure $p(t)$ satisfying a corresponding EMP or NLS.  To obtain $T_{ij}^{(2)}$ satisfying (\ref{eq: T2divzero}) one can impose an analogue conservation equation on the EMP or NLS side of the correspondence.  We record in Appendix F equivalent versions of the conservation equation (\ref{eq: T2divzero}) for $l=0$, translated into EMP and NLS variables for each of the theorems in Chapters 3-7.

\section{Guide to numerical and exact solutions }
\label{sec: super}

The mapping of closed form solutions of EMP or NLS equations to solutions of Einstein's equations cannot always be executed analytically, but we will consider some solutions for which an exact solution to Einstein's equations can indeed be  (re-)derived via the correspondences in this thesis.  Some exact solutions of Einstein's equations in the literature \cite{BJ, BBL, CardenasCampo, EM, ndimFRLW, GarcCatCamp, Gumjud, Gumjudexact, Joseph, 4, Lima, LP, OT, Padman, RRS} will be seen to arise from an exact solution of an EMP or NLS equation.  

For some solutions we show a stability graph, where the exact solution is shown in bold font for comparison.  The numerical solutions were generated by running the Livermore solver (LSODE) \cite{lsode} on the second order EMP or NLS equation, coupled to the differential equation for the reparameterization function.  In all cases we graph the volume expansion factor and show that most exact solutions seen here are unstable.  Solutions are said to be unstable here if the difference between the exact and numerical solutions grow by at least two orders of magnitude over the graphed time interval.

\chapter{The General Correspondences}
\section{Scale factor and generalized EMP equations}

In this thesis, the Einstein equations (\ref{eq: theEFE}) for a number of cosmological models are reduced to a scale factor equation of the form 
\begin{equation}
\dot{H}(t)+\delta    H(t)^2+\varepsilon\dot{\phi}(t)^2=\displaystyle\sum_{i=0}^N \frac{G_i(t)}{a(t)^{A_i}}\label{eq: AEFE}
\end{equation}
for $H(t)\stackrel{def.}{=}\dot{a}(t)/a(t)$ and $\delta  , \varepsilon, A_i\in\mathds{R}$.  In this section we show a mapping between solution sets $\left(a(t), \phi(t), G_0(t), \dots, G_N(t)\right)$ of (\ref{eq: AEFE}) and solution sets $\left(Y(\tau), \right.$ $\left. Q(\tau),\lambda_0(\tau), \dots, \lambda_N(\tau)\right)$ of the generalized EMP equation
\begin{equation}
Y''(\tau)+Q(\tau)Y(\tau)=\displaystyle\sum_{i=0}^N \frac{\lambda_i(\tau)}{Y(\tau)^{B_i}}\label{eq: AEMP}
\end{equation}
for $B_i\in\mathds{R}$.  The dictionary between solutions of (\ref{eq: AEFE}) and (\ref{eq: AEMP}) is as follows:
\begin{eqnarray}a(t)&=&Y(\tau (t))^{1/q}\label{eq: Aa-Y}\\
q\varepsilon \varphi '(\tau)^2&=&Q(\tau)\label{eq: Avarphi-Q}\\
G_i(t)&=&\frac{\theta^2}{q}\lambda_i(\tau(t)) \label{eq: AGi-lambdai}\end{eqnarray}
for $0\leq i\leq N\in\mathds{N}$,
\begin{equation}\phi(t)=\varphi(\tau(t))\label{eq: Aphi-varphi}\end{equation}
and where $\tau(t)$ is a solution to the differential equation
\begin{equation}\dot\tau(t)=\theta a(t)^{q-\delta  }\label{eq: Atau-a}\end{equation}
for some constants $\theta>0$ and $q \in\mathds{R}\backslash\{0\}$.  Here dot denotes differentiation with respect to $t$ and prime denotes differentiation with respect to $\tau$.  Also the powers $A_i$ and $B_i$ in (\ref{eq: AEFE}) and (\ref{eq: AEMP}) respectively, are related by the equation
\begin{equation}B_i=\frac{A_i+q-2\delta    }{q}.\label{eq: ABi-Ai}\end{equation}

\begin{thm}\label{thm: EFE-EMP}
Suppose you are given a twice differentiable function $a(t)>0$, a once differentiable function $\phi(t)$, and also functions $G_0(t),\dots, G_N(t)$ which satisfy the scale factor equation (\ref{eq: AEFE}) for some $\delta  ,\varepsilon, A_0,\dots, A_N\in\mathds{R}$ and $N\in\mathds{N}$.  If $f(\tau)$ is the inverse of a function $\tau(t)$ which satisfies (\ref{eq: Atau-a}) for some $\theta>0$ and $q\in\mathds{R}\backslash\{0\}$, then by (\ref{eq: Aa-Y})-(\ref{eq: AGi-lambdai}) the functions 
\begin{eqnarray}Y(\tau)&=&a(f(\tau))^q\label{eq: AY-a}\\
Q(\tau)&=&q\varepsilon\varphi '(\tau)^2\label{eq: AQ-phi}\\
\lambda_i(\tau)&=&\frac{q}{\theta^2}G_i(f(\tau))\label{eq: Alambdai-Gi}\end{eqnarray}
solve the generalized EMP equation (\ref{eq: AEMP}) for $B_i$ as in (\ref{eq: ABi-Ai}) and for 
\begin{equation}\varphi(\tau)=\phi(f(\tau))\label{eq: Avarphi-phi}\end{equation}
(by (\ref{eq: Aphi-varphi})).  Note that since the function $a(t)$ and the constant $\theta$ are both positive, $\dot\tau(t)>0$ so that $\tau(t)$ is an increasing function mapping $t\in\mathds{R}$ to $\tau\in\mathds{R}$ and therefore the inverse function $f(\tau)$ exists. 

Conversely, suppose you are given a twice differentiable function $Y(\tau)>0$, a continuous function $Q(\tau)$, and also functions $\lambda_0(\tau), \dots, \lambda_N(\tau)$ which satisfy the generalized EMP equation (\ref{eq: AEMP}) for some $B_i\in\mathds{R}$ and $N\in\mathds{N}$.  In order to construct functions which solve (\ref{eq: AEFE}), first find $\tau(t)$ and $\varphi(\tau)$ which solve 
\begin{equation}\dot\tau(t)=\theta Y(\tau(t))^{(q-\delta  )/q}\label{eq: Atau-Y}\end{equation}
and (\ref{eq: Avarphi-Q}) respectively, for some $\theta>0, q\in\mathds{R}\backslash\{0\}$ and $\delta  \in\mathds{R}$ (note that (\ref{eq: Atau-Y}) was obtained by combining (\ref{eq: Aa-Y}) and (\ref{eq: Atau-a})).  Then the set of functions $\left(a(t),\phi(t),\right.$ $\left. G_0, \dots,\right.$ $\left. G_N\right)$ given by (\ref{eq: Aa-Y}), (\ref{eq: Aphi-varphi}) and (\ref{eq: AGi-lambdai}) solves the scale factor equation (\ref{eq: AEFE}) for $A_i$ as in (\ref{eq: ABi-Ai}).  That is, the powers $A_i$ are given in terms of $B_i$ by the equation
\begin{equation}A_i=q(B_i-1)+2\delta   \label{eq: AAi-Bi}.\end{equation}
\end{thm}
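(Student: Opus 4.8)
The plan is to treat both directions of the equivalence as one change-of-variables computation, exploiting the fact that the dictionary (\ref{eq: Aa-Y})--(\ref{eq: Atau-a}) is invertible. I would carry out the forward direction explicitly and then observe that the converse is the same chain of identities read backwards, with (\ref{eq: Atau-Y}) playing the role of (\ref{eq: Atau-a}). The essential tool is the chain rule induced by the reparameterization (\ref{eq: Atau-a}): since $\dot\tau=\theta a^{q-\delta}$, differentiation in $\tau$ is related to differentiation in $t$ by $\tfrac{d}{d\tau}=\tfrac{1}{\theta a^{q-\delta}}\tfrac{d}{dt}$, where throughout $a$, $H$, $\dot H$ are understood as functions of $t=f(\tau)$.

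First I would differentiate $Y=a^q$ (cf.\ (\ref{eq: AY-a})). Writing $\dot a=aH$ and applying the chain rule gives $Y'(\tau)=\tfrac{q}{\theta}H a^{\delta}$. Differentiating a second time, applying the chain rule again, and once more using $\dot a=aH$ to eliminate $\dot a$, the factor $a^{\delta}$ contributes the $\delta H^2$ term and one obtains
\[
Y''(\tau)=\frac{q}{\theta^2}\,a^{2\delta-q}\bigl(\dot H+\delta H^2\bigr).
\]
This is the step where the coefficient $\delta$ of the scale factor equation gets built into the structure of the EMP equation, and it is the one place where careful bookkeeping of the powers of $a$ is required; I do not expect any genuine obstacle beyond this.

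Next I would substitute the scale factor equation (\ref{eq: AEFE}) to replace $\dot H+\delta H^2$ by $\sum_i G_i\,a^{-A_i}-\varepsilon\dot\phi^2$, and handle the scalar-field term separately. From $\phi=\varphi\circ\tau$ together with (\ref{eq: Atau-a}) one has $\dot\phi=\theta\,\varphi'(\tau)\,a^{q-\delta}$, so that
\[
\frac{q}{\theta^2}\,a^{2\delta-q}\cdot\varepsilon\dot\phi^2=q\varepsilon\,\varphi'(\tau)^2\,a^{q}=Q(\tau)\,Y(\tau),
\]
where I have used the definition of $Q$ in (\ref{eq: AQ-phi}) and $Y=a^q$. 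Moving this contribution to the left-hand side produces exactly $Y''+QY$.

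Finally I would match the surviving sum $\tfrac{q}{\theta^2}\sum_i G_i\,a^{2\delta-q-A_i}$ against the right-hand side of (\ref{eq: AEMP}). Using $\lambda_i=\tfrac{q}{\theta^2}G_i$ from (\ref{eq: Alambdai-Gi}) converts the coefficient into $\lambda_i$, while the requirement $a^{2\delta-q-A_i}=Y^{-B_i}=a^{-qB_i}$ forces $qB_i=A_i+q-2\delta$, which is precisely (\ref{eq: ABi-Ai}). This establishes $Y''+QY=\sum_i \lambda_i/Y^{B_i}$ and completes the forward direction. For the converse, the identical equalities are read in reverse starting from a solution of (\ref{eq: AEMP}), with $\tau(t)$ now defined through (\ref{eq: Atau-Y}) and the exponent relation solved for $A_i$ as in (\ref{eq: AAi-Bi}); the positivity of $a$ and $\theta$ guarantees $\dot\tau>0$, so $f=\tau^{-1}$ exists and all substitutions are legitimate.
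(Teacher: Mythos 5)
Your proposal is correct and follows essentially the same route as the paper: the same chain-rule computation yielding $Y''=\frac{q}{\theta^2}a^{2\delta-q}\bigl(\dot H+\delta H^2\bigr)$, the same identification of the $Q(\tau)Y(\tau)$ term via $\dot\phi=\theta\varphi'(\tau)a^{q-\delta}$, and the same exponent matching that forces (\ref{eq: ABi-Ai}). The only difference is that the paper writes out the converse as a separate explicit computation from the $Y$-side (differentiating $a=Y^{1/q}$, expressing $H$ and $\dot H$ in terms of $Y'$, $Y''$, and substituting the EMP equation), whereas you dispatch it by observing that the forward computation already establishes an identity between the two differential expressions with nonvanishing prefactor $\frac{q}{\theta^2}a^{2\delta-q}$; since (\ref{eq: Atau-Y}) is exactly (\ref{eq: Atau-a}) rewritten through the dictionary, that shortcut is sound.
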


\begin{proof}
To prove the forward implication, we begin by computing $f'(\tau)$, a quantity that will be required to simplify the derivatives of $Y(\tau)$.  Since $f  (\tau (t))=t$ we differentiate this relation with respect to $t$ to obtain $f  '(\tau (t))\dot\tau  (t)=1$ so that $ f  '(\tau)=1/\dot\tau  (f  (\tau))$, and by (\ref{eq: Atau-a}) we have 
\begin{equation}f'(\tau)=\frac{1}{\theta  }a(f  (\tau))^{\delta    -q}.\label{eq: Af-a}\end{equation}
Differentiating the definition (\ref{eq: AY-a}) of $Y(\tau)$ and using (\ref{eq: Af-a}) we obtain
\begin{eqnarray}Y'(\tau)&=&qa(f  (\tau))^{q-1}\dot{a}(f  (\tau))f  '(\tau)\notag\\
&=&\frac{q}{\theta  }H(f  (\tau))a(f  (\tau))^{\delta    }.\label{eq: yprimegeneral-Ha}\end{eqnarray}
Differentiating again and using (\ref{eq: Af-a}) we obtain
\begin{eqnarray}Y''(\tau)&=&\frac{q}{\theta  }f  '(\tau)\left[\dot{H}(f  (\tau))a(f  (\tau))^{\delta    }+\delta    H(f  (\tau))^2a(f  (\tau))^{\delta    }   \right]\notag\\
&=&\frac{q}{\theta  ^2}a(f  (\tau))^{2\delta    -q}\left[\dot{H}(f  (\tau))+\delta    H(f  (\tau))^2\right].\end{eqnarray}
Since $a(t)$ is assumed to satisfy the scale factor equation (\ref{eq: AEFE}),  (\ref{eq: yprimegeneral-Ha}) can be written 
\begin{eqnarray}
Y''(\tau)&=&\frac{q}{\theta  ^2}a(f  (\tau))^{2\delta    -q}\left[-\varepsilon\dot{\phi}(f  (\tau))^2+\displaystyle\sum_{i=0}^N \frac{G_i(f  (\tau))}{a(f  (\tau)))^{A_i}}\right]\notag\\
&=&-\frac{q\varepsilon}{\theta  ^2}\dot{\phi}(f  (\tau))^2a(f  (\tau))^{2\delta    -2q}a(f  (\tau))^{q}+\displaystyle\sum_{i=0}^N\frac{\frac{q}{\theta  ^2}G_i(f  (\tau))}{a(f  (\tau))^{A_i+q-2\delta    }}.\label{eq: AY''-a}
\end{eqnarray}
Differentiating the definition (\ref{eq: Avarphi-phi}) of $\varphi(\tau)$ and again using (\ref{eq: Af-a}) we have 
\begin{equation}\varphi '(\tau)=\dot\phi(f(\tau))f'(\tau)=\frac{1}{\theta}\dot\phi(f(\tau))a(f(\tau))^{\delta  -q},\end{equation}
so that the definition (\ref{eq: AQ-phi}) of $Q(\tau)$ can be written as
\begin{equation}Q(\tau)=\frac{q\varepsilon}{\theta  ^2}\dot{\phi}(f  (\tau))^2a(f  (\tau))^{2\delta    -2q}.\label{eq: AQ-a}\end{equation}
By (\ref{eq: AQ-a}) and the definitions (\ref{eq: AY-a}), (\ref{eq: Alambdai-Gi}) and (\ref{eq: ABi-Ai}) of  $Y(\tau), \lambda_i(\tau)$ and $B_i\in\mathds{R}$ respectively, (\ref{eq: AY''-a})   becomes
\begin{eqnarray}Y''(\tau)+Q(\tau)Y(\tau)=\displaystyle\sum_{i=0}^N\frac{\lambda_i(\tau)}{Y(\tau)^{B_i}}.
\end{eqnarray}

To prove the converse statement, differentiate the definition (\ref{eq: Aa-Y}) of $a(t)$ and use the definition (\ref{eq: Atau-Y}) of $\tau(t)$ to obtain
\begin{eqnarray}\dot{a}(t)&=&\frac{1}{q}Y(\tau (t))^{(1-q)/q}Y'(\tau (t))\dot\tau  (t)\notag\\
&=&\frac{\theta  }{q}Y(\tau (t))^{(1-\delta    )/q}Y'(\tau (t)).\label{eq: Adota-Y}\end{eqnarray}
Dividing by $a(t)$, we have that 
\begin{equation}H(t)\stackrel{def.}{=}\frac{\dot{a}(t)}{a(t)}=\frac{\theta  }{q}Y'(\tau (t))Y(\tau (t))^{-\delta    /q}.\label{eq: AH-Y}
\end{equation}
Differentiating (\ref{eq: AH-Y}) and again using the definition (\ref{eq: Atau-Y}) of $\tau(t)$, we obtain
\begin{eqnarray}\dot{H}(t)
&=&\frac{\theta  }{q}\dot\tau  (t)\left[Y''(\tau (t))Y(\tau (t))^{-\delta    /q}-\frac{\delta    }{q}Y'(\tau (t))^2Y(\tau (t))^{-(\delta    +q)/q}\right]\notag\\
&=&\frac{\theta  ^2}{q}Y(\tau (t))^{(q-\delta  )/q}\left[Y''(\tau (t))Y(\tau (t))^{-\delta    /q}-\frac{\delta    }{q}Y'(\tau (t))^2Y(\tau (t))^{-(\delta    +q)/q}\right]\notag\\
&=&\frac{\theta^2}{q}Y(\tau(t))^{(q-2\delta  )/q}Y''(\tau)-\frac{\delta   \theta  ^2 }{q^2}Y'(\tau (t))^2Y(\tau (t))^{-2\delta    /q}\label{eq: generaldotH-generalY}
.\end{eqnarray}
Since $Y(\tau)$ is assumed to satisfy the generalized EMP equation (\ref{eq: AEMP}), equation (\ref{eq: generaldotH-generalY}) can be written as 
\begin{eqnarray}
\dot{H}(t)&=&\frac{\theta  ^2}{q}Y(\tau (t))^{(q-2\delta  )/q}
\left[
-Q(\tau (t))Y(\tau (t))+\displaystyle\sum_{i=0}^N\frac{\lambda_i(\tau (t))}{Y(\tau (t))^{B_i}}
\right]\notag\\
&&\qquad\qquad -\frac{\delta   \theta  ^2 }{q^2}Y'(\tau (t))^2Y(\tau (t))^{-2\delta    /q}\notag\\
&=&-\frac{\theta  ^2}{q}Q(\tau (t))Y(\tau (t))^{2(q-\delta  )/q}+\displaystyle\sum_{i=0}^N\frac{\frac{\theta  ^2}{q}\lambda_i(\tau (t))}{Y(\tau (t))^{(q(B_i-1)+2\delta  )/q}}\notag\\
&& \qquad\qquad -\frac{\delta   \theta  ^2 }{q^2}Y'(\tau (t))^2Y(\tau (t))^{-2\delta    /q}.\label{eq: AdotH-Y}
\end{eqnarray}
By definitions (\ref{eq: Aphi-varphi}) and (\ref{eq: Atau-Y}) of $\phi(t)$ and $\tau(t)$ respectively,  we have that 
\begin{equation}\dot\phi(t)=\varphi '( \tau(t))\dot\tau(t)=\theta \varphi '(\tau(t))  Y(\tau (t))^{(q-\delta  )/q}.\label{eq: Adotphi-Y}\end{equation}
Using definition (\ref{eq: Avarphi-Q}) of $\varphi(\tau)$ in terms of $Q(\tau)$ and squaring (\ref{eq: Adotphi-Y}), we have
\begin{equation}\dot\phi(t)^2= \frac{\theta^2 }{q\varepsilon}Q(\tau(t))   Y(\tau (t))^{2(q-\delta  )/q}.\end{equation}
This shows that the first term in (\ref{eq: AdotH-Y}) is equal to $-\varepsilon\dot\phi(t)^2$.    Noting that by (\ref{eq: AH-Y}) the last term of (\ref{eq: AdotH-Y}) is equal to $-\delta   H(t)^2$, and using definitions (\ref{eq: Aa-Y}), (\ref{eq: AGi-lambdai}) and (\ref{eq: AAi-Bi}) of $a(t), G_i(t)$ and $A_i$ respectively, (\ref{eq: AdotH-Y}) becomes
\begin{equation}\dot{H}(t)=-\varepsilon\dot{\phi}(t)^2+\displaystyle\sum_{i=0}^N\frac{G_i(t)}{a(t)^{A_i}}-\delta    H(t)^2.\end{equation}
This proves the theorem.

\end{proof}


\section{Generalized EMP and Schr\"odinger-type equations}
      
We now record a mapping between any solution set $\left(Y(\tau),Q(\tau), \lambda(\tau), \lambda_1(\tau), \right.$ $\left. \dots,\lambda_N(\tau)\right)$ of the generalized EMP equation
\begin{equation}Y''(\tau)+Q(\tau)Y(\tau)=\frac{\lambda(\tau)}{Y(\tau)^B}+\displaystyle\sum_{i=1}^N\frac{\lambda_i(\tau)}{Y(\tau)^{B_i}}\label{eq: BEMP}\end{equation}
for $B\in\mathds{R}\backslash\{-1,1\}, B_i\in\mathds{R}\backslash\{-1\}, N\in\mathds{N}$, and  a corresponding solution set $\left(u(\sigma), E(\sigma), P(\sigma),\right.$ $\left. F_1(\sigma), \dots, F_N(\sigma)\right)$ of what we will call a non-linear Schr\"odinger-type equation
 \begin{equation}u''(\sigma)+[E(\sigma)-P(\sigma)]u(\sigma)=\displaystyle\sum_{i=1}^N\frac{F_i(\sigma)}{u(\sigma)^{C_i}}\label{eq: BNLS}\end{equation}
 for $C_i\in\mathds{R}$.   
 
 In general, the Schr\"odinger-type equation (\ref{eq: BNLS}) contains one less non-linear term than the generalized EMP equation (\ref{eq: BEMP}).  Therefore although the correspondence does hold when $\lambda=\lambda_i=E=0$, the point is that a nonzero nonlinear term $\lambda(\tau)/Y(\tau)^B$ in (\ref{eq: BEMP}) transforms to the linear term $E(\sigma)u(\sigma)$ in (\ref{eq: BNLS}).  Therefore the ``Schr\"odinger" nature of this latter equation is most apparent when $\lambda_i(\tau)=0$ for $1\leq i\leq N$ and when the function $\lambda(\tau)=\lambda$ is constant  in (\ref{eq: BEMP}).  In this case (as we will see in this section), solutions to the generalized EMP 
 \begin{equation}Y''(\tau)+Q(\tau)Y(\tau)=\frac{\lambda}{Y(\tau)^B}\label{eq: B1termEMP}\end{equation} correspond to solutions of a one-dimensional \emph{linear} Schr\"odinger equation 
 \begin{equation}u''(\sigma)+[E-P(\sigma)]u(\sigma)=0\label{eq: BLS}\end{equation}
 for $E$ constant.  This slightly generalizes a result of F. Williams in which solutions of a classical EMP (that is, for $B=3$ in (\ref{eq: B1termEMP})) are shown to be in correspondence with solutions of a linear Schr\"odinger equation (\ref{eq: BLS}).   One can also refer to the paper of W. Milne \cite{Milne}.

The dictionary between solutions to (\ref{eq: BEMP}) and (\ref{eq: BNLS}) is as follows:
\begin{eqnarray}Y(\tau(t))^{B-1}&=&u(\sigma(t))^{-2}\label{eq: BYtau-usigma}\\
\vartheta^2(B-1)Q(\tau(t))^{(B-1)/(B+1)}&=&2P(\sigma(t))u(\sigma(t))^{2}\label{eq: BQtau-usigma}\\
\vartheta^2(B-1)\lambda(\tau(t))&=&2E(\sigma(t))\label{eq: Blambdatau-Esigma}\\
\vartheta^2(B-1)\lambda_i(\tau(t))&=&2F_i(\sigma(t))\label{eq: Blambdaitau-Fisigma}\end{eqnarray}
where $\tau_\sigma(t)$ and $\sigma(t)$ are solutions to the differential equations
\begin{equation}\dot\tau_\sigma(t)=\sqrt{\vartheta}Y(\tau_\sigma(t))^{\frac{1}{4}(B+1)}\label{eq: Bdottau-Y}\end{equation}
and
\begin{equation}\dot\sigma(t)=\frac{1}{\sqrt{\vartheta}}u(\sigma(t))^{\frac{1}{2}(B+1)/(B-1)}\label{eq: Bdotsigma-u}\end{equation}
respectively, for some $\vartheta>0$.  
Also the powers $B_i$ and $C_i$ in (\ref{eq: BEMP}) and (\ref{eq: BNLS}) respectively, are related by the equation
\begin{equation}C_i=1-2\frac{(B_i-1)}{(B-1)}.\label{eq: BCi-Bi}\end{equation}
Note that by (\ref{eq: BYtau-usigma}), (\ref{eq: Bdottau-Y}) and (\ref{eq: Bdotsigma-u}), we have
\begin{equation}\dot\tau_\sigma(t)=\frac{1}{\dot\sigma(t)}\label{eq: Bdottau-dotsigma}.\end{equation}   
We notate the function $\tau_\sigma(t)$ with the subscript $\sigma$ in order to distinguish it from the separate quantity $\tau(t)$ which appears in section 2.1.  

\begin{thm}\label{thm: EMP-NLS}
Suppose you are given a twice differentiable function $Y(\tau)>0$ and also functions $Q(\tau), \lambda(\tau), \lambda_1(\tau), \dots, \lambda_N(\tau)$ which satisfy the generalized EMP equation (\ref{eq: BEMP}) for some $B\in\mathds{R}\backslash\{-1,1\}, B_i\in\mathds{R}\backslash\{-1\}$ and $N\in\mathds{N}$.  In order to construct a set of functions which solve the Schr\"odinger-type equation (\ref{eq: BNLS}), begin by solving  for the function $\tau_\sigma(t)$ in (\ref{eq: Bdottau-Y}) (for any $\vartheta  >0$) and then solve (\ref{eq: Bdottau-dotsigma}) for $\sigma(t)$ .  Let $g(\sigma)$ denote the inverse of $\sigma(t)$ (which exists since $\dot\sigma(t)>0$ for all $t$).  Then by (\ref{eq: BYtau-usigma})-(\ref{eq: Blambdaitau-Fisigma}) the following functions solve the Schr\"odinger-type equation (\ref{eq: BNLS}):
\begin{eqnarray}u(\sigma)&=&Y(\tau_\sigma (g (\sigma)))^{\frac{1}{2}(1-B)}\label{eq: Bu-Ytausigma}\\
P(\sigma)&=&\frac{\vartheta  ^2}{2}(B-1)Q(\tau_\sigma (g (\sigma)))Y(\tau_\sigma (g (\sigma)))^{B+1}\label{eq: BP-Qtausigma}\\
E(\sigma)&=&\frac{\vartheta  ^2}{2}(B-1)\lambda(\tau_\sigma (g (\sigma)))\label{eq: BE-lambdatausigma}\\
F_i(\sigma)&=&\frac{\vartheta  ^2}{2}(1-B)\lambda_i(\tau_\sigma (g (\sigma)))\label{eq: BFi-lambdaitausigma}
\end{eqnarray}
for $C_i$ as in (\ref{eq: BCi-Bi}).




Conversely, suppose you are given a twice differentiable function $u(\sigma)>0$ and also functions $E(\sigma), P(\sigma), F_i(\sigma)$ which satisfy the Schr\"odinger-type equation (\ref{eq: BNLS}) for some $C_i\in\mathds{R}$ and $N\in\mathds{N}$.  In order to construct functions which solve (\ref{eq: BEMP}), first solve (\ref{eq: Bdotsigma-u}) for $\sigma(t)$ and for any constants $\vartheta>0$ and $B\in\mathds{R}\backslash\{-1,1\}$.  Then solve  for $\tau_\sigma (t)$ in  (\ref{eq: Bdottau-dotsigma}) and let $f_\sigma(\tau)$ denote its inverse (which exists since $\dot\tau_\sigma(t)>0$ for all $t$).
By (\ref{eq: BYtau-usigma})-(\ref{eq: Blambdaitau-Fisigma}),  the functions
\begin{eqnarray}Y(\tau)&=&u(\sigma (f_\sigma (\tau)))^{2/(1-B)}\label{eq: BY-usigmatau}\\
Q(\tau)&=&\frac{2}{\vartheta  ^2 (B-1)}P(\sigma (f_\sigma (\tau)))u(\sigma (f_\sigma (\tau)))^{2(B+1)/(B-1)}\label{eq: BQ-Psigmatau}\\
\lambda(\tau)&=&\frac{2}{\vartheta  ^2(B-1)}E(\sigma (f_\sigma (\tau)))\label{eq: Blambda-Esigmatau}\\
\lambda_i(\tau)&=&\frac{2}{\vartheta  ^2(1-B)}F_i(\sigma (f_\sigma (\tau)))\label{eq: Blambdai-Fisigmatau}
\end{eqnarray}
satisfy the generalized EMP (\ref{eq: BEMP}) for 
\begin{equation}B_i=\frac{1}{2}\left((1-B)C_i+(B+1)\right)\label{eq: BBi-Ci})\end{equation}
(by (\ref{eq: BCi-Bi})).
\end{thm}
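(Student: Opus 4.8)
The plan is to mirror the strategy of the proof of Theorem \ref{thm: EFE-EMP}: reduce everything to derivatives with respect to the single intermediate variable $t$ via the chain rule, and let the defining differential equations collapse the extra factors. For the forward implication I would first record the derivative of the inverse function $g(\sigma)$, namely $g'(\sigma)=1/\dot\sigma(g(\sigma))$, and then invoke the key identity (\ref{eq: Bdottau-dotsigma}) so that $dt/d\sigma = 1/\dot\sigma = \dot\tau_\sigma$. This is the structural device that makes the whole computation telescope: it combines one chain-rule factor $1/\dot\sigma$ with the explicit factor $\dot\tau_\sigma$ into $\dot\tau_\sigma^2$, which (\ref{eq: Bdottau-Y}) evaluates as $\vartheta\,Y^{(B+1)/2}$.

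The central calculation is then to differentiate the definition (\ref{eq: Bu-Ytausigma}) of $u(\sigma)$ twice. Differentiating once and using the collapse above, I expect the fractional powers of $Y$ to cancel exactly and leave the clean relation $u'(\sigma)=\tfrac{1}{2}(1-B)\vartheta\,Y'(\tau_\sigma)$. Differentiating a second time in the same fashion should yield $u''(\sigma)=\tfrac{1}{2}(1-B)\vartheta^2\,Y''(\tau_\sigma)\,Y^{(B+1)/2}$. At this point I would substitute the generalized EMP equation (\ref{eq: BEMP}) for $Y''$ and distribute the factor $Y^{(B+1)/2}$ across the three groups of terms. Matching is then a term-by-term verification: the $-QY$ term becomes $P(\sigma)u(\sigma)$ by (\ref{eq: BP-Qtausigma}), the $\lambda/Y^B$ term becomes $-E(\sigma)u(\sigma)$ by (\ref{eq: BE-lambdatausigma}), and each $\lambda_i/Y^{B_i}$ term becomes $F_i(\sigma)/u(\sigma)^{C_i}$ by (\ref{eq: BFi-lambdaitausigma}), so that rearrangement gives exactly (\ref{eq: BNLS}).

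The step I expect to require the most care is the bookkeeping of the fractional exponents, and in particular confirming that the nonlinear terms land on the correct power of $u$. Concretely, after multiplying $\lambda_i/Y^{B_i}$ by $Y^{(B+1)/2}$ one obtains $Y^{(B+1)/2-B_i}$, and since $u=Y^{(1-B)/2}$ this equals $u^{-C_i}$ up to constants precisely when $\tfrac{B+1}{2}-B_i=\tfrac{B-1}{2}C_i$, which is the content of the power relation (\ref{eq: BCi-Bi}). I would verify this algebraic identity as the crux of the matching. The constraints $B\neq\pm1$ and $B_i\neq-1$ guarantee that the exponents appearing above, and the maps (\ref{eq: Bu-Ytausigma})--(\ref{eq: Bdotsigma-u}), are well defined.

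For the converse I would run the symmetric computation. Since the dictionary (\ref{eq: BYtau-usigma})--(\ref{eq: Blambdaitau-Fisigma}) is an invertible change of variables and (\ref{eq: Bdottau-dotsigma}) is symmetric in the roles of $\tau_\sigma$ and $\sigma$, differentiating the definition (\ref{eq: BY-usigmatau}) of $Y(\tau)$ twice with respect to $\tau$ — now using (\ref{eq: Bdotsigma-u}) and the inverse $f_\sigma$ to collapse the chain-rule factors — should produce $Y''$ in terms of $u''$, into which I substitute (\ref{eq: BNLS}) and read off (\ref{eq: BEMP}) with $B_i$ given by (\ref{eq: BBi-Ci}). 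Because this is the same computation traversed in the opposite direction, I would present it briefly, noting only the sign and exponent changes induced by inverting the dictionary.
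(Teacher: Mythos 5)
Your proposal is correct and follows essentially the same route as the paper's proof: compute $g'(\sigma)=1/\dot\sigma(g(\sigma))=\dot\tau_\sigma(g(\sigma))$ so that the chain-rule factor combines with $\dot\tau_\sigma$ into $\vartheta\,Y^{(B+1)/2}$ via (\ref{eq: Bdottau-Y}), obtain $u'=\tfrac{\vartheta}{2}(1-B)Y'$ and $u''=\tfrac{\vartheta^2}{2}(1-B)Y''Y^{(B+1)/2}$, substitute the EMP, and match exponents through (\ref{eq: BCi-Bi}), with the converse run symmetrically through $f_\sigma$ and (\ref{eq: Bdotsigma-u}). Your identification of the exponent identity $\tfrac{B+1}{2}-B_i=\tfrac{B-1}{2}C_i$ as the crux of the term-by-term matching, and your signs on the $P$, $E$, and $F_i$ terms, all agree with the paper.
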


\begin{proof}
To prove the forward statement, we begin by computing $g '(\sigma)$, a quantity that will be required to simplify the derivatives of $u(\sigma)$.  Since $g (\sigma (t))=t$, we differentiate this relation with respect to $t$ and obtain $g '(\sigma (t))\dot\sigma (t)=1$ so that $g '(\sigma)=1/\dot\sigma (g (\sigma))=\dot{\tau}_\sigma (g (\sigma))$.  Therefore by (\ref{eq: Bdottau-Y}) we have
\begin{equation}\dot\tau_\sigma(g(\sigma))g'(\sigma)=\vartheta \ Y(\tau_\sigma (g (\sigma)))^{(B+1)/2}.\label{eq: Bgprime-Y}\end{equation}
Differentiating the definition (\ref{eq: Bu-Ytausigma}) of $u(\sigma)$ and using (\ref{eq: Bgprime-Y}), we obtain \begin{eqnarray}u'(\sigma)
&=&\frac{1}{2}(1-B)Y(\tau_\sigma (g (\sigma)))^{-(B+1)/2}Y'(\tau_\sigma (g (\sigma)))\dot{\tau}_\sigma (g (\sigma))g '(\sigma)\notag\\
&=&\frac{\vartheta  }{2}(1-B)Y(\tau_\sigma (g (\sigma)))^{-(B+1)/2+(B+1)/2}Y'(\tau_\sigma (g (\sigma)))\notag\\
&=&\frac{\vartheta  }{2}(1-B)Y'(\tau_\sigma (g (\sigma))).
\end{eqnarray}
Differentiating again and using (\ref{eq: Bgprime-Y}), we see that 
\begin{eqnarray}u''(\sigma)&=&\frac{\vartheta  }{2}(1-B)Y''(\tau_\sigma (g (\sigma)))\dot{\tau}_\sigma (g (\sigma))g '(\sigma) \notag\\
&=&  \frac{\vartheta^2  }{2}(1-B)Y''(\tau_\sigma (g (\sigma))) Y(\tau_\sigma (g (\sigma)))^{(B+1)/2}.\label{eq: generalu''-generaly''}\end{eqnarray}
Since $Y(\tau)$ is assumed to satisfy the generalized EMP equation (\ref{eq: BEMP}), (\ref{eq: generalu''-generaly''}) becomes
\begin{eqnarray}u''(\sigma)&=&\frac{\vartheta  ^2}{2}(1-B)Y(\tau_\sigma (g (\sigma)))^{(B+1)/2}
\left[
-Q(\tau_\sigma (g (\sigma)))Y(\tau_\sigma (g (\sigma)))\right.\notag\\
&&\qquad\left.+\frac{\lambda(\tau_\sigma (g (\sigma)))}{Y(\tau_\sigma (g (\sigma)))^B}+\displaystyle\sum_{i=1}^N \frac{\lambda_i(\tau_\sigma (g (\sigma)))}{Y(\tau_\sigma (g (\sigma)))^{B_i}}
\right]\notag\\
&=&\frac{\vartheta  ^2}{2}(B-1)Q(\tau_\sigma (g (\sigma)))Y(\tau_\sigma (g (\sigma)))^{B+1}Y(\tau_\sigma (g (\sigma)))^{(1-B)/2}\notag\\
&& \qquad -\frac{\vartheta  ^2 (B-1)\lambda(\tau_\sigma (g (\sigma)))}{2Y(\tau_\sigma (g (\sigma)))^{(B-1)/2}} +\displaystyle\sum_{i=1}^N\frac{\vartheta  ^2 (1-B)\lambda_i(\tau_\sigma (g (\sigma)))}{2Y(\tau_\sigma (g (\sigma)))^{B_i-\frac{1}{2}(B+1)}}.\notag\\
\end{eqnarray}
By the definitions (\ref{eq: Bu-Ytausigma}),(\ref{eq: BP-Qtausigma}), (\ref{eq: BE-lambdatausigma}) and (\ref{eq: BFi-lambdaitausigma}) of $u(\sigma), P(\sigma), E(\sigma)$ and $F_i(\sigma)$ respectively, we have that 
\begin{eqnarray}
u''(\sigma)&=&P(\sigma)u(\sigma)-\frac{E(\sigma)}{u(\sigma)^{-1}}+\displaystyle\sum_{i=1}^N\frac{F_i(\sigma)}{u(\sigma)^{\frac{2B_i-(B+1)}{1-B}}}.\notag\\
&=&P(\sigma)u(\sigma)-E(\sigma)u(\sigma)+\displaystyle\sum_{i=1}^N\frac{F_i(\sigma)}{u(\sigma)^{C_i}}\end{eqnarray}
for $C_i$ as in (\ref{eq: BCi-Bi}).  This proves the forward implication.

To prove the converse statement, we will need $f_\sigma '(\tau)$ in order to simplify the derivatives of $Y(\tau)$.  Differentiating the relation $f _\sigma(\tau_\sigma (t))=t$ with respect to $t$ implies $f_\sigma '(\tau_\sigma (t))\dot{\tau}_\sigma (t)=1$ therefore $f_\sigma '(\tau)=1/\dot{\tau}_\sigma (f _\sigma(\tau))=\dot\sigma (f_\sigma (\tau))$.  By (\ref{eq: Bdotsigma-u}) we can form the useful quantity 
\begin{equation}\dot\sigma(f_\sigma(\tau))f_\sigma '(\tau)=\frac{1}{\vartheta}u(\sigma (f_\sigma (\tau)))^{\frac{(B+1)}{(B-1)}}.\label{eq: Bfprime-u}\end{equation}
  Now differentiating the definition of $Y(\tau)$ and using (\ref{eq: Bfprime-u}), we see that 
\begin{eqnarray}Y'(\tau)
&=&\frac{2}{(1-B)} u(\sigma (f_\sigma (\tau)))^{\frac{(B+1)}{(1-B)}}u'(\sigma (f _\sigma(\tau)))\dot\sigma (f _\sigma(\tau))f_\sigma '(\tau)\notag\\
&=&\frac{2}{\vartheta   (1-B)}u(\sigma (f_\sigma (\tau)))^{\frac{(B+1)}{(1-B)}+\frac{(B+1)}{(B-1)}}u'(\sigma (f_\sigma (\tau)))\notag\\
&=&\frac{2}{\vartheta   (1-B)}u'(\sigma (f_\sigma (\tau))).\label{eq: CYprime-u}
\end{eqnarray}
Differentiating $Y'(\tau)$ and again using (\ref{eq: Bfprime-u}), we obtain
\begin{eqnarray}Y''(\tau)
&=&\frac{2}{\vartheta (1-B)}u''(\sigma (f_\sigma (\tau)))\dot\sigma (f_\sigma (\tau))f_\sigma '(\tau)\notag\\
&=&\frac{2}{\vartheta^2 (1-B)}u''(\sigma (f_\sigma (\tau)))u(\sigma (f_\sigma (\tau)))^{\frac{(B+1)}{(B-1)}}.\label{eq: CYprimeprime-u}\end{eqnarray}
Since $u(\sigma)$ is assumed to satisfy the Schr\"odinger-type equation (\ref{eq: BNLS}), the equation (\ref{eq: CYprimeprime-u}) can be written as
\begin{eqnarray}
Y''(\tau)&=&\frac{2}{\vartheta  ^2 (1-B)}u(\sigma (f_\sigma (\tau)))^{(B+1)/(B-1)}\left[
-E(\sigma (f _\sigma(\tau)))u(\sigma (f_\sigma (\tau)))\right.\notag\\
&&\qquad\qquad\left.+P(\sigma (f_\sigma (\tau)))u(\sigma (f_\sigma (\tau)))+\displaystyle\sum_{i=1}^N\frac{F_i(\sigma (f_\sigma (\tau)))}{u(\sigma (f_\sigma (\tau)))^{C_i}}
\right]\notag\\
&=&\frac{2E(\sigma (f_\sigma (\tau)))}{\vartheta  ^2 (B-1)u(\sigma (f_\sigma (\tau)))^{-2B/(B-1)}}\notag\\
&&-\frac{2}{\vartheta  ^2 (B-1)}P(\sigma (f_\sigma (\tau)))u(\sigma (f_\sigma (\tau)))^{2(B+1)/(B-1)}u(\sigma (f_\sigma (\tau)))^{2/(1-B)}\notag\\
 &&+\displaystyle\sum_{i=1}^N\frac{2F_i(\sigma (f_\sigma (\tau)))/(1-B)}{\vartheta  ^2 u(\sigma (f_\sigma (\tau)))^{((B-1)C_i-(B+1))/(B-1)}}.
\end{eqnarray}
By the definitions (\ref{eq: BY-usigmatau}), (\ref{eq: BQ-Psigmatau}), (\ref{eq: Blambda-Esigmatau}) and (\ref{eq: Blambdai-Fisigmatau}) of $Y(\tau), Q(\tau), \lambda(\tau)$ and $\lambda_i(\tau)$ respectively, we obtain
\begin{equation}Y''(\tau)=\frac{\lambda(\tau)}{Y(\tau)^B}-Q(\tau)Y(\tau)+\displaystyle\sum_{i=1}^N\frac{\lambda_i(\tau)}{Y(\tau)^{B_i}}
\end{equation}
for $C_i$ as in (\ref{eq: BCi-Bi}).  This proves the theorem.

\end{proof}


\section{Scale factor and Schr\"odinger-type equations}


By composing the maps in Theorems \ref{thm: EFE-EMP} and \ref{thm: EMP-NLS}, one can see that a direct reformulation of the scale factor equation
\begin{equation}
\dot{H}(t)+\delta   H(t)^2+\varepsilon \dot{\phi}(t)^2=\frac{G(t)}{a(t)^A}+\displaystyle\sum_{i=1}^N \frac{G_i(t)}{a(t)^{A_i}}\label{eq: CEFE}
\end{equation}
(again $H(t)=\dot{a}(t)/a(t)$) in terms of the nonlinear Schrodinger-type equation
\begin{equation}
u''(\sigma)+[E(\sigma)-P(\sigma)]u(\sigma)=\displaystyle\sum_{i=1}^N\frac{F_i(\sigma)}{u(\sigma)^{C_i }}\label{eq: CNLS}\end{equation}
is possible by identifying $B_0$ in Theorem \ref{thm: EFE-EMP} with $B$ in Theorem \ref{thm: EMP-NLS}.   The theorem below is exactly the resulting statement.  As we noted in Section 2.2, this reformulation of the scale factor equation (\ref{eq: CEFE}) in terms of the non-linear Schr\"odinger-type equation (\ref{eq: CNLS})  will have one less non-linear term than the alternate generalized EMP reformulation seen in Section 2.1.

For the forward implication of the theorem in this section, composing the respective notations of  Theorems \ref{thm: EFE-EMP} and \ref{thm: EMP-NLS} is convenient.  However this is not true for the converse implication, in which we now use new notation which is equivalent to  identifying $(A_0-2\delta   )/q$ and $A_0/(q-\delta   )$ of Theorem \ref{thm: EFE-EMP} with $n/3$ and $m$, respectively, used here.  Here we also rename $A_0$ of Theorem \ref{thm: EFE-EMP} to just $A$.

In summary, the dictionary between functions which solve (\ref{eq: CEFE}) and functions which solve (\ref{eq: CNLS}) is as follows:
\begin{eqnarray}  a(f(\tau_\sigma(t)))^{(A-2\delta  )} & =&   u(\sigma(t))^{-2}\label{eq: Cataus-usigma}\quad\quad\\
\varepsilon (A-2\delta  )\psi'(\sigma)^2&=& 2P(\sigma)\label{eq: Cpsi-P}\\
(A-2\delta  )\theta^{A/(q-\delta  )}G(f(\tau_\sigma(t)))&=&2E(\sigma(t))\label{eq: CGtaus-Esigma}\\
(2\delta  -A)\theta^{A/(q-\delta  )}G_i(f(\tau_\sigma(t)))&=&2F_i(\sigma(t))\label{eq: CGitaus-Fisigma}\end{eqnarray}
where $f(\tau)$ is the inverse function of $\tau(t)$ and each of $\tau(t)$ and $\sigma(t)$ are solutions to the differential equations
\begin{eqnarray}\dot\tau(t)&=&\theta a(t)^{q-\delta  }\label{eq: Cdottau-a},\\
\dot\sigma(t)&=&\frac{1}{\sqrt{\vartheta}}u(\sigma(t))^{(n+6)/2n}\label{eq: Cdotsigma-u}\end{eqnarray}
respectively for some constants $\theta, \vartheta>0, q\in\mathds{R}\backslash\{\delta  \}$ and $n\in\mathds{R}\backslash\{0\}$.  Also $\phi$ and $\psi$ are related by  the composition
\begin{equation}\phi(f(\tau_\sigma(t)))=\psi(\sigma(t)),\label{eq: Cphitaus-psisigma}\end{equation}
and the powers $A_i$ and $C_i$ in (\ref{eq: CEFE}) and (\ref{eq: CNLS}) respectively, are related by the equation
\begin{equation}C_i=1-2\frac{(A_i-2\delta  )}{(A-2\delta  )}\label{eq: CCi-Ai}.\end{equation}
In addition, depending on which direction one is mapping solutions, it may be convenient to define $\tau_\sigma(t)$ as the solution to either 
\begin{equation}\dot\tau_\sigma(t)=\frac{1}{\dot\sigma(t)}\label{eq: Cdottaus-dotsigma}\end{equation}
or equivalently by (\ref{eq: Cdotsigma-u}), (\ref{eq: Cataus-usigma}) and  (\ref{eq: Cdottau-a}),
\begin{equation}\dot\tau_\sigma(t)=\dot\tau(f(\tau_\sigma(t)))^{(A+2q-2\delta  )/4(q-\delta  )}\label{eq: Cdottaus-dottautaus}.\end{equation}


\begin{thm}\label{thm: EFE-NLS}
Suppose you are given a twice differentiable function $a(t)>0$, a once differentiable function $\phi(t)$, and also functions $G(t), G_1(t), \dots , G_N(t)$ which satisfy the scale factor equation (\ref{eq: CEFE}) for some $N\in\mathds{N}$ and $\delta  , \varepsilon , A, A_1, \dots, A_N$ $ \in\mathds{R}$ where $A\neq 2\delta  $.  In order to construct a set of functions which solve the non-linear Schr\"odinger-type equation (\ref{eq: CNLS}), begin by solving for $\tau(t)$ in the differential equation (\ref{eq: Cdottau-a}) for any constants $\theta>0$ and  $q\in\mathds{R}\backslash\{\delta  \} $.  Let $f(\tau)$ denote the inverse of $\tau(t)$ (which exists since $\dot\tau(t)>0$ for all $t$), find the solution $\tau_\sigma(t)$ to (\ref{eq: Cdottaus-dottautaus}) and then find $\sigma(t)$ which solves (\ref{eq: Cdottaus-dotsigma}).  Let $g(\sigma)$ to denote the inverse function of $\sigma(t)$ (which exists since $\dot\tau(t)>0$ for all $t$).  Then by (\ref{eq: Cataus-usigma})-(\ref{eq: CGitaus-Fisigma}) the following functions solve the Schr\"odinger-type equation (\ref{eq: CNLS}):
\begin{eqnarray}u(\sigma)&=&a(f (\tau_\sigma(g  (\sigma)))^{(2\delta   - A )/2}\label{eq: Cu-ataus}\\
P(\sigma)&=&\frac{1}{2}\varepsilon (A-2\delta   )\psi'(\sigma)^2\label{eq: CP-psi-intheorem}\\
E(\sigma)&=&\frac{1}{2}( A-2\delta  )\theta  ^{A/(q-\delta   )}G(f (\tau_\sigma(g  (\sigma))))\label{eq: CE-Gtaus}\\
F_i(\sigma)&=&\frac{1}{2}(2\delta   -A )\theta  ^{A/(q-\delta  )}G_i(f (\tau_\sigma(g  (\sigma))))\label{eq: CFi-Gitaus}
\end{eqnarray}
for $C_i$ as in (\ref{eq: CCi-Ai}) and for 
\begin{equation}\psi(\sigma)=\phi(f(\tau_\sigma(g(\sigma))))\label{eq: Cpsi-phitaus}\end{equation}
(by (\ref{eq: Cphitaus-psisigma})).

Conversely, suppose you are given a twice differentiable function $u(\sigma)>0$, a continuous function $P(\sigma)$ and also functions $E(\sigma), F_1(\sigma), \dots, F_N(\sigma)$ which satisfy the Schr\"odinger-type equation (\ref{eq: CNLS}) for some $C_i \in\mathds{R}$, and $N\in\mathds{N}$.  In order to construct functions which solve the  scale factor  equation (\ref{eq: CEFE}), begin by solving for $\sigma(t)$  in the differential equation (\ref{eq: Cdotsigma-u}) for some $\vartheta>0$ and $n\in\mathds{R}\backslash\{0\}$.  Then find a solution $\tau_\sigma(t)$ to (\ref{eq: Cdottaus-dotsigma}) and let $f_\sigma(\tau)$ denote the inverse of $\tau_\sigma(t)$ (which exists since $\dot\sigma(t)>0$ for all $t$).  Next find functions $\tau(t)$ and $\psi(\sigma)$ which solve the differential equations 
\begin{equation}\dot\tau (t)=\dot\tau_\sigma(f_\sigma  (\tau (t)))^{4/(m+2)}\label{eq: Cdottau-dottaustau}\end{equation}
and 
\begin{equation}\psi'(\sigma)^2=\frac{6}{\varepsilon  qn}P(\sigma)\label{eq: Cpsi-P-intheorem}\end{equation}
respectively, for any $m\in\mathds{R}\backslash\{-2\}$ and $\varepsilon ,q\in\mathds{R}\backslash\{0\}$ (these equations are obtained by writing (\ref{eq: Cdottaus-dottautaus}) and (\ref{eq: Cpsi-P}) in the converse notation).  Then by (\ref{eq: Cataus-usigma})-(\ref{eq: CGitaus-Fisigma}), the following functions solve the  scale factor  equation (\ref{eq: CEFE}):
\begin{eqnarray}a(t)&=&u(\sigma (f_\sigma  (\tau (t))))^{-6/qn} \label{eq: Ca-usigmatau}\\
\phi(t)&=&\psi(\sigma (f_\sigma  (\tau (t))))\label{eq: Cphi-psisigmatau})\\
G(t)&=&\frac{6}{qn}\vartheta  ^{-2m/(m+2)}E(\sigma (f_\sigma  (\tau (t))))\label{eq: CG-Esigmatau}\\
G_i(t)&=&-\frac{6}{qn}\vartheta  ^{-2m/(m+2)}F_i(\sigma (f_\sigma  (\tau (t))))\label{eq: CGi-Fisigmatau}
\end{eqnarray}
for coefficient
\begin{equation}\delta   =\frac{q(3m-n)}{3(m+2)}\label{eq: Cgamma-mn}\end{equation}
and for powers
\begin{equation}A=\frac{qm(n+6)}{3(2+m)}, \qquad A_i=\frac{q}{6}\left(\frac{(nm-2n+12m)}{(m+2)}-nC_i\right)            \label{eq: CAi-Ci}\end{equation}
(by (\ref{eq: CCi-Ai}) in the converse notation).

\end{thm}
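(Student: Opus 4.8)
The plan is to derive Theorem~\ref{thm: EFE-NLS} entirely by composing Theorems~\ref{thm: EFE-EMP} and~\ref{thm: EMP-NLS}, so that no fresh computation with the Einstein equations is needed; the content lies in matching the two dictionaries and their constants. The essential identification, flagged just before the statement, is to set $B:=B_0$, where $B_0=(A+q-2\delta)/q$ is the zeroth power produced by Theorem~\ref{thm: EFE-EMP} (that is, (\ref{eq: ABi-Ai}) with $A_0$ renamed to $A$). Under this choice the distinguished nonlinear term $G(t)/a(t)^A$ of (\ref{eq: CEFE}) is routed to the term $\lambda(\tau)/Y(\tau)^B$ of (\ref{eq: BEMP}), which Theorem~\ref{thm: EMP-NLS} then linearizes into $E(\sigma)u(\sigma)$; this is precisely why (\ref{eq: CNLS}) carries one fewer nonlinear term than (\ref{eq: CEFE}).

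For the forward implication I run Theorem~\ref{thm: EFE-EMP} forward---introducing $\tau(t)$ through (\ref{eq: Cdottau-a}) and setting $Y=a(f(\tau))^q$---to obtain an EMP solution, and then feed this directly into the forward half of Theorem~\ref{thm: EMP-NLS} to produce $\tau_\sigma$, $\sigma$ and the data $(u,P,E,F_i)$. The composite dictionary is verified by substitution. For instance, chaining (\ref{eq: BYtau-usigma}) with $Y=a(f(\cdot))^q$ gives $u^{-2}=Y^{B-1}=a(f(\tau_\sigma))^{q(B-1)}=a(f(\tau_\sigma))^{A-2\delta}$, which is (\ref{eq: Cataus-usigma}); the same substitution sends the exponent $\tfrac12(1-B)$ of (\ref{eq: Bu-Ytausigma}) to $(2\delta-A)/2$, recovering (\ref{eq: Cu-ataus}). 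The power relation (\ref{eq: CCi-Ai}) is just the composite of (\ref{eq: ABi-Ai}) and (\ref{eq: BCi-Bi}): since $B_i-1=(A_i-2\delta)/q$ and $B-1=(A-2\delta)/q$, the ratio $(B_i-1)/(B-1)$ collapses to $(A_i-2\delta)/(A-2\delta)$.

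The one genuinely delicate point is the $\tau_\sigma$-equation (\ref{eq: Cdottaus-dottautaus}) together with the disappearance of $\vartheta$ from the composite dictionary. Starting from (\ref{eq: Bdottau-Y}) with $Y(\tau_\sigma)=a(f(\tau_\sigma))^q$ and using $\dot\tau(f(\tau_\sigma))=\theta\,a(f(\tau_\sigma))^{q-\delta}$, a short exponent check shows $q(B+1)/4=(A+2q-2\delta)/4$ exactly when $B=(A+q-2\delta)/q$, so $\dot\tau_\sigma$ is a pure power of $\dot\tau(f(\tau_\sigma))$; comparing the remaining numerical factors forces $\sqrt\vartheta=\theta^{(A+2q-2\delta)/(4(q-\delta))}$. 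This same relation reduces the factor $\vartheta^2/\theta^2$ arising when (\ref{eq: BE-lambdatausigma}) is composed with (\ref{eq: Alambdai-Gi}) to the single factor $\theta^{A/(q-\delta)}$ of (\ref{eq: CE-Gtaus}), eliminating $\vartheta$ from the composite map. The relation $\dot\tau_\sigma=1/\dot\sigma$ of (\ref{eq: Cdottaus-dotsigma}) is simply (\ref{eq: Bdottau-dotsigma}) carried through unchanged.

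For the converse I reverse the order, applying the converse of Theorem~\ref{thm: EMP-NLS} and then the converse of Theorem~\ref{thm: EFE-EMP}, but reparametrizing the free constants as the discussion indicates, identifying $(A-2\delta)/q$ with $n/3$ and $A/(q-\delta)$ with $m$. Solving these two identities simultaneously for $\delta$ and $A$ is the main algebraic step, and it reproduces (\ref{eq: Cgamma-mn}) together with $A=qm(n+6)/(3(2+m))$, the first formula in (\ref{eq: CAi-Ci}); the companion expression for $A_i$ then follows by inserting (\ref{eq: CCi-Ai}) into (\ref{eq: AAi-Bi}). In this notation the bookkeeping exponents become transparent: $B-1=n/3$ sends $(B+1)/(2(B-1))$ to $(n+6)/2n$ as in (\ref{eq: Cdotsigma-u}), while $A+2q-2\delta=(m+2)(q-\delta)$ sends the forward exponent $(A+2q-2\delta)/(4(q-\delta))$ to $(m+2)/4$, whose reciprocal $4/(m+2)$ is exactly the power appearing in (\ref{eq: Cdottau-dottaustau}). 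I expect the principal obstacle to be precisely this bookkeeping rather than any conceptual difficulty: one must keep the three interposed variables $\tau$, $\tau_\sigma$, $\sigma$ and the physical time $t$ rigorously separated, and confirm that the single surviving constant matches consistently so that the composite maps are genuine inverses.
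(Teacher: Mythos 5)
Your proposal is correct in substance but follows a genuinely different route from the paper's own proof. The paper motivates Theorem \ref{thm: EFE-NLS} as the composite of Theorems \ref{thm: EFE-EMP} and \ref{thm: EMP-NLS}, but its actual proof is a fresh direct computation: it builds the composite derivative $f'(\tau_\sigma(g(\sigma)))\dot\tau_\sigma(g(\sigma))g'(\sigma)=\theta^{A/2(q-\delta)}a(f(\tau_\sigma(g(\sigma))))^{A/2}$ from scratch, differentiates $u(\sigma)$ twice, and substitutes the scale factor equation (and symmetrically in the converse). You instead literally chain the two dictionaries, which is closer to how the theorem was discovered and makes the bookkeeping conceptual rather than computational: your identifications $q(B-1)=A-2\delta$, $(B_i-1)/(B-1)=(A_i-2\delta)/(A-2\delta)$, $\sqrt{\vartheta}=\theta^{(A+2q-2\delta)/4(q-\delta)}$ (hence $\vartheta^2/\theta^2=\theta^{A/(q-\delta)}$ in (\ref{eq: CE-Gtaus})), and the converse solve for $\delta$ and $A$ from $(A-2\delta)/q=n/3$, $A/(q-\delta)=m$ all check out and reproduce (\ref{eq: CCi-Ai}), (\ref{eq: Cgamma-mn}) and (\ref{eq: CAi-Ci}) exactly. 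What the composition buys is that no new differentiation of the Einstein-side quantities is needed; what the paper's direct proof buys is independence from the technical hypotheses of Theorem \ref{thm: EMP-NLS}.

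That last point is the one caveat you should address. Theorem \ref{thm: EMP-NLS} requires $B\in\mathds{R}\backslash\{-1,1\}$ and $B_i\neq -1$, which under your identification become $A\neq 2\delta$ (assumed), $A+2q-2\delta\neq 0$, and $A_i+2q-2\delta\neq 0$; in the converse notation the condition $B\neq -1$ becomes $n\neq -6$. Theorem \ref{thm: EFE-NLS} as stated allows \emph{any} $q\in\mathds{R}\backslash\{\delta\}$ and any $n\neq 0$, so your composition argument silently excludes finitely many values of $q$ (and the value $n=-6$) that the theorem claims to cover and that the paper's direct computation handles without difficulty (when $A+2q-2\delta=0$ one simply gets $\dot\tau_\sigma\equiv 1$ in (\ref{eq: Cdottaus-dottautaus}) and everything still goes through). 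You should either add these exceptional parameter values to the hypotheses, dispose of them by a separate direct check, or note that the free parameter $q$ can always be chosen to avoid them.
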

\begin{proof}
To prove the forward implication, we first compute $f '(\tau)$ and $g'(\sigma)$.  Differentiating the relation $f (\tau (t))=t$ with respect to $t$ gives $ f '(\tau (t))\dot\tau (t)=1$ so that $ f '(\tau)=1/\dot\tau (f (\tau))$.  Therefore by (\ref{eq: Cdottau-a}), we have 
\begin{equation}f'(\tau)=\frac{1}{\theta  }a(f (\tau))^{\delta   -q}.\label{eq: Cfprime-a}\end{equation}
Similarly $g(\sigma(t))=t$ implies $  g'(\sigma(t))\dot\sigma(t)=1$ so that $g  '(\sigma)=1/\dot\sigma  (g  (\sigma))=\dot\tau_\sigma(g  (\sigma))$, and then by (\ref{eq: Cdottaus-dottautaus}) and (\ref{eq: Cdottau-a}) we obtain
\begin{eqnarray}
g'(\sigma)&=&\dot\tau (f (\tau_\sigma(g  (\sigma))))^{(A+2q-2\delta   )/4(q-\delta   )}\notag\\
&=&\theta  ^{(A+2q-2\delta   )/4(q-\delta   )} a(f (\tau_\sigma(g  (\sigma))))^{(A+2q-2\delta   )/4}\label{eq: Cgprime-ataus}.\end{eqnarray}
By (\ref{eq: Cfprime-a}), (\ref{eq: Cgprime-ataus}) and (\ref{eq: Cdottaus-dottautaus}), we find that 
\begin{eqnarray}f'(\tau_\sigma(g(\sigma)))\dot\tau_\sigma(g(\sigma))g'(\sigma)&=& \theta^{(A+2q-2\delta   )/2(q-\delta   )-1} a(f (\tau_\sigma(g  (\sigma))))^{(A+2q-2\delta   )/2+\delta   -q}\notag\\
&=&\theta^{A/2(q-\delta  )}a(f(\tau_\sigma(g(\sigma))))^{A/2}.\label{eq: Cfprimedottausgprime}\end{eqnarray}
Differentiating the definition (\ref{eq: Cu-ataus}) of $u(\sigma)$ and using (\ref{eq: Cfprimedottausgprime}), we have that 
\begin{eqnarray}u'(\sigma)&=&\frac{1}{2}(2\delta   -A)a(f (\tau_\sigma(g  (\sigma))))^{(2\delta  -A -2)/2}\dot{a}(f (\tau_\sigma(g  (\sigma))))\notag\\
&&\qquad\qquad \cdot f '(\tau_\sigma(g  (\sigma)))\dot\tau_\sigma(g  (\sigma))g  '(\sigma)\notag\\
&=&\frac{1}{2}(2\delta   -A)\theta  ^{A/2(q-\delta   )}\dot{a}(f (\tau_\sigma(g  (\sigma))))a(f (\tau_\sigma(g  (\sigma))))^{\delta   -1}\notag\\
&=&\frac{1}{2}(2\delta   -A )\theta  ^{A/2(q-\delta  )}H(f (\tau_\sigma(g  (\sigma))))a(f (\tau_\sigma(g  (\sigma))))^{\delta   }\label{eq: Cuprime-a}
\end{eqnarray}
for $H(t)\stackrel{def.}{=}\dot{a}(t)/a(t)$.  Differentiating $u'(\sigma)$ and using (\ref{eq: Cfprimedottausgprime}) and the assumed  scale factor  equation (\ref{eq: CEFE}), the second derivative of $u$ is 
\begin{eqnarray}u''(\sigma)
&=&\frac{1}{2}(2\delta   -A )\theta  ^{A/2(q-\delta   )}f '(\tau_\sigma(g  (\sigma)))\dot\tau_\sigma(g  (\sigma))g  '(\sigma)\notag\\
&& \cdot\left[ \dot{H}(f (\tau_\sigma(g  (\sigma))))a(f (\tau_\sigma(g  (\sigma))))^{\delta   } +\delta   H(f (\tau_\sigma(g  (\sigma))))^2a(f (\tau_\sigma(g  (\sigma))))^{\delta   }\right]\notag\\
 &=&\frac{1}{2}(2\delta   -A)\theta  ^{A/(q-\delta   )}a(f (\tau_\sigma(g  (\sigma))))^{{A}/{2}+\delta   }\notag\\
 &&\cdot
 \left[-
 \varepsilon \dot{\phi}(f (\tau_\sigma(g  (\sigma))))^2+\frac{G(f (\tau_\sigma(g  (\sigma))))}{a(f (\tau_\sigma(g  (\sigma))))^A}+\displaystyle\sum_{i=1}^N \frac{G_i(t)}{a(f (\tau_\sigma(g  (\sigma))))^{A_i}}
 \right]\notag\\
 &=&\frac{1}{2}(A-2\delta   )\varepsilon  \theta  ^{A/(q-\delta   )}a(f (\tau_\sigma(g  (\sigma))))^{{A}/{2}+\delta   }\dot{\phi}(f (\tau_\sigma(g  (\sigma))))^2\notag\\
 &&\quad +\frac{1}{2}(2\delta   -A)\theta  ^{{A}/{(q-\delta   )}} G(f (\tau_\sigma(g  (\sigma))))a(f (\tau_\sigma(g  (\sigma))))^{\delta   -{A}/{2} }\notag\\
 &&\qquad+\frac{1}{2}(2\delta  -A )\theta  ^{{A}/{(q-\delta   )}}\displaystyle\sum_{i=1}^N\frac{G_i(f (\tau_\sigma(g  (\sigma))))}{a(f (\tau_\sigma(g  (\sigma))))^{(2A_i - A-2\delta   )/2}}.\label{eq: Cuprimeprime-a}\end{eqnarray}
 Differentiating the definition (\ref{eq: Cpsi-phitaus}) of $\psi(\sigma)$ and using  (\ref{eq: Cfprimedottausgprime}) we see that 
 \begin{equation}\psi'(\sigma)=\theta^{A/2(q-\delta  )}\dot\phi(f(\tau_\sigma(g(\sigma))))a(f(\tau_\sigma(g(\sigma))))^{A/2}\end{equation}
 so that by (\ref{eq: CP-psi-intheorem}), we can write $P(\sigma)$ as
 \begin{equation} P(\sigma)=\frac{1}{2}\varepsilon (A-2\delta   ) \theta^{A/(q-\delta  )}\dot\phi(f(\tau_\sigma(g(\sigma))))^2a(f(\tau_\sigma(g(\sigma))))^{A}. \label{eq: CP-a}\end{equation}
 By (\ref{eq: CP-a}) and the definitions (\ref{eq: Cu-ataus}), (\ref{eq: CE-Gtaus}) and (\ref{eq: CFi-Gitaus}) of $u(\sigma), E(\sigma)$ and $F_i(\sigma)$ respectively, (\ref{eq: Cuprimeprime-a}) becomes
 \begin{eqnarray}
u''(\sigma)  &=&P(\sigma)u(\sigma)-E(\sigma)u(\sigma)+\displaystyle\sum_{i=1}^N\frac{F_i(\sigma)}{u(\sigma)^{(2A_i - A-2\delta   )/(2\delta   -A)}}\notag\\
&=&P(\sigma)u(\sigma)-E(\sigma)u(\sigma)+\displaystyle\sum_{i=1}^N\frac{F_i(\sigma)}{u(\sigma)^{C_i}}
\end{eqnarray}
for $C_i$ as in (\ref{eq: CCi-Ai}).  This proves the forward implication.

To prove the converse statement, we will need the function $f_\sigma  '(\tau)$.  Differentiating the relation $f_\sigma  (\tau_\sigma(t))=t$ with respect to $t $ implies that $f_\sigma  '(\tau_\sigma(t))\dot\tau_\sigma(t)=1$ and so we have $f_\sigma'(\tau)=1/\dot\tau_\sigma(f_\sigma(\tau))=\dot\sigma(f_\sigma(\tau))$.  Therefore by (\ref{eq: Cdottau-dottaustau}) and (\ref{eq: Cdottaus-dotsigma}), we obtain the useful quantity
\begin{eqnarray}\dot\sigma(f_\sigma(\tau(t)))f'_\sigma(\tau(t))\dot\tau(t)&=&\dot\sigma(f_\sigma(\tau(t)))^2\dot\tau_\sigma(f_\sigma(\tau(t)))^{4/(m+2)}\notag\\
&=&\dot\sigma(f_\sigma(\tau(t)))^2\dot\sigma(f_\sigma(\tau(t)))^{-4/(m+2)}\notag\\
&=&\dot\sigma(f_\sigma(\tau(t)))^{2m/(m+2)}.\label{eq: dotsigmafsigmatauderiv=1}\end{eqnarray}
Using (\ref{eq: Cdotsigma-u}) to write this in terms of $u(\sigma)$, we have 
\begin{equation}\dot\sigma(f_\sigma(\tau(t)))f'_\sigma(\tau(t))\dot\tau(t)=
\vartheta^{-m/(m+2)}u(\sigma(f_\sigma(\tau(t))))^{\frac{m(n+6)}{n(m+2)}}.\label{eq: Cdotsigmafsprimedottau}
\end{equation}
Differentiating definition (\ref{eq: Ca-usigmatau}) of $a(t)$ gives
\begin{eqnarray}\dot{a}(t)&=&-\frac{6}{qn}u(\sigma (f_\sigma  (\tau (t))))^{-(6+qn)/qn}u'(\sigma (f_\sigma  (\tau (t))))\dot\sigma  (f_\sigma  (\tau (t))) f_\sigma  '(\tau (t))\dot\tau (t).\notag\\ \end{eqnarray}

\noindent Dividing by $a(t)=u(\sigma (f_\sigma  (\tau (t))))^{-6/qn}$ and  using (\ref{eq: Cdotsigmafsprimedottau}), we obtain
\begin{eqnarray}H(t)
&=&-\frac{6}{qn}\vartheta  ^{-m/(m+2)}u'(\sigma (f_\sigma  (\tau (t))))u(\sigma (f_\sigma  (\tau (t))))^{  \frac{m(n+6)}{n(m+2)} -1 }\notag\\
&=&-\frac{6}{qn}\vartheta  ^{-m/(m+2)}u'(\sigma (f_\sigma  (\tau (t))))u(\sigma (f_\sigma  (\tau (t))))^{ \frac{ 2(3m-n)}{n(m+2)}}\label{eq: CH-u}\end{eqnarray}
for $H(t)\stackrel{def.}{=}\dot{a}(t)/a(t)$ as usual.  Differentiating $H(t)$ and again using (\ref{eq: Cdotsigmafsprimedottau}), we get that
\begin{eqnarray}\dot{H}(t)
&=&-\frac{6}{qn}\vartheta  ^{-m/(m+2)} \dot\sigma  (f_\sigma  (\tau (t)))f_\sigma  '(\tau (t))\dot\tau (t)\cdot\notag\\
&&\qquad\left[u''(\sigma (f_\sigma  (\tau (t))))u(\sigma (f_\sigma  (\tau (t))))^{\frac{2(3m-n)}{n(m+2)}}\right.\notag\\
&&  \qquad \left.  + \frac{2(3m-n)}{n(m+2)} u'(\sigma (f_\sigma  (\tau (t))))^2u(\sigma (f_\sigma  (\tau (t))))^{\frac{2(3m-n)}{n(m+2)}-1}\right]\notag\\
&=&-\frac{6}{qn}\vartheta^{-2m/(m+2)} u''(\sigma (f_\sigma  (\tau (t)))) u(\sigma (f_\sigma  (\tau (t))))^{(12m-2n+mn)/n(m+2)}\notag\\
&& - \frac{12(3m-n)}{qn^2(m+2)}\vartheta^{-2m/(m+2)}u'(\sigma(f_\sigma(\tau(t))))^2u(\sigma(f_\sigma(\tau(t))))^{\frac{4(3m-n)}{n(m+2)}},\label{eq: CdotH-ubeforeNLS}
\end{eqnarray}

\noindent where we have simplified the powers of $u(\sigma(f_\sigma(\tau(t))))$ for the first and second terms by adding
\begin{equation}\frac{m(n+6)}{n(m+2)}+\frac{2(3m-n)}{n(m+2)}=\frac{12m-2n+mn}{n(m+2)}\end{equation}
and 
\begin{equation}\frac{m(n+6)}{n(m+2)}+\frac{2(3m-n)}{n(m+2)}-1=\frac{4(3m-n)}{n(m+2)},\end{equation}
respectively.  Since $u(\sigma)$ is assumed to satisfy the Schr\"odinger-type equation (\ref{eq: CNLS}), $\dot{H}$ in (\ref{eq: CdotH-ubeforeNLS}) now becomes
\begin{eqnarray}
\dot{H}(t)&=&-\frac{6}{qn}\vartheta  ^{-2m/(m+2)}u(\sigma (f_\sigma  (\tau (t))))^{(12m-2n+mn)/n(m+2)}\cdot\notag\\
&&\left[  P(\sigma (f_\sigma  (\tau (t))))u(\sigma (f_\sigma  (\tau (t))))  -E(\sigma (f_\sigma  (\tau (t))))u(\sigma (f_\sigma  (\tau (t)))) \right.\notag\\
&&\left.\qquad\qquad+\displaystyle\sum_{n=1}^N\frac{F_i(\sigma (f_\sigma  (\tau (t))))}{u(\sigma (f_\sigma  (\tau (t))))^{C_i }}  \right]\notag\\
&&-\frac{12(3m-n)}{qn^2(m+2)}\vartheta^{-2m/(m+2)}u'(\sigma(f_\sigma(\tau(t))))^2u(\sigma(f_\sigma(\tau(t))))^{\frac{4(3m-n)}{n(m+2)}}.\notag\\
\end{eqnarray}
Multiplying out the terms, we obtain
\begin{eqnarray}
\dot{H}(t)&=&-\frac{6}{qn}\vartheta  ^{-2m/(m+2)}u(\sigma (f_\sigma  (\tau (t))))^{\frac{2m(n+6)}{n(m+2)}}P(\sigma (f_\sigma  (\tau (t))))\\
&& \  \  \ +\frac{6}{qn}\vartheta  ^{-2m/(m+2)}u(\sigma (f_\sigma  (\tau (t))))^{\frac{2m(n+6)}{n(m+2)}}E(\sigma (f_\sigma  (\tau (t))))\notag\\
&& \  \  \ -\frac{6}{qn}\vartheta  ^{-2m/(m+2)}\displaystyle\sum_{n=1}^N\frac{F_i(\sigma (f_\sigma  (\tau (t))))}{u(\sigma (f_\sigma  (\tau (t))))^{C_i +(2n-12m-mn)/n(m+2)}}\notag\\
&& \ \ \ -\frac{12(3m-n)}{qn^2(m+2)}\vartheta  ^{-2m/(m+2)} u'(\sigma (f_\sigma  (\tau_\sigma(t))))^2u(\sigma (f_\sigma  (\tau (t))))^{\frac{4(3m-n)}{n(m+2)}}\notag
\label{eq: CdotH-u}\end{eqnarray}

\noindent where for the first two terms, we have simplified the powers of $u$ by adding
\begin{equation}\frac{12m-2n+mn}{n(m+2)}+1=\frac{2m(n+6)}{n(m+2)}.\end{equation}
By the definition (\ref{eq: Cgamma-mn}) of the constant $\delta  $, and by the computation (\ref{eq: CH-u}) of $H$ in terms of $u$, we have that 
\begin{equation}\delta   H(t)^2=\frac{12(3m-n)}{qn^2(m+2)}\vartheta^{-2m/(m+2)}u'(\sigma(f_\sigma(\tau(t))))^2 u(\sigma(f_\sigma(\tau(t))))^{\frac{4(3m-n)}{n(m+2)}}.\end{equation}
This shows that the last term in (\ref{eq: CdotH-u}) is equal to $-\delta   H(t)^2$ so that we have 
\begin{eqnarray}\dot{H}(t)+\delta   H(t)^2&=&-\frac{6}{qn}\vartheta  ^{-2m/(m+2)}u(\sigma (f_\sigma  (\tau (t))))^{\frac{2m(n+6)}{n(m+2)}}P(\sigma (f_\sigma  (\tau (t))))\notag\\
&& \  \  \ +\frac{6}{qn}\vartheta  ^{-2m/(m+2)}u(\sigma (f_\sigma  (\tau (t))))^{\frac{2m(n+6)}{n(m+2)}}E(\sigma (f_\sigma  (\tau (t))))\notag\\
&& \  \  \ -\frac{6}{qn}\vartheta  ^{-2m/(m+2)}\displaystyle\sum_{n=1}^N\frac{F_i(\sigma (f_\sigma  (\tau (t))))}{u(\sigma (f_\sigma  (\tau (t))))^{C_i +(2n-12m-mn)/n(m+2)}}.\notag\\
\label{eq: CdotHminusgamma}\end{eqnarray}
Differentiating the definition (\ref{eq: Cphi-psisigmatau}) of $\phi(t)$ and using (\ref{eq: Cdotsigmafsprimedottau}), we have that
\begin{eqnarray}\dot\phi(t)&=&\psi'(\sigma(f_\sigma(\tau(t))))\dot\sigma(f_\sigma(\tau(t)))f'_\sigma(\tau(t))\dot\tau(t)\notag\\
&=&\vartheta^{-m/(m+2)}u(\sigma(f_\sigma(\tau(t))))^{\frac{m(n+6)}{n(m+2)}}\psi'(\sigma(f_\sigma(\tau(t))))\label{eq: Cdotphi-psiprimeu}.
\end{eqnarray}
Using the definition (\ref{eq: Cpsi-P-intheorem}) of $\psi(\sigma)$ in terms of $P(\sigma)$ and squaring (\ref{eq: Cdotphi-psiprimeu}), we obtain
\begin{eqnarray}\dot\phi(t)^2&=&\frac{6}{\varepsilon  qn}\vartheta^{-2m/(m+2)}P(\sigma(f_\sigma(\tau(t))))u(\sigma(f_\sigma(\tau(t))))^{\frac{2m(n+6)}{n(m+2)}}.\end{eqnarray}
This shows that the first term in (\ref{eq: CdotHminusgamma}) is equal to $-\varepsilon \dot\phi(t)^2$ so that 
\begin{eqnarray}\dot{H}(t)+\delta   H(t)^2&=&-\varepsilon  \dot\phi(t)^2 +\frac{6}{qn}\vartheta  ^{-2m/(m+2)}u(\sigma (f_\sigma  (\tau (t))))^{\frac{2m(n+6)}{n(m+2)}}E(\sigma (f_\sigma  (\tau (t))))\notag\\
&& \  \  \ -\frac{6}{qn}\vartheta  ^{-2m/(m+2)}\displaystyle\sum_{n=1}^N\frac{F_i(\sigma (f_\sigma  (\tau (t))))}{u(\sigma (f_\sigma  (\tau (t))))^{C_i +{(2n-12m-mn)}/{n(m+2)}}}.\notag\\
\label{eq: CdotHgammadelta-u}\end{eqnarray}
Now utilizing the definitions (\ref{eq: Ca-usigmatau}), (\ref{eq: CG-Esigmatau}) and (\ref{eq: CGi-Fisigmatau})  of $a(t), G(t)$ and $G_i(t)$ respectively, (\ref{eq: CdotHgammadelta-u}) becomes
\begin{eqnarray}\dot{H}(t)+\delta   H(t)^2&=&-\varepsilon \dot{\phi}^2+\frac{G(t)}{a(t)^{\frac{qm(n+6)}{3(m+2)}}} +\displaystyle\sum_{i=1}^N \frac{G_i(t)}{a(t)^{- \frac{qn}{6}C_i-{q(2n-12m-mn)}/{6(m+2)} }} \notag\\
&=&-\varepsilon \dot{\phi}^2+\frac{G(t)}{a(t)^A}+\displaystyle\sum_{i=1}^N\frac{G_i(t)}{a(t)^{A_i}},\end{eqnarray}
which proves the theorem for $A$ and $A_i$ as in (\ref{eq: CAi-Ci}).\end{proof}

The statement of this theorem would have been much simpler if we were able to choose $\tau(t)=\tau_\sigma(t)$, since then  many compositions of functions would cancel.  In fact, such cancellations would make the use of $\tau(t)$ obsolete, leaving only $\sigma(t)$ and its inverse $g(\sigma)$ to be required for the reparameterization which takes place in the translation between the functions $a(t)$ and $u(\sigma)$.  By (\ref{eq: Cdottaus-dottautaus}) in the forward direction or equivalently (\ref{eq: Cdottau-dottaustau}) in the converse, $\dot\tau(t)=\dot\tau_\sigma(t)$ for $(A+2q-2\delta  )/4(q-\delta  )=1$ and $4/(m+2)=1$ respectively.  This corresponds to the choice of parameter $q=A/2+\delta  $, or equivalently $m=2$ and $n=6$ in the converse notation.  Also for this choice $\delta  =0$ by (\ref{eq: Cgamma-mn}).  As stated above, the theorem only holds for $q\neq \delta  =0$, so the choice $q=A/2+\delta  =A/2$ is only possible for $A\neq 0$.   In summary, when $A\neq 0$ and $\delta  =0$, we apply the theorem with $q=A/2$ or equivalently $m=2, n=6$ in the converse notation.  This will allow us to take the integration constant zero when integrating the relation $\dot\tau(t)=\dot\tau_\sigma(t)$ so that $\tau(t)=\tau_\sigma(t)$ and the statement of the theorem will become simpler.

Therefore in application, the following two versions of the theorem will be useful.  If $A\neq 0$ and $\delta  =0$ then the theorem will be implemented with $q=A/2, m=2, n=6$ and $\tau(t)=\tau_\sigma(t)$.  If $A=0$ and $\delta  \neq 0$  then in the converse notation $m=0$ and by the comments on notation at the beginning of this section, $q=-6\delta  /n$ and the theorem also takes a simpler form in this case.  In particular by integrating (\ref{eq: Cfprimedottausgprime}) and (\ref{eq: dotsigmafsigmatauderiv=1}) with $A=m=0$, we have that $f(\tau_\sigma(g(\sigma))) = \sigma + t_0$ and $\sigma(f_\sigma(\tau(t))) = t-t_0$ for some constant $t_0\in\mathds{R}$ (If $A, \delta  $ are both nonzero then the theorem would be implemented as-is).

Also in the statement of the above theorem, one can show that the constants $\theta$ and $\vartheta$ are related by $\vartheta  =\theta  ^{(A+2q-2\delta   )/2(q-\delta   )}$ or equivalently $\theta  =\vartheta  ^{2/(m+2)}$ in the converse notation.  This was, in fact, why we chose to state the theorem more simply by using two separate constants $\theta  $ and $\vartheta $ for the forward and converse implications respectively - but this is not necessary in the case of the first $\delta  = 0$ corollary since $q=A/2$ implies $\vartheta  =\theta^2$.

\break
\begin{cor}{\bf ($A\neq0, \delta  =0 \stackrel{choose}{\rightarrow} q=A/2, \tau_\sigma(t)=\tau(t) \Rightarrow m=2, n=6, \vartheta=\theta^2 $)}\label{cor: EFE-NLSAnonzeroEzero}
Suppose you are given a twice differentiable function $a(t)>0$, a once differentiable function $\phi(t)$, and also functions $G(t), G_1(t), \dots , G_N(t)$ which satisfy the  scale factor  equation
\begin{equation}\dot{H}(t)+\varepsilon  \dot\phi(t)^2=\frac{G(t)}{a(t)^A}+\displaystyle\sum_{i=1}^N\frac{G_i(t)}{a(t)^{A_i}}\label{eq: CEFEANONZERO}\end{equation}
for some $N\in\mathds{N}$ and $A\neq 0, \varepsilon , A_1, \dots, A_N \in\mathds{R}$ ($\dot{H}(t)\stackrel{def.}{=}\dot{a}(t)/a(t)$).  In order to construct a set of functions which solve the Schr\"odinger-type equation
\begin{equation}u''(\sigma)+[E(\sigma)-P(\sigma)]u(\sigma)=\displaystyle\sum_{i=1}^N\frac{F_i(\sigma)}{u(\sigma)^{C_i}},\label{eq: CNLSANONZERO}\end{equation}
begin by solving for $\sigma(t)$ in the differential equation 
\begin{equation}\dot\sigma(t)=\frac{1}{\theta}a(t)^{-A/2}\end{equation}
for some $\theta>0$.  Now allow $g(\sigma)$ to denote the inverse function of $\sigma(t)$ (which exists since $\dot\sigma(t)>0$ for all $t$).  Then the following functions solve the Schr\"odinger-type equation (\ref{eq: CNLSANONZERO}):
\begin{eqnarray}u(\sigma)&=&a(g  (\sigma))^{-A/2}\label{eq: Cu-atausANONZERO}\\
P(\sigma)&=&\frac{\varepsilon  A}{2}\psi'(\sigma)^2\label{eq: CP-psi-intheoremANONZERO}\\
E(\sigma)&=&\frac{\theta  ^{2} A}{2}G(g  (\sigma))\label{eq: CE-GtausANONZERO}\\
F_i(\sigma)&=&-\frac{\theta  ^{2} A}{2}G_i(g  (\sigma))\label{eq: CFi-GitausANONZERO}
\end{eqnarray}
for 
\begin{equation}C_i=1-2\frac{A_i}{A}, 1\leq i\leq N\label{eq: CCi-AiANONZERO}\end{equation}
and
\begin{equation}\psi(\sigma)=\phi(g(\sigma))\label{eq: Cpsi-phitausANONZERO}.\end{equation}

Conversely, suppose you are given a twice differentiable function $u(\sigma)>0$ and also functions $E(\sigma), P(\sigma), F_1(\sigma), \dots, F_N(\sigma)$ which satisfy the Schr\"odinger-type equation (\ref{eq: CNLSANONZERO}) for some $C_i \in\mathds{R}$, and $N\in\mathds{N}$.  In order to construct functions which solve the  scale factor  equation (\ref{eq: CEFEANONZERO}), begin by solving for $\sigma(t)$  in the differential equation
\begin{equation}\dot\sigma(t)=\frac{1}{\theta}u(\sigma(t))\label{eq: Cdotsigma-uANONZERO}\end{equation}
for some $\theta>0$.   Next find a function $\psi(\sigma)$ which solves the differential equation 
\begin{equation}\psi'(\sigma)^2=\frac{2}{\varepsilon  A}P(\sigma)\label{eq: Cpsi-P-intheoremANONZERO}\end{equation}
 for any  $\varepsilon , A \in\mathds{R}\backslash\{0\}$.  Then the following functions solve the  scale factor  equation (\ref{eq: CEFEANONZERO}):
\begin{eqnarray}a(t)&=&u(\sigma (t))^{-2/A} \label{eq: Ca-usigmatauANONZERO}\\
\phi(t)&=&\psi(\sigma (t))\label{eq: Cphi-psisigmatauANONZERO})\\
G(t)&=&\frac{2}{A\theta^2}E(\sigma (t))\label{eq: CG-EsigmatauANONZERO}\\
G_i(t)&=&-\frac{2}{A\theta^2}  F_i(\sigma (t))\label{eq: CGi-FisigmatauANONZERO}
\end{eqnarray}
for
\begin{equation} A_i=\frac{A}{2}\left(1-C_i\right), 1\leq i\leq N .           \label{eq: CAi-Ci}\end{equation}

\end{cor}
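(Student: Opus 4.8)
The plan is to treat this corollary as the specialization $\delta = 0$, $A \neq 0$ of Theorem \ref{thm: EFE-NLS} under the parameter choice $q = A/2$ (equivalently $m = 2$, $n = 6$, and hence $\vartheta = \theta^2$) set up in the paragraphs preceding the statement. There are two routes. One can substitute these values into the dictionary (\ref{eq: Cu-ataus})--(\ref{eq: Cpsi-phitaus}) together with its converse, use $(A + 2q - 2\delta)/4(q-\delta) = 1$ in the forward direction and $4/(m+2) = 1$ in the converse to conclude $\tau(t) = \tau_\sigma(t)$ after choosing the integration constant to be zero, and then collapse every triple composition $f(\tau_\sigma(g(\sigma)))$ to $g(\sigma)$ and $\sigma(f_\sigma(\tau(t)))$ to $\sigma(t)$; this reduces each formula of the theorem to the stated (\ref{eq: Cu-atausANONZERO})--(\ref{eq: CAi-Ci}). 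However, since the intermediate variable $\tau_\sigma$ disappears entirely in this case, I would instead give a short self-contained computation in the simplified setting, which sidesteps the bookkeeping of nested inverses.

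For the forward implication I would define $\sigma(t)$ by $\dot\sigma = \theta^{-1}a^{-A/2}$ with inverse $g$, so that $g'(\sigma) = \theta\, a(g(\sigma))^{A/2}$, and set $u(\sigma) = a(g(\sigma))^{-A/2}$. Differentiating twice via the chain rule gives $u'(\sigma) = -\tfrac{A\theta}{2}H(g(\sigma))$ and then $u''(\sigma) = -\tfrac{A\theta^2}{2}\dot H(g(\sigma))\, a(g(\sigma))^{A/2}$. Substituting the scale factor equation (\ref{eq: CEFEANONZERO}) for $\dot H$ produces three terms. The key algebraic step is converting powers of $a$ into powers of $u$ through $a = u^{-2/A}$: one checks $a^{A/2} = u^{-1}$, $a^{-A/2} = u$, and $a^{A/2 - A_i} = u^{-C_i}$ precisely when $C_i = 1 - 2A_i/A$, which is (\ref{eq: CCi-AiANONZERO}). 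The scalar field term is handled by $\psi(\sigma) = \phi(g(\sigma))$, whence $\psi'^2 = \theta^2\dot\phi^2 a^A$, so the first term becomes $\tfrac{\varepsilon A}{2}\psi'^2\, u = P(\sigma)u(\sigma)$. Matching the remaining two terms against the definitions of $E$ and $F_i$ then yields $u'' + (E - P)u = \sum F_i/u^{C_i}$.

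The converse is entirely parallel: starting from $u, E, P, F_i$ solving (\ref{eq: CNLSANONZERO}), I would define $\sigma(t)$ by $\dot\sigma = \theta^{-1}u(\sigma)$, set $a = u(\sigma(t))^{-2/A}$ and $\phi = \psi(\sigma(t))$ with $\psi'^2 = \tfrac{2}{\varepsilon A}P$, and take $G, G_i$ as in (\ref{eq: CG-EsigmatauANONZERO})--(\ref{eq: CGi-FisigmatauANONZERO}). Computing $H = -\tfrac{2}{A\theta}u'(\sigma(t))$ and $\dot H = -\tfrac{2}{A\theta^2}u''(\sigma(t))\,u(\sigma(t))$, substituting the Schr\"odinger equation, and using $u^2 = a^{-A}$ together with $u^{1-C_i} = a^{-A_i}$ (the latter forcing $A_i = \tfrac{A}{2}(1 - C_i)$, i.e. (\ref{eq: CAi-Ci})) recovers $\dot H + \varepsilon\dot\phi^2 = G/a^A + \sum G_i/a^{A_i}$. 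Because $\delta = 0$ here, there is no $\delta H^2$ term to track, which is exactly what makes this case short.

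I expect no genuine obstacle: the substance is the bookkeeping of exponents, and the only point needing a word of care is the reparameterization. In the ``invoke the theorem'' route this is the justification that the integration constant in $\dot\tau = \dot\tau_\sigma$ may be set to zero so that $\tau = \tau_\sigma$; in the direct route this subtlety evaporates, since a single reparameterization $\sigma(t)$ and its inverse $g$ carry out all of the work. I would therefore present the direct computation and note at the end that it agrees with Theorem \ref{thm: EFE-NLS} evaluated at $\delta = 0$, $q = A/2$, $m = 2$, $n = 6$, and $\vartheta = \theta^2$.
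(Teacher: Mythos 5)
Your proposal is correct, and your preferred route differs from the paper's in a way worth noting. The paper does not prove Corollary \ref{cor: EFE-NLSAnonzeroEzero} by a fresh computation: it obtains it purely as the specialization of Theorem \ref{thm: EFE-NLS} described in the paragraphs preceding the statement, namely $\delta=0$, $q=A/2$ (equivalently $m=2$, $n=6$, $\vartheta=\theta^2$), with the integration constant in $\dot\tau=\dot\tau_\sigma$ set to zero so that $\tau=\tau_\sigma$ and the nested compositions $f(\tau_\sigma(g(\sigma)))$ and $\sigma(f_\sigma(\tau(t)))$ collapse to $g(\sigma)$ and $\sigma(t)$. That is exactly your first route, and you have identified all the relevant identifications, including the one genuinely delicate point (the freedom to choose the integration constant so that the two time reparameterizations coincide). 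Your second, self-contained route is also sound: I checked the chain-rule computations $u'=-\tfrac{A\theta}{2}H\circ g$, $u''=-\tfrac{A\theta^2}{2}(\dot H\circ g)\,(a\circ g)^{A/2}$, the exponent conversions $a^{A/2}=u^{-1}$ and $a^{A/2-A_i}=u^{-C_i}$ with $C_i=1-2A_i/A$, and the converse computations $H=-\tfrac{2}{A\theta}u'\circ\sigma$, $\dot H=-\tfrac{2}{A\theta^2}(u''u)\circ\sigma$ with $u^{1-C_i}=a^{-A_i}$, and they all close up correctly. What the direct route buys is the elimination of the auxiliary variable $\tau_\sigma$ and its inverse entirely, so the reparameterization subtlety never arises; what the paper's route buys is that the corollary is manifestly an instance of the general theorem, so no new verification is logically required. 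Either presentation is acceptable; if you give the direct computation, the closing remark that it agrees with Theorem \ref{thm: EFE-NLS} at $\delta=0$, $q=A/2$ is a good consistency check but is not needed for the proof.
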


\begin{cor}{\bf ($A=0, \delta  \neq 0 \Rightarrow -\frac{qn}{6}=\delta  , m=0  \Rightarrow  f(\tau_\sigma(g(\sigma)))=\sigma+t_0, \sigma(f_\sigma(\tau(t)))=t-t_0   $)}\label{cor: EFE-NLSAzeroEnonzero}
Suppose you are given a twice differentiable function $a(t)>0$, a once differentiable function $\phi(t)$, and also functions $G(t), G_1(t), \dots , G_N(t)$ which satisfy the  scale factor  equation
\begin{equation}
\dot{H}(t)+\delta   H(t)^2+\varepsilon  \dot\phi(t)^2=G(t)+\displaystyle\sum_{i=1}^N \frac{G_i(t)}{a(t)^{A_i}}
\label{eq: CEFEAZERO}\end{equation}
 for some $N\in\mathds{N}$ and $\delta  \neq 0, \varepsilon , A_1, \dots, A_N \in\mathds{R}$ (where as usual, $H(t)\stackrel{def.}{=}\dot{a}(t)/a(t)$).  
 Then the functions 
\begin{eqnarray}u(\sigma)&=&a(\sigma+t_0)^{\delta   }\label{eq: Cu-atausAZERO}\\
P(\sigma)&=&-\varepsilon \delta   \psi'(\sigma)^2\label{eq: CP-psi-intheoremAZERO}\\
E(\sigma)&=&-\delta   G(\sigma+t_0)\label{eq: CE-GtausAZERO}\\
F_i(\sigma)&=&\delta   G_i(\sigma+t_0)\label{eq: CFi-GitausAZERO}
\end{eqnarray}
solve the Schr\"odinger-type equation 
 \begin{equation}u''(\sigma)+[E(\sigma)-P(\sigma)]u(\sigma)=\displaystyle\sum_{i=1}^N\frac{F_i(\sigma)}{u(\sigma)^{C_i}}\label{eq: CNLSAZERO}\end{equation}
\begin{equation}\mbox{{\it for \ } }C_i=\frac{A_i}{\delta  }-1, 1\leq i\leq N\label{eq: CCi-AiAZERO}\end{equation}
\begin{equation}\mbox{{\it and \ } }\psi(\sigma)=\phi(\sigma+t_0)\label{eq: Cpsi-phitausAZERO}.\end{equation}

Conversely, suppose you are given a twice differentiable function $u(\sigma)>0$ and also functions $E(\sigma), P(\sigma), F_1(\sigma), \dots, F_N(\sigma)$ which satisfy the Schr\"odinger-type equation (\ref{eq: CNLSAZERO}) for some $C_i \in\mathds{R}$ and $N\in\mathds{N}$.  
For $\psi(\sigma)$ such that
\begin{equation}\psi'(\sigma)^2=-\frac{1}{\varepsilon  \delta  }P(\sigma)\label{eq: Cpsi-P-intheoremAZERO}\end{equation}
 for any $\varepsilon ,\delta\in\mathds{R}\backslash\{0\}$, the following functions solve the  scale factor equation (\ref{eq: CEFEAZERO}):
\begin{eqnarray}a(t)&=&u(t-t_0)^{1/\delta  } \label{eq: Ca-usigmatauAZERO}\\
\phi(t)&=&\psi(t-t_0)\label{eq: Cphi-psisigmatauAZERO}\\
G(t)&=&-\frac{1}{\delta  }E(t-t_0)\label{eq: CG-EsigmatauAZERO}\\
G_i(t)&=&\frac{1}{\delta  }F_i(t-t_0)\label{eq: CGi-FisigmatauAZERO}
\end{eqnarray}

for
\begin{equation}\qquad A_i=\delta  \left(1+C_i\right), 1\leq i\leq N. \label{eq: CAi-CiAZERO}\end{equation}
\end{cor}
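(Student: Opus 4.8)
The plan is to obtain this corollary as the specialization of Theorem \ref{thm: EFE-NLS} to the parameter regime $A=0$, which via the converse-notation dictionary introduced at the start of this section corresponds to $m=0$. First I would record the two consequences of this choice. From the power formula $A=qm(n+6)/(3(2+m))$ one reads off that $m=0$ forces $A=0$, and from the coefficient relation (\ref{eq: Cgamma-mn}) one gets $\delta=-qn/6$, i.e. $qn=-6\delta$; since $\delta\neq0$ this is consistent with $q,n\neq0$. The decisive simplification is that every factor of the form $\theta^{A/(q-\delta)}$ and $\vartheta^{-2m/(m+2)}$ appearing in the general theorem collapses to $1$, and, crucially, the two reparameterizing compositions degenerate to pure translations.

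For the forward implication I would start from the given solution $(a,\phi,G,G_1,\dots,G_N)$ of (\ref{eq: CEFEAZERO}), which is exactly (\ref{eq: CEFE}) with $A=0$. Setting $A=0$ in the useful quantity (\ref{eq: Cfprimedottausgprime}) makes its right-hand side identically $1$ regardless of the free parameter $q$; since that left-hand side is precisely $\frac{d}{d\sigma}f(\tau_\sigma(g(\sigma)))$ by the chain rule, integration yields $f(\tau_\sigma(g(\sigma)))=\sigma+t_0$ for a constant $t_0$. Substituting this translation together with $\theta^{A/(q-\delta)}=1$ into the forward formulas (\ref{eq: Cu-ataus})--(\ref{eq: CFi-Gitaus}) and (\ref{eq: Cpsi-phitaus}) of Theorem \ref{thm: EFE-NLS} immediately gives (\ref{eq: Cu-atausAZERO})--(\ref{eq: CFi-GitausAZERO}) and (\ref{eq: Cpsi-phitausAZERO}); in particular the exponent $(2\delta-A)/2$ becomes $\delta$ and the prefactors $\frac12(A-2\delta)$, $\frac12(2\delta-A)$ become $-\delta,+\delta$. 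Finally $C_i=1-2(A_i-2\delta)/(A-2\delta)$ from (\ref{eq: CCi-Ai}) reduces, at $A=0$, to $C_i=A_i/\delta-1$, matching (\ref{eq: CCi-AiAZERO}).

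For the converse I would run the same mechanism in the $(m,n,q)$ notation. With $m=0$ the useful quantity (\ref{eq: dotsigmafsigmatauderiv=1}) has right-hand side $1$, and since its left-hand side is $\frac{d}{dt}\sigma(f_\sigma(\tau(t)))$, integration gives $\sigma(f_\sigma(\tau(t)))=t-t_0$. Feeding this translation, together with $qn=-6\delta$ and $\vartheta^{-2m/(m+2)}=1$, into the converse formulas (\ref{eq: Ca-usigmatau})--(\ref{eq: CGi-Fisigmatau}) and (\ref{eq: Cpsi-P-intheorem}) yields (\ref{eq: Ca-usigmatauAZERO})--(\ref{eq: CGi-FisigmatauAZERO}) and (\ref{eq: Cpsi-P-intheoremAZERO}): the exponent $-6/qn$ becomes $1/\delta$, the prefactor $6/qn$ becomes $-1/\delta$, and $6/(\varepsilon qn)$ becomes $-1/(\varepsilon\delta)$. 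The power relation (\ref{eq: CAi-Ci}) with $m=0$ simplifies to $A_i=-\frac{qn}{6}(1+C_i)=\delta(1+C_i)$, matching (\ref{eq: CAi-CiAZERO}). I do not expect a genuine analytic obstacle here, since the heavy lifting was already done in Theorem \ref{thm: EFE-NLS}; the only real work, and the main place to be careful, is the bookkeeping that verifies each power and prefactor collapses as claimed and that the single integration constant $t_0$ is used consistently as a forward shift $\sigma+t_0$ and a converse shift $t-t_0$, so that the forward and converse maps are genuinely inverse to one another.
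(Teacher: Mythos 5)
Your proposal is correct and follows essentially the same route as the paper: the corollary is obtained by specializing Theorem \ref{thm: EFE-NLS} to $A=0$ (equivalently $m=0$, so $\delta=-qn/6$), integrating the collapsed quantities (\ref{eq: Cfprimedottausgprime}) and (\ref{eq: dotsigmafsigmatauderiv=1}) to get the translations $f(\tau_\sigma(g(\sigma)))=\sigma+t_0$ and $\sigma(f_\sigma(\tau(t)))=t-t_0$, and then substituting into the forward and converse formulas. Your bookkeeping of the exponents, prefactors, and the power relations $C_i=A_i/\delta-1$ and $A_i=\delta(1+C_i)$ all checks out against the general theorem.
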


\chapter{Reformulations of the Friedmann-Robertson-Lema\^itre-Walker model}
\label{ch: FRLW}

This cosmological model assumes that the $d+1$-dimensional spacetime is both homogeneous and isotropic, resulting in a metric of the form
\begin{equation} ds^2=-dt^2+a(t)^2\left( \frac{dr^2}{1-kr^2}+r^2d\Omega_{d-1}^2\right)\label{eq: FRLWmetric}\end{equation}
where $a(t)$ is the scale factor, $k\in\{-1,0,1\}$ is the curvature parameter and
\begin{equation} d\Omega^2_{d-1}=d\theta_1^2+\sin^2\theta_1d\theta_2^2+\cdots +\sin^2\theta_1\cdots\sin^2\theta_{d-2}d\theta_{d-1}^2.\end{equation}
In this section, we take the energy density and pressure in equation (\ref{eq: T2matrix})  to be 
\begin{equation}\rho(t)=\displaystyle\sum_{i=1}^{M}\frac{D_i(t)}{a(t)^{n_i}}+\rho'(t)\label{eq: rhototal=sum}\end{equation}
and
\begin{equation}p(t)=\displaystyle\sum_{i=1}^{M}\frac{(n_i-d)D_i(t)}{da(t)^{n_i}}+p'(t)\label{eq: ptotal=sum}\end{equation}
respectively, for some $n_i\in\mathds{R}$ and $1\leq i \leq M$.  The nontrivial and distinct Einstein's equations $g^{ij}G_{ij}=-\kappa g^{ij}T_{ij}+\Lambda$ are  the $(i,j)=(0,0)$ and $(i,j)=(1,1)$ equations.  
\noindent Dividing by $(d-1)$, these equations are
\begin{equation}\frac{d}{2}H^2(t)+\frac{dk}{2a(t)^2}\stackrel{(i)}{=}\frac{\kappa}{(d-1)} \left[\frac{1}{2}\dot\phi(t)^2+V(\phi(t))+\displaystyle\sum_{i=1}^{M}\frac{D_i(t)}{a(t)^{n_i}}+\rho'(t)\right]+\frac{\Lambda}{(d-1)}\notag\end{equation}
\break
\begin{eqnarray}&&\label{eq: FRLWEFEiii}\\
\dot{H}(t)+\frac{d}{2}H(t)^2+\frac{(d-2)k}{2a(t)^2}&\stackrel{(ii)}{=}&-\frac{\kappa}{(d-1)} \left[\frac{1}{2}\dot\phi(t)^2-V(\phi(t))+\displaystyle\sum_{i=1}^{M}\frac{(n_i-d)D_i(t)}{da(t)^{n_i}}\right.\notag\\
&&\qquad\qquad\qquad\qquad\left.+p'(t)\right]+\frac{\Lambda}{(d-1)}\qquad\qquad \notag\end{eqnarray}
where $H(t)\stackrel{def.}{=}\frac{\dot{a}(t)}{a(t)}$.

\section{In terms of a Generalized EMP}
\begin{thm}\label{thm: FRLWEFE-EMP}
Suppose you are given a twice differentiable function $a(t)>0$, a once differentiable function $\phi(t)$, and also functions $D_i(t), \rho'(t), p'(t), V(x)$ which satisfy the Einstein equations $(i),(ii)$ in (\ref{eq:  FRLWEFEiii}) for some $k, n_1, \dots, n_{M}, \Lambda \in\mathds{R}, d\in\mathds{R}\backslash\{0,1\},$ $ \kappa\in\mathds{R}\backslash\{0\}$ and $M,M_2\in\mathds{N}$.  If $f(\tau)$ is the inverse of a function $\tau(t)$ which satisfies
\begin{equation}\dot{\tau}(t)=\theta a(t)^{q}\label{eq: FRLWEMPdottau-a}\end{equation}
for some $\theta>0$ and $q\in\mathds{R}\backslash\{0\}$, then 
\begin{eqnarray}Y(\tau)=a(f(\tau))^q\qquad\mbox{ and }\qquad Q(\tau)=\frac{q\kappa}{(d-1)}\varphi '(\tau)^2\label{eq: FRLWEMPY-aQ-varphi}\end{eqnarray}
solve the generalized EMP equation
\begin{eqnarray}&&Y''(\tau)+Q(\tau)Y(\tau)=\label{eq: FRLWEMP}\\
&& \ \ \ \  \ \ \ \ \ \ \frac{qk}{\theta^2 Y(\tau)^{(2+q)/q}}-\displaystyle\sum_{i=1}^{M}\frac{qn_i \kappa\mbox{\hookD}_i(\tau)}{\theta^2 d(d-1) Y(\tau)^{(n_i+q)/q}}-\frac{q\kappa(\varrho(\tau)+\textup{ \textlhookp}(\tau))}{\theta^2(d-1) Y(\tau)}\notag\end{eqnarray}
for 
\begin{equation}\varphi(\tau)=\phi(f(\tau))\label{eq: FRLWEMPvarphi-phi}\end{equation}
\begin{equation}\mbox{\hookD}_i(\tau)=D_i(f(\tau)), \ \  1\leq i\leq M \label{eq: FRLWEMPhookD-D}\end{equation}
and
\begin{equation}\varrho(\tau)=\rho'(f(\tau)), \ \textup{\textlhookp}(\tau)=p'(f(\tau)). \label{eq: FRLWEMPvarrho-rhohookp-p}\end{equation}

Conversely, suppose you are given a twice differentiable function $Y(\tau)>0$, a continuous function $Q(\tau)$, and also functions $\mbox{\hookD}_i(\tau)$ for $1\leq i\leq M, M\in\mathds{N}$ and $\varrho(\tau), \textup{\textlhookp}(\tau)$  which solve (\ref{eq: FRLWEMP}) for some constants $\theta>0, q, \kappa \in\mathds{R}\backslash\{0\}, k\in\mathds{R}, d\in\mathds{R}\backslash\{0,1\}$ and $n_i\in\mathds{R}$ for $1\leq i \leq M$.  In order to construct functions which solve $(i),(ii)$, first find $\tau(t), \varphi(\tau)$ which solve the differential equations
\begin{equation}\dot{\tau}(t)=\theta Y(\tau(t))\qquad\mbox{ and }\qquad \varphi '(\tau)^2= \frac{(d-1)}{q\kappa} Q(\tau)\label{eq: FRLWEMPdottau-Yvarphi-Q}.\end{equation}
Then the functions
\begin{eqnarray}a(t)&=&Y(\tau(t))^{1/q}\label{eq: FRLWEMPa-Y}\\
\phi(t)&=&\varphi(\tau(t))\label{eq:   FRLWEMPphi-varphi}\end{eqnarray}
\begin{equation}D_i(t)=\mbox{\hookD}_i(\tau(t)), \ \  1\leq i \leq M\label{eq: FRLWEMPD-hookD}\end{equation}
\begin{equation}\rho'(t)=\varrho(\tau(t)), \ p'(t)=\textup{\textlhookp}(\tau(t)), \label{eq: FRLWEMPrho-varrhop-hookp}\end{equation}
and\\
$V(\phi(t))$
\begin{equation}
=\left[\frac{d(d-1)}{2\kappa}\left(\frac{\theta^2(Y')^2}{q^2}+\frac{k}{Y^{2/q}}\right)-\frac{\theta^2}{2}Y^2(\varphi ')^2-\displaystyle\sum_{i=1}^{M}\frac{\mbox{\hookD}_i}{Y^{n_i/q}}-\varrho-\frac{\Lambda}{\kappa}\right]\circ\tau(t)\label{eq: FRLWEMPVphi-Y}
\end{equation}
satisfy equations $(i),(ii)$.

\end{thm}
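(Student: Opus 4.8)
The plan is to deduce this statement directly from Theorem~\ref{thm: EFE-EMP} by first isolating a scale factor equation of the form (\ref{eq: AEFE}) from the two Einstein equations, and then treating the potential $V$ separately. The key manipulation is to subtract equation $(i)$ from equation $(ii)$ in (\ref{eq: FRLWEFEiii}). On the left the $\frac{d}{2}H^2$ terms cancel and the curvature terms combine to $\tfrac{(d-2)k-dk}{2a^2}=-k/a^2$, while on the right the $\Lambda/(d-1)$ contributions and the $\pm V$ contributions cancel simultaneously; the two matter sums merge through the identity $\frac{(n_i-d)D_i}{da^{n_i}}+\frac{D_i}{a^{n_i}}=\frac{n_iD_i}{da^{n_i}}$. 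First I would carry out this subtraction to obtain
\[
\dot H(t)+\frac{\kappa}{d-1}\dot\phi(t)^2=\frac{k}{a(t)^2}-\sum_{i=1}^{M}\frac{\kappa n_iD_i(t)}{(d-1)d\,a(t)^{n_i}}-\frac{\kappa(\rho'(t)+p'(t))}{d-1},
\]
which is exactly (\ref{eq: AEFE}) with $\delta=0$, $\varepsilon=\kappa/(d-1)$, and $N=M+1$ sources: one with $G_0=k,\ A_0=2$; the family $G_i=-\kappa n_iD_i/((d-1)d),\ A_i=n_i$ for $1\le i\le M$; and a final constant-coefficient source $G_{M+1}=-\kappa(\rho'+p')/(d-1),\ A_{M+1}=0$.

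With these identifications the EMP assertions follow by invoking Theorem~\ref{thm: EFE-EMP}. Since $\delta=0$, the reparameterization (\ref{eq: Atau-a}) reduces to $\dot\tau=\theta a^{q}$, matching (\ref{eq: FRLWEMPdottau-a}), and (\ref{eq: AQ-phi}) gives $Q=q\varepsilon\varphi'^2=\frac{q\kappa}{d-1}\varphi'^2$ as in (\ref{eq: FRLWEMPY-aQ-varphi}). The power rule (\ref{eq: ABi-Ai}) specializes to $B_i=(A_i+q)/q$, producing exactly the exponents $(2+q)/q$, $(n_i+q)/q$, and $1$ seen on the right of (\ref{eq: FRLWEMP}), while (\ref{eq: Alambdai-Gi}) sends each $G_i$ to $\lambda_i=\frac{q}{\theta^2}G_i$; substituting the three source families reproduces the three groups of terms in (\ref{eq: FRLWEMP}) verbatim. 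Both the forward and converse EMP correspondences then transfer immediately from the two directions of Theorem~\ref{thm: EFE-EMP}.

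It remains to recover the potential, which the combination $(ii)-(i)$ deliberately discarded. For this I would solve equation $(i)$ algebraically for $V$, obtaining
\[
V=\frac{d(d-1)}{2\kappa}H^2+\frac{d(d-1)k}{2\kappa a^2}-\frac{\Lambda}{\kappa}-\frac12\dot\phi^2-\sum_{i=1}^{M}\frac{D_i}{a^{n_i}}-\rho',
\]
and then translate each term through the dictionary of Theorem~\ref{thm: EFE-EMP}: with $\delta=0$ relation (\ref{eq: yprimegeneral-Ha}) gives $H=\frac{\theta}{q}Y'$, relation (\ref{eq: Adotphi-Y}) gives $\dot\phi^2=\theta^2Y^2\varphi'^2$, while $a^2=Y^{2/q}$ and $D_i=\mbox{\hookD}_i$. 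Substituting these yields precisely the expression (\ref{eq: FRLWEMPVphi-Y}). For the forward implication this is a direct computation; for the converse it serves as the \emph{definition} of $V$, and one verifies the theorem by noting that $(i)$ then holds by construction while $(ii)=(i)+[(ii)-(i)]$ holds because the bracketed scale factor equation is exactly what Theorem~\ref{thm: EFE-EMP} guarantees. The step I expect to demand the most care is confirming that this single subtraction annihilates $V$ and $\Lambda$ at once and that the matter terms recombine as stated; once that bookkeeping is checked, the result is a straightforward corollary of Theorem~\ref{thm: EFE-EMP} supplemented by the scalar constraint $(i)$.
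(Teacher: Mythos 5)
Your proposal is correct and follows essentially the same route as the paper: forming the difference $(ii)-(i)$ to obtain the scale factor equation with $\delta=0$, $\varepsilon=\kappa/(d-1)$ and the source identifications $G_0=k$, $G_i=-\kappa n_iD_i/(d(d-1))$, $G_{M+1}=-\kappa(\rho'+p')/(d-1)$, invoking Theorem~\ref{thm: EFE-EMP} in both directions, and recovering $V$ from equation $(i)$ (as a computation in the forward direction, as a definition in the converse, with $(ii)$ then following from $(i)$ together with the regained difference equation). No gaps.
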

\begin{proof}
This proof will implement Theorem \ref{thm: EFE-EMP} with constants and functions as indicated in the following table.

   \begin{table}[ht]
\centering
\caption{{ Theorem \ref{thm: EFE-EMP} applied to FRLW}}\label{tb: FRLWEMP}
\vspace{.2in}
\begin{tabular}{r | l c r | l}
In Theorem & substitute & & In Theorem& substitute \\[4pt]
\hline
\raisebox{-5pt}{$\delta  $} & \raisebox{-5pt}{$0$}      &&    \raisebox{-5pt}{$\varepsilon $} & \raisebox{-5pt}{${\kappa}/{(d-1)}$}\\[8pt]
                                       $G_0(t)$ & $\mbox{constant } k$    &&     $A_0$ & $2$ \\[8pt]
$G_i(t), 1\leq i\leq M$ &$ \frac{-n_i \kappa}{d(d-1)}D_i(t)   $  &&    $A_i$ &$n_i$ \\[8pt]
 $G_{M+1}(t)$ & $ \frac{-\kappa}{(d-1)}(\rho'(t)+p'(t))$  &&      $A_{M+1}$   &$0$\\[8pt]
 $ \lambda_0(\tau)$ &constant ${qk}/{\theta^2}$       &&      $B_0$ &$(2+q)/q$ \\[8pt] 
   $\lambda_i(\tau), 1\leq i\leq M$&$\frac{-qn_i\kappa}{\theta^2 d(d-1)}\mbox{\hookD}_i(\tau)   $ &&  $B_i$&$(n_i+q)/q$\\[8pt]
$\lambda_{M+1}(\tau)$&$\frac{-q\kappa}{\theta^2(d-1)}(\varrho(\tau)+\textup{\textlhookp}(\tau))$ &&        $B_{M+1}$&  $1$ \\[6pt]
\hline
\end{tabular}
\end{table}

To prove the forward implication, we assume to be given functions which solve the Einstein field equations $(i)$ and $(ii)$.  Subtracting equations $(ii)-(i)$ we obtain
\begin{equation}\dot{H}(t)-\frac{k}{a(t)^2}=-\frac{\kappa}{(d-1)}\left[\dot{\phi}(t)^2+\displaystyle\sum_{i=1}^{M}\frac{n_i D_i(t)}{da(t)^{n_i}}+(\rho'(t)+p'(t))
\right].\label{eq: FRLWEMPiiminusi}\end{equation}
This shows that $a(t), \phi(t), D_i(t), \rho'(t)$ and $p'(t)$ satisfy the hypothesis of Theorem \ref{thm: EFE-NLS}, applied with  constants $\epsilon, \varepsilon , N, A_0, \dots, A_{N}$ and functions $G_0(t),\dots,G_{N}(t)$ according to Table \ref{tb: FRLWEMP}.  Since $\tau(t), Y(\tau), Q(\tau)$ and $\varphi(\tau)$ defined in (\ref{eq: FRLWEMPdottau-a}), (\ref{eq: FRLWEMPY-aQ-varphi}) and (\ref{eq: FRLWEMPvarphi-phi})  are equivalent to that in the forward implication of Theorem \ref{thm: EFE-NLS}, by this theorem and by definitions (\ref{eq: FRLWEMPhookD-D}) and (\ref{eq: FRLWEMPvarrho-rhohookp-p}) of $\mbox{\hookD}(\tau)$ and $\varrho(\tau), \textup{\textlhookp}(\tau)$, the generalized EMP equation (\ref{eq: AEMP}) holds for constants $B_0, \dots, B_{N}$ and functions $\lambda_0(\tau), \dots, \lambda_{N}(\tau)$ as indicated in Table \ref{tb: FRLWEMP}.  This proves the forward implication.

To prove the converse implication, we assume to be given functions which solve the generalized EMP equation (\ref{eq: FRLWEMP}) and we begin by showing that $(i)$ is satisfied.   Differentiating definition (\ref{eq: FRLWEMPa-Y}) of $a(t)$ and by the definition of $\tau(t)$ in (\ref{eq: FRLWEMPdottau-Yvarphi-Q}), we have that
\begin{eqnarray}\dot{a}(t)&=&\frac{1}{q}Y(\tau(t))^{\frac{1}{q}-1}Y'(\tau(t))\dot{\tau}(t)\notag\\
&=&\frac{\theta}{q}Y(\tau(t))^{1/q}Y'(\tau(t)).\end{eqnarray}
Dividing by $a(t)$ we obtain
\begin{equation}H(t)\stackrel{def.}{=}\frac{\dot{a}(t)}{a(t)}=\frac{\theta}{q}Y'(\tau(t)).\label{eq: FRLWEMPH-Ytau}\end{equation}
Differentiating the definition (\ref{eq: FRLWEMPphi-varphi}) of $\phi(t)$ and using definition of  $\tau(t)$ in (\ref{eq: FRLWEMPdottau-Yvarphi-Q}),   we find that
\begin{equation}\dot{\phi}(t)=\varphi '(\tau(t))\dot{\tau}(t)=\theta\varphi '(\tau(t))Y(\tau(t)).\label{eq: FRLWEMPdotphi-Y}\end{equation}
Using (\ref{eq: FRLWEMPH-Ytau}), (\ref{eq: FRLWEMPdotphi-Y}), and  the definitions (\ref{eq: FRLWEMPa-Y}), (\ref{eq: FRLWEMPD-hookD}) and (\ref{eq: FRLWEMPrho-varrhop-hookp}) of $a(t), D_i(t)$ and $\rho'(t)$ respectively, the definition (\ref{eq: FRLWEMPVphi-Y}) of $V\circ\phi$ can be written as
\begin{equation}
V(\phi(t))=\frac{d(d-1)}{2\kappa}\left(
H(t)^2+\frac{k}{a(t)^2}\right)-\frac{1}{2}\dot\phi(t)^2-\displaystyle\sum_{i=1}^{M}\frac{D_i(t)}{a(t)^{n_i}}-\rho'(t)-\frac{\Lambda}{\kappa}
\label{eq: FRLWEMPVphi-a}.\end{equation}
This shows that $(i)$ holds  (that is, the definition of $V(\phi(t))$ was designed to be such that $(i)$ holds).

To conclude the proof we must also show that $(ii)$ holds.   In the converse direction the hypothesis of the converse of Theorem \ref{thm: EFE-EMP} holds, applied with constants $N, B_0, \dots, B_{N}$ and functions $\lambda_0(\tau), \dots, \lambda_{N}(\tau)$ as indicated in Table \ref{tb: FRLWEMP}.  Since  $\tau(t), \varphi(\tau), a(t)$ and $\phi(t)$ defined in (\ref{eq: FRLWEMPdottau-Yvarphi-Q}), (\ref{eq: FRLWEMPa-Y}) and (\ref{eq: FRLWEMPphi-varphi}) are consistent with the converse implication of Theorem A.1, applied with $\delta  $ and $\varepsilon $ as in Table \ref{tb: FRLWEMP}, by this theorem and by the definitions (\ref{eq: FRLWEMPD-hookD}) and (\ref{eq: FRLWEMPrho-varrhop-hookp}) of $D_i(t)$ and $\rho'(t), p'(t)$ the  scale factor  equation (\ref{eq: AEFE}) holds for constants $\delta  , \varepsilon , A_0, \dots, A_{N}$ and functions $G_0(t),\dots,G_{N}(t)$ according to Table \ref{tb: FRLWEMP}.  That is, we have regained (\ref{eq: FRLWEMPiiminusi}) which shows that the subtraction of equations (ii)-(i) holds in the converse direction.  Now solving (\ref{eq: FRLWEMPVphi-a}) for $\rho'(t)$ and substituting this into (\ref{eq: FRLWEMPiiminusi}), we obtain (ii).  This proves the theorem.
\end{proof}

\subsection{First reduction to classical EMP: pure scalar field}
As a special case we take $\rho'=p'=D_i=0$ and we choose the parameter $q=1$.  Then Theorem  \ref{thm: FRLWEFE-EMP} 
shows that solving the Einstein equations
\begin{equation}\frac{d}{2}H^2(t)+\frac{dk}{2a(t)^2}\stackrel{(i)'}{=}\frac{\kappa}{(d-1)} \left[\frac{1}{2}\dot\phi(t)^2+V(\phi(t))\right]+\frac{\Lambda}{(d-1)}\end{equation}
\begin{eqnarray*}
\dot{H}(t)+\frac{d}{2}H(t)^2+\frac{(d-2)k}{2a(t)^2}&\stackrel{(ii)'}{=}&-\frac{\kappa}{(d-1)} \left[\frac{1}{2}\dot\phi(t)^2-V(\phi(t))\right]+\frac{\Lambda}{(d-1)}\qquad\qquad \end{eqnarray*}
is equivalent to solving the classical EMP equation
\begin{equation}Y''(\tau)+Q(\tau)Y(\tau)=\frac{k}{\theta^2 Y(\tau)^3}\label{eq: nomatterFRLWEMP}\end{equation}
for any constant $\theta>0$.  The solutions of $(i)',(ii)'$ and (\ref{eq: nomatterFRLWEMP}) are related by
\begin{equation}a(t)=Y(\tau(t))\qquad\mbox{ and }\qquad \varphi '(\tau)^2= \frac{(d-1)}{\kappa} Q(\tau)\label{eq: nomatterFRLWclassEMPa-Yvarphi-Q}\end{equation}
for $\phi(t)=\varphi(\tau(t))$ and 
\begin{equation}\dot\tau(t)=\theta a(t)=\theta Y(\tau(t)).\label{eq: nomatterFRLWclassEMPdottau-a-Y}\end{equation}
Also in the converse direction, $V$ is taken to be
\begin{equation}
V(\phi(t))=\left[\frac{d(d-1)}{2\kappa}\left(\theta^2(Y')^2+\frac{k}{Y^{2}}\right)-\frac{\theta^2}{2}Y^2(\varphi ')^2-\frac{\Lambda}{\kappa}\right]\circ\tau(t).\label{eq: nomatterFRLWEMPVphi-Y}
\end{equation}

Referring to Appendix D for solutions of the classical and corresponding  homogeneous EMP equation (\ref{eq: nomatterFRLWEMP}), we will use the theorem to compute some exact solutions of Einstein's equations.  By comparing (\ref{eq: nomatterFRLWclassEMPdottau-a-Y}) and (\ref{eq: dottau-Ytau^s0}), we note to only consider solutions of (\ref{eq: dottau-Ytau^s0}) in Appendix D corresponding to $r_0=1$.  Also note that $\sigma(t)$ in (\ref{eq: dotsigma=1/dottau^s0}) is not relevant in the FRLW model.

\begin{example} For zero curvature $k=0$ and for $\theta=1$, we take solution 1 in Table \ref{tb: exactEMP} of the homogeneous equation $Y''(\tau)+Q(\tau)Y(\tau)=0$ with $Q(\tau)=Q_0>0$.  That is, $Y(\tau) = cos(\sqrt{Q_0}\tau)$ and by (\ref{eq: taueqnforY=cosr=1}) - (\ref{eq: dottauforY=cosr=1}) we obtain $\tau(t)=\frac{2}{\sqrt{Q_0}}Arctan\left(tanh\left(\frac{\sqrt{Q_0}}{2}(t-t_0)\right)\right)$ and 
\begin{equation}a(t)=Y(\tau(t))=sech\left(\sqrt{Q_0}(t-t_0)\right)\end{equation}
for $t_0\in\mathds{R}$.  Then by (\ref{eq: phiforY=cosr=1}) with $\alpha_0=(d-1)/\kappa$, we obtain the scalar field 
\begin{equation}\phi(t)\stackrel{def.}{=}\varphi(\tau(t))= 
2\sqrt{\frac{(d-1)}{\kappa}}Arctan\left(tanh\left(\frac{\sqrt{Q_0}}{2}(t-t_0)\right)
\right) +\beta_0\end{equation}
for $\beta_0\in\mathds{R}$.  Finally, by (\ref{eq: nomatterFRLWEMPVphi-Y}), (\ref{eq: YprimetauforY=cos}) and (\ref{eq: dottauforY=cosr=1}) we get
\begin{eqnarray}
V(\phi(t))
&=&\left[\frac{d(d-1)}{2\kappa}(Y')^2-\frac{1}{2}Y^2(\varphi ')^2-\frac{\Lambda}{\kappa}\right]\circ\tau(t)\label{eq: VfornomatterFRLWwithk=0Y=cos}\\
&=&\frac{(d-1)Q_0}{2\kappa}\left[ d \ tanh^2\left(\sqrt{Q_0}(t-t_0) \right)-sech^2\left(\sqrt{Q_0}(t-t_0)\right)\right]-\frac{\Lambda}{\kappa}
.\notag
\end{eqnarray}
so that
\begin{eqnarray}
V(w)&=&\frac{(d-1)Q_0}{2\kappa}\left[ d \ tanh^2\left(2 Arctanh\left(\tan\left(\frac{1}{2}\sqrt{\frac{\kappa}{(d-1)}}(w-\beta_0)\right)\right) \right)\right.\notag\\
&&\quad \left.-sech^2\left(2 Arctanh\left(\tan\left(\frac{1}{2}\sqrt{\frac{\kappa}{(d-1)}}(w-\beta_0)\right)\right)\right)\right]-\frac{\Lambda}{\kappa}
\end{eqnarray}
since
\begin{equation}\phi^{-1}(w)=  \frac{2}{\sqrt{Q_0}} Arctanh\left(\tan\left(\frac{1}{2}\sqrt{\frac{\kappa}{(d-1)}}(w-\beta_0)\right)\right)+t_0.\end{equation}

For $Q_0=1$ and $t_0=0$, the solver was run with $Y$ and $Y'$ both perturbed by $.05$.  The graphs of $a(t)$ below show that this solution is stable.

\begin{figure}[htbp]
\centering
\caption{Stability of FRLW Example \theexample}
\vspace{-.3in}
\includegraphics[width=4in]{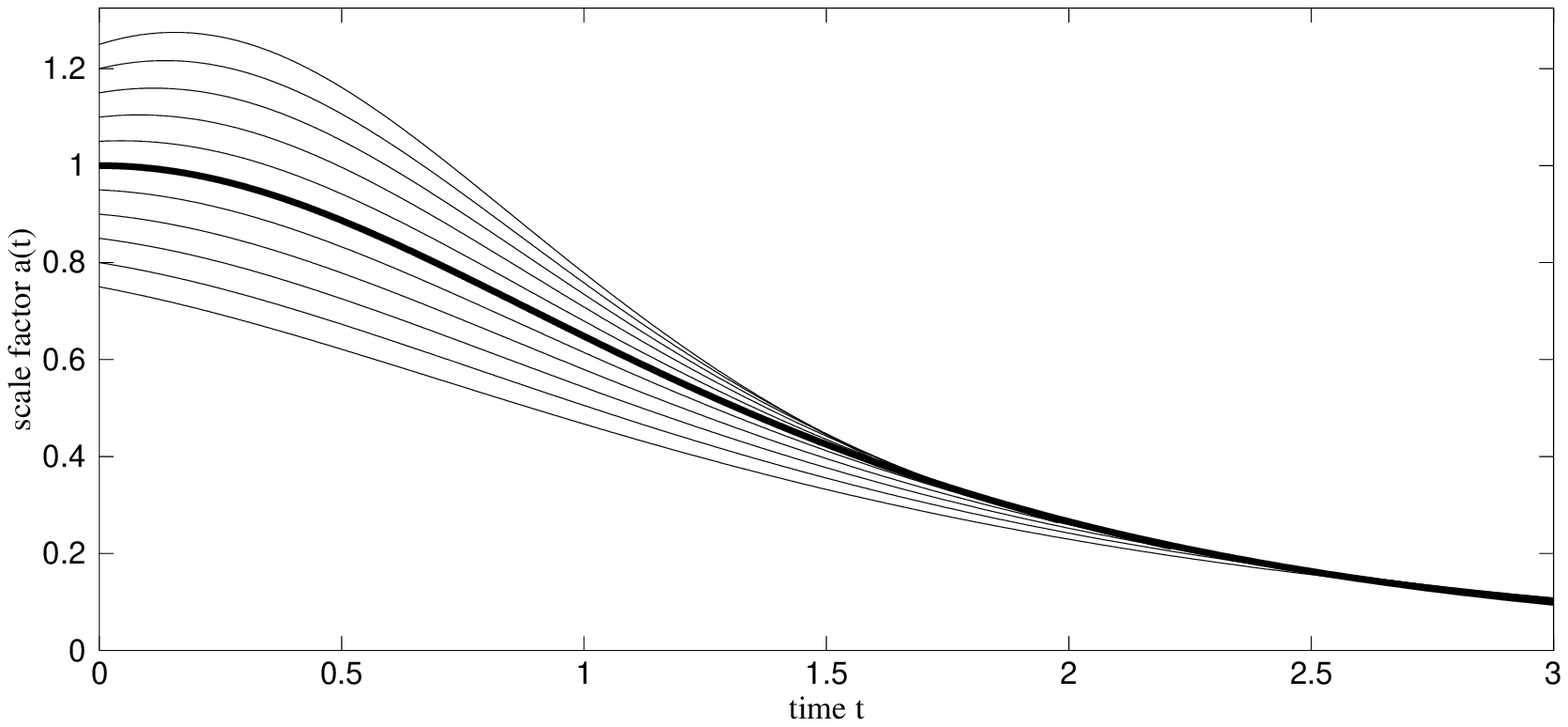}
\vspace{-.3in}
\end{figure}

\end{example}

\break
\begin{example} For zero curvature $k=0$ and for $\theta=1$, we take solution 2 in Table \ref{tb: exactEMP} of the homogeneous equation $Y''(\tau)+Q(\tau)Y(\tau)=0$ with $Q(\tau)=Q_0>0$.  That is, $Y(\tau) = sin(\sqrt{Q_0}\tau)$ and by (\ref{eq: taueqnforY=sinr=1}) - (\ref{eq: dottauforY=sinr=1}) we obtain $\tau(t)=\frac{2}{\sqrt{Q_0}}Arctan\left(e^{\sqrt{Q_0}(t-t_0)}\right)$ and 
\begin{equation}a(t)=Y(\tau(t))=sech\left(\sqrt{Q_0}(t-t_0)\right)\end{equation}
for $t_0\in\mathds{R}$.  Then by (\ref{eq: phiforY=sinr=1}) with $\alpha_0=(d-1)/\kappa$, the scalar field is
\begin{equation}\phi(t)\stackrel{def.}{=}\varphi(\tau(t))= 
2\sqrt{\frac{(d-1)}{\kappa}}Arctan\left(e^{\sqrt{Q_0}(t-t_0)}\right) + \beta_0\end{equation}
for $\beta_0\in\mathds{R}$.  Finally, by (\ref{eq: nomatterFRLWEMPVphi-Y}), (\ref{eq: YprimetauforY=sin}) and (\ref{eq: dottauforY=sinr=1}) we have
\begin{eqnarray}
V(\phi(t))
&=&\left[\frac{d(d-1)}{2\kappa}(Y')^2-\frac{1}{2}Y^2(\varphi ')^2-\frac{\Lambda}{\kappa}\right]\circ\tau(t)\label{eq: VfornomatterFRLWwithk=0Y=cos}\\
&=&\frac{(d-1)Q_0}{2\kappa}\left[ d \ tanh^2\left(\sqrt{Q_0}(t-t_0) \right)-sech^2\left(\sqrt{Q_0}(t-t_0)\right)\right]-\frac{\Lambda}{\kappa}
.\notag
\end{eqnarray}
so that
\begin{eqnarray}
V(w)&=&\frac{(d-1)Q_0}{2\kappa}\left[ d \ tanh^2\left( \ln\left(\tan\left(\frac{1}{2}\sqrt{\frac{\kappa}{(d-1)}}(w-\beta_0)\right)\right) \right)\right.\notag\\
&&\qquad \left. -sech^2\left(  \ln\left(\tan\left(\frac{1}{2}\sqrt{\frac{\kappa}{(d-1)}}(w-\beta_0)\right)\right)  \right)\right]-\frac{\Lambda}{\kappa}
\end{eqnarray}
since
\begin{equation}\phi^{-1}(w)=  \frac{1}{\sqrt{Q_0}}\ln\left(\tan\left(\frac{1}{2}\sqrt{\frac{\kappa}{(d-1)}}(w-\beta_0)\right)\right)+t_0.\end{equation}
This differs from Example 1 only in the form of the potential $V$.  

For $Q_0=1$ and $t_0=0$, the solver was run with $Y$ and $Y'$ both perturbed by $.05$.  The graphs of $a(t)$ below show that this solution is stable.
\begin{figure}[htbp]
\centering
\caption{Stability of FRLW Example \theexample}
\vspace{-.3in}
\includegraphics[width=4in]{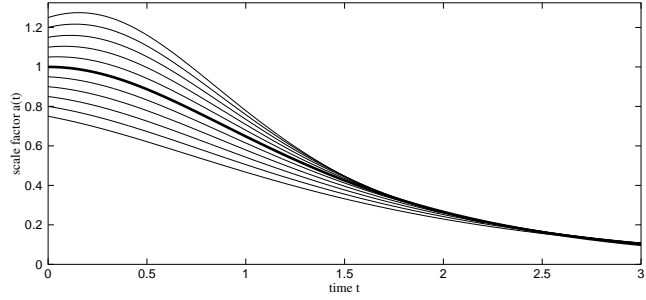}
\vspace{-.3in}
\end{figure}
\end{example}

\begin{example}
For zero curvature $k=0$ and for $\theta=1$, we take solution 4 in Table \ref{tb: exactEMP} of the homogeneous equation $Y''(\tau)+Q(\tau)Y(\tau)=0$ with $Q(\tau) = d_0(1-d_0)/\tau^2$ for an arbitrary constant $0\leq d_0 < 1$.  That is, $Y(\tau)=a_0\tau^{d_0}$ and by setting $r_0=1$ in (\ref{eq: taueqnforY=powerr=general}) - (\ref{eq: dottauforY=powerr=general}) we obtain $\tau(t)=\left((1-d_0)a_0(t-t_0)\right)^{\frac{1}{1-d_0}}$  and 
\begin{equation}a(t)=Y(\tau(t))=A_0(t-t_0)^{\frac{d_0}{1-d_0}}\label{eq: FRLWex3d0}\end{equation}
for $t>t_0\in\mathds{R}$ and $A_0\stackrel{def.}{=}\left( a_0(1-d_0)^{d_0} \right)^{1/(1-d_0)}$.  Then by (\ref{eq: phiforY=powerr=general}) with $\alpha_0=(d-1)/\kappa$,  the scalar field is
\begin{equation}\phi(t)\stackrel{def.}{=}\varphi(\tau(t)) = B \ln(t-t_0) + \beta_0\end{equation}
for $\beta_0\in\mathds{R}$ and $B\stackrel{def.}{=}\sqrt{\frac{d_0(d-1)}{\kappa (1-d_0)}}$.  Finally, by (\ref{eq: nomatterFRLWEMPVphi-Y}), (\ref{eq: YprimetauforY=powerr=general}) and (\ref{eq: dottauforY=powerr=general}), we get
\begin{eqnarray}
V(\phi(t))
&=&\left[\frac{d(d-1)}{2\kappa}(Y')^2-\frac{1}{2}Y^2(\varphi ')^2-\frac{\Lambda}{\kappa}\right]\circ\tau(t)\label{eq: VfornomatterFRLWwithk=0Y=tau^d0}\notag\\
&=&\frac{B^2}{2}\left(\frac{ d_0(d+1) - 1}{(1-d_0)} \right)\frac{1}{(t-t_0)^2}-\frac{\Lambda}{\kappa}
.\label{eq: VfornomatterFRLWk=0}
\end{eqnarray}
and
\begin{eqnarray}
V(w)&=&\frac{B^2}{2}\left(\frac{ d_0(d+1) - 1}{(1-d_0)} \right)e^{-\frac{2}{B}(w-\beta_0)}-\frac{\Lambda}{\kappa}
\end{eqnarray}
since 
\begin{equation}\phi^{-1}(w)=e^{\frac{1}{B}(w-\beta_0)}+t_0.\end{equation}
By setting $d=3$, $\Lambda=t_0=0$ and identifying $d_0/(1-d_0)$ here with $n$ in \cite{EM}, we obtain the zero curvature solution in example 4.4 of Ellis and Madsen \cite{EM}.  Also, by setting $d=2$, $\Lambda=0$, $a_0=(\sqrt{\kappa}/d_0)^{d_0}$ and identifying $(1-d_0)/d_0$ and $\kappa$ here with $\gamma_2$ and $\kappa_2$ in \cite{GarcCatCamp}, we obtain the Cruz-Martinez solution with $\epsilon_a=1$, where one must note that our integration constant $\beta_0$ corresponds to $ln(\gamma_2\sqrt{\kappa_2})/\sqrt{\kappa_2\gamma_2}+\phi_0$ in \cite{GarcCatCamp}.  Similarly with $d=3$, $\Lambda=0$, $a_0=(\sqrt{\kappa/3}/d_0)^{d_0}$ and identifying $(1-d_0)/d_0$ and $\kappa$ here with $3\gamma_3/2$ and $\kappa_3$ in \cite{GarcCatCamp}, we obtain the $(3+1)$ counterpart of the Cruz-Martinez solution with $\epsilon_a=1$, where again our integration constant $\beta_0$ corresponds to $2ln\left(\gamma_3 \sqrt{3\kappa_3}/2\right)/\sqrt{3\kappa_3\gamma_3}+\phi_0$ in \cite{GarcCatCamp}.  One can also compare this example with solutions in \cite{1, Gumjud}

For $a_0=1$ and $t_0=0$, the solver was run with $Y, Y'$ and $\tau$ perturbed by $.01$.  The graphs of $a(t)$ below show that the solution is unstable.  In both cases, the absolute error grows by two orders of magnitude over the graphed time intervals.
\begin{figure}[htbp]
\centering
\caption{Instability of FRLW Example \theexample, $d_0=1/2$}
\vspace{-.3in}
\includegraphics[width=4in]{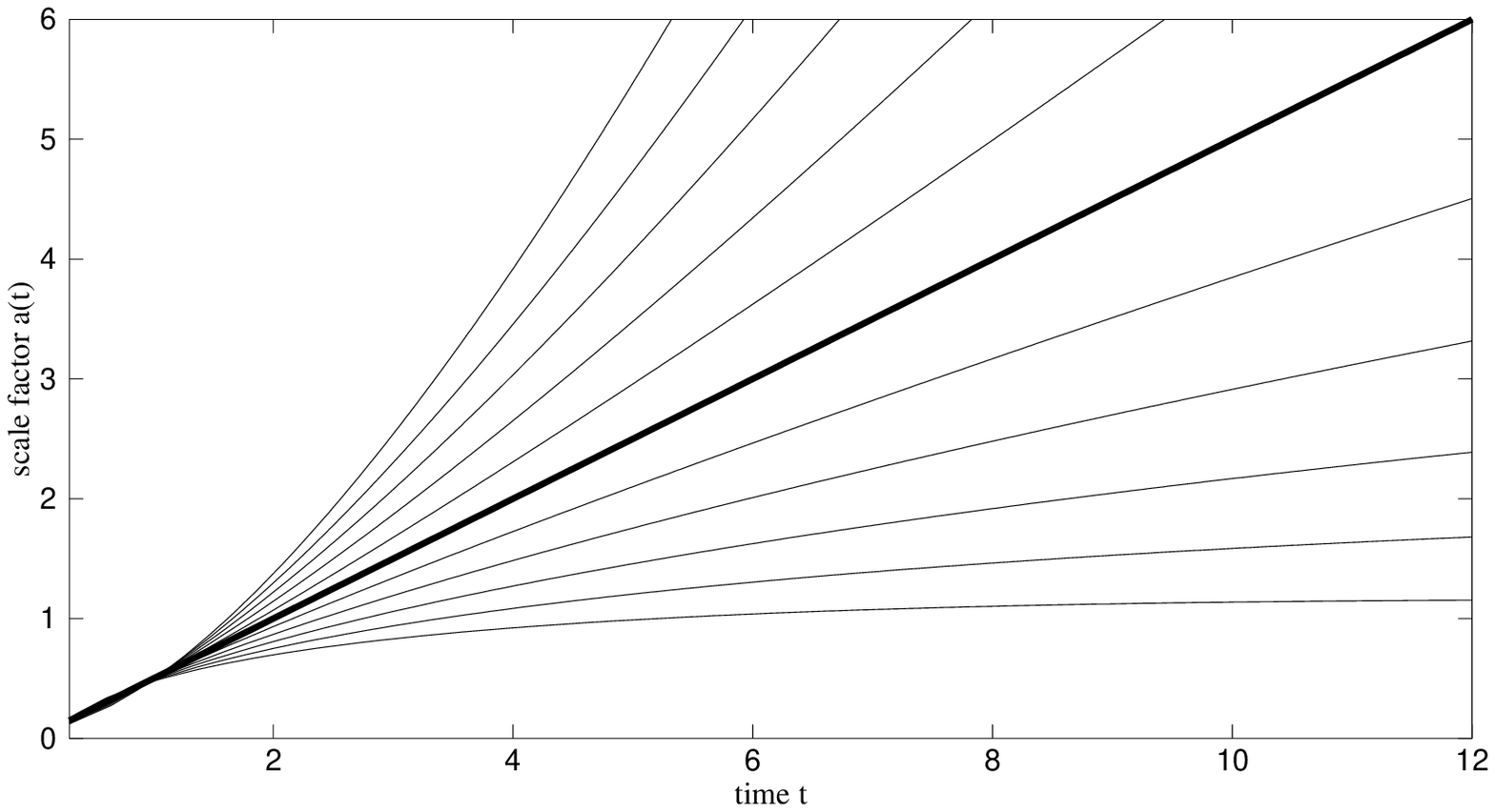}
\vspace{-.3in}

\end{figure}

\begin{figure}[htbp]
\centering
\caption{Instability of FRLW Example \theexample, $d_0=1/3$}
\includegraphics[width=4in]{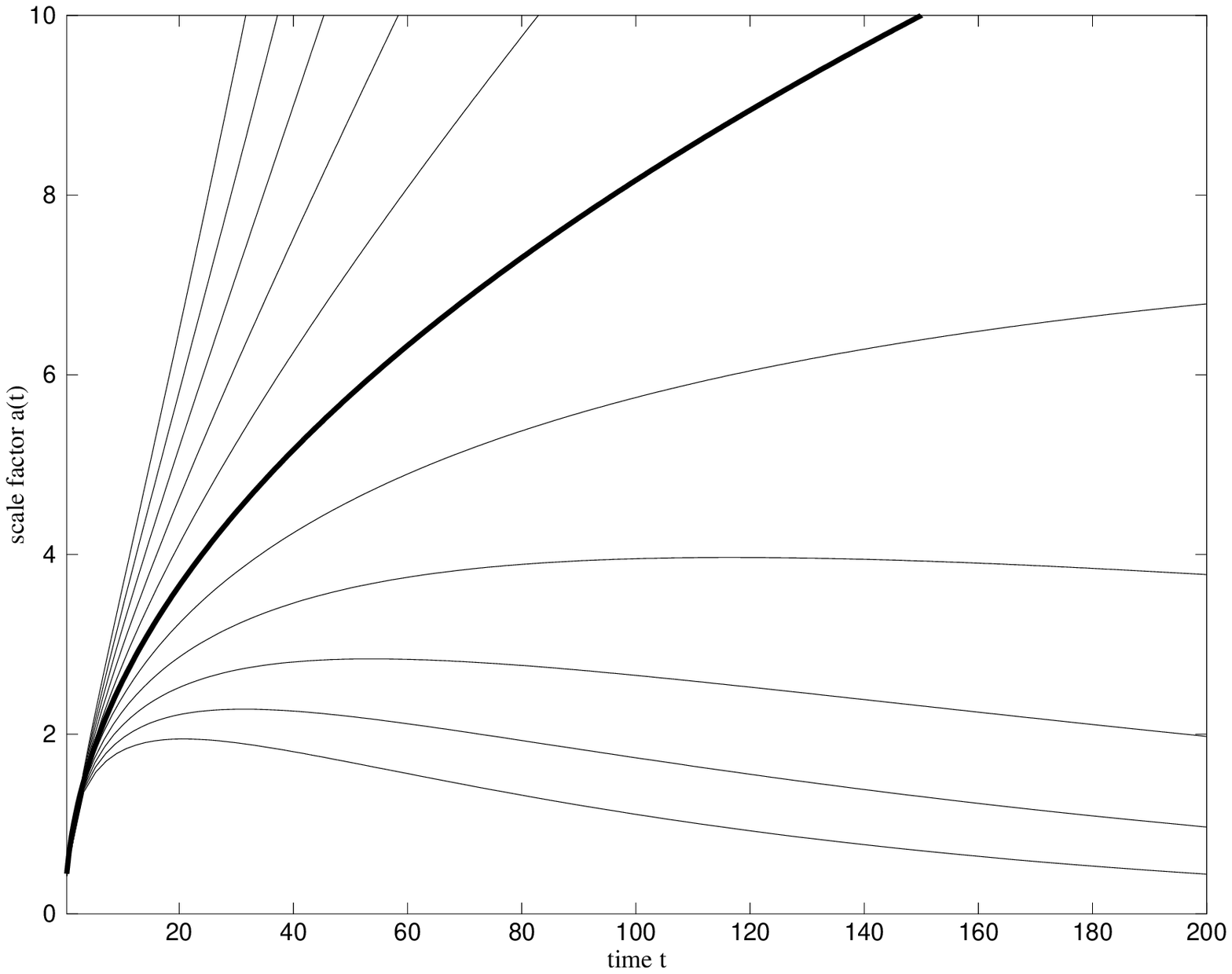}
\vspace{-.3in}
\end{figure}

\end{example}

\break
\begin{example}
For zero curvature $k=0$ and $\theta=1$, we take solution $Y(\tau)=\tau, Q(\tau)=0$ from line 4 in Table \ref{tb: exactEMP}  with $d_0=1$ for the homogeneous equation $Y''(\tau)+Q(\tau)Y(\tau)=0$.  By following   (\ref{eq:  taueqnforY=powerr0d0=1}) - (\ref{eq: dottauforY=powerr0d0=1}) we obtain $\tau(t)=a_0e^{t-t_0}$  and 
\begin{equation}a(t)=Y(\tau(t))=a_0e^{t-t_0}\end{equation}
 for $a_0,t_0\in\mathds{R}$ and $t>t_0$.  Since $Q(\tau)=0=\varphi'(\tau)$ we have constant scalar field
$\phi(t)\stackrel{def.}{=}\varphi(\tau(t)) = \phi_0\in\mathds{R}$.  Finally, by equation (\ref{eq: nomatterFRLWEMPVphi-Y}) for $V(\phi(t))$ and using that $Y'(\tau)=1$, we obtain constant potential
$
V=\frac{1}{\kappa}\left(\frac{d(d-1)}{2}-\Lambda\right).
$

For $a_0=1$ and $t_0=0$, the solver was run with $Y, Y'$ and $\tau$ perturbed by $.1$.  The graphs of $a(t)$ below show that the solution is unstable.  The absolute error grows by up to four orders of magnitude over the graphed time interval.

\begin{figure}[htbp]
\centering
\caption{Instability of FRLW Example \theexample}
\includegraphics[width=4in]{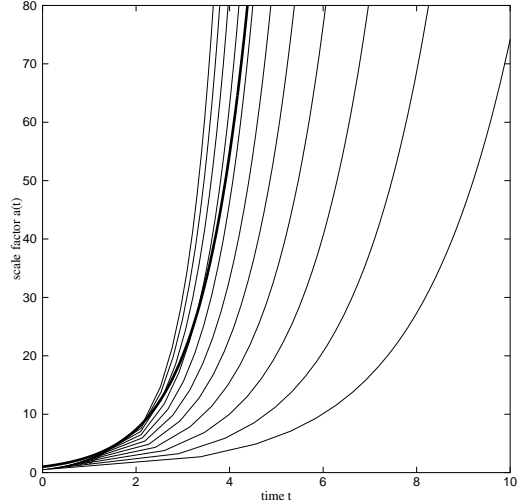}
\end{figure}
\end{example}

\begin{example}
For negative curvature $k=-1$ and for $\theta=1$, we consider the classical EMP equation $Y''(\tau)+Q(\tau)Y(\tau) = -1/Y(\tau)^3$.  For solution 5 in Table \ref{tb: exactEMP} with $b_0=d_0=0$ and $c_0=1$, we have that $Q(\tau)=0$ 
and $Y(\tau)=(a_0+2\tau)^{1/2}$ for some $a_0\in\mathds{R}$.  Following (\ref{eq: taueqnforY=superpowerb=0r=1}) - (\ref{eq: dottauforY=superpowerb=0r=1}) we obtain $\tau(t)=\frac{1}{2}\left((t-t_0)^2-a_0\right)$ and 
\begin{equation}a(t)=Y(\tau(t))=t-t_0\end{equation}
for $t_0\in\mathds{R}$.  Since $Q(\tau)=0=\varphi'(\tau)$ we obtain constant scalar field $\phi(t)\stackrel{def.}{=}\varphi(\tau(t)) = \phi_0\in\mathds{R}$.  Finally, by (\ref{eq: nomatterFRLWEMPVphi-Y}), (\ref{eq: YprimetauforY=superpowerb=d=0r=1}) and (\ref{eq: dottauforY=superpowerb=0r=1}) we obtain constant potential $V(\phi(t))=-\Lambda/\kappa$.

For $a_0=t_0=0$, the solver was run with $Y, Y'$ and $\tau$ perturbed by $.1$.  The graphs of $a(t)$ below show that the solution is unstable.  The absolute error grows by two orders of magnitude over the graphed time interval.

\begin{figure}[htbp]
\centering
\vspace{-.1in}
\caption{Instability of FRLW Example \theexample}
\vspace{-.1in}
\includegraphics[width=4in]{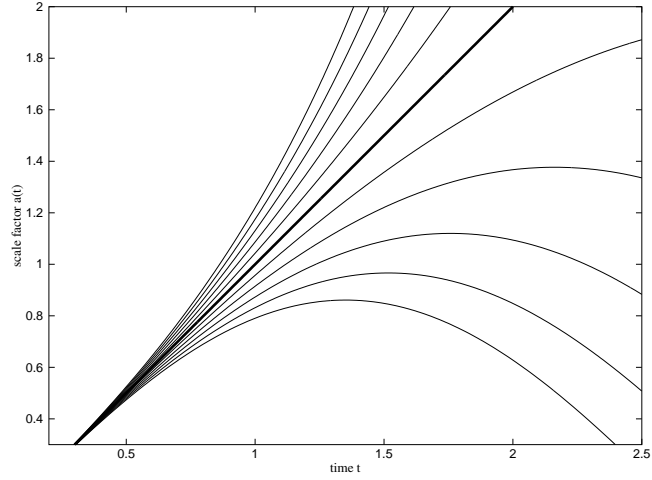}
\end{figure}
\end{example}

\break
\begin{example}
For $\theta=1$ and arbitrary curvature $k=a_0b_0-c_0^2$ for some $b_0>0$ and $a_0,c_0\in\mathds{R}$, we consider the classical EMP equation $Y''(\tau)+Q(\tau)Y(\tau) = (a_0b_0-c_0^2)/Y(\tau)^3$.  For solution 5 in Table \ref{tb: exactEMP} with $d_0=0$, we have that $Q(\tau)=0$ and $Y(\tau)=(a_0+b_0\tau^2+2c_0\tau)^{1/2}$.  Following (\ref{eq: taueqnforY=superpowerr=1}) - (\ref{eq: dottauforY=superpowerr=1}) with $k=\lambda$ we obtain $\tau(t)=\frac{1}{4b_0^{3/2}}\left(b_0e^{\sqrt{b_0}(t-t_0)}-4k e^{-\sqrt{b_0}(t-t_0)}-4\sqrt{b_0}c_0\right)$  and 
\begin{equation}a(t)=Y(\tau(t))=\frac{1}{4}e^{\sqrt{b_0}(t-t_0)}+\frac{k}{b_0}e^{-\sqrt{b_0}(t-t_0)}\end{equation}
for $t_0\in\mathds{R}$.  Since $Q(\tau)=0=\varphi'(\tau)$ we obtain constant scalar field
\begin{equation}\phi(t)\stackrel{def.}{=}\varphi(\tau(t)) = \phi_0\end{equation}
for constant $\phi_0\in\mathds{R}$.  Finally, by (\ref{eq: nomatterFRLWEMPVphi-Y}), (\ref{eq: YprimetauforY=superpowerd=0b>0r=1}) and (\ref{eq: dottauforY=superpowerr=1}) we obtain 
\begin{eqnarray}
V(\phi(t))
&=&\frac{d(d-1)}{2\kappa}\left(
\frac{
b_0\left(b_0e^{\sqrt{b_0}(t-t_0)}-4k e^{-\sqrt{b_0}(t-t_0)}\right)^2 + 16b_0^2k
}{
\left(b_0e^{\sqrt{b_0}(t-t_0)}+4k e^{-\sqrt{b_0}(t-t_0)}\right)^2
}
\right)-\frac{\Lambda}{\kappa}\notag\\
&=&  \frac{1}{\kappa}\left(\frac{d(d-1)b_0}{2}-\Lambda\right).
\end{eqnarray}

For $a_0=b_0=1$ and $t_0=c_0=0$, the solver was run with $Y, Y'$ and $\tau$ perturbed by $.1$.  The graphs of $a(t)$ show that the solution is unstable.  In both cases the absolute error grows by up to two orders of magnitude over the graphed time intervals.
\vspace{1in}
\begin{figure}[htbp]
\centering
\caption{Instability of FRLW Example \theexample, $k=1$}
\includegraphics[width=4in]{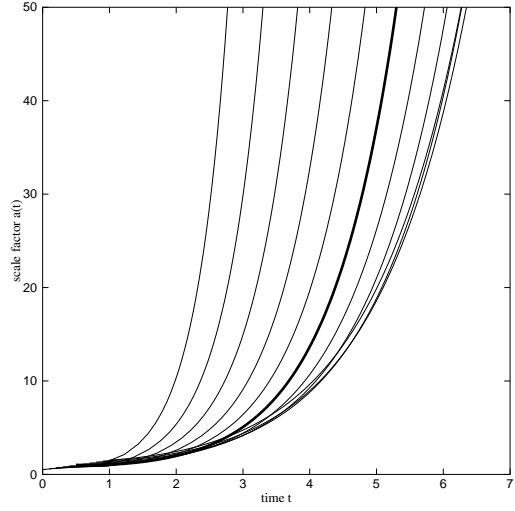}
\end{figure}
\vspace{-1.4in}
\begin{figure}[htbp]
\centering
\caption{Instability of FRLW Example \theexample, $k=-1$}
\includegraphics[width=4in]{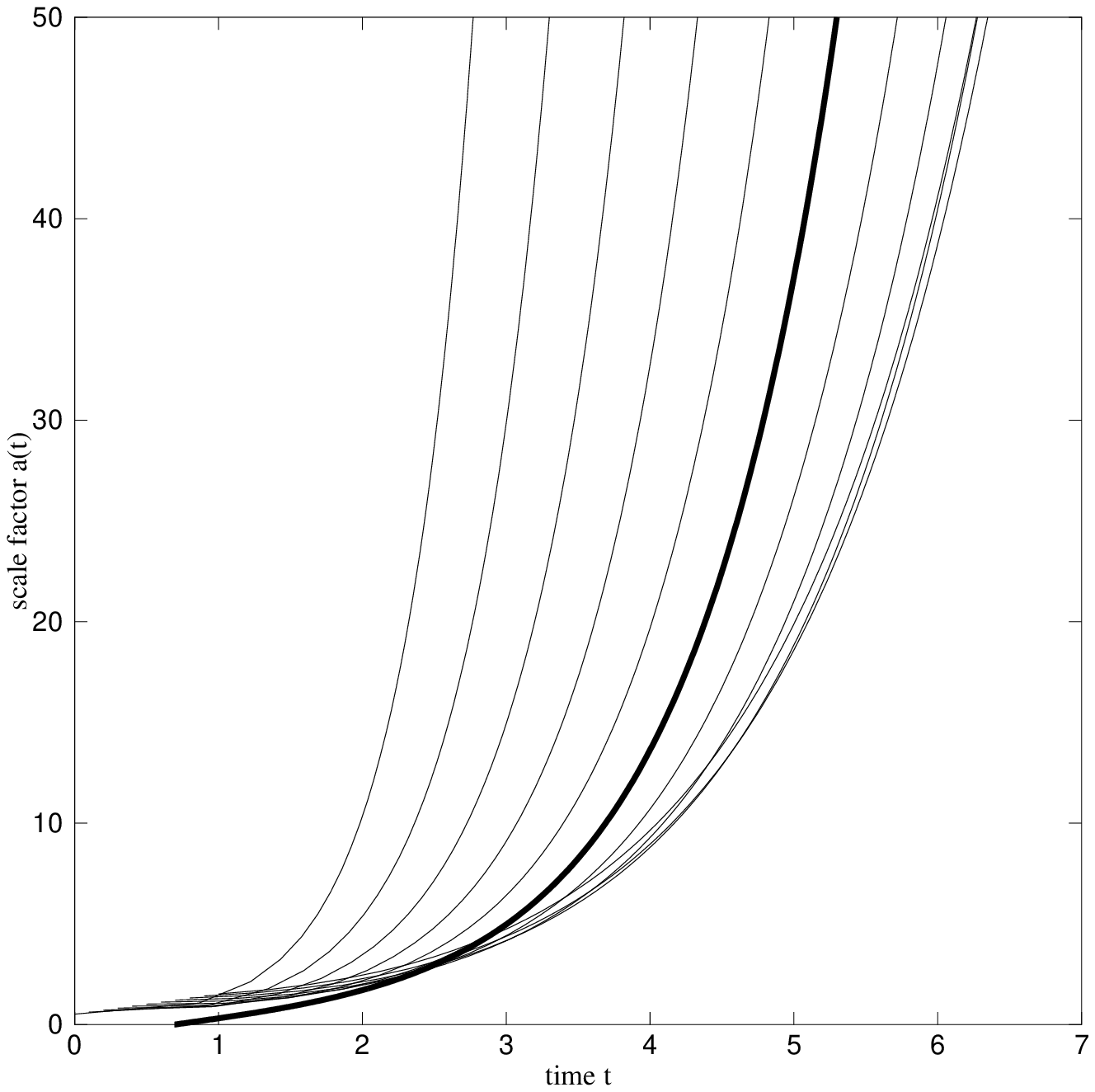}
\end{figure}
\vspace{-.3in}
\end{example}

\break
\begin{example}
For curvature $k$ and $\theta>0$, we consider the EMP equation $Y''(\tau)+Q(\tau)Y(\tau)=k/\theta^2Y(\tau)^3$.  For solution 5 in Table \ref{tb: exactEMP} with $a_0=b_0=0$ and $c_0=1/2$, we have that $Q(\tau)=d_0(1-d_0)/\tau^2$ for the choice $d_0=(1/2)-(\sqrt{-k}/\theta)$, and $Y(\tau)=\sqrt{\tau}$.  Following (\ref{eq:    taueqnforY=sqrttau})-(\ref{eq: tauforY=sqrttau}) we obtain $\tau(t)=\frac{\theta^2}{4}(t-t_0)^2$ and 
\begin{equation}a(t)=Y(\tau(t))=\frac{\theta}{2}(t-t_0)\end{equation}
for $t>t_0\in\mathds{R}$.  Then by (\ref{eq:  phiforY=sqrttau}) with $\alpha_0=(d-1)/\kappa$ we obtain scalar field
\begin{equation}\phi(t)=\varphi(\tau(t))=\sqrt{ \frac{(d-1)(\theta^2+4k) }{\kappa\theta^2}  }\ln(t-t_0) + \beta_0.\end{equation}
Also by (\ref{eq:  nomatterFRLWEMPVphi-Y}) we obtain
\begin{eqnarray}
V(\phi(t))
&=&\left[\frac{d(d-1)}{2\kappa}\left(\theta^2(Y')^2+\frac{k}{Y^{2}}\right)-\frac{\theta^2}{2}Y^2(\varphi ')^2-\frac{\Lambda}{\kappa}\right]\circ\tau(t)\notag\\
&=&\frac{(d-1)^2}{2\kappa\theta^2}\frac{\left( \theta^2+4k \right)}{(t-t_0)^2}-\frac{\Lambda}{\kappa}
\end{eqnarray}
so that
\begin{equation}V(w)=\frac{(d-1)^2}{2\kappa\theta^2}\left( \theta^2+4k \right) e^{-2\sqrt{\frac{\kappa\theta^2}{(d-1)(\theta^2+4k)}}(w-\beta_0)} -\frac{\Lambda}{\kappa}\end{equation}
since 
\begin{equation}\phi^{-1}(w)=e^{\sqrt{\frac{\kappa\theta^2}{(d-1)(\theta^2+4k)}}(w-\beta_0)}+t_0.\end{equation}
Note that although the number $d_0$ may be complex, the above solution is real for each $k\in\{-1,0,1\}$ by a proper choice of $\theta>0$.
By taking $d=3, \Lambda=t_0=0$ and also identifying $\theta/2$ here with $A$ in \cite{EM},  we obtain the solutions in example 4.5 of Ellis and Madsen \cite{EM}.  One can also compare with solutions in \cite{1}.    

For $t_0=0$, the solver was run with $Y, Y'$ and $\tau$ perturbed by $.01$.  The graphs of $a(t)$ below show that the solution is unstable.  In all three cases below the absolute error grows by up to two orders of magnitude over the graphed time interval.
\vspace{.6in}
\begin{figure}[htbp]
\centering
\vspace{.2in}
\caption{Instability of FRLW Example \theexample, $k=0, \theta=1$}
\includegraphics[width=4in]{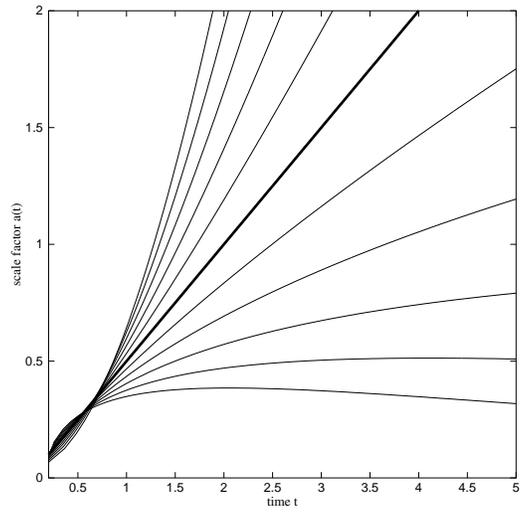}
\end{figure}
\begin{figure}[htbp]
\centering
\vspace{-.5in}
\caption{Instability of FRLW Example \theexample, $k=1, \theta=1$}
\includegraphics[width=4in]{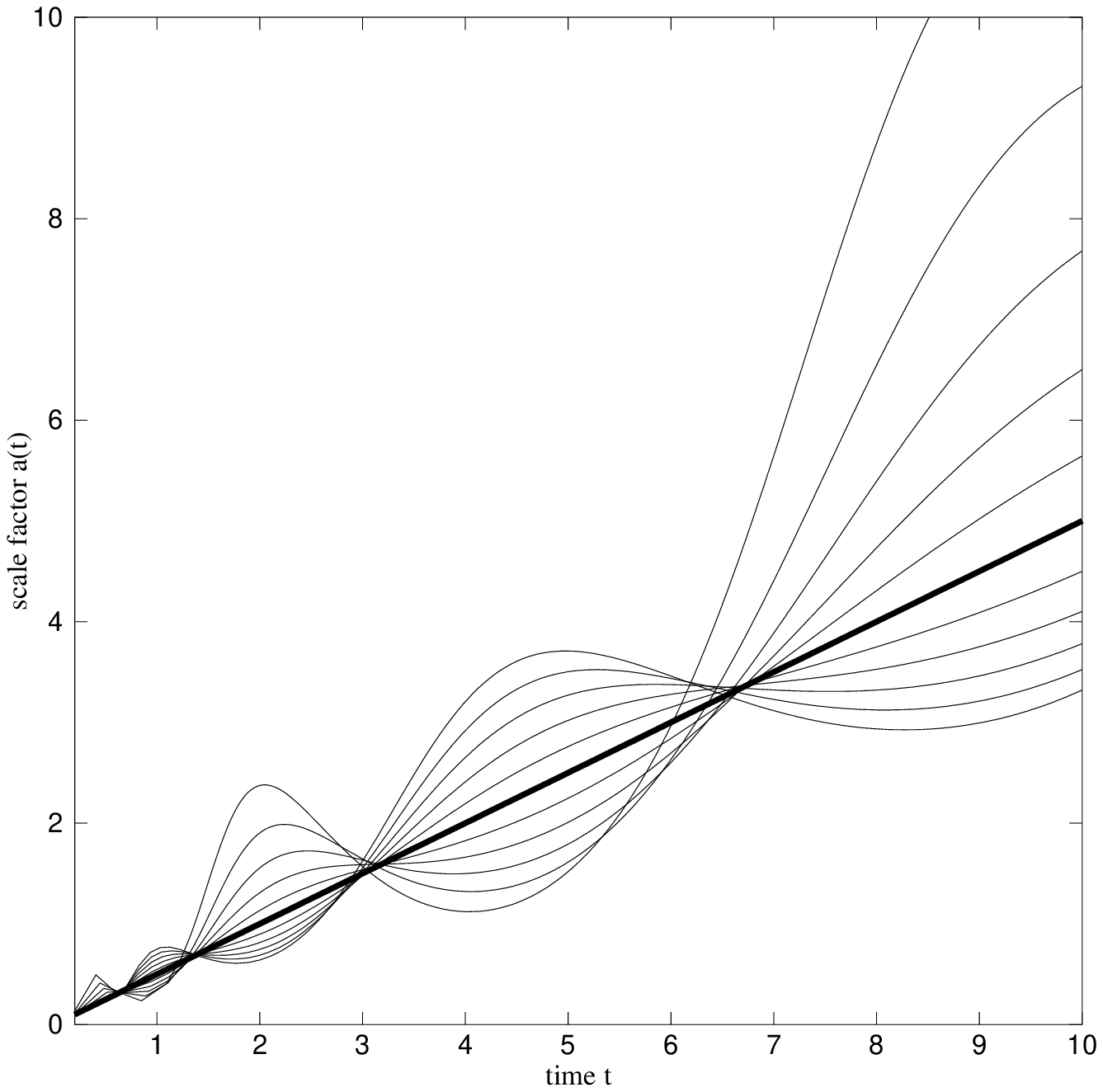}
\end{figure}
\begin{figure}[htbp]
\centering
\vspace{-.4in}
\caption{Instability of FRLW Example \theexample, $k=-1, \theta=4$}
\includegraphics[width=4in]{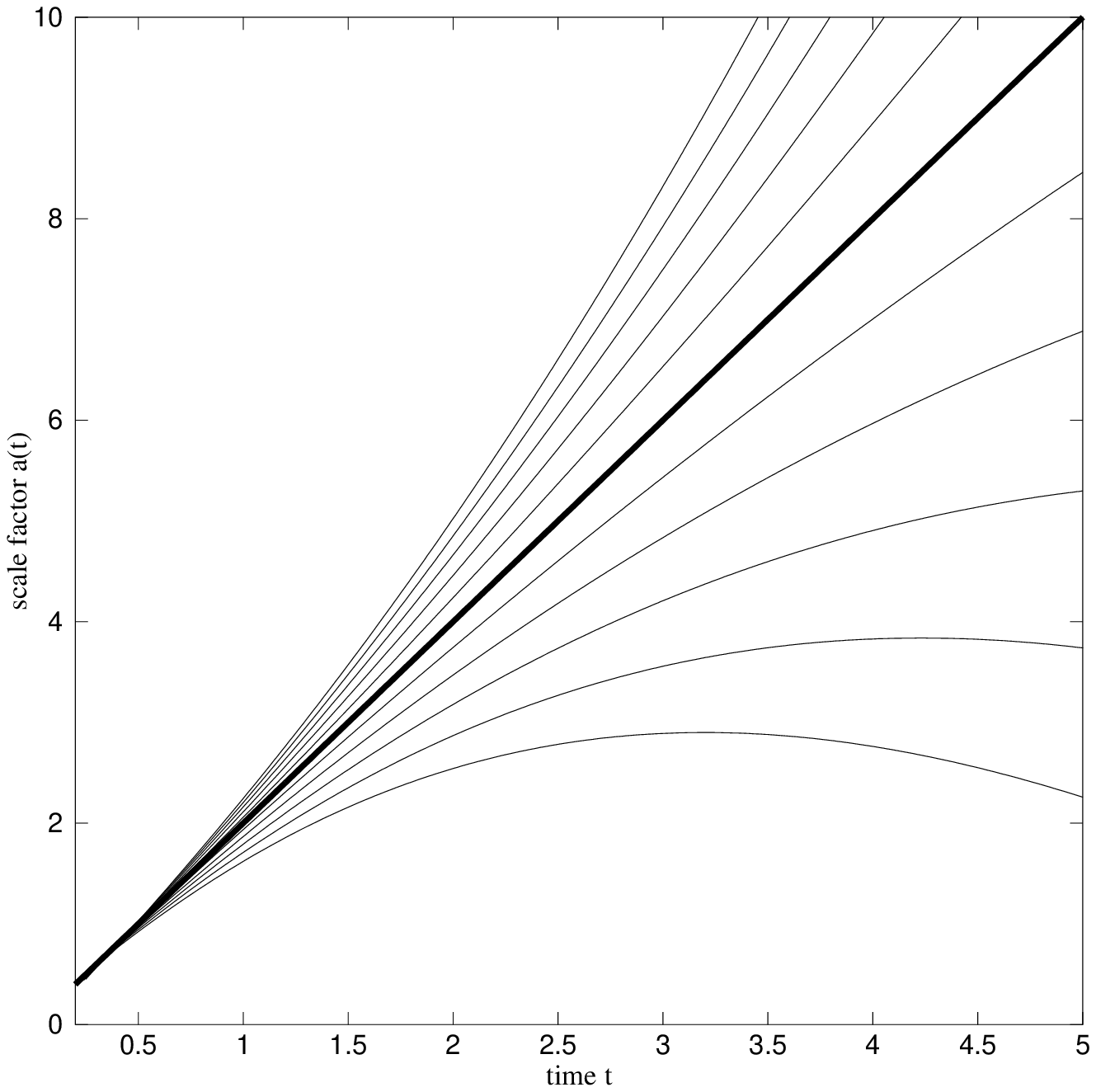}
\end{figure}
\end{example}

\break
\begin{example}
For $k\in \{0,1\}$ and $\theta>0$, we consider the EMP equation $Y''(\tau)+Q(\tau)Y(\tau) = k/\theta^2 Y(\tau)^{3}$.  For solution 6 in Table \ref{tb: exactEMP} with $\lambda_1=k/\theta^2$ and $B_1=3$, we have that $Q(\tau)=k/\theta^2\tau^{4}$ and $Y(\tau)=\tau$.  Following (\ref{eq: taueqnforY=tauwithB}) - (\ref{eq: tauforY=tauwithB}) we obtain $\tau(t)=a_0e^{\theta(t-t_0)}$  and 
\begin{equation}a(t)=Y(\tau(t))=a_0  e^{\theta (t-t_0)}\end{equation}
for $a_0>0$ and $t>t_0$.  Then by (\ref{eq: phiforY=tauwithB}) with $\alpha_0=(d-1)/\kappa$, the scalar field is
 \begin{equation}\phi(t)\stackrel{def.}{=}\varphi(\tau(t)) = - \frac{1}{a_0}\sqrt{\frac{(d-1)k}{\theta^2\kappa }}e^{-\theta(t-t_0)}+\beta_0.\end{equation}
  Finally, by (\ref{eq:  nomatterFRLWEMPVphi-Y})  we obtain 
\begin{eqnarray}
V(\phi(t))
&=&\left[\frac{d(d-1)}{2\kappa}\left(\theta^2(Y')^2+\frac{k}{Y^{2}}\right)-\frac{\theta^2}{2}Y^2(\varphi ')^2-\frac{\Lambda}{\kappa}\right]\circ\tau(t)\notag\\
&=&\frac{d(d-1)}{2\kappa}\theta^2+\frac{(d-1)^2k}{2\kappa a_0^2}e^{-2\theta(t-t_0)}-\frac{\Lambda}{\kappa}\notag
\end{eqnarray}
so that
\begin{equation}V(w)=\frac{(d-1)\theta^2}{2}\left(\frac{d}{\kappa }+(w-\beta_0)^2\right)-\frac{\Lambda}{\kappa}\end{equation}
by composition with $\phi^{-1}$.  By taking $d=3$, $\Lambda=t_0=0$ and identifying $\theta$ with $\omega$, $a_0$ with $A$ and $\beta_0$ with $\phi_0$ in \cite{EM}, this is example 4.1 of Ellis and Madsen \cite{EM}.  One can also compare this example with the (non-phantom) exponential expansion solution in \cite{Gumjudexact}, and also other solutions in \cite{1}.  

For $a_0=\theta=k=1$ and $t_0=0$, the solver was run with $Y, Y'$ and $\tau$ perturbed by $.1$.  The graphs of $a(t)$ below show that the solution is unstable.  The absolute error grows by two orders of magnitude over the graphed time interval.
\begin{figure}[htbp]
\centering
\caption{Instability of FRLW Example \theexample}
\includegraphics[width=4in]{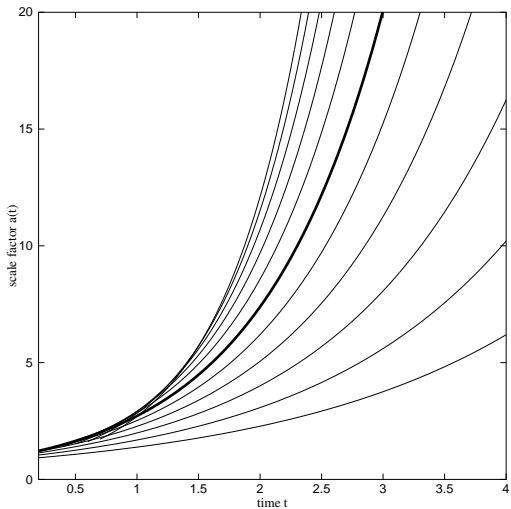}
\end{figure}
\end{example}

\break
\begin{example}
For $\theta=1$ and positive curvature $k=1$, we consider the classical EMP equation $Y''(\tau)+Q(\tau)Y(\tau)=1/Y(\tau)^3$.  For solution 7 in Table \ref{tb: exactEMP} we have $Y(\tau)=(a_0^2\tau^2+b_0^2)^{1/2}$  with $\lambda_1=1$ and we take $Q(\tau)=(1-a_0^2b_0^2)/\left(a_0^2\tau^2+b_0^2\right)^2$ for $a_0,b_0>0$ and $(1-a_0^2b_0^2)>0$.  Following (\ref{eq: taueqnforY=sqrttau^2+b^2})-(\ref{eq: dottauforY=sqrttau^2+b^2}) we obtain $\tau(t)=\frac{b_0}{a_0}\sinh(a_0 (t-t_0))$ and 
\begin{equation}a(t)=Y(\tau(t))=b_0 \cosh(a_0 (t-t_0))\end{equation}
for $t_0\in\mathds{R}$.  Then by (\ref{eq: phiforY=sqrttau^2+b^2r0=1}) with $\alpha_0=(d-1)/\kappa$, we  have scalar field 
\begin{equation}\phi(t)=\frac{B_d}{a_0}\sqrt{\frac{2}{(d-1)}}Arctan\left(\sinh(a_0 (t-t_0))\right)+\beta_0\label{eq: phiforex8EM43}\end{equation}
for $B_d\stackrel{def.}{=}\sqrt{\frac{(d-1)^2(1-a_0^2b_0^2)}{2\kappa b_0^2}}$.  By (\ref{eq: FRLWEMPVphi-Y}) we obtain
\begin{eqnarray}
V(\phi(t))
&=&\left[\frac{d(d-1)}{2\kappa}\left((Y')^2+\frac{1}{Y^{2}}\right)-\frac{1}{2}Y^2(\varphi ')^2-\frac{\Lambda}{\kappa}\right]\circ\tau(t)\notag\\
&=&\frac{d(d-1)a_0^2}{2\kappa}+B_d^2sech^2(a_0 (t-t_0))  -\frac{\Lambda}{\kappa}
\end{eqnarray}
so that 
\begin{equation}V(w)=\frac{d(d-1)a_0^2}{2\kappa}+B_d^2 \cos^2\left(\frac{a_0}{B_d}\sqrt{\frac{(d-1)}{2}}(w-\beta_0)\right)-\frac{\Lambda}{\kappa}\end{equation}
by composition with $\phi^{-1}$.  By taking $d=3, \Lambda=t_0=0$ and identifying $a_0$ with $\omega$, $b_0$ with $A$ and $B_d=B_3$ with $B$ in \cite{EM}, this is comparable to example 4.3 of Ellis and Madsen \cite{EM}.  One can verify via differentiation that $\phi(t)$ in (\ref{eq: phiforex8EM43}) agrees up to a constant with $\phi(t)$ in example 4.3 of \cite{EM}.  

For $a_0=\theta=1$ and $t_0=0$, the solver was run with $Y, Y'$ and $\tau$ perturbed by $.01$.  The graphs of $a(t)$ below show that the solution is unstable.  The absolute error grows by up to two orders of magnitude over the graphed time interval.

\begin{figure}[htbp]
\centering
\vspace{-.1in}
\caption{Instability of FRLW Example \theexample}
\includegraphics[width=4in]{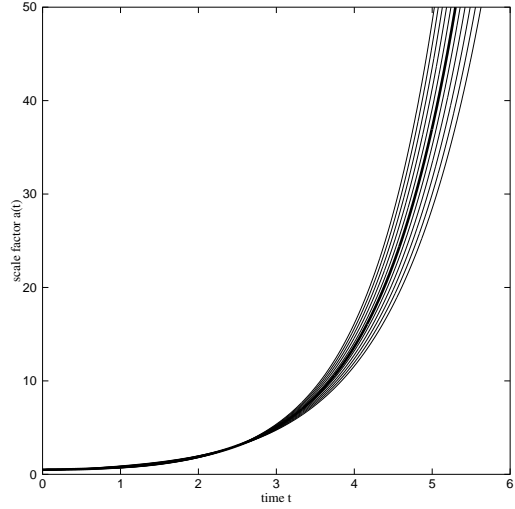}
\end{figure}

\end{example}

\begin{example}
For $\theta=1$ and arbitrary curvature $k$, we consider the classical EMP equation $Y''(\tau)+Q(\tau)Y(\tau)=k/Y(\tau)^3$.  For solution 7  in Table \ref{tb: exactEMP}, we have $Y(\tau)=(a_0^2\tau^2-b_0^2)^{1/2}$ with $\lambda_1=k$ and we take $Q(\tau)=(k+a_0^2b_0^2)/\left(a_0^2\tau^2-b_0^2\right)^2$ for $a_0,b_0>0$ such that $(k+a_0^2b_0^2)>0$.  Following (\ref{eq: taueqnforY=sqrttau^2-b^2})-(\ref{eq: dottauforY=sqrttau^2-b^2}) we obtain $\tau(t)=\frac{b_0}{a_0}\cosh(a_0 (t-t_0))$ and 
\begin{equation}a(t)=Y(\tau(t))=b_0 \sinh(a_0 (t-t_0))\end{equation}
for $t_0\in\mathds{R}$.  Then by (\ref{eq: phiforY=sqrttau^2-b^2r0=1}) with $\alpha_0=(d-1)/\kappa$, the scalar field is
\begin{equation}\phi(t)=- \frac{B_d}{a_0}\sqrt{\frac{2}{(d-1)}}Arctanh\left(\cosh(a_0 (t-t_0))\right)+\beta_0\label{eq: phiforex8EM42}\end{equation}
for $B_d\stackrel{def.}{=}\sqrt{\frac{(d-1)^2(k+a_0^2b_0^2)}{2\kappa b_0^2}}$.  By (\ref{eq: FRLWEMPVphi-Y}) we obtain
\begin{eqnarray}
V(\phi(t))
&=&\left[\frac{d(d-1)}{2\kappa}\left((Y')^2+\frac{k}{Y^{2}}\right)-\frac{1}{2}Y^2(\varphi ')^2-\frac{\Lambda}{\kappa}\right]\circ\tau(t)\notag\\
&=&\frac{d(d-1)a_0^2}{2\kappa }  + B_d^2  csch^2(a_0 (t-t_0)) -\frac{\Lambda}{\kappa}
\end{eqnarray}
 so that 
\begin{equation}V(w)=\frac{d(d-1)a_0^2}{2\kappa}+B_d^2 \sinh^2\left(\frac{a_0}{B_d}\sqrt{\frac{(d-1)}{2}}(w-\beta_0)\right)-\frac{\Lambda}{\kappa}\end{equation}
by composition with $\phi^{-1}$.  By taking $d=3, \Lambda=t_0=0$ and identifying $a_0$ with $\omega$, $b_0$ with $A$ and $B_d=B_3$ with $B$ in \cite{EM}, this is comparable to example 4.2 of Ellis and Madsen \cite{EM}.  One can verify via differentiation that $\phi(t)$ in (\ref{eq: phiforex8EM43}) agrees up to a constant with $\phi(t)$ in example 4.2 of \cite{EM}.  

\end{example}

\subsection{Second reduction to classical EMP: zero curvature}
\label{sec: FRLWcurvzero}
For a second set of examples, we take special case with curvature $k=0$ and $\rho'=p'=D_i=0$ for all $i\neq 1$.  For non-zero matter density $D_1/a(t)^{n_1}\equiv D/a(t)^n$ with $D,n\neq 0$ constants, and we choose parameter $q=n/2$.  Then Theorem \ref{thm: FRLWEFE-EMP} shows that solving Einstein's equations
\begin{equation}\frac{d}{2}H^2(t)\stackrel{(i)''}{=}\frac{\kappa}{(d-1)} \left[\frac{1}{2}\dot\phi(t)^2+V(\phi(t))+\frac{D}{a(t)^{n}}\right]+\frac{\Lambda}{(d-1)}\end{equation}
\begin{eqnarray*}
\dot{H}(t)+\frac{d}{2}H(t)^2&\stackrel{(ii)''}{=}&-\frac{\kappa}{(d-1)} \left[\frac{1}{2}\dot\phi(t)^2-V(\phi(t))+\frac{(n-d)D}{da(t)^{n}}\right]+\frac{\Lambda}{(d-1)} \end{eqnarray*} 
is equivalent to solving the classical EMP equation
\begin{equation}Y''(\tau)+Q(\tau)Y(\tau)=\frac{ - n^2 \kappa D}{2\theta^2 d(d-1) Y(\tau)^3}\label{eq: zerocurvFRLWEMP}\end{equation}
for any constant $\theta>0$.  The solutions of $(i)'',(ii)''$ and (\ref{eq: zerocurvFRLWEMP}) are related by
\begin{equation}a(t)=Y(\tau(t))^{2/n}\qquad\mbox{ and }\qquad \varphi '(\tau)^2= \frac{2(d-1)}{n\kappa} Q(\tau)\label{eq: zerocurvFRLWclassEMPa-Yvarphi-Q}\end{equation}
for $\phi(t)=\varphi(\tau(t))$ and 
\begin{equation}\dot\tau(t)=\theta a(t)^{n/2}=\theta Y(\tau(t)).\label{eq: zerocurvFRLWclassEMPdottau-a-Y}\end{equation}
Also in the converse direction, $V$ is taken to be
\begin{equation}
V(\phi(t))=\left[\frac{2\theta^2 d(d-1)}{\kappa n^2}(Y')^2-\frac{\theta^2}{2}Y^2(\varphi ')^2-\frac{D}{Y^2}-\frac{\Lambda}{\kappa}\right]\circ\tau(t).\label{eq: zerocurvFRLWEMPVphi-Y}
\end{equation}

We now refer to Appendix D for solutions of the classical and corresponding  homogeneous EMP equation (\ref{eq: zerocurvFRLWEMP}) that we will map over to solutions of Einstein's equations.  By comparing (\ref{eq: zerocurvFRLWclassEMPdottau-a-Y}) and (\ref{eq: dottau-Ytau^s0}), we note to only consider solutions of (\ref{eq: dottau-Ytau^s0}) in Appendix D corresponding to $r_0=1$.  Also note that $\sigma(t)$ in (\ref{eq: dotsigma=1/dottau^s0}) is not relevant in the FRLW model.

\begin{example} For zero matter density ($D=0$) and for $\theta=1$, we take solution 1 in Table \ref{tb: exactEMP} of the homogeneous equation $Y''(\tau)+Q(\tau)Y(\tau)=0$ with $Q(\tau)=Q_0>0$.  That is, $Y(\tau) = cos(\sqrt{Q_0}\tau)$ and by (\ref{eq: taueqnforY=cosr=1}) - (\ref{eq: dottauforY=cosr=1}) we obtain $\tau(t)=\frac{2}{\sqrt{Q_0}}Arctan\left(tanh\left(\frac{\sqrt{Q_0}}{2}(t-t_0)\right)\right)$ and 
\begin{equation}a(t)=Y(\tau(t))^{2/n}=sech^{2/n}\left(\sqrt{Q_0}(t-t_0)\right)\end{equation}
for $t_0\in\mathds{R}$.  Then by (\ref{eq: phiforY=cosr=1}) with $\alpha_0=2(d-1)/n\kappa$, we have scalar field
\begin{equation}\phi(t)\stackrel{def.}{=}\varphi(\tau(t))= 
2^{3/2}\sqrt{\frac{(d-1)}{n\kappa}}Arctan\left(tanh\left(\frac{\sqrt{Q_0}}{2}(t-t_0)\right)
\right) +\beta_0\end{equation}
for $\beta_0\in\mathds{R}$.  Finally, by (\ref{eq: zerocurvFRLWEMPVphi-Y}), (\ref{eq: YprimetauforY=cos}) and (\ref{eq: dottauforY=cosr=1}) we obtain
\begin{eqnarray}
V(\phi(t))
&=&\left[\frac{2d(d-1)}{n^2 \kappa}(Y')^2 -\frac{1}{2}Y^2(\varphi ')^2-\frac{\Lambda}{\kappa}\right]\circ\tau(t)\\
&=&\frac{(d-1)Q_0}{n\kappa}\left(    \frac{2d}{n } tanh^2\left(\sqrt{Q_0}(t-t_0) \right) - sech^2\left(\sqrt{Q_0}(t-t_0)\right)\right)  -\frac{\Lambda}{\kappa}.\notag
\end{eqnarray}
so that
\begin{eqnarray}
V(w)
&=&\frac{(d-1)Q_0}{n\kappa}\left(    \frac{2d}{n } tanh^2\left(2 Arctanh\left(\tan\left(\frac{1}{2^{3/2}}\sqrt{\frac{n\kappa}{(d-1)}}(w-\beta_0)\right)\right) \right)\right. \notag\\
&&\quad \left. - sech^2\left(2 Arctanh\left(\tan\left(\frac{1}{2^{3/2}}\sqrt{\frac{n\kappa}{(d-1)}}(w-\beta_0)\right)\right)\right)\right)  -\frac{\Lambda}{\kappa}.
\end{eqnarray}
since
\begin{equation}\phi^{-1}(w)=  \frac{2}{\sqrt{Q_0}}Arctanh\left(\tan\left(\frac{1}{2^{3/2}}\sqrt{\frac{n\kappa}{(d-1)}}(w-\beta_0)\right)\right)+t_0.\end{equation}
This solution is the same as Example 1 when $n=2$.  One reasonably expects the convergence properties to be the same as Example 1, since $Y(\tau)$ and the EMP equation do not depend on $n$. \end{example}

\begin{example} For zero matter density ($D=0$) and for $\theta=1$, we take solution 2 in Table \ref{tb: exactEMP} of the homogeneous equation $Y''(\tau)+Q(\tau)Y(\tau)=0$ with $Q(\tau)=Q_0>0$.  That is, $Y(\tau) = sin(\sqrt{Q_0}\tau)$ and by (\ref{eq: taueqnforY=sinr=1}) - (\ref{eq: dottauforY=sinr=1}) we obtain $\tau(t)=\frac{2}{\sqrt{Q_0}}Arctan\left(e^{\sqrt{Q_0}(t-t_0)}\right)$ and 
\begin{equation}a(t)=Y(\tau(t))^{2/n}=sech^{2/n}\left(\sqrt{Q_0}(t-t_0)\right)\end{equation}
for $t_0\in\mathds{R}$.  Then by (\ref{eq: phiforY=sinr=1}) with $\alpha_0=2(d-1)/n\kappa$, the scalar field is
\begin{equation}\phi(t)\stackrel{def.}{=}\varphi(\tau(t))= 
2^{3/2}\sqrt{\frac{(d-1)}{n\kappa}}Arctan\left(e^{\sqrt{Q_0}(t-t_0)}\right) + \beta_0\end{equation}
for $\beta_0\in\mathds{R}$.  Finally, by (\ref{eq: zerocurvFRLWEMPVphi-Y}), (\ref{eq: YprimetauforY=sin}) and (\ref{eq: dottauforY=sinr=1}) we have
\begin{eqnarray}
V(\phi(t))
&=&\left[\frac{2d(d-1)}{n^2 \kappa}(Y')^2 -\frac{1}{2}Y^2(\varphi ')^2-\frac{\Lambda}{\kappa}\right]\circ\tau(t)\\
&=&\frac{(d-1)Q_0}{n\kappa}\left(    \frac{2d}{n } tanh^2\left(\sqrt{Q_0}(t-t_0) \right) - sech^2\left(\sqrt{Q_0}(t-t_0)\right)\right)  -\frac{\Lambda}{\kappa}
.\notag
\end{eqnarray}
so that
\begin{eqnarray}
V(w)&=&\frac{(d-1)Q_0}{n\kappa}\left[    \frac{2d}{n } tanh^2\left(\ln\left(\tan\left(\frac{1}{2^{2/3}}\sqrt{\frac{n\kappa}{(d-1)}}(w-\beta_0)\right)\right) \right)\right. \notag\\
&&\quad \left. - sech^2\left(\ln\left(\tan\left(\frac{1}{2^{2/3}}\sqrt{\frac{n\kappa}{(d-1)}}(w-\beta_0)\right)\right)\right)\right]  -\frac{\Lambda}{\kappa}
\end{eqnarray}
since
\begin{equation}\phi^{-1}(w)=  \frac{1}{\sqrt{Q_0}}\ln\left(\tan\left(\frac{1}{2^{2/3}}\sqrt{\frac{n\kappa}{(d-1)}}(w-\beta_0)\right)\right)+t_0.\end{equation}
This differs from Example 7 only in the form of the potential $V$.  This solution is the same as Example 2 when $n=2$.  One reasonably expects the convergence properties to be the same as Example 2, since $Y(\tau)$ and the EMP equation do not depend on $n$.
\end{example}

\begin{example}
For $D=2d(d-1)/n^2\kappa$ and $\theta=1$, we consider the classical EMP equation $Y''(\tau)+Q(\tau)Y(\tau) = -1/ Y(\tau)^3$.  For solution 5 in Table \ref{tb: exactEMP} with $b_0=d_0=0$ and $c_0=1$, we have that $Q(\tau)=0$ 
and $Y(\tau)=(a_0+2\tau)^{1/2}$ for some $a_0\in\mathds{R}$.  Following (\ref{eq: taueqnforY=superpowerb=0r=1}) - (\ref{eq: dottauforY=superpowerb=0r=1}) we obtain $\tau(t)=\frac{1}{2}\left((t-t_0)^2-a_0\right)$ and 
\begin{equation}a(t)=Y(\tau(t))^{2/n}=(t-t_0)^{2/n}\end{equation}
for $t_0\in\mathds{R}$ and $t>t_0$.  Since $Q(\tau)=0=\varphi'(\tau)$, the scalar field is constant
\begin{equation}\phi(t)\stackrel{def.}{=}\varphi(\tau(t)) = \phi_0\end{equation}
for $\phi_0\in\mathds{R}$.  Finally, by (\ref{eq: zerocurvFRLWEMPVphi-Y}), (\ref{eq: YprimetauforY=superpowerb=d=0r=1}) and (\ref{eq: dottauforY=superpowerb=0r=1}) we obtain constant potential
\begin{equation}
V(\phi(t))=-\frac{\Lambda}{\kappa}.
\end{equation}
This solution is the same as Example 5 when $n=2$.  One reasonably expects the convergence properties to be the same as Example 5, since $Y(\tau)$ and the EMP equation do not depend on $n$.
\end{example}

\begin{example}
For $\theta=1$ and $D=2d(d-1)/n^2\kappa$ we consider the classical EMP equation $Y''(\tau)+Q(\tau)Y(\tau) = -1/Y(\tau)^3$.  For solution 5 in Table \ref{tb: exactEMP} with $d_0=0$ and $a_0=(c_0^2-1)/b_0$ for some $b_0>0$ and $c_0\in\mathds{R}$, we have that $Q(\tau)=0$ and $Y(\tau)=(a_0+b_0\tau^2+2c_0\tau)^{1/2}$.  Following (\ref{eq: taueqnforY=superpowerr=1}) - (\ref{eq: dottauforY=superpowerr=1}) with $\lambda=-1$ we obtain $\tau(t)=\frac{1}{4b_0^{3/2}}\left(b_0e^{\sqrt{b_0}(t-t_0)}+4 e^{-\sqrt{b_0}(t-t_0)}-4\sqrt{b_0}c_0\right)$  and 
\begin{equation}a(t)=Y(\tau(t))^{2/n}=\left(\frac{1}{4}e^{\sqrt{b_0}(t-t_0)}-\frac{1}{b_0}e^{-\sqrt{b_0}(t-t_0)}\right)^{2/n}\end{equation}
for $t_0\in\mathds{R}$.  Since $Q(\tau)=0=\varphi'(\tau)$, the scalar field is any constant
\begin{equation}\phi(t)\stackrel{def.}{=}\varphi(\tau(t)) = \phi_0\in\mathds{R}.\end{equation}
 Finally, by (\ref{eq: zerocurvFRLWEMPVphi-Y}), (\ref{eq: YprimetauforY=superpowerd=0b>0r=1}) and (\ref{eq: dottauforY=superpowerr=1}) we obtain constant potential
\begin{eqnarray}
V(\phi(t))
&=&\frac{2 d(d-1)}{\kappa n^2}
\left(
\frac{
b_0\left(b_0e^{\sqrt{b_0}(t-t_0)}+4 e^{-\sqrt{b_0}(t-t_0)}\right)^2 - 16b_0^2
}{
\left(b_0e^{\sqrt{b_0}(t-t_0)} - 4 e^{-\sqrt{b_0}(t-t_0)}\right)^2
}
\right)
-\frac{\Lambda}{\kappa}\notag\\
&=&  \frac{1}{\kappa}\left(\frac{2d(d-1)b_0}{n^2}-\Lambda\right).
\end{eqnarray}
\end{example}
This solution is the same as Example 6 when $n=2$.  One reasonably expects the convergence properties to be the same as Example 6, since $Y(\tau)$ and the EMP equation do not depend on $n$.

\begin{example}
For constant $C\stackrel{def.}{=}\frac{  n^2\kappa D}{2\theta^2 d(d-1)} > 0$, we consider the EMP equation $Y''(\tau)+Q(\tau)Y(\tau)= - C/\theta^2Y(\tau)^3$.  For solution 5 in Table \ref{tb: exactEMP} with $a_0=b_0=0$ and $c_0=1/2$, we have that $Q(\tau)=d_0(1-d_0)/\tau^2$ for the choice $d_0=(1/2)-(\sqrt{C}/\theta)$, and $Y(\tau)=\sqrt{\tau}$.  Following (\ref{eq:    taueqnforY=sqrttau})-(\ref{eq: tauforY=sqrttau}) we obtain $\tau(t)=\frac{\theta^2}{4}(t-t_0)^2$ and 
\begin{equation}a(t)=Y(\tau(t))^{2/n}=\left(\frac{\theta}{2}(t-t_0)\right)^{2/n}\end{equation}
for $t>t_0\in\mathds{R}$.  Then by (\ref{eq:  phiforY=sqrttau}) with $\alpha_0=2(d-1)/n\kappa$, we get scalar field
\begin{equation}\phi(t)=\varphi(\tau(t))=\sqrt{ \frac{2(d-1)(\theta^2-4C) }{n\kappa\theta^2}  }\ln(t-t_0) + \beta_0.\end{equation}
Also by (\ref{eq:  nomatterFRLWEMPVphi-Y}) we obtain
\begin{eqnarray}
V(\phi(t))
&=&\frac{(d-1)(\theta^2-4C)(2d-n)}{\kappa n^2\theta^2 (t-t_0)^2}-\frac{\Lambda}{\kappa}
\end{eqnarray}
so that 
\begin{equation}V(w)=\frac{(d-1)(\theta^2-4C)(2d-n)}{\kappa n^2\theta^2 }e^{-2\sqrt{\frac{n\kappa\theta^2}{2(d-1)(\theta^2-4C)}}(w-\beta_0)}-\frac{\Lambda}{\kappa}\end{equation}
since 
\begin{equation}\phi^{-1}(w)=e^{\sqrt{\frac{n\kappa\theta^2}{2(d-1)(\theta^2-4C)}}(w-\beta_0)}+t_0.\end{equation}

For $C=1, \theta=4, n=2$ and $t_0=0$, the solver was run with $Y, Y'$ and $\tau$ perturbed by $.001$.  The graphs of $a(t)$ below show that the solution is unstable.  The absolute error grows three orders of magnitude over the graphed time interval.  Since $Y(\tau)$ does not depend on $n$, other choices of $n$ will also be unstable.

\begin{figure}[htbp]
\centering
\vspace{-.1in}
\caption{Instability of FRLW Example \theexample}
\includegraphics[width=4in]{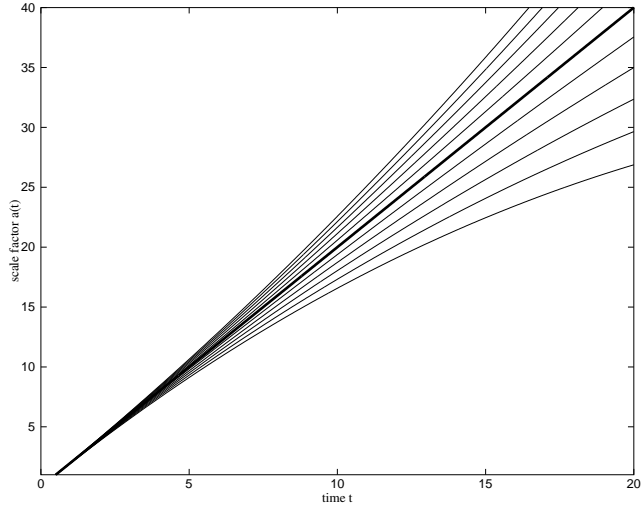}
\end{figure}
\end{example}

\begin{example}
For $\theta=1$ and constant $C\stackrel{def.}{=}\frac{  n^2\kappa D}{2d(d-1)} > 0$, we consider the EMP equation $Y''(\tau)+Q(\tau)Y(\tau)= - C/Y(\tau)^3$.    For solution 7  in Table \ref{tb: exactEMP} we have  $Y(\tau)=(a_0^2\tau^2-b_0^2)^{1/2}$ with $\lambda_1= - C$ and we let $Q(\tau)=(a_0^2b_0^2-C)/\left(a_0^2\tau^2-b_0^2\right)^2$ for $a_0,b_0>0$ chosen such that $(a_0^2b_0^2-C)>0$.  Following (\ref{eq: taueqnforY=sqrttau^2-b^2})-(\ref{eq: dottauforY=sqrttau^2-b^2}) we obtain $\tau(t)=\frac{b_0}{a_0}\cosh(a_0 (t-t_0))$ and 
\begin{equation}a(t)=Y(\tau(t))^{2/n}=\left(b_0 \sinh(a_0 (t-t_0))\right)^{2/n}\end{equation}
for $t_0\in\mathds{R}$.  Then by (\ref{eq: phiforY=sqrttau^2-b^2r0=1}) with $\alpha_0=2(d-1)/n\kappa$, the scalar field is
\begin{equation}\phi(t)=- \frac{B_d}{a_0}\sqrt{\frac{2}{(d-1)}}Arctanh\left(\cosh(a_0 (t-t_0))\right)+\beta_0\label{eq: phiforex8EM42}\end{equation}
for $B_d\stackrel{def.}{=}\sqrt{\frac{(d-1)^2(a_0^2b_0^2-C)}{n\kappa b_0^2}}$.  By (\ref{eq: FRLWEMPVphi-Y}) we obtain
\begin{eqnarray}
V(\phi(t))
&=&\left[\frac{2 d(d-1)}{\kappa n^2}(Y')^2-\frac{1}{2}Y^2(\varphi ')^2-\frac{D}{Y^2}-\frac{\Lambda}{\kappa}\right]\circ\tau(t)\notag\\
&=&\frac{2d(d-1)a_0^2}{n^2\kappa}+ \frac{B_d^2(2d-n)}{n(d-1)} csch^2(a_0(t-t_0)) -\frac{\Lambda}{\kappa}\notag\\
\end{eqnarray}
 so that 
\begin{equation}V(w)=\frac{2d(d-1)a_0^2}{n^2\kappa}+ \frac{B_d^2(2d-n)}{n(d-1)} \cosh^2\left(\frac{a_0}{B_d}\sqrt{\frac{(d-1)}{2}}(w-\beta_0)\right)-\frac{\Lambda}{\kappa}\end{equation}
by composition with $\phi^{-1}$.  

For $C=b_0=1, a_0=2, n=3$ and $t_0=0$, the solver was run with $Y, Y'$ and $\tau$ perturbed by $.01$.  The graphs of $a(t)$ below show that the solution is unstable.  The absolute error grows 16 orders of magnitude over the graphed time interval.  Since $Y(\tau)$ does not depend on $n$, other choices of $n$ will also be unstable.
\vspace{.3in}
\begin{figure}[htbp]
\centering
\vspace{-.1in}
\caption{Instability of FRLW Example \theexample}
\vspace{-.1in}
\includegraphics[width=4in]{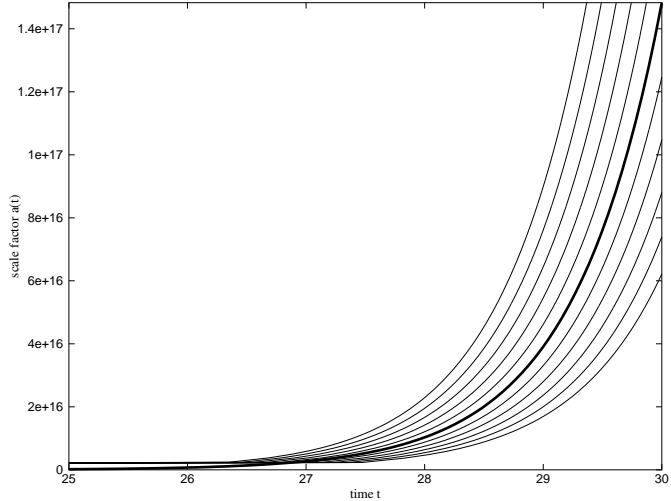}
\end{figure}

\end{example}

\break
\section{In terms of a Schr\"odinger-Type Equation}

To reformulate the Einstein field equations (i),(ii) in (\ref{eq:  FRLWEFEiii}) in terms of a Schr\"odinger-type equation (with one less non-linear term than that which is provided by the generalized EMP formulation), one can apply Corollary \ref{cor: EFE-NLSAnonzeroEzero} to the difference (ii)-(i).  In doing so, Corollary \ref{cor: EFE-NLSAnonzeroEzero} may be applied in a few different ways:  either by taking $\frac{G(t)}{a(t)^A}=\frac{k}{a(t)^2}$ in which case this term transforms into the linear term $Eu(\sigma)=\theta^2 k u(\sigma)$ in the corresponding Schr\"odiner-type equation, or by taking $\frac{G(t)}{a(t)^A}=-\frac{\kappa n_j D_j(t)}{d(d-1)a(t)^{n_j}}$ where $j$ is some index with $n_j\neq 0$ in which case this nonlinear term transforms into the linear term $E(\sigma)u(\sigma)=-\frac{\theta^2 \kappa  n_j^2}{2d(d-1)}\mathrm{D}_j(\sigma)u(\sigma)$ in the corresponding Schr\"odinger-type equation.  We will state both applications.

\begin{thm} $\left(\mbox{Apply Corollary \ref{cor: EFE-NLSAnonzeroEzero} with }\frac{G(t)}{a(t)^A}=\frac{k}{a(t)^2}\right)$\\
\label{thm: FRLWEFE-NLS}
Suppose you are given a twice differentiable function $a(t)>0$, a once differentiable function $\phi(t)$, and also functions $D_i(t), \rho'(t), p'(t), V(x)$ which satisfy the Einstein equations $(i),(ii)$ in (\ref{eq:  FRLWEFEiii}) for some $k, n_1, \dots, n_{M}, \Lambda \in\mathds{R}, d\in\mathds{R}\backslash\{0,1\}, \kappa\in\mathds{R}\backslash\{0\}$ and $M\in\mathds{N}$.  Let $g(\sigma)$ denote the inverse of a function $\sigma(t)$ which satisfies
\begin{equation}\dot{\sigma}(t)=\frac{1}{\theta a(t)}\label{eq: FRLWNLSdotsigma-a}\end{equation}
for some $\theta>0$.  Then the following functions 
\begin{eqnarray}u(\sigma)&=&\frac{1}{a(g(\sigma))}\label{eq: FRLWNLSu-a}\\
P(\sigma)&=&\frac{\kappa}{(d-1)}\psi '(\sigma)^2\label{eq: FRLWNLSP-psi}\end{eqnarray}
solve the Schr\"odinger-type equation
\begin{equation}u''(\sigma)+[\theta^2 k -P(\sigma)]u(\sigma)=\displaystyle\sum_{i=1}^{M}\frac{\theta^2  \kappa n_i \mathrm{D}_i(\sigma)}{d(d-1) u(\sigma)^{1-n_i}}+\frac{\theta^2 \kappa(\uprho(\sigma)+\mathrm{p}(\sigma))}{(d-1) u(\sigma)}
\label{eq: FRLWNLS}\end{equation}
for 
\begin{equation}\psi(\sigma)=\phi(g(\sigma))\label{eq: FRLWNLSpsi-phi}\end{equation}
\begin{equation}\mathrm{D}_i(\sigma)=D_i(g(\sigma)), \ \  1\leq i\leq M \label{eq: FRLWNLSrmD-D}\end{equation}
and
\begin{equation}\uprho(\sigma)=\rho'(g(\sigma)), \ \mathrm{p}(\sigma)=p'(g(\sigma)). \label{eq: FRLWNLSuprho-rhormp-p}\end{equation}

Conversely, suppose you are given a twice differentiable function $u(\sigma)>0$, and also functions $P(\sigma), \mathrm{D}_i(\sigma)$ for $1\leq i\leq M, M\in\mathds{N}$ and $\uprho(\sigma), \mathrm{p}(\sigma)$ which solve (\ref{eq: FRLWNLS}) for some constants $\theta>0, k\in\mathds{R}, \kappa\in\mathds{R}\backslash\{0\}, d\in\mathds{R}\backslash\{0,1\}$ and $n_i\in\mathds{R}$ for $1\leq i \leq M$.  In order to construct functions which solve $(i),(ii)$, first find $\sigma(t), \psi(\sigma)$ which solve the differential equations
\begin{equation}\dot{\sigma}(t)=\frac{1}{\theta} u(\sigma(t))\qquad\mbox{ and }\qquad \psi '(\sigma)^2= \frac{(d-1)}{\kappa} P(\sigma).\label{eq: FRLWNLSdotsigma-upsi-P}\end{equation}
Then the functions
\begin{eqnarray}a(t)&=&\frac{1}{u(\sigma(t))}\label{eq: FRLWNLSa-u}\\
\phi(t)&=&\psi(\sigma(t))\label{eq: FRLWNLSphi-psi}\end{eqnarray}
\begin{equation}D_i(t)=\mathrm{D}_i(\sigma(t)), \ \  1\leq i \leq M\label{eq: FRLWNLSD-rmD}\end{equation}
\begin{equation}\rho'(t)=\uprho(\sigma(t)), \ p'(t)=\mathrm{p}(\sigma(t)),\label{eq: FRLWNLSrho-uprhop-rmp}\end{equation}
and
\begin{equation}
V(\phi(t))=\left[\frac{d(d-1)}{2\kappa}\left(\frac{1}{\theta^2}(u')^2+ ku^2 \right)-\frac{u^2(\psi ')^2}{2\theta^2}-\displaystyle\sum_{i=1}^{M}\mathrm{D}_i u^{n_i} -\uprho-\frac{\Lambda}{\kappa}\right]\circ\sigma(t)\label{eq: FRLWNLSVphi-u}
\end{equation}
satisfy the equations $(i),(ii)$.\end{thm}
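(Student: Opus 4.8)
The plan is to mirror the proof of Theorem \ref{thm: FRLWEFE-EMP}, replacing the application of Theorem \ref{thm: EFE-EMP} by an application of Corollary \ref{cor: EFE-NLSAnonzeroEzero} to the single combined equation obtained by subtracting the Einstein equations $(ii)-(i)$, exactly as announced just before the theorem statement.

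First, for the forward direction, I would recall that the subtraction $(ii)-(i)$ is precisely equation (\ref{eq: FRLWEMPiiminusi}), namely
$$\dot H(t)-\frac{k}{a(t)^2}=-\frac{\kappa}{(d-1)}\left[\dot\phi(t)^2+\sum_{i=1}^M\frac{n_iD_i(t)}{da(t)^{n_i}}+(\rho'(t)+p'(t))\right],$$
already derived in the EMP proof. Rearranging, this is a scale factor equation of the form (\ref{eq: CEFEANONZERO}) with $\varepsilon=\kappa/(d-1)$, with leading term $G(t)/a(t)^A=k/a(t)^2$ (so $A=2$, $G(t)=k$), with the matter terms contributing $G_i(t)=-\frac{\kappa n_i}{d(d-1)}D_i(t)$ at powers $A_i=n_i$ for $1\le i\le M$, and with one further term $G_{M+1}(t)=-\frac{\kappa}{(d-1)}(\rho'(t)+p'(t))$ at power $A_{M+1}=0$. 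I would record these identifications in a substitution table analogous to Table \ref{tb: FRLWEMP}. Applying the forward direction of Corollary \ref{cor: EFE-NLSAnonzeroEzero} then delivers the claimed $u(\sigma)$ and $P(\sigma)$ together with the linear coefficient $E=\theta^2AG/2=\theta^2k$ and the source coefficients $F_i=-\theta^2AG_i/2$; since $C_i=1-2A_i/A=1-A_i$, the exponents become $1-n_i$ for the matter terms and $1$ for the fluid term, which is exactly equation (\ref{eq: FRLWNLS}).

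For the converse, I would construct $\sigma(t),\psi(\sigma),a(t),\phi(t),D_i(t),\rho'(t),p'(t)$ by the stated formulas and verify $(i)$ and $(ii)$ separately. To obtain $(i)$: differentiating $a=1/(u\circ\sigma)$ and using $\dot\sigma=u/\theta$ gives $H=-u'/\theta$, so $H^2=(u')^2/\theta^2$ and $k/a^2=ku^2$; likewise $\dot\phi^2=u^2(\psi')^2/\theta^2$ and $D_i/a^{n_i}=\mathrm{D}_iu^{n_i}$. Substituting these into the defining equation (\ref{eq: FRLWNLSVphi-u}) for $V$ collapses it exactly to equation $(i)$ rearranged, so $(i)$ holds by design. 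To obtain $(ii)$: the converse of Corollary \ref{cor: EFE-NLSAnonzeroEzero} guarantees that the scale factor equation (\ref{eq: CEFEANONZERO})---that is, the difference $(ii)-(i)$ in the form (\ref{eq: FRLWEMPiiminusi})---is satisfied, so adding the already-verified equation $(i)$ yields $(ii)$.

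The computation is entirely mechanical once the corollary is in hand, and I expect no genuine obstacle. The only points requiring care are the bookkeeping in the substitution table---matching $A_i\leftrightarrow n_i$, tracking the repeated factor $-A/2=-1$ in the definitions of $u$, $E$ and $F_i$, and treating the fluid term as the extra index $M+1$ with $A_{M+1}=0$ (which produces the $1/u$ source term)---and the conceptual point that the corollary reformulates only the single combined equation $(ii)-(i)$, so that equation $(i)$ must be recovered independently through the definition of the potential $V$.
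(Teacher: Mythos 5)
Your proposal is correct and follows essentially the same route as the paper's own proof: subtract $(ii)-(i)$ to obtain the single scale factor equation, apply Corollary \ref{cor: EFE-NLSAnonzeroEzero} with exactly the substitution table you describe, and in the converse direction verify $(i)$ directly from the definition of $V$ and then recover $(ii)$ by combining $(i)$ with the regained difference equation. The identifications of $A$, $A_i$, $G$, $G_i$, $E$, $F_i$ and $C_i$ all match Table \ref{tb: FRLWNLS}.
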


\begin{proof}
This proof will implement Corollary \ref{cor: EFE-NLSAnonzeroEzero} with constants and functions as indicated in the following table.

\begin{table}[ht]
\centering
\caption{{ Corollary \ref{cor: EFE-NLSAnonzeroEzero} applied to FRLW}}\label{tb: FRLWNLS}
\vspace{.2in}
\begin{tabular}{r | l c r | l}
In Corollary & substitute & & In Corollary & substitute \\[4pt]
\hline
\raisebox{-5pt}{$\varepsilon $} & \raisebox{-5pt}{${\kappa}/{(d-1)}$}      &&    \raisebox{-5pt}{$E(\sigma)$} & \raisebox{-5pt}{constant $\theta^2 k$}\\[8pt]
                                       $G(t)$ & $\mbox{constant } k$    &&     $A$ & $2$ \\[8pt]
$G_i(t), 1\leq i\leq M$ &$ -\frac{\kappa n_i }{d(d-1)}D_i(t)  $     &&    $A_i$ &$n_i$ \\[8pt]
 $G_{M+1}(t)$ & $ \frac{-\kappa}{(d-1)}(\rho'(t)+p'(t))$  &&      $A_{M+1}$   &$0$\\[8pt]
   $F_i(\sigma), 1\leq i\leq M$& $\frac{\theta^2 \kappa n_i }{d(d-1)}\mathrm{D}_i(\sigma)  $&&  $C_i$&$1-n_i$\\[8pt]
$F_{M+1}(\sigma)$&$ \frac{\theta^2 \kappa}{(d-1)} (\uprho(\sigma)+\mathrm{p}(\sigma)) $ &&        $C_{M+1}$&  $1$ \\[6pt]
\hline
\end{tabular}
\end{table}

To prove the forward implication, we assume to be given functions which solve the Einstein field equations $(i)$ and $(ii)$.  Subtracting equations $(ii)-(i)$, 
\begin{equation}\dot{H}(t)-\frac{k}{a(t)^2}=-\frac{\kappa}{(d-1)}\left[\dot{\phi}(t)^2+\displaystyle\sum_{i=1}^{M}\frac{n_i D_i(t)}{da(t)^{n_i}}+(\rho'(t)+p'(t))
\right].\label{eq: FRLWNLSiiminusi}\end{equation}
This shows that $a(t), \phi(t), D_i(t), \rho'(t)$ and $p'(t)$ satisfy the hypothesis of Corollary \ref{cor: EFE-NLSAnonzeroEzero}, applied with  constants $\varepsilon , N, A, A_1 \dots, A_{N}$ and functions $G(t), G_1(t) \dots,G_{N}(t)$ according to Table \ref{tb: FRLWNLS}.  Since $\sigma(t), u(\sigma), P(\sigma)$ and $\psi(\sigma)$ defined in (\ref{eq: FRLWNLSdotsigma-a}), (\ref{eq: FRLWNLSu-a}), (\ref{eq: FRLWNLSP-psi}) and (\ref{eq: FRLWNLSpsi-phi})  are equivalent to that in the forward implication of Corollary \ref{cor: EFE-NLSAnonzeroEzero}, by this corollary and by definitions (\ref{eq: FRLWNLSrmD-D}) and (\ref{eq: FRLWNLSuprho-rhormp-p}) of $\mathrm{D}_i(\sigma)$ and $\uprho(\sigma), \mathrm{p}(\sigma)$, the Schr\"odinger-type equation (\ref{eq: CNLSANONZERO}) holds for constants $C_1, \dots, C_{N}$ and functions $F_1(\sigma), \dots, F_{N}(\sigma)$ as indicated in Table \ref{tb: FRLWNLS}.  This proves the forward implication.

To prove the converse implication, assume we are given functions which solve the Schr\"odinger-type equation (\ref{eq: FRLWNLS}) and we begin by showing that $(i)$ is satisfied.   Differentiating the definition (\ref{eq:  FRLWNLSa-u}) of $a(t)$ and by the definition in (\ref{eq: FRLWNLSdotsigma-upsi-P}) of $\sigma(t)$, we see that
\begin{eqnarray}\dot{a}(t)&=&-\frac{u'(\sigma(t))}{u(\sigma(t))^2}\dot\sigma(t)\notag\\
&=&-\frac{u'(\sigma(t))}{\theta u(\sigma(t))}.\end{eqnarray}
Dividing by $a(t)$, we obtain
\begin{equation}H(t)\stackrel{def.}{=}\frac{\dot{a}(t)}{a(t)}= - \frac{1}{\theta}u'(\sigma(t)).\label{eq: FRLWNLSH-u}\end{equation}
Differentiating the definition (\ref{eq: FRLWNLSphi-psi}) of $\phi(t)$ and using definition in (\ref{eq: FRLWNLSdotsigma-upsi-P}) of  $\sigma(t)$, we get that
\begin{equation}\dot{\phi}(t)=\psi '(\sigma(t))\dot{\sigma}(t)=\frac{1}{\theta}\psi '(\sigma(t))u(\sigma(t)).\label{eq: FRLWNLSdotphi-u}\end{equation}
Using (\ref{eq: FRLWNLSH-u}) and (\ref{eq: FRLWNLSdotphi-u}), and also the definitions (\ref{eq: FRLWNLSa-u}), (\ref{eq: FRLWNLSD-rmD}) and (\ref{eq: FRLWNLSrho-uprhop-rmp}) of $a(t), D_i(t)$ and $\rho'(t), p'(t)$ respectively, the definition (\ref{eq: FRLWNLSVphi-u}) of $V\circ\phi$ can be written
\begin{equation}
V(\phi(t))=\frac{d(d-1)}{2\kappa}\left(
H(t)^2+\frac{k}{a(t)^2}\right)-\frac{1}{2}\dot\phi(t)^2-\displaystyle\sum_{i=1}^{M}\frac{D_i(t)}{a(t)^{n_i}}-\rho'(t)-\frac{\Lambda}{\kappa}
\label{eq: FRLWNLSVphi-a}.\end{equation}
This shows that $(i)$ holds  (That is, the definition of $V(\phi(t))$ was designed to be such that $(i)$ holds).

To conclude the proof we must also show that $(ii)$ holds.   In the converse direction the hypothesis of the converse of Corollary \ref{cor: EFE-NLSAnonzeroEzero} holds, applied with constants $N, C_1, \dots, C_{N}$ and functions $E(\sigma), F_1(\sigma), \dots, F_{N}(\sigma)$ as indicated in Table \ref{tb: FRLWNLS}.  Since  $\sigma(t), \psi(\sigma), a(t)$ and $\phi(t)$ defined in (\ref{eq: FRLWNLSdotsigma-upsi-P}), (\ref{eq: FRLWNLSa-u}) and (\ref{eq: FRLWNLSphi-psi}) are consistent with the converse implication of Corollary \ref{cor: EFE-NLSAnonzeroEzero}, applied with $\varepsilon $ and $A$ as in Table \ref{tb: FRLWNLS}, by this corollary and by definitions (\ref{eq: FRLWNLSD-rmD}) and (\ref{eq: FRLWNLSrho-uprhop-rmp}) of $D_i(t)$ and $\rho'(t), p'(t)$ the  scale factor  equation (\ref{eq: CEFEANONZERO}) holds for constants $\varepsilon , A, A_1, \dots, A_{N}$ and functions $G(t), G_1(t),\dots,G_{N}(t)$ according to Table \ref{tb: FRLWNLS}.  That is, we have regained (\ref{eq: FRLWNLSiiminusi}) which shows that the subtraction of equations (ii)-(i) holds in the converse direction.  Now solving (\ref{eq: FRLWNLSVphi-a}) for $\rho'(t)$ and substituting this into (\ref{eq: FRLWNLSiiminusi}), we obtain (ii).  This proves the theorem.
\end{proof}

\subsection{Reduction to linear Schr\"odinger: pure scalar field}
To compute some exact solutions, we take special case $\rho'=p'=D_i=0$ so that Theorem \ref{thm: FRLWEFE-NLS} shows that solving the Einstein equations 
\begin{equation}\frac{d}{2}H^2(t)+\frac{dk}{2a(t)^2}\stackrel{(i)'''}{=}\frac{\kappa}{(d-1)} \left[\frac{1}{2}\dot\phi(t)^2+V(\phi(t))\right]+\frac{\Lambda}{(d-1)}\end{equation}
\begin{eqnarray}&&\\
\dot{H}(t)+\frac{d}{2}H(t)^2+\frac{(d-2)k}{2a(t)^2}&\stackrel{(ii)'''}{=}&-\frac{\kappa}{(d-1)} \left[\frac{1}{2}\dot\phi(t)^2-V(\phi(t))\right]+\frac{\Lambda}{(d-1)}\qquad\qquad\notag \end{eqnarray}
is equivalent to solving the linear Schr\"odinger equation
\begin{equation}u''(\sigma)+[\theta^2 k -P(\sigma)]u(\sigma)=0
\label{eq: nomatterFRLWNLS}\end{equation}
for any constant $\theta>0$.  The solutions of $(i)''',(ii)'''$ and (\ref{eq: nomatterFRLWNLS}) are related by 
\begin{equation}a(t)=\frac{1}{u(\sigma(t))}\quad\mbox{ and }\quad
\psi '(\sigma)^2=\frac{(d-1)}{\kappa}P(\sigma)\end{equation}
for $\phi(t)=\psi(\sigma(t))$ and 
\begin{equation}\dot{\sigma}(t)=\frac{1}{\theta a(t)}=\frac{1}{\theta} u(\sigma(t)).\label{eq: nomatterFRLWLSdotsigmau}\end{equation}
Also in the converse direction, $V$ is taken to be
\begin{equation}
V(\phi(t))=\left[\frac{d(d-1)}{2\kappa}\left(\frac{1}{\theta^2}(u')^2+ ku^2 \right)-\frac{u^2(\psi ')^2}{2\theta^2}-\frac{\Lambda}{\kappa}\right]\circ\sigma(t).\label{eq: nomatterFRLWNLSVphi-u}
\end{equation}

We now refer to Appendix E for solutions of the linear Schr\"odinger equation (\ref{eq: nomatterFRLWNLS}), which we will map to solutions of Einstein's equations using the theorem. 

\begin{example}
For zero curvature $k=0$ and $\theta=1$, we take solution 1 in Table \ref{tb: exactNLS} with $a_0=d_0=0$ so that we have $u(\sigma)=b_0\sigma+c_0$ and $P(\sigma)=0$ for $b_0>0$ and $c_0\in\mathds{R}$.  By (\ref{eq: sigmaeqnforu=linearr=1}) - (\ref{eq: dotsigmaanduforu=linearr=1}) we obtain $\sigma(t)=e^{b_0(t-t_0)}-\frac{c_0}{b_0}$ and
\begin{equation}a(t)=\frac{1}{u(\sigma(t))}=\frac{1}{b_0}e^{-b_0(t-t_0)}\end{equation}
for $t_0\in\mathds{R}$.  Since $P=0=\psi'(\sigma)$, the scalar field is constant
\begin{equation}\psi(\sigma)=\psi_0\in\mathds{R}.\end{equation}
Finally, by (\ref{eq: nomatterFRLWNLSVphi-u}) and (\ref{eq: uprimesigmaforu=linearr=1}) we obtain constant potential
\begin{eqnarray}
V(\phi(t))&=&\left[\frac{d(d-1)}{2\kappa}(u')^2-\frac{\Lambda}{\kappa}\right]\circ\sigma(t)\notag\\
&=&\frac{d(d-1)}{2\kappa}b_0^2-\frac{\Lambda}{\kappa}.
\label{eq: Vforu=linearr=1}
\end{eqnarray}

\end{example}
\begin{example}
For zero curvature $k=0$ and $\theta=1$, we take solution 1 in Table \ref{tb: exactNLS} with $a_0>0$ and $d_0=0$ so that we have  $u(\sigma)=a_0\sigma^2+b_0\sigma+c_0$ and $P(\sigma)=2a_0/(a_0\sigma^2+b_0\sigma+c_0)$.  By (\ref{eq:  sigmaeqnforu=quadraticr=1}) - (\ref{eq:  dotsigmaandusigmaforu=quadraticr=1}) we obtain $\sigma(t)=\frac{1}{2a_0}\left(\sqrt{-\Delta} \ tan\left[\frac{\sqrt{-\Delta}}{2}(t-t_0)\right]-b_0\right)$ and 
\begin{equation}a(t)=\frac{1}{u(\sigma(t))}=\frac{4a_0}{-\Delta}cos^2\left[\frac{\sqrt{-\Delta}}{2}(t-t_0)\right]\end{equation}
 for negative discriminant $\Delta=b_0^2-4a_0c_0<0$ and $t_0\in\mathds{R}$.   Then by (\ref{eq: phiforu=quadraticr=1}) with $\alpha_0=(d-1)/\kappa$, the scalar field is
 \begin{eqnarray}\phi(t)&\stackrel{def.}{=}&\psi(\sigma(t))=
 \sqrt{\frac{2(d-1)}{\kappa}} \ \ln\left[\frac{\sqrt{-\Delta}}{a_0}\left( \ tan\left[\frac{\sqrt{-\Delta}}{2}(t-t_0)\right] \right.\right.\notag\\
 &&\qquad\qquad\qquad\left.\left.+  sec\left[\frac{\sqrt{-\Delta}}{2}(t-t_0)\right] \right)\right] + \beta_0\end{eqnarray}
for $\beta_0\in\mathds{R}$.  Finally, by (\ref{eq:  nomatterFRLWNLSVphi-u}), (\ref{eq: uprimesigmaforu=quadraticr=1}) and (\ref{eq: dotsigmaandusigmaforu=quadraticr=1}) we obtain
\begin{eqnarray}
V(\phi(t))&=&\left[\frac{d(d-1)}{2\kappa}(u')^2-\frac{1}{2}u^2(\psi ')^2-\frac{\Lambda}{\kappa}\right]\circ\sigma(t)\notag\\
&=& \frac{-\Delta(d-1)}{2\kappa}  \left(   d \ tan^2\left[\frac{\sqrt{-\Delta}}{2}(t-t_0)\right] - \frac{1}{2}sec^2\left[\frac{\sqrt{-\Delta}}{2}(t-t_0)\right] \right) -\frac{\Lambda}{\kappa}.
\notag\\
\end{eqnarray}\end{example}

\begin{example}
For positive curvature $k=1$ and $\theta=1$, we take solution 2 in Table \ref{tb: exactNLS} with $b_0=1/\sqrt{2}$ and $a_0>0$ so that we have $u(\sigma) = a_0\cos^2(\sigma/\sqrt{2})$, $P(\sigma)=\tan^2(\sigma/\sqrt{2})$ and $E=1$.  By (\ref{eq: sigmaeqnforu=cos^2r=1}) - (\ref{eq: dotsigmaforu=cos^2r=1}) we obtain $\sigma(t)=\sqrt{2}Arctan\left(\frac{a_0}{\sqrt{2}}(t-t_0)\right)$ and 
\begin{equation}a(t)=\frac{1}{u(\sigma(t))}=\frac{1}{a_0}\left(1+\frac{a_0^2}{2}(t-t_0)^2\right)
\end{equation}
for $t_0\in\mathds{R}$.  Then by (\ref{eq: phiforu=cos^2r=1withc=0}) with $\alpha_0=(d-1)/\kappa$, the scalar field becomes
\begin{eqnarray}
\phi(t)
&=&\sqrt{\frac{(d-1)}{2\kappa}} \ln\left[  \frac{a_0^2}{2}(t-t_0)^2 +1  \right] + \beta_0
\label{eq: phiforFRLWNLSposcurvu=cos^2r=1andc=0}\end{eqnarray}
for $\beta_0\in\mathds{R}$.  Finally, by (\ref{eq:  nomatterFRLWNLSVphi-u}), (\ref{eq: uprimesigmaforu=cos^2r=1}) and (\ref{eq: dotsigmaforu=cos^2r=1}), we obtain
\begin{eqnarray}
V(\phi(t))
&=&\frac{(d-1)}{\kappa} \left( \frac{(2d-1)a_0^4(t-t_0)^2+2da_0^2}{\left(  2+a_0^2(t-t_0)^2  \right)^2  }  \right)-\frac{\Lambda}{\kappa}.
\end{eqnarray}
That is, we have 
\begin{equation}V(w)
=C_1e^{-\sqrt{\frac{2\kappa}{(d-1)}}w}-C_2e^{-2\sqrt{\frac{2\kappa}{(d-1)}}w} -\frac{\Lambda}{\kappa}
\end{equation}
for constants 
\begin{equation}C_1=\frac{(d-1)(2d-1)a_0^2}{2\kappa}e^{\sqrt{\frac{2\kappa}{(d-1)}}\beta_0}, \qquad C_2=\frac{(d-1)^2a_0^2}{2\kappa}e^{2\sqrt{\frac{2\kappa}{(d-1)}}\beta_0}\end{equation}
and $w\geq\sqrt{\frac{(d-1)}{2\kappa}}ln(a_0^2/2)+\beta_0$, since
\begin{equation}\phi^{-1}(w)=\frac{\sqrt{2}}{a_0}\sqrt{e^{\sqrt{\frac{2\kappa}{(d-1)}}(w-\beta_0)}-1}+t_0.\end{equation}
  By taking $d=3$ and $t_0=0$, replacing $a_0$ with $1/a_0$, and identifying $\kappa$ and $\beta_0$ here with $K^2$ and $ \phi_0$  in \cite{OT} respectively, we obtain the string-inspired solution II of \cite{OT}.  One can check that the conditions on the constants $C_1, C_2$ in \cite{OT} (with $d=3$) agree with the example here since
\begin{equation} a_0^2=\frac{2\kappa C_1^2}{(2d-1)^2 C_2} \ \ \mbox{ and } \ \ \beta_0=\sqrt{\frac{(d-1)}{2\kappa}}\ln\left(\frac{(2d-1)C_2}{(d-1)C_1}\right).\end{equation}

For $a_0=1$ and $t_0=0$, the solver was run with $u, u'$ and $\sigma$ perturbed by $.001$.  The graphs of $a(t)$ below show that the solution is unstable.  The absolute error grows by up to five orders of magnitude over the graphed time interval.

\begin{figure}[htbp]
\centering
\caption{Instability of FRLW Example \theexample}
\vspace{-.1in}
\includegraphics[width=4in]{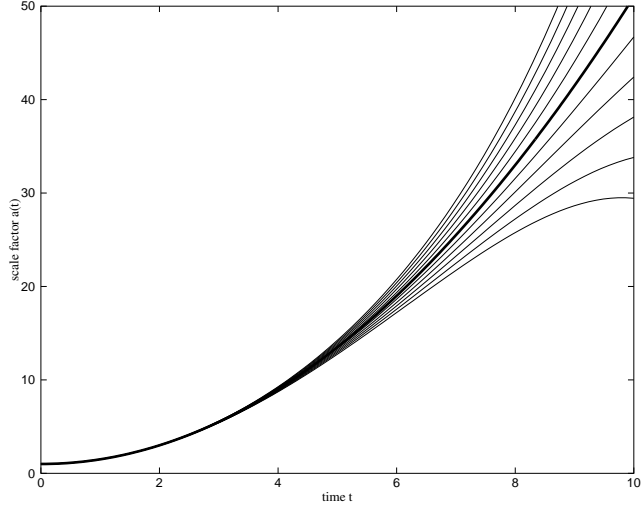}
\end{figure}

\end{example}

\break
\begin{example}
For negative curvature $k=-1$ and $\theta=1$, we take solution 4 in Table \ref{tb: exactNLS} with $c_0=-1$ and $b_0=0$ so that we have $u(\sigma)=a_0e^{-\sigma}$, $P(\sigma)=0$ and $E=-1$.  By (\ref{eq: sigmaeqnforu=-expr=general}) - (\ref{eq: dotsigmausigmaforu=-expr=general}) with $r_0=1$ we obtain $\sigma(t)=\ln\left(a_0(t-t_0)\right)$ and 
\begin{equation}a(t)=\frac{1}{u(\sigma(t))}=(t-t_0)\end{equation}
for $t_0\in\mathds{R}$.  Since $P=0=\psi'(\sigma)$, we get
$\psi(\sigma)=\psi_0$
for constant $\psi_0\in\mathds{R}$.  Finally, by (\ref{eq: nomatterFRLWNLSVphi-u}), (\ref{eq: uprimesigmaforu=-expr=general}) and (\ref{eq: dotsigmausigmaforu=-expr=general}), we obtain constant potential $V(\phi(t))=-\Lambda/\kappa$.

For $a_0=1$ and $t_0=0$, the solver was run with $u, u'$ and $\sigma$ perturbed by $.001$.  The graphs of $a(t)$ below show that the solution is unstable.  The absolute error grows four orders of magnitude over the graphed time interval.

\begin{figure}[htbp]
\centering
\vspace{-.1in}
\caption{Instability of FRLW Example \theexample}
\vspace{-.1in}
\includegraphics[width=4in]{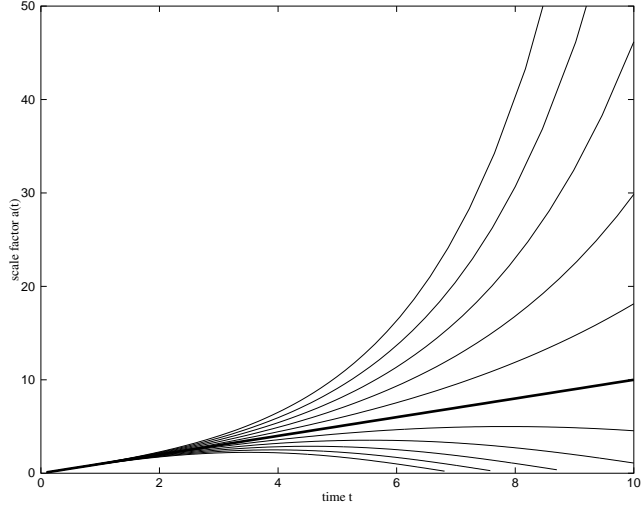}
\end{figure}
\end{example}

\break

\begin{example}
For negative curvature $k=-1$ and $\theta=1$, we take solution 4 in Table \ref{tb: exactNLS} with $c_0=-1$ and $a_0,b_0>0$ so that we have $u(\sigma)=a_0e^{-\sigma}-b_0e^{\sigma}$, $P(\sigma)=0$ and $E=-1$.  By (\ref{eq: sigmaeqnforu=linearcomboexpr=1}) - (\ref{eq: dotsigmausigmaforu=linearcomboexpr=1}) we obtain $\sigma(t)=\ln\left(\sqrt{\frac{a_0}{b_0}}tanh(\sqrt{a_0b_0}(t-t_0))\right)$ and
\begin{equation}a(t)=\frac{1}{u(\sigma(t))}=\frac{1}{2\sqrt{a_0b_0}} \ sinh(2\sqrt{a_0b_0}(t-t_0))
\end{equation}
for $t_0\in\mathds{R}$.  Since $P=0=\psi'(\sigma)$, we have that
$\psi(\sigma)=\psi_0$
for constant $\psi_0\in\mathds{R}$.  Finally, by (\ref{eq: nomatterFRLWNLSVphi-u}), (\ref{eq: uprimesigmaforu=linearcomboexpr=1}) and (\ref{eq: dotsigmausigmaforu=linearcomboexpr=1}), we obtain constant potential
\begin{eqnarray}
V(\phi(t))&=&\left[\frac{d(d-1)}{2\kappa}\left((u')^2 - u^2 \right)-\frac{\Lambda}{\kappa}\right]\circ\sigma(t)\notag\\
&=&\frac{2d(d-1)}{\kappa} a_0b_0-\frac{\Lambda}{\kappa}
\end{eqnarray}
since $coth^2(x)-csch^2(x)=1$.

\end{example}

\begin{example}
For arbitrary curvature $k$ and $\theta=1$, we take solution 5 in Table \ref{tb: exactNLS} with $c_0=-1$ and $b_0=k+1$ so that we have $u(\sigma)=(a_0/ \sigma)e^{-\sigma^2/2}$, $P(\sigma)=\sigma^2+2/\sigma^2+(k+1)$ and $E=k$ for $a_0>0$.  By (\ref{eq: sigmaeqnforu=e^x^2/xr=1cnegative}) - (\ref{eq: dotsigmausigmaforu=e^x^2/xr=1cnegative}) we obtain $\sigma(t)=\sqrt{2\ln(a_0(t-t_0))}$
and 
\begin{equation}a(t)=\frac{1}{u(\sigma(t))}=\sqrt{2}(t-t_0)\sqrt{\ln(a_0(t-t_0))}\end{equation}
for $t>1/a_0+t_0\in\mathds{R}$.  Then by (\ref{eq:  phiforu=xe^x^2r=1cnegative}) with $\alpha_0=(d-1)/\kappa$, the scalar field is
\begin{eqnarray}
\phi(t)&=&\psi(\sigma(t))\notag\\
&=&\sqrt{ \frac{ (d-1)}{2\kappa} } \left(\sqrt{2 \ln^2( a_0(t-t_0)) + (k+1)\ln(a_0(t-t_0)) + 1} \right.\notag\\
&&\quad + \ln\left[2\ln( a_0(t-t_0))\right] - \ln\left[2(1+k)\ln(a_0(t-t_0))+4\right. \notag\\
&&\quad  + \left. 4\sqrt{2 \ln^2(a_0(t-t_0)) + (1+k)\ln( a_0(t-t_0)) + 1} \right] \notag\\
&&\quad - \frac{(1+k)}{\sqrt{2}}\ln\left[2\sqrt{2}\sqrt{  2 \ln^2(a_0(t-t_0)) +(1+k)\ln( a_0(t-t_0)) + 1} \right.\notag\\
&&\left.\left.- (1+k)-4\ln( a_0(t-t_0))\right]\right)+\beta_0
\end{eqnarray}
for $\beta_0\in\mathds{R}$.  Finally, by (\ref{eq:  nomatterFRLWNLSVphi-u}), (\ref{eq: uprimesigmaforu=e^x^2/xr=1cnegative}) and (\ref{eq: dotsigmausigmaforu=e^x^2/xr=1cnegative}), we obtain
\begin{eqnarray}
V(\phi(t))&=&\left[\frac{d(d-1)}{2\kappa}\left((u')^2+ ku^2 \right)-\frac{u^2(\psi ')^2}{2}-\frac{\Lambda}{\kappa}\right]\circ\sigma(t)\label{eq: VfornomatterFRLWu=e^x^2/xr=1cnegative}
\\
&=&\frac{(d-1)}{2\kappa(t-t_0)^2}\left(  d\left( 1 + \frac{1}{2\ln( a_0(t-t_0))}  \right)^2+ \frac{dk}{2\ln(a_0(t-t_0))} \right.\notag\\
&&\quad \left. -\frac{\left[ 2\ln^2( a_0(t-t_0)) +1+(k+1)\ln( a_0(t-t_0)) \right]}{2\ln^2(a_0(t-t_0))} \right)
-\frac{\Lambda}{\kappa}
\notag\end{eqnarray}

For $a_0=k=1$ and $t_0=0$, the solver was run with $u, u'$ and $\sigma$ perturbed by $.001$.  The graphs of $a(t)$ below show that the solution is unstable.  The absolute error grows up to four orders of magnitude over the graphed time interval.  Since $E-P(\sigma)$ is independent of $k$, the below graph is also applicable to $k=0, -1$.

\begin{figure}[htbp]
\centering
\vspace{-.1in}
\caption{Instability of FRLW Example \theexample}
\vspace{-.1in}
\includegraphics[width=4in]{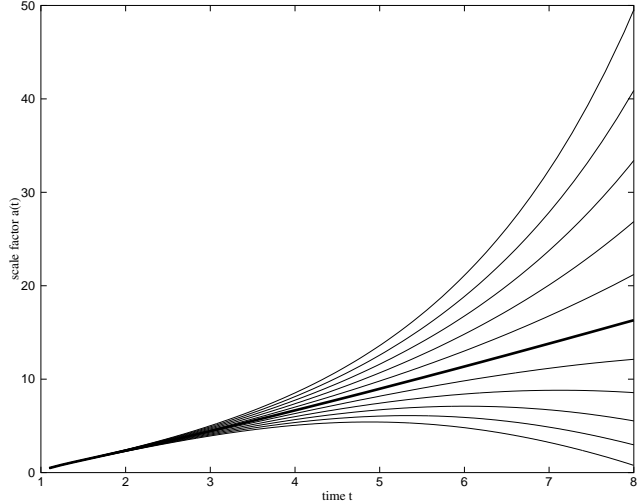}
\end{figure}

\end{example}

\break
\begin{example}
For positive curvature $k=1$ and $\theta=1$, we take solution 5 in Table \ref{tb: exactNLS} with  $c_0=0$ and $a_0=b_0=1$ so that we have $u(\sigma)=1/\sigma$, $P(\sigma)=2/\sigma^2+1$ and $E=1$.  By (\ref{eq: sigmaeqnforu=e^x^2/x}) - (\ref{eq: dotsigmausigmaforu=e^x^2/x}) with $r_0=1$ we obtain $\sigma(t)=\sqrt{ 2(t-t_0) }$
and 
\begin{equation}a(t)=\frac{1}{u(\sigma(t))}=  \sqrt{ 2(t-t_0) }\end{equation}
for $t>t_0\in\mathds{R}$.  Then by (\ref{eq:  phiforu=1/sigmargeneral}) with $\alpha_0=(d-1)/\kappa$, we have
\begin{eqnarray}
\phi(t)
&=&\sqrt{\frac{(d-1)}{\kappa}} \left(\sqrt{g(t)}+\frac{1}{\sqrt{2}}\ln\left[ 2(t-t_0) \right]-\sqrt{2}\ln\left[\sqrt{2}+\sqrt{g(t)})\right]\right)+\beta_0\notag
\end{eqnarray}
for $g(t)= 2(t-t_0)+2$ and $\beta_0\in\mathds{R}$.  Finally, by (\ref{eq:  nomatterFRLWNLSVphi-u}), (\ref{eq: uprimesigmaforu=e^x^2/x}) and (\ref{eq: dotsigmausigmaforu=e^x^2/x}), we get
\begin{eqnarray}
V(\phi(t))
&=&\frac{(d-1)}{4\kappa} \left(  \frac{(d-2)}{2(t-t_0)^2}+\frac{(d-1)}{(t-t_0)}  \right) -\frac{\Lambda}{\kappa}.
\label{eq: VfornomatterposcurvFRLWu=1/xr=1czero}
\end{eqnarray}

For $t_0=0$, the solver was run with $u, u'$ and $\sigma$ perturbed by $.01$.  The graphs of $a(t)$ below show that the solution is unstable.  The absolute error grows two orders of magnitude over the graphed time interval.  
\begin{figure}[htbp]
\centering
\vspace{-.2in}
\caption{Instability of FRLW Example \theexample}
\vspace{-.1in}
\includegraphics[width=4in]{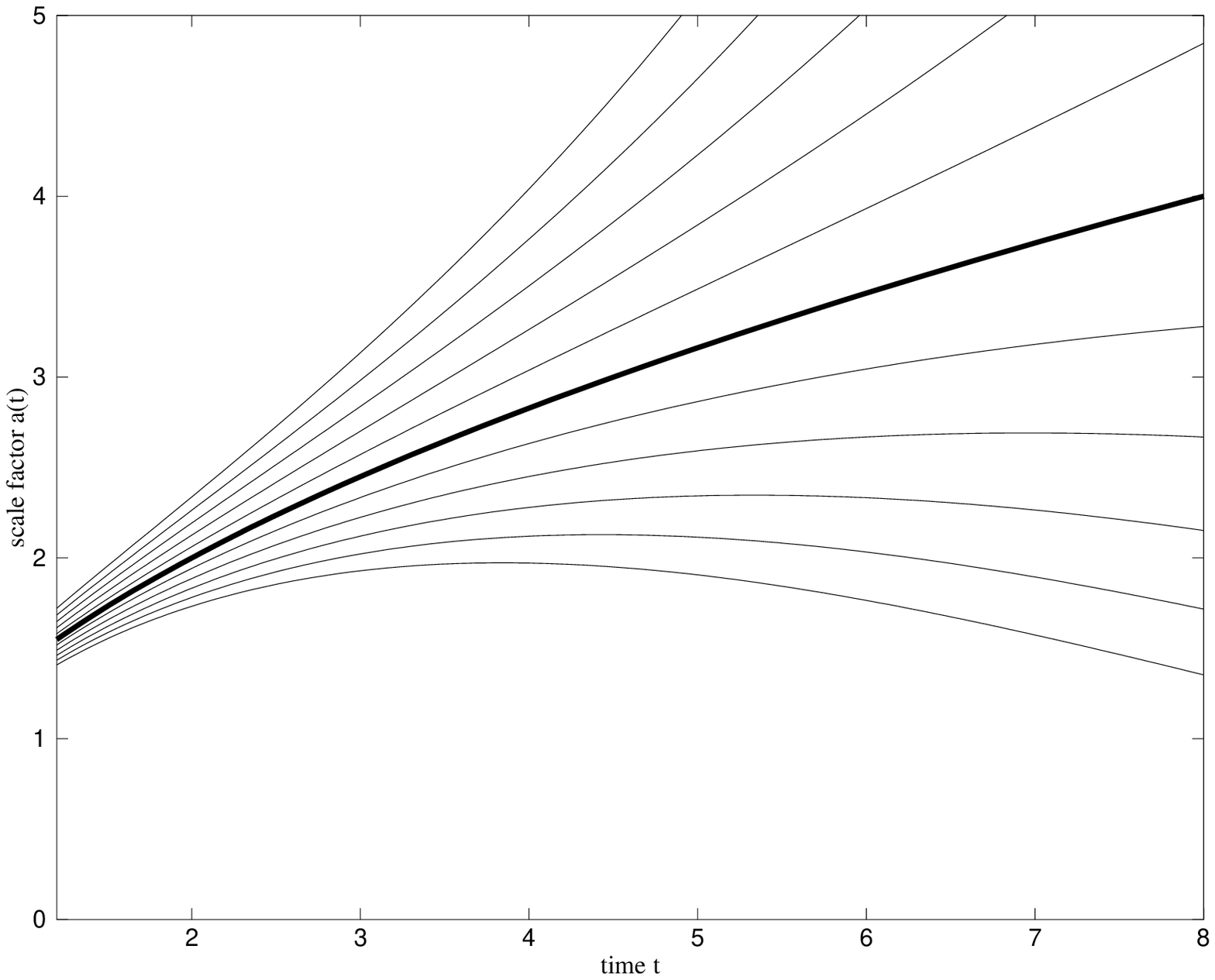}
\end{figure}
\end{example}

\break
\begin{example}
For zero curvature $k=0$ and $\theta=1$, we take solution 5 in Table \ref{tb: exactNLS} with  $b_0=c_0=0$ and $a_0=0$ so that we have $u(\sigma)=1/\sigma$, $P(\sigma)=2/\sigma^2$ and $E=0$.  By (\ref{eq: sigmaeqnforu=e^x^2/x}) - (\ref{eq: dotsigmausigmaforu=e^x^2/x}) with $r_0=1$ we obtain $\sigma(t)=\sqrt{ 2(t-t_0) }$
and 
\begin{equation}a(t)=\frac{1}{u(\sigma(t))}=  \sqrt{ 2(t-t_0) }\end{equation}
for $t>t_0\in\mathds{R}$.  Then by (\ref{eq:  phiforu=1/sigmargeneral}) with $\alpha_0=(d-1)/\kappa$ and $b_0=0$, we have
\begin{eqnarray}
\phi(t)
&=&\sqrt{\frac{(d-1)}{2\kappa}} \ln\left[ 2(t-t_0) \right]+\beta_0\notag
\end{eqnarray}
for  $\beta_0\in\mathds{R}$.  Finally, by (\ref{eq:  nomatterFRLWNLSVphi-u}), (\ref{eq: uprimesigmaforu=e^x^2/x}) and (\ref{eq: dotsigmausigmaforu=e^x^2/x}), we obtain
\begin{eqnarray}
V(\phi(t))
&=&\frac{(d-1)(d-2)}{8\kappa(t-t_0)^2}     -\frac{\Lambda}{\kappa}.
\label{eq: VfornomatterposcurvFRLWu=1/xr=1czero}
\end{eqnarray}

For $t_0=0$, the solver was run with $u, u'$ and $\sigma$ perturbed by $.001$.  The graphs of $a(t)$ below show that the solution is unstable.  The absolute error grows up to three orders of magnitude over the graphed time interval.  

\begin{figure}[htbp]
\centering
\vspace{-.2in}
\caption{Instability of FRLW Example \theexample}
\vspace{-.1in}
\includegraphics[width=4in]{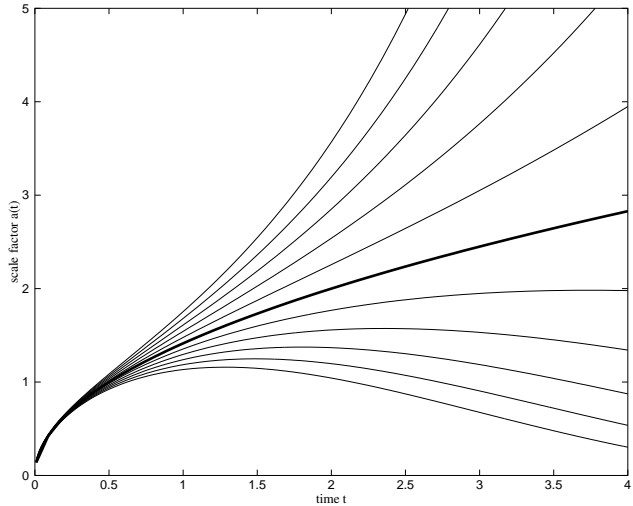}
\end{figure}
\end{example}


\break
\begin{example}
We take solution 7 in Table \ref{tb: exactNLS} with $c_0=k\theta^2$ and $b_0>0$ and $a_0>0$ so that we have $u(\sigma)=a_0 / \sigma^{b_0}$, $P(\sigma)=b_0(b_0+1)/\sigma^2+k\theta^2$ and $E=k\theta^2$.  By (\ref{eq: sigmaeqnforu=x^b}) - (\ref{eq: dotsigmaforu=x^b}) we obtain 
\begin{equation}\label{eq: sigmaneededinFRLWex}\sigma(t)=\frac{A^{1/b_0}}{a_0} (t-t_0)^{1/(1+b_0)}\end{equation} and 
\begin{equation}a(t)=\frac{1}{u(\sigma(t))}=A(t-t_0)^{b_0/(b_0+1)}\end{equation}
for $t>t_0\in\mathds{R}$ and $A\stackrel{def.}{=}\frac{1}{a_0}\left(\frac{(1+b_0)a_0}{\theta}\right)^{b_0/(1+b_0)}$.  Then by (\ref{eq: phiforu=x^b}) with $\alpha_0=(d-1)/\kappa$, the scalar field is
\begin{eqnarray}\phi(t)&=&  \beta_0+\sqrt{\frac{(d-1)}{\kappa}}\left( \sqrt{b_0(b_0+1)+k\theta^2\sigma(t)^2} \right.\label{eq: phiforFRLWexEM44kneq0}\\
&&\left.- \sqrt{b_0(b_0+1)} \log\left( \frac{\sqrt{b_0(b_0+1)}+\sqrt{b_0(b_0+1)+k\theta^2\sigma(t)^2}}{\sigma(t)} \right) \right) \notag
\end{eqnarray}
for $\sigma(t)$ in (\ref{eq: sigmaneededinFRLWex}).  Also by (\ref{eq: nomatterFRLWNLSVphi-u}), (\ref{eq: dotsigmaforu=x^b}) and (\ref{eq: uprimesigmaforu=x^b}), we obtain potential
\begin{eqnarray}
V(\phi(t))
&=&-\frac{\Lambda}{\kappa}+\frac{B^2}{2(t-t_0)^2}\left(\frac{((d-1)b_0-1)}{(1+b_0)}+ \frac{k(d-1)(1+b_0)}{b_0 A^{2} (t-t_0)^{-2/(1+b_0)}}  \right)
 \ \ \  \ \ \ 
\end{eqnarray}
for $B^2\stackrel{def.}{=}\frac{(d-1)b_0}{(1+b_0)\kappa}$.

\noindent By taking $d=3$, $\Lambda=t_0=0$, $\theta=a_0(1+b_0)$ and identifying $b_0/(b_0+1)$ here with $n$ in \cite{EM}, we obtain example 4.4 of Ellis and Madsen for $0<n<1$.  Note that the above form of $\phi(t)$ is the integrated version of $\phi(t)$ in \cite{EM}.  Since $\phi(t)$ in (\ref{eq: phiforFRLWexEM44kneq0}) and has the property that 
\begin{equation}\dot\phi(t)^2=\frac{B^2}{(t-t_0)^2}\left( 1+\frac{k\theta^2(a_0A)^{2/b_0}}{b_0(b_0+1)}(t-t_0)^{2/(1+b_0)} \right),\end{equation}
we are in agreement with \cite{EM} for the above choice of $\theta$.  Also note that there is a typo in equation (45) of \cite{EM}, where one should multiply $V$ by $1/4$ to obtain the correct $V$ with a two appearing in the denominator instead of the numerator.  In contrast to example 3 in this thesis, this example generalizes the Ellis and Madsen example for nonzero curvature.

For $a_0=b_0=1$ and $t_0=0$, the solver was run with $u, u'$ and $\sigma$ perturbed by $.001$.  The graphs of $a(t)$ below show that the solution is unstable.  The absolute error grows three orders of magnitude over the graphed time interval.  Since $E-P(\sigma)$ is independent of $k$, the below graph is applicable to all values of the curvature $k=1, 0, -1$.

\begin{figure}[htbp]
\centering
\vspace{-.1in}
\caption{Instability of FRLW Example \theexample}
\vspace{-.1in}
\includegraphics[width=4in]{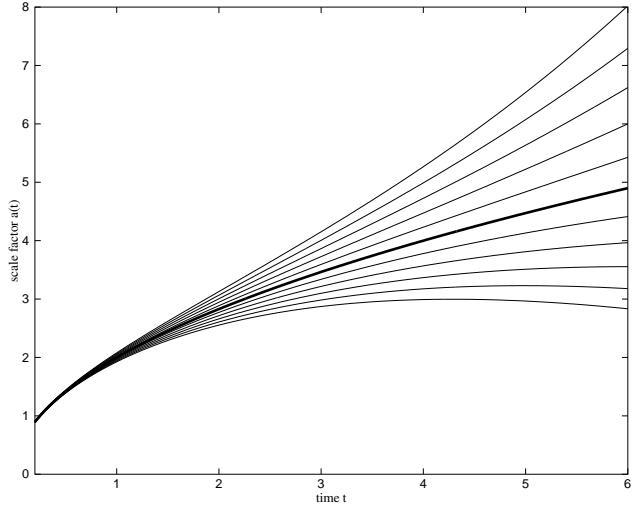}
\end{figure}

\end{example}

\break
\subsection{A nonlinear Schr\"odinger example}
\begin{example}
We consider equation (\ref{eq: FRLWNLS}) with $\uprho=\mathrm{p}=0$, $M=k=\theta=1$, $n=4$ and $D_1(\sigma)=D>0$ a constant.  For solution 3 in Table \ref{tb: exactNLS} with $a_0=\sqrt{\frac{d(d-1)}{2\kappa D}}b_0$, $c_0=1-2b_0^2$ and $b_0>0$ chosen such that $c_0>0$, we have $u(\sigma)=a_0 tanh(b_0\sigma)$ and $P(\sigma)=c_0$.  By (\ref{eq:  sigmaeqnforu=tanh}) - (\ref{eq: dotsigmausigmaforu=tanh}) we obtain $\sigma(t)=\frac{1}{b_0}Arcsinh\left(e^{a_0b_0(t-t_0)}\right)$ and 
\begin{equation}a(t)=\frac{1}{u(\sigma(t))}= \frac{ \sqrt{1+e^{2a_0b_0(t-t_0)}}   }{ a_0e^{a_0b_0(t-t_0)}  }\end{equation}
for $t_0\in\mathds{R}$.   Then by (\ref{eq: phiforu=tanhr=1}) with $\alpha_0=(d-1)/\kappa$, we get 
\begin{eqnarray}\phi(t)&=&\psi(\sigma(t))=\frac{\sqrt{(d-1)(1-2b_0^2)}}{b_0\sqrt{\kappa}}Arcsinh\left(e^{a_0b_0(t-t_0)}\right)+\beta_0\end{eqnarray}
for $\beta_0\in\mathds{R}$.  Finally, by (\ref{eq: FRLWNLSVphi-u}) and (\ref{eq: uprimesigmaforu=tanhr=1}) we obtain
\begin{eqnarray}
V(\phi(t))&=&\left[\frac{d(d-1)}{2\kappa}\left((u')^2+ u^2 \right)-\frac{1}{2}u^2(\psi ')^2-D u^{4} -\frac{\Lambda}{\kappa}\right]\circ\sigma(t).\notag\\
&=&\frac{a_0^2(d-1)}{2\kappa ( 1+e^{2a_0b_0(t-t_0)} )^2}\left[    \left(  (d-1)(1-b_0^2)+b_0^2  \right)e^{4a_0b_0(t-t_0)} \right.\notag\\
 &&\left. +    \left(d-1   + 2b_0^2    \right) 
  e^{2a_0b_0(t-t_0)}     + db_0^2 \right]   -\frac{\Lambda}{\kappa}\notag\\
\end{eqnarray}

For $a_0=1, b_0=1/2$ and $t_0=0$, the solver was run with $u, u'$ and $\sigma$ perturbed by $.001$.  The graphs of $a(t)$ below show that the solution is unstable.  The absolute error grows two orders of magnitude over the graphed time interval.

\begin{figure}[htbp]
\centering
\vspace{-.1in}
\caption{Instability of FRLW Example \theexample}
\vspace{-.1in}
\includegraphics[width=4in]{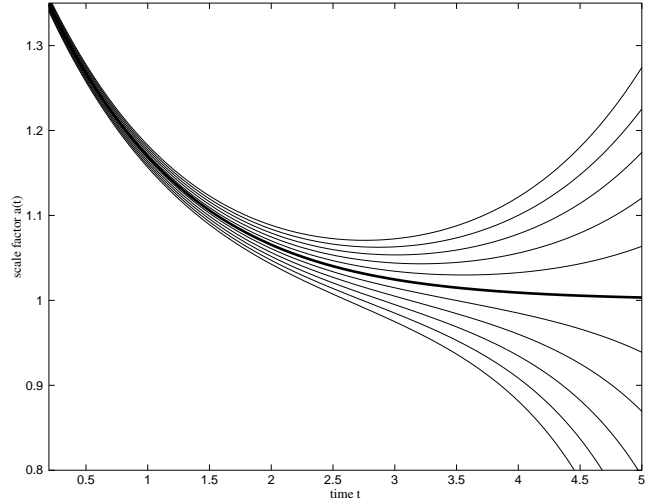}
\end{figure}

\end{example}

\break

\section{In terms of an Alternate Schr\"odinger-Type Equation}

\begin{thm}{$\left(\mbox{Apply Corollary \ref{cor: EFE-NLSAnonzeroEzero} with }\frac{G(t)}{a(t)^A}=-\frac{\kappa n_j D_j(t)}{d(d-1)a(t)^{n_j}}\right)$}\\
\label{thm: FRLWEFE-altNLS}
Suppose you are given a twice differentiable function $a(t)>0$, a once differentiable function $\phi(t)$, and also functions $D_i(t), \rho'(t), p'(t), V(x)$ which satisfy the Einstein equations $(i),(ii)$ in (\ref{eq:  FRLWEFEiii}) for some $k, n_1, \dots, n_{M}, \Lambda \in\mathds{R}, d\in\mathds{R}\backslash\{0,1\}, \kappa\in\mathds{R}\backslash\{0\}$ and $M\in\mathds{N}$.  Let $g(\sigma)$ denote the inverse of a function $\sigma(t)$ which satisfies
\begin{equation}\dot{\sigma}(t)=\frac{1}{\theta} a(t)^{-n_j/2}\label{eq: FRLWNLSaltdotsigma-a}\end{equation}
for some $\theta>0$ and where $j$ is some index for which $n_j\neq 0$.  Then the functions 
\begin{eqnarray}u(\sigma)&=&a(g(\sigma))^{-n_j/2}\label{eq: FRLWNLSaltu-a}\\
P(\sigma)&=&\frac{ n_j \kappa}{2(d-1)}\psi '(\sigma)^2\label{eq: FRLWNLSaltP-psi}\end{eqnarray}
solve the Schr\"odinger-type equation
\begin{equation}u''(\sigma)+\left[\frac{-\theta^2 n_j^2\kappa}{2d(d-1)}\mathrm{D}_j(\sigma) -P(\sigma)\right]u(\sigma)=\frac{-\theta^2 n_j k}{2u(\sigma)^{1-\frac{4}{n_j}}}+\displaystyle\sum_{\stackrel{1\leq i \leq M}{i\neq j}}\frac{\theta^2  n_jn_i \kappa \mathrm{D}_i(\sigma)}{2d(d-1) u(\sigma)^{1-2\frac{n_i}{n_j}}}\notag\end{equation}
\begin{equation} \ \ \ \ \ \  \ \ \ \  \ \ \ \ \ \  \ \ \ \ \ \ \ \ \ \ \ \ \ \ \ \ \ \ \ +\frac{\theta^2 n_j \kappa(\uprho(\sigma)+\mathrm{p}(\sigma))}{2(d-1) u(\sigma)}
\label{eq: FRLWNLSalt}\end{equation}
for 
\begin{equation}\psi(\sigma)=\phi(g(\sigma))\label{eq: FRLWNLSaltpsi-phi}\end{equation}
\begin{equation}\mathrm{D}_i(\sigma)=D_i(g(\sigma)), \ \  1\leq i\leq M \label{eq: FRLWNLSaltrmD-D}\end{equation}
and
\begin{equation}\uprho(\sigma)=\rho'(g(\sigma)), \ \mathrm{p}(\sigma)=p'(g(\sigma)). \label{eq: FRLWNLSaltuprho-rhormp-p}\end{equation}

Conversely, suppose you are given a twice differentiable function $u(\sigma)>0$, and also functions $P(\sigma), \mathrm{D}_{i}(\sigma)$ for $1\leq i\leq M, M\in\mathds{N}$ and $\uprho(\sigma), \mathrm{p}(\sigma)$ which solve (\ref{eq: FRLWNLSalt}) for some constants $\theta>0, k\in\mathds{R}, \kappa\in\mathds{R}\backslash\{0\},  d\in\mathds{R}\backslash\{0,1\}$ and $n_i\in\mathds{R}$ for $1\leq i \leq M$.  In order to construct functions which solve $(i),(ii)$, first find $\sigma(t), \psi(\sigma)$ which solve the differential equations
\begin{equation}\dot{\sigma}(t)=\frac{1}{\theta} u(\sigma(t))\qquad\mbox{ and }\qquad \psi '(\sigma)^2= \frac{2(d-1)}{ n_j \kappa} P(\sigma).\label{eq: FRLWNLSaltdotsigma-upsi-P}\end{equation}
Then the functions
\begin{eqnarray}a(t)&=&u(\sigma(t))^{-2/n_j}\label{eq: FRLWNLSalta-u}\\
\phi(t)&=&\psi(\sigma(t))\label{eq: FRLWNLSaltphi-psi}\end{eqnarray}
\begin{equation}D_{i}(t)=\mathrm{D}_i(\sigma(t)), 1\leq i \leq M, \label{eq: FRLWNLSaltD-rmD}\end{equation}
\begin{equation}\rho'(t)=\uprho(\sigma(t)), \ p'(t)=\mathrm{p}(\sigma(t))\label{eq: FRLWNLSaltrho-uprhop-rmp}\end{equation}
and\\
$V(\phi(t))$
\begin{equation}
=\left[\frac{d(d-1)}{2\kappa}\left(\frac{4}{n_j^2 \theta^2}(u')^2+\frac{k}{u^{-4/n_j}}\right)-\frac{1}{2\theta^2}u^2(\psi ')^2-\displaystyle\sum_{i=1}^{M}\frac{\mathrm{D}_i}{u^{-2n_i/n_j}}-\uprho-\frac{\Lambda}{\kappa}\right]\circ\sigma(t)
\label{eq: FRLWNLSaltVphi-u}
\end{equation}
satisfy the equations $(i),(ii)$.\end{thm}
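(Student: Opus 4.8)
The plan is to prove this in complete parallel with Theorem~\ref{thm: FRLWEFE-NLS}, applying Corollary~\ref{cor: EFE-NLSAnonzeroEzero} to the single combination $(ii)-(i)$ of the field equations in \eqref{eq: FRLWEFEiii}, but now choosing the $j$-th matter term rather than the curvature term to carry the nonzero power $A$. First I would record a dictionary table matching the constants and functions of Corollary~\ref{cor: EFE-NLSAnonzeroEzero} to the present setting: take $\varepsilon=\kappa/(d-1)$ and $A=n_j$, set $G(t)=-\kappa n_j D_j(t)/(d(d-1))$ so that the induced $E$ becomes the stated coefficient $-\theta^2 n_j^2\kappa\,\mathrm{D}_j/(2d(d-1))$ of $u$, and let the remaining sources play the role of the $G_i$: the curvature term with $A_i=2$, the matter terms with $i\neq j$ with $A_i=n_i$, and the fluid term $-\kappa(\rho'+p')/(d-1)$ with $A_i=0$. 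The exponents then follow at once from the corollary's relation $C_i=1-2A_i/A$, giving $1-4/n_j$, $1-2n_i/n_j$, and $1$ respectively, which are exactly the powers of $u$ appearing in \eqref{eq: FRLWNLSalt}.

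For the forward implication I would subtract $(ii)-(i)$ to obtain the same combination \eqref{eq: FRLWNLSiiminusi} used in Theorem~\ref{thm: FRLWEFE-NLS}, observe that under the substitutions above this is precisely the scale factor equation \eqref{eq: CEFEANONZERO} with $A=n_j\neq 0$, and then invoke the forward direction of Corollary~\ref{cor: EFE-NLSAnonzeroEzero} to conclude that the functions $u,P$ of \eqref{eq: FRLWNLSaltu-a}--\eqref{eq: FRLWNLSaltP-psi} solve \eqref{eq: FRLWNLSalt}; the matter and fluid functions transfer by their defining compositions \eqref{eq: FRLWNLSaltrmD-D}--\eqref{eq: FRLWNLSaltuprho-rhormp-p}.

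For the converse I would mirror the two-part argument of Theorem~\ref{thm: FRLWEFE-NLS}. After solving \eqref{eq: FRLWNLSaltdotsigma-upsi-P} for $\sigma(t)$ and $\psi(\sigma)$, I would differentiate $a(t)=u(\sigma(t))^{-2/n_j}$ in \eqref{eq: FRLWNLSalta-u} and $\phi(t)=\psi(\sigma(t))$ in \eqref{eq: FRLWNLSaltphi-psi} to get $H=-\tfrac{2}{n_j\theta}u'$ and $\dot\phi^2=\tfrac{1}{\theta^2}u^2(\psi')^2$; substituting these, together with $a^2=u^{-4/n_j}$ and $a^{n_i}=u^{-2n_i/n_j}$, into the defining formula \eqref{eq: FRLWNLSaltVphi-u} for $V$ shows by construction that equation $(i)$ holds. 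To obtain $(ii)$ I would note that the converse hypothesis of Corollary~\ref{cor: EFE-NLSAnonzeroEzero} is met with the same table, so the corollary returns the scale factor equation \eqref{eq: CEFEANONZERO}, i.e.\ the subtraction \eqref{eq: FRLWNLSiiminusi}; solving the rewritten $V$-relation for $\rho'$ and substituting into \eqref{eq: FRLWNLSiiminusi} then recovers $(ii)$.

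The only genuinely delicate point is the bookkeeping in the dictionary table. Because the $j$-th matter term is singled out to become the linear term $E(\sigma)u(\sigma)$, one must re-index the surviving sources carefully and track both the sign coming from $F_i=-\tfrac{1}{2}\theta^2 A\,G_i$ and the fractional exponents $1-4/n_j$ and $1-2n_i/n_j$; once the table is set up correctly, everything reduces to the routine differentiation-and-substitution already performed in the two preceding theorems.
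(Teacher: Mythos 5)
Your proposal is correct and follows essentially the same route as the paper: the paper likewise applies Corollary \ref{cor: EFE-NLSAnonzeroEzero} to the difference $(ii)-(i)$ with a substitution table assigning $A=n_j$, $G(t)=-\kappa n_j D_j(t)/(d(d-1))$, the curvature term to $A_j=2$, and the fluid term to $A_{M+1}=0$, then verifies $(i)$ by construction of $V$ and recovers $(ii)$ by substituting the solved-for $\rho'$ back into the regained subtraction. The dictionary, exponents, and signs you record match the paper's Table \ref{tb: FRLWNLSalt} exactly.
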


\begin{proof}
This proof will implement Corollary \ref{cor: EFE-NLSAnonzeroEzero} with constants and functions as indicated in the following table.

      \begin{table}[ht]
\centering
\caption{{ Corollary \ref{cor: EFE-NLSAnonzeroEzero} applied to FRLW, alternate}}\label{tb: FRLWNLSalt}
\vspace{.2in}
\begin{tabular}{r | l c r | l}
In Corollary & substitute & & In Corollary & substitute \\[4pt]
\hline
\raisebox{-5pt}{$\varepsilon $} & \raisebox{-5pt}{${\kappa}/{(d-1)}$}      &&    \raisebox{-5pt}{$E(\sigma)$} & \raisebox{-5pt}{$-\frac{\theta^2  n_j^2 \kappa}{2d(d-1)} \mathrm{D}_j(\sigma)$}\\[8pt]
                                       $G(t)$ & $\frac{-n_j\kappa}{d(d-1)}D_j(t) $    &&     $A$ & $\mbox{some }n_j\neq 0$ \\[8pt]
$G_i(t), 1\leq i\leq M, i\neq j$ & $\frac{-n_i \kappa}{d(d-1)}D_i(t) $     &&    $A_{i\neq j}$ &$n_i$ \\[8pt]
$G_j(t)$ & $k $     &&    $A_j$ &$2$ \\[8pt]
 $G_{M+1}(t)$ & $ \frac{-\kappa}{(d-1)}(\rho'(t)+p'(t))$  &&      $A_{M+1}$   &$0$\\[8pt]
   $F_i(\sigma), 1\leq i\leq M, i\neq j$&$\frac{\theta^2 n_j n_i \kappa }{2d(d-1)}\mathrm{D}_i(\sigma)$ &&  $C_{i\neq j}$&$1-2\frac{n_i}{n_j}$\\[8pt]
   $F_j(\sigma)$&$-\theta^2 n_jk/2$&&$C_j$&$1-\frac{4}{n_j}$\\[8pt]
$F_{M+1}(\sigma)$&$  \frac{\theta^2 n_j\kappa}{2(d-1)}(\uprho(\sigma)+\mathrm{p}(\sigma)) $ &&        $C_{M+1}$&  $1$ \\[6pt]
\hline
\end{tabular}
\end{table}

To prove the forward implication, we assume to be given functions which solve the Einstein field equations $(i)$ and $(ii)$ from (\ref{eq:  FRLWEFEiii}).  Subtracting equations $(ii)-(i)$, we see that
\begin{equation}\dot{H}(t)-\frac{k}{a(t)^2}=-\frac{\kappa}{(d-1)}\left[\dot{\phi}(t)^2+\displaystyle\sum_{i=1}^{M}\frac{n_i D_i(t)}{da(t)^{n_i}}+(\rho'(t)+p'(t))
\right].\label{eq: FRLWNLSaltiiminusi}\end{equation}
This shows that $a(t), \phi(t), D_i(t), \rho'(t)$ and $p'(t)$ satisfy the hypothesis of Corollary \ref{cor: EFE-NLSAnonzeroEzero}, applied with  constants $\varepsilon , N, A, A_1 \dots, A_{N}$ and functions $G(t), G_1(t) \dots,G_{N}(t)$ according to Table \ref{tb: FRLWNLSalt}.  Since $\sigma(t), u(\sigma), P(\sigma)$ and $\psi(\sigma)$ defined in (\ref{eq: FRLWNLSaltdotsigma-a}), (\ref{eq: FRLWNLSaltu-a}), (\ref{eq: FRLWNLSaltP-psi}) and (\ref{eq: FRLWNLSaltpsi-phi})  are equivalent to that in the forward implication of Corollary \ref{cor: EFE-NLSAnonzeroEzero}, by this theorem and by definitions (\ref{eq: FRLWNLSaltrmD-D}) and (\ref{eq: FRLWNLSaltuprho-rhormp-p}) of $\mathrm{D}_i(\sigma)$ and $\uprho(\sigma), \mathrm{p}(\sigma)$, the Schr\"odinger-type equation (\ref{eq: CNLSANONZERO}) holds for constants $C_1, \dots, C_{N}$ and functions $F_1(\sigma), \dots, F_{N}(\sigma)$ as indicated in Table \ref{tb: FRLWNLSalt}.  This proves the forward implication.

To prove the converse implication, we assume to be given functions which solve the Schr\"odinger-type equation (\ref{eq: FRLWNLSalt}) and we begin by showing that $(i)$ is satisfied.   By differentiating the definition (\ref{eq: FRLWNLSalta-u}) of $a(t)$ and using the definition  of $\sigma(t)$ in (\ref{eq: FRLWNLSaltdotsigma-upsi-P}), we obtain
\begin{eqnarray}\dot{a}(t)&=&-\frac{2}{n_j} u(\sigma(t))^{-2/n_j-1} u'(\sigma(t)) \dot\sigma(t)\notag\\
&=&-\frac{2}{n_j\theta} u(\sigma(t))^{-2/n_j} u'(\sigma(t)).\end{eqnarray}
Dividing by $a(t)$, we have that
\begin{equation}H(t)\stackrel{def.}{=}\frac{\dot{a}(t)}{a(t)}=-\frac{2}{n_j\theta}u'(\sigma(t)).\label{eq: FRLWNLSaltH-u}\end{equation}
Differentiating the definition (\ref{eq: FRLWNLSaltphi-psi}) of $\phi(t)$ and using definition in (\ref{eq: FRLWNLSaltdotsigma-upsi-P}) of  $\sigma(t)$, we have
\begin{equation}\dot{\phi}(t)=\psi '(\sigma(t))\dot{\sigma}(t)=\frac{1}{\theta}\psi '(\sigma(t))u(\sigma(t)).\label{eq: FRLWNLSaltdotphi-u}\end{equation}
Using (\ref{eq: FRLWNLSaltH-u}) and (\ref{eq: FRLWNLSaltdotphi-u}), and also the definitions (\ref{eq: FRLWNLSalta-u}), (\ref{eq: FRLWNLSaltD-rmD}) and (\ref{eq: FRLWNLSaltrho-uprhop-rmp}) of $a(t), D_i(t)$ and $\rho'(t), p'(t)$, the definition (\ref{eq: FRLWNLSaltVphi-u}) of $V\circ\phi$ can be written as
\begin{equation}
V(\phi(t))=\frac{d(d-1)}{2\kappa}\left(
H(t)^2+\frac{k}{a(t)^2}\right)-\frac{1}{2}\dot\phi(t)^2-\displaystyle\sum_{i=1}^{M}\frac{D_i(t)}{a(t)^{n_i}}-\rho'(t)-\frac{\Lambda}{\kappa}
\label{eq: FRLWNLSaltVphi-a}.\end{equation}
This shows that $(i)$ holds (that is, the definition of $V(\phi(t))$ was designed to be such that $(i)$ holds).

To conclude the proof we must also show that $(ii)$ holds.   In the converse direction the hypothesis of the converse of Corollary \ref{cor: EFE-NLSAnonzeroEzero} holds, applied with constants $N, C_1, \dots, C_{N}$ and functions $E(\sigma), F_1(\sigma), \dots, F_{N}(\sigma)$ as indicated in Table \ref{tb: FRLWNLSalt}.  Since  $\sigma(t), \psi(\sigma), a(t)$ and $\phi(t)$ defined in (\ref{eq: FRLWNLSaltdotsigma-upsi-P}), (\ref{eq: FRLWNLSalta-u}) and (\ref{eq: FRLWNLSaltphi-psi}) are consistent with the converse implication of Corollary \ref{cor: EFE-NLSAnonzeroEzero}, applied with $\varepsilon $ and $A$ as in Table \ref{tb: FRLWNLSalt}, by this corollary and by definitions (\ref{eq: FRLWNLSaltD-rmD}) and (\ref{eq: FRLWNLSaltrho-uprhop-rmp}) of $D_i(t)$ and $\rho'(t), p'(t)$ the  scale factor  equation (\ref{eq: CEFEANONZERO}) holds for constants $\varepsilon , A, A_1, \dots, A_{N}$ and functions $G(t), G_1(t),\dots,G_{N}(t)$ according to Table \ref{tb: FRLWNLSalt}.  That is, we have regained (\ref{eq: FRLWNLSaltiiminusi}) which shows that the subtraction of equations (ii)-(i) holds in the converse direction.  Now solving (\ref{eq: FRLWNLSaltVphi-a}) for $\rho'(t)$ and substituting this into (\ref{eq: FRLWNLSaltiiminusi}), we obtain (ii).  This proves the theorem.
\end{proof}

\break

\subsection{Reduction to linear Schr\"odinger: zero curvature}
To compute some examples, we take $k=0$ and $\rho'=p'=D_i=0$ for all $i\neq j$ and also $n_j=n$, $D_j=D>0$ so that Theorem \ref{thm: FRLWEFE-altNLS} shows that solving the Einstein equations 
\begin{equation}\frac{d}{2}H^2(t)\stackrel{(i)''''}{=}\frac{\kappa}{(d-1)} \left[\frac{1}{2}\dot\phi(t)^2+V(\phi(t))+\frac{D}{a(t)^{n}}\right]+\frac{\Lambda}{(d-1)}\end{equation}
\begin{eqnarray*}
\dot{H}(t)+\frac{d}{2}H(t)^2&\stackrel{(ii)''''}{=}&-\frac{\kappa}{(d-1)} \left[\frac{1}{2}\dot\phi(t)^2-V(\phi(t))+\frac{(n-d)D}{da(t)^{n}}\right]+\frac{\Lambda}{(d-1)}\qquad\qquad \end{eqnarray*}
is equivalent to solving the linear Schr\"odinger equation
\begin{equation}u''(\sigma)+\left[\frac{-\theta^2 n^2\kappa D}{2d(d-1)} -P(\sigma)\right]u(\sigma)=0\notag\label{eq: nocurvFRLWaltNLS}\end{equation}
for any constant $\theta>0$.  The solutions of $(i)'''',(ii)''''$ and (\ref{eq: nocurvFRLWaltNLS}) are related by 
\begin{equation}a(t)=\frac{1}{u(\sigma(t))^{2/n}}\quad\mbox{ and }\quad
\psi '(\sigma)^2=\frac{2(d-1)}{n\kappa}P(\sigma)\end{equation}
for $\phi(t)=\psi(\sigma(t))$ and 
\begin{equation}\dot{\sigma}(t)=\frac{1}{\theta a(t)^{n/2}}=\frac{1}{\theta} u(\sigma(t)).\label{eq: nocurvFRLWaltNLSdotsigmau}\end{equation}
Also in the converse direction, $V$ is taken to be
\begin{equation}
V(\phi(t))=\left[\frac{2d(d-1)(u')^2}{\kappa n^2 \theta^2} -\frac{1}{2\theta^2}u^2(\psi ')^2-Du^2-\frac{\Lambda}{\kappa}\right]\circ\sigma(t).
\label{eq: nocurvFRLWNLSaltVphi-u}
\end{equation}

We now refer to Appendix E for solutions of the linear Schr\"odinger equation (\ref{eq: nocurvFRLWaltNLS}).  We will map these solutions to exact solutions of Einstein's equations.  Since $E=\frac{-\theta^2 n^2\kappa D}{2d(d-1)}<0$, we only consider entries in Table \ref{tb: exactNLS} for which $E<0$.

\begin{example}
For $\theta=1$ and choice of constant $D=\frac{2d(d-1)}{n^2\kappa}$, we take solution 4 in Table \ref{tb: exactNLS} with $c_0=-1$ and $b_0=0$ so that we have $u(\sigma)=a_0e^{-\sigma}$, $P(\sigma)=0$ and $E=-1$.  By (\ref{eq: sigmaeqnforu=-expr=general}) - (\ref{eq: dotsigmausigmaforu=-expr=general}) with $r_0=1$ we obtain $\sigma(t)=\ln\left(a_0(t-t_0)\right)$ and 
\begin{equation}a(t)=\frac{1}{u(\sigma(t))^{2/n}}=(t-t_0)^{2/n}\end{equation}
for $t_0\in\mathds{R}$.  Since $P=0=\psi'(\sigma)$, the scalar field is constant $\psi(\sigma)=\psi_0\in\mathds{R}.$
 Finally, by (\ref{eq: nocurvFRLWNLSaltVphi-u}), (\ref{eq: uprimesigmaforu=-expr=general}) and (\ref{eq: dotsigmausigmaforu=-expr=general}), we obtain constant potential
\begin{eqnarray}
V(\phi(t))&=&\left[\frac{2d(d-1)(u')^2}{n^2 \kappa } -\frac{2d(d-1)}{n^2\kappa}u^2-\frac{\Lambda}{\kappa}\right]\circ\sigma(t)\notag\\
&=&-\frac{\Lambda}{\kappa}.
\end{eqnarray}

For $a_0=1, n=3$ and $t_0=0$, the solver was run with $u, u'$ and $\sigma$ perturbed by $.001$.  The graphs of $a(t)$ below show that the solution is unstable.  The absolute error grows at least two orders of magnitude over the graphed time interval.  Since $u$ is independent of $n$, $a(t)$ is unstable for all values of $n$.

\vspace{.1in}
\begin{figure}[htbp]
\centering
\caption{Instability of FRLW Example \theexample}
\includegraphics[width=4in]{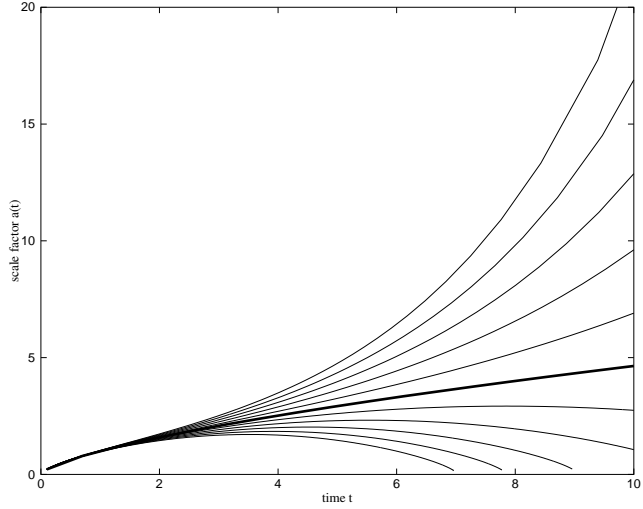}
\end{figure}
\end{example}

 \begin{example}
For $\theta=1$ and choice of constant $D=\frac{2d(d-1)}{n^2\kappa}$, we take solution 4 in Table \ref{tb: exactNLS} with $c_0=-1$ and $a_0,b_0>0$ so that we have $u(\sigma)=a_0e^{-\sigma}-b_0e^{\sigma}$, $P(\sigma) = 0$ and $E=-1$.  By (\ref{eq: sigmaeqnforu=linearcomboexpr=1}) - (\ref{eq: dotsigmausigmaforu=linearcomboexpr=1}) we obtain $\sigma(t)=\ln\left(\sqrt{\frac{a_0}{b_0}}tanh(\sqrt{a_0b_0}(t-t_0))\right)$ and 
\begin{equation}a(t)=\frac{1}{u(\sigma(t))^{2/n}}=\frac{1}{(2\sqrt{a_0b_0})^{2/n}} \ sinh^{2/n}(2\sqrt{a_0b_0}(t-t_0))\end{equation}
for $t_0\in\mathds{R}$.   Since $P=0=\psi'(\sigma)$, the scalar field is constant
\begin{equation}\psi(\sigma)=\psi_0\in\mathds{R}\end{equation}
Finally, by (\ref{eq: nocurvFRLWNLSaltVphi-u}), (\ref{eq: uprimesigmaforu=linearcomboexpr=1}) and (\ref{eq: dotsigmausigmaforu=linearcomboexpr=1}), we obtain constant potential
\begin{eqnarray}
V(\phi(t))&=&\left[\frac{2d(d-1)(u')^2}{n^2 \kappa } -\frac{2d(d-1)}{n^2\kappa}u^2-\frac{\Lambda}{\kappa}\right]\circ\sigma(t)\notag\\
&=&\frac{8d(d-1)a_0b_0}{n^2 \kappa }-\frac{\Lambda}{\kappa}
\end{eqnarray}
since $coth^2(x)-csch^2(x)=1$.

\end{example}

\begin{example}
For $\theta=1$ and choice of constant $D=2d(d-1)/n^2\kappa$, we take solution 5 in Table \ref{tb: exactNLS} with $c_0=-1$ and $b_0 =0$ so that we have $u(\sigma)=(a_0/\sigma)e^{-\sigma^2/2}$, $P(\sigma)=\sigma^2+2/\sigma^2$ and $E=-1$ for $a_0>0$.  By (\ref{eq: sigmaeqnforu=e^x^2/xr=1cnegative}) - (\ref{eq: dotsigmausigmaforu=e^x^2/xr=1cnegative}) we obtain $\sigma(t)=\sqrt{2\ln( a_0(t-t_0))}$ and 
\begin{eqnarray}
a(t)&=&\frac{1}{u(\sigma(t))^{2/n}}\notag\\
&=&\left(\sqrt{2}(t-t_0)\sqrt{\ln( a_0(t-t_0))}\right)^{2/n}\notag\\
&=&\left(2(t-t_0)^2\ln( a_0(t-t_0))\right)^{1/n}\end{eqnarray}
for $t>t_0$.  Then by (\ref{eq:  phiforu=xe^x^2r=1cnegative}) with $\alpha_0=2(d-1)/n\kappa$, we obtain scalar field
\begin{eqnarray}
\phi(t)&=&\psi(\sigma(t))\notag\\
&=&\sqrt{ \frac{ (d-1)}{n\kappa} } \left(\sqrt{2 \ln^2( a_0(t-t_0)) +1} + \ln\left[2\ln( a_0(t-t_0))\right]\right.\notag\\
&& \left.- \ln\left[4+ 4\sqrt{2 \ln^2( a_0(t-t_0))   + 1} \right] \right)+\beta_0
\end{eqnarray}
for $\beta_0\in\mathds{R}$.  Finally, by (\ref{eq:  nomatterFRLWNLSVphi-u}), (\ref{eq: uprimesigmaforu=e^x^2/xr=1cnegative}) and (\ref{eq: dotsigmausigmaforu=e^x^2/xr=1cnegative}), we have that
\begin{eqnarray}
V(\phi(t))&=&\left[\frac{2d(d-1)}{n^2\kappa}(u')^2-\frac{1}{2}u^2(\psi ')^2-Du^2-\frac{\Lambda}{\kappa}\right]\circ\sigma(t)\notag\\
&=&\frac{(d-1)}{2n\kappa(t-t_0)^2 \ln^2(a_0(t-t_0))} \cdot\notag\\
&& \quad \left(-2\ln^2(a_0(t-t_0))-1\right.\notag\\
&&\quad \left.+\frac{d}{n}\left( \left( 2\ln( a_0(t-t_0))+1\right)^2 -2\ln(a_0(t-t_0)) \right) \right)-\frac{\Lambda}{\kappa}\notag\\
\end{eqnarray}

For $a_0=1, n=3$ and $t_0=0$, the solver was run with $u, u'$ and $\sigma$ perturbed by $.001$.  The graphs of $a(t)$ below show that the solution is unstable.  The absolute error grows three orders of magnitude over the graphed time interval.  Since $u$ is independent of $n$, $a(t)$ is unstable for all values of $n$.

\begin{figure}[htbp]
\centering
\vspace{-.1in}
\caption{Instability of FRLW Example \theexample}
\vspace{-.1in}
\includegraphics[width=4in]{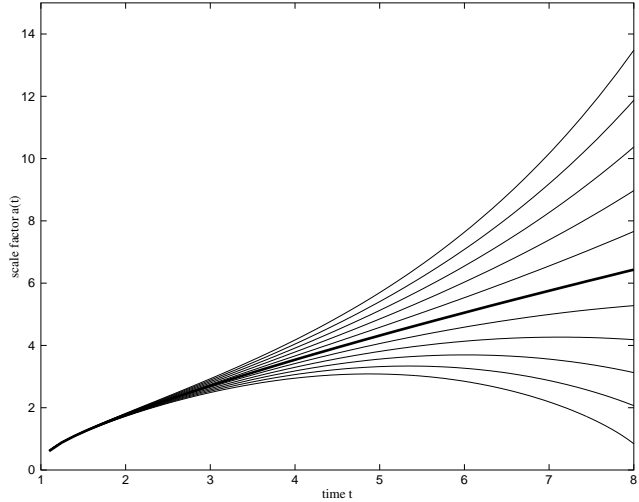}
\end{figure}

\end{example}
\break
\begin{example}
For $C\stackrel{def.}{=} \frac{n^2\kappa D}{2d(d-1)} $ and $\theta=1$, we take solution 6 in Table \ref{tb: exactNLS} with $c_0=2b_0^2-C$ so that we have $u(\sigma) = - a_0\cosh^2(b_0\sigma)$, $P(\sigma)=2b_0^2\tanh^2(b_0\sigma)+c_0=2b_0^2\tanh^2(b_0\sigma)+2b_0^2-C$ and $E=-C$.  By (\ref{eq: sigmaeqnforu=cosh^2r=1}) - (\ref{eq: dotsigmaforu=cosh^2r=1}) we obtain $\sigma(t)=\frac{-1}{b_0}Arctanh\left(a_0b_0(t-t_0)\right)$ and 
\begin{equation}a(t)=\frac{1}{u(\sigma(t))^{2/n}}=\left(\frac{1}{a_0}\left( a_0^2b_0^2(t-t_0)^2 - 1\right)\right)^{2/n}
\end{equation}
for $t_0\in\mathds{R}$.  Then by (\ref{eq: phiforu=cosh^2r=1}) with $\alpha_0=2(d-1)/n\kappa$, the scalar field is
\begin{eqnarray}
\phi(t)&=&\psi(\sigma(t)) \notag\\
&=& \pm\left( \sqrt{\frac{4(d-1)}{n\kappa}} \ln\left[ 2 \left(   \sqrt{  2 b_0^4  a_0^2(t-t_0)^2 +2b_0^2- C }  -  \sqrt{2} b_0^2a_0(t-t_0)\right)\right]\right.\notag\\
&& \quad \left.+ \frac{1}{b_0}\sqrt{\frac{2(d-1)}{n\kappa}} \sqrt{4b_0^2-C} Arctanh\left[  \frac{ \sqrt{ 4b_0^2 - C} a_0b_0(t-t_0) }{ \sqrt{ 2b_0^4a_0^2(t-t_0)^2+2b_0^2-C} } \right]\right) + \beta_0 \notag
\label{eq: phiforFRLWNLSposcurvu=cosh^2r=1andc=0}\end{eqnarray}
for $\beta_0\in\mathds{R}$.  Finally, by (\ref{eq:  nomatterFRLWNLSVphi-u}), (\ref{eq: dotsigmaforu=cosh^2r=1}) and (\ref{eq: uprimesigmaforu=cosh^2r=1}), we have that
\begin{eqnarray}
V(\phi(t))&=&\left[\frac{2d(d-1)(u')^2}{\kappa n^2 } -\frac{1}{2}u^2(\psi ')^2-Du^2-\frac{\Lambda}{\kappa}\right]\circ\sigma(t).\notag\\
&=&\frac{a_0^2(d-1)\left[  (4d - n) 2 a_0^2   b_0^4 (t-t_0)^2 -2nb_0^2+ C(n -2d)  \right]}{\kappa n^2 (a_0^2b_0^2(t-t_0)^2-1)^2}    -\frac{\Lambda}{\kappa}.\notag\\
\end{eqnarray}
One can compare this to the solutions in \cite{Lima} and in section 5 of \cite{GarcCatCamp}.
\end{example}

\subsection{A nonlinear Schr\"odinger example}

\begin{example}
For $D_j=\uprho=\mathrm{p}=0$ and $D_i=0$ for all $i$, we take  $n_j=4, \theta=1$ and positive curvature $k=1$.  We use solution 2 in Table \ref{tb: exactNLS} with $a_0=1/b_0^2$ and $b_0>0$ so that we have $u(\sigma)=\frac{1}{b_0^2}\cos^2(b_0\sigma)$.  Using the second potential for solution 2 in the table, we have $P(\sigma)=4b_0^2\tan^2(b_0\sigma)$ and $F_1= - 2$.  By (\ref{eq: sigmaeqnforu=cos^2r=1})-(\ref{eq: dotsigmaforu=cos^2r=1}) we have $\sigma(t)=\frac{1}{b_0}Arctan\left(\frac{\theta}{b_0}(t-t_0)\right)$ and 
\begin{equation}a(t)=\frac{1}{u(\sigma(t))^{1/2}}=\sqrt{b_0^2+(t-t_0)^2}\end{equation}
for $t>t_0\in\mathds{R}$.  Then by (\ref{eq: phiforu=cos^2P=tan}) with $\alpha_0=(d-1)/2\kappa$, we obtain scalar field
\begin{eqnarray}
\phi(t)&=&\sqrt{\frac{(d-1)}{2\kappa}}\ln\left(  1+ \frac{1}{b_0^2}(t-t_0)^2 \right)+\beta_0
\end{eqnarray}
for $\beta_0\in\mathds{R}$.  Finally, by (\ref{eq: FRLWNLSaltVphi-u}), (\ref{eq: dotsigmaforu=cos^2r=1}) and (\ref{eq: uprimesigmaforu=cos^2r=1}) we obtain
\begin{eqnarray}
V(\phi(t))
&=&\left[\frac{d(d-1)}{2\kappa}\left(\frac{1}{4}(u')^2+u\right)-\frac{1}{2}u^2(\psi ')^2-\frac{\Lambda}{\kappa}\right]\circ\sigma(t)\notag\\
&=&\frac{(d-1)}{2\kappa}\left( \frac{db_0^2+2(d-1)(t-t_0)^2}{(b_0^2+(t-t_0)^2)^2}\right)-\frac{\Lambda}{\kappa}.\notag\\
\end{eqnarray}
Composing $V(\phi(t))$ with the inverse
\begin{equation}\phi^{-1}(w)= b_0\sqrt{e^{\sqrt{2\kappa/(d-1)}(w-\beta_0)}-1} +t_0\end{equation} 
for $w\geq \beta_0$, we obtain the potential
\begin{equation}V(w)
=C_1e^{-\sqrt{2\kappa/(d-1)}w}-C_2e^{-2\sqrt{2\kappa/(d-1)}w}-\frac{\Lambda}{\kappa}
\end{equation}
for constants
\begin{equation}C_1=\frac{(d-1)^2}{\kappa b_0^2}e^{\sqrt{2\kappa/(d-1)}\beta_0} \ \ \mbox{ and } \ \ C_2=\frac{(d-1)(d-2)}{2\kappa b_0^2}e^{2\sqrt{2\kappa/(d-1)}\beta_0}.\end{equation}
By taking $d=3, t_0=0$, and identifying $a_0, \kappa$ and $\beta_0$ here with $b_0, K^2$ and $\phi_0$ respectively in the Ozer and Taha paper, we obtain the string-inspired solution I of \cite{OT}.  One can check that the conditions on the constants $C_1, C_2$ in \cite{OT} (with $d=3$) agree with the example here since we have
\begin{equation}b_0^2=\frac{2(d-1)^3C_2}{(d-2)\kappa C_1^2} \ \ \mbox{ and } \ \ \beta_0=\sqrt{\frac{(d-1)}{2\kappa}}\ln\left(\frac{2(d-1)C_2}{(d-2)C_1}\right).\end{equation}

For $b_0=1$ and $t_0=0$, the solver was run with $u, u'$ and $\sigma$ perturbed by $.001$.  The graphs of $a(t)$ below show that the solution is unstable.  The absolute error grows by three orders of magnitude over the graphed time interval.

\begin{figure}[htbp]
\centering
\vspace{-.1in}
\caption{Instability of FRLW Example \theexample}
\vspace{-.1in}
\includegraphics[width=4in]{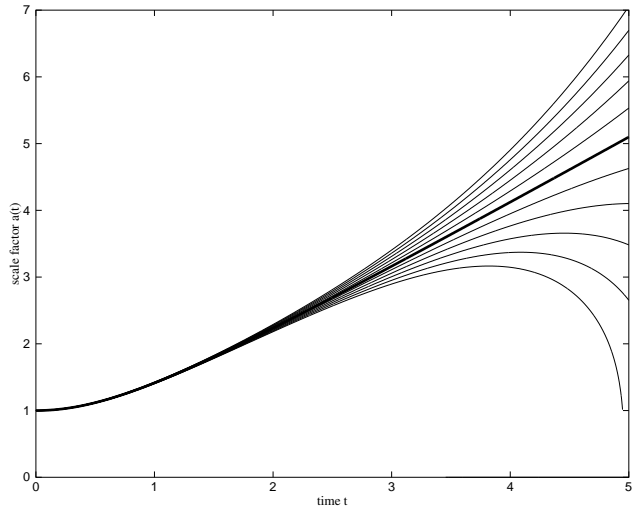}
\end{figure}

\end{example}

\chapter{Reformulations of a Bianchi I model}
\label{ch: BI}

For the homogeneous, anisotropic Bianchi I metric
\begin{equation} ds^2=-dt^2+X_1^2(t)dx_1^2+\cdots+X_{d}^2(t)dx_{d}^2\label{eq: BImetric}
\end{equation}
in a $d+1$-dimensional spacetime for $d\neq 0,1$, the nonzero Einstein equations $g^{ij}G_{ij}=-\kappa g^{ij}T_{ij}+\Lambda$ are
\begin{equation}
\displaystyle\sum_{l< k}H_lH_k
\stackrel{(I_0)}{=}
\kappa\left[\frac{1}{2}\dot\phi^2+V\circ\phi+\rho\right]+\Lambda\label{eq: BIEFEI1Id}\end{equation}
\begin{eqnarray}\displaystyle\sum_{l\neq 1}(\dot{H}_l+H_l^2)+\displaystyle\sum_{\stackrel{l < k}{l,k\neq 1}}H_lH_k&\stackrel{(I_1)}{=}&-\kappa\left[\frac{1}{2}\dot\phi^2-V\circ\phi+ p\right]+\Lambda\notag\\
&\vdots&\notag\\
\displaystyle\sum_{l\neq i}(\dot{H}_l+H_l^2)+\displaystyle\sum_{\stackrel{l < k}{l,k\neq i}}H_lH_k&\stackrel{(I_i)}{=}&-\kappa\left[\frac{1}{2}\dot\phi^2-V\circ\phi+ p\right]+\Lambda\notag\\
&\vdots&\notag\\
\displaystyle\sum_{l\neq d}(\dot{H}_l+H_l^2)+\displaystyle\sum_{\stackrel{l < k}{l,k\neq d}}H_lH_k&\stackrel{(I_d)}{=}&-\kappa\left[\frac{1}{2}\dot\phi^2-V\circ\phi+ p\right]+\Lambda\notag\end{eqnarray}
where $H_l(t)\stackrel{def.}{=}\dot{a}_l/a_l$ and $i,l,k\in\{1,\dots,d\}$.

\section{In terms of a Generalized EMP}
\begin{thm}\label{thm: BIEMP}
Suppose you are given twice differentiable functions $X_1(t), \dots, X_d(t)>0$, a once differentiable function $\phi(t)$, and also functions $\rho(t), p(t), V(x)$ which satisfy the Einstein equations $(I_0),\dots,(I_d)$ in (\ref{eq: BIEFEI1Id}) for some $\Lambda\in\mathds{R}, d\in\mathds{N}\backslash\{0,1\}, \kappa\in\mathds{R}\backslash\{0\}$.  Denote 
\begin{equation}R(t)\stackrel{def.}{=} \left(X_1(t)\cdots X_d(t)\right)^\nu\label{eq: BIEMPR-X}\end{equation}
for some $\nu\neq 0$.  If $f(\tau)$ is the inverse of a function $\tau(t)$ which satisfies
\begin{equation}\dot{\tau}(t)=\theta R(t)^q\label{eq: BIEMPdottau-X}\end{equation}
for some constants $\theta> 0$ and $q\neq 0$, then
\begin{equation}Y(\tau)= R(f(\tau))^q  \qquad\mbox{ and }\qquad Q(\tau)= \frac{q\nu d\kappa}{(d-1)} \varphi'(\tau)^2\label{eq: BIEMPY-aQ-varphi}\end{equation}
solve the generalized EMP equation
\begin{equation}Y''(\tau)+Q(\tau)Y(\tau)=\frac{-2q\nu d\kappa D}{\theta^2(d-1) Y(\tau)^{(2+q\nu)/q\nu}}-\frac{q\nu d\kappa\left(\varrho(\tau)+\textup{\textlhookp}(\tau)\right)}{\theta^2(d-1)Y(\tau)}\label{eq: BIEMP}\end{equation}
for
\begin{equation}\varphi(\tau)=\phi(f(\tau))\label{eq: BIEMPvarphi-phi}\end{equation}
\begin{equation}\varrho(\tau)=\rho(f(\tau)), \ \textup{\textlhookp}(\tau)=p(f(\tau)) \label{eq: BIEMPvarrho-rhohookp-p}\end{equation}
and where 
\begin{equation}D\stackrel{def.}{=}\frac{1}{2d\kappa}X_1^2X_2^2\cdots X_d^2\left(\displaystyle\sum_{l<k}\eta_{lk}^2\right)\label{eq: BIEMPDconstant-X}\end{equation}
is a constant for
\begin{equation}\eta_{lk}\stackrel{def.}{=}H_l-H_k, \mbox{\small \    $l\neq k, l, \ k\in\{1,\dots, d\}$.}\label{eq: BIEMPeta-X}\end{equation}

Conversely, suppose you are given a twice differentiable function $Y(\tau)>0$, a continuous function $Q(\tau)$, and also functions  $\varrho(\tau), \textup{\textlhookp}(\tau)$ which solve (\ref{eq: BIEMP}) for some constants $\theta>0$ and $q, \nu, \kappa \in\mathds{R}\backslash\{0\}, k, D\in\mathds{R}, d\in\mathds{N}\backslash\{0,1\}$.  In order to construct functions which solve $(I_0),\dots, (I_d)$, first find $\tau(t), \varphi(\tau)$ which solve the differential equations
\begin{equation}\dot{\tau}(t)=\theta Y(\tau(t))\qquad\mbox{ and }\qquad \varphi '(\tau)^2= \frac{(d-1)}{q\nu d\kappa} Q(\tau).\label{eq: BIEMPdottau-Yvarphi-Q}\end{equation}
Next find a function $\sigma(t)$ such that 
\begin{equation}\dot\sigma(t)=\frac{1}{\dot\tau(t)^{1/q\nu}}\label{eq: BIEMPdotsigma-dottau}\end{equation}
and let 
\begin{equation}R(t)=Y(\tau(t))^{1/q}\qquad\qquad\alpha_l(t)\stackrel{def.}{=}c_l\sigma(t), \ \mbox{\small $l\in\{1,\dots,d\}$}\label{eq: BIEMPR-Yalpha-sigma}\end{equation} 
where $c_l$ are any constants for which both
\begin{equation}\displaystyle\sum_{l=1}^d c_l=0\qquad\mbox{ and }\qquad\qquad\displaystyle\sum_{l<k}c_lc_k=-\theta^{2/q\nu} D\kappa.\label{eq: BIEMPccondition}\end{equation}
Then the functions
\begin{equation}X_l(t)=R(t)^{1/\nu d}e^{\alpha_l(t)}\label{eq: BIEMPX-Ralpha}\end{equation}
\begin{equation}\phi(t)=\varphi(\tau(t))\label{eq:   BIEMPphi-varphi}\end{equation}
\begin{equation}\rho(t)=\varrho(\tau(t)), \ p(t)=\textup{\textlhookp}(\tau(t))\label{eq: BIEMPrho-varrhop-hookp}\end{equation}
and
\begin{equation}
V(\phi(t))=\left[\frac{(d-1)\theta^2}{2\nu^2 d\kappa q^2}(Y')^2-\frac{D}{Y^{2/q\nu}}-\frac{\theta^2}{2}Y^2(\varphi ')^2-\varrho-\frac{\Lambda}{\kappa}\right]\circ\tau(t)\label{eq: BIEMPVphi-Y}
\end{equation}
satisfy the Einstein equations $(I_0),\dots,(I_d)$.
\end{thm}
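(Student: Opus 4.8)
The plan is to reduce the system $(I_0),\dots,(I_d)$ to a single application of Theorem \ref{thm: EFE-EMP}, taking the \emph{volume variable} $R(t)=(X_1\cdots X_d)^\nu$ in the role of the scale factor $a(t)$ and $H=\dot R/R=\nu\sum_l H_l$ in the role of the Hubble function. The substitution identifies $\delta=0$ and $\varepsilon=\nu d\kappa/(d-1)$; on the Einstein side the data are $G_0=-2\nu d\kappa D/(d-1)$ with $A_0=2/\nu$ and $G_1=-\nu d\kappa(\rho+p)/(d-1)$ with $A_1=0$, so that correspondingly $B_0=(2+q\nu)/(q\nu)$, $B_1=1$, and $\lambda_0,\lambda_1$ are exactly the coefficients displayed in (\ref{eq: BIEMP}). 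With these identifications the claimed formulas (\ref{eq: BIEMPY-aQ-varphi}), (\ref{eq: BIEMPdottau-X}) and the power $(2+q\nu)/(q\nu)$ are precisely what Theorem \ref{thm: EFE-EMP} produces, so the whole result will follow once (a) the relevant linear combination of $(I_0),\dots,(I_d)$ is shown to collapse to a scale factor equation (\ref{eq: AEFE}) for $R$, and (b) the quantity $D$ in (\ref{eq: BIEMPDconstant-X}) is shown to be genuinely constant.

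Two algebraic facts drive everything. First, writing $S=\sum_l H_l$ and $\mathcal A=\sum_{l<k}\eta_{lk}^2$ with $\eta_{lk}$ as in (\ref{eq: BIEMPeta-X}), a direct expansion gives the identity $\sum_{l<k}H_lH_k=\big((d-1)S^2-\mathcal A\big)/(2d)$. Second, subtracting $(I_i)-(I_j)$ annihilates the common right-hand side and, after cancellation, reduces to $\dot\eta_{ij}+\eta_{ij}S=0$; since $S=\tfrac{d}{dt}\ln(X_1\cdots X_d)$ this integrates to show that $\eta_{ij}\,X_1\cdots X_d$ is constant for every pair, whence $\mathcal A\,(X_1\cdots X_d)^2$ and therefore $D$ are constant. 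For the forward implication I then form the combination $\tfrac{1}{d-1}\sum_{i=1}^d(I_i)-\tfrac{d}{d-1}(I_0)$: the coefficient of $V$ is $d\kappa/(d-1)$ in both pieces and cancels, and using the two facts together with $\mathcal A=2d\kappa D/R^{2/\nu}$ (from (\ref{eq: BIEMPDconstant-X}) and $X_1\cdots X_d=R^{1/\nu}$) the left side collapses to $\dot H/\nu+\mathcal A/(d-1)$. Multiplying by $\nu$ yields exactly the scale factor equation (\ref{eq: AEFE}) with the data above, and Theorem \ref{thm: EFE-EMP} then delivers (\ref{eq: BIEMP}) with $Y=R^q$ and $Q=q\varepsilon\varphi'^2$ as in (\ref{eq: BIEMPY-aQ-varphi}).

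For the converse I construct $\tau,\varphi$ from (\ref{eq: BIEMPdottau-Yvarphi-Q}), then $\sigma$ from (\ref{eq: BIEMPdotsigma-dottau}), set $R=Y(\tau)^{1/q}$ and $X_l=R^{1/\nu d}e^{c_l\sigma}$ as in (\ref{eq: BIEMPX-Ralpha}). The ansatz gives $H_l=H/(\nu d)+c_l\dot\sigma$, so the first condition $\sum_l c_l=0$ in (\ref{eq: BIEMPccondition}) returns $\sum_l H_l=H/\nu$ consistent with the definition of $R$, and $\eta_{lk}=(c_l-c_k)\dot\sigma$; since (\ref{eq: BIEMPdotsigma-dottau}) together with $\dot\tau=\theta Y$ gives $\dot\sigma=\theta^{-1/q\nu}R^{-1/\nu}$, each $\eta_{lk}$ is proportional to $1/(X_1\cdots X_d)$, hence $\dot\eta_{lk}+\eta_{lk}S=0$ and every difference $(I_i)-(I_j)$ holds. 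The second condition in (\ref{eq: BIEMPccondition}) is exactly what makes $\tfrac{1}{2d\kappa}(X_1\cdots X_d)^2\mathcal A$ equal the prescribed constant $D$. The definition (\ref{eq: BIEMPVphi-Y}) of $V$ is, after substituting $H=\tfrac{\theta}{q}Y'$, $\dot\phi=\theta\varphi'Y$ and the identity for $\sum_{l<k}H_lH_k$, precisely a restatement of $(I_0)$, so $(I_0)$ holds by design. Applying the converse of Theorem \ref{thm: EFE-EMP} recovers the scale factor equation for $R$, i.e. the combination $\tfrac{1}{d-1}\sum_i(I_i)-\tfrac{d}{d-1}(I_0)$; combining this with the already-established $(I_0)$ shows $\sum_{i=1}^d(I_i)$ equals $d$ times the common right-hand side, and since all $(I_i)$ share a single left-hand side by step (b), each individual $(I_i)$ follows.

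The main obstacle is bookkeeping peculiar to the anisotropic model rather than any single hard estimate: one must isolate the linear combination that both eliminates $V$ and reconstitutes the curvature-like term $\mathcal A=2d\kappa D/R^{2/\nu}$, and, in the converse, verify that the two constraints (\ref{eq: BIEMPccondition}) on the $c_l$ reproduce the exact anisotropy demanded by the pairwise-difference equations while simultaneously matching the constant $D$ fixed by the EMP. Passing from the single recovered scale factor equation back to all $d+1$ individual Einstein equations—via the pairwise differences plus the constraint $(I_0)$—is the step most in need of care.
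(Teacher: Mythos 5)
Your proposal is correct and follows essentially the same route as the paper: the same linear combination $d(I_0)-\sum_{i=1}^d(I_i)$ (yours is just the scalar multiple $-\tfrac{1}{d-1}$ of it), the same constancy argument for $D$ via $\dot\eta_{ij}+\eta_{ij}\sum_l H_l=0$ and the lemma that $\eta_{ij}X_1\cdots X_d$ is constant, and the same reduction to Theorem \ref{thm: EFE-EMP} with identical substitution data ($\delta=0$, $\varepsilon=\nu d\kappa/(d-1)$, $A_0=2/\nu$, $A_1=0$, etc.). The one departure is the final step of the converse, where you recover each individual $(I_i)$ from the validity of $\sum_i(I_i)$ together with the equality of all the left-hand sides (guaranteed by the pairwise-difference relations for the constructed $X_l$), rather than computing each left-hand side explicitly in terms of $H_R$ as the paper does in (\ref{eq: BIEMPddotX/X-Halpha})--(\ref{eq: BIEMPdotXldotXk/XlXk-Ralpha}); this averaging shortcut is valid and somewhat cleaner, but it rests on the same structural identities for the ansatz $X_l=R^{1/\nu d}e^{c_l\sigma}$.
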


\begin{proof}
This proof will implement Theorem \ref{thm: EFE-EMP} with constants and functions as indicated in the following table.

   \begin{table}[ht]
\centering
\caption{{ Theorem \ref{thm: EFE-EMP} applied to Bianchi I}}\label{tb: BIEMP}
\vspace{.2in}
\begin{tabular}{r | l c r | l}
In Theorem & substitute & & In Theorem& substitute \\[4pt]
\hline
\raisebox{-5pt}{$a(t)$} & \raisebox{-5pt}{$R(t)$} && \raisebox{-5pt}{$N$ }& \raisebox{-5pt}{$1$}\\[8pt]
$\delta  $ &$0$      &&    $\varepsilon $ & ${\nu d\kappa}/{(d-1)}$\\[8pt]
$G_0(t)$ & $\mbox{constant } -2\nu d\kappa D /(d-1)$    &&     $A_0$ & $2/\nu$ \\[8pt]
 $G_{1}(t)$ & $ \frac{-\nu d\kappa}{(d-1)}(\rho(t)+p(t))$  &&      $A_{1}$   &$0$\\[8pt]
 $ \lambda_0(\tau)$ &constant ${-2q\nu d \kappa D}/{\theta^2(d-1)}$       &&      $B_0$ &$(2+q\nu)/q\nu$ \\[8pt] 
$\lambda_{1}(\tau)$&$\frac{-q\nu d\kappa}{\theta^2(d-1)}(\varrho(\tau)+\textup{\textlhookp}(\tau))$ &&        $B_{1}$&  $1$ \\[6pt]
\hline
\end{tabular}
\end{table}

To prove the forward implication, we assume to be given functions which solve the Einstein field equations $(I_0),\dots,(I_d)$.  Forming the linear combination $d(I_0)-\displaystyle\sum_{i=1}^d(I_i)$ of Einstein's equations, we obtain
\begin{equation}
d\displaystyle\sum_{l < k}H_lH_k-\displaystyle\sum_{i=1}^d \displaystyle\sum_{l\neq i}(\dot{H}_l+H_l^2)-\displaystyle\sum_{i=1}^d\displaystyle\sum_{\stackrel{l<k}{l,k\neq i}}H_lH_k=d\kappa\left[\dot\phi^2+ \left(\rho+ p\right)\right]\label{eq: BIEMPdI1minusIi}
\end{equation}
where $l, k\in\{1, \dots, d\}$.  The second double sum on the left-hand side of (\ref{eq: BIEMPdI1minusIi}) contains the quantity $(\dot{H}_l+H_l^2) \ \ $  $(d-1)$-times for any fixed $l$, and the third double sum contains the quantity $H_lH_k \ \ $ $(d-2)$-times for any fixed $l,k$ pair with $l<k$ so that we have
\begin{equation}d\displaystyle\sum_{l<k}H_lH_k-(d-1)\displaystyle\sum_{l=1}^d(\dot{H}_l+H_l^2)-(d-2)\displaystyle\sum_{l<k}H_lH_k=d\kappa\left[\dot\phi^2+ \left(\rho+p\right)\right].\end{equation}
Collecting the first and third sums gives the equation
\begin{equation}2\displaystyle\sum_{l<k}H_lH_k-(d-1)\displaystyle\sum_{l=1}^d(\dot{H}_l+H_l^2)=d\kappa\left[\dot\phi^2+\left(\rho+p\right)\right].\label{eq: BIEMPI1minusIicollected}\end{equation}
Using the definition (\ref{eq: BIEMPR-X}) of $R(t)$, we define
\begin{equation}H_R\stackrel{def.}{=}\frac{\dot{R}}{R}=\frac{\nu \left(X_1\cdots X_d\right)^{\nu-1}\left(\dot{X}_1X_2\cdots X_d + \cdots + X_1X_2\cdots \dot{X}_d\right)}{\left(X_1\cdots X_d\right)^\nu}=\nu\displaystyle\sum_{l=1}^d H_l.\label{eq: BIEMPH-X}\end{equation}
Differentiating $H_R$ shows that
\begin{equation}\dot{H}_R=\nu\displaystyle\sum_{l=1}^d\dot{H}_l,\label{eq: BIEMPdotH-X}\end{equation}
therefore (\ref{eq: BIEMPI1minusIicollected}) can be written as
\begin{equation}2\displaystyle\sum_{l<k}H_lH_k-(d-1)\left(\frac{1}{\nu}\dot{H}_R+\displaystyle\sum_{l=1}^dH_l^2
\right)=d\kappa\left[\dot\phi^2+\left(\rho+p\right)\right].\end{equation}
Multiplying this by $\frac{-\nu}{(d-1)}$ and rearranging, we find that
\begin{equation}\dot{H}_R+\frac{\nu}{(d-1)}\left((d-1)\displaystyle\sum_{l=1}^dH_l^2-2\displaystyle\sum_{l<k}H_lH_k\right)=\frac{-\nu d\kappa}{(d-1)}\left[\dot{\phi}^2+\left(\rho+p\right)
\right].\label{eq: BIEMPI1minusIiwithdotH}\end{equation}
Using the definition (\ref{eq: BIEMPeta-X}) of the quantities $\eta_{lk}$, we have that 
\begin{equation}
\displaystyle\sum_{l<k}\eta_{lk}^2=\displaystyle\sum_{l<k}\left(H_l^2-2H_lH_k+H_k^2\right).\label{eq: BIEMPetalksqrddefn}\end{equation}
The first and last terms on the right-hand side of (\ref{eq: BIEMPetalksqrddefn}) sum to
\begin{eqnarray}
\displaystyle\sum_{l<k}\left(H_l^2+H_k^2\right)
&=&\displaystyle\sum_{l=1}^{d-1}\displaystyle\sum_{k=l+1}^dH_l^2
+\displaystyle\sum_{k=2}^d\displaystyle\sum_{l=1}^{k-1}H_k^2\notag\\
&=&\sum_{l=1}^{d-1}(d-l)H_l^2+\sum_{k=2}^d(k-1)H_k^2\notag\\
&=&(d-1)H_1^2+ \sum_{j=2}^{d-1}(d-j)H_j^2 +\sum_{j=2}^{d-1}(j-1)H_j^2+(d-1)H_d^2\notag\\
&=&(d-1)\sum_{j=1}^dH_j^2,\end{eqnarray}
therefore (\ref{eq: BIEMPetalksqrddefn}) becomes
\begin{equation}
\displaystyle\sum_{l<k}\eta_{lk}^2
=(d-1)\displaystyle\sum_{l=1}^dH_l^2-2\displaystyle\sum_{l<k}H_lH_k.
\end{equation}
Using this to rewrite (\ref{eq: BIEMPI1minusIiwithdotH}), we obtain
\begin{equation}\dot{H}_R+\frac{\nu}{(d-1)}\displaystyle\sum_{l<k}\eta_{lk}^2=\frac{-\nu d\kappa}{(d-1)}\left[\dot{\phi}^2+ \left(\rho+p\right)\right].\label{eq: BIEMPI1minusIiwitheta}
\end{equation}

Next we will confirm that $D$ is a constant.  Since the right-hand sides of Einstein equations $(I_i)$ are the same for all $i\in\{1, \dots, d\}$, by equating the left-hand sides of any two equations $(I_i)$ and $(I_j)$ for $i\neq j$, we get that
\begin{equation}\displaystyle\sum_{l\neq i}(\dot{H}_l+H_l^2)+\displaystyle\sum_{\stackrel{l<k}{l,k\neq i}}H_lH_k=\displaystyle\sum_{l\neq j}(\dot{H}_l+H_l^2)+\displaystyle\sum_{\stackrel{l<k}{l,k\neq j}}H_lH_k\label{eq: BIEMPlhsIi=lhsIj}\end{equation}
where we recall that the sum indices $l,k\in\{1, \dots, d\}$.  For the first sum on each side of (\ref{eq: BIEMPlhsIi=lhsIj}), the left and the right-hand sides of (\ref{eq: BIEMPlhsIi=lhsIj}) contain all the same terms, except for the $j^{th}$ indexed term which appears on the left, and the $i^{th}$ indexed term which appears on the right.  Therefore many terms cancel and we are left with
\begin{equation}\dot{H}_j+H_j^2+\displaystyle\sum_{\stackrel{l<k}{l,k\neq i}}H_lH_k=\dot{H}_i+H_i^2+\displaystyle\sum_{\stackrel{l<k}{l,k\neq j}}H_lH_k.\label{eq: BIEMPlhsIi=lhsIjfirstsumcancelled}\end{equation}
For the second (double) sum on each side of (\ref{eq: BIEMPlhsIi=lhsIj}), the left and right-hand sides of (\ref{eq: BIEMPlhsIi=lhsIjfirstsumcancelled}) contain all the same terms, except for the terms where either $l,k=j$ which appear on the left, and the terms where either $l,k=i$ which appear on the right.  Therefore many terms cancel and by adding $H_jH_i$ to both sides we obtain 
\begin{equation}\dot{H}_j+H_j^2+H_j\displaystyle\sum_{{l\neq j}}H_l=\dot{H}_i+H_i^2+H_i\displaystyle\sum_{{l\neq i}}H_l,\label{eq: BIEMPlhsIi=lhsIjsumscancelled}\end{equation}
or equivalently 
\begin{equation}\dot\eta_{ij}+\frac{1}{\nu}\eta_{ij}H_R=0\label{eq: BIEMPdotf+fH=0withetaH}\end{equation}
where we have used the expression (\ref{eq: BIEMPH-X}) for $H_R$, the definition (\ref{eq: BIEMPeta-X}) of $\eta_{ij}$, and as usual dot denotes differentiation with respect to $t$.   By Lemma \ref{lem: shortlemma} with $\mu=1/\nu\neq 0$ (which applies since $R(t)$ is positive and differentiable), (\ref{eq: BIEMPdotf+fH=0withetaH}) shows that the function $f=\eta_{ij}R^{1/\nu}=\eta_{ij}X_1X_2\cdots X_d$ is constant for any pair $i,j$ (for the pair $i=j$, $f$ is clearly a constant function, namely zero).  Therefore the definition (\ref{eq: BIEMPDconstant-X}) of $D$ is also constant, being proportional to a sum of squares of these constant functions.  By the definitions (\ref{eq: BIEMPDconstant-X}) and (\ref{eq: BIEMPR-X}) of the constant $D$ and the function $R(t)$, we now rewrite (\ref{eq: BIEMPI1minusIiwitheta}) as
\begin{equation}\dot{H}_R=\frac{-\nu d \kappa}{(d-1)}\left[\dot{\phi}^2+\frac{2D}{R^{2/\nu}}+\left(\rho+p\right)
\right].\label{eq: BIEMPI1minusIiEFE}\end{equation}
This shows that $R(t), \phi(t), \rho(t)$ and $p(t)$ satisfy the hypothesis of Theorem \ref{thm: EFE-EMP}, applied with  constants $\epsilon, \varepsilon , N, A_0, \dots, A_{N}$ and functions $a(t), G_0(t),\dots,G_{N}(t)$ according to Table \ref{tb: BIEMP}.  Since $\tau(t), Y(\tau), Q(\tau)$ and $\varphi(\tau)$ defined in (\ref{eq: BIEMPdottau-X}), (\ref{eq: BIEMPY-aQ-varphi}) and (\ref{eq: BIEMPvarphi-phi})  are equivalent to that in the forward implication of Theorem \ref{thm: EFE-EMP}, by this theorem and by definition (\ref{eq: BIEMPvarrho-rhohookp-p}) of $\varrho(\tau), \textup{\textlhookp}(\tau)$, the generalized EMP equation (\ref{eq: AEMP}) holds for constants $B_0, \dots, B_{N}$ and functions $\lambda_0(\tau), \dots, \lambda_{N}(\tau)$ as indicated in Table \ref{tb: BIEMP}.  This proves the forward implication.  

Note that equation (\ref{eq: BIEMPI1minusIiEFE}) with $\nu=1/d$ is the same as the FRLW analogue equation  (\ref{eq: FRLWEMPiiminusi}) with $M=1$, $n_1=2d$, $k=0$ and by identifying $R(t)$ here with $a(t)$ in the FRLW model.   One can compare this observation with J. Lidsey's results on the $3+1$-dimensional Bianchi I model in \cite{L}.

To prove the converse implication, we assume to be given functions which solve the generalized EMP equation (\ref{eq: BIEMP}) and we begin by showing that $(I_0)$ is satisfied.   Differentiating the definition of $R(t)$ in (\ref{eq: BIEMPR-Yalpha-sigma}) and using the definition in (\ref{eq: BIEMPdottau-Yvarphi-Q}) of $\tau(t)$, we have
\begin{eqnarray}\dot{R}(t)&=&\frac{1}{q}Y(\tau(t))^{\frac{1}{q}-1}Y'(\tau(t))\dot{\tau}(t)\notag\\
&=&\frac{\theta}{q}Y(\tau(t))^{1/q}Y'(\tau(t)).\end{eqnarray}
so that
\begin{equation}H_R(t)\stackrel{def.}{=}\frac{\dot{R}(t)}{R(t)}=\frac{\theta}{q}Y'(\tau(t)).\label{eq: BIEMPH-Y}\end{equation}
Differentiating the definition (\ref{eq: BIEMPphi-varphi}) of $\phi(t)$ and using the definition (\ref{eq: BIEMPdottau-Yvarphi-Q}) of  $\tau(t)$ gives
\begin{equation}\dot{\phi}(t)=\varphi '(\tau(t))\dot{\tau}(t)=\theta\varphi '(\tau(t))Y(\tau(t)).\label{eq: BIEMPdotphi-Y}\end{equation}
Using (\ref{eq: BIEMPH-Y}) and (\ref{eq: BIEMPdotphi-Y}), and also the definitions (\ref{eq: BIEMPR-Yalpha-sigma}) and (\ref{eq: BIEMPrho-varrhop-hookp}) of $R(t)$ and $\rho(t)$ respectively, the definition (\ref{eq: BIEMPVphi-Y}) of $V\circ\phi$ can be written as
\begin{equation}
V\circ\phi=\frac{1}{\kappa}\left(
\frac{(d-1)}{2\nu^2 d}H_R^2-\frac{D\kappa}{R^{2/\nu}}
\right)-\frac{1}{2}\dot\phi^2-\rho-\frac{\Lambda}{\kappa}
\label{eq: BIEMPVphi-a}.\end{equation}

\noindent The quantity in parenthesis here is in fact equal to the left-hand-side of equation $(I_0)$.  To see this, we differentiate the definition (\ref{eq: BIEMPX-Ralpha}) of $X_l(t)$, divide the result by $X_l$ and use the definition (\ref{eq: BIEMPH-Y}) of $H_R$ to obtain
\begin{equation}H_l\stackrel{def.}{=}\frac{\dot{X}_l}{X_l}=\frac{\frac{1}{\nu d}R^{{1}/{\nu d}-1}\dot{R}e^{\alpha_l}+\dot{\alpha}_lR^{{1}/{\nu d}}e^{\alpha_l}}{R^{{1}/{\nu d}}e^{\alpha_l}}=\frac{1}{\nu d}H_R+\dot\alpha_l\label{eq: BIEMPdotX/X-Halpha}\end{equation}
so that 
\begin{equation}\displaystyle\sum_{l<k}H_lH_k
=\displaystyle\sum_{l<k}\left(\frac{1}{\nu^2 d^2}H_R^2+\frac{1}{\nu d}(\dot\alpha_l+\dot\alpha_k)H_R+\dot\alpha_l \dot\alpha_k\right).
\label{eq: BIEMPdotXldotXk/XlXk-Halpha}\end{equation}
The first term on the right-hand side of (\ref{eq: BIEMPdotXldotXk/XlXk-Halpha}) does not depend on the indices $l,k$, and is therefore equal to $\frac{1}{\nu^2 d^2}H_R^2$ times the quantity
\begin{equation}\displaystyle\sum_{l<k}1=\displaystyle\sum_{k=2}^d\displaystyle\sum_{l=1}^{k-1}1=\displaystyle\sum_{k=2}^d(k-1)=\displaystyle\sum_{j=1}^{d-1}j=\frac{d(d-1)}{2}.\label{eq: BIEMPsum1l<k}\end{equation}
The second term on the right-hand side of (\ref{eq: BIEMPdotXldotXk/XlXk-Halpha}) sums to zero since 
\begin{eqnarray}
\displaystyle\sum_{l<k}(\dot\alpha_l+\dot\alpha_k)
&=&\displaystyle\sum_{l=1}^{d-1}\displaystyle\sum_{k=l+1}^d\dot\alpha_l+\displaystyle\sum_{k=2}^d\displaystyle\sum_{l=1}^{k-1}\dot\alpha_k\notag\\
&=&\displaystyle\sum_{l=1}^{d-1}(d-l)\dot\alpha_l +\displaystyle\sum_{k=2}^d (k-1)\dot\alpha_k\notag\\
&=&(d-1)\dot\alpha_1+\displaystyle\sum_{j=2}^{d-1}(d-j+j-1)\dot\alpha_j + (d-1)\dot\alpha_d\notag\\
&=&(d-1)\displaystyle\sum_{l=1}^d\dot\alpha_l\notag\\
&=&(d-1)\sigma(t)\displaystyle\sum_{l=1}^d c_l\notag\\
&=&0\label{eq: BIEMPalphal+alphak=0}\end{eqnarray}
where on the last lines, we have used the definition (\ref{eq: BIEMPR-Yalpha-sigma}) of $\alpha_l(t)$ and the condition (\ref{eq: BIEMPccondition}) on the constants $c_l$.  For the third term on the right-hand side of   (\ref{eq: BIEMPdotXldotXk/XlXk-Halpha}), we use the definitions of $\alpha_l(t), \sigma(t), \tau(t)$ and $R(t)$ in (\ref{eq: BIEMPR-Yalpha-sigma}), (\ref{eq: BIEMPdotsigma-dottau}) and (\ref{eq: BIEMPdottau-Yvarphi-Q}) to write
\begin{equation}\dot\alpha_l\dot\alpha_k=c_l c_k \dot\sigma^2=\frac{c_lc_k}{\dot\tau^{2/q\nu}}=\frac{c_lc_k}{\theta^{2/q\nu}(Y\circ\tau)^{2/q\nu}}=\frac{c_lc_k}{\theta^{2/q\nu}R^{2/\nu}}.\label{eq: BIEMPdotalphaldotalphak-R}\end{equation}
Therefore (\ref{eq: BIEMPdotXldotXk/XlXk-Halpha}) becomes
\begin{equation}\displaystyle\sum_{l<k}H_lH_k=\frac{(d-1)}{2\nu^2 d}H_R^2+\displaystyle\sum_{l<k}\frac{c_lc_k}{\theta^{2/q\nu}R^{2/\nu}}.
\label{eq: BIEMPdotXldotXk/XlXk-Rc}
\end{equation}
Then by the condition (\ref{eq: BIEMPccondition}) on the constants $c_l$, (\ref{eq: BIEMPdotXldotXk/XlXk-Rc}) becomes
\begin{equation}\displaystyle\sum_{l<k}H_lH_k=\frac{(d-1)}{2\nu^2 d}H_R^2-\frac{D\kappa}{R^{2/\nu}}.
\label{eq: BIEMPdotXldotXk/XlXk-RD}
\end{equation}
That is, the expression (\ref{eq: BIEMPVphi-a}) for $V$ can now be written as
\begin{equation}V\circ\phi=\frac{1}{\kappa}\displaystyle\sum_{l<k}H_lH_k-\frac{1}{2}\dot{\phi}^2-\rho-\frac{\Lambda}{\kappa},\label{eq: BIEMPVphi-X}
\end{equation}
showing that $(I_0)$ holds under the assumptions of the converse implication.

To conclude the proof we must also show that the equations $(I_1), \dots, (I_d)$ hold.   In the converse direction the hypothesis of the converse of Theorem \ref{thm: EFE-EMP} holds, applied with constants $N, B_0, \dots, B_{N}$ and functions $\lambda_0(\tau), \dots, \lambda_{N}(\tau)$ as indicated in Table \ref{tb: BIEMP}.  Since  $\tau(t), \varphi(\tau), R(t)$ and $\phi(t)$ defined in (\ref{eq: BIEMPdottau-Yvarphi-Q}), (\ref{eq: BIEMPR-Yalpha-sigma}) and (\ref{eq: BIEMPphi-varphi}) are consistent with the converse implication of Theorem \ref{thm: EFE-EMP}, applied with $a(t), \delta  $ and $\varepsilon $ as in Table \ref{tb: BIEMP}, by this theorem and by the definition (\ref{eq: BIEMPrho-varrhop-hookp}) of $\rho(t), p(t)$ the  scale factor  equation (\ref{eq: AEFE}) holds for constants $\delta  , \varepsilon , A_0, \dots, A_{N}$ and functions $G_0(t),\dots,G_{N}(t)$ according to Table \ref{tb: BIEMP}.  That is, we have regained (\ref{eq: BIEMPI1minusIiEFE}).  Now solving (\ref{eq: BIEMPVphi-a}) for $\rho(t)$ and substituting this into (\ref{eq: BIEMPI1minusIiEFE}), we obtain
\begin{equation}\dot{H}_R=\frac{-\nu d\kappa}{(d-1)}\left[ \frac{1}{2}\dot\phi^2 -V\circ\phi+\frac{D}{R^{2/\nu}}+p +\frac{(d-1)}{2\nu^2 d\kappa}H_R^2-\frac{\Lambda}{\kappa} \right].\end{equation}
Multiplying by $\frac{(d-1)}{\nu d}$ and rearranging, we get that
\begin{equation}\frac{(d-1)}{\nu d}\dot{H}_R+  \frac{(d-1)}{2\nu^2 d}H_R^2   +\frac{\kappa D}{R^{2/\nu}}=-\kappa\left[\frac{1}{2}\dot\phi^2-V\circ\phi+p\right] + \Lambda\label{eq: BIEMPlhs=lhsofIi}\end{equation}
The left-hand side of this equation is in fact equal to the left-hand-side of $(I_i)$ for any $i\in\{1,\dots, d\}$. 
  To see this, first recall (\ref{eq: BIEMPdotX/X-Halpha}) and write
  \begin{equation}\dot{H}_l+H_l^2=\frac{1}{\nu d}\dot{H}_R+\frac{1}{\nu^2 d^2}H_R^2+\ddot{\alpha}_l+\frac{2}{\nu d}\dot{\alpha}_lH_R+\dot{\alpha}_l^2,\label{eq: BIEMPddotX/X-Halpha}
\end{equation}
therefore for any fixed $i$
\begin{equation}\displaystyle\sum_{l\neq i}(\dot{H}_l+H_l^2)=\displaystyle\sum_{l\neq i}\left(\frac{1}{\nu d}\dot{H}_R+\frac{1}{\nu^2 d^2}H_R^2+\ddot{\alpha}_l+\frac{2}{\nu d}\dot{\alpha}_lH_R+\dot{\alpha}_l^2\right).\label{eq: BIEMPsumlneqiddotX/X}\end{equation}
Since the first two terms on the right-hand side of (\ref{eq: BIEMPsumlneqiddotX/X}) do not depend on the indices $l,k$, and
also using the definitions (\ref{eq: BIEMPR-Yalpha-sigma}) and (\ref{eq: BIEMPccondition}) of $\alpha_l$ and the constants $c_l$ to write $\sum_l \dot\alpha_l=0 \Rightarrow \sum_{l\neq i}\dot\alpha_l=-\dot\alpha_i$, (\ref{eq: BIEMPsumlneqiddotX/X}) becomes
\begin{equation}\displaystyle\sum_{l\neq i}(\dot{H}_l+H_l^2)=\frac{(d-1)}{\nu d}\dot{H}_R+\frac{(d-1)}{\nu^2 d^2}H_R^2- \frac{2}{\nu d}H_R\dot\alpha_i+\displaystyle\sum_{l\neq i}\left(\ddot\alpha_l+\dot\alpha_l^2\right) .
\label{eq: BIEMPddotX/X-H}\end{equation}
By the definitions (\ref{eq: BIEMPR-Yalpha-sigma}), (\ref{eq: BIEMPdotsigma-dottau}) and (\ref{eq: BIEMPdottau-Yvarphi-Q}) of $\alpha_l(t), \sigma(t), \tau(t)$ and $R(t)$, we see that
\begin{eqnarray}\dot\alpha_l(t)R(t)^{1/\nu}&=&c_l\dot\sigma(t)R(t)^{1/\nu}\notag\\
&=&\frac{c_l}{\dot\tau(t)^{1/q\nu}}R(t)^{1/\nu}\notag\\
&=&\frac{c_l}{\theta^{1/q\nu}Y(\tau(t))^{1/q\nu}}R(t)^{1/\nu}\notag\\
&=&\frac{c_l}{\theta^{1/q\nu}}\mbox{ is a constant.}\label{eq: BIEMPdotalphaRconstant}\end{eqnarray}
By Lemma \ref{lem: shortlemma} with $\mu=1/\nu\neq 0$, equation (\ref{eq: BIEMPdotalphaRconstant}) shows that 
\begin{equation}\ddot\alpha_l+\frac{1}{\nu}\dot\alpha_lH_R=0\end{equation}
for all $l\in\{1, \dots, d\}$.  Therefore in total, we have that (\ref{eq: BIEMPsumlneqiddotX/X}) is
\begin{equation}\displaystyle\sum_{l\neq i}(\dot{H}_l+H_l^2)=\frac{(d-1)}{\nu d}\dot{H}_R+\frac{(d-1)}{\nu^2 d^2}H_R^2+\frac{(2-d)}{\nu d}H_R\displaystyle\sum_{l\neq i}\dot\alpha_l+\displaystyle\sum_{l\neq i}\dot\alpha_l^2.
\label{eq: BIEMPddotX/X-alphaH}\end{equation}

\noindent To form the rest of the left-hand side of $(I_i)$, again use (\ref{eq: BIEMPdotX/X-Halpha}) to obtain
\begin{equation}H_lH_k=\frac{1}{\nu^2 d^2}H_R^2+\frac{1}{\nu d}(\dot\alpha_l+\dot\alpha_k)H_R+\dot\alpha_l\dot\alpha_k.\end{equation}
Therefore for any fixed $i$, we have
\begin{equation}\displaystyle\sum_{\stackrel{l<k}{l,k\neq i}}H_lH_k
=\displaystyle\sum_{\stackrel{l<k}{l,k\neq i}}\left(\frac{1}{\nu^2 d^2}H_R^2+\frac{1}{\nu d}(\dot\alpha_l+\dot\alpha_k)H_R+\dot\alpha_l\dot\alpha_k\right).\label{eq: BIEMPsuml<klkneqidotXldotXk/XlXk-alhpaH}\end{equation}
As we saw in (\ref{eq: BIEMPsum1l<k}), $\sum_{l<k}1=\frac{d(d-1)}{2}$ therefore the first term on the right-side of  (\ref{eq: BIEMPsuml<klkneqidotXldotXk/XlXk-alhpaH}), which does not depend on the indices $l,k$, is equal to $\frac{1}{\nu^2 d^2}H_R^2$ times
\begin{equation}\displaystyle\sum_{\stackrel{l<k}{l,k\neq i}}1=\displaystyle\sum_{l<k}1-\displaystyle\sum_{l\neq i}1=\frac{d(d-1)}{2}-(d-1)=\frac{(d-1)(d-2)}{2}.\label{eq: BIEMPsum1l<klkneqi}\end{equation}
As we saw in (\ref{eq: BIEMPalphal+alphak=0}), $\sum_{l<k}(\dot\alpha_l+\dot\alpha_k)=0$ therefore the second term on the right-hand side of (\ref{eq: BIEMPsuml<klkneqidotXldotXk/XlXk-alhpaH}) sums to $\frac{1}{\nu d}H_R$ times the quantity
\begin{eqnarray}
\sum_{\stackrel{l<k}{l,k\neq i}}(\dot\alpha_l+\dot\alpha_k)
&=&\sum_{{l<k}}(\dot\alpha_l+\dot\alpha_k)-\sum_{l\neq i}(\dot\alpha_l+\dot\alpha_i)\notag\\
&=&-\sum_{l\neq i}\dot\alpha_l-(d-1)\dot\alpha_i\notag\\
&=&\dot\alpha_i-(d-1)\dot\alpha_i\notag\\
&=&(2-d)\dot\alpha_i\label{eq: BIEMPsuml<klkneqidotalphal+dotalphak=2-ddotalphai}
\end{eqnarray}
where we have used the definitions
(\ref{eq: BIEMPR-Yalpha-sigma}) and (\ref{eq: BIEMPccondition}) of $\alpha_l$ and the constants $c_l$ to write
$\sum_{l\neq i}\dot\alpha_l=-\dot\alpha_i$.  Considering the third term on the right-hand side of (\ref{eq: BIEMPsuml<klkneqidotXldotXk/XlXk-alhpaH}), we have that
\begin{eqnarray}
\sum_{\stackrel{l<k}{l,k\neq i}}\dot\alpha_l\dot\alpha_k
&=&\sum_{l<k}\dot\alpha_l\dot\alpha_k - c_i\sum_{l\neq i}c_l\notag\\
&=&\sum_{l<k}\dot\alpha_l\dot\alpha_k + \left(\sum_{l\neq i}c_l\right)^2\notag\\
&=&\sum_{l<k}\dot\alpha_l\dot\alpha_k + 2\sum_{\stackrel{l<k}{l,k\neq i}}c_lc_k + \sum_{l\neq i}c_l^2\notag\\
&=&-\sum_{l<k}\dot\alpha_l\dot\alpha_k -\sum_{l\neq i}c_l^2\notag\\
&=&-\sum_{l<k}\frac{c_lc_k}{\theta^{2/q\nu}R^{2/\nu}}-\sum_{l\neq i}c_l^2
\label{eq: BIEMPsuml<klkneqidotalphaldotalphak=-suml<kdotalphaldotalphak}\end{eqnarray}
where again we have used that $\sum_{l\neq i}\dot\alpha_l=-\dot\alpha_i$,
 and on the last line we recall (\ref{eq: BIEMPdotalphaldotalphak-R}).  So by  (\ref{eq: BIEMPsum1l<klkneqi}), (\ref{eq: BIEMPsuml<klkneqidotalphal+dotalphak=2-ddotalphai}) and (\ref{eq: BIEMPsuml<klkneqidotalphaldotalphak=-suml<kdotalphaldotalphak}), in total (\ref{eq: BIEMPsuml<klkneqidotXldotXk/XlXk-alhpaH}) becomes
\begin{equation}\displaystyle\sum_{\stackrel{l<k}{l,k\neq i}}H_lH_k=\frac{(d-1)(d-2)}{2\nu^2 d^2}H_R^2+\frac{(2-d)}{\nu d}\dot\alpha_iH_R-\sum_{l<k}\frac{c_lc_k}{\theta^{2/q\nu}R^{2/\nu}}-\sum_{l\neq i}c_l^2.
\label{eq: BIEMPdotXldotXk/XlXk-Ralphac}\end{equation}
Summing (\ref{eq: BIEMPddotX/X-alphaH}) and (\ref{eq: BIEMPdotXldotXk/XlXk-Ralphac}), the left-hand side of any $(I_i)$ Einstein equation is
\begin{equation}\displaystyle\sum_{l\neq i}(\dot{H}_l+H_l^2)+\displaystyle\sum_{\stackrel{l<k}{l,k\neq i}}H_lH_k=\frac{(d-1)}{\nu d}\dot{H}_R+\frac{d(d-1)}{2\nu^2 d^2}H_R^2-\sum_{l<k}\frac{c_lc_k}{\theta^{2/q\nu}R^{2/\nu}}.
\label{eq: BIEMPBIlhsIi}\end{equation}
Then by the condition (\ref{eq: BIEMPccondition}) on the constants $c_l$, (\ref{eq: BIEMPdotXldotXk/XlXk-Ralpha}) becomes
\begin{equation}\displaystyle\sum_{l\neq i}(\dot{H}_l+H_l^2)+\displaystyle\sum_{\stackrel{l<k}{l,k\neq i}}H_lH_k=\frac{(d-1)}{\nu d}\dot{H}_R+\frac{d(d-1)}{2\nu^2 d^2}H_R^2+\frac{D\kappa}{R^{2/\nu}}.
\label{eq: BIEMPdotXldotXk/XlXk-Ralpha}\end{equation}
Therefore by (\ref{eq:  BIEMPlhs=lhsofIi}) and (\ref{eq: BIEMPdotXldotXk/XlXk-Ralpha}), we obtain $(I_i)$ for all $i\in\{1,\dots,d\}$.  This proves the theorem.
\end{proof}

\break 

\subsection{Reduction to classical EMP: pure scalar field}

To compute some exact solutions to Einstein's equations for the Bianchi I metric, we take $\rho=p=0$ and choose parameter $\nu=1/q$ in Theorem \ref{thm: BIEMP}.  Therefore solving the Bianchi I Einstein equations
\begin{equation}
\displaystyle\sum_{l< k}H_lH_k
\stackrel{(I_0)'}{=}
\kappa\left[\frac{1}{2}\dot\phi^2+V\circ\phi\right]+\Lambda\end{equation}
\begin{equation}
\displaystyle\sum_{l\neq i}(\dot{H}_l+H_l^2)+\displaystyle\sum_{\stackrel{l < k}{l,k\neq i}}H_lH_k\stackrel{(I_i)'}{=}-\kappa\left[\frac{1}{2}\dot\phi^2-V\circ\phi\right]+\Lambda\notag
\end{equation}
for $l,k,i\in\{1, \dots, d\}$ is equivalent to solving the classical EMP equation
\begin{equation}Y''(\tau)+Q(\tau)Y(\tau)=\frac{-2 d\kappa D}{\theta^2(d-1) Y(\tau)^3}\label{eq: nomatterBIEMP}\end{equation}
for constants $\theta, D>0$.  The solutions of $(I_0)', (I_1)',\dots,(I_d)'$ and (\ref{eq: nomatterBIEMP}) are related by
\begin{equation}R(t)=Y(\tau(t))^{1/q} \qquad\mbox{ and }\qquad \varphi '(\tau)^2= \frac{(d-1)}{d\kappa} Q(\tau)\label{eq: nomatterBIclassEMPa-Yvarphi-Q}\end{equation}
for $q\neq 0, \phi(t)=\varphi(\tau(t)), R(t)\stackrel{def.}{=} \left(X_1(t)\cdots X_d(t)\right)^{1/q}$ and 
\begin{equation}\dot\tau(t)=\theta R(t)^q=\theta Y(\tau(t))\label{eq: nomatterBIclassEMPdottau-a-Y}\end{equation}
for any $\theta>0$.  Also the quantity
\begin{equation}D\stackrel{def.}{=}\frac{1}{2d\kappa}X_1^2X_2^2\cdots X_d^2\left(\displaystyle\sum_{l<k}\eta_{lk}^2\right)\label{eq: nomatterBIEMPDconstant-X}\end{equation}
is constant for
\begin{equation}\eta_{lk}\stackrel{def.}{=}H_l-H_k, \mbox{\small \    $l\neq k, l, \ k\in\{1,\dots, d\}$.}\label{eq: nomatterBIEMPeta-X}\end{equation}
In the converse direction we have 
\begin{equation}X_l(t)=R(t)^{q/ d}e^{\alpha_l(t)}\label{eq: nomatterBIEMPX-Ralpha}\end{equation}
for $\alpha_l(t)\stackrel{def.}{=}c_l\sigma(t)$ where $\sigma(t)$ satisfies 
\begin{equation}\dot\sigma(t)=1/\dot\tau(t)\label{eq: nomatterBIEMPdotsigma-dottau}\end{equation}
 and $c_l$ are any constants for which both
\begin{equation}\displaystyle\sum_{l=1}^d c_l=0\qquad\mbox{ and }\qquad\qquad\displaystyle\sum_{l<k}c_lc_k=-\theta^{2/q\nu} D\kappa.\label{eq: BIEMPcconditionY=superpowerb=0r=1s=1}\end{equation}
Also $V$ is taken to be defined as
\begin{equation}
V(\phi(t))=\left[\frac{(d-1)\theta^2}{2 d\kappa }(Y')^2-\frac{D}{Y^2}-\frac{\theta^2}{2}Y^2(\varphi ')^2-\frac{\Lambda}{\kappa}\right]\circ\tau(t).\label{eq: nomatterBIEMPVphi-Y}
\end{equation}

We now refer to Appendix D for solutions of the classical  EMP equation (\ref{eq: nomatterFRLWEMP}), which we will map to a solution of the Bianchi I Einstein equations.  By comparing (\ref{eq: nomatterBIclassEMPdottau-a-Y}) and (\ref{eq: dottau-Ytau^s0}), we note to only consider solutions of (\ref{eq: dottau-Ytau^s0}) in Appendix D corresponding to $r_0=1$.  Also by comparing (\ref{eq: nomatterBIEMPdotsigma-dottau}) to (\ref{eq: dotsigma=1/dottau^s0}), we see to take $s_0=1$ in Appendix D.   Of course when $D=0$ in (\ref{eq: nomatterBIEMPDconstant-X}), $\eta_{lk}=0$ for all pairs $l,k$ so that  $X_1(t), X_2(t), \dots, X_d(t)$ agree up to a constant multiple. In this case, if we take $R(t)=X_1(t)=\cdots=X_d(t)\stackrel{def.}{=}a(t)$ and $q=d$ we obtain the FRLW cosmology with curvature  $k=0$ and $n=2d, D=0$ so that we may refer to sections 3.1.1 and 3.2.2 for exact solutions to the Bianchi I Einstein equations if $D=0$.  Here we consider solutions to the classical EMP (\ref{eq: nomatterBIEMP}) by referring to Table \ref{tb: exactEMP} with $\lambda_1 = \frac{-2d\kappa D}{\theta^2(d-1)}<0$.

\begin{example}
For $\theta=1$ and choice of constant $D=(d-1)/2d\kappa$, we consider solution 5 in Table \ref{tb: exactEMP} with $d_0=b_0=0$ and $c_0=1$.  That is, we have solution $Y(\tau)=(a_0+2\tau)^{1/2}$ to the classical EMP $Y''(\tau)+Q(\tau)Y(\tau)=-1/Y(\tau)^3$ for $Q(\tau)=0$ and $a_0\in\mathds{R}$.  By (\ref{eq: taueqnforY=superpowerb=0r=1}) - (\ref{eq: dottauforY=superpowerb=0r=1}) we have $\tau(t)=\frac{1}{2}\left((t-t_0)^2-a_0\right)$ and 
\begin{equation}R(t)=Y(\tau(t))^{1/q}=(t-t_0)^{1/q}\end{equation}
for any $q\neq 0$ and $t_0\in\mathds{R}$.  Also by (\ref{eq: sigmaforY=superpowerb=0s=1}) we have $\sigma(t)=\ln(t-t_0)$ so that 
\begin{eqnarray}X_i(t)
&=&R(t)^{q/d}e^{c_i \sigma(t)}\notag\\
&=&(t-t_0)^{1/d}e^{c_i \ln(t-t_0)}\notag\\
&=&(t-t_0)^{1/d+c_i}\label{eq: XiforBIclassicalEMPY=superpowerd=b=0}
\end{eqnarray}
for $1\leq i\leq d$ and for constants $c_1, \dots, c_d$ that satisfy
\begin{equation}\displaystyle\sum_{l=1}^d c_l=0\qquad\mbox{ and }\qquad\qquad\displaystyle\sum_{l<k}c_lc_k=- \frac{(d-1)}{2d} \label{eq: BIEMPcconditionY=superpowerb=d=0}\end{equation}
by (\ref{eq: BIEMPccondition}).  
Since $Q(\tau)=0=\varphi'(\tau)$ the scalar field
$\phi(t)\stackrel{def.}{=}\varphi(\tau(t)) = \phi_0\in\mathds{R}\label{eq: phiforBIclassicalEMPY=superpowerb=d=0}$
is constant.  Finally, by (\ref{eq: nomatterBIEMPVphi-Y}), (\ref{eq: YprimetauforY=superpowerb=d=0r=1}) and (\ref{eq: dottauforY=superpowerb=0r=1}), we obtain constant potential $V(\phi(t))=-\Lambda/\kappa$.

One can verify by hand that $X_1(t), \dots, X_d(t), \phi(t)$ and $V(\phi(t))$ in (\ref{eq: XiforBIclassicalEMPY=superpowerd=b=0}), (\ref{eq: phiforBIclassicalEMPY=superpowerb=d=0}) and (\ref{eq: VphiforBIclassicalEMPY=superpowerb=d=0}) satisfy the vacuum Bianchi I equations $(I_0)'$ and $(I_i)'$ for $1\leq i\leq d$ and  for $c_i$ as in (\ref{eq: BIEMPcconditionY=superpowerb=d=0}), with use of the identity in equation (\ref{eq: BIEMPsuml<klkneqidotalphaldotalphak=-suml<kdotalphaldotalphak}).  This example is the well-known Kasner solution $ds^2=-dt^2+\sum_{i=1}^d t^{2p_i}dx_i^2$, in which the constants $p_i$ must satisfy the Kasner conditions $\sum_{i=1}^d p_i=\sum_{i=1}^d p_i^2=1$.  By setting $p_i=1/d+c_i$, we see that our conditions (\ref{eq: BIEMPcconditionY=superpowerb=d=0}) are equivalent to the Kasner conditions since $\sum_{i=1}^d p_i=\sum_{i=1}^d(1/d+c_i)=1+\sum_{i=1}^dc_i=1+0=1$ and  $\sum_{i=1}^d p_i^2=\sum_{i=1}^d(1/d+c_i)^2=(1/d)+\sum_{i=1}^dc_i^2=(1/d)-2\sum_{l<k}c_lc_k=(1/d)+2((d-1)/2d)=1$.
\end{example}

\begin{example}
For $\theta=q=1$ and choice of constant $D=(d-1)/2d\kappa$, we consider solution 5 in Table \ref{tb: exactEMP} with $d_0=a_0=0$, $b_0=4$ and $c_0=1$.  That is, we have solution $Y(\tau)=(4\tau(t)^2+2\tau(t))^{1/2}$ to the classical EMP $Y''(\tau)+Q(\tau)Y(\tau)=-1/Y(\tau)^3$ for $Q(\tau)=0$.  By (\ref{eq: taueqnforY=superpowerr=1}) - (\ref{eq: dottauforY=superpowerr=1}) we have $\tau(t)=\frac{1}{8}\left(e^{2(t-t_0)} + e^{-2(t-t_0)}-8\right)$ and 
\begin{eqnarray}R(t)&=&Y(\tau(t))
=\frac{1}{2}\sinh(2(t-t_0))
\end{eqnarray}
for any  $t_0\in\mathds{R}$.  Also by (\ref{eq: sigmaforY=superpowerr=1s=1}) we have that 
\begin{eqnarray}\sigma(t)&=&
- 2  Arccoth\left( e^{2(t-t_0)}  \right)= \ln\left(tanh(t-t_0)\right)
\label{eq: sigmaforBIEMPY=superpowerr=1s=1}
\end{eqnarray}
so that 
\begin{eqnarray}
X_i(t)&=&R(t)^{1/d}e^{c_i\sigma(t)}=\frac{1}{2^{1/d}}\sinh(2(t-t_0))^{1/d}e^{c_i\sigma(t)}\notag\\
&=&\frac{1}{2^{1/d}}\sinh(2(t-t_0))^{1/d}tanh^{c_i}(t-t_0)
\end{eqnarray}
for $\sigma(t)$ as in (\ref{eq: sigmaforBIEMPY=superpowerr=1s=1}) and constants that satisfy
\begin{equation}\displaystyle\sum_{l=1}^d c_l=0\qquad\mbox{ and }\qquad\qquad\displaystyle\sum_{l<k}c_lc_k=-\frac{(d-1)}{2d}.\label{eq: BIEMPcconditionY=superpowera=d=0c=1r=s=1}\end{equation}
Since $Q(\tau)=0=\varphi'(\tau)$, the scalar field is constant
$\phi(t)\stackrel{def.}{=}\varphi(\tau(t)) = \phi_0\in\mathds{R}$.  Finally, by (\ref{eq: nomatterBIEMPVphi-Y}), (\ref{eq: YprimetauforY=superpowerd=0b>0r=1}) and (\ref{eq: dottauforY=superpowerr=1}), we obtain constant potential $V(\phi(t))=\frac{2(d-1)}{d \kappa } -\frac{\Lambda}{\kappa}$.  One can compare this solution with the higher-dimensional solution of Lorenz-Petzold in \cite{LP}.

As an example for $d=3$, one can take $c_1=-\frac{1}{\sqrt{3}}, c_2=0$ and $c_3=\frac{1}{\sqrt{3}}$ so that
\begin{eqnarray}X_1(t)&=&\frac{1}{\sqrt[3]{2}} \sinh(2(t-t_0))^{1/3}tanh^{-1/\sqrt{3}}(t-t_0)\notag\\
X_2(t)&=&\frac{1}{\sqrt[3]{2}} \sinh(2(t-t_0))^{1/3}\notag\\
X_3(t)&=&\frac{1}{\sqrt[3]{2}} \sinh(2(t-t_0))^{1/3}tanh^{1/\sqrt{3}}(t-t_0)
\end{eqnarray}
and potential 
$V(\phi(t))= \frac{4}{ 3\kappa }-\frac{\Lambda}{\kappa}$
solve the Bianchi I equations in $3+1$ spacetime dimensions.  One can compare this solution to the Bali and Jain solution in \cite{BJ}.

\end{example}

\section{In terms of a Schr\"odinger-Type Equation}

To reformulate the Einstein field equations $(I_0),\dots, (I_d)$ in terms of a Schr\"odinger-type equation with one less non-linear term than the generalized EMP, one can apply Corollary \ref{cor: EFE-NLSAnonzeroEzero} to the difference $d(I_0)-\displaystyle\sum_{i=1}^d(I_i)$.   Below is the resulting statement.

\begin{thm}\label{thm: BINLS}
Suppose you are given twice differentiable functions $X_1(t), \dots, X_d(t)>0$, a once differentiable function $\phi(t)$, and also functions $\rho(t), p(t), V(x)$ which satisfy the Einstein equations $(I_0),\dots,(I_d)$ for some $\Lambda\in\mathds{R}, d\in\mathds{N}\backslash\{0,1\}, \kappa\in\mathds{R}\backslash\{0\}$.  
Let $g(\sigma)$ denote the inverse of a function $\sigma(t)$ which satisfies
\begin{equation}\dot{\sigma}(t)=\frac{1}{\theta \left(X_1(t)\cdots X_d(t)\right)}\label{eq: BINLSdotsigma-R}\end{equation}
for some $\theta>0$.  Then the following functions 
\begin{eqnarray}
u(\sigma)&=&\left[\frac{1}{X_1\cdots X_d}\right]\circ g(\sigma)\label{eq: BINLSu-R}\\
P(\sigma)&=&\frac{d\kappa}{(d-1)}\psi '(\sigma)^2\label{eq: BINLSP-psi}
\end{eqnarray}
solve the Schr\"odinger-type equation
\begin{equation}u''(\sigma)+\left[E-P(\sigma)\right]u(\sigma)=
\frac{\theta^2 d \kappa(\uprho(\sigma)+\mathrm{p}(\sigma))}{(d-1) u(\sigma)}
\label{eq: BINLS}\end{equation}
for 
\begin{equation}\psi(\sigma)=\phi(g(\sigma))\label{eq: BINLSpsi-phi}\end{equation}
\begin{equation}\uprho(\sigma)=\rho(g(\sigma)), \ \mathrm{p}(\sigma)=p(g(\sigma)) \label{eq: BINLSuprho-rhormp-p}\end{equation}
and where 
\begin{equation}E\stackrel{def.}{=}\frac{-\theta^2}{(d-1)}X_1^2X_2^2\cdots X_d^2\displaystyle\sum_{l<k}\eta_{lk}^2\label{eq: BINLSE-X}\end{equation}
is a constant for 
\begin{equation}\eta_{lk}\stackrel{def.}{=}H_l-H_k, \mbox{\small $l\neq k, l, k\in\{1, \dots, d\}$.}\label{eq: BINLSeta-X}\end{equation}

Conversely, suppose you are given a twice differentiable function $u(\sigma)>0$, and also functions $P(\sigma)$ and $\uprho(\sigma), \mathrm{p}(\sigma)$  which solve (\ref{eq: BINLS}) for some constants $E<0, \theta>0, \kappa\in\mathds{R}\backslash\{0\}$ and $d\in\mathds{N}\backslash\{0,1\}$.  In order to construct functions which solve $(I_0),\dots, (I_d)$, first find $\sigma(t), \psi(\sigma)$ which solve the differential equations
\begin{equation}\dot{\sigma}(t)=\frac{1}{\theta} u(\sigma(t))\qquad\mbox{ and }\qquad \psi '(\sigma)^2= \frac{(d-1)}{ d \kappa} P(\sigma).\label{eq: BINLSdotsigma-upsi-P}\end{equation}
Let
\begin{equation}R(t)=u(\sigma(t))^{-\nu}\label{eq: BINLSR-ualpha-sigma}\qquad\mbox{ and }\qquad \alpha_l(t)\stackrel{def.}{=}c_l\sigma(t), l\in\{1,\dots,d\} \end{equation}
where $c_l$ are any constants for which both
\begin{equation}\displaystyle\sum_{l=1}^d c_l=0\qquad\mbox{ and }\qquad\displaystyle\sum_{l<k}c_lc_k=\frac{(d-1)E}{2d}.\label{eq: BINLScconditions}\end{equation}
Then the functions
\begin{equation}X_l(t)=R(t)^{1/\nu d}e^{\alpha_l(t)}\label{eq: BINLSX-Ralpha}\end{equation}
\begin{equation}\phi(t)=\psi(\sigma(t))\label{eq: BINLSphi-psi}\end{equation}
\begin{equation}\rho(t)=\uprho(\sigma(t)), \ p(t)=\mathrm{p}(\sigma(t))\label{eq: BINLSrho-uprhop-rmp}\end{equation}
and
\begin{equation}
V(\phi(t))=\left[\frac{(d-1)}{2\theta^2 d\kappa}\left(  (u')^2 + Eu^2 \right)
-\frac{1}{2\theta^2}u^2(\psi ')^2-\uprho-\frac{\Lambda}{\kappa}\right]\circ\sigma(t)
\label{eq: BINLSVphi-u}
\end{equation}
satisfy the equations $(I_0),\dots, (I_d)$.\end{thm}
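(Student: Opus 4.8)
The plan is to prove this theorem in exact parallel with the proof of Theorem \ref{thm: BIEMP}, but routing the reduction through Corollary \ref{cor: EFE-NLSAnonzeroEzero} in place of Theorem \ref{thm: EFE-EMP}. I would begin by recording a substitution table identifying the volume factor $R(t) = (X_1\cdots X_d)^\nu$ with the scale factor $a(t)$ of the Corollary, together with $\delta = 0$, $\varepsilon = \nu d\kappa/(d-1)$, the distinguished power $A = 2/\nu$ carrying the constant coefficient $G(t) = -2\nu d\kappa D/(d-1)$, and a single remaining term with $A_1 = 0$ and $G_1(t) = -\nu d\kappa(\rho+p)/(d-1)$. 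A short check confirms that these substitutions reproduce, through the formulas of the Corollary, the constant $E = -2d\kappa\theta^2 D/(d-1)$, the reparametrization $\dot\sigma = 1/(\theta X_1\cdots X_d)$, the identification $u = (X_1\cdots X_d)^{-1}$, and the coefficients $P = d\kappa\psi'^2/(d-1)$, $F_1 = \theta^2 d\kappa(\uprho+\mathrm{p})/(d-1)$, $C_1 = 1$ appearing in (\ref{eq: BINLS}).

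For the forward implication I would form the linear combination $d(I_0) - \sum_{i=1}^d(I_i)$ and carry it through the manipulations leading to (\ref{eq: BIEMPI1minusIiwitheta}), obtaining
\[ \dot H_R + \frac{\nu}{(d-1)}\sum_{l<k}\eta_{lk}^2 = \frac{-\nu d\kappa}{(d-1)}\left[\dot\phi^2 + (\rho+p)\right]. \]
I would then reuse the argument (via Lemma \ref{lem: shortlemma} applied to $\dot\eta_{ij} + \nu^{-1}\eta_{ij}H_R = 0$) that each $\eta_{ij}X_1\cdots X_d$ is constant, so that $D$, and hence $E$, is a genuine constant. This recasts the combination as the scale factor equation
\[ \dot H_R = -\frac{\nu d\kappa}{(d-1)}\left[\dot\phi^2 + 2D R^{-2/\nu} + (\rho+p)\right], \]
which is precisely the hypothesis of Corollary \ref{cor: EFE-NLSAnonzeroEzero} under the substitution table; its forward direction then yields (\ref{eq: BINLS}) immediately.

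For the converse, given $u, P, \uprho, \mathrm{p}$ solving (\ref{eq: BINLS}), I would run these steps backwards. After solving (\ref{eq: BINLSdotsigma-upsi-P}) and setting $R = u^{-\nu}$, $X_l = R^{1/\nu d}e^{\alpha_l}$ with $\alpha_l = c_l\sigma$ and the $c_l$ obeying (\ref{eq: BINLScconditions}), the task splits into verifying $(I_0)$ and then $(I_1),\dots,(I_d)$. Equation $(I_0)$ follows by computing $H_R$ and $H_l = (\nu d)^{-1}H_R + \dot\alpha_l$, substituting into $\sum_{l<k}H_lH_k$, and using $\sum_l c_l = 0$ together with $\sum_{l<k}c_lc_k = (d-1)E/(2d)$ to collapse the cross terms exactly as in (\ref{eq: BIEMPdotXldotXk/XlXk-Halpha})--(\ref{eq: BIEMPdotXldotXk/XlXk-RD}); the definition (\ref{eq: BINLSVphi-u}) of $V$ is engineered so that $(I_0)$ becomes an identity. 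For the individual $(I_i)$, I would invoke the converse of Corollary \ref{cor: EFE-NLSAnonzeroEzero} to recover the scale factor equation for $R$, i.e. the combination $d(I_0) - \sum_i(I_i)$, and then combine it with $(I_0)$ and the anisotropy identities (\ref{eq: BIEMPddotX/X-alphaH}) and (\ref{eq: BIEMPdotXldotXk/XlXk-Ralphac}), which transfer verbatim from Theorem \ref{thm: BIEMP}.

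The main obstacle is the converse recovery of the separate equations $(I_i)$: one must show that the reconstructed anisotropic factors satisfy each $(I_i)$, not merely their average. All of the needed lemmas about the $\eta_{lk}$ and $\alpha_l$ sums are, however, already in place from Theorem \ref{thm: BIEMP} and depend only on the structural conditions (\ref{eq: BINLScconditions}) on the $c_l$ and on Lemma \ref{lem: shortlemma}; since $E$ here plays the role that $-2d\kappa\theta^2 D/(d-1)$ played there, and since the hypothesis $E<0$ is exactly what guarantees real $c_l$ with $\sum_{l<k}c_lc_k = -\tfrac12\sum_l c_l^2 \le 0$, these carry over unchanged. The genuinely new content is therefore confined to checking the parameter correspondences in the substitution table, which is routine.
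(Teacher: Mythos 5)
Your proposal follows essentially the same route as the paper's proof: form $d(I_0)-\sum_i(I_i)$, use Lemma \ref{lem: shortlemma} to show $\eta_{ij}X_1\cdots X_d$ (hence $E$) is constant, apply Corollary \ref{cor: EFE-NLSAnonzeroEzero} with the same substitution table (your $G(t)=-2\nu d\kappa D/(d-1)$ is the paper's $\nu E/\theta^2$), and in the converse verify $(I_0)$ from the engineered $V$ and recover the individual $(I_i)$ by reusing the anisotropy identities of Theorem \ref{thm: BIEMP}. The parameter correspondences and the observation that $E<0$ with $\sum_l c_l=0$ makes the $c_l$ conditions solvable are all consistent with the paper's argument.
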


\begin{proof}
This proof will implement Corollary \ref{cor: EFE-NLSAnonzeroEzero} with constants and functions as indicated in the following table.

\begin{table}[ht]
\centering
\caption{{ Corollary \ref{cor: EFE-NLSAnonzeroEzero} applied to Bianchi I}}\label{tb: BINLS}
\vspace{.2in}
\begin{tabular}{r | l c r | l}
In Corollary & substitute & & In Corollary & substitute \\[4pt]
\hline
\raisebox{-5pt}{$a(t)$} & \raisebox{-5pt}{$R(t)$}      &&    \raisebox{-5pt}{$\varepsilon $} & \raisebox{-5pt}{${\nu d \kappa}/{(d-1)}$}\\[8pt]
$G(t)$ & $\mbox{constant } \nu E/\theta^2$    &&     $A$ & $2/\nu$ \\[8pt]
 $G_{1}(t)$ & $ \frac{-\nu d\kappa}{(d-1)}(\rho(t)+p(t))$  &&      $A_{1}$   &$0$\\[8pt]
$F_{1}(\sigma)$&$ \frac{\theta^2 d \kappa}{(d-1)} (\uprho(\sigma)+\mathrm{p}(\sigma)) $ &&        $C_{1}$&  $1$ \\[6pt]
\hline
\end{tabular}
\end{table}

\break
Much of this proof will rely on computations that are exactly the same as those seen in the proof of Theorem \ref{thm: BIEMP} (the generalized EMP formulation of Bianchi I).  Therefore we will restate the relevant results here, but point the reader to the details in the proof of Theorem \ref{thm: BIEMP}.

To prove the forward implication, we assume to be given functions which solve the Einstein field equations $(I_0), \dots, (I_d)$.  Forming the linear combination $d(I_0)-\displaystyle\sum_{i=1}^d(I_i)$ of Einstein's equations and simplifying, as was done in (\ref{eq: BIEMPdI1minusIi}) - (\ref{eq: BIEMPI1minusIiwitheta}), 
\begin{equation}\dot{H}_R+\frac{\nu}{(d-1)}\displaystyle\sum_{l<k}\eta_{lk}^2=\frac{-\nu d\kappa}{(d-1)}\left[\dot{\phi}^2+ \left(\rho+p\right)\right]\label{eq: BINLSI1minusIiwitheta}\end{equation}
where 
\begin{equation}H_R(t)\stackrel{def.}{=}\frac{\dot{R}(t)}{R(t)}\end{equation}
and
\begin{equation}R(t)\stackrel{def.}{=}(X_1(t)\cdots X_d(t))^\nu\label{eq: BINLSR-X}\end{equation}
for any $\nu\neq 0$.  Next we will confirm that $E$ is constant.  As was done in (\ref{eq: BIEMPlhsIi=lhsIj})-(\ref{eq: BIEMPdotf+fH=0withetaH}), since the right-hand sides of Einstein's equations $(I_i)$  are the same for all $i\in\{1, \dots, d\}$, by equating the left-hand sides of any two equations $(I_i)$ and $(I_j)$ for $i\neq j$, and after some rearranging we obtain
\begin{equation}\dot\eta_{ij}+\frac{1}{\nu}\eta_{ij}H_R=0\label{eq: BINLSdoteta+stuff=0}\end{equation}
for $\eta_{ij}$ defined in (\ref{eq: BINLSeta-X}).  Therefore the definition (\ref{eq: BINLSE-X}) of $E$ is constant, being proportional to a sum of squares of these constant functions.  By the definitions (\ref{eq: BINLSE-X}) and (\ref{eq: BINLSR-X}) of the constant $E$ and the function $R(t)$, we now rewrite (\ref{eq: BINLSI1minusIiwitheta}) from above as 
\begin{equation}\dot{H}_R=\frac{-\nu d \kappa}{(d-1)}\left[\dot\phi^2 +(\rho+p)\right]+\frac{\nu E}{\theta^2 R^{2/\nu}}.\label{eq: BINLSI0minusIi}\end{equation}
This shows that $R(t), \phi(t), \rho(t)$ and $p(t)$ satisfy the hypothesis of Corollary \ref{cor: EFE-NLSAnonzeroEzero}, applied with  constants $\varepsilon , N, A, A_1 \dots, A_{N}$ and functions $a(t), G(t), G_1(t),\dots,G_{N}(t)$ according to Table \ref{tb: BINLS}.  Since $\sigma(t), u(\sigma), P(\sigma)$ and $\psi(\sigma)$ defined in (\ref{eq: BINLSdotsigma-R}), (\ref{eq: BINLSu-R}), (\ref{eq: BINLSP-psi}) and (\ref{eq: BINLSpsi-phi})  are equivalent to that in the forward implication of Corollary \ref{cor: EFE-NLSAnonzeroEzero}, applied with constants and functions according to Table \ref{tb: BINLS}, by this corollary and by definition (\ref{eq: BINLSuprho-rhormp-p}) of $\uprho_i(\sigma), \mathrm{p}(\sigma)$, the Schr\"odinger-type equation (\ref{eq: CNLSANONZERO}) holds for constants $C_1, \dots, C_{N}$ and functions $F_1(\sigma), \dots, F_{N}(\sigma)$ as indicated in Table \ref{tb: BINLS}.  This proves the forward implication.

To prove the converse implication, we assume to be given functions which solve the Schr\"odinger-type equation (\ref{eq: BINLS}) and we begin by showing that $(I_0)$ is satisfied.  Differentiating the definition of $R(t)$ in (\ref{eq: BINLSR-ualpha-sigma}) and using the definition in (\ref{eq: BINLSdotsigma-upsi-P}) of $\sigma(t)$, we obtain
\begin{eqnarray}\dot{R}(t)
&=&-\nu u(\sigma(t))^{-\nu-1}u'(\sigma(t))\dot\sigma(t)\notag\\
&=&-\frac{\nu}{\theta}u(\sigma(t))^{-\nu}u'(\sigma(t)).\label{eq: BINLSdotR-u}\end{eqnarray}
so that 
\begin{equation}H_R\stackrel{def.}{=}\frac{\dot{R}}{R}=-\frac{\nu}{\theta}u'(\sigma(t)).\label{eq: BINLSH-u}\end{equation}
Differentiating the definition (\ref{eq: BINLSphi-psi}) of $\phi(t)$ and using the definition in (\ref{eq: BINLSR-ualpha-sigma}) of $\sigma(t)$, we obtain
\begin{equation}
\dot{\phi}(t)=\psi '(\sigma(t))\dot{\sigma}(t)=\frac{1}{\theta}\psi '(\sigma(t))u(\sigma(t)).\label{eq: BINLSphi-u}
\end{equation}
Using (\ref{eq: BINLSH-u}) and (\ref{eq: BINLSphi-u}), and also the definitions (\ref{eq: BINLSR-ualpha-sigma}) and (\ref{eq: BINLSrho-uprhop-rmp}) of $R(t)$ and $\rho(t)$ respectively, the definition (\ref{eq: BINLSVphi-u}) of $V\circ\phi$ can be written as
\begin{eqnarray}
V\circ\phi
&=&\frac{1}{\kappa}\left(\frac{(d-1)}{2\nu^2 d}H_R^2+\frac{(d-1)E}{2d\theta^{2}R^{2/\nu}}\right)-\frac{1}{2}\dot{\phi}^2-\rho-\frac{\Lambda}{\kappa}.\label{eq: BINLSVphi-RE}
\end{eqnarray}
The quantity in parenthesis here is in fact equal to the left-hand-side of equation $(I_0)$.  
To see this, first note that the definitions
$X_l(t)\stackrel{def.}{=}R(t)^{1/\nu d}e^{\alpha_l(t)}$ in (\ref{eq: BINLSX-Ralpha}) and
$H(t)\stackrel{def.}{=}\frac{\dot{R}(t)}{R(t)}$ in (\ref{eq: BINLSH-u}), and the condition $\sum_l c_l=0$, are the same as those in Theorem \ref{thm: BIEMP}.
Also by the definitions (\ref{eq: BINLSR-ualpha-sigma}), (\ref{eq: BINLSdotsigma-upsi-P}) and (\ref{eq: BINLSR-ualpha-sigma})  of $\alpha_l(t), \sigma(t)$ and $R(t)$, we obtain 
\begin{equation}\dot\alpha_l\dot\alpha_k=c_lc_k\dot\sigma^2=\frac{c_lc_k}{\theta^2}(u\circ\sigma)^2=\frac{c_lc_k}{\theta^2R(t)^{2/\nu}}\label{eq: BINLSdotalphaldotalphak-Rc},\end{equation}
which is a slightly modified version of (\ref{eq: BIEMPdotalphaldotalphak-R}) from our computation in the proof of Theorem \ref{thm: BIEMP}.  Therefore by the arguments in (\ref{eq: BIEMPdotXldotXk/XlXk-Halpha})-(\ref{eq: BIEMPdotXldotXk/XlXk-Rc}), and using (\ref{eq: BINLSdotalphaldotalphak-Rc}) to slightly modify the last term to apply here, we have that
\begin{equation}\displaystyle\sum_{l<k}H_lH_k=\frac{(d-1)}{2\nu^2 d}H_R^2+\displaystyle\sum_{l<k}\frac{c_lc_k}{\theta^{2}R^{2/\nu}}.
\label{eq: BINLSdotXldotXk/XlXk-Rc}
\end{equation}
Then by the condition (\ref{eq: BINLScconditions}) on the constants $c_l$, (\ref{eq: BINLSdotXldotXk/XlXk-Rc}) becomes
\begin{equation}\displaystyle\sum_{l<k}H_lH_k=\frac{(d-1)}{2\nu^2 d}H_R^2+\frac{(d-1)E}{2d\theta^{2}R^{2/\nu}}.
\label{eq: BINLSdotXldotXk/XlXk-RE}
\end{equation}
That is, the expression (\ref{eq: BINLSVphi-RE}) for $V$ can now be written as
\begin{equation}V\circ\phi=\frac{1}{\kappa}\displaystyle\sum_{l<k}H_lH_k-\frac{1}{2}\dot{\phi}^2-\rho-\frac{\Lambda}{\kappa},\label{eq: BIEMPVphi-X}
\end{equation}
showing that $(I_0)$ holds under the assumptions of the converse implication.

To conclude the proof we must also show that the equations $(I_1), \dots, (I_d)$ hold.   In the converse direction the hypothesis of the converse of Corollary \ref{cor: EFE-NLSAnonzeroEzero} holds, applied with constants $N, C_1, \dots, C_{N}$ and functions $F_1(\sigma), \dots, F_{N}(\sigma)$ as indicated in Table \ref{tb: BINLS}.  Since  $\sigma(t), \psi(\sigma), R(t)$ and $\phi(t)$ defined in (\ref{eq: BINLSdotsigma-upsi-P}), (\ref{eq: BINLSR-ualpha-sigma}) and (\ref{eq: BINLSphi-psi}) are consistent with the converse implication of Corollary \ref{cor: EFE-NLSAnonzeroEzero}, applied with $a(t)$ and $\varepsilon $ as in Table \ref{tb: BINLS}, by this corollary and by the definition (\ref{eq: BINLSrho-uprhop-rmp}) of $\rho(t), p(t)$ the  scale factor  equation (\ref{eq: CEFEANONZERO}) holds for constants $\varepsilon , A, A_1, \dots, A_{N}$ and functions $G(t),G_1(t),\dots,G_{N}(t)$ according to Table \ref{tb: BINLS}.  That is, we have regained (\ref{eq: BINLSI0minusIi}).  Now solving (\ref{eq: BINLSVphi-RE}) for $\rho(t)$ and substituting this into (\ref{eq: BINLSI0minusIi}), we obtain
\begin{equation}
\dot{H}_R
=
\frac{-\nu d\kappa}{(d-1)}
\left[ 
\frac{1}{2}\dot\phi^2 -V\circ\phi+\frac{(d-1)E}{2 d\theta^2\kappa R^{2/\nu}}+ p +\frac{(d-1)}{2\nu^2 d\kappa}H_R^2-\frac{\Lambda}{\kappa} \right].\end{equation}
Multiplying by $\frac{(d-1)}{\nu d}$ and rearranging, we see that
\begin{equation}
\frac{(d-1)}{\nu d}\dot{H}_R+\frac{(d-1)}{2\nu^2 d}H_R^2+\frac{(d-1)E}{2 d\theta^2 R^{2/\nu}}
=
-\kappa
\left[ 
\frac{1}{2}\dot\phi^2 -V\circ\phi+p \right]+\Lambda.\label{eq: BINLSlhs=lhsofIi}\end{equation}

The left-hand side of this equation is in fact equal to the left-hand-side of $(I_i)$ for any $i\in\{1,\dots, d\}$.  To see this, again we use that the definitions $X_l(t)\stackrel{def.}{=}R(t)^{1/\nu d}e^{\alpha_l(t)}$ in (\ref{eq: BINLSX-Ralpha}) and
$H_R(t)\stackrel{def.}{=}\frac{\dot{R}(t)}{R(t)}$ in (\ref{eq: BINLSH-u}), and the condition $\sum_l c_l=0$, are the same as those in Theorem \ref{thm: BIEMP}.  Also by the definitions (\ref{eq: BINLSR-ualpha-sigma}) and (\ref{eq: BINLSdotsigma-upsi-P}) of $\alpha_l(t), \sigma(t)$ and $R(t)$, we see that
\begin{eqnarray}
\dot\alpha_l(t) R(t)^{1/\nu}
&=&c_l\dot\sigma(t) R(t)^{1/\nu}\notag\\
&=&\frac{c_l}{\theta}u(\sigma(t))R(t)^{1/\nu}\notag\\
&=&\frac{c_l}{\theta }\mbox{ is a constant}
\label{eq: BINLSdotalphaRconstant}\end{eqnarray}
which shows that
\begin{equation}\ddot\alpha_l+\frac{1}{\nu}\dot\alpha_l H_R=0\label{eq: BINLSddotalpha+stuff=0}\end{equation}
holds here as it does in Theorem \ref{thm: BIEMP}.  Therefore by the arguments in  (\ref{eq: BIEMPddotX/X-Halpha})-(\ref{eq: BIEMPBIlhsIi}), and as above using (\ref{eq: BINLSdotalphaldotalphak-Rc}) to slightly modify the last term of (\ref{eq: BIEMPBIlhsIi}) to apply here, we have that
\begin{equation}\displaystyle\sum_{l\neq i}(\dot{H}_l+H_l^2)+\displaystyle\sum_{\stackrel{l<k}{l,k\neq i}}H_lH_k=\frac{(d-1)}{\nu d}\dot{H}_R+\frac{d(d-1)}{2\nu^2 d^2}H_R^2-\sum_{l<k}\frac{c_lc_k}{\theta^{2}R^{2/\nu}}.
\label{eq: BINLSBIlhsIi}\end{equation}
Then by the condition (\ref{eq: BINLScconditions}) on the constants $c_l$, (\ref{eq: BINLSBIlhsIi}) becomes
\begin{equation}\displaystyle\sum_{l\neq i}(\dot{H}_l+H_l^2)+\displaystyle\sum_{\stackrel{l<k}{l,k\neq i}}H_lH_k=\frac{1}{\nu d}(d-1)\dot{H}_R+\frac{d(d-1)}{2\nu^2 d^2}H_R^2-\sum_{l<k}\frac{(d-1)E}{2d\theta^{2}R^{2/\nu}}.
\label{eq: BINLSdotXldotXk/XlXk-Ralpha}\end{equation}
Combining (\ref{eq:  BINLSlhs=lhsofIi}) and (\ref{eq: BINLSdotXldotXk/XlXk-Ralpha}), we obtain $(I_i)$ for all $i\in\{1,\dots,d\}$.  This proves the theorem.

\end{proof}

\break 

\subsection{Reduction to linear Schr\"odinger: pure scalar field}
To show some examples we take $\rho=p=0$ so that Theorem \label{thm: BINLS} shows that solving the Bianchi I  Einstein equations 
\begin{equation}
\displaystyle\sum_{l< k}H_lH_k
\stackrel{(I_0)''}{=}
\kappa\left[\frac{1}{2}\dot\phi^2+V\circ\phi\right]+\Lambda\label{eq: BIEFErhopzero}\end{equation}
\begin{eqnarray}
\displaystyle\sum_{l\neq i}(\dot{H}_l+H_l^2)+\displaystyle\sum_{\stackrel{l < k}{l,k\neq i}}H_lH_k&\stackrel{(I_i)''}{=}&-\kappa\left[\frac{1}{2}\dot\phi^2-V\circ\phi\right]+\Lambda\notag
\end{eqnarray}
is equivalent to solving the linear Schr\"odinger equation
\begin{equation}u''(\sigma)+\left[E-P(\sigma)\right]u(\sigma)=0
.\label{eq: rhop=0BINLS}\end{equation}
The solutions of $(I_0)'',(I_i)''$ in (\ref{eq: BIEFErhopzero}) and the solutions of (\ref{eq: rhop=0BINLS}) are related by 
\begin{equation}R(t)=(X_1(t)\cdots X_d(t))^{\nu}=u(\sigma(t))^{-\nu}\mbox{ and }\psi '(\sigma)^2= \frac{(d-1)}{d\kappa} P(\sigma)\label{eq:  rhop=0BIclassLSR-uvarpsi-P}\end{equation}
for any $\nu\neq 0$ and where $\phi(t)=\psi(\sigma(t))$ and 
\begin{equation}\dot\sigma(t)=\frac{1}{\theta X_1(t)\cdots X_d(t)}=\frac{1}{\theta}u(\sigma(t)),\label{eq: rho=p=0BILSdotsigma-X-u}\end{equation}
for $\theta>0$.  Also the constant $E$ is
\begin{equation}E\stackrel{def.}{=}\frac{-\theta^2}{(d-1)}X_1^2X_2^2\cdots X_d^2\left(\displaystyle\sum_{l<k}\eta_{lk}^2\right)\label{eq: rho=p=0BILSEconstant-X}\end{equation}
for
$\eta_{lk}\stackrel{def.}{=}H_l-H_k$, $l, \ k\in\{1,\dots, d\}$.  In the converse direction
\begin{equation}X_l(t)=R(t)^{1/\nu d}e^{\alpha_l(t)}\label{eq: rho=p=0BILSX-Ralpha}\end{equation}
for $\alpha_l(t)\stackrel{def.}{=}c_l\sigma(t)$ and where the constants $c_l$ satisfy
\begin{equation}\displaystyle\sum_{l=1}^d c_l=0\qquad\mbox{ and }\qquad\qquad\displaystyle\sum_{l<k}c_lc_k=\frac{(d-1)E}{2d}.\label{eq: rho=p=0BILScconstantscondition}\end{equation}
Also $V$ is taken to be such that
\begin{equation}
V(\phi(t))=\left[\frac{(d-1)}{2\theta^2 d\kappa}\left( (u')^2
+Eu^2\right)-\frac{1}{2\theta^2}u^2(\psi ')^2
-\frac{\Lambda}{\kappa}\right]\circ\sigma(t).
\label{eq: rho=p=0BINLSVphi-u}
\end{equation}

We now refer to Appendix E for solutions of the linear Schr\"odinger equation (\ref{eq: rhop=0BINLS}).  We will map them using the theorem to exact solutions of Einstein's equations.  Of course when $E=0$ in (\ref{eq: rho=p=0BILSEconstant-X}), $\eta_{lk}=0$ for all pairs $l,k$ so that  $X_1(t), X_2(t), \dots, X_d(t)$ agree up to a constant multiple. In this case, if we take $R(t)=X_1(t)=\cdots=X_d(t)\stackrel{def.}{=}a(t)$ and $\nu=1/d$ we obtain the FRLW cosmology with curvature  $k=0, n=2d$ and $D=0$ so that we may refer to section 3.3.1 for exact solutions to the Bianchi I Einstein equations if $E=0$.  Here we consider solutions to the linear Schr\"odinger equation (\ref{eq:  rhop=0BINLS}) by referring to Table \ref{tb: exactNLS} with $E<0$.

\begin{example}
For $\theta=1$ and choice of constant $E=-1$, we take solution 4 in Table \ref{tb: exactNLS} with $c_0=-1$, $a_0=1$ and $b_0=0$ so that we have $u(\sigma)=e^{-\sigma}$, $P(\sigma)=0$.  By (\ref{eq: sigmaeqnforu=-expr=general}) - (\ref{eq: dotsigmausigmaforu=-expr=general}) with $r_0=1$ we obtain $\sigma(t)=\ln\left((t-t_0)\right)$ and 
\begin{equation}R(t)=\frac{1}{u(\sigma(t))^{\nu}}=(t-t_0)^{\nu}\end{equation}
for any $\nu\neq 0$ $t_0\in\mathds{R}$ so that
\begin{eqnarray}X_i(t)&=&R(t)^{1/\nu d}e^{c_i\sigma(t)}=(t-t_0)^{1/d+c_i}
\end{eqnarray}
for constants that satisfy
\begin{equation}
\displaystyle\sum_{l=0}^d c_l=0\mbox{ and }\displaystyle\sum_{l<k}c_lc_k = -\frac{(d-1)}{2d}.\end{equation}
Since $P=0=\psi'(\sigma)$,
$\psi(\sigma)=\psi_0$
for constant $\psi_0\in\mathds{R}$.  Finally, by (\ref{eq: rho=p=0BINLSVphi-u}), (\ref{eq: uprimesigmaforu=-expr=general}) and (\ref{eq: dotsigmausigmaforu=-expr=general}), we obtain constant potential $V(\phi(t))=-\Lambda/\kappa$.  This is alternate derivation of the Kasner vacuum solution seen above in Example 32.  


For $\nu=1/2$ and $t_0=0$, the solver was run with $u$ and $u'$ both perturbed by $.01$.  The graphs of $R(t)$ below show that this solution is unstable, since the absolute error grows up to two orders of magnitude over the graphed time interval.

\begin{figure}[htbp]
\centering
\caption{Instability of Bianchi I Example \theexample}
\includegraphics[width=4in]{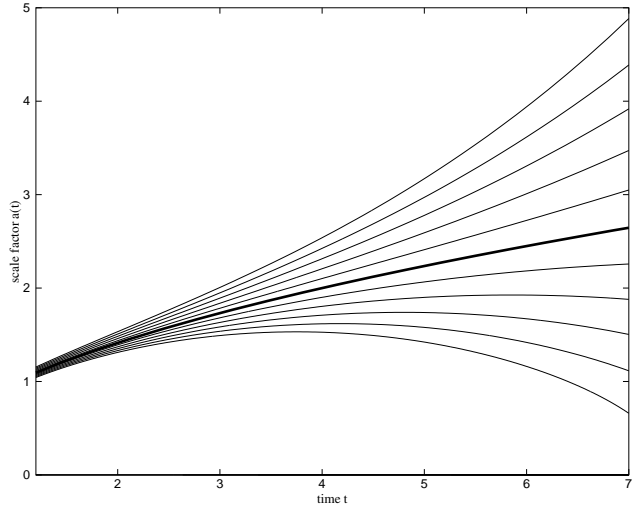}
\end{figure}

\end{example}

\begin{example}
For $\theta=1$ and choice of constant $E=-1$, we take solution 4 in Table \ref{tb: exactNLS} with $c_0=-1$ and $a_0,b_0>0$ so that we have $u(\sigma)=a_0e^{-\sigma}-b_0e^{\sigma}$ and $P(\sigma) = 0$.  By (\ref{eq: sigmaeqnforu=linearcomboexpr=1}) - (\ref{eq: dotsigmausigmaforu=linearcomboexpr=1}) we obtain $\sigma(t)=\ln\left(\sqrt{\frac{a_0}{b_0}}tanh(\sqrt{a_0b_0}(t-t_0))\right)$ and 
\begin{equation}R(t)=\frac{1}{u(\sigma(t))^{\nu}}=\frac{1}{(2\sqrt{a_0b_0})^{\nu}} \ sinh^{\nu}(2\sqrt{a_0b_0}(t-t_0))\end{equation}
for $\nu\neq 0$ and $t_0\in\mathds{R}$ so that
\begin{eqnarray}X_i(t)&=&R(t)^{1/\nu d}e^{c_i \sigma(t)}\\
&=&\frac{   \sqrt{a_0}^{c_i-1/d}}{\sqrt{b_0}^{c_i+1/d}}    \ sinh^{1/d+c_i}(\sqrt{a_0b_0}(t-t_0))
cosh^{1/d-c_i}(\sqrt{a_0b_0}(t-t_0))
.\notag\end{eqnarray}
   Since $P=0=\psi'(\sigma)$, the scalar field is constant
\begin{equation}\psi(\sigma)=\psi_0\in\mathds{R}.\end{equation}
Finally, by (\ref{eq: rho=p=0BINLSVphi-u}), (\ref{eq: uprimesigmaforu=linearcomboexpr=1}) and (\ref{eq: dotsigmausigmaforu=linearcomboexpr=1}), we obtain constant potential
\begin{eqnarray}
V(\phi(t))
&=&\frac{2(d-1)a_0b_0}{ d\kappa}-\frac{\Lambda}{\kappa}
\end{eqnarray}
since $coth^2(x)-csch^2(x)=1$.  

As an example for $d=3$, one can take $c_1=-\frac{1}{\sqrt{3}}, c_2=0$,  $c_3=\frac{1}{\sqrt{3}}$ and $a_0=b_0=1$ so that 
\begin{eqnarray}
X_1(t)&=&sinh^{(1-\sqrt{3})/3}((t-t_0))cosh^{(1+\sqrt{3})/3}((t-t_0))\notag\\
X_2(t)&=& sinh^{1/3}((t-t_0)) cosh^{1/3}((t-t_0))\notag\\
X_3(t)&=&sinh^{(1+\sqrt{3})/3}((t-t_0))cosh^{(1-\sqrt{3})/3}((t-t_0))\end{eqnarray}
and the potential $V(\phi(t))=\frac{4}{ 3\kappa}-\frac{\Lambda}{\kappa}$ solve the Bianchi I equations in $3+1$ spacetime dimensions.  One can also compare this solution with the Bali and Jain solution in section 2 of \cite{BJ}.

\end{example}

\begin{example}
For $\theta=1$ and choice of constant $E=-1$, we take solution 5 in Table \ref{tb: exactNLS} with $c_0=-1$ and $b_0 =0$ so that we have $u(\sigma)=(a_0/\sigma)e^{-\sigma^2/2}$ and $P(\sigma)=\sigma^2+2/\sigma^2$  for $a_0>0$.  By (\ref{eq: sigmaeqnforu=e^x^2/xr=1cnegative}) - (\ref{eq: dotsigmausigmaforu=e^x^2/xr=1cnegative}) we obtain $\sigma(t)=\sqrt{2\ln( a_0(t-t_0))}$ and 
\begin{eqnarray}
R(t)&=&\frac{1}{u(\sigma(t))^{\nu}}
=\left(\sqrt{2}(t-t_0)\sqrt{\ln( a_0(t-t_0))}\right)^{\nu}
\end{eqnarray}
for $t>t_0$ so that 
\begin{eqnarray}
X_i(t)&=&R(t)^{1/\nu d}e^{c_i\sigma(t)}\notag\\
&=&\left(\sqrt{2}(t-t_0)\sqrt{\ln( a_0(t-t_0))}\right)^{1/d}  e^{c_i \sqrt{2\ln( a_0(t-t_0))}}.
\end{eqnarray}
for constants $c_i$ that satisfy
\begin{equation}\displaystyle\sum_{l=1}^d c_l=0\mbox{ and }\displaystyle\sum_{l<k}c_lc_k=-\frac{(d-1)}{2d}.\end{equation}
By (\ref{eq:  phiforu=xe^x^2r=1cnegative}) with $\alpha_0=(d-1)/d\kappa$, we have scalar field
\begin{eqnarray}
\phi(t)&=&\psi(\sigma(t))\notag\\
&=&\sqrt{ \frac{ (d-1)}{2d\kappa} } \left(\sqrt{2 \ln^2( a_0(t-t_0)) +1} + \ln\left[2\ln( a_0(t-t_0))\right] \right.\notag\\
&&\left. - \ln\left[4 +  4\sqrt{2 \ln^2( a_0(t-t_0))   + 1} \right] \right)+\beta_0
\end{eqnarray}
for $\beta_0\in\mathds{R}$.  Finally, by (\ref{eq:  rho=p=0BINLSVphi-u}), (\ref{eq: uprimesigmaforu=e^x^2/xr=1cnegative}) and (\ref{eq: dotsigmausigmaforu=e^x^2/xr=1cnegative}), we obtain
\begin{eqnarray}
V(\phi(t))&=&
\left[\frac{(d-1)}{2 d\kappa}\left( (u')^2
-u^2\right)-\frac{1}{2\theta^2}u^2(\psi ')^2
-\frac{\Lambda}{\kappa}\right]\circ\sigma(t)\notag\\
 &=&\frac{(d-1)(2\ln(a_0(t-t_0))-1)}{8d\kappa  (t-t_0)^2 \ln^2( a_0(t-t_0))}
-\frac{\Lambda}{\kappa}.
\end{eqnarray}

For example when $d=3$, we can take $c_1=-\frac{1}{\sqrt{3}}, c_2=0$ and $c_3=\frac{1}{\sqrt{3}}$ to obtain
\begin{eqnarray}
X_1(t)&=&\left(\sqrt{2}(t-t_0)\sqrt{\ln( a_0(t-t_0))}\right)^{1/3}  e^{-\frac{1}{\sqrt{3}} \sqrt{2\ln( a_0(t-t_0))}}\notag\\
X_2(t)&=&\left(\sqrt{2}(t-t_0)\sqrt{\ln( a_0(t-t_0))}\right)^{1/3}\notag\\
X_3(t)&=&\left(\sqrt{2}(t-t_0)\sqrt{\ln( a_0(t-t_0))}\right)^{1/3}  e^{\frac{1}{\sqrt{3}}  \sqrt{2\ln( a_0(t-t_0))}},
\end{eqnarray}
\begin{eqnarray}
\phi(t)&=&\sqrt{\frac{1}{3\kappa}}\left(\sqrt{2 \ln^2( a_0(t-t_0)) +1}  + \ln\left[2\ln( a_0(t-t_0))\right] \right.\notag\\
&&\left. - \ln\left[4+ 4\sqrt{2 \ln^2( a_0(t-t_0))   + 1} \right] \right)+\beta_0
\end{eqnarray}
and 
\begin{equation}V(\phi(t))=\frac{(2\ln(a_0(t-t_0))-1)}{12\kappa  (t-t_0)^2 \ln^2( a_0(t-t_0))}
-\frac{\Lambda}{\kappa}\end{equation}
for $\beta_0, t_0 \in\mathds{R}$ and $a_0>0$.  Note that this is similar to the FRLW solution found in Example 21 by setting $n=2d$ and by identifying $a(t)$ in Example 21 with $R(t)$ here.

For $\nu=a_0=1$ and $t_0=0$, the solver was run with $u$ and $u'$ both perturbed by $.01$.  The graphs of $R(t)$ below show that this solution is unstable, since the absolute error grows three orders of magnitude over the graphed time interval.

\break

\begin{figure}[htbp]
\centering
\caption{Instability of Bianchi I Example \theexample}
\includegraphics[width=4in]{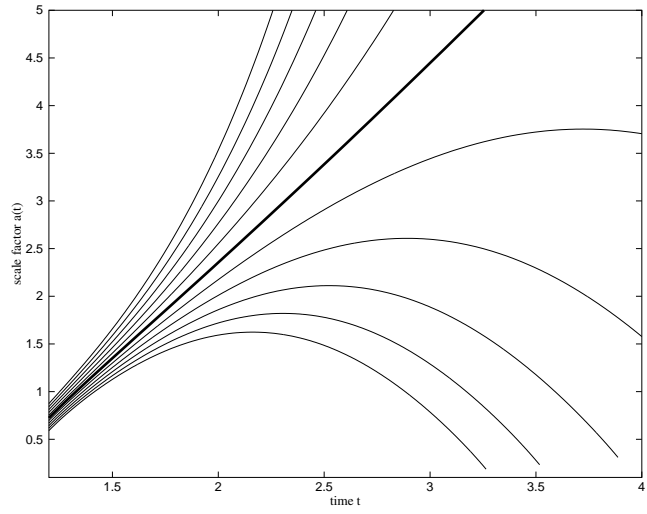}
\end{figure}

\end{example}

\chapter{Reformulations of a conformal Bianchi I model}

For a change of coordinates in comparison to chapter 4, we consider a Bianchi I metric of the form
\begin{equation} ds^2=-\left(a_1(t)a_2(t)\cdots a_d(t)\right)^2dt^2+a_1(t)^2 dx_1^2+a_2^2(t)dx_2^2+\cdots +a_d^2(t)dx_d^2\label{eq: confBImetric}
\end{equation}
in a $d+1-$dimensional spacetime.  The nonzero Einstein equations $g^{ij}G_{ij}=-\kappa g^{ij}T_{ij}+\Lambda$, multiplied by $|g_{00}|=(a_1 a_2 \cdots a_d)^2$ are 
\begin{equation}
\displaystyle\sum_{l<k}H_lH_k 
\stackrel{(I_0)}{=} 
\kappa\left[\frac{\dot{\phi}^2}{2}+(a_1a_2\cdots a_d)^2\left(V\circ\phi+\rho+\frac{\Lambda}{\kappa}\right) \right]\label{eq: confBIEFEI0Dd}\end{equation}
\begin{eqnarray}
\displaystyle\sum_{l\neq 1}\dot{H}_l-\displaystyle\sum_{l<k}H_lH_k 
&\stackrel{(I_1)}{=} &
\kappa\left[-\frac{\dot{\phi}^2}{2}+(a_1a_2\cdots a_d)^2\left(V\circ\phi-p+\frac{\Lambda}{\kappa}\right) \right]\notag\\
&\vdots&\notag\\
\displaystyle\sum_{l\neq i}\dot{H}_l-\displaystyle\sum_{l<k}H_lH_k 
&\stackrel{(I_i)}{=} &
\kappa\left[-\frac{\dot{\phi}^2}{2}+(a_1a_2\cdots a_d)^2\left(V\circ\phi-p+\frac{\Lambda}{\kappa}\right) \right]\notag\\
&\vdots&\notag\\
\displaystyle\sum_{l\neq d}\dot{H}_l-\displaystyle\sum_{l<k}H_lH_k 
& \stackrel{(I_d)}{=} &
\kappa\left[-\frac{\dot{\phi}^2}{2}+(a_1a_2\cdots a_d)^2\left(V\circ\phi-p+\frac{\Lambda}{\kappa}\right) \right]\notag
\end{eqnarray}
where $H_l(t)\stackrel{def.}{=}\frac{\dot{a}_l}{a_l}$.

\break

\section{In terms of a Generalized EMP}

\begin{thm}\label{thm: confBIEMP}  Suppose you are given twice differentiable functions $a_1(t),\dots, a_d(t)>0$, a once differentiable function $\phi(t)$ and also functions $\rho(t), p(t), V(x)$ which satisfy the Einstein equations $(I_0), \dots, (I_d)$ for some $\Lambda\in\mathds{R}, d\in\mathds{N}\backslash\{0,1\}, \kappa\in\mathds{R}\backslash\{0\}$.  Denote
\begin{equation}R(t)\stackrel{def.}{=}(a_1(t) a_2(t)\cdots a_d(t))^\nu\label{eq: confBIR-a}\end{equation}
for some $\nu\neq 0$.  If $f(\tau)$ is the inverse of a function $\tau(t)$ which satisfies
\begin{equation}\dot{\tau}(t)=\theta R(t)^{q+\frac{1}{\nu}},\label{eq: confBIdottau-R}\end{equation}
for some constants $\theta>0$ and $q\neq 0$, then
\begin{equation}Y(\tau)=R(f(\tau))^{q}\qquad\mbox{ and }\qquad Q(\tau)=\frac{q\nu d\kappa}{(d-1)}
\varphi'(\tau)^2 \label{eq: confBIY-RQ-varphi}\end{equation}
solve the generalized EMP equation
\begin{equation}Y''(\tau)+Q(\tau)Y(\tau)=\frac{q\nu L}{\theta^2Y(\tau)^{(q\nu+2)/q\nu}}- \frac{q\nu d \kappa\left(\varrho(\tau)+\textup{\textlhookp}(\tau)\right)}{\theta^2(d-1)Y(\tau)}\label{eq: confBIEMP}\end{equation}
for 
\begin{equation}\varphi(\tau)=\phi(f(\tau))\label{eq: confBIvarphi-phi}\end{equation}
\begin{equation}\varrho(\tau)=\rho(f(\tau)), \ \textup{\textlhookp}(\tau)=p(f(\tau)). \label{eq: confBIEMPvarrho-rhohookp-p}\end{equation}
\begin{equation}L\stackrel{def.}{=}\frac{2}{(d-1)}\displaystyle\sum_{1<l<k\leq d}\mu_l\mu_k-\displaystyle\sum_{j=2}^d\mu_j^2,\label{eq: confBIlambda0-mul}\end{equation}
where $\mu_i\in\mathds{R}$ are such that $a_i(t)=\omega_ie^{\mu_i t}a_1(t)$ for some $\omega_i\in\mathds{R},  i\in\{2,\dots, d\}$.

Conversely, suppose you given a twice differentiable function $Y(\tau)>0$, a continuous function $Q(\tau)$ and also functions $\varrho(\tau), \textup{\textlhookp}(\tau)$ which solve (\ref{eq: confBIEMP}) for some constants $\theta>0$ and $q, \nu, \kappa \in\mathds{R}\backslash\{0\}, L\in\mathds{R}, d\in\mathds{N}\backslash\{0,1\}$.  In order to construct functions which solve $(I_0), \dots, (I_d)$, first find $\tau(t), \varphi(\tau)$ which solve the differential equations
\begin{equation}\dot{\tau}(t)=\theta Y(\tau(t))^{(q\nu+1)/q\nu}\qquad\mbox{ and }\qquad \varphi'(\tau)^2=\frac{(d-1)}{q\nu d\kappa }Q(\tau).\label{eq: confBIdottau-Yvarphi-Q}\end{equation}
Next find constants $\mu_i, i\in\{2,\dots, d\}$ which satisfy
\begin{equation}L=\frac{2}{(d-1)}\displaystyle\sum_{2\leq l<k\leq d}\mu_l\mu_k-\displaystyle\sum_{j=2}^d\mu_j^2,\label{eq: confBIL-mulconverse}\end{equation}
and let
\begin{equation}R(t)=Y(\tau(t))^{1/q}\label{eq: confBIR-Y}.\end{equation}
Then the functions
\begin{equation}a_1(t)=R(t)^{1/\nu d}(\omega_2\cdots\omega_de^{(\mu_2+\cdots+\mu_d)t})^{-1/d}\label{eq: confBIa1-Rexp}\end{equation}
\begin{equation}a_i(t)=\omega_ie^{\mu_i t}a_1(t)\label{eq: confBIai-a1exp}\end{equation} 
\begin{equation}\phi(t)=\varphi(\tau(t))\label{eq: confBIphi-varphi}\end{equation}
\begin{equation}\rho(t)=\varrho(\tau(t)),\qquad\qquad p(t)=\textup{\textlhookp}(\tau(t))\label{eq: confBIrho-varrhop-textlhookp}\end{equation}
and
\begin{equation}V(\phi(t))=\left[\frac{(d-1)}{2d\kappa} \left(
\frac{\theta^2}{\nu^2  q^2} (Y')^2 + \frac{L}{Y^{2/q\nu}}\right)
 -\frac{\theta^2}{2} (\varphi')^2 Y^{2} 
  -\frac{\Lambda}{\kappa}-\varrho\right]\circ\tau(t)\label{eq: confBIdefnV}\end{equation}
satisfy the Einstein equations $(I_0), \dots, (I_d)$ for any $\omega_i>0 , 2\leq i\leq d$.
\end{thm}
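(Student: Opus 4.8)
The plan is to deduce both implications from Theorem~\ref{thm: EFE-EMP} applied to the auxiliary scale factor $R(t)$ of (\ref{eq: confBIR-a}), via the substitutions $a(t)=R(t)$, $\delta  =-1/\nu$, $\varepsilon =\nu d\kappa/(d-1)$, $N=1$, together with $G_0=\nu L$, $A_0=0$ and $G_1=-\frac{\nu d\kappa}{(d-1)}(\rho+p)$, $A_1=-2/\nu$ (so that $B_0=(q\nu+2)/q\nu$ and $B_1=1$ by (\ref{eq: ABi-Ai}), and $\lambda_0=q\nu L/\theta^2$, $\lambda_1=-\frac{q\nu d\kappa}{\theta^2(d-1)}(\varrho+\textup{\textlhookp})$ by (\ref{eq: AGi-lambdai})). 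The essential mechanism is that $\delta  =-1/\nu$ makes the reparameterization $\dot\tau=\theta a^{q-\delta  }=\theta R^{q+1/\nu}$ of (\ref{eq: Atau-a}) coincide with (\ref{eq: confBIdottau-R}); this is how the conformal factor $(a_1\cdots a_d)^2=R^{2/\nu}$ gets absorbed into the exponents of $Y$.

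For the forward implication I would first form the linear combination $d(I_0)-\sum_{i=1}^d(I_i)$. Since $(I_1),\dots,(I_d)$ share an identical right-hand side, the derivative terms collect as $\sum_{i=1}^d\sum_{l\neq i}\dot H_l=(d-1)\sum_l\dot H_l$ and the quadratic terms to $2d\sum_{l<k}H_lH_k$, yielding
\[
2d\sum_{l<k}H_lH_k-(d-1)\sum_{l=1}^d\dot H_l=d\kappa\dot\phi^2+d\kappa R^{2/\nu}(\rho+p).
\]
Next I would establish the exponential relation among the $a_i$: equating the left-hand sides of $(I_i)$ and $(I_j)$ cancels the common $\sum_{l<k}H_lH_k$ and forces $\dot H_i=\dot H_j$, so each $H_i-H_1$ is a constant $\mu_i$ (with $\mu_1=0$) and $a_i=\omega_ie^{\mu_it}a_1$. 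This is precisely what makes $L$ in (\ref{eq: confBIlambda0-mul}) a genuine constant, and no separate lemma is needed beyond integrating $\dot H_i-\dot H_j=0$.

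Writing $H_R=\nu\sum_l H_l$ and $H_l=H_1+\mu_l$, the crucial algebraic step is the identity
\[
\sum_{l<k}H_lH_k=\frac{(d-1)}{2\nu^2 d}H_R^2+\frac{(d-1)L}{2d},
\]
obtained by substituting $H_1=\frac{1}{\nu d}H_R-\frac{1}{d}\sum_{j=2}^d\mu_j$ and reorganizing the cross terms $\sum_{l<k}\mu_l\mu_k$ and $\sum_{j=2}^d\mu_j^2$ into exactly the combination defining $L$. Feeding this (and $\dot H_R=\nu\sum_l\dot H_l$) into the displayed combination produces the scale factor equation (\ref{eq: AEFE}) with the table constants, whereupon the forward half of Theorem~\ref{thm: EFE-EMP} delivers the EMP (\ref{eq: confBIEMP}).

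For the converse I would run the construction in reverse: solve (\ref{eq: confBIdottau-Yvarphi-Q}) for $\tau$ and $\varphi$, choose $\mu_i$ satisfying (\ref{eq: confBIL-mulconverse}), set $R=Y^{1/q}$, and define $a_1,a_i,\phi,\rho,p$ by (\ref{eq: confBIa1-Rexp})--(\ref{eq: confBIrho-varrhop-textlhookp}); a short check gives $(a_1\cdots a_d)^\nu=R$ and $H_l=H_1+\mu_l$. Using $H_R=\frac{\theta}{q}Y'Y^{1/q\nu}$ and $\dot\phi=\theta\varphi'Y^{(q\nu+1)/q\nu}$ with $R^{2/\nu}=Y^{2/q\nu}$, the defining formula (\ref{eq: confBIdefnV}) for $V$ rearranges term-by-term into $(I_0)$ through the identity above, by design. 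Finally, the converse half of Theorem~\ref{thm: EFE-EMP} regains the combination $d(I_0)-\sum_i(I_i)$; since $\dot H_l=\dot H_1$ for every $l$, the left side $\sum_{l\neq i}\dot H_l-\sum_{l<k}H_lH_k$ of $(I_i)$ is independent of $i$, and its right side is manifestly so, so combining $(I_0)$, the regained combination, and this symmetry yields each $(I_i)$ separately. I expect the main obstacle to be the bookkeeping in the quadratic identity for $\sum_{l<k}H_lH_k$ and the careful tracking of the $2/q\nu$ exponents forced by the nonzero $\delta  =-1/\nu$, rather than any conceptual difficulty.
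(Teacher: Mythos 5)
Your proposal is correct and follows essentially the same route as the paper: the same Table of substitutions into Theorem~\ref{thm: EFE-EMP} (with $\delta=-1/\nu$ absorbing the conformal factor), the same derivation of $a_i=\omega_ie^{\mu_it}a_1$ from equating the left-hand sides of the $(I_i)$, the same key identity $\sum_{l<k}H_lH_k=\frac{(d-1)}{2\nu^2d}H_R^2+\frac{(d-1)}{2d}L$, and the same design-of-$V$ argument plus recovered linear combination for the converse. The only cosmetic difference is that you form $d(I_0)-\sum_i(I_i)$ before rewriting in terms of $H_R$, whereas the paper rewrites $(I_0)$ and $(I_i)$ individually first and then subtracts; the resulting equation $\dot H_R-\frac{1}{\nu}H_R^2-\nu L=\frac{-\nu d\kappa}{(d-1)}[\dot\phi^2+R^{2/\nu}(\rho+p)]$ is identical.
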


\begin{proof}
This proof will implement Theorem \ref{thm: EFE-EMP} with constants and functions as indicated in the following table.

   \begin{table}[ht]
\centering
\caption{{ Theorem \ref{thm: EFE-EMP} applied to conformal Bianchi I}}\label{tb: confBIEMP}
\vspace{.2in}
\begin{tabular}{r | l c r | l}
In Theorem & substitute & & In Theorem& substitute \\[4pt]
\hline
\raisebox{-5pt}{$a(t)$} & \raisebox{-5pt}{$R(t)$}      &&    \raisebox{-5pt}{$N$} & \raisebox{-5pt}{$1$}\\[8pt]
$\delta  $ & $-1/\nu$      &&    $\varepsilon $ & ${\nu d\kappa}/{(d-1)}$\\[8pt]
   $G_0(t)$ & $\mbox{constant } \nu L$    &&     $A_0$ & $0$ \\[8pt]
 $G_{1}(t)$ & $ \frac{-\nu d\kappa}{(d-1)}(\rho(t)+p(t))$  &&      $A_{1}$   &$-2/\nu$\\[8pt]
 $ \lambda_0(\tau)$ &constant ${q\nu L}/{\theta^2}$       &&      $B_0$ &$(2+q\nu)/q\nu$ \\[8pt] 
$\lambda_{1}(\tau)$&$\frac{-q \nu d \kappa}{\theta^2(d-1)}(\varrho(\tau)+\textup{\textlhookp}(\tau))$ &&  $B_{1}$&  $1$ \\[6pt]
\hline
\end{tabular}
\end{table}

To prove the forward implication, we assume to be given functions which solve the Einstein field equations $(I_0), \dots, (I_d)$.   Since the right-hand sides of Einstein equations $(I_i)$ are the same for all $i\in\{1, \dots, d\}$, we begin by equating the left-hand side of $(I_1)$ with the left-hand side of any $(I_{j})$ for $j\in\{2,\dots, d\}$ since it will give us a simplifying relation among the scale factors $a_1(t), \dots, a_d(t)$.  Doing this, we obtain
\begin{equation}
\displaystyle\sum_{l\neq 1}\dot{H}_l
-\displaystyle\sum_{l<k}H_lH_k
=
\displaystyle\sum_{l\neq j}\dot{H}_l
-\displaystyle\sum_{l<k}H_lH_k.\label{eq: confBIEMPequatelhsIi}\end{equation}
All terms cancel except for the $j^{th}$ term from the first sum on the left-hand side, and also the $1^{st}$ term from the first sum on the right-hand side.  This leaves
\begin{equation}\dot{H}_j=\dot{H}_1, \label{eq: confBIEMPequatelhss}\end{equation}
which holds for all $j\in\{1, \dots, d\}$.  Integrating, we obtain
\begin{equation}H_j=H_1+\mu_j\label{eq: confBIEMPdota/a=dota/a+mu}\end{equation}
for $\mu_{j}\in\mathds{R}, j\in\{1,\dots, d\}$ ($\mu_1=0$).
Since in general $\frac{d}{dt}\ln(a_i)=\frac{\dot{a}_i}{a_i}=H_i$, (\ref{eq: confBIEMPdota/a=dota/a+mu}) can be written
\begin{equation}\frac{d}{dt}\ln(a_j)=\frac{d}{dt}\ln(a_1)+\mu_j.\label{eq: confBIEMPlogderiv=logderiv+mu}\end{equation}
Integrating again we get
\begin{equation}\ln(a_j)=\ln(a_1)+\mu_j t+c'_j\label{eq: confBIEMPlna=lna+mut+c}\end{equation}
for some $c_j\in\mathds{R}, j\in\{1,\dots, d\}$ ($c_1=0$). Exponentiating and letting $\omega_j\stackrel{def.}{=}e^{c_j}>0,$ we have that
\begin{equation}a_j(t)=\omega_je^{\mu_j t}a_1(t),\label{eq: confBIEMPaj=cjemuta1}\end{equation}
where of course this holds trivially for $j=1$ where $\omega_1=1$ and $\mu_1=0$.
By (\ref{eq: confBIEMPdota/a=dota/a+mu}), the left-hand side of $(I_0)$ can be written as
\begin{equation}
\displaystyle\sum_{l<k}H_lH_k
=\displaystyle\sum_{l<k}(H_1+\mu_l)(H_1+\mu_k)=\displaystyle\sum_{l<k}(H_1^2+(\mu_l+\mu_k)H_1+\mu_l\mu_k).
\label{eq: confBIEMPlhsI0wrta1}\end{equation}
The first term on the right-hand side of (\ref{eq: confBIEMPlhsI0wrta1}) does not depend on the indices $l,k$.  By using our computation in (\ref{eq: BIEMPsum1l<k}), this term is equal to $H_1^2$ times $\sum_{l<k}1=\frac{d(d-1)}{2}$.  The second term on the right-hand side of (\ref{eq: confBIEMPlhsI0wrta1}) sums to $H_1$ times the quantity
\begin{eqnarray}
\displaystyle\sum_{l<k}(\mu_l+\mu_k)&=&\displaystyle\sum_{l=1}^{d-1}\displaystyle\sum_{k=l+1}^d\mu_l+\displaystyle\sum_{k=2}^d\displaystyle\sum_{l=1}^{k-1}\mu_k\notag\\
&=&\displaystyle\sum_{l=1}^{d-1}(d-l)\mu_l+\displaystyle\sum_{k=2}^d (k-1)\mu_k\notag\\
&=&(d-1)\mu_1+\displaystyle\sum_{j=2}^{d-1}(d-j+j-1)\mu_j+(d-1)\mu_d\notag\\
&=&(d-1)\displaystyle\sum_{j=1}^d\mu_j.\end{eqnarray}
Therefore (\ref{eq: confBIEMPlhsI0wrta1}) becomes
\begin{equation}\displaystyle\sum_{l<k}H_lH_k 
=
\frac{d(d-1)}{2}H_1^2+(d-1)H_1\displaystyle\sum_{j=1}^d\mu_j+\displaystyle\sum_{l<k}\mu_l\mu_k
\label{eq: confBIEMPnewlhsI0}.\end{equation}
Next using the definition (\ref{eq: confBIR-a}) of $R(t)$, we further define  
\begin{eqnarray}
H_R
&\stackrel{def.}{=}&\frac{\dot{R}}{R}\notag\\
&=&\frac{\nu (a_1\cdots a_d(t))^{\nu -1} (\dot{a}_1\cdots a_d+ \cdots+ a_1\cdots \dot{a}_d)}{(a_1\cdots a_d)^\nu}\notag\\
&=&\nu\displaystyle\sum_{j=1}^d H_j\notag\\
&=&\nu\displaystyle\sum_{j=1}^d\left(H_1+\mu_j\right)\notag\\
&=&\nu\left(dH_1+\displaystyle\sum_{j=1}^d\mu_j\right)\label{eq: confBIHR-H1}
\end{eqnarray}
by (\ref{eq: confBIEMPdota/a=dota/a+mu}).  Therefore 
\begin{equation}H_1=\frac{1}{\nu d}H_R-\frac{1}{d}\displaystyle\sum_{j=1}^d\mu_j\label{eq: confBIEMPH1wrtHR}\end{equation}
and
\begin{equation}\dot{H}_1=\frac{1}{\nu d}\dot{H}_R\label{eq: confBIEMPdotH1wrtdotHR}\end{equation}
so that (\ref{eq: confBIEMPnewlhsI0}) becomes
\begin{equation}\displaystyle\sum_{l<k}H_lH_k=\quad\qquad\qquad\qquad\qquad\qquad\qquad\qquad\qquad\qquad\qquad\qquad\qquad\qquad\label{eq: confBIlhsI0HR}\end{equation}
\begin{equation}
\qquad\frac{d(d-1)}{2}\left( \frac{1}{\nu d}H_R-\frac{1}{d}\displaystyle\sum_{j=1}^d\mu_j    \right)^2+(d-1)\left( \frac{1}{\nu d}H_R-\frac{1}{d}\displaystyle\sum_{j=1}^d\mu_j    \right)\displaystyle\sum_{j=1}^d\mu_j+\displaystyle\sum_{l<k}\mu_l\mu_k.\notag
\end{equation}
Collecting terms (the $H_R$ terms sums to zero), we obtain
\begin{equation}\displaystyle\sum_{l<k}H_lH_k=
\frac{(d-1)}{2\nu^2d}H_R^2- \frac{(d-1)}{2d}\left(\displaystyle\sum_{j=1}^d\mu_j\right)^2+\displaystyle\sum_{l<k}\mu_l\mu_k.\label{eq: confBIlhsI0-HRwithsummulmuk}
\end{equation}
Also using that
\begin{equation}
\left(\displaystyle\sum_{j=1}^d\mu_j\right)^2=2\displaystyle\sum_{l<k}\mu_l\mu_k+\displaystyle\sum_{j=1}^d\mu_j^2
\end{equation}
the equation (\ref{eq: confBIlhsI0-HRwithsummulmuk}) becomes
\begin{equation}\displaystyle\sum_{l<k}H_lH_k=
\frac{(d-1)}{2\nu^2d}H_R^2  +\frac{1}{d}\displaystyle\sum_{l<k}\mu_l\mu_k
- \frac{(d-1)}{2d}\displaystyle\sum_{j=1}^d\mu_j^2 .\label{eq: confBIlhsI0-HRsimplified}
\end{equation}
 Defining the quantity
\begin{eqnarray}L&\stackrel{def.}{=}&\frac{2}{(d-1)}\displaystyle\sum_{l<k}\mu_l\mu_k-\displaystyle\sum_{j=1}^d\mu_j^2\notag\\
&=& \frac{2}{(d-1)}\displaystyle\sum_{2\leq l<k\leq d}\mu_l\mu_k-\displaystyle\sum_{j=2}^d\mu_j^2 \mbox{ \ \ \  \  (since $\mu_1=0$)}, \label{eq: confBIL-mu-inproof}\end{eqnarray}
we see that (\ref{eq: confBIlhsI0-HRsimplified}) becomes
\begin{equation}\displaystyle\sum_{l<k}H_lH_k=
\frac{(d-1)}{2\nu^2d}H_R^2  +\frac{(d-1)}{2d}L.\label{eq: confBIlhsI0-HRsimplifiedwithL}
\end{equation}
Similarly by (\ref{eq: confBIEMPequatelhss}), (\ref{eq: confBIEMPdotH1wrtdotHR}) and (\ref{eq: confBIlhsI0-HRsimplifiedwithL}), the left-hand side of $(I_{i})$ for $1\leq i\leq d$ can be written as
\begin{equation}
\displaystyle\sum_{l\neq i}\dot{H}_l-\displaystyle\sum_{l<k}H_lH_k 
=\frac{(d-1)}{\nu d}\dot{H}_R-\frac{(d-1)}{2\nu^2d}H_R^2-\frac{(d-1)}{2d}L.
\label{eq: confBIEMPnewlhsIi}  
\end{equation}
That is, by (\ref{eq: confBIlhsI0-HRsimplified}), (\ref{eq: confBIEMPnewlhsIi}) and the definition (\ref{eq: confBIR-a}) of $R(t)$, equations $(I_0)$ and $(I_i)$ for $1\leq i\leq d$  can be written as
\begin{equation}
\frac{(d-1)}{2\nu^2d}H_R^2  +\frac{(d-1)}{2d}L\stackrel{(I_0)'}{=}\kappa\left[\frac{\dot{\phi}^2}{2}+R^{2/\nu}\left(V\circ\phi+\rho+\frac{\Lambda}{\kappa}\right) \right]\label{eq: confBIEMPnewI0}\notag\end{equation}
\begin{equation}
\frac{(d-1)}{\nu d}\dot{H}_R-\frac{(d-1)}{2\nu^2d}H_R^2-\frac{(d-1)}{2d}L \stackrel{(I_i)'}{=}\kappa\left[-\frac{\dot{\phi}^2}{2}+R^{2/\nu}\left(V\circ\phi - p+\frac{\Lambda}{\kappa}\right) \right].\notag
\label{eq: confBIEMPnewIi}\end{equation}
Forming the linear combination $(I_0)'-(I_i)'$, and multiplying by $\frac{-\nu d}{(d-1)}$,
 \begin{equation}\dot{H}_R-\frac{1}{\nu}H_R^2-\nu L=\frac{-\nu d\kappa}{(d-1)}\left[\dot\phi^2+R^{2/\nu}(\rho+p)\right].\label{eq: confBIEMPI0minusIiwrtHRsimplified}\end{equation}
This shows that $R(t), \phi(t), \rho(t)$ and $p(t)$ satisfy the hypothesis of Theorem \ref{thm: EFE-EMP}, applied with  constants $\epsilon, \varepsilon , N, A_0, \dots, A_{N}$ and functions $a(t), G_0(t),\dots,G_{N}(t)$ according to Table \ref{tb: confBIEMP}.  Since $\tau(t), Y(\tau), Q(\tau)$ and $\varphi(\tau)$ defined in (\ref{eq: confBIdottau-R}), (\ref{eq: confBIY-RQ-varphi}) and (\ref{eq: confBIvarphi-phi})  are equivalent to that in the forward implication of Theorem \ref{thm: EFE-EMP}, by this theorem and by definition (\ref{eq: confBIEMPvarrho-rhohookp-p}) of $\varrho(\tau), \textup{\textlhookp}(\tau)$, the generalized EMP equation (\ref{eq: AEMP}) holds for constants $B_0, \dots, B_{N}$ and functions $\lambda_0(\tau), \dots, \lambda_{N}(\tau)$ as indicated in Table \ref{tb: confBIEMP}.  This proves the forward implication.

To prove the converse implication, we assume to be given functions which solve the generalized EMP equation (\ref{eq: confBIEMP}) and we begin by showing that $(I_0)$ is satisfied.  Differentiating the definition of $R(t)$ in  (\ref{eq: confBIR-Y}) and using the definition in (\ref{eq: confBIdottau-Yvarphi-Q}) of $\tau(t)$, we see that
\begin{eqnarray}
\dot{R}(t)
&=&\frac{1}{q}Y(\tau(t))^{\frac{1}{q}-1}Y'(\tau(t))\dot{\tau}(t)\\
&=&\frac{\theta}{q}Y(\tau(t))^{\frac{1}{q}(1+\frac{1}{\nu})}Y'(\tau(t))\end{eqnarray}
Dividing by $R(t)$, we obtain
\begin{equation}H_{R}(t)\stackrel{def.}{=}\frac{\dot{R}(t)}{R(t)}=\frac{\theta}{q}Y(\tau(t))^{1/q\nu}Y'(\tau(t)).\label{eq: confBIHR-Y}\end{equation}
Differentiating the definition (\ref{eq: confBIphi-varphi}) of $\phi(t)$ and using definition (\ref{eq: confBIdottau-Yvarphi-Q}) of $\tau(t)$ gives
\begin{equation}\dot\phi(t)=\varphi'(\tau(t))\dot\tau(t)=\theta \varphi'(\tau(t)) Y(\tau(t))^{(q\nu+1)/q\nu}.\label{eq: confBIdotphi-varphiprimeY}\end{equation}
Using (\ref{eq: confBIHR-Y}) and (\ref{eq: confBIdotphi-varphiprimeY}), and also the definitions (\ref{eq: confBIR-Y}) and (\ref{eq: confBIrho-varrhop-textlhookp}) of $R(t)$ and $\rho_i(t)$ respectively, the definition (\ref{eq: confBIdefnV}) of $V\circ\phi$ can be written as
\begin{equation}V\circ\phi= \frac{1}{R^{2/\nu}} \left[\frac{1}{\kappa}\left(\frac{(d-1)}{2\nu^2d}H_R^2+\frac{(d-1)}{2d}L \right) - \frac{\dot\phi^2}{2}\right]   - \frac{\Lambda}{\kappa} -\rho.\label{eq: confBIVphi-R}\end{equation}
The quantity in the inner parenthesis here is in fact equal to the left-hand-side of equation $(I_0)$.  To see this, we differentiate the definitions in (\ref{eq: confBIa1-Rexp}) and (\ref{eq: confBIai-a1exp})  of $a_i(t)$, divide the results by $a_i(t)$, and use the definition (\ref{eq: confBIHR-Y}) of $H_R$  to obtain
\begin{equation}H_1\stackrel{def.}{=}\frac{\dot{a}_1}{a_1}=
\frac{1}{\nu d}H_R-\frac{1}{d}(\mu_2+\cdots +\mu_d)\label{eq: confBIEMPH1-HR}\end{equation}
and
\begin{equation}H_i\stackrel{def.}{=}\frac{\dot{a}_i}{a_i}=\frac{\dot{a}_1}{a_1}+\mu_i=H_1+\mu_i \end{equation}
for $i\in\{1,\dots,d\}$ by taking $\mu_1\stackrel{def.}{=}0$.
This confirms that the identities (\ref{eq:  confBIEMPequatelhss}), (\ref{eq:  confBIEMPdota/a=dota/a+mu}), (\ref{eq: confBIEMPH1wrtHR}) and (\ref{eq: confBIEMPdotH1wrtdotHR}) hold in the converse direction, so that the
 computations (\ref{eq: confBIEMPnewlhsI0})-(\ref{eq: confBIlhsI0-HRsimplifiedwithL}) are also valid in the converse direction for $L$ in (\ref{eq: confBIL-mulconverse}). That is,
 \begin{eqnarray}
\displaystyle\sum_{l<k}H_lH_k=\frac{(d-1)}{2\nu^2d}H_R^2+\frac{(d-1)}{2d}L\label{eq: confBIEMPHlHk-HR}\end{eqnarray}
which shows that (\ref{eq: confBIVphi-R}) can be written
\begin{equation}V\circ\phi= \frac{1}{R^{2/\nu}} \left[\frac{1}{\kappa}\left( \displaystyle\sum_{l<k}H_lH_k \right) -\frac{\dot\phi^2}{2}\right]   -\frac{\Lambda}{\kappa}-\rho\label{eq: confBIVphi-HlHk}\end{equation}
so that $(I_0)$ holds.

To conclude the proof we must also show that the equations $(I_1), \dots, (I_d)$ hold.  In the converse direction the hypothesis of the converse of Theorem \ref{thm: EFE-EMP} holds, applied with constants 
$N, B_0, \dots, B_N$ and functions $\lambda_0(\tau), \dots, \lambda_N(\tau)$ as indicated in Table \ref{tb: confBIEMP}.  Since $\tau(t), \varphi(\tau), R(t)$ and $\phi(t)$ defined in (\ref{eq: confBIdottau-Yvarphi-Q}), (\ref{eq: confBIR-Y}) and (\ref{eq: confBIphi-varphi}) are consistent with the converse implication of Theorem \ref{thm: EFE-EMP}, applied with $a(t), \delta  $ and $\varepsilon $ as in Table \ref{tb: confBIEMP}, by this theorem and by the definition (\ref{eq: confBIrho-varrhop-textlhookp}) of $\rho(t), p(t)$ the  scale factor equation (\ref{eq: AEFE}) holds for constants $\delta  , \varepsilon , A_0, \dots, A_N$ and functions $G_0(t), \dots, G_N(t)$ according to Table \ref{tb: confBIEMP}.  That is, we have regained equation (\ref{eq: confBIEMPI0minusIiwrtHRsimplified}).  Now solving (\ref{eq: confBIVphi-R}) for $R^{2/\nu}\rho(t)$ and substituting this into (\ref{eq: confBIEMPI0minusIiwrtHRsimplified}), we obtain
\begin{equation}
\dot{H}_R-\frac{1}{2\nu}H_R^2-\frac{\nu }{2}L=\frac{-\nu d\kappa}{(d-1)}\left[\frac{\dot\phi^2}{2}+R^{2/\nu}\left(-V\circ\phi+ p-\frac{\Lambda}{\kappa}\right)\right].
\end{equation}
Multiplying by $\frac{(d-1)}{\nu d}$ and rearranging, we get that
\begin{equation}\frac{(d-1)}{\nu d}\dot{H}_R-\frac{(d-1)}{2\nu^2 d}H_R^2-\frac{(d-1)}{2 d}L=-\kappa
\left[\frac{\dot\phi^2}{2}+R^{2/\nu}\left(-V\circ\phi+p-\frac{\Lambda}{\kappa}\right)\right].\label{eq: confBIIiwrtR}
\end{equation}
As noted above, the computations  (\ref{eq: confBIEMPnewlhsI0})-(\ref{eq: confBIlhsI0-HRsimplifiedwithL})  still hold in the converse direction so that by (\ref{eq: confBIEMPnewlhsIi}), we see that the left-hand side of (\ref{eq: confBIIiwrtR}) is in fact equal to the left-hand side of $(I_i)$ for any $i\in\{1,\dots, d\}$.  Since $R=(a_1\cdots a_d)^\nu$, the right-hand side of (\ref{eq: confBIIiwrtR}) agrees with the right-hand side of $(I_i)$ for all $i\in\{1,\dots, d\}$. This proves the theorem.
\end{proof}

\subsection{Reduction to classical EMP: pure scalar field}

To show some examples we take $\rho=p=0$ and choose parameter $\nu=1/q$ in Theorem \ref{thm: confBIEMP} to find that solving the Bianchi I Einstein equations
\begin{equation}
\displaystyle\sum_{l<k}H_lH_k 
\stackrel{(I_0)'}{=} 
\kappa\left[\frac{\dot{\phi}^2}{2}+(a_1a_2\cdots a_d)^2 \left(V\circ\phi+\frac{\Lambda}{\kappa}\right) \right]\end{equation}
\begin{equation}
\displaystyle\sum_{l\neq i}\dot{H}_l-\displaystyle\sum_{l<k}H_lH_k 
\stackrel{(I_i)'}{=} 
\kappa\left[-\frac{\dot{\phi}^2}{2}+(a_1a_2\cdots a_d)^2 \left(V\circ\phi+\frac{\Lambda}{\kappa}\right)\right]\notag\\
\end{equation}
for $l,k,i\in\{1,\dots, d\}$ is equivalent to solving the classical EMP equation
\begin{equation}Y''(\tau)+Q(\tau)Y(\tau)=\frac{L}{\theta^2Y(\tau)^3}\label{eq: nomatterconfBIEMP}\end{equation}
for constants $\theta,L>0$.  The solutions of $(I_0)', (I_1)',\dots,(I_d)'$ and (\ref{eq: nomatterconfBIEMP}) are related by
\begin{equation}R(t)=Y(\tau(t))^{1/q} \qquad\mbox{ and }\qquad \varphi '(\tau)^2= \frac{(d-1)}{d\kappa} Q(\tau)\label{eq: nomatterconfBIclassEMPa-Yvarphi-Q}\end{equation}
for $q\neq 0, \phi(t)=\varphi(\tau(t)), R(t)\stackrel{def.}{=} \left(a_1(t)\cdots a_d(t)\right)^{1/q}$ and 
\begin{equation}\dot\tau(t)=\theta R(t)^{2q}=\theta Y(\tau(t))^2,\label{eq: nomatterconfBIclassEMPdottau-a-Y}\end{equation}
for any $\theta>0$.  Also we define the constant
\begin{equation}L\stackrel{def.}{=}\frac{2}{(d-1)}\displaystyle\sum_{2<l<k\leq d}\mu_l\mu_k-\displaystyle\sum_{j=2}^d\mu_j^2,\label{eq: nomatterconfBIlambda0-mul}\end{equation}
where $\mu_i\in\mathds{R}$ are such that $a_i(t)=c_ie^{\mu_i t}a_1(t)$ for some $c_i\in\mathds{R},  i\in\{2,\dots, d\}$.
In the converse direction we also define
\begin{equation}a_1(t)=R(t)^{q/ d}(\omega_2\cdots\omega_de^{(\mu_2+\cdots+\mu_d)t})^{-1/d}\label{eq: nomatterconfBIEMPX-Ralpha}\end{equation}
and  take $V$ to be
\begin{equation}V(\phi(t))=\left[\frac{(d-1)}{2d\kappa} \left(
\theta^2 (Y')^2 + \frac{L}{Y^{2}}\right)
 -\frac{\theta^2}{2} (\varphi')^2 Y^{2} 
  -\frac{\Lambda}{\kappa}\right]\circ\tau(t).\label{eq: nomatterconfBIdefnV}\end{equation}

\break

\section{In terms of a Schr\"odinger-Type Equation}
To reformulate the Bianchi I Einstein equations $(I_0), \dots, (I_d)$ in (\ref{eq: confBIEFEI0Dd}) in terms of an equation with one less non-linear term than that which is provided by the generalized EMP formulation, one can apply Corollary \ref{cor: EFE-NLSAnonzeroEzero} to the difference $d(I_0)-\displaystyle\sum_{i=1}^d(I_i)$ (and similar to above, define $V\circ\phi$ in $u-$notation to be such that $(I_0)$ holds).   Below is the resulting statement.

\begin{thm}\label{thm: confBINLS}
Suppose you are given twice differentiable functions $a_1(t), \dots, a_d(t)>0$, a once differentiable function $\phi(t)$, and also functions $\rho(t), p(t), V(x)$ which satisfy the Einstein equations $(I_0),\dots,(I_d)$ for some $\Lambda\in\mathds{R}, d\in\mathds{N}\backslash\{0,1\}, \kappa\in\mathds{R}\backslash\{0\}$.  
Denote \begin{equation}R(t)\stackrel{def.}{=} \left(a_1(t)\cdots a_d(t)\right)^\nu\label{eq: confBINLSR-a}\end{equation}for some $\nu\neq 0$, 
then the functions 
\begin{eqnarray}
u(\sigma)&=&R(\sigma+t_0)^{-1/\nu}\label{eq: confBINLSu-R}\\
P(\sigma)&=&\frac{d\kappa}{(d-1)}\psi '(\sigma)^2\label{eq: confBINLSP-psi}
\end{eqnarray}
solve the Schr\"odinger-type equation
\begin{equation}u''(\sigma)+\left[E-P(\sigma)\right]u(\sigma)=\frac{ d \kappa(\uprho(\sigma)+\mathrm{p}(\sigma))}{(d-1) u(\sigma)}
\label{eq: confBINLS}\end{equation}
for 
\begin{equation}\psi(\sigma)=\phi(\sigma+t_0)\label{eq: confBINLSpsi-phi}\end{equation}
\begin{equation}\uprho(\sigma)=\rho(\sigma+t_0), \ \mathrm{p}(\sigma)=p(\sigma+t_0). \label{eq: confBINLSuprho-rhormp-p}\end{equation}
and where 
\begin{equation}E\stackrel{def.}{=}\frac{2}{(d-1)}\displaystyle\sum_{l<k}\mu_l\mu_k-\displaystyle\sum_{j=1}^d\mu_j^2\label{eq: confBINLSE-mu}\end{equation}
for constants $\mu_j$ such that $a_j(t)=\omega_je^{\mu_j t}a_1(t)$ for some $\omega_j>0, j\in\{1, \dots, d\}$.

Conversely, suppose you are given a twice differentiable function $u(\sigma)>0$, and also functions $P(\sigma)$ and $\uprho(\sigma), \mathrm{p}(\sigma)$  which solve (\ref{eq: confBINLS}) for some constants $E<0,  \kappa\in\mathds{R}\backslash\{0\}$ and $d\in\mathds{N}\backslash\{0,1\}$.  
Then we define  $\psi(\sigma)$ such that
\begin{equation}\psi '(\sigma)^2= \frac{(d-1)}{ d \kappa} P(\sigma),\label{eq: confBINLSpsi-P}\end{equation}
and constants $\mu_i, i\in\{2, \dots, d\}$ which satisfy
\begin{equation}E{=}\frac{2}{(d-1)}\displaystyle\sum_{l<k}\mu_l\mu_k-\displaystyle\sum_{j=1}^d\mu_j^2,\label{eq: confBINLSE-muconverse}\end{equation}
and let
\begin{equation}R(t)=u(t-t_0)^{-\nu}.\label{eq: confBINLSR-u}\end{equation}
Then the functions
\begin{equation}a_1(t)=R(t)^{1/\nu d}(\omega_2\cdots\omega_de^{(\mu_2+\cdots+\mu_d)t})^{-1/d}\label{eq: confBINLSa1-R}\end{equation}
\begin{equation}a_i(t)=\omega_ie^{\mu_i t}a_1(t)\label{eq: confBINLSai-a1exp}\end{equation} 
\begin{equation}\phi(t)=\psi(t-t_0)\label{eq: confBINLSphi-psi}\end{equation}
\begin{equation}\rho(t)=\uprho(t-t_0),\qquad\qquad p(t)=\mathrm{p}(t-t_0)\label{eq: confBINLSrho-uprhop-mathrmp}\end{equation}
and
\begin{equation}
V(\phi(t))=\left[
\frac{(d-1)}{2 d\kappa}\left( (u')^2+ Eu^2 \right)
-\frac{1}{2}(\psi ')^2u^2-\frac{\Lambda}{\kappa} -\uprho\right]\circ(t-t_0)
\label{eq: confBINLSVphi-u}
\end{equation}
satisfy the equations $(I_0),\dots, (I_d)$ for any $\omega_i>0, 2\leq i\leq d$.\end{thm}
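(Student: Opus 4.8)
The plan is to obtain this theorem as a direct application of Corollary~\ref{cor: EFE-NLSAnonzeroEzero} (the $A=0$, $\delta\neq0$ case), in exactly the way Theorem~\ref{thm: BINLS} was derived from that corollary while reusing the geometric computations of the conformal EMP reformulation in Theorem~\ref{thm: confBIEMP}. I would open with a substitution table identifying $a(t)$ with $R(t)$, $\delta$ with $-1/\nu$, $\varepsilon$ with $\nu d\kappa/(d-1)$, the constant $G(t)$ with $\nu E$ at power $A=0$, and $G_1(t)$ with $-\nu d\kappa(\rho+p)/(d-1)$ at power $A_1=-2/\nu$. Under this table the corollary returns the linear coefficient $-\delta G=E$, the coefficient $F_1(\sigma)=\delta\,G_1(\sigma+t_0)=\tfrac{d\kappa}{d-1}(\uprho+\mathrm p)$, and the exponent $C_1=A_1/\delta-1=1$, which is precisely the lone nonlinear term on the right of~\eqref{eq: confBINLS}. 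The essential point is that here the \emph{constant} $\nu E$ term (not a nonlinear term) is what the corollary converts into the linear coefficient $E$, so the reformulation is genuinely Schr\"odinger-type.

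For the forward implication I would note that the purely geometric steps of the proof of Theorem~\ref{thm: confBIEMP} carry over verbatim: equating the left-hand sides of $(I_i)$ and $(I_j)$ gives $\dot H_j=\dot H_1$ as in~\eqref{eq: confBIEMPequatelhss}, hence $H_j=H_1+\mu_j$ and $a_j(t)=\omega_j e^{\mu_j t}a_1(t)$, which in particular exhibits $E$ as an honest constant. Continuing through the collection of terms leading to~\eqref{eq: confBIlhsI0-HRsimplifiedwithL} and~\eqref{eq: confBIEMPnewlhsIi}, the combination $(I_0)'-(I_i)'$ collapses to the single scale-factor equation~\eqref{eq: confBIEMPI0minusIiwrtHRsimplified}, namely $\dot H_R-\tfrac1\nu H_R^2-\nu E=\tfrac{-\nu d\kappa}{d-1}\bigl[\dot\phi^2+R^{2/\nu}(\rho+p)\bigr]$ (with $L=E$). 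This is exactly the hypothesis~\eqref{eq: CEFEAZERO} of Corollary~\ref{cor: EFE-NLSAnonzeroEzero} under the table, so applying the corollary together with the definitions of $\uprho$ and $\mathrm p$ yields~\eqref{eq: confBINLS}.

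For the converse I would assume~\eqref{eq: confBINLS}, set $R(t)=u(t-t_0)^{-\nu}$, and build $a_1,\dots,a_d,\phi,\rho,p$ as stated. To verify $(I_0)$ I would compute $H_R=\dot R/R=-\nu\,(u'/u)\circ(t-t_0)$ directly (no reparameterization is needed, since $\sigma(t)=t-t_0$), and then observe that, because $1/R^{2/\nu}=u^2\circ(t-t_0)$ and $\dot\phi^2=(\psi')^2\circ(t-t_0)$, the definition~\eqref{eq: confBINLSVphi-u} of $V$ is precisely~\eqref{eq: confBIVphi-R} with $E$ in place of $L$. The identity $\sum_{l<k}H_lH_k=\tfrac{(d-1)}{2\nu^2d}H_R^2+\tfrac{(d-1)}{2d}E$ from~\eqref{eq: confBIEMPHlHk-HR}, which again follows from $a_l=\omega_l e^{\mu_l t}a_1$ and the defining relation for $E$, then rewrites $V$ as~\eqref{eq: confBIVphi-HlHk}, so $(I_0)$ holds. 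For $(I_1),\dots,(I_d)$ I would apply the converse of Corollary~\ref{cor: EFE-NLSAnonzeroEzero} to regain~\eqref{eq: confBIEMPI0minusIiwrtHRsimplified}, solve the $V$-relation for $R^{2/\nu}\rho$, substitute, and rearrange (multiplying by $(d-1)/\nu d$) to recover the common left-hand side~\eqref{eq: confBIEMPnewlhsIi} of each $(I_i)$; the right-hand sides agree because $R^{2/\nu}=(a_1\cdots a_d)^2$.

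The main obstacle is not computational but bookkeeping: confirming that the $A=0$ branch of the correspondence is the correct one, and tracking the exponents through $\delta=-1/\nu$ so that the nonlinear term lands at $C_1=1$ while the constant curvature-type term $\nu E$ becomes the linear potential offset. Since all of the genuinely laborious identities --- the derivation of $a_j=\omega_j e^{\mu_j t}a_1$ and the reduction of $\sum_{l<k}H_lH_k$ --- are already established in the proof of Theorem~\ref{thm: confBIEMP}, I would cite those computations rather than reproduce them, exactly as the proof of Theorem~\ref{thm: BINLS} cites the proof of Theorem~\ref{thm: BIEMP}.
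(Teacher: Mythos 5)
Your proposal is correct and follows essentially the same route as the paper: the same substitution table into the $A=0$, $\delta=-1/\nu$ corollary (which is Corollary~\ref{cor: EFE-NLSAzeroEnonzero}, not \ref{cor: EFE-NLSAnonzeroEzero} as you label it, though your parenthetical description and all of your formulas $E=-\delta G$, $F_1=\delta G_1$, $C_1=A_1/\delta-1=1$ make clear you mean the right one), the same reuse of the geometric identities from Theorem~\ref{thm: confBIEMP} for the forward direction, and the same verification of $(I_0)$ via the $V$-definition followed by recovery of $(I_1),\dots,(I_d)$ from the corollary's converse. No gaps.
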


\begin{proof}
This proof will implement Corollary \ref{cor: EFE-NLSAzeroEnonzero} with constants and functions as indicated in the following table.

\begin{table}[ht]
\centering
\caption{{ Corollary \ref{cor: EFE-NLSAzeroEnonzero} applied to conformal Bianchi I}}\label{tb: confBINLS}
\vspace{.2in}
\begin{tabular}{r | l c r | l}
In Corollary  & substitute & & In Corollary & substitute \\[4pt]
\hline
\raisebox{-5pt}{$a(t)$} & \raisebox{-5pt}{$R(t)$}      &&    \raisebox{-5pt}{$N$} & \raisebox{-5pt}{$1$}\\[8pt]
$\delta  $ & $-1/\nu$      &&   $\varepsilon $ &${\nu d\kappa}/{(d-1)}$\\[8pt]
 $G(t)$ & $\mbox{constant } \nu E$    &&     $A$ & $0$ \\[8pt]
 $G_{1}(t)$ & $ \frac{-\nu d \kappa}{(d-1)}(\rho(t)+p(t))$  &&      $A_{1}$   &$-2/\nu$\\[8pt]
$F_{1}(\sigma)$&$ \frac{d \kappa}{(d-1)} (\uprho(\sigma)+\mathrm{p}(\sigma)) $ &&        $C_{1}$&  $1$ \\[6pt]
\hline
\end{tabular}
\end{table}

Much of this proof will rely on computations that are exactly the same as those seen in the proof of Theorem \ref{thm: confBIEMP} (the generalized EMP formulation of conformally Bianchi I).  Therefore we will restate the relevant results here, but point the reader to the details in the proof of Theorem \ref{thm: confBIEMP}.

To prove the forward implication, we assume to be given functions which solve the  Einstein field equations $(I_0),\dots, (I_d)$.  Since the right-hand sides of Einstein equation $(I_i)$ are all the same for $i\in\{1,\dots, d\}$, we begin by equating the left-hand side of $(I_{1})$ with the left-hand side of any $(I_j)$  for $j\in\{2,\dots, d\}$ since it will give us a simplifying relation among the scale factors $a_1(t), \dots, a_d(t)$.   Exactly this was done in (\ref{eq: confBIEMPequatelhsIi})-(\ref{eq: confBIEMPaj=cjemuta1}) so that again we obtain
\begin{equation}H_j=H_1+\mu_j\label{eq: confBINLSHj-H1+mu}\end{equation}
and
\begin{equation}a_j(t)=\omega_je^{\mu_j t}a_1(t)\label{eq: confBINLSaj-a1}\end{equation}
for $\omega_j>0, \mu_j\in\mathds{R}, j\in\{1, \dots, d\}$, and $\mu_1=0, \omega_1=1$.  This allows us to follow the arguments given in (\ref{eq: confBIEMPlhsI0wrta1})-(\ref{eq: confBIEMPnewlhsIi}), so that the Einstein equations $(I_0)$ and $(I_i)$ for $1\leq i\leq d$ can be written as
\begin{equation}
\frac{(d-1)}{2\nu^2d}H_R^2  +\frac{(d-1)}{2d}E\stackrel{(I_0)'}{=}\kappa\left[\frac{\dot{\phi}^2}{2}+R^{2/\nu}\left(V\circ\phi+\rho+\frac{\Lambda}{\kappa}\right) \right]\label{eq: confBINLSnewI0}\notag\end{equation}
\begin{equation}
\frac{(d-1)}{\nu d}\dot{H}_R-\frac{(d-1)}{2\nu^2d}H_R^2-\frac{(d-1)}{2d}E \stackrel{(I_i)'}{=}\kappa\left[-\frac{\dot{\phi}^2}{2}+R^{2/\nu}\left(V\circ\phi-p+\frac{\Lambda}{\kappa}\right) \right].\notag
\label{eq: confBINLSnewIi}\end{equation}
where again $R(t)\stackrel{def.}{=}(a_1(t)\cdots a_d(t))^{\nu}$ for $\nu\neq 0$ by (\ref{eq: confBINLSR-a}),
\begin{equation}L\stackrel{def.}{=}\frac{2}{(d-1)}\displaystyle\sum_{l<k}\mu_l\mu_k-\displaystyle\sum_{j=1}^d\mu_j^2=E\mbox{ (by (\ref{eq: confBINLSE-mu}))}\label{eq: confBINLSL-mu}\end{equation}
and 
\begin{equation}H_R\stackrel{def.}{=}\frac{\dot{R}}{R},\label{eq: confBINLSHR-R}\end{equation}
so that 
\begin{equation}H_1=\frac{1}{\nu d}H_R-\frac{1}{d}\displaystyle\sum_{j=1}^d\mu_j\label{eq: confBINLSH1-HR}\end{equation}
\begin{equation}\dot{H}_1=\frac{1}{\nu d}\dot{H}_R.\label{eq: confBINLSdotH1-dotHR}\end{equation}
Again we form the linear combination $(I_0)'-(I_i)'$, multiply by $\frac{-\nu d}{(d-1)}$, and obtain 
 \begin{equation}\dot{H}_R-\frac{1}{\nu}H_R^2-\nu E=\frac{-\nu d\kappa}{(d-1)}\left[\dot\phi^2+R^{2/\nu}(\rho+p)\right].\label{eq: confBINLSI0minusIiwrtHRsimplified}\end{equation}
This shows that $R(t), \phi(t), \rho(t)$ and $p(t)$ satisfy the hypothesis of Corollary \ref{cor: EFE-NLSAzeroEnonzero}, applied with  constants $\epsilon, \varepsilon , N, A, A_1 \dots, A_{N}$ and functions $a(t), G(t),G_1(t),\dots,G_{N}(t)$ according to Table \ref{tb: confBINLS}.  Since $
u(\sigma), P(\sigma)$ and $\psi(\sigma)$ defined in (\ref{eq: confBINLSu-R})-(\ref{eq: confBINLSP-psi}) and (\ref{eq: confBINLSpsi-phi})  are equivalent to that in the forward implication of Corollary \ref{cor: EFE-NLSAnonzeroEzero}, by this corollary and by definition (\ref{eq: confBINLSuprho-rhormp-p}) of $\uprho(\sigma), \mathrm{p}(\sigma)$, the Schr\"odinger-type equation (\ref{eq: CNLSAZERO}) holds for constants $C_1, \dots, C_{N}$ and functions $F_1(\sigma), \dots, F_{N}(\sigma)$ as indicated in Table \ref{tb: confBINLS}.  This proves the forward implication.

To prove the converse implication, we assume to be given functions which solve the Schr\"odinger-type equation (\ref{eq: confBINLS}) and we will show that equations $(I_0), \dots, (I_d)$ are satisfied. 

To show that $(I_0)$ is satisfied, we differentiate the definition of $R(t)$ in (\ref{eq: confBINLSR-u}) 
to obtain
\begin{eqnarray}\dot{R}(t)
&=&-\nu u(t-t_0)^{-\nu-1}u'(t-t_0).\label{eq: confBINLSdotR-u}\end{eqnarray}
Dividing by $R(t)$, we have 
\begin{equation}H_R\stackrel{def.}{=}\frac{\dot{R}}{R}=-\nu \frac{u'(t-t_0)}{u(t-t_0)}.\label{eq: confBINLSHR-u}\end{equation}
Differentiating the definition (\ref{eq: confBINLSphi-psi}) of $\phi(t)$, we get that
\begin{equation}
\dot{\phi}(t)=\psi '(t-t_0).\label{eq: confBINLSphi-u}
\end{equation}
Using (\ref{eq: confBINLSHR-u}) and (\ref{eq: confBINLSphi-u}), and also the definitions (\ref{eq: confBINLSR-u}) and (\ref{eq: confBINLSrho-uprhop-mathrmp}) of $R(t)$ and $\rho(t)$ respectively, the definition (\ref{eq: confBINLSVphi-u}) of $V\circ\phi$ can be written as
\begin{equation}
V\circ\phi
=\frac{1}{R^{2/\nu}}\left[\frac{1}{\kappa}\left(\frac{(d-1)}{2\nu^2 d}H_R^2+\frac{(d-1)E}{2d}\right)-\frac{1}{2}\dot{\phi}^2\right]-\frac{\Lambda}{\kappa}-\rho.\label{eq: confBINLSVphi-RE}
\end{equation}
The quantity in parenthesis here is in fact equal to the left-hand-side of equation $(I_0)$.  
To see this, note that the definitions of $a_1(t), a_i(t)$ in (\ref{eq: confBINLSa1-R}), (\ref{eq: confBINLSai-a1exp}) and also
$H_R\stackrel{def.}{=}\dot{R}/R$ are the same as those in Theorem \ref{thm: confBIEMP}.  Therefore we may follow the arguments given in (\ref{eq: confBIEMPnewlhsI0})-(\ref{eq: confBIlhsI0-HRsimplifiedwithL}) and (\ref{eq: confBIEMPH1-HR})-(\ref{eq: confBIEMPHlHk-HR}) to see that the identity 
 \begin{eqnarray}
\displaystyle\sum_{l<k}H_lH_k=\frac{(d-1)}{2\nu^2d}H_R^2+\frac{(d-1)}{2d}E\label{eq: confBINLSHlHk-HR}\end{eqnarray}
still holds in the converse direction (since (\ref{eq: confBINLSE-muconverse}) in Theorem \ref{thm: confBIEMP} and (\ref{eq: confBIL-mulconverse}) here show that $L=E$).
This shows that (\ref{eq: confBINLSVphi-RE}) can be written
\begin{equation}
V\circ\phi
=\frac{1}{R^{2/\nu}}\left[\frac{1}{\kappa}\left(\displaystyle\sum_{l<k}H_lH_k\right)-\frac{1}{2}\dot{\phi}^2\right]-\frac{\Lambda}{\kappa}-\rho\label{eq: confBINLSVphi-HlHk}
\end{equation}
so that $(I_0)$ holds under the assumptions of the converse implication.

To conclude the proof we must also show that the equations $(I_1), \dots, (I_d)$ hold.   In the converse direction the hypothesis of the converse of Corollary \ref{cor: EFE-NLSAzeroEnonzero} holds, applied with constants $N, C_1, \dots, C_{N}$ and functions $F_1(\sigma), \dots, F_{N}(\sigma)$ as indicated in Table \ref{tb: confBINLS}.  Since  $
\psi(\sigma), R(t)$ and $\phi(t)$ defined in (\ref{eq: confBINLSpsi-P}), (\ref{eq: confBINLSR-u}) and (\ref{eq: confBINLSphi-psi}) are consistent with the converse implication of Corollary \ref{cor: EFE-NLSAzeroEnonzero}, applied with $a(t)$ and $\delta  , \varepsilon $ as in Table \ref{tb: confBINLS}, by this corollary and by the definition (\ref{eq: confBINLSrho-uprhop-mathrmp}) of $\rho(t), p(t)$ the  scale factor  equation (\ref{eq: CEFEAZERO}) holds for constants $\delta  , \varepsilon , A, A_1, \dots, A_{N}$ and functions $G(t),G_1(t),\dots,G_{N}(t)$ according to Table \ref{tb: confBINLS}.  That is, we have regained (\ref{eq: confBINLSI0minusIiwrtHRsimplified}).  Now solving (\ref{eq: confBINLSVphi-RE}) for $R^{2/\nu}\rho(t)$ and substituting this into (\ref{eq: confBINLSI0minusIiwrtHRsimplified}), we obtain
\begin{equation}
\dot{H}_R-\frac{1}{2\nu}H_R^2-\frac{\nu}{2} E
=
\frac{-\nu d\kappa}{(d-1)}
\left[ 
\frac{1}{2}\dot\phi^2 +R^{2/\nu}\left(-V\circ\phi+p -\frac{\Lambda}{\kappa}\right) \right].\end{equation}
Multiplying by $\frac{(d-1)}{\nu d}$ and rearranging, we get that
\begin{equation}
\frac{(d-1)}{\nu d}\dot{H}_R-\frac{(d-1)}{2\nu^2 d}H_R^2-\frac{(d-1)}{2d}E
=
\kappa
\left[ 
-\frac{1}{2}\dot\phi^2 +R^{2/\nu}\left(V\circ\phi-p +\frac{\Lambda}{\kappa}\right)\right].\label{eq: confBINLSlhs=lhsofIi}\end{equation}
As noted above, the computations  (\ref{eq: confBIEMPnewlhsI0})-(\ref{eq: confBIlhsI0-HRsimplifiedwithL}) from Theorem \ref{thm: confBIEMP} still hold in this theorem, in the converse direction.  Therefore the left-hand side of (\ref{eq: confBINLSlhs=lhsofIi}) is in fact equal to the left-hand side of $(I_i)$ for any $i\in\{1, \dots, d\}$.  Since $R=(a_1\cdots a_d)^\nu$, the right-hand side of 
(\ref{eq: confBINLSlhs=lhsofIi}) agrees with the right-hand side of $(I_i)$ for any $i\in\{1, \dots, d\}$.  This proves the theorem.
\end{proof}

\subsection{Reduction to linear Schr\"odinger: pure scalar field}
By applying Theorem \ref{thm: confBINLS} with $\rho=p=0$, we obtain a linear Schr\"odinger equation.
We now refer to Appendix E for solutions of the linear Schr\"odinger equation.

\begin{example}
For $\vartheta=\nu=1$ and choice of constant $E=-1$, we take solution 4 in Table \ref{tb: exactNLS} with $c_0=-1$, $a_0=1$ and $b_0=0$ so that we have $u(\sigma)=e^{-\sigma}$ and  $P(\sigma)=0$.  By (\ref{eq: confBINLSR-u}),
\begin{equation}R(t)=\frac{1}{u(t-t_0)}=e^{t-t_0}\end{equation}
so that 
\begin{eqnarray}a_1(t)&=&R(t)^{1/d}e^{-(\mu_2+\cdots + \mu_d)t/d}\notag\\
&=&e^{(1-\mu_2-\cdots - \mu_d)(t-t_1) /d},
\end{eqnarray}
\begin{eqnarray}a_i(t)&=&e^{\mu_i t}a_1(t)\notag\\
&=&e^{((1-\mu_2-\cdots - \mu_d)/d+\mu_i )(t-t_1)}
\end{eqnarray}
for constants $\mu_i$ that satisfy
\begin{equation}
E=-1=\frac{2}{(d-1)}\displaystyle\sum_{l<k}\mu_l\mu_k-\displaystyle\sum_{j=1}^d\mu_j^2.\end{equation}
Since $P=0=\psi'(\sigma)$, we get
\begin{equation}\psi(\sigma)=\psi_0\end{equation}
for constant $\psi_0\in\mathds{R}$.  Finally, by (\ref{eq: confBINLSVphi-u}),  we obtain constant potential
\begin{eqnarray}
V(\phi(t))&=&=\left[
\frac{(d-1)}{2 d\kappa}\left( (u')^2 - u^2 \right)
-\frac{\Lambda}{\kappa} \right]\circ(t-t_0)\notag\\
&=&-\frac{\Lambda}{\kappa}.
\end{eqnarray}

For example when $d=3$, we can take $\mu_2=\frac{1}{\sqrt{3}}$ and $\mu_3=\frac{2}{\sqrt{3}}$ to obtain the vacuum solution 
\begin{eqnarray}
a_1(t)&=&e^{(1-\sqrt{3})(t-t_1) /3}\notag\\
a_2(t)&=&e^{(t-t_1) /3}\notag\\
a_3(t)&=&e^{(1+\sqrt{3})(t-t_1) /3}\notag\\
\phi(t)&=&\phi_0\notag\\
V&=&-\Lambda/\kappa.
\end{eqnarray}

For $t_0=0$, the solver was run with $u, u'$ and $\sigma$ perturbed by $.001$.  The graphs of $R(t)$ below show that the solution is unstable.  The absolute error grows three orders of magnitude over the graphed time interval.

\begin{figure}[htbp]
\centering
\vspace{-.1in}
\caption{Instability of conformal Bianchi I Example \theexample}
\includegraphics[width=4in]{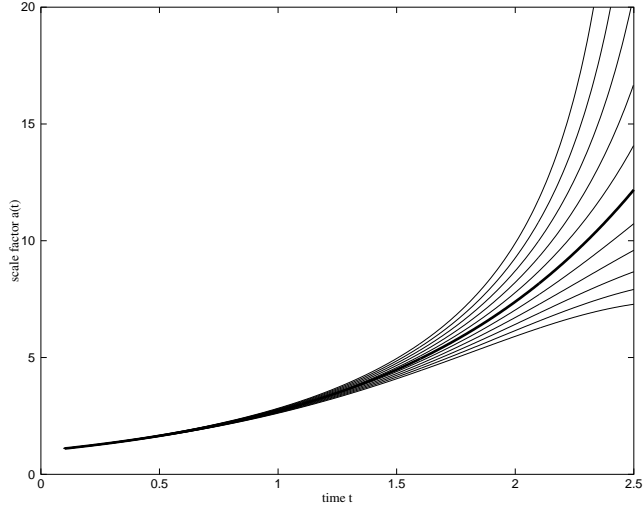}
\end{figure}
\end{example}

\break

\begin{example}
For $\vartheta=1$, $n=6$, $\nu=1$ and choice of constant $E=-1$, we take the negative of solution 4 in Table \ref{tb: exactNLS} with $c_0=-1$ and  $a_0,b_0>0$ so that we have $u(\sigma)=b_0e^{\sigma}-a_0 e^{-\sigma}$ and  $P(\sigma)=0$.  By (\ref{eq: confBINLSR-u})
\begin{equation}R(t)=\frac{1}{u(t-t_0)}= \frac{1}{b_0e^{(t-t_0)}-a_0 e^{-(t-t_0)}}  \end{equation}
so that 
\begin{eqnarray}a_1(t)&=&R(t)^{1/ d}e^{-(\mu_2+\cdots + \mu_d)t/d}\notag\\
&=&\frac{e^{-(\mu_2+\cdots + \mu_d)t/d}}{\left(     b_0e^{(t-t_0)}-a_0 e^{-(t-t_0)}   \right)^{1/d}},
\end{eqnarray}
\begin{eqnarray}a_i(t)&=&e^{\mu_i t}a_1(t)\notag\\
&=&\frac{e^{-(\mu_2+\cdots + \mu_d)t/d+\mu_i t}}{\left(    b_0e^{(t-t_0)} - a_0 e^{-(t-t_0)}   \right)^{1/d}}
\end{eqnarray}
for constants $\mu_i$ that satisfy
\begin{equation}
E=-1=\frac{2}{(d-1)}\displaystyle\sum_{l<k}\mu_l\mu_k-\displaystyle\sum_{j=1}^d\mu_j^2.\end{equation}
Since $P=0=\psi'(\sigma)$, we have
\begin{equation}\psi(\sigma)=\psi_0\end{equation}
for constant $\psi_0\in\mathds{R}$.  Finally, by (\ref{eq: confBINLSVphi-u}), 
we obtain constant potential
\begin{eqnarray}
V(\phi(t))&=&=\left[
\frac{(d-1)}{2 d\kappa}\left( (u')^2 - u^2 \right)
-\frac{\Lambda}{\kappa} \right]\circ(t-t_0)\notag\\
&=&\frac{2(d-1)a_0b_0}{ d\kappa}-\frac{\Lambda}{\kappa}
\end{eqnarray}
since $\cosh^2(x)-\sinh^2(x)=1$.

For example when $d=3$, we can take $a_0=b_0=1$, $\mu_2=\frac{1}{\sqrt{3}}$ and $\mu_3=\frac{2}{\sqrt{3}}$ to obtain the solution 
\begin{eqnarray}
a_1(t)&=&\left(\frac{1}{2}  csch(t-t_0)\right)^{1/3}e^{-t/\sqrt{3}}\notag\\
a_2(t)&=&\left(\frac{1}{2}  csch(t-t_0)\right)^{1/3}\notag\\
a_3(t)&=&\left(\frac{1}{2}  csch(t-t_0)\right)^{1/3}e^{t/\sqrt{3}}\notag\\
\phi(t)&=&\phi_0\notag\\
V&=&\frac{4}{3\kappa}-\Lambda/\kappa.
\end{eqnarray}

For $a_0=b_0=1$ and $t_0=0$, the solver was run with $u, u'$ and $\sigma$ perturbed by $.01$.  The graphs of $R(t)$ below show that the solution is stable.  
\begin{figure}[htbp]
\centering
\vspace{-.1in}
\caption{Instability of conformal Bianchi I Example \theexample}
\includegraphics[width=4in]{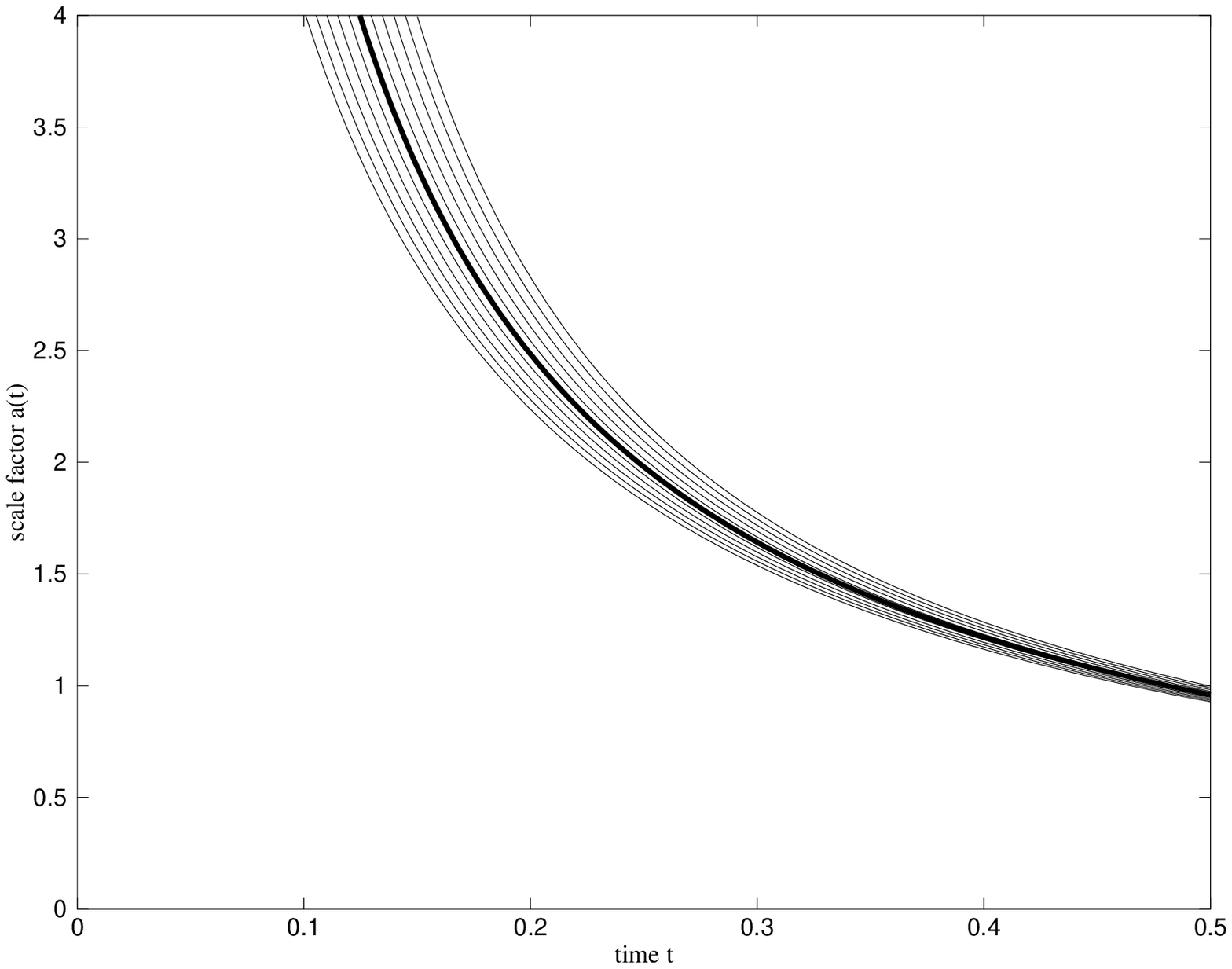}
\end{figure}
\end{example}

\break
\begin{example}
For $\vartheta=\nu=1$ and choice of constant $E=-1$, we take solution 5 in table \ref{tb: exactNLS} with $c_0=-1$, $a_0=1$ and $b_0=0$ so that we have $u(\sigma)=(1/\sigma)e^{-\sigma^2/2}$ and $P(\sigma)=\sigma^2+2/\sigma^2$.  By (\ref{eq: confBINLSR-u})
\begin{equation}R(t)=\frac{1}{u(t-t_0)}=(t-t_0) e^{(t-t_0)^2/2}
\end{equation}
so that 
\begin{eqnarray}a_1(t)&=&R(t)^{1/d}e^{-(\mu_2+\cdots + \mu_d)t/d}\notag\\
&=&(t-t_0)^{1/d} e^{(t-t_0)^2/2d}e^{-(\mu_2+\cdots + \mu_d)t/d},\end{eqnarray}
\begin{eqnarray}a_i(t)&=&e^{\mu_it}a_1(t)\notag\\
&=&(t-t_0)^{1/d} e^{(t-t_0)^2/2d}e^{(-(\mu_2+\cdots + \mu_d)/d+\mu_i)t}
\end{eqnarray}
for constants $\mu_i$ that satisfy
\begin{equation}E=-1=\frac{2}{(d-1)}\displaystyle\sum_{l<k}\mu_l\mu_k-\displaystyle\sum_{j=1}^d\mu_j^2.\end{equation}
By (\ref{eq: psiforP=c^2x^2+2/x^2+b}) with  $\alpha_0=(d-1)/d\kappa$, we have scalar field
\begin{eqnarray}
&&\phi(t)=\psi(t-t_0)\notag\\
&&=\frac{ \sqrt{(d-1)}}{2\sqrt{d\kappa}}\left(\sqrt{(t-t_0)^4+2} + \sqrt{2}\ln\left[\frac{(t-t_0)^2}{2\sqrt{2}\sqrt{(t-t_0)^4+2} +4}\right] \right)+\beta_0\notag\\
\end{eqnarray}
and finally by (\ref{eq: confBINLSVphi-u}), 
\begin{eqnarray}
V(\phi(t))&=&\left[
\frac{(d-1)}{2 d\kappa}\left( (u')^2 - u^2 \right)
-\frac{1}{2}(\psi ')^2u^2-\frac{\Lambda}{\kappa} \right]\circ(t-t_0)\notag\\
&=&\frac{(d-1)}{2 d\kappa}e^{-(t-t_0)^2}\left(\frac{1}{(t-t_0)^2}-\frac{1}{(t-t_0)^4}\right)    
 - \frac{\Lambda}{\kappa}.
\end{eqnarray}

For example when $d=3$, we can take $t_0=0$, $\mu_2=\frac{1}{\sqrt{3}}$ and $\mu_3=\frac{2}{\sqrt{3}}$ to obtain the solution
\begin{eqnarray}
a_1(t)&=&t^{1/3} e^{t^2/6-t/\sqrt{3}}\notag\\
a_2(t)&=&t^{1/3} e^{t^2/6}\notag\\
a_3(t)&=&t^{1/3} e^{t^2/6+t/\sqrt{3}}\notag\\
\phi(t)&=&\frac{1}{\sqrt{6\kappa}}\left(t\sqrt{t^2+2} + \sqrt{2}\ln\left[\frac{t^2}{2\sqrt{2}\sqrt{t^4+2} +4}\right] \right)\notag\\
V(\phi(t))&=&\frac{1}{ 3\kappa}e^{-t^2}\left(\frac{1}{t^2}-\frac{1}{t^4}\right)    
 - \frac{\Lambda}{\kappa}.
\end{eqnarray}

For $t_0=0$, the solver was run with $u, u'$ and $\sigma$ perturbed by $.001$.  The graphs of $R(t)$ below show that the solution is unstable.  The absolute error grows up to three orders of magnitude in the small graphed time interval.  
\begin{figure}[htbp]
\centering
\vspace{-.1in}
\caption{Instability of conformal Bianchi I Example \theexample}
\includegraphics[width=4in]{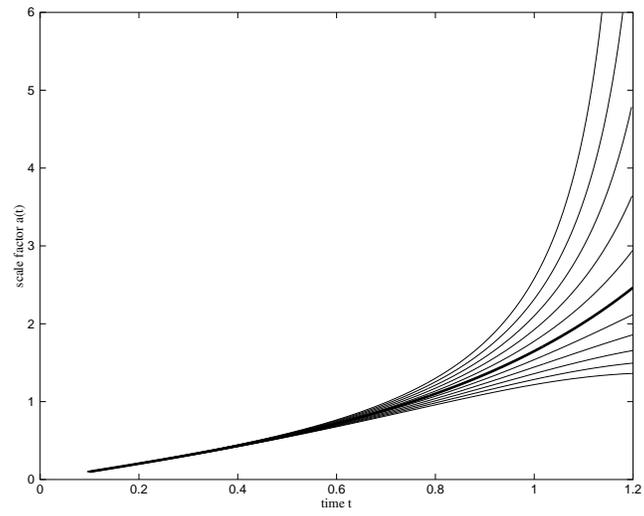}
\vspace{-.5in}
\end{figure}

\end{example}

\chapter{Reformulations of a Bianchi V model}
\label{ch: BV}

For the inhomogeneous, anisotropic Bianchi V metric
\begin{equation} ds^2=-dt^2+X_1^2(t)dx_1^2+\displaystyle\sum_{i=2}^de^{2\beta x_1}X_i(t)^2dx_i^2\label{eq: BVmetric}
\end{equation}
in a $d+1-$dimensional spacetime for $d\neq 0,1$ and $\beta\neq 0$, the nonzero  Einstein's equations $g^{ij}G_{ij}=-\kappa g^{ij}T_{ij}+\Lambda$ are 
\begin{equation}
\displaystyle\sum_{l< k}H_lH_k-\frac{d(d-1)\beta^2}{2X_1^2}
\stackrel{(I_0)}{=}
\kappa\left[\frac{1}{2}\dot\phi^2+V\circ\phi+\rho\right]+\Lambda\label{eq: BVEFEI0Id}\end{equation}
\begin{eqnarray}\displaystyle\sum_{l\neq 1}(\dot{H}_l+H_l^2)+\displaystyle\sum_{\stackrel{l < k}{l,k\neq 1}}H_lH_k-\frac{(d-1)(d-2)\beta^2}{2X_1^2}&\stackrel{(I_1)}{=}&-\kappa\left[\frac{1}{2}\dot\phi^2-V\circ\phi+p\right]+\Lambda\notag\\
&\vdots&\notag\\
\displaystyle\sum_{l\neq i}(\dot{H}_l+H_l^2)+\displaystyle\sum_{\stackrel{l < k}{l,k\neq i}}H_lH_k-\frac{(d-1)(d-2)\beta^2}{2X_1^2}&\stackrel{(I_i)}{=}&-\kappa\left[\frac{1}{2}\dot\phi^2-V\circ\phi+ p\right]+\Lambda\notag\\
&\vdots&\notag\\
\displaystyle\sum_{l\neq d}(\dot{H}_l+H_l^2)+\displaystyle\sum_{\stackrel{l < k}{l,k\neq d}}H_lH_k-\frac{(d-1)(d-2)\beta^2}{2X_1^2}&\stackrel{(I_d)}{=}&-\kappa\left[\frac{1}{2}\dot\phi^2-V\circ\phi+p\right]+\Lambda\notag\end{eqnarray}
where $H_l(t)\stackrel{def.}{=}\dot{a}_l/a_l$ and $i,l,k\in\{1,\dots,d\}$.
There is one more equation, the off-diagonal entry $G_{01}=-\kappa T_{01}+\Lambda g_{01}$ which states
\begin{equation}\beta \displaystyle\sum_{l=2}^d H_l-(d-1)\beta H_1\stackrel{(I_{01})}{=}0.\label{eq: otherBVeqn}\end{equation}

\break

\section{In terms of a Generalized EMP}
\begin{thm}\label{thm: BVEMP}
Suppose you are given twice differentiable functions $X_1(t), \dots, X_d(t)>0$, a once differentiable function $\phi(t)$, and also functions $\rho(t), p(t), V(x)$ which satisfy the Einstein equations $(I_0),\dots,(I_d)$ in (\ref{eq: BVEFEI0Id}) and $(I_{01})$ in (\ref{eq: otherBVeqn}) for some $\Lambda\in\mathds{R}, d\in\mathds{N}\backslash\{0,1\}, \kappa\in\mathds{R}\backslash\{0\}$ and $M\in\mathds{N}$.  Denote 
\begin{equation}R(t)\stackrel{def.}{=} \left(X_1(t)\cdots X_d(t)\right)^\nu\label{eq: BVBIEMPR-X}\end{equation}
for some $\nu\neq 0$.  If $f(\tau)$ is the inverse of a function $\tau(t)$ which satisfies
\begin{equation}\dot{\tau}(t)=\theta R(t)^q\label{eq: BVBIEMPdottau-X}\end{equation}
for some constants $\theta> 0$ and $q\neq 0$, then
\begin{equation}Y(\tau)= R(f(\tau))^q  \qquad\mbox{ and }\qquad Q(\tau)= \frac{q\nu d\kappa}{(d-1)} \varphi'(\tau)^2\label{eq: BVBIEMPY-aQ-varphi}\end{equation}
solve the generalized EMP equation
\begin{equation}Y''(\tau)+Q(\tau)Y(\tau)=\frac{-2q\nu d\kappa D}{\theta^2(d-1) Y(\tau)^{(2+q\nu)/q\nu}}-\frac{q\nu d\beta^2}{\theta^2e^{2\alpha_1}Y(\tau)^{(2+q\nu d)/q\nu d}}\qquad\qquad\notag
\end{equation}
\begin{equation}\qquad\qquad-\displaystyle\sum_{i=1}^M \frac{q\nu d\kappa\left(\varrho_i(\tau)+\textup{\textlhookp}_i(\tau)\right)}{\theta^2(d-1)Y(\tau)}\label{eq: BVBIEMP}\end{equation}
for some $\alpha_1\in\mathds{R}$,
\begin{equation}\varphi(\tau)=\phi(f(\tau))\label{eq: BVBIEMPvarphi-phi}\end{equation}
\begin{equation}\varrho(\tau)=\rho(f(\tau)), \ \textup{\textlhookp}(\tau)=p(f(\tau)) \label{eq: BVBIEMPvarrho-rhohookp-p}\end{equation}
and where 
\begin{equation}D\stackrel{def.}{=}\frac{1}{2d\kappa}X_1^2X_2^2\cdots X_d^2\left(\displaystyle\sum_{l<k}\eta_{lk}^2\right)\label{eq: BVBIEMPDconstant-X}\end{equation}
is a constant for
\begin{equation}\eta_{lk}\stackrel{def.}{=}H_l-H_k, \mbox{\small \    $l\neq k, l, \ k\in\{1,\dots, d\}$.}\label{eq: BVBIEMPeta-X}\end{equation}

Conversely, suppose you are given a twice differentiable function $Y(\tau)>0$, a continuous function $Q(\tau)$, and also functions  $\varrho(\tau), \textup{\textlhookp}(\tau)$ which solve (\ref{eq: BVBIEMP}) for some constants $ \theta>0$ and $q, \nu, \kappa \in\mathds{R}\backslash\{0\}, k, D, \alpha_1,\beta\in\mathds{R}, d\in\mathds{N}\backslash\{0,1\}$.  In order to construct functions which solve $(I_0),\dots, (I_d), (I_{01})$, first find $\tau(t), \varphi(\tau)$ which solve the differential equations
\begin{equation}\dot{\tau}(t)=\theta Y(\tau(t))\qquad\mbox{ and }\qquad \varphi '(\tau)^2= \frac{(d-1)}{q\nu d\kappa} Q(\tau).\label{eq: BVBIEMPdottau-Yvarphi-Q}\end{equation}
Next find a function $\sigma(t)$ such that 
\begin{equation}\dot\sigma(t)=\frac{1}{\dot\tau(t)^{1/q\nu}}\label{eq: BVBIEMPdotsigma-dottau}\end{equation}
and let 
\begin{equation}R(t)=Y(\tau(t))^{1/q}\qquad\mbox{ and }\qquad\alpha_l(t)\stackrel{def.}{=}c_l\sigma(t), \ \mbox{\small $l\in\{2,\dots,d\}$}\label{eq: BVBIEMPR-Yalpha-sigma}\end{equation} 
where $c_l$ are any constants for which both
\begin{equation}\displaystyle\sum_{l=2}^d c_l=0\qquad\mbox{ and }\qquad\qquad\displaystyle\sum_{2\leq l<k\leq d}c_lc_k=-\theta^{2/q\nu} D\kappa.\label{eq: BVBIEMPccondition}\end{equation}
Then the functions
\begin{equation}X_1(t)=R(t)^{1/\nu d}e^{\alpha_1}, \ \alpha_1\in\mathds{R}\label{eq: BVX1-constR}\end{equation}
\begin{equation}X_l(t)=R(t)^{1/\nu d}e^{\alpha_l(t)},  \ \ 2\leq l \leq d \label{eq: BVBIEMPX-Ralpha}\end{equation}
\begin{equation}\phi(t)=\varphi(\tau(t))\label{eq: BVBIEMPphi-varphi}\end{equation}
\begin{equation}\rho(t)=\varrho(\tau(t)), \ p(t)=\textup{\textlhookp}(\tau(t))\label{eq: BVBIEMPrho-varrhop-hookp}\end{equation}
and
\begin{equation}
V(\phi(t))=\left[\frac{(d-1)\theta^2}{2\nu^2 d\kappa q^2}(Y')^2-\frac{D}{Y^{2/q\nu}}-\frac{d(d-1)\beta^2}{2\kappa e^{2\alpha_1} Y^{2/q\nu d}}-\frac{\theta^2}{2}Y^2(\varphi ')^2-\varrho-\frac{\Lambda}{\kappa}\right]\circ\tau(t)\label{eq: BVBIEMPVphi-Y}
\end{equation}
satisfy the Einstein equations $(I_0),\dots,(I_d)$.
\end{thm}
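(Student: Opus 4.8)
The plan is to obtain both implications from Theorem~\ref{thm: EFE-EMP} applied to the volume variable $R(t)=(X_1\cdots X_d)^\nu$, following the Bianchi~I proof of Theorem~\ref{thm: BIEMP} but now carrying an extra nonlinear term produced by the spatial curvature $\beta^2/X_1^2$. Concretely I would set up a substitution table with $a(t)=R(t)$, $\delta=0$, $\varepsilon=\nu d\kappa/(d-1)$ and three sources: a constant $G_0=-\tfrac{2\nu d\kappa D}{d-1}$ (anisotropy, with $A_0=2/\nu$), a curvature term $G_1=-\tfrac{\nu d\beta^2}{e^{2\alpha_1}}$ (with $A_1=2/(\nu d)$), and $G_2=-\tfrac{\nu d\kappa}{d-1}(\rho+p)$ with $A_2=0$. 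Since $\delta=0$, the relation $B_i=(A_i+q)/q$ then reproduces exactly the exponents $(2+q\nu)/q\nu$, $(2+q\nu d)/q\nu d$ and $1$ appearing in (\ref{eq: BVBIEMP}).

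For the forward implication I would form the linear combination $d(I_0)-\sum_{i=1}^d(I_i)$. The Hubble-rate terms combine precisely as in (\ref{eq: BIEMPdI1minusIi})--(\ref{eq: BIEMPI1minusIiwitheta}), and because the curvature term $-\tfrac{(d-1)(d-2)\beta^2}{2X_1^2}$ is identical in every $(I_i)$ while $(I_0)$ carries $-\tfrac{d(d-1)\beta^2}{2X_1^2}$, the net curvature contribution collapses to $-\tfrac{d(d-1)\beta^2}{X_1^2}$. The crucial new input is the off-diagonal equation $(I_{01})$: rewriting it as $\sum_{l=2}^d H_l=(d-1)H_1$ forces $H_1=\tfrac{1}{\nu d}H_R$, which integrates to $X_1=e^{\alpha_1}R^{1/\nu d}$ with $\alpha_1$ constant, so that $\beta^2/X_1^2=\beta^2 e^{-2\alpha_1}R^{-2/\nu d}$ is expressible purely through $R$. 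Just as in Theorem~\ref{thm: BIEMP}, equating any two $(I_i)$ and $(I_j)$ annihilates the identical curvature terms and yields $\dot\eta_{ij}+\tfrac1\nu\eta_{ij}H_R=0$, so Lemma~\ref{lem: shortlemma} makes $\eta_{ij}X_1\cdots X_d$ and hence $D$ constant. The resulting equation is exactly the scale factor equation (\ref{eq: AEFE}) under the table, and Theorem~\ref{thm: EFE-EMP} then produces (\ref{eq: BVBIEMP}).

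For the converse I would define $\tau,\varphi,\sigma,\alpha_l,X_l,\phi,\rho,p$ and $V$ as in the statement, with $\alpha_1$ constant and $\alpha_l=c_l\sigma$ for $l\geq 2$ subject to $\sum_{l=2}^d c_l=0$. First I would check $(I_{01})$: since $H_1=\tfrac{1}{\nu d}H_R$ and $\sum_{l=2}^d\dot\alpha_l=\dot\sigma\sum_{l=2}^d c_l=0$, the identity $\sum_{l=2}^d H_l=(d-1)H_1$ is immediate. Next I would establish $(I_0)$ by computing $H_R$ and $\dot\phi$ in terms of $Y$ as in (\ref{eq: BIEMPH-Y})--(\ref{eq: BIEMPdotphi-Y}), evaluating $\sum_{l<k}H_lH_k$ through the expansion $H_l=\tfrac{1}{\nu d}H_R+\dot\alpha_l$ and the constant product $\dot\alpha_l\dot\alpha_k=c_lc_k/(\theta^{2/q\nu}R^{2/\nu})$, and verifying that the definition (\ref{eq: BVBIEMPVphi-Y}) of $V\circ\phi$ was engineered so that $(I_0)$ holds; here the second condition in (\ref{eq: BVBIEMPccondition}) turns $\sum_{2\leq l<k\leq d}c_lc_k$ into $-D\kappa$, while the constant $\alpha_1$ term supplies the curvature piece $-\tfrac{d(d-1)\beta^2}{2\kappa e^{2\alpha_1}}R^{-2/\nu d}$. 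Finally, the converse of Theorem~\ref{thm: EFE-EMP} regains the combined equation $d(I_0)-\sum(I_i)$, and solving the $V$-relation for $\rho$ and substituting recovers each $(I_i)$ individually, exactly as in the closing step of Theorem~\ref{thm: BIEMP}.

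The main obstacle is the bookkeeping in the converse evaluation of the spatial sums $\sum_{l\neq i}(\dot H_l+H_l^2)+\sum_{l<k,\,l,k\neq i}H_lH_k$, which in Bianchi~V loses the full permutation symmetry of the Bianchi~I case because the index $1$ is distinguished: its $\alpha_1$ is constant and the curvature term $\beta^2/X_1^2$ appears in every spatial equation. I expect the delicate point to be confirming that, after invoking $\ddot\alpha_l+\tfrac1\nu\dot\alpha_l H_R=0$ (from $\dot\alpha_l R^{1/\nu}$ being constant) together with the two conditions (\ref{eq: BVBIEMPccondition}) on the $c_l$, the anisotropy and curvature contributions assemble into precisely the $\beta^2/X_1^2$ and $D$ terms demanded by each $(I_i)$. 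This is the computation that must be carried out carefully rather than quoted verbatim from the Bianchi~I proof.
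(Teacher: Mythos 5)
Your proposal is correct and follows essentially the same route as the paper's proof: the same application of Theorem~\ref{thm: EFE-EMP} to $R=(X_1\cdots X_d)^\nu$ with $\delta=0$, $\varepsilon=\nu d\kappa/(d-1)$ and the three source terms (your table merely swaps the labels of the curvature and fluid entries), the same use of $(I_{01})$ to force $X_1=e^{\alpha_1}R^{1/\nu d}$, the same derivation of the constancy of $D$ from Lemma~\ref{lem: shortlemma}, and the same converse strategy of engineering $V$ to give $(I_0)$ and then recovering each $(I_i)$ from the regained combined equation. The index bookkeeping you flag as the delicate point is indeed exactly the computation the paper carries out in (\ref{eq: BVBIEMPddotX/X-Halpha})--(\ref{eq: BVBIEMPdotXldotXk/XlXk-Ralpha}), using precisely the identities you name.
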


\begin{proof}
This proof will implement Theorem \ref{thm: EFE-EMP} with constants and functions as indicated in the following table.

   \begin{table}[ht]
\centering
\caption{{ Theorem \ref{thm: EFE-EMP} applied to Bianchi V}}\label{tb: BVEMP}
\vspace{.2in}
\begin{tabular}{r | l c r | l}
In Theorem & substitute & & In Theorem& substitute \\[4pt]
\hline
\raisebox{-5pt}{$a(t)$} & \raisebox{-5pt}{$R(t)$} && \raisebox{-5pt}{$N$ }& \raisebox{-5pt}{$1$}\\[8pt]
$\delta  $ &$0$      &&    $\varepsilon $ & ${\nu d\kappa}/{(d-1)}$\\[8pt]
$G_0(t)$ & $\mbox{constant } -2\nu d\kappa D /(d-1)$    &&     $A_0$ & $2/\nu$ \\[8pt]
 $G_{1}(t)$ & $ \frac{-\nu d\kappa}{(d-1)}(\rho(t)+p(t))$  &&      $A_{1}$   &$0$\\[8pt]
 $G_{2}(t)$ & constant $-\nu d\beta^2 e^{-2\alpha_1}$  &&      $A_{2}$   &$2/\nu d$\\[8pt]
 $ \lambda_0(\tau)$ &constant ${-2q\nu d \kappa D}/{\theta^2(d-1)}$       &&      $B_0$ &$(2+q\nu)/q\nu$ \\[8pt] 
$\lambda_{1}(\tau)$&$\frac{-q\nu d\kappa}{\theta^2(d-1)}(\varrho(\tau)+\textup{\textlhookp}(\tau))$ &&        $B_{1}$&  $1$ \\[6pt]
$\lambda_{2}(\tau)$&constant $-q\nu d\beta^2/\theta^2 e^{2\alpha_1}$ &&        $B_{2}$&  $(2+q\nu d)/q\nu d$ \\[6pt]
\hline
\end{tabular}
\end{table}

To prove the forward implication, we assume to be given functions which solve the Einstein field equations $(I_0),\dots,(I_d)$.  Forming the linear combination $d(I_0)-\displaystyle\sum_{i=1}^d(I_i)$ of Einstein's equations, we have that
\begin{equation}
d\displaystyle\sum_{l < k}H_lH_k-\displaystyle\sum_{i=1}^d \displaystyle\sum_{l\neq i}(\dot{H}_l+H_l^2)-\displaystyle\sum_{i=1}^d\displaystyle\sum_{\stackrel{l<k}{l,k\neq i}}H_lH_k-\frac{d(d-1)\beta^2}{X_1^2}=d\kappa\left[\dot\phi^2+ \left(\rho+ p\right)\right]\label{eq: BVBIEMPdI1minusIi}
\end{equation}
since $-d^2(d-1)+d(d-1)(d-2)=d(d-1)(-d+d-2)=-2d(d-1)$ and
where the summing indices $l, k\in\{1, \dots, d\}$.  The first double sum in (\ref{eq: BVBIEMPdI1minusIi}) contains the quantity $\frac{\ddot{X}_l}{X_l}$  $(d-1)$-times for any fixed $l$, and the second double sum contains the quantity $\frac{\dot{X}_l\dot{X}_k}{X_lX_k}$ $(d-2)$-times for any fixed $l,k$ pair with $l<k$.  Therefore (\ref{eq: BVBIEMPdI1minusIi}) simplifies to\\
\begin{equation}d\displaystyle\sum_{l<k}H_lH_k-(d-1)\displaystyle\sum_{l=1}^d(\dot{H}_l+H_l^2)-(d-2)\displaystyle\sum_{l<k}H_lH_k-\frac{d(d-1)\beta^2}{X_1^2}\qquad\qquad\qquad\notag\end{equation}
\begin{equation}\qquad\qquad\qquad\qquad\qquad\qquad\qquad=d\kappa\left[\dot\phi^2+ \left(\rho+p\right)\right].\end{equation}
Collecting the first and third sums, we have 
\begin{equation}2\displaystyle\sum_{l<k}H_lH_k-(d-1)\displaystyle\sum_{l=1}^d(\dot{H}_l+H_l^2)-\frac{d(d-1)\beta^2}{X_1^2}=d\kappa\left[\dot\phi^2+\left(\rho+p\right)\right].\label{eq: BVBIEMPI1minusIicollected}\end{equation}
Using the definition (\ref{eq: BVBIEMPR-X}) of $R(t)$, we define
\begin{equation}H_R\stackrel{def.}{=}\frac{\dot{R}}{R}=\frac{\nu \left(X_1\cdots X_d\right)^{\nu-1}\left(\dot{X}_1X_2\cdots X_d + \cdots + X_1X_2\cdots \dot{X}_d\right)}{\left(X_1\cdots X_d\right)^\nu}=\nu\displaystyle\sum_{l=1}^d H_l.\label{eq: BVBIEMPH-X}\end{equation}
Differentiating $H_R$ gives
\begin{equation}\dot{H}_R=\nu\displaystyle\sum_{l=1}^d\dot{H}_l.\label{eq: BVBIEMPdotH-X}\end{equation}
Therefore (\ref{eq: BVBIEMPI1minusIicollected}) can be written as
\begin{equation}2\displaystyle\sum_{l<k}H_lH_k-(d-1)\left(\frac{1}{\nu}\dot{H}_R+\displaystyle\sum_{l=1}^dH_l^2
\right)-\frac{d(d-1)\beta^2}{X_1^2}=d\kappa\left[\dot\phi^2+ \left(\rho+p\right)\right].\end{equation}
Multiplying by $\frac{-\nu}{(d-1)}$ and rearranging, we get
\begin{equation}\dot{H}_R+\frac{\nu}{(d-1)}\left((d-1)\displaystyle\sum_{l=1}^dH_l^2-2\displaystyle\sum_{l<k}H_lH_k\right)+\frac{\nu d\beta^2}{X_1^2}=\frac{-\nu d\kappa}{(d-1)}\left[\dot{\phi}^2+\left(\rho+p\right)
\right].\label{eq: BVBIEMPI1minusIiwithdotH}\end{equation}
By definition (\ref{eq: BVBIEMPeta-X}) of the quantities $\eta_{lk}$, we have 
\begin{equation}
\displaystyle\sum_{l<k}\eta_{lk}^2=\displaystyle\sum_{l<k}\left(H_l^2-2H_lH_k+H_k^2\right).\label{eq: BVBIEMPetalksqrddefn}\end{equation}
The first and last terms on the right-hand side of (\ref{eq: BVBIEMPetalksqrddefn}) sum to
\begin{eqnarray}
\displaystyle\sum_{l<k}\left(H_l^2+H_k^2\right)
&=&\displaystyle\sum_{l=1}^{d-1}\displaystyle\sum_{k=l+1}^dH_l^2
+\displaystyle\sum_{k=2}^d\displaystyle\sum_{l=1}^{k-1}H_k^2\notag\\
&=&\sum_{l=1}^{d-1}(d-l)H_l^2+\sum_{k=2}^d(k-1)H_k^2\notag\\
&=&(d-1)H_1^2+ \sum_{j=2}^{d-1}H_j^2 +\sum_{j=2}^{d-1}(j-1)H_j^2+(d-1)H_d^2\notag\\
&=&(d-1)\sum_{j=1}^dH_j^2\end{eqnarray}
therefore (\ref{eq: BVBIEMPetalksqrddefn}) becomes
\begin{equation}
\displaystyle\sum_{l<k}\eta_{lk}^2
=(d-1)\displaystyle\sum_{l=1}^dH_l^2-2\displaystyle\sum_{l<k}H_lH_k.
\end{equation}
Using this to rewrite (\ref{eq: BVBIEMPI1minusIiwithdotH}) shows that
\begin{equation}\dot{H}_R+\frac{\nu}{(d-1)}\displaystyle\sum_{l<k}\eta_{lk}^2+\frac{\nu d\beta^2}{X_1^2}=\frac{-\nu d\kappa}{(d-1)}\left[\dot{\phi}^2+ \left(\rho+p\right)\right].\label{eq: BVBIEMPI1minusIiwitheta}
\end{equation}

Next we will confirm that $D$ is a constant.  Since the right-hand sides of Einstein equations $(I_i)$ are the same for all $i\in\{1, \dots, d\}$, by equating the left-hand sides of any two equations $(I_i)$ and $(I_j)$ for $i\neq j$,
\begin{equation}\displaystyle\sum_{l\neq i}(\dot{H}_l+H_l^2)+\displaystyle\sum_{\stackrel{l<k}{l,k\neq i}}H_lH_k=\displaystyle\sum_{l\neq j}(\dot{H}_l+H_l^2)+\displaystyle\sum_{\stackrel{l<k}{l,k\neq j}}H_lH_k\label{eq: BVBIEMPlhsIi=lhsIj}\end{equation}
where we recall that the sum indices $l,k\in\{1, \dots, d\}$.  For the first sum on each side, the left and the right-hand sides of (\ref{eq: BVBIEMPlhsIi=lhsIj}) contain all the same terms, except for the $j^{th}$ indexed term which appears on the left, and the $i^{th}$ indexed term which appears on the right.  Therefore many terms cancel and we are left with
\begin{equation}\dot{H}_j+H_j^2+\displaystyle\sum_{\stackrel{l<k}{l,k\neq i}}H_lH_k=\dot{H}_i+H_i^2+\displaystyle\sum_{\stackrel{l<k}{l,k\neq j}}H_lH_k.\label{eq: BVBIEMPlhsIi=lhsIjfirstsumcancelled}\end{equation}
For the second (double) sum on each side, the left and right-hand sides of (\ref{eq: BVBIEMPlhsIi=lhsIjfirstsumcancelled}) contain all the same terms, except for the terms where either $l,k=j$ which appear on the left, and the terms where either $l,k=i$ which appear on the right.  Therefore many terms cancel and we are left with 
\begin{equation}\dot{H}_j+H_j^2+\displaystyle\sum_{{l\neq j}}H_jH_l=\dot{H}_i+H_i^2+\displaystyle\sum_{{l\neq i}}H_iH_l.\label{eq: BVBIEMPlhsIi=lhsIjsumscancelled}\end{equation}
Adding $-H_j^2+H_j^2=0$ to the left and $-H_i^2+H_i^2=0$ to the right, we obtain
\begin{equation}\dot\eta_{ij}+\frac{1}{\nu}\eta_{ij}H_R=0\label{eq: BVBIEMPdotf+fH=0withetaH}\end{equation}
where we have used the expression (\ref{eq: BVBIEMPH-X}) for $H_R$, the definition (\ref{eq: BVBIEMPeta-X}) of $\eta_{ij}$, and as usual dot denotes differentiation with respect to $t$.   By Lemma \ref{lem: shortlemma} with $\mu=1/\nu\neq 0$, which applies since $R(t)$ is positive and differentiable, (\ref{eq: BVBIEMPdotf+fH=0withetaH}) shows that the function $f=\eta_{ij}R^{1/\nu}=\eta_{ij}X_1X_2\cdots X_d$ is constant for any pair $i,j$ (for the pair $i=j$, $f$ is clearly a constant function, namely zero).  Therefore the definition (\ref{eq: BVBIEMPDconstant-X}) of $D$ is also constant, being proportional to a sum of squares of these constant functions.  By the definitions (\ref{eq: BVBIEMPDconstant-X}) and (\ref{eq: BVBIEMPR-X}) of the constant $D$ and the function $R(t)$, we now rewrite (\ref{eq: BVBIEMPI1minusIiwitheta}) as
\begin{equation}\dot{H}_R+\frac{\nu d\beta^2}{X_1^2}=\frac{-\nu d \kappa}{(d-1)}\left[\dot{\phi}^2+\frac{2D}{R^{2/\nu}}+\left(\rho+p\right)
\right].\label{eq: BVBIEMPI1minusIiEFE}\end{equation}
Finally, we use the remaining Einstein equation $(I_{01})$ and the relation $H_R=\nu\displaystyle\sum_{l=1}^dH_l$ in (\ref{eq: BVBIEMPH-X}) to obtain
\begin{equation}H_R=\nu d H_1.\label{eq: BVEMPHR-H1}\end{equation}
That is, $\frac{d}{dt}\ln(R)=\nu d \frac{d}{dt}\ln(X_1) $ which implies $\ln(R)+c_0=\nu d \ln(X_1)$ for some integration constant $c_0\in\mathds{R}$ so that 
\begin{equation}X_1= e^{\alpha_1}R^{1/\nu d}\label{eq: BVEMPX1-R}\end{equation}
for arbitrary constant $\alpha_1=-c_0/\nu d \in\mathds{R}$.  Therefore
 (\ref{eq: BVBIEMPI1minusIiEFE}) becomes
\begin{equation}\dot{H}_R=\frac{-\nu d \kappa}{(d-1)}\left[\dot{\phi}^2+\frac{2D}{R^{2/\nu}}+\frac{(d-1)\beta^2}{\kappa e^{2\alpha_1}R^{2/\nu d}}+\left(\rho+p\right)
\right].\label{eq: BVBIEMPI1minusIiEFEnoX1}\end{equation}
This shows that $R(t), \phi(t), \rho(t)$ and $p(t)$ satisfy the hypothesis of Theorem \ref{thm: EFE-EMP}, applied with  constants $\epsilon, \varepsilon , N, A_0, \dots, A_{N}$ and functions $a(t), G_0(t),\dots,G_{N}(t)$ according to Table \ref{tb: BIEMP}.  Since $\tau(t), Y(\tau), Q(\tau)$ and $\varphi(\tau)$ defined in (\ref{eq: BVBIEMPdottau-X}), (\ref{eq: BVBIEMPY-aQ-varphi}) and (\ref{eq: BVBIEMPvarphi-phi})  are equivalent to that in the forward implication of Theorem \ref{thm: EFE-EMP}, by this theorem and by definition (\ref{eq: BVBIEMPvarrho-rhohookp-p}) of $\varrho(\tau), \textup{\textlhookp}(\tau)$, the generalized EMP equation (\ref{eq: AEMP}) holds for constants $B_0, \dots, B_{N}$ and functions $\lambda_0(\tau), \dots, \lambda_{N}(\tau)$ as indicated in Table \ref{tb: BIEMP}.  This proves the forward implication.

To prove the converse implication, we assume to be given functions which solve the generalized EMP equation (\ref{eq: BVBIEMP}) and we begin by showing that $(I_0)$ is satisfied.   Differentiating the definition of $R(t)$ in (\ref{eq: BVBIEMPR-Yalpha-sigma}) and using the definition in (\ref{eq: BVBIEMPdottau-Yvarphi-Q}) of $\tau(t)$ shows that
\begin{eqnarray}\dot{R}(t)&=&\frac{1}{q}Y(\tau(t))^{\frac{1}{q}-1}Y'(\tau(t))\dot{\tau}(t)\notag\\
&=&\frac{\theta}{q}Y(\tau(t))^{1/q}Y'(\tau(t)).\end{eqnarray}
Dividing by $R(t)$ gives
\begin{equation}H_R(t)\stackrel{def.}{=}\frac{\dot{R}(t)}{R(t)}=\frac{\theta}{q}Y'(\tau(t)).\label{eq: BVBIEMPH-Y}\end{equation}
Differentiating the definition (\ref{eq: BVBIEMPphi-varphi}) of $\phi(t)$ and using definition (\ref{eq: BVBIEMPdottau-Yvarphi-Q}) of  $\tau(t)$,   we get
\begin{equation}\dot{\phi}(t)=\varphi '(\tau(t))\dot{\tau}(t)=\theta\varphi '(\tau(t))Y(\tau(t)).\label{eq: BVBIEMPdotphi-Y}\end{equation}
Using (\ref{eq: BVBIEMPH-Y}) and (\ref{eq: BVBIEMPdotphi-Y}), and also the definitions (\ref{eq: BVBIEMPR-Yalpha-sigma}) and (\ref{eq: BVBIEMPrho-varrhop-hookp}) of $R(t)$ and $\rho(t)$ respectively, the definition (\ref{eq: BVBIEMPVphi-Y}) of $V\circ\phi$ can be written as
\begin{equation}
V\circ\phi=\frac{1}{\kappa}\left(
\frac{(d-1)}{2\nu^2 d}H_R^2-\frac{D\kappa}{R^{2/\nu}}-\frac{d(d-1)\beta^2}{2 e^{2\alpha_1}R^{2/\nu d}}
\right)-\frac{1}{2}\dot\phi^2-\rho-\frac{\Lambda}{\kappa}
\label{eq: BVBIEMPVphi-a}.\end{equation}

The quantity in parenthesis here is in fact equal to the left-hand-side of equation $(I_0)$.  To see this, we differentiate the definition (\ref{eq: BVBIEMPX-Ralpha}) of $X_l(t)$, divide the result by $X_l$ and use the definition (\ref{eq: BVBIEMPH-Y}) of $H_R$ to obtain
\begin{equation}H_l\stackrel{def.}{=}\frac{\dot{X}_l}{X_l}=\frac{\frac{1}{\nu d}R^{{1}/{\nu d}-1}\dot{R}e^{\alpha_l}+\dot{\alpha}_lR^{{1}/{\nu d}}e^{\alpha_l}}{R^{{1}/{\nu d}}e^{\alpha_l}}=\frac{1}{\nu d}H_R+\dot\alpha_l.\label{eq: BVBIEMPdotX/X-Halpha}\end{equation}
Therefore we have
\begin{equation}\displaystyle\sum_{l<k}H_lH_k
=\displaystyle\sum_{l<k}\left(\frac{1}{\nu^2 d^2}H_R^2+\frac{1}{\nu d}(\dot\alpha_l+\dot\alpha_k)H_R+\dot\alpha_l \dot\alpha_k\right).
\label{eq: BVBIEMPdotXldotXk/XlXk-Halpha}\end{equation}
The first term on the right-hand side of (\ref{eq: BVBIEMPdotXldotXk/XlXk-Halpha}) does not depend on the indices $l,k$, and is therefore equal to $\frac{1}{\nu^2 d^2}H_R^2$ times the quantity
\begin{equation}\displaystyle\sum_{l<k}1=\displaystyle\sum_{k=2}^d\displaystyle\sum_{l=1}^{k-1}1=\displaystyle\sum_{k=2}^d(k-1)=\displaystyle\sum_{j=1}^{d-1}j=\frac{d(d-1)}{2}.\label{eq: BVBIEMPsum1l<k}\end{equation}
The second term on the right-hand side of (\ref{eq: BVBIEMPdotXldotXk/XlXk-Halpha}) sums to zero since 
\begin{eqnarray}
\displaystyle\sum_{l<k}(\dot\alpha_l+\dot\alpha_k)
&=&\displaystyle\sum_{l=1}^{d-1}\displaystyle\sum_{k=l+1}^d\dot\alpha_l+\displaystyle\sum_{k=2}^d\displaystyle\sum_{l=1}^{k-1}\dot\alpha_k\notag\\
&=&\displaystyle\sum_{l=1}^{d-1}(d-l)\dot\alpha_l +\displaystyle\sum_{k=2}^d (k-1)\dot\alpha_k\notag\\
&=&(d-1)\dot\alpha_1+\displaystyle\sum_{j=2}^{d-1}(d-j+j-1)\dot\alpha_j + (d-1)\dot\alpha_d\notag\\
&=&(d-1)\displaystyle\sum_{l=1}^d\dot\alpha_l\notag\\
&=&(d-1)\displaystyle\sum_{l=2}^d\dot\alpha_l\mbox{ since $\alpha_1$ constant}\notag\\
&=&(d-1)\sigma(t)\displaystyle\sum_{l=2}^d c_l\notag\\
&=&0\label{eq: BVBIEMPalphal+alphak=0}\end{eqnarray}
where on the last lines, we have used the definition (\ref{eq: BVBIEMPR-Yalpha-sigma}) of $\alpha_l(t)$ for $l\in\{2,\dots, d\}$ and the condition (\ref{eq: BVBIEMPccondition}) on the constants $c_l$.  For the third term on the right-hand side of   (\ref{eq: BVBIEMPdotXldotXk/XlXk-Halpha}), we use the definitions of $\alpha_l(t), \sigma(t), \tau(t)$ and $R(t)$ in (\ref{eq: BVBIEMPR-Yalpha-sigma}), (\ref{eq: BVBIEMPdotsigma-dottau}) and (\ref{eq: BVBIEMPdottau-Yvarphi-Q}) to write
\begin{equation}\dot\alpha_l\dot\alpha_k=c_l c_k \dot\sigma^2=\frac{c_lc_k}{\dot\tau^{2/q\nu}}=\frac{c_lc_k}{\theta^{2/q\nu}(Y\circ\tau)^{2/q\nu}}=\frac{c_lc_k}{\theta^{2/q\nu}R^{2/\nu}}\label{eq: BVBIEMPdotalphaldotalphak-R}\end{equation}
for $l\in\{2, \dots, d\}$.
Therefore (\ref{eq: BVBIEMPdotXldotXk/XlXk-Halpha}) becomes
\begin{equation}\displaystyle\sum_{l<k}H_lH_k=\frac{(d-1)}{2\nu^2 d}H_R^2+\displaystyle\sum_{2\leq l<k\leq}\frac{c_lc_k}{\theta^{2/q\nu}R^{2/\nu}}
\label{eq: BVBIEMPdotXldotXk/XlXk-Rc}
\end{equation}
since $\dot\alpha_1=0$. 
Then by the condition (\ref{eq: BVBIEMPccondition}) on the constants $c_l$, (\ref{eq: BVBIEMPdotXldotXk/XlXk-Rc}) becomes
\begin{equation}\displaystyle\sum_{l<k}H_lH_k=\frac{(d-1)}{2\nu^2 d}H_R^2-\frac{D\kappa}{R^{2/\nu}}.
\label{eq: BVBIEMPdotXldotXk/XlXk-RD}
\end{equation}
Using this and the definition (\ref{eq: BVX1-constR}) of $X_1(t)=e^{\alpha_1}R(t)^{1/\nu d}$, the expression (\ref{eq: BVBIEMPVphi-a}) for $V$ can now be written as
\begin{equation}V\circ\phi=\frac{1}{\kappa}\left(\displaystyle\sum_{l<k}H_lH_k-\frac{d(d-1)\beta^2}{2X_1^2}\right)-\frac{1}{2}\dot{\phi}^2-\rho-\frac{\Lambda}{\kappa},\label{eq: BVBIEMPVphi-X}
\end{equation}
showing that $(I_0)$ holds under the assumptions of the converse implication.

To conclude the proof we must also show that the equations $(I_1), \dots, (I_d)$ hold.   In the converse direction the hypothesis of the converse of Theorem \ref{thm: EFE-EMP} holds, applied with constants $N, B_0, \dots, B_{N}$ and functions $\lambda_0(\tau), \dots, \lambda_{N}(\tau)$ as indicated in Table \ref{tb: BIEMP}.  Since  $\tau(t), \varphi(\tau), R(t)$ and $\phi(t)$ defined in (\ref{eq: BVBIEMPdottau-Yvarphi-Q}), (\ref{eq: BVBIEMPR-Yalpha-sigma}) and (\ref{eq: BVBIEMPphi-varphi}) are consistent with the converse implication of Theorem \ref{thm: EFE-EMP}, applied with $a(t), \delta  $ and $\varepsilon $ as in Table \ref{tb: BIEMP}, by this theorem and by the definition (\ref{eq: BVBIEMPrho-varrhop-hookp}) of $\rho(t), p(t)$ the  scale factor  equation (\ref{eq: AEFE}) holds for constants $\delta  , \varepsilon , A_0, \dots, A_{N}$ and functions $G_0(t),\dots,G_{N}(t)$ according to Table \ref{tb: BIEMP}.  That is, we have regained (\ref{eq: BVBIEMPI1minusIiEFE}).  Now solving (\ref{eq: BVBIEMPVphi-a}) for $\rho(t)$ and substituting this into (\ref{eq: BVBIEMPI1minusIiEFE}), we obtain
\begin{equation}\dot{H}_R=\frac{-\nu d\kappa}{(d-1)}\left[ \frac{1}{2}\dot\phi^2 -V\circ\phi+\frac{D}{R^{2/\nu}}-\frac{d(d-1)\beta^2}{2\kappa e^{2\alpha_1}R^{2/\nu d}}+p +\frac{(d-1)}{2\nu^2 d\kappa}H_R^2-\frac{\Lambda}{\kappa} \right].\end{equation}
Multiplying by $\frac{(d-1)}{\nu d}$ and rearranging we get
\begin{equation}\frac{(d-1)}{\nu d}\dot{H}_R+  \frac{(d-1)}{2\nu^2 d}H_R^2   +\frac{\kappa D}{R^{2/\nu}}-\frac{d(d-1)\beta^2}{2e^{2\alpha_1}R^{2/\nu d}}=-\kappa\left[\frac{1}{2}\dot\phi^2-V\circ\phi+p\right] + \Lambda\label{eq: BVBIEMPlhs=lhsofIi}\end{equation}
The left-hand side of this equation is in fact equal to the left-hand-side of $(I_i)$ for any $i\in\{1,\dots, d\}$. 
  To see this, first recall (\ref{eq: BVBIEMPdotX/X-Halpha}) and write
  \begin{equation}\dot{H}_l+H_l^2=\frac{1}{\nu d}\dot{H}_R+\frac{1}{\nu^2 d^2}H_R^2+\ddot{\alpha}_l+\frac{2}{\nu d}\dot{\alpha}_lH_R+\dot{\alpha}_l^2,\label{eq: BVBIEMPddotX/X-Halpha}
\end{equation}
therefore for any fixed $i$
\begin{equation}\displaystyle\sum_{l\neq i}(\dot{H}_l+H_l^2)=\displaystyle\sum_{l\neq i}\left(\frac{1}{\nu d}\dot{H}_R+\frac{1}{\nu^2 d^2}H_R^2+\ddot{\alpha}_l+\frac{2}{\nu d}\dot{\alpha}_lH_R+\dot{\alpha}_l^2\right).\label{eq: BVBIEMPsumlneqiddotX/X}\end{equation}
The last term on the right-hand side sums to zero since
\begin{eqnarray}
\displaystyle\sum_{l\neq i}\dot\alpha_l^2
&=&\sum_{l\neq i}\dot\alpha_l (-\dot\alpha_i-\sum_{k\neq i, k\neq l}\dot\alpha_k)\notag\\
&=&-\dot\alpha_i\sum_{l\neq i}\dot\alpha_l-2\sum_{\stackrel{l<k}{l,k\neq i}}\dot\alpha_l\dot\alpha_k\notag\\
&=&(\sum_{k\neq i}\dot\alpha_k)(\sum_{l\neq i}\dot\alpha_l)-2\sum_{\stackrel{l<k}{l,k\neq i}}\dot\alpha_l\dot\alpha_k\notag\\
&=&0.\end{eqnarray}
Since the first two terms on the right-hand side of (\ref{eq: BVBIEMPsumlneqiddotX/X}) do not depend on the indices $l,k$, and
also using the definitions (\ref{eq: BVBIEMPR-Yalpha-sigma}) and (\ref{eq: BVBIEMPccondition}) of $\alpha_l$ and the constants $c_l$ to write $\sum_l \dot\alpha_l=0 \Rightarrow \sum_{l\neq i}\dot\alpha_l=-\dot\alpha_i$, (\ref{eq: BVBIEMPsumlneqiddotX/X}) becomes
\begin{equation}\displaystyle\sum_{l\neq i}(\dot{H}_l+H_l^2)=\frac{(d-1)}{\nu d}\dot{H}_R+\frac{(d-1)}{\nu^2 d^2}H_R^2- \frac{2}{\nu d}H_R\dot\alpha_i+\displaystyle\sum_{l\neq i}\ddot\alpha_l .
\label{eq: BVBIEMPddotX/X-H}\end{equation}
By the definitions (\ref{eq: BVBIEMPR-Yalpha-sigma}), (\ref{eq: BVBIEMPdotsigma-dottau}) and (\ref{eq: BVBIEMPdottau-Yvarphi-Q}) of $\alpha_l(t), \sigma(t), \tau(t)$ and $R(t)$,
\begin{eqnarray}\dot\alpha_l(t)R(t)^{1/\nu}&=&c_l\dot\sigma(t)R(t)^{1/\nu}\notag\\
&=&\frac{c_l}{\dot\tau(t)^{1/q\nu}}R(t)^{1/\nu}\notag\\
&=&\frac{c_l}{\theta^{1/q\nu}Y(\tau(t))^{1/q\nu}}R(t)^{1/\nu}\notag\\
&=&\frac{c_l}{\theta^{1/q\nu}}\mbox{ is a constant}\label{eq: BVBIEMPdotalphaRconstant}\end{eqnarray}
for $l\in\{2,\dots,d\}$.
By Lemma \ref{lem: shortlemma} with $\mu=1/\nu\neq 0$, which applies since $R(t)$ is a positive differentiable function, (\ref{eq: BVBIEMPdotalphaRconstant}) shows that 
\begin{equation}\ddot\alpha_l+\frac{1}{\nu}\dot\alpha_lH_R=0\end{equation}
for all $l\in\{2, \dots, d\}$.  Therefore in total, we have that (\ref{eq: BVBIEMPsumlneqiddotX/X}) is
\begin{equation}\displaystyle\sum_{l\neq i}(\dot{H}_l+H_l^2)=\frac{(d-1)}{\nu d}\dot{H}_R+\frac{(d-1)}{\nu^2 d^2}H_R^2+\frac{(2-d)}{\nu d}H_R\displaystyle\sum_{l\neq i}\dot\alpha_l.
\label{eq: BVBIEMPddotX/X-alphaH}\end{equation}

\noindent To form the rest of the left-hand side of $(I_i)$, again use (\ref{eq: BVBIEMPdotX/X-Halpha}) to obtain
\begin{equation}H_lH_k=\frac{1}{\nu^2 d^2}H_R^2+\frac{1}{\nu d}(\dot\alpha_l+\dot\alpha_k)H_R+\dot\alpha_l\dot\alpha_k,\end{equation}
therefore for any fixed $i$
\begin{equation}\displaystyle\sum_{\stackrel{l<k}{l,k\neq i}}H_lH_k
=\displaystyle\sum_{\stackrel{l<k}{l,k\neq i}}\left(\frac{1}{\nu^2 d^2}H_R^2+\frac{1}{\nu d}(\dot\alpha_l+\dot\alpha_k)H_R+\dot\alpha_l\dot\alpha_k\right).\label{eq: BVBIEMPsuml<klkneqidotXldotXk/XlXk-alhpaH}\end{equation}
As we saw in (\ref{eq: BVBIEMPsum1l<k}), $\sum_{l<k}1=\frac{d(d-1)}{2}$ therefore the first term on the right-side of  (\ref{eq: BVBIEMPsuml<klkneqidotXldotXk/XlXk-alhpaH}), which does not depend on the indices $l,k$, is equal to $\frac{1}{\nu^2 d^2}H_R^2$ times
\begin{equation}\displaystyle\sum_{\stackrel{l<k}{l,k\neq i}}1=\displaystyle\sum_{l<k}1-\displaystyle\sum_{l\neq i}1=\frac{d(d-1)}{2}-(d-1)=\frac{(d-1)(d-2)}{2}.\label{eq: BVBIEMPsum1l<klkneqi}\end{equation}
As we saw in (\ref{eq: BVBIEMPalphal+alphak=0}), $\sum_{l<k}(\dot\alpha_l+\dot\alpha_k)=0$ therefore the second term on the right-hand side of (\ref{eq: BVBIEMPsuml<klkneqidotXldotXk/XlXk-alhpaH}) sums to $\frac{1}{\nu d}H_R$ times
\begin{eqnarray}
\sum_{\stackrel{l<k}{l,k\neq i}}(\dot\alpha_l+\dot\alpha_k)
&=&\sum_{{l<k}}(\dot\alpha_l+\dot\alpha_k)-\sum_{l\neq i}(\dot\alpha_l+\dot\alpha_i)\notag\\
&=&-\sum_{l\neq i}\dot\alpha_l-(d-1)\dot\alpha_i\notag\\
&=&\dot\alpha_i-(d-1)\dot\alpha_i\notag\\
&=&(2-d)\dot\alpha_i\label{eq: BVBIEMPsuml<klkneqidotalphal+dotalphak=2-ddotalphai}
\end{eqnarray}
where again we have used the definitions
(\ref{eq: BVBIEMPR-Yalpha-sigma}) and (\ref{eq: BVBIEMPccondition}) of $\alpha_l$ and the constants $c_l$ to write
$\sum_{l\neq i}\dot\alpha_l=-\dot\alpha_i$.  Considering the third term on the right-hand side of (\ref{eq: BVBIEMPsuml<klkneqidotXldotXk/XlXk-alhpaH}), we have that
\begin{eqnarray}
\sum_{\stackrel{l<k}{l,k\neq i}}\dot\alpha_l\dot\alpha_k
&=&-\sum_{\stackrel{l<k}{l,k\neq i}}\dot\alpha_l\dot\alpha_k+(\sum_{k\neq i}\dot\alpha_k)(\sum_{l\neq i}\dot\alpha_l)\notag\\
&=&-\sum_{\stackrel{l<k}{l,k\neq i}}\dot\alpha_l\dot\alpha_k-\dot\alpha_i\sum_{l\neq i}\dot\alpha_l\notag\\
&=&-\sum_{l<k}\dot\alpha_l\dot\alpha_k\notag\\
&=&-\sum_{2\leq l<k\leq d}\frac{c_lc_k}{\theta^{2/q\nu}R^{2/\nu}}
\label{eq: BVBIEMPsuml<klkneqidotalphaldotalphak=-suml<kdotalphaldotalphak}\end{eqnarray}
where again we have used that $\sum_{l\neq i}\dot\alpha_l=-\dot\alpha_i$,
 and on the last line we recall (\ref{eq: BVBIEMPdotalphaldotalphak-R}) and also that $\alpha_1$ is constant.  So by  (\ref{eq: BVBIEMPsum1l<klkneqi}), (\ref{eq: BVBIEMPsuml<klkneqidotalphal+dotalphak=2-ddotalphai}) and (\ref{eq: BVBIEMPsuml<klkneqidotalphaldotalphak=-suml<kdotalphaldotalphak}), in total (\ref{eq: BVBIEMPsuml<klkneqidotXldotXk/XlXk-alhpaH}) becomes
\begin{equation}\displaystyle\sum_{\stackrel{l<k}{l,k\neq i}}H_lH_k=\frac{(d-1)(d-2)}{2\nu^2 d^2}H_R^2+\frac{(2-d)}{\nu d}\dot\alpha_iH_R-\sum_{2\leq l<k\leq d}\frac{c_lc_k}{\theta^{2/q\nu}R^{2/\nu}}.
\label{eq: BVBIEMPdotXldotXk/XlXk-Ralphac}\end{equation}
By (\ref{eq: BVBIEMPddotX/X-alphaH}), (\ref{eq: BVBIEMPdotXldotXk/XlXk-Ralphac}) and the definition (\ref{eq: BVEMPX1-R}) of $X_1(t)$ , the left-hand side of any $(I_i)$ Einstein equation is
\begin{equation}\displaystyle\sum_{l\neq i}(\dot{H}_l+H_l^2)+\displaystyle\sum_{\stackrel{l<k}{l,k\neq i}}H_lH_k-\frac{(d-1)(d-2)\beta^2}{2X_1^{2}}\qquad\qquad\qquad\qquad\qquad\qquad\notag\end{equation}
\begin{equation}\qquad\qquad\qquad=\frac{(d-1)}{\nu d}\dot{H}_R+\frac{d(d-1)}{2\nu^2 d^2}H_R^2-\sum_{2\leq l<k\leq d}\frac{c_lc_k}{\theta^{2/q\nu}R^{2/\nu}}-\frac{(d-1)(d-2)\beta^2}{2e^{2\alpha_1}R^{2/\nu d}}.
\label{eq: BVBIEMPBIlhsIi}\end{equation}
Then by the condition (\ref{eq: BVBIEMPccondition}) on the constants $c_l$, (\ref{eq: BVBIEMPBIlhsIi}) becomes
\begin{equation}\displaystyle\sum_{l\neq i}(\dot{H}_l+H_l^2)+\displaystyle\sum_{\stackrel{l<k}{l,k\neq i}}H_lH_k-\frac{(d-1)(d-2)\beta^2}{2X_1^2}\qquad\qquad\qquad\qquad\qquad\qquad\notag\end{equation}
\begin{equation}\qquad\qquad\qquad=\frac{(d-1)}{\nu d}\dot{H}_R+\frac{d(d-1)}{2\nu^2 d^2}H_R^2+\frac{D\kappa}{R^{2/\nu}}-\frac{(d-1)(d-2)\beta^2}{2e^{2\alpha_1}R^{2/\nu d}}.
\label{eq: BVBIEMPdotXldotXk/XlXk-Ralpha}\end{equation}
Therefore by (\ref{eq: BVBIEMPlhs=lhsofIi}) and (\ref{eq: BVBIEMPdotXldotXk/XlXk-Ralpha}), we obtain $(I_i)$ for all $i\in\{1,\dots,d\}$.  This proves the theorem.
\end{proof}

\subsection{Reduction to classical EMP:  pure scalar field}

We take $\rho=p=D=0$ and choose parameter $q=1/d\nu$ in Theorem \ref{thm: BVEMP}.  Then $c_i=0$ for all $2\leq i\leq d$ and we take $\alpha_1=0$ so that $X_1(t)=\cdots=X_d(t)$ and by the theorem solving the Bianchi V Einstein equations
\begin{equation}
\displaystyle\sum_{l< k}H_lH_k-\frac{d(d-1)\beta^2}{2X_1^2}
\stackrel{(I_0)'}{=}
\kappa\left[\frac{1}{2}\dot\phi^2+V\circ\phi\right]+\Lambda\label{eq: BVEFEI0Idrhopzero}\end{equation}
\begin{equation}
\displaystyle\sum_{l\neq i}(\dot{H}_l+H_l^2)+\displaystyle\sum_{\stackrel{l < k}{l,k\neq i}}H_lH_k-\frac{(d-1)(d-2)\beta^2}{2X_1^2}\stackrel{(I_i)'}{=}-\kappa\left[\frac{1}{2}\dot\phi^2-V\circ\phi\right]+\Lambda\notag\end{equation}
for $l,k,i\in\{1,\dots, d\}$ and
\begin{equation}\beta \displaystyle\sum_{l=2}^d H_l-(d-1)\beta H_1\stackrel{(I_{01})'}{=}0.\end{equation}
is equivalent to solving the classical EMP equation
\begin{equation}Y''(\tau)+Q(\tau)Y(\tau)=\frac{-\beta^2}{\theta^2
Y(\tau)^{3}}\label{eq: D=0BVclassicalEMP}\end{equation}
for constant $\theta>0$.
The solutions of $(I_0)', (I_1)', \dots, (I_d)'$ and $(I_{01})'$ in (\ref{eq: BVEFEI0Idrhopzero}) and the solutions of (\ref{eq: D=0BVclassicalEMP}) are related by
\begin{equation}R(t)=Y(\tau(t))^{d\nu}\qquad\mbox{ and }\qquad \varphi '(\tau)^2= \frac{(d-1)}{\kappa} Q(\tau)\label{eq: D=0BVR-Yvarphi-Q}\end{equation}
for $\nu\neq0$, $\phi(t)=\varphi(\tau(t))$, $R(t)\stackrel{def.}{=} \left(X_1(t)\cdots X_d(t)\right)^\nu$ and 
\begin{equation}\dot\tau(t)=\theta R(t)^{1/d\nu}=\theta Y(\tau(t))\label{eq: nomatterisoBVclassEMPdottau-R-Y}\end{equation}
for any $\theta>0$.  In the converse direction
\begin{equation}X_l(t)=R(t)^{1/\nu d}
\label{eq: D=0BVBIEMPX-Ralpha}\end{equation}
for $1\leq l\leq d$ and where
 $V$ is taken to be
\begin{equation}
V(\phi(t))=\left[\frac{d(d-1)\theta^2}{2 \kappa }(Y')^2-\frac{d(d-1)\beta^2}{2\kappa
 Y^{2}}-\frac{\theta^2}{2}Y^2(\varphi ')^2-\frac{\Lambda}{\kappa}\right]\circ\tau(t)\label{eq: D=0BVBIEMPVphi-Y}
\end{equation}

We now refer to Appendix D with $\lambda_1= - \beta^2/\theta^2 
<0$ for solutions of the classical  EMP equation (\ref{eq: D=0BVclassicalEMP}).  
Since by reducing to classical EMP we have taken $D=0$ in (\ref{eq: BVBIEMPDconstant-X}),  $c_2=\cdots=c_d=0$ and in this case we do not require $\sigma(t)$ from Appendix D.  By comparing (\ref{eq: nomatterisoBVclassEMPdottau-R-Y}) and (\ref{eq: dottau-Ytau^s0}), we note to only consider solutions of (\ref{eq: dottau-Ytau^s0}) in Appendix D corresponding to $r_0=1$.     

\begin{example}
For $\theta=1$ and choice of constants $\beta=\nu=1$, we consider solution 5 in Table \ref{tb: exactEMP} with $d_0=b_0=0$ and $c_0=a_0=1$.  That is, we have solution $Y(\tau)=(1+2\tau)^{1/2}$ to the classical EMP $Y''(\tau)+Q(\tau)Y(\tau)=-1/Y(\tau)^3$ for $Q(\tau)=0$.  By (\ref{eq: taueqnforY=superpowerb=0r=1}) - (\ref{eq: dottauforY=superpowerb=0r=1}) we have $\tau(t)=\frac{1}{2}\left((t-t_0)^2-1\right)$ and 
\begin{equation}R(t)=Y(\tau(t))^{d}=(t-t_0)^{d}\end{equation}
for  $t_0\in\mathds{R}$ so that 
\begin{eqnarray}X_i(t)=R(t)^{1/d}=(t-t_0)\label{eq: XiforBVclassicalEMPY=superpowerd=b=0}
\end{eqnarray}
for $1\leq i\leq d$.  
Since $Q(\tau)=0=\varphi'(\tau)$,
\begin{equation}\phi(t)\stackrel{def.}{=}\varphi(\tau(t)) = \phi_0\label{eq: phiforBVclassicalEMPY=superpowerb=d=0}\end{equation}
for constant $\phi_0\in\mathds{R}$, and by (\ref{eq: D=0BVBIEMPVphi-Y}), (\ref{eq: YprimetauforY=superpowerb=d=0r=1}) and (\ref{eq: dottauforY=superpowerb=0r=1}), we obtain constant potential
\begin{eqnarray}
V(\phi(t))&=&\left[\frac{d(d-1)}{2 \kappa }\left((Y')^2-\frac{1}{Y^{2}}\right)-\frac{\Lambda}{\kappa}\right]\circ\tau(t)\notag\\
&=&-\frac{\Lambda}{\kappa}.\label{eq: VphiforBIclassicalEMPY=superpowerb=d=0}
\end{eqnarray}
Since $H_l(t)=\dot{X}_l(t)/X_l(t)=1/(t-t_0)$ for all $1\leq l \leq d$, $\sum_{l<k}H_lH_k = \sum_{l<k} 1 / (t-t_0)^2 = d(d-1)/2(t-t_0)^2$ so that Einstein equation $(I_0)'$ is satisfied.  Also for any fixed $i$, $\sum_{l\neq i}(\dot{H}_l+H_l^2)+\sum_{\stackrel{l<k}{l,k\neq i}} H_lH_k = \sum_{l\neq i} 0 + \left[ d(d-1)/2 - (d-1) \right] / (t-t_0)^2 = (d-1)(d-2)/2(t-t_0)^2$ therefore the Einstein equations $(I_i)'$ for $1\leq i \leq d$ are also satisfied.  Of course, this is a vacuum solution.
\end{example}

\begin{example}
For $\theta=1$ and choice of constants $\beta=\nu=1$, we consider solution 5 in Table \ref{tb: exactEMP} with $d_0=a_0=0$, $b_0=4$ and $c_0=1$.  That is, we have solution $Y(\tau)=(4\tau(t)^2+2\tau(t))^{1/2}$ to the classical EMP $Y''(\tau)+Q(\tau)Y(\tau)=-1/Y(\tau)^3$ for $Q(\tau)=0$.  By (\ref{eq: taueqnforY=superpowerr=1}) - (\ref{eq: dottauforY=superpowerr=1}) we have $\tau(t)=\frac{1}{8}\left(e^{2(t-t_0)} + e^{-2(t-t_0)}-8\right)$ and 
\begin{eqnarray}R(t)&=&Y(\tau(t))^{d}\notag\\
&=&\frac{1}{4^{d}}\left(e^{2(t-t_0)}-e^{-2(t-t_0)}\right)^{d}\notag\\
&=&\frac{1}{2^{d}}\sinh(2(t-t_0))^{d}
\end{eqnarray}
for any  $t_0\in\mathds{R}$ so that 
\begin{eqnarray}
X_i(t)=R(t)^{1/d}=\frac{1}{2}\sinh(2(t-t_0)).
\end{eqnarray}
Since $Q(\tau)=0=\varphi'(\tau)$,
\begin{equation}\phi(t)\stackrel{def.}{=}\varphi(\tau(t)) = \phi_0\end{equation}
for constant $\phi_0\in\mathds{R}$ and by (\ref{eq: D=0BVBIEMPVphi-Y}), (\ref{eq: YprimetauforY=superpowerd=0b>0r=1}) and (\ref{eq: dottauforY=superpowerr=1}), we obtain constant potential
\begin{eqnarray}
V(\phi(t))&=&\left[\frac{d(d-1)}{2 \kappa }\left((Y')^2-\frac{1}{Y^{2}}\right)-\frac{\Lambda}{\kappa}\right]\circ\tau(t)\notag\\
&=&  \frac{d(d-1)}{2 \kappa }\left(   \frac{16 cosh^2(2(t-t_0)) -16}{4sinh^2(2(t-t_0)) } \right) -\frac{\Lambda}{\kappa}\notag\\
&=& \frac{2d(d-1)}{ \kappa } -\frac{\Lambda}{\kappa}
\label{eq: D=0isoBVEMPVphi-Y}
\end{eqnarray}
Since $H_l(t)=\dot{X}_l(t)/X_l(t)=2 coth(t-t_0)$ for all $1\leq l \leq d$, $\sum_{l<k}H_lH_k = \sum_{l<k} 4 coth^2(t-t_0) = 2d(d-1) coth^2(t-t_0)$ so that the left side of the Einstein equation $(I_0)'$ is
\begin{equation}2d(d-1) coth^2(t-t_0)-2d(d-1)csch^2(t-t_0) = 2d(d-1)\end{equation}
by the identity $coth^2(x)-csch^2(x)=1$, and therefore $(I_0)'$ is satisfied by the solution.  Also for any fixed $i$, 
$\sum_{l\neq i}(\dot{H}_l+H_l^2)+\sum_{\stackrel{l<k}{l,k\neq i}} H_lH_k = (d-1) \left(- 4 csch^2(t-t_0)+4coth^2(t-t_0) \right) + \left[ d(d-1)/2 - (d-1) \right] 4 coth^2(t-t_0) = 4(d-1) + 2(d-1)(d-2)coth^2(t-t_0)$ therefore the left side of Einstein equations $(I_i)'$ for $1\leq i \leq d$  equal
\begin{eqnarray} &&4(d-1) + 2(d-1)(d-2)coth^2(t-t_0) - 2(d-1)(d-2)csch^2(t-t_0)\notag\\
&&\quad\quad =4(d-1) + 2(d-1)(d-2) =  2d(d-1)   \end{eqnarray}
so that $(I_i)'$ are also satisfied for $1\leq i\leq d$.  
\end{example}

\section{In terms of a Schr\"odinger-Type Equation}

If one would like to reformulate the Einstein field equations $(I_0),\dots, (I_d)$ in (\ref{eq:  BVEFEI0Id}) in terms of an equation with one less non-linear term than that which is provided by the generalized EMP formulation, one can apply Corollary \ref{cor: EFE-NLSAnonzeroEzero} to the difference $d(I_0)-\displaystyle\sum_{i=1}^d(I_i)$ (and similar to above, define $V\circ\phi$ in $u-$notation to be such that $(I_0)$ holds).   Below is the resulting statement.

\begin{thm}\label{thm: BVNLS}
Suppose you are given twice differentiable functions $X_1(t), \dots, X_d(t)>0$, a once differentiable function $\phi(t)$, and also functions $\rho(t), p(t), V(x)$ which satisfy the Einstein equations $(I_0),\dots,(I_d)$ for some $\Lambda,\beta\in\mathds{R}, d\in\mathds{N}\backslash\{0,1\}, \kappa\in\mathds{R}\backslash\{0\}$.  
Let $g(\sigma)$ denote the inverse of a function $\sigma(t)$ which satisfies
\begin{equation}\dot{\sigma}(t)=\frac{1}{\theta \left(X_1(t)\cdots X_d(t)\right)}\label{eq: BVBINLSdotsigma-R}\end{equation}
for some $\theta>0$.  Then the following functions 
\begin{eqnarray}
u(\sigma)&=&\left[\frac{1}{X_1\cdots X_d}\right]\circ g(\sigma)\label{eq: BVBINLSu-R}\\
P(\sigma)&=&\frac{d\kappa}{(d-1)}\psi '(\sigma)^2\label{eq: BVBINLSP-psi}
\end{eqnarray}
solve the Schr\"odinger-type equation
\begin{equation}u''(\sigma)+\left[E-P(\sigma)\right]u(\sigma)=
\frac{\theta^2 d \kappa(\uprho(\sigma)+\mathrm{p}(\sigma))}{(d-1) u(\sigma)} +\frac{\theta^2 d\beta^2}{e^{2\alpha_1}u(\sigma)^{1-2/d}}
\label{eq: BVBINLS}\end{equation}
for some $\alpha_1\in\mathds{R}$,
\begin{equation}\psi(\sigma)=\phi(g(\sigma))\label{eq: BVBINLSpsi-phi}\end{equation}
\begin{equation}\uprho(\sigma)=\rho(g(\sigma)), \ \mathrm{p}(\sigma)=p(g(\sigma)). \label{eq: BVBINLSuprho-rhormp-p}\end{equation}
and where 
\begin{equation}E\stackrel{def.}{=}\frac{-\theta^2}{(d-1)}X_1^2X_2^2\cdots X_d^2\displaystyle\sum_{l<k}\eta_{lk}^2\label{eq: BVBINLSE-X}\end{equation}
is a constant for 
\begin{equation}\eta_{lk}\stackrel{def.}{=}H_l-H_k, \mbox{\small $l\neq k, l, k\in\{1, \dots, d\}$.}\label{eq: BVBINLSeta-X}\end{equation}

Conversely, suppose you are given a twice differentiable function $u(\sigma)>0$, and also functions $P(\sigma)$ and $\uprho_i(\sigma), \mathrm{p}(\sigma)$ which solve (\ref{eq: BVBINLS}) for some constants $E<0, \theta>0, \kappa\in\mathds{R}\backslash\{0\}, \beta, \alpha\in\mathds{R}$ and $d\in\mathds{N}\backslash\{0,1\}$.  In order to construct functions which solve $(I_0),\dots, (I_d)$, first find $\sigma(t), \psi(\sigma)$ which solve the differential equations
\begin{equation}\dot{\sigma}(t)=\frac{1}{\theta} u(\sigma(t))\qquad\mbox{ and }\qquad \psi '(\sigma)^2= \frac{(d-1)}{ d \kappa} P(\sigma).\label{eq: BVBINLSdotsigma-upsi-P}\end{equation}
Let
\begin{equation}R(t)=u(\sigma(t))^{-\nu}\label{eq: BVBINLSR-ualpha-sigma}\qquad\qquad \alpha_l(t)\stackrel{def.}{=}c_l\sigma(t), l\in\{1,\dots,d\} \end{equation}
where $c_l$ are any constants for which both
\begin{equation}\displaystyle\sum_{l=1}^d c_l=0\qquad\mbox{ and }\qquad\displaystyle\sum_{l<k}c_lc_k=\frac{(d-1)E}{2d}.\label{eq: BVBINLScconditions}\end{equation}
Then the functions
\begin{equation}X_1(t)=R(t)^{1/\nu d}e^{\alpha_1}, \ \alpha_1\in\mathds{R}\label{eq: BVNLSX1-constR}\end{equation}
\begin{equation}X_l(t)=R(t)^{1/\nu d}e^{\alpha_l(t)},  \ \  2\leq l \leq d\label{eq: BVBINLSX-Ralpha}\end{equation}
\begin{equation}\phi(t)=\psi(\sigma(t))\label{eq: BVBINLSphi-psi}\end{equation}
\begin{equation}\rho(t)=\uprho(\sigma(t)), \ p(t)=\mathrm{p}(\sigma(t))\label{eq: BVBINLSrho-uprhop-rmp}\end{equation}
and\\
$V(\phi(t))$
\begin{equation}
=\left[\frac{(d-1)}{2\theta^2 d\kappa}(u')^2
+\frac{(d-1)E}{2d\theta^2\kappa }u^2
-\frac{d(d-1)\beta^2}{2\kappa e^{2\alpha_1}}u^{2/d}
-\frac{1}{2\theta^2}u^2(\psi ')^2-\uprho-\frac{\Lambda}{\kappa}\right]\circ\sigma(t)
\label{eq: BVBINLSVphi-u}
\end{equation}
satisfy the equations $(I_0),\dots, (I_d)$ in (\ref{eq: BVEFEI0Id}).\end{thm}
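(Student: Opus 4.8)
The plan is to prove both implications by invoking Corollary \ref{cor: EFE-NLSAnonzeroEzero} under the identifications that make the combined Einstein equation match its scale factor equation, exactly as the Bianchi I Schr\"odinger result (Theorem \ref{thm: BINLS}) drew on its EMP companion; here the companion is the Bianchi V EMP theorem (Theorem \ref{thm: BVEMP}), so most of the algebra can be cited from its proof rather than redone. Throughout I will use that $E=-\frac{2\theta^2 d\kappa}{(d-1)}D$, so that the anisotropy constant in (\ref{eq: BVBINLSE-X}) is proportional to the EMP constant $D$.

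For the forward direction, I would first form the combination $d(I_0)-\sum_{i=1}^d(I_i)$ and simplify it precisely as in (\ref{eq: BVBIEMPdI1minusIi})--(\ref{eq: BVBIEMPI1minusIiwitheta}), reaching $\dot H_R+\frac{\nu}{(d-1)}\sum_{l<k}\eta_{lk}^2+\frac{\nu d\beta^2}{X_1^2}=\frac{-\nu d\kappa}{(d-1)}[\dot\phi^2+(\rho+p)]$. Next I would equate the left sides of $(I_i)$ and $(I_j)$ to get $\dot\eta_{ij}+\frac1\nu\eta_{ij}H_R=0$, so by Lemma \ref{lem: shortlemma} the quantity $\eta_{lk}X_1\cdots X_d$ is constant and hence $E$ is constant. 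I would then use the off-diagonal equation $(I_{01})$ with $H_R=\nu\sum_l H_l$ to obtain $H_R=\nu d H_1$, which integrates to $X_1=e^{\alpha_1}R^{1/\nu d}$; substituting this eliminates $X_1$ and yields the scale factor equation $\dot H_R=\frac{-\nu d\kappa}{(d-1)}[\dot\phi^2+\frac{2D}{R^{2/\nu}}+\frac{(d-1)\beta^2}{\kappa e^{2\alpha_1}R^{2/\nu d}}+(\rho+p)]$, i.e. (\ref{eq: BVBIEMPI1minusIiEFEnoX1}). This is the hypothesis of Corollary \ref{cor: EFE-NLSAnonzeroEzero} with $a=R$, $\varepsilon=\nu d\kappa/(d-1)$, linear term $G=\nu E/\theta^2$ and $A=2/\nu$, matter term $G_1=\frac{-\nu d\kappa}{(d-1)}(\rho+p)$ with $A_1=0$, and curvature term $G_2=-\nu d\beta^2 e^{-2\alpha_1}$ with $A_2=2/\nu d$; the corollary then produces (\ref{eq: BVBINLS}) with $C_1=1$, $C_2=1-2/d$ and $F_1,F_2$ as tabulated.

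For the converse, I would reverse the dictionary. Differentiating (\ref{eq: BVBINLSR-ualpha-sigma}) gives $H_R=-\frac{\nu}{\theta}u'(\sigma)$, and differentiating $\phi$ gives $\dot\phi=\frac1\theta\psi'(\sigma)u(\sigma)$; feeding these into (\ref{eq: BVBINLSVphi-u}) rewrites $V\circ\phi$ in $R$-notation. To show $(I_0)$ I would reuse the identity $\sum_{l<k}H_lH_k=\frac{(d-1)}{2\nu^2 d}H_R^2+\frac{(d-1)E}{2d\theta^2 R^{2/\nu}}$, which follows verbatim from (\ref{eq: BVBIEMPdotXldotXk/XlXk-Halpha})--(\ref{eq: BVBIEMPdotXldotXk/XlXk-RD}) once $\dot\alpha_l\dot\alpha_k=c_lc_k/\theta^2 R^{2/\nu}$ is inserted and the condition $\sum_{l<k}c_lc_k=(d-1)E/2d$ is applied, together with $X_1=e^{\alpha_1}R^{1/\nu d}$; this makes the bracket in the rewritten $V$ equal $\frac1\kappa(\sum_{l<k}H_lH_k-\frac{d(d-1)\beta^2}{2X_1^2})$, which is $(I_0)$. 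To get $(I_1),\dots,(I_d)$ I would apply the converse of Corollary \ref{cor: EFE-NLSAnonzeroEzero} to regain the scale factor equation, solve the $V$-relation for $\rho$, substitute, multiply by $(d-1)/\nu d$ and rearrange; the left side is identified with that of every $(I_i)$ by the constant-$\ddot\alpha_l$ computation (\ref{eq: BVBIEMPddotX/X-Halpha})--(\ref{eq: BVBIEMPdotXldotXk/XlXk-Ralpha}) carried over from Theorem \ref{thm: BVEMP}.

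The main obstacle I anticipate is bookkeeping rather than conceptual: I must verify that the Bianchi V curvature term $\beta^2/X_1^2$ transforms correctly under the substitution $a=u^{-2/A}$ into the nonlinear term $\theta^2 d\beta^2/(e^{2\alpha_1}u^{1-2/d})$, which rests on the power relation $C_2=1-2A_2/A$ with $A_2=2/\nu d$ and $A=2/\nu$, and that the relation $X_1=e^{\alpha_1}R^{1/\nu d}$ forced by $(I_{01})$ remains consistent in the converse direction so that the $\beta^2$ term carried in $V$ matches the $\beta^2/X_1^2$ appearing in $(I_0)$--$(I_d)$.
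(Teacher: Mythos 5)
Your proposal is correct and follows essentially the same route as the paper: form $d(I_0)-\sum_{i=1}^d(I_i)$, use the equality of the $(I_i)$ right-hand sides plus Lemma \ref{lem: shortlemma} to show $E$ is constant, use $(I_{01})$ to force $X_1=e^{\alpha_1}R^{1/\nu d}$, and then invoke Corollary \ref{cor: EFE-NLSAnonzeroEzero} with exactly the substitutions of Table \ref{tb: BINLS}, recycling the algebraic identities from the proof of Theorem \ref{thm: BVEMP} for the converse. Your observation that $E=-\tfrac{2\theta^2 d\kappa}{(d-1)}D$ and the power check $C_2=1-2A_2/A=1-2/d$ are consistent with the paper's bookkeeping, so nothing further is needed.
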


\begin{proof}
This proof will implement Corollary \ref{cor: EFE-NLSAnonzeroEzero} with constants and functions as indicated in the following table.

\break

\begin{table}[ht]
\centering
\caption{{ Corollary \ref{cor: EFE-NLSAnonzeroEzero} applied to Bianchi V}}\label{tb: BINLS}
\vspace{.2in}
\begin{tabular}{r | l c r | l}
In Corollary  & substitute & & In Corollary & substitute \\[4pt]
\hline
\raisebox{-5pt}{$a(t)$} & \raisebox{-5pt}{$R(t)$}      &&    \raisebox{-5pt}{$\varepsilon $} & \raisebox{-5pt}{${\nu d \kappa}/{(d-1)}$}\\[8pt]
$G(t)$ & $\mbox{constant } \nu E/\theta^2$    &&     $A$ & $2/\nu$ \\[8pt]
 $G_{1}(t)$ & $ \frac{-\nu d\kappa}{(d-1)}(\rho(t)+p(t))$  &&      $A_{1}$   &$0$\\[8pt]
 $G_{2}(t)$ & constant $-\nu d\beta^2e^{-2\alpha_1}$  &&      $A_{2}$   &$2/\nu d$\\[8pt]
$F_{1}(\sigma)$&$ \frac{\theta^2 d \kappa}{(d-1)} (\uprho(\sigma)+\mathrm{p}(\sigma)) $ &&        $C_{1}$&  $1$ \\[6pt]
$F_{2}(\sigma)$&constant $\theta^2 d\beta^2 e^{-2\alpha_1}$ &&        $C_{2}$&  $1-2/d$ \\[6pt]
\hline
\end{tabular}
\end{table}

Much of this proof will rely on computations that are exactly the same as those seen in the proof of Theorem \ref{thm: BVEMP} (the generalized EMP formulation of Bianchi V).  Therefore we will restate the relevant results here, but point the reader to the details in the proof of Theorem \ref{thm: BVEMP}.

To prove the forward implication, we assume to be given functions which solve the Einstein field equations $(I_0), \dots, (I_d)$ in (\ref{eq: BVEFEI0Id}).  Forming the linear combination $d(I_0)-\displaystyle\sum_{i=1}^d(I_i)$ of Einstein's equations and simplifying, as was done in (\ref{eq: BVBIEMPdI1minusIi}) - (\ref{eq: BVBIEMPI1minusIiwitheta}), 
\begin{equation}\dot{H}_R+\frac{\nu}{(d-1)}\displaystyle\sum_{l<k}\eta_{lk}^2+\frac{\nu d\beta^2}{X_1^2}=\frac{-\nu d\kappa}{(d-1)}\left[\dot{\phi}^2+ \left(\rho+p\right)\right].\label{eq: BVBINLSI1minusIiwitheta}\end{equation}
where 
\begin{equation}H_R(t)\stackrel{def.}{=}\frac{\dot{R}(t)}{R(t)}=\nu\displaystyle\sum_{l=1}^d H_l \label{eq: BVNLSHR-Hl}\end{equation}
and
\begin{equation}R(t)\stackrel{def.}{=}(X_1(t)\cdots X_d(t))^\nu\label{eq: BVBINLSR-X}\end{equation}
for any $\nu\neq 0$.  Next we will confirm that $E$ is constant.  As was done in (\ref{eq: BVBIEMPlhsIi=lhsIj})-(\ref{eq: BVBIEMPdotf+fH=0withetaH}), since the right-hand sides of Einstein's equations $(I_i)$  are the same for all $i\in\{1, \dots, d\}$, by equating the left-hand sides of any two equations $(I_i)$ and $(I_j)$ for $i\neq j$, and after some rearranging we obtain
\begin{equation}\dot\eta_{ij}+\frac{1}{\nu}\eta_{ij}H_R=0\label{eq: BVBINLSdoteta+stuff=0}\end{equation}
for $\eta_{ij}$ defined in (\ref{eq: BVBINLSeta-X}).  Therefore the definition (\ref{eq: BVBINLSE-X}) of $E$ is constant, being proportional to a sum of squares of these constant functions.  By the definitions (\ref{eq: BVBINLSE-X}) and (\ref{eq: BVBINLSR-X}) of the constant $E$ and the function $R(t)$, we now rewrite (\ref{eq: BVBINLSI1minusIiwitheta}) from above as 
\begin{equation}\dot{H}_R+\frac{\nu d\beta^2}{X_1^2}=\frac{-\nu d \kappa}{(d-1)}\left[\dot\phi^2 +(\rho+p)\right]+\frac{\nu E}{\theta^2 R^{2/\nu}}.\label{eq: BVBINLSI0minusIi}\end{equation}
Finally, we use the remaining Einstein equation $(I_{01})$ and the relation $H_R=\nu\displaystyle\sum_{l=1}^dH_l$ in (\ref{eq: BVNLSHR-Hl}) to obtain
\begin{equation}H_R=\nu  d H_1\label{eq: BVNLSHR-H1}\end{equation}
so that again we have
\begin{equation}X_1=e^{\alpha_1}R^{1/\nu d}\label{eq: BVNLSX1-alpha1R}\end{equation}
for $\alpha_1\in\mathds{R}$ and (\ref{eq: BVBINLSI0minusIi}) becomes
\begin{equation}\dot{H}_R=\frac{-\nu d \kappa}{(d-1)}\left[\dot\phi^2 +(\rho+p)\right]+\frac{\nu E}{\theta^2 R^{2/\nu}}-\frac{\nu d\beta^2}{e^{2\alpha_1}R^{2/\nu d}}.\label{eq: BVBINLSI0minusIinoX1}\end{equation}
This shows that $R(t), \phi(t), \rho(t)$ and $p(t)$ satisfy the hypothesis of Corollary \ref{cor: EFE-NLSAnonzeroEzero}, applied with  constants $\varepsilon , N, A, A_1 \dots, A_{N}$ and functions $a(t), G(t), G_1(t),\dots,G_{N}(t)$ according to Table \ref{tb: BINLS}.  Since $\sigma(t), u(\sigma), P(\sigma)$ and $\psi(\sigma)$ defined in (\ref{eq: BVBINLSdotsigma-R}), (\ref{eq: BVBINLSu-R}), (\ref{eq: BVBINLSP-psi}) and (\ref{eq: BVBINLSpsi-phi})  are equivalent to that in the forward implication of Corollary \ref{cor: EFE-NLSAnonzeroEzero}, applied with constants and functions according to Table \ref{tb: BINLS}, by this corollary and by definition (\ref{eq: BVBINLSuprho-rhormp-p}) of $\uprho(\sigma), \mathrm{p}(\sigma)$, the Schr\"odinger-type equation (\ref{eq: CNLSANONZERO}) holds for constants $C_1, \dots, C_{N}$ and functions $F_1(\sigma), \dots, F_{N}(\sigma)$ as indicated in Table \ref{tb: BINLS}.  This proves the forward implication.

To prove the converse implication, we assume to be given functions which solve the Schr\"odinger-type equation (\ref{eq: BVBINLS}) and we begin by showing that $(I_0)$ is satisfied.  Differentiating the definition of $R(t)$ in (\ref{eq: BVBINLSR-ualpha-sigma}) and using the definition in (\ref{eq: BVBINLSdotsigma-upsi-P}) of $\sigma(t)$, 
\begin{eqnarray}\dot{R}(t)
&=&-\nu u(\sigma(t))^{-\nu-1}u'(\sigma(t))\dot\sigma(t)\notag\\
&=&-\frac{\nu}{\theta}u(\sigma(t))^{-\nu}u'(\sigma(t)).\label{eq: BVBINLSdotR-u}\end{eqnarray}
Dividing by $R(t)$,
\begin{equation}H_R\stackrel{def.}{=}\frac{\dot{R}}{R}=-\frac{\nu}{\theta}u'(\sigma(t)).\label{eq: BVBINLSH-u}\end{equation}
Differentiating the definition (\ref{eq: BVBINLSphi-psi}) of $\phi(t)$ and using the definition in (\ref{eq: BVBINLSR-ualpha-sigma}) of $\sigma(t)$,
\begin{equation}
\dot{\phi}(t)=\psi '(\sigma(t))\dot{\sigma}(t)=\frac{1}{\theta}\psi '(\sigma(t))u(\sigma(t)).\label{eq: BVBINLSphi-u}
\end{equation}
Using (\ref{eq: BVBINLSH-u}) and (\ref{eq: BVBINLSphi-u}), and also the definitions (\ref{eq: BVBINLSR-ualpha-sigma}) and (\ref{eq: BVBINLSrho-uprhop-rmp}) of $R(t)$ and $\rho(t)$ respectively, the definition (\ref{eq: BVBINLSVphi-u}) of $V\circ\phi$ can be written as
\begin{equation}
V\circ\phi
=\frac{1}{\kappa}\left(\frac{(d-1)}{2\nu^2 d}H_R^2+\frac{(d-1)E}{2d\theta^{2}R^{2/\nu}} -\frac{d(d-1)\beta^2}{2e^{2\alpha_1}R^{2/\nu d}} \right)-\frac{1}{2}\dot{\phi}^2-\rho-\frac{\Lambda}{\kappa}.\label{eq: BVBINLSVphi-RE}
\end{equation}
The quantity in parenthesis here is in fact equal to the left-hand-side of equation $(I_0)$.  
To see this, first note that the definitions
$X_l(t)\stackrel{def.}{=}R(t)^{1/\nu d}e^{\alpha_l(t)}$ in (\ref{eq: BVBINLSX-Ralpha}) and
$H(t)\stackrel{def.}{=}\frac{\dot{R}(t)}{R(t)}$ in (\ref{eq: BVBINLSH-u}), and the condition $\sum_l c_l=0$, are the same as those in Theorem \ref{thm: BIEMP}.
Also by the definitions (\ref{eq: BVBINLSR-ualpha-sigma}), (\ref{eq: BVBINLSdotsigma-upsi-P}) and (\ref{eq: BVBINLSR-ualpha-sigma})  of $\alpha_l(t), \sigma(t)$ and $R(t)$, we obtain 
\begin{equation}\dot\alpha_l\dot\alpha_k=c_lc_k\dot\sigma^2=\frac{c_lc_k}{\theta^2}(u\circ\sigma)^2=\frac{c_lc_k}{\theta^2R(t)^{2/\nu}}\label{eq: BVBINLSdotalphaldotalphak-Rc},\end{equation}
which is a slightly modified version of (\ref{eq: BVBIEMPdotalphaldotalphak-R}) from our computation in the proof of Theorem \ref{thm: BIEMP}.  Therefore by the arguments in (\ref{eq: BVBIEMPdotXldotXk/XlXk-Halpha})-(\ref{eq: BVBIEMPdotXldotXk/XlXk-Rc}), and using (\ref{eq: BVBINLSdotalphaldotalphak-Rc}) to slightly modify the last term to apply here,
\begin{equation}\displaystyle\sum_{l<k}H_lH_k=\frac{(d-1)}{2\nu^2 d}H_R^2+\displaystyle\sum_{l<k}\frac{c_lc_k}{\theta^{2}R^{2/\nu}}.
\label{eq: BVBINLSdotXldotXk/XlXk-Rc}
\end{equation}
Then by the condition (\ref{eq: BVBINLScconditions}) on the constants $c_l$, (\ref{eq: BVBINLSdotXldotXk/XlXk-Rc}) becomes
\begin{equation}\displaystyle\sum_{l<k}H_lH_k=\frac{(d-1)}{2\nu^2 d}H_R^2+\frac{(d-1)E}{2d\theta^{2}R^{2/\nu}}.
\label{eq: BVBINLSdotXldotXk/XlXk-RE}
\end{equation}
Using this and the definition (\ref{eq: BVNLSX1-constR}) of $X_1(t)=e^{\alpha_1}R(t)^{1/\nu d}$, the expression (\ref{eq: BVBINLSVphi-RE}) for $V$ can now be written as
\begin{equation}V\circ\phi=\frac{1}{\kappa}\left(\displaystyle\sum_{l<k}H_lH_k-\frac{d(d-1)\beta^2}{2X_1^2}\right)-\frac{1}{2}\dot{\phi}^2-\rho-\frac{\Lambda}{\kappa},\label{eq: BVBIEMPVphi-X}
\end{equation}
showing that $(I_0)$ holds under the assumptions of the converse implication.

To conclude the proof we must also show that the equations $(I_1), \dots, (I_d)$ hold.   In the converse direction the hypothesis of the converse of Corollary \ref{cor: EFE-NLSAnonzeroEzero} holds, applied with constants $N, C_1, \dots, C_{N}$ and functions $F_1(\sigma), \dots, F_{N}(\sigma)$ as indicated in Table \ref{tb: BINLS}.  Since  $\sigma(t), \psi(\sigma), R(t)$ and $\phi(t)$ defined in (\ref{eq: BVBINLSdotsigma-upsi-P}), (\ref{eq: BVBINLSR-ualpha-sigma}) and (\ref{eq: BVBINLSphi-psi}) are consistent with the converse implication of Corollary \ref{cor: EFE-NLSAnonzeroEzero}, applied with $a(t)$ and $\varepsilon $ as in Table \ref{tb: BINLS}, by this corollary and by the definition (\ref{eq: BVBINLSrho-uprhop-rmp}) of $\rho(t), p(t)$ the  scale factor  equation (\ref{eq: CEFEANONZERO}) holds for constants $\varepsilon , A, A_1, \dots, A_{N}$ and functions $G(t),G_1(t),\dots,G_{N}(t)$ according to Table \ref{tb: BINLS}.  That is, we have regained (\ref{eq: BVBINLSI0minusIinoX1}).  Now solving (\ref{eq: BVBINLSVphi-RE}) for $\rho(t)$ and substituting this into (\ref{eq: BVBINLSI0minusIinoX1}), we obtain
\begin{equation}
\dot{H}_R
=
\frac{-\nu d\kappa}{(d-1)}
\left[ 
\frac{1}{2}\dot\phi^2 -V\circ\phi+\frac{(d-1)E}{2 d\theta^2\kappa R^{2/\nu}}-\frac{d(d-1)\beta^2}{2\kappa e^{2\alpha_1}R^{2/\nu d}}+ p\right.\notag\end{equation}
\begin{equation}\qquad\qquad\left. +\frac{(d-1)}{2\nu^2 d\kappa}H_R^2-\frac{\Lambda}{\kappa} \right]+\frac{\nu E}{\theta^2 R^{2/\nu}}-\frac{\nu d\beta^2}{e^{2\alpha_1}R^{2/\nu d}}.\end{equation}
Multiplying by $\frac{(d-1)}{\nu d}$ and rearranging,
\begin{equation}
\frac{(d-1)}{\nu d}\dot{H}_R+\frac{(d-1)}{2\nu^2 d}H_R^2-\frac{(d-1)E}{2 d\theta^2 R^{2/\nu}}-\frac{(d-1)(d-2)\beta^2}{2e^{2\alpha_1}R^{2/\nu d}}\qquad\qquad\qquad\qquad\qquad\notag\end{equation}
\begin{equation}
\qquad\qquad\qquad
=
-\kappa
\left[ 
\frac{1}{2}\dot\phi^2 -V\circ\phi+p \right]+\Lambda.\label{eq: BVBINLSlhs=lhsofIi}\end{equation}
The left-hand side of this equation is in fact equal to the left-hand-side of $(I_i)$ for any $i\in\{1,\dots, d\}$.  To see this, again we use that the definitions $X_l(t)\stackrel{def.}{=}R(t)^{1/\nu d}e^{\alpha_l(t)}$ in (\ref{eq: BVBINLSX-Ralpha}) and
$H_R(t)\stackrel{def.}{=}\frac{\dot{R}(t)}{R(t)}$ in (\ref{eq: BVBINLSH-u}), and the condition $\sum_l c_l=0$, are the same as those in Theorem \ref{thm: BIEMP}.  Also by the definitions (\ref{eq: BVBINLSR-ualpha-sigma}) and (\ref{eq: BVBINLSdotsigma-upsi-P}) of $\alpha_l(t), \sigma(t)$ and $R(t)$, we obtain 
\begin{eqnarray}
\dot\alpha_l(t) R(t)^{1/\nu}
&=&c_l\dot\sigma(t) R(t)^{1/\nu}\notag\\
&=&\frac{c_l}{\theta}u(\sigma(t))R(t)^{1/\nu}\notag\\
&=&\frac{c_l}{\theta }\mbox{ is a constant}
\label{eq: BVBINLSdotalphaRconstant},\end{eqnarray}
which shows that
\begin{equation}\ddot\alpha_l+\frac{1}{\nu}\dot\alpha_l H_R=0\label{eq: BVBINLSddotalpha+stuff=0}\end{equation}
holds here, as it does in Theorem \ref{thm: BIEMP}.  Therefore by the arguments in  (\ref{eq: BVBIEMPddotX/X-Halpha})-(\ref{eq: BVBIEMPBIlhsIi}), and as above using (\ref{eq: BVBINLSdotalphaldotalphak-Rc}) to slightly modify the last term of (\ref{eq: BVBIEMPBIlhsIi}) to apply here,
\begin{equation}\displaystyle\sum_{l\neq i}(\dot{H}_l+H_l^2)+\displaystyle\sum_{\stackrel{l<k}{l,k\neq i}}H_lH_k=\frac{(d-1)}{\nu d}\dot{H}_R+\frac{(d-1)}{2\nu^2 d}H_R^2-\sum_{l<k}\frac{c_lc_k}{\theta^{2}R^{2/\nu}}.
\label{eq: BVBINLSBIlhsIi}\end{equation}
Then by the condition (\ref{eq: BVBINLScconditions}) on the constants $c_l$ and the definition (\ref{eq: BVNLSX1-constR}) of $X_1(t)$, the left-hand side of any Einstein equation $(I_i), i>0$ is
\begin{equation}\displaystyle\sum_{l\neq i}(\dot{H}_l+H_l^2)+\displaystyle\sum_{\stackrel{l<k}{l,k\neq i}}H_lH_k-\frac{(d-1)(d-2)\beta^2}{2X_1^2}\qquad\qquad\qquad\qquad\notag\end{equation}
\begin{equation}\qquad\qquad\qquad=\frac{(d-1)}{\nu d}\dot{H}_R+\frac{(d-1)}{2\nu^2 d}H_R^2-\frac{(d-1)E}{2d\theta^{2}R^{2/\nu}}-\frac{(d-1)(d-2)\beta^2}{2e^{2\alpha_1}R^{2/\nu d}}.
\label{eq: BVBINLSdotXldotXk/XlXk-Ralpha}\end{equation}
Combining (\ref{eq: BVBINLSlhs=lhsofIi}) and (\ref{eq: BVBINLSdotXldotXk/XlXk-Ralpha}), we obtain $(I_i)$ for all $i\in\{1,\dots,d\}$.  This proves the theorem.\end{proof}

\section{In terms of an Alternate Schr\"odinger-Type Equation}

\begin{thm}\label{thm: altBVNLS}
Suppose you are given twice differentiable functions $X_1(t), \dots, X_d(t)>0$, a once differentiable function $\phi(t)$, and also functions $\rho(t), p(t), V(x)$ which satisfy the Einstein equations $(I_0),\dots,(I_d)$ in (\ref{eq: BVEFEI0Id}) for some $\Lambda,\beta\in\mathds{R}, d\in\mathds{N}\backslash\{0,1\}, \kappa\in\mathds{R}\backslash\{0\}$.  
Let $g(\sigma)$ denote the inverse of a function $\sigma(t)$ which satisfies
\begin{equation}\dot{\sigma}(t)=\frac{1}{\theta \left(X_1(t)\cdots X_d(t)\right)}\label{eq: altBVBINLSdotsigma-R}\end{equation}
for some $\theta>0$.  Then the following functions 
\begin{eqnarray}
u(\sigma)&=&\left[\frac{1}{X_1\cdots X_d}\right]\circ g(\sigma)\label{eq: altBVBINLSu-R}\\
P(\sigma)&=&\frac{d\kappa}{(d-1)}\psi '(\sigma)^2\label{eq: altBVBINLSP-psi}
\end{eqnarray}
solve the Schr\"odinger-type equation
\begin{equation}u''(\sigma)+\left[E-P(\sigma)\right]u(\sigma)=
\frac{\theta^2 d \kappa(\uprho(\sigma)+\mathrm{p}(\sigma))}{(d-1) u(\sigma)} -\frac{\theta^2 D}{\nu u(\sigma)^{1-d}}
\label{eq: altBVBINLS}\end{equation}
for some $\alpha_1\in\mathds{R}$,
\begin{equation}\psi(\sigma)=\phi(g(\sigma))\label{eq: altBVBINLSpsi-phi}\end{equation}
\begin{equation}\uprho(\sigma)=\rho(g(\sigma)), \ \mathrm{p}(\sigma)=p(g(\sigma)). \label{eq: altBVBINLSuprho-rhormp-p}\end{equation}
and where 
\begin{equation}E\stackrel{def.}{=}-\theta^2 d\beta^2e^{-2\alpha_1}, \alpha_1\in\mathds{R}\label{eq: altBVBINLSE-beta}\end{equation}
and
\begin{equation}D\stackrel{def.}{=}\frac{-\nu}{(d-1)}X_1^2X_2^2\cdots X_d^2\displaystyle\sum_{l<k}\eta_{lk}^2\label{eq: altBVBINLSD-X}\end{equation}
is a constant for 
\begin{equation}\eta_{lk}\stackrel{def.}{=}H_l-H_k, \mbox{\small $l\neq k, l, k\in\{1, \dots, d\}$.}\label{eq: altBVBINLSeta-X}\end{equation}

Conversely, suppose you are given a twice differentiable function $u(\sigma)>0$, and also functions $P(\sigma)$ and $\uprho(\sigma), \mathrm{p}(\sigma)$ fwhich solve (\ref{eq: altBVBINLS}) for some constants $E<0, \theta>0, \kappa\in\mathds{R}\backslash\{0\}, \beta, \alpha\in\mathds{R}$ and $d\in\mathds{N}\backslash\{0,1\}$.  In order to construct functions which solve $(I_0),\dots, (I_d)$, first find $\sigma(t), \psi(\sigma)$ which solve the differential equations
\begin{equation}\dot{\sigma}(t)=\frac{1}{\theta} u(\sigma(t))\qquad\mbox{ and }\qquad \psi '(\sigma)^2= \frac{(d-1)}{ d \kappa} P(\sigma).\label{eq: altBVBINLSdotsigma-upsi-P}\end{equation}
Let
\begin{equation}R(t)=u(\sigma(t))^{-\nu}\label{eq: altBVBINLSR-ualpha-sigma}\qquad\qquad \alpha_l(t)\stackrel{def.}{=}c_l\sigma(t), l\in\{1,\dots,d\} \end{equation}
where $c_l$ are any constants for which both
\begin{equation}\displaystyle\sum_{l=1}^d c_l=0\qquad\mbox{ and }\qquad\displaystyle\sum_{l<k}c_lc_k=\frac{(d-1)\theta^2 D}{2d\nu}.\label{eq: altBVBINLScconditions}\end{equation}
Then the functions
\begin{equation}X_1(t)=R(t)^{1/\nu d}e^{\alpha_1}, \ \alpha_1\in\mathds{R}\label{eq: altBVNLSX1-constR}\end{equation}
\begin{equation}X_l(t)=R(t)^{1/\nu d}e^{\alpha_l(t)},  \ \  2\leq l \leq d\label{eq: altBVBINLSX-Ralpha}\end{equation}
\begin{equation}\phi(t)=\psi(\sigma(t))\label{eq: altBVBINLSphi-psi}\end{equation}
\begin{equation}\rho(t)=\uprho(\sigma(t)), \ p(t)=\mathrm{p}(\sigma(t))\label{eq: altBVBINLSrho-uprhop-rmp}\end{equation}
and\\
$V(\phi(t))$
\begin{equation}
=\left[\frac{(d-1)}{2\theta^2 d\kappa}(u')^2
+\frac{(d-1)D}{2d\nu \kappa }u^2
-\frac{d(d-1)\beta^2}{2\kappa e^{2\alpha_1}}u^{2/d}
-\frac{1}{2\theta^2}u^2(\psi ')^2-\uprho-\frac{\Lambda}{\kappa}\right]\circ\sigma(t)
\label{eq: altBVBINLSVphi-u}
\end{equation}
satisfy the equations $(I_0),\dots, (I_d)$.\end{thm}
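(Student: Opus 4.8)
The plan is to obtain the alternate Schr\"odinger-type equation (\ref{eq: altBVBINLS}) by applying Corollary \ref{cor: EFE-NLSAnonzeroEzero} to the very same linear combination $d(I_0)-\sum_{i=1}^d(I_i)$ that drives Theorems \ref{thm: BVEMP} and \ref{thm: BVNLS}, but with the \emph{curvature} contribution $\beta^2/X_1^2$ now cast in the role of the distinguished term whose exponent is $A$, so that it is the curvature term (rather than the anisotropy) that transforms into the linear term $E\,u(\sigma)$. Concretely, I would set $a(t)=R(t)=(X_1\cdots X_d)^\nu$ and $\varepsilon=\nu d\kappa/(d-1)$, take $G(t)$ to be the constant curvature coefficient with exponent $A=2/\nu d$, let the anisotropy coefficient be $G_1(t)$ with exponent $A_1=2/\nu$, and let the matter coefficient $-\tfrac{\nu d\kappa}{d-1}(\rho+p)$ be $G_2(t)$ with $A_2=0$. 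This is exactly the substitution table \ref{tb: BINLS} with the curvature and anisotropy rows interchanged; the induced $E$, the $F_i$ and the exponents $C_i$ are then read off from the corollary's relations $E=\tfrac{\theta^2A}{2}G$, $F_i=-\tfrac{\theta^2A}{2}G_i$ and $A_i=\tfrac{A}{2}(1-C_i)$.

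For the forward implication I would begin verbatim as in the proof of Theorem \ref{thm: BVEMP}: forming $d(I_0)-\sum_{i=1}^d(I_i)$ and performing the index manipulations (\ref{eq: BVBIEMPdI1minusIi})--(\ref{eq: BVBIEMPI1minusIiwitheta}) gives $\dot H_R+\tfrac{\nu}{d-1}\sum_{l<k}\eta_{lk}^2+\tfrac{\nu d\beta^2}{X_1^2}=-\tfrac{\nu d\kappa}{d-1}[\dot\phi^2+(\rho+p)]$. I would then reuse the argument (\ref{eq: BVBIEMPlhsIi=lhsIj})--(\ref{eq: BVBIEMPdotf+fH=0withetaH}) together with Lemma \ref{lem: shortlemma} to confirm that $\eta_{ij}X_1\cdots X_d$ is constant, hence that $D$ in (\ref{eq: altBVBINLSD-X}) is constant, and invoke the off-diagonal equation $(I_{01})$ in (\ref{eq: otherBVeqn}) via (\ref{eq: BVEMPHR-H1}) to get $X_1=e^{\alpha_1}R^{1/\nu d}$ as in (\ref{eq: BVEMPX1-R}). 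This recasts the combined equation into the scale-factor form (\ref{eq: CEFEANONZERO}) for $R$, at which point Corollary \ref{cor: EFE-NLSAnonzeroEzero} with the table above delivers (\ref{eq: altBVBINLS}), the curvature term appearing linearly and the anisotropy term appearing as the nonlinear power $u^{d-1}$.

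For the converse I would follow the converse of Theorem \ref{thm: BVNLS} essentially line for line. Differentiating $R=u^{-\nu}$ and $\phi=\psi\circ\sigma$ produces $H_R=-\tfrac{\nu}{\theta}u'$ and $\dot\phi=\tfrac{1}{\theta}\psi'u$ as in (\ref{eq: BVBINLSH-u})--(\ref{eq: BVBINLSphi-u}); substituting these into the definition (\ref{eq: altBVBINLSVphi-u}) of $V\circ\phi$, using $\dot\alpha_l\dot\alpha_k=c_lc_k/(\theta^2R^{2/\nu})$ and the summation identity (\ref{eq: BVBIEMPdotXldotXk/XlXk-Halpha})--(\ref{eq: BVBINLSdotXldotXk/XlXk-RE}), recovers $(I_0)$. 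Then the relation $\ddot\alpha_l+\tfrac{1}{\nu}\dot\alpha_lH_R=0$ (\ref{eq: BVBINLSddotalpha+stuff=0}) and the counting identities (\ref{eq: BVBIEMPddotX/X-Halpha})--(\ref{eq: BVBIEMPBIlhsIi}) reduce the left-hand side of each $(I_i)$ to $\tfrac{(d-1)}{\nu d}\dot H_R$ plus the requisite $H_R^2$ and $R^{-2/\nu}$ terms, matching the scale-factor equation returned by the corollary; solving (\ref{eq: altBVBINLSVphi-u}) for $\rho$ and substituting then closes $(I_1),\dots,(I_d)$.

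The hard part is not analytic but bookkeeping, since every summation identity, the constancy of $D$, and the use of $(I_{01})$ are already established in Theorems \ref{thm: BVEMP} and \ref{thm: BVNLS}. The one genuinely delicate point is to check that designating the curvature term — whose exponent $2/\nu d$ is in general non-integer — as the distinguished ``$A$'' term of Corollary \ref{cor: EFE-NLSAnonzeroEzero} is legitimate (one needs $A=2/\nu d\neq0$) and that the induced constants $E$, $D$ and the nonlinear exponent come out precisely as stated through $E=\tfrac{\theta^2A}{2}G$, $F_i=-\tfrac{\theta^2A}{2}G_i$ and $A_i=\tfrac{A}{2}(1-C_i)$. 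A stray factor of $d$ in these normalizations is the easiest slip to make, and it is exactly against these relations that the claimed forms of $E=-\theta^2d\beta^2e^{-2\alpha_1}$, of $D$ in (\ref{eq: altBVBINLSD-X}), and of the power $1-d$ in (\ref{eq: altBVBINLS}) must be verified.
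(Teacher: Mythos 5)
Your proposal follows the paper's own proof essentially verbatim: the paper likewise applies Corollary \ref{cor: EFE-NLSAnonzeroEzero} to $d(I_0)-\sum_{i=1}^d(I_i)$ with $a(t)=R(t)$, $\varepsilon=\nu d\kappa/(d-1)$, and the curvature term cast as the distinguished $G(t)$ with $A=2/\nu d$ (its Table \ref{tb: altBINLS} differs from your assignment only in which index labels the matter term versus the anisotropy term, which is immaterial), reusing the constancy of $D$, the relation $X_1=e^{\alpha_1}R^{1/\nu d}$ coming from $(I_{01})$, and the summation identities of Theorem \ref{thm: BVEMP} exactly as you describe. Your closing caution about the normalizations is well placed, since the corollary's relations $C_i=1-2A_i/A$ and $F_i=-\tfrac{\theta^2 A}{2}G_i$ are precisely where the constants appearing in (\ref{eq: altBVBINLS}) and (\ref{eq: altBVBINLSE-beta}) must be checked.
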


\begin{proof}
This proof will implement Corollary \ref{cor: EFE-NLSAnonzeroEzero} with constants and functions as indicated in the following table.

\begin{table}[ht]
\centering
\caption{{ Corollary \ref{cor: EFE-NLSAnonzeroEzero} applied to Bianchi V, alternate}}\label{tb: altBINLS}
\vspace{.2in}
\begin{tabular}{r | l c r | l}
In Corollary  & substitute & & In Corollary & substitute \\[4pt]
\hline
\raisebox{-5pt}{$a(t)$} & \raisebox{-5pt}{$R(t)$}      &&    \raisebox{-5pt}{$\varepsilon $} & \raisebox{-5pt}{${\nu d \kappa}/{(d-1)}$}\\[8pt]
$G(t)$ & $\mbox{constant } -\nu d\beta^2e^{-2\alpha_1}$    &&     $A$ & $2/\nu d$ \\[8pt]
 $G_{1}(t)$ & $ \frac{-\nu d\kappa}{(d-1)}(\rho(t)+p(t))$  &&      $A_{1}$   &$0$\\[8pt]
$G_2(t)$ & $\mbox{constant } D$    &&     $A_2$ & $2/\nu$ \\[8pt]
$F_{1}(\sigma)$&$ \frac{\theta^2 d \kappa}{(d-1)} (\uprho(\sigma)+\mathrm{p}(\sigma)) $ &&        $C_{1}$&  $1$ \\[6pt]
$F_{2}(\sigma)$&constant $-\theta^2 D/\nu $ &&        $C_{2}$&  $1-d$ \\[6pt]
\hline
\end{tabular}
\end{table}  

\break

Much of this proof will rely on computations that are exactly the same as those seen in the proof of Theorem \ref{thm: BVEMP} (the generalized EMP formulation of Bianchi V).  Therefore we will restate the relevant results here, but point the reader to the details in the proof of Theorem \ref{thm: BVEMP}.

To prove the forward implication, we assume to be given functions which solve the Einstein field equations $(I_0), \dots, (I_d)$ in (\ref{eq: BVEFEI0Id}).  Forming the linear combination $d(I_0)-\displaystyle\sum_{i=1}^d(I_i)$ of Einstein's equations and simplifying, as was done in (\ref{eq: BVBIEMPdI1minusIi}) - (\ref{eq: BVBIEMPI1minusIiwitheta}), 
\begin{equation}\dot{H}_R+\frac{\nu}{(d-1)}\displaystyle\sum_{l<k}\eta_{lk}^2+\frac{\nu d\beta^2}{X_1^2}=\frac{-\nu d\kappa}{(d-1)}\left[\dot{\phi}^2+ \left(\rho+p\right)\right].\label{eq: altBVBINLSI1minusIiwitheta}\end{equation}
where 
\begin{equation}H_R(t)\stackrel{def.}{=}\frac{\dot{R}(t)}{R(t)}=\nu\displaystyle\sum_{l=1}^d H_l \label{eq: altBVNLSHR-Hl}\end{equation}
and
\begin{equation}R(t)\stackrel{def.}{=}(X_1(t)\cdots X_d(t))^\nu\label{eq: altBVBINLSR-X}\end{equation}
for any $\nu\neq 0$.  Next we will confirm that $D$ is constant.  As was done in (\ref{eq: BVBIEMPlhsIi=lhsIj})-(\ref{eq: BVBIEMPdotf+fH=0withetaH}), since the right-hand sides of Einstein's equations $(I_i)$  are the same for all $i\in\{1, \dots, d\}$, by equating the left-hand sides of any two equations $(I_i)$ and $(I_j)$ for $i\neq j$, and after some rearranging we obtain
\begin{equation}\dot\eta_{ij}+\frac{1}{\nu}\eta_{ij}H_R=0\label{eq: altBVBINLSdoteta+stuff=0}\end{equation}
for $\eta_{ij}$ defined in (\ref{eq: altBVBINLSeta-X}).  Therefore the definition (\ref{eq: altBVBINLSD-X}) of $D$ is constant, being proportional to a sum of squares of these constant functions.  By the definitions (\ref{eq: altBVBINLSD-X}) and (\ref{eq: altBVBINLSR-X}) of the constant $D$ and the function $R(t)$, we now rewrite (\ref{eq: altBVBINLSI1minusIiwitheta}) from above as 
\begin{equation}\dot{H}_R+\frac{\nu d\beta^2}{X_1^2}=\frac{-\nu d \kappa}{(d-1)}\left[\dot\phi^2 +(\rho+p)\right]+\frac{D}{R^{2/\nu}}.\label{eq: altBVBINLSI0minusIi}\end{equation}
Finally, we use the remaining Einstein equation $(I_{01})$ and the relation $H_R=\nu\displaystyle\sum_{l=1}^dH_l$ in (\ref{eq: altBVNLSHR-Hl}) to obtain
\begin{equation}H_R=\nu  d H_1\label{eq: altBVNLSHR-H1}\end{equation}
so that again we have
\begin{equation}X_1=e^{\alpha_1}R^{1/\nu d}\label{eq: altBVNLSX1-alpha1R}\end{equation}
for $\alpha_1\in\mathds{R}$ and (\ref{eq: altBVBINLSI0minusIi}) becomes
\begin{equation}\dot{H}_R=\frac{-\nu d \kappa}{(d-1)}\left[\dot\phi^2 +\displaystyle\sum_{i=1}^M(\rho_i+p_i)\right]+\frac{D}{ R^{2/\nu}}-\frac{\nu d\beta^2}{e^{2\alpha_1}R^{2/\nu d}}.\label{eq: altBVBINLSI0minusIinoX1}\end{equation}
This shows that $R(t), \phi(t), \rho(t)$ and $p(t)$ satisfy the hypothesis of Corollary \ref{cor: EFE-NLSAnonzeroEzero}, applied with  constants $\varepsilon , N, A, A_1 \dots, A_{N}$ and functions $a(t), G(t), G_1(t),\dots,G_{N}(t)$ according to Table \ref{tb: altBINLS}.  Since $\sigma(t), u(\sigma), P(\sigma)$ and $\psi(\sigma)$ defined in (\ref{eq: altBVBINLSdotsigma-R}), (\ref{eq: altBVBINLSu-R}), (\ref{eq: altBVBINLSP-psi}) and (\ref{eq: altBVBINLSpsi-phi})  are equivalent to that in the forward implication of Corollary \ref{cor: EFE-NLSAnonzeroEzero}, applied with constants and functions according to Table \ref{tb: altBINLS}, by this corollary and by definition (\ref{eq: altBVBINLSuprho-rhormp-p}) of $\uprho(\sigma), \mathrm{p}(\sigma)$, the Schr\"odinger-type equation (\ref{eq: CNLSANONZERO}) holds for constants $C_1, \dots, C_{N}$ and functions $F_1(\sigma), \dots, F_{N}(\sigma)$ as indicated in Table \ref{tb: altBINLS}.  This proves the forward implication.

To prove the converse implication, we assume to be given functions which solve the Schr\"odinger-type equation (\ref{eq: altBVBINLS}) and we begin by showing that $(I_0)$ is satisfied.  Differentiating the definition of $R(t)$ in (\ref{eq: altBVBINLSR-ualpha-sigma}) and using the definition in (\ref{eq: altBVBINLSdotsigma-upsi-P}) of $\sigma(t)$, 
\begin{eqnarray}\dot{R}(t)
&=&-\nu u(\sigma(t))^{-\nu-1}u'(\sigma(t))\dot\sigma(t)\notag\\
&=&-\frac{\nu}{\theta}u(\sigma(t))^{-\nu}u'(\sigma(t)).\label{eq: altBVBINLSdotR-u}\end{eqnarray}
Dividing by $R(t)$,
\begin{equation}H_R\stackrel{def.}{=}\frac{\dot{R}}{R}=-\frac{\nu}{\theta}u'(\sigma(t)).\label{eq: altBVBINLSH-u}\end{equation}
Differentiating the definition (\ref{eq: altBVBINLSphi-psi}) of $\phi(t)$ and using the definition in (\ref{eq: altBVBINLSR-ualpha-sigma}) of $\sigma(t)$,
\begin{equation}
\dot{\phi}(t)=\psi '(\sigma(t))\dot{\sigma}(t)=\frac{1}{\theta}\psi '(\sigma(t))u(\sigma(t)).\label{eq: altBVBINLSphi-u}
\end{equation}
Using (\ref{eq: altBVBINLSH-u}) and (\ref{eq: altBVBINLSphi-u}), and also the definitions (\ref{eq: altBVBINLSR-ualpha-sigma}) and (\ref{eq: altBVBINLSrho-uprhop-rmp}) of $R(t)$ and $\rho_i(t)$ respectively, the definition (\ref{eq: altBVBINLSVphi-u}) of $V\circ\phi$ can be written as
\begin{equation}
V\circ\phi
=\frac{1}{\kappa}\left(\frac{(d-1)}{2\nu^2 d}H_R^2+\frac{(d-1)D}{2d\nu R^{2/\nu}} -\frac{d(d-1)\beta^2}{2e^{2\alpha_1}R^{2/\nu d}} \right)-\frac{1}{2}\dot{\phi}^2-\rho-\frac{\Lambda}{\kappa}.\label{eq: altBVBINLSVphi-RE}
\end{equation}
The quantity in parenthesis here is in fact equal to the left-hand-side of equation $(I_0)$.  
To see this, first note that the definitions
$X_l(t)\stackrel{def.}{=}R(t)^{1/\nu d}e^{\alpha_l(t)}$ in (\ref{eq: altBVBINLSX-Ralpha}) and
$H(t)\stackrel{def.}{=}\frac{\dot{R}(t)}{R(t)}$ in (\ref{eq: altBVBINLSH-u}), and the condition $\sum_l c_l=0$, are the same as those in Theorem \ref{thm: BIEMP}.
Also by the definitions (\ref{eq: altBVBINLSR-ualpha-sigma}), (\ref{eq: altBVBINLSdotsigma-upsi-P}) and (\ref{eq: altBVBINLSR-ualpha-sigma})  of $\alpha_l(t), \sigma(t)$ and $R(t)$, we obtain 
\begin{equation}\dot\alpha_l\dot\alpha_k=c_lc_k\dot\sigma^2=\frac{c_lc_k}{\theta^2}(u\circ\sigma)^2=\frac{c_lc_k}{\theta^2R(t)^{2/\nu}}\label{eq: altBVBINLSdotalphaldotalphak-Rc},\end{equation}
which is a slightly modified version of (\ref{eq: BVBIEMPdotalphaldotalphak-R}) from our computation in the proof of Theorem \ref{thm: BIEMP}.  Therefore by the arguments in (\ref{eq: BVBIEMPdotXldotXk/XlXk-Halpha})-(\ref{eq: BVBIEMPdotXldotXk/XlXk-Rc}), and using (\ref{eq: altBVBINLSdotalphaldotalphak-Rc}) to slightly modify the last term to apply here,
\begin{equation}\displaystyle\sum_{l<k}H_lH_k=\frac{(d-1)}{2\nu^2 d}H_R^2+\displaystyle\sum_{l<k}\frac{c_lc_k}{\theta^{2}R^{2/\nu}}.
\label{eq: altBVBINLSdotXldotXk/XlXk-Rc}
\end{equation}
Then by the condition (\ref{eq: altBVBINLScconditions}) on the constants $c_l$, (\ref{eq: altBVBINLSdotXldotXk/XlXk-Rc}) becomes
\begin{equation}\displaystyle\sum_{l<k}H_lH_k=\frac{(d-1)}{2\nu^2 d}H_R^2+\frac{(d-1)D}{2d\nu R^{2/\nu}}.
\label{eq: altBVBINLSdotXldotXk/XlXk-RE}
\end{equation}
Using this and the definition (\ref{eq: altBVNLSX1-constR}) of $X_1(t)=e^{\alpha_1}R(t)^{1/\nu d}$, the expression (\ref{eq: altBVBINLSVphi-RE}) for $V$ can now be written as
\begin{equation}V\circ\phi=\frac{1}{\kappa}\left(\displaystyle\sum_{l<k}H_lH_k-\frac{d(d-1)\beta^2}{2X_1^2}\right)-\frac{1}{2}\dot{\phi}^2-\rho-\frac{\Lambda}{\kappa},\label{eq: altBVBIEMPVphi-X}
\end{equation}
showing that $(I_0)$ holds under the assumptions of the converse implication.

To conclude the proof we must also show that the equations $(I_1), \dots, (I_d)$ hold.   In the converse direction the hypothesis of the converse of Corollary \ref{cor: EFE-NLSAnonzeroEzero} holds, applied with constants $N, C_1, \dots, C_{N}$ and functions $F_1(\sigma), \dots, F_{N}(\sigma)$ as indicated in Table \ref{tb: altBINLS}.  Since  $\sigma(t), \psi(\sigma), R(t)$ and $\phi(t)$ defined in (\ref{eq: altBVBINLSdotsigma-upsi-P}), (\ref{eq: altBVBINLSR-ualpha-sigma}) and (\ref{eq: altBVBINLSphi-psi}) are consistent with the converse implication of Corollary \ref{cor: EFE-NLSAnonzeroEzero}, applied with $a(t)$ and $\varepsilon $ as in Table \ref{tb: altBINLS}, by this corollary and by the definition (\ref{eq: altBVBINLSrho-uprhop-rmp}) of $\rho(t), p(t)$ the  scale factor  equation (\ref{eq: CEFEANONZERO}) holds for constants $\varepsilon , A, A_1, \dots, A_{N}$ and functions $G(t),G_1(t),\dots,G_{N}(t)$ according to Table \ref{tb: altBINLS}.  That is, we have regained (\ref{eq: altBVBINLSI0minusIinoX1}).  Now solving (\ref{eq: altBVBINLSVphi-RE}) for $\rho(t)$ and substituting this into (\ref{eq: altBVBINLSI0minusIinoX1}), we obtain\\
\\
\begin{equation}
\dot{H}_R
=
\frac{-\nu d\kappa}{(d-1)}
\left[ 
\frac{1}{2}\dot\phi^2 -V\circ\phi+\frac{(d-1)D}{2 d\nu \kappa R^{2/\nu}}-\frac{d(d-1)\beta^2}{2\kappa e^{2\alpha_1}R^{2/\nu d}}+p \right.\qquad\qquad\qquad\qquad\notag\end{equation}
\begin{equation}\qquad\qquad\qquad\qquad\qquad\qquad\qquad\left.+\frac{(d-1)}{2\nu^2 d\kappa}H_R^2-\frac{\Lambda}{\kappa} \right]+\frac{D}{R^{2/\nu}}-\frac{\nu d\beta^2}{e^{2\alpha_1}R^{2/\nu d}}.\end{equation}
Multiplying by $\frac{(d-1)}{\nu d}$ and rearranging,
\begin{equation}
\frac{(d-1)}{\nu d}\dot{H}_R+\frac{(d-1)}{2\nu^2 d}H_R^2-\frac{(d-1)D}{2 d\nu R^{2/\nu}}-\frac{(d-1)(d-2)\beta^2}{2e^{2\alpha_1}R^{2/\nu d}}\qquad\qquad\qquad\qquad\notag\end{equation}
\begin{equation}\qquad\qquad\qquad\qquad\qquad\qquad\qquad\
=
-\kappa
\left[ 
\frac{1}{2}\dot\phi^2 -V\circ\phi+ p \right]+\Lambda.\label{eq: altBVBINLSlhs=lhsofIi}\end{equation}
The left-hand side of this equation is in fact equal to the left-hand-side of $(I_i)$ for any $i\in\{1,\dots, d\}$.  To see this, again we use that the definitions $X_l(t)\stackrel{def.}{=}R(t)^{1/\nu d}e^{\alpha_l(t)}$ in (\ref{eq: altBVBINLSX-Ralpha}) and
$H_R(t)\stackrel{def.}{=}\frac{\dot{R}(t)}{R(t)}$ in (\ref{eq: altBVBINLSH-u}), and the condition $\sum_l c_l=0$, are the same as those in Theorem \ref{thm: BIEMP}.  Also by the definitions (\ref{eq: altBVBINLSR-ualpha-sigma}) and (\ref{eq: altBVBINLSdotsigma-upsi-P}) of $\alpha_l(t), \sigma(t)$ and $R(t)$, we obtain 
\begin{eqnarray}
\dot\alpha_l(t) R(t)^{1/\nu}
&=&c_l\dot\sigma(t) R(t)^{1/\nu}\notag\\
&=&\frac{c_l}{\theta}u(\sigma(t))R(t)^{1/\nu}\notag\\
&=&\frac{c_l}{\theta }\mbox{ is a constant}
\label{eq: altBVBINLSdotalphaRconstant},\end{eqnarray}
which shows that
\begin{equation}\ddot\alpha_l+\frac{1}{\nu}\dot\alpha_l H_R=0\label{eq: altBVBINLSddotalpha+stuff=0}\end{equation}
holds here, as it does in Theorem \ref{thm: BIEMP}.  Using the arguments in  (\ref{eq: BVBIEMPddotX/X-Halpha})-(\ref{eq: BVBIEMPBIlhsIi}) and also using (\ref{eq: altBVBINLSdotalphaldotalphak-Rc}) to slightly modify one term in (\ref{eq: BVBIEMPBIlhsIi}) to apply here, we obtain
\begin{equation}\displaystyle\sum_{l\neq i}(\dot{H}_l+H_l^2)+\displaystyle\sum_{\stackrel{l<k}{l,k\neq i}}H_lH_k=\frac{(d-1)}{\nu d}\dot{H}_R+\frac{(d-1)}{2\nu^2 d}H_R^2-\sum_{l<k}\frac{c_lc_k}{\theta^{2}R^{2/\nu}}.
\label{eq: altBVBINLSBIlhsIi}\end{equation}
Then by the condition (\ref{eq: altBVBINLScconditions}) on the constants $c_l$ and the definition (\ref{eq: altBVNLSX1-constR}) of $X_1(t)$, the left-hand side of any Einstein equation $(I_i), i>0$ is
\begin{equation}\displaystyle\sum_{l\neq i}(\dot{H}_l+H_l^2)+\displaystyle\sum_{\stackrel{l<k}{l,k\neq i}}H_lH_k-\frac{(d-1)(d-2)\beta^2}{2X_1^2}\qquad\qquad\qquad\qquad\notag\end{equation}
\begin{equation}\qquad\qquad\qquad=\frac{(d-1)}{\nu d}\dot{H}_R+\frac{(d-1)}{2\nu^2 d}H_R^2-\frac{(d-1)D}{2d\nu R^{2/\nu}}-\frac{(d-1)(d-2)\beta^2}{2e^{2\alpha_1}R^{2/\nu d}}.
\label{eq: altBVBINLSdotXldotXk/XlXk-Ralpha}\end{equation}
Combining (\ref{eq: altBVBINLSlhs=lhsofIi}) and (\ref{eq: altBVBINLSdotXldotXk/XlXk-Ralpha}), we obtain $(I_i)$ for all $i\in\{1,\dots,d\}$.  This proves the theorem.

\end{proof}

\chapter{Reformulations of a conformal Bianchi V model}

We now consider a  Bianchi V metric of the form
\begin{eqnarray} &&ds^2=-\left(a_2(t)\cdots a_d(t)\right)^ddt^2+(a_2(t)\cdots a_d(t)) dx_1^2+a_2^{d-1}(t)e^{2\beta x_1}dx_2^2+\notag\\
&& \ \ \ \ \ \ \ \ \ \ \ \ \cdots +a_d^{d-1}(t)e^{2\beta x_1}dx_d^2,\label{eq: confBVdmetric}
\end{eqnarray}
which represents a change of coordinate systems in comparison with Chapter 4.
In a $d+1-$dimensional spacetime the nonzero Einstein equations $g^{ij}G_{ij}=-\kappa g^{ij}T_{ij}+\Lambda$, multiplied by $|g_{00}|=(a_2\cdots a_d)^d$ and $\frac{4}{(d-1)}$, are 
\begin{equation}
\displaystyle\sum_{l=2}^d H_l^2 + (d+1)\displaystyle\sum_{l<k}H_lH_k - 2d\beta^2(a_2\cdots a_d)^{d-1}\notag
\end{equation}
\begin{equation}\label{eq: confBVEFEI0Id}
\stackrel{(I_0)}{=} 
\frac{4\kappa}{(d-1)}\left[\frac{\dot{\phi}^2}{2}+(a_2\cdots a_d)^d\left(V\circ\phi+\rho+\frac{\Lambda}{\kappa } \right) \right]\end{equation}
\begin{equation}\notag
\displaystyle\sum_{l=2}^d\left( 2\dot{H}_l - H_l^2 \right) - (d+1)\displaystyle\sum_{l<k}H_lH_k -2(d-2)\beta^2(a_2\cdots a_d)^{d-1}\end{equation}
\begin{equation}\notag
\stackrel{(I_1)}{=} 
\frac{ 4\kappa}{(d-1)}\left[-\frac{\dot{\phi}^2}{2}+(a_2\cdots a_d)^d\left(V\circ\phi-p+\frac{\Lambda}{\kappa }\right)  \right]\notag
\end{equation}
\begin{equation}\notag
-2\dot{H}_i-\displaystyle\sum_{l=2}^d H_l^2+\frac{2d}{(d-1)}\displaystyle\sum_{l=2}^d \dot{H}_l  -(d+1) \displaystyle\sum_{ l<k}H_lH_k -2(d-2)\beta^2(a_2\cdots a_d)^{d-1} 
\end{equation}
\begin{equation}
\stackrel{(I_i)}{=} 
\frac{ 4\kappa}{(d-1)}\left[-\frac{\dot{\phi}^2}{2}+(a_2\cdots a_d)^d\left(V\circ\phi - p+\frac{\Lambda}{\kappa }\right) \right]\notag
\end{equation}
where $H_l(t)\stackrel{def.}{=}\frac{\dot{a}_l}{a_l}$ and $i,l,k\in\{2,\dots ,d\}$.

\section{In terms of a Generalized EMP}

\begin{thm}\label{thm: confBVEMP}  Suppose you are given twice differentiable functions $a_2(t),\dots, a_d(t)>0$, a once differentiable function $\phi(t)$ and also functions $\rho(t), p(t), V(x)$ which satisfy the Einstein equations $(I_0), \dots, (I_d)$ in (\ref{eq: confBVEFEI0Id}) for some $\Lambda\in\mathds{R}, d\in\mathds{N}\backslash\{0,1\},$ and $ \kappa\in\mathds{R}\backslash\{0\}$.  Denote
\begin{equation}R(t)\stackrel{def.}{=}(a_2(t)a_3(t)\cdots a_d(t))^\nu\label{eq: confBVR-a}\end{equation}
for some $\nu\neq 0$.  If $f(\tau)$ is the inverse of a function $\tau(t)$ which satisfies
\begin{equation}\dot{\tau}(t)=\theta R(t)^{q+\frac{d}{2\nu}},\label{eq: confBVdottau-R}\end{equation}
for some constants $\theta>0$ and $q\neq 0$, then
\begin{equation}Y(\tau)=R(f(\tau))^{q}\qquad\mbox{ and }\qquad Q(\tau)= \frac{2q\nu\kappa}{(d-1)}
\varphi'(\tau)^2 \label{eq: confBVY-RQ-varphi}\end{equation}
solve the generalized EMP equation
\begin{equation}Y''(\tau)+Q(\tau)Y(\tau)=\frac{q \nu (d-2)L}{2\theta^2Y(\tau)^{\frac{q\nu+d}{q\nu} }} 
 - \frac{2q\nu  \beta^2}{\theta^2Y(\tau)^{(1+q\nu) /q\nu }}
- \frac{2q\nu \kappa(\varrho(\tau)+\textup{\textlhookp}(\tau)) }{(d-1)\theta^2 Y(\tau)} 
\label{eq: confBVEMP}\end{equation}
for 
\begin{equation}\varphi(\tau)=\phi(f(\tau))\label{eq: confBVvarphi-phi}\end{equation}
\begin{equation}\varrho(\tau)=\rho(f(\tau)), \ \textup{\textlhookp}(\tau)=p(f(\tau)). \label{eq: confBVEMPvarrho-rhohookp-p}\end{equation}
\begin{equation}L\stackrel{def.}{=}\frac{2}{(d-2)}\displaystyle\sum_{3\leq l<k\leq d}\mu_l\mu_k-\displaystyle\sum_{j=3}^d\mu_j^2,\label{eq: confBVlambda0-mul}\end{equation}
where $\mu_i\in\mathds{R}$ are such that $a_i(t)=\omega_ie^{\mu_i t}a_2(t)$ for some $\omega_i\in\mathds{R},  i\in\{3,\dots, d\}$.

Conversely, suppose you given a twice differentiable function $Y(\tau)>0$, a continuous function $Q(\tau)$ and also functions $\varrho(\tau), \textup{\textlhookp}(\tau)$ which solve (\ref{eq: confBVEMP}) for some constants $\theta>0$ and $q, \nu, \kappa \in\mathds{R}\backslash\{0\}, L\in\mathds{R}, d\in\mathds{N}\backslash\{0,1\}$.  In order to construct functions which solve $(I_0), \dots, (I_d)$, first find $\tau(t), \varphi(\tau)$ which solve the differential equations
\begin{equation}\dot{\tau}(t)=\theta Y(\tau(t))^{(2q\nu+d)/2q\nu}\qquad\mbox{ and }\qquad \varphi'(\tau)^2=\frac{(d-1)}{2q\nu\kappa}Q(\tau).\label{eq: confBVdottau-Yvarphi-Q}\end{equation}
Next find constants $\mu_i, i\in\{3,\dots, d\}$ which satisfy
\begin{equation}L=\frac{2}{(d-2)}\displaystyle\sum_{3\leq l<k\leq d}\mu_l\mu_k-\displaystyle\sum_{j=3}^d\mu_j^2,\label{eq:  confBVL-mulconverse}\end{equation}
and let
\begin{equation}R(t)=Y(\tau(t))^{1/q}\label{eq: confBVR-Y}.\end{equation}
Then the functions
\begin{equation}a_2(t)=R(t)^{1/\nu (d-1)}(\omega_3\cdots\omega_de^{(\mu_3+\cdots+\mu_d)t})^{-1/(d-1)}\label{eq: confBVa1-Rexp}\end{equation}
\begin{equation}a_i(t)=\omega_ie^{\mu_i t}a_2(t)\label{eq: confBVai-a1exp}\end{equation} 
\begin{equation}\phi(t)=\varphi(\tau(t))\label{eq: confBVphi-varphi}\end{equation}
\begin{equation}\rho(t)=\varrho(\tau(t)),\qquad\qquad p(t)=\textup{\textlhookp}(\tau(t))\label{eq: confBVrho-varrhop-textlhookp}\end{equation}
and 
\begin{equation}V(\phi(t))=\left[ \frac{(d-1)}{8\kappa}\left( \frac{d\theta^2}{q^2 \nu^2} (Y')^2
+\frac{(d-2)L}{ Y^{d/q\nu}} - \frac{4 d\beta^2}{ Y^{1/q \nu}}\right)
 -\frac{\theta^2}{2} (\varphi')^2 Y^{2} 
  -\varrho-\frac{\Lambda}{\kappa}\right]\circ\tau(t)\label{eq: confBVdefnV}\end{equation}
satisfy the Einstein equations $(I_0), \dots, (I_d)$ for any $\omega_i>0 , 2\leq i\leq d$.
\end{thm}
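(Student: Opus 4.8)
The plan is to realize this theorem as a direct application of the general correspondence in Theorem \ref{thm: EFE-EMP}, exactly as was done for the conformal Bianchi I model in Theorem \ref{thm: confBIEMP} and for the ordinary Bianchi V model in Theorem \ref{thm: BVEMP}. The first task is to reduce the full system $(I_0),\dots,(I_d)$ to a single scale factor equation of the form (\ref{eq: AEFE}) in the variable $R(t)=(a_2\cdots a_d)^\nu$, with $H_R\stackrel{def.}{=}\dot R/R=\nu\sum_{l=2}^d H_l$. Matching the exponent in (\ref{eq: confBVdottau-R}) against (\ref{eq: Atau-a}) forces $\delta=-d/(2\nu)$ and $\varepsilon=2\nu\kappa/(d-1)$, and the three driving terms of (\ref{eq: confBVEMP}) must correspond to $G_0=\tfrac{\nu(d-2)}{2}L$ with $A_0=0$, to $G_1=-\tfrac{2\nu\kappa}{d-1}(\rho+p)$ with $A_1=-d/\nu$, and to the curvature term $G_2=-2\nu\beta^2$ with $A_2=-(d-1)/\nu$; the powers $B_i$ then follow from (\ref{eq: ABi-Ai}) and reproduce the exponents $(q\nu+d)/q\nu$, $1$, and $(q\nu+1)/q\nu$ in (\ref{eq: confBVEMP}). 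I will record these substitutions in a table mirroring those of the earlier Bianchi proofs.

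For the forward implication I would first equate the left-hand sides of $(I_i)$ and $(I_j)$ for $i,j\in\{2,\dots,d\}$: since their right-hand sides coincide, almost everything cancels and one is left with $\dot H_i=\dot H_j$, which integrates to $a_i=\omega_i e^{\mu_i t}a_2$ for constants $\omega_i>0$, $\mu_i\in\mathds{R}$ (with $\mu_2=0$). Because each $\eta_{lk}=H_l-H_k=\mu_l-\mu_k$ is then constant, the quantity $L$ in (\ref{eq: confBVlambda0-mul}) is a genuine constant. Substituting $H_l=H_2+\mu_l$ and $H_2=\tfrac{1}{\nu(d-1)}H_R-\tfrac{1}{d-1}\sum_{l}\mu_l$ into the left-hand side of $(I_0)$ should collapse the combination $\sum_{l=2}^d H_l^2+(d+1)\sum_{l<k}H_lH_k$ to $\tfrac{d}{2\nu^2}H_R^2+\tfrac{d-2}{2}L$, with the analogous reduction for each $(I_i)$. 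Forming the combination $(I_0)-(I_i)$ after these rewritings and multiplying by $-\nu/2$ then yields exactly the scale factor equation (\ref{eq: AEFE}) with the constants above, so Theorem \ref{thm: EFE-EMP} delivers the EMP equation (\ref{eq: confBVEMP}) together with (\ref{eq: confBVY-RQ-varphi}) and (\ref{eq: confBVvarphi-phi}).

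For the converse I would run the construction backwards: starting from a solution of (\ref{eq: confBVEMP}), define $\tau(t),\varphi(\tau)$ by (\ref{eq: confBVdottau-Yvarphi-Q}), set $R=Y^{1/q}$, choose the $\mu_i$ so that the defining relation for $L$ in (\ref{eq: confBVlambda0-mul}) is matched, and define the $a_i$ by (\ref{eq: confBVa1-Rexp})--(\ref{eq: confBVai-a1exp}). Differentiating these recovers $H_2$ and $H_i$ in terms of $H_R$ as in the forward direction, so the reduction of the $(I_0)$ left-hand side to $\tfrac{d}{2\nu^2}H_R^2+\tfrac{d-2}{2}L$ still holds; substituting this identity into the definition (\ref{eq: confBVdefnV}) of $V$ shows $(I_0)$ holds by construction. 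To obtain $(I_1),\dots,(I_d)$ I apply the converse half of Theorem \ref{thm: EFE-EMP} to recover the scale factor equation, solve the $V$-relation for $\rho$, substitute, and recognize the result as the common left-hand side of the $(I_i)$, using the reduction of each $(I_i)$ to $H_R$ established in the forward part.

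The main obstacle is the bookkeeping in the algebraic collapse of the $(I_0)$ and $(I_i)$ left-hand sides. Unlike the ordinary Bianchi cases, here the anisotropy enters through the non-symmetric combination $\sum_{l=2}^d H_l^2+(d+1)\sum_{l<k}H_lH_k$ over the $d-1$ directions, and one must check carefully that after the substitution $H_l=H_2+\mu_l$ the cross terms linear in $H_R$ cancel while the $\mu$-dependent remainder assembles into precisely $\tfrac{d-2}{2}L$ rather than some other quadratic in the $\mu_j$. The other delicate point is tracking the two distinct curvature coefficients $-2d\beta^2$ and $-2(d-2)\beta^2$ through $(I_0)-(I_i)$, where they must combine into the single coefficient $-4\beta^2$ feeding the $A_2$ term; everything else is a routine transcription of the Bianchi V argument.
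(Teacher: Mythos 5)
Your proposal is correct and follows essentially the same route as the paper: the identifications $\delta=-d/2\nu$, $\varepsilon=2\nu\kappa/(d-1)$, $G_0=\nu(d-2)L/2$ with $A_0=0$, $G_1=-\tfrac{2\nu\kappa}{d-1}(\rho+p)$ with $A_1=-d/\nu$, and $G_2=-2\nu\beta^2$ with $A_2=-(d-1)/\nu$ match the paper's substitution table exactly, and both the reduction $a_i=\omega_ie^{\mu_i t}a_2$ via $\dot H_i=\dot H_j$ and the converse structure (verify $(I_0)$ from the definition of $V$, then recover the remaining $(I_i)$ through the converse of Theorem \ref{thm: EFE-EMP}) are the paper's argument. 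The only cosmetic difference is that you form $(I_0)-(I_i)$ after rewriting in terms of $H_R$ while the paper takes $d(I_0)-\sum_{i=1}^d(I_i)$ first, which is equivalent since all the $(I_i)$ left-hand sides collapse to the same expression.
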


\begin{proof}
This proof will implement Theorem \ref{thm: EFE-EMP} with constants and functions as indicated in the following table.

   \begin{table}[ht]
\centering
\caption{{ Theorem \ref{thm: EFE-EMP} applied to conformal Bianchi V}}\label{tb: confBVEMP}
\vspace{.2in}
\begin{tabular}{r | l c r | l}
In Theorem & substitute & & In Theorem& substitute \\[4pt]
\hline
\raisebox{-5pt}{$a(t)$} & \raisebox{-5pt}{$R(t)$}      &&    \raisebox{-5pt}{$N$} & \raisebox{-5pt}{$2$}\\[8pt]
$\delta  $ & $-d/2\nu$      &&    $\varepsilon $ & ${2\nu \kappa}/{(d-1)}$\\[8pt]
   $G_0(t)$ & $\mbox{constant } \nu(d-2) L/2$    &&     $A_0$ & $0$ \\[8pt]
 $G_{1}(t)$ & $ \frac{-2\nu \kappa}{(d-1)}(\rho(t)+p(t))$  &&      $A_{1}$   &$-d/\nu$\\[8pt]
 $G_{2}(t)$ &constant $-2\nu\beta^2$  &&      $A_{2}$   &$-(d-1)/\nu$\\[8pt]
 $ \lambda_0(\tau)$ &constant ${q\nu (d-2)L}/{2\theta^2}$       &&      $B_0$ &$(d+q\nu)/q\nu$ \\[8pt] 
$\lambda_{1}(\tau)$&$\frac{-2q \nu  \kappa}{\theta^2(d-1)}(\varrho(\tau)+\textup{\textlhookp}(\tau))$ &&  $B_{1}$&  $1$ \\[6pt]
$\lambda_{2}(\tau)$&constant $-2q\nu\beta^2/\theta^2$ &&  $B_{2}$&  $(1+q\nu)/q\nu$ \\[6pt]
\hline
\end{tabular}
\end{table}


To prove the forward implication, we assume to be given functions which solve the Einstein field equations $(I_0), \dots, (I_d)$ in (\ref{eq: confBVEFEI0Id}).   Since the right-hand sides of Einstein equations $(I_i)$ are the same for all $i\in\{2, \dots, d\}$, we begin by equating the left-hand side of $(I_2)$ with the left-hand side of any $(I_j)$ for $j\in\{3,\dots, d\}$.  After dividing by $2$, we obtain
\begin{equation}\dot{H}_j=\dot{H}_2. \label{eq: confBVEMPequatelhss}\end{equation}
Integrating, we obtain
\begin{equation}H_j=H_2+\mu_j\label{eq: confBVEMPdota/a=dota/a+mu}\end{equation}
for $\mu_j\in\mathds{R}, j\in\{2, \dots, d\}$ and where $\mu_2\stackrel{def.}{=}0$.
Since in general $\frac{d}{dt}\ln(a_i)=\frac{\dot{a}_i}{a_i}=H_i$, (\ref{eq: confBVEMPdota/a=dota/a+mu}) can be written
\begin{equation}\frac{d}{dt}\ln(a_j)=\frac{d}{dt}\ln(a_2)+\mu_j.\label{eq: confBVEMPlogderiv=logderiv+mu}\end{equation}
Integrating,
\begin{equation}\ln(a_j)=\ln(a_2)+\mu_j t+c'_j\label{eq: confBVEMPlna=lna+mut+c}\end{equation}
for some $c_j\in\mathds{R}, j\in\{2,\dots, d\}$ and where $c'_1\stackrel{def.}{=}0$. Exponentiating and letting $\omega_i\stackrel{def.}{=}e^{c_i}>0,$
\begin{equation}a_j(t)=\omega_je^{\mu_j t}a_2(t),\label{eq: confBVEMPaj=cjemuta1}\end{equation}
where of course this holds trivially for $j=2$ where $\omega_2=1$ and $\mu_2=0$.

Forming the linear combination $d(I_0)-\displaystyle\sum_{i=1}^d(I_i)$ of Einstein's equations,
\begin{equation}2d\displaystyle\sum_{l=2}^d\left(H_l^2-\dot{H}_l\right)+2d(d+1)\displaystyle\sum_{l<k}H_lH_k-4d\beta^2(a_2\cdots a_d)^{d-1}\qquad\qquad\notag\end{equation}
\begin{equation}\qquad \qquad =\frac{4d\kappa}{(d-1)}\left[\dot\phi^2+(a_2\cdots a_d)^d(\rho+p)\right].\label{eq: confBVlinearcomboEFEHlHk}\end{equation}
Using the definition (\ref{eq: confBVR-a}) of $R(t)$, we define  
\begin{eqnarray}
H_R
&\stackrel{def.}{=}&\frac{\dot{R}}{R}\notag\\
&=&\frac{\nu (a_2\cdots a_d(t))^{\nu -1} (\dot{a}_2\cdots a_d+ \cdots+ a_2\cdots \dot{a}_d)}{(a_2\cdots a_d)^\nu}\notag\\
&=&\nu\displaystyle\sum_{j=2}^d H_j\notag\\
&=&\nu\displaystyle\sum_{j=2}^d\left(H_1+\mu_j\right)\notag\\
&=&\nu\left((d-1)H_2+\displaystyle\sum_{j=2}^d\mu_j\right),\label{eq: confBVHR-H1}
\end{eqnarray}
therefore we have that
\begin{equation}H_2=\frac{1}{\nu (d-1)}H_R-\frac{1}{(d-1)}\displaystyle\sum_{j=2}^d\mu_j\label{eq: confBVEMPH1wrtHR}\end{equation}
and
\begin{equation}\dot{H}_2=\frac{1}{\nu (d-1)}\dot{H}_R.\label{eq: confBVEMPdotH1wrtdotHR}\end{equation}
By equation (\ref{eq: confBVEMPdota/a=dota/a+mu}) we obtain
\begin{equation}
\displaystyle\sum_{l<k}H_lH_k
=\displaystyle\sum_{l<k}(H_2+\mu_l)(H_2+\mu_k)=\displaystyle\sum_{l<k}(H_2^2+(\mu_l+\mu_k)H_2+\mu_l\mu_k)
\label{eq: confBVEMPsumHlHk-H2}\end{equation}
for $l,k\in\{2,\dots ,d\}$.  The first term on the right-hand side of (\ref{eq: confBVEMPsumHlHk-H2}) does not depend on the indices $l,k$ and therefore is equal to $H_2^2$ times
\begin{equation}\displaystyle\sum_{l<k}1=\displaystyle\sum_{k=3}^d\displaystyle\sum_{l=2}^{k-1}1=\displaystyle\sum_{k=3}^d(k-2)=\displaystyle\sum_{j=1}^{d-2}j=\frac{(d-1)(d-2)}{2}.\label{eq: BVEMPsum1l<k}\end{equation}
 The second term on the right-hand side of (\ref{eq: confBVEMPsumHlHk-H2}) sums to $H_2$ times
\begin{eqnarray}
\displaystyle\sum_{l<k}(\mu_l+\mu_k)&=&\displaystyle\sum_{l=2}^{d-1}\displaystyle\sum_{k=l+1}^d\mu_l+\displaystyle\sum_{k=3}^d\displaystyle\sum_{l=2}^{k-1}\mu_k\notag\\
&=&\displaystyle\sum_{l=2}^{d-1}(d-l)\mu_l+\displaystyle\sum_{k=3}^d (k-2)\mu_k\notag\\
&=&(d-2)\mu_2+\displaystyle\sum_{j=3}^{d-1}(d-j+j-2)\mu_j+(d-2)\mu_d\notag\\
&=&(d-2)\displaystyle\sum_{j=2}^d\mu_j.\end{eqnarray}
Therefore (\ref{eq: confBVEMPsumHlHk-H2}) becomes
\begin{equation}\displaystyle\sum_{l<k}H_lH_k 
=
\frac{(d-1)(d-2)}{2}H_2^2+(d-2)H_2\displaystyle\sum_{j=2}^d\mu_j+\displaystyle\sum_{l<k}\mu_l\mu_k
\label{eq: confBVEMPnewlhsI0}.\end{equation}
By (\ref{eq: confBVEMPH1wrtHR}) and (\ref{eq: confBVEMPdotH1wrtdotHR}), we also have that
\begin{equation}\displaystyle\sum_{l<k}H_lH_k=\frac{(d-1)(d-2)}{2}\left( \frac{1}{\nu (d-1)}H_R-\frac{1}{(d-1)}\displaystyle\sum_{j=2}^d\mu_j    \right)^2\label{eq: confBIlhsI0HR}\end{equation}
\begin{equation}
\qquad+(d-2)\left( \frac{1}{\nu (d-1)}H_R-\frac{1}{(d-1)}\displaystyle\sum_{j=2}^d\mu_j    \right)\displaystyle\sum_{j=2}^d\mu_j+\displaystyle\sum_{l<k}\mu_l\mu_k.\notag
\end{equation}
Collecting terms (the $H_R$ terms sums to zero),
\begin{equation}\displaystyle\sum_{l<k}H_lH_k=
\frac{(d-2)}{2\nu^2(d-1)}H_R^2- \frac{(d-2)}{2(d-1)}\left(\displaystyle\sum_{j=2}^d\mu_j\right)^2+\displaystyle\sum_{l<k}\mu_l\mu_k.\label{eq: confBVlhsI0-HRwithsummulmuk}
\end{equation}
Also we have that
\begin{equation}
\left(\displaystyle\sum_{j=2}^d\mu_j\right)^2=2\displaystyle\sum_{l<k}\mu_l\mu_k+\displaystyle\sum_{j=2}^d\mu_j^2\label{eq: sumsquaredofmu=2mulmuk+summusquared}
\end{equation}
for $l,k\in\{2,\dots,d\}$ therefore (\ref{eq: confBVlhsI0-HRwithsummulmuk}) becomes
\begin{equation}\displaystyle\sum_{l<k}H_lH_k=
\frac{(d-2)}{2\nu^2(d-1)}H_R^2  +\frac{1}{(d-1)}\displaystyle\sum_{l<k}\mu_l\mu_k
- \frac{(d-2)}{2(d-1)}\displaystyle\sum_{j=2}^d\mu_j^2 .\label{eq: confBVlhsI0-HRsimplified}
\end{equation}   
 Defining the quantity
\begin{eqnarray}L&\stackrel{def.}{=}&\frac{2}{(d-2)}\displaystyle\sum_{l<k}\mu_l\mu_k-\displaystyle\sum_{j=2}^d\mu_j^2\notag\\
&=& \frac{2}{(d-2)}\displaystyle\sum_{3\leq l<k\leq d}\mu_l\mu_k-\displaystyle\sum_{j=3}^d\mu_j^2 \mbox{ \ \ \  \  (since $\mu_2=0$)}, \label{eq: confBVL-mu-inproof}\end{eqnarray}
we can now rewrite equation (\ref{eq: confBVlhsI0-HRsimplified}) to say that
\begin{equation}\displaystyle\sum_{l<k}H_lH_k=
\frac{(d-2)}{2\nu^2(d-1)}H_R^2  +\frac{(d-2)}{2(d-1)}L.\label{eq: confBVEMPlhsI0-HRsimplifiedwithL}
\end{equation}
Similarly by (\ref{eq: confBVEMPequatelhss}), (\ref{eq: confBVEMPdota/a=dota/a+mu}), (\ref{eq: confBVEMPH1wrtHR}), (\ref{eq: confBVEMPdotH1wrtdotHR}), (\ref{eq: sumsquaredofmu=2mulmuk+summusquared}) and (\ref{eq: confBVL-mu-inproof}), we have that
\begin{eqnarray}
\displaystyle\sum_{l=2}^d\dot{H}_l\notag
&=&\displaystyle\sum_{l=2}^d\dot{H}_2\\
&=&(d-1)\dot{H}_2\notag\\
&=&\frac{1}{\nu}\dot{H}_R \label{eq: confBVsumdotHl-HR}
\end{eqnarray}
and also
\begin{eqnarray}
\displaystyle\sum_{l=2}^dH_l^2\notag
&=&\displaystyle\sum_{l=2}^d(H_2+\mu_l)^2\\
&=&(d-1)H_2^2+2H_2\displaystyle\sum_{l=2}^d\mu_l +\displaystyle\sum_{l=2}^d\mu_l^2\notag\\
&=&\frac{1}{(d-1)}\left(\frac{1}{\nu }H_R-\displaystyle\sum_{j=2}^d\mu_j\right)^2 +\frac{2}{(d-1)}\left(\frac{1}{\nu }H_R-\displaystyle\sum_{j=2}^d\mu_j\right)\displaystyle\sum_{l=2}^d\mu_l +\displaystyle\sum_{l=2}^d\mu_l^2\notag
\\
&=&\frac{1}{\nu^2(d-1)}H_R^2-\frac{1}{(d-1)} \left(\displaystyle\sum_{j=2}^d \mu_j\right)^2+\displaystyle\sum_{l=2}^d\mu_l^2\notag\\
&=&\frac{1}{\nu^2(d-1)}H_R^2-\frac{2}{(d-1)}\displaystyle\sum_{l<k}\mu_l\mu_k+\frac{(d-2)}{(d-1)}\displaystyle\sum_{l=2}^d\mu_l^2\notag\\
&=&\frac{1}{\nu^2(d-1)}H_R^2-\frac{(d-2)}{(d-1)}L.\notag\\ \label{eq: confBVsumHl^2-HR}
\end{eqnarray}
That is, the linear combination $d(I_0)-\displaystyle\sum_{i=1}^d(I_i)$ of Einstein equations in (\ref{eq: confBVlinearcomboEFEHlHk}) can be written in terms of $R$ and $H_R$ by using   (\ref{eq: confBIlhsI0-HRsimplifiedwithL}), (\ref{eq: confBVsumdotHl-HR}) and (\ref{eq: confBVsumHl^2-HR}) so that we obtain
\begin{equation}\frac{d^2}{\nu^2}H_R^2+d(d-2)L-\frac{2d}{\nu}\dot{H}_R-4d\beta^2R^{(d-1)/\nu} =\frac{4d\kappa}{(d-1)}\left[\dot\phi^2+R^{d/\nu}(\rho+p)\right].\end{equation}
Multiplying by $-\frac{\nu}{2d}$,
\begin{equation}\dot{H}_R-\frac{d}{2\nu}H_R^2-\frac{\nu(d-2)}{2}L+2\nu\beta^2R^{(d-1)/\nu} =\frac{-2\nu\kappa}{(d-1)}\left[\dot\phi^2+R^{d/\nu}(\rho+p)\right].\label{eq: confBVEMPI0minusIiwrtHRsimplified}\end{equation}
This shows that $R(t), \phi(t), \rho(t)$ and $p(t)$ satisfy the hypothesis of Theorem \ref{thm: EFE-EMP}, applied with  constants $\epsilon, \varepsilon , N, A_0, \dots, A_{N}$ and functions $a(t), G_0(t),\dots,G_{N}(t)$ according to Table \ref{tb: confBVEMP}.  Since $\tau(t), Y(\tau), Q(\tau)$ and $\varphi(\tau)$ defined in (\ref{eq: confBVdottau-R}), (\ref{eq: confBVY-RQ-varphi}) and (\ref{eq: confBVvarphi-phi})  are equivalent to that in the forward implication of Theorem \ref{thm: EFE-EMP}, by this theorem and by definition (\ref{eq: confBVEMPvarrho-rhohookp-p}) of $\varrho(\tau), \textup{\textlhookp}(\tau)$, the generalized EMP equation (\ref{eq: AEMP}) holds for constants $B_0, \dots, B_{N}$ and functions $\lambda_0(\tau), \dots, \lambda_{N}(\tau)$ as indicated in Table \ref{tb: confBVEMP}.  This proves the forward implication.

To prove the converse implication, we assume to be given functions which solve the generalized EMP equation (\ref{eq: confBVEMP}) and we begin by showing that $(I_0)$ is satisfied.  Differentiating the definition of $R(t)$ in  (\ref{eq: confBVR-Y}) and using the definition in (\ref{eq: confBVdottau-Yvarphi-Q}) of $\tau(t)$,
\begin{eqnarray} 
\dot{R}(t)
&=&\frac{1}{q}Y(\tau(t))^{\frac{1}{q}-1}Y'(\tau(t))\dot{\tau}(t)\\
&=&\frac{\theta}{q}Y(\tau(t))^{\frac{1}{q}(1+d/2\nu)}Y'(\tau(t))\end{eqnarray}
Dividing by $R(t)$,
\begin{equation}H_{R}(t)\stackrel{def.}{=}\frac{\dot{R}(t)}{R(t)}=\frac{\theta}{q}Y(\tau(t))^{d/2q\nu}Y'(\tau(t)).\label{eq: confBVHR-Y}\end{equation}
Differentiating the definition (\ref{eq: confBVphi-varphi}) of $\phi(t)$ and using definition (\ref{eq: confBVdottau-Yvarphi-Q}) of $\tau(t)$,
\begin{equation}\dot\phi(t)=\varphi'(\tau(t))\dot\tau(t)=\theta \varphi'(\tau(t)) Y(\tau(t))^{1+d/2q\nu}.\label{eq: confBVdotphi-varphiprimeY}\end{equation}
Using (\ref{eq: confBVHR-Y}) and (\ref{eq: confBVdotphi-varphiprimeY}), and also the definitions (\ref{eq: confBVR-Y}) and (\ref{eq: confBVrho-varrhop-textlhookp}) of $R(t)$ and $\rho(t)$ respectively, the definition (\ref{eq: confBVdefnV}) of $V\circ\phi$ can be written as
\begin{equation}V\circ\phi= \frac{1}{R^{d/\nu}} \left[\frac{(d-1)}{4\kappa}\left(\frac{d}{2\nu^2}H_R^2 +\frac{(d-2)}{2}L-2d\beta^2R^{(d-1)/\nu} \right) -\frac{\dot\phi^2}{2}  \right]-\rho -\frac{\Lambda}{\kappa}.\label{eq: confBVEMPVphi-R}\end{equation}
The quantity in the inner parenthesis here is in fact equal to the left-hand-side of equation $(I_0)$.  To see this, we differentiate the definitions in (\ref{eq: confBVa1-Rexp}) and (\ref{eq: confBVai-a1exp})  of $a_i(t)$, divide the results by $a_i(t)$, and use the definition (\ref{eq: confBVHR-Y}) of $H_R$  to obtain
\begin{equation}H_2\stackrel{def.}{=}\frac{\dot{a}_2}{a_2}=
\frac{1}{\nu (d-1)}H_R-\frac{1}{(d-1)}(\mu_3+\cdots +\mu_d)\label{eq: confBVEMPH1-HR}\end{equation}
and
\begin{equation}H_i\stackrel{def.}{=}\frac{\dot{a}_i}{a_i}=\frac{\dot{a}_2}{a_2}+\mu_i=H_2+\mu_i \end{equation}
for $i\in\{2,\dots,d\}$ by taking $\mu_2\stackrel{def.}{=}0$.  Therefore we obtain
\begin{equation}\dot{H}_i=\dot{H}_2=\frac{1}{\nu (d-1)}\dot{H}_R.\label{eq: confBVdotHidotH2-dotHR}\end{equation}
This confirms that the identities (\ref{eq:  confBVEMPequatelhss}), (\ref{eq:  confBVEMPdota/a=dota/a+mu}), (\ref{eq: confBVEMPH1wrtHR}) and (\ref{eq: confBVEMPdotH1wrtdotHR}) hold in the converse direction, so that the
 computations (\ref{eq: confBVEMPsumHlHk-H2})-(\ref{eq: confBVsumHl^2-HR}) are also valid in the converse direction for $L$ in (\ref{eq: confBVL-mulconverse}). That is,
 \begin{equation}
\displaystyle\sum_{l<k}H_lH_k=\frac{(d-2)}{2\nu^2(d-1)}H_R^2+\frac{(d-2)}{2(d-1)}L\label{eq: confBVEMPHlHk-HR}\end{equation}
and 
\begin{equation}\displaystyle\sum_{l=2}^d H_l^2=\frac{1}{\nu^2(d-1)}H_R^2-\frac{(d-2)}{(d-1)}L\label{eq: confBVEMPsumHl^2-HRL}\end{equation}
which shows that the left-hand side of $(I_0)$ is equal to
\begin{equation}\displaystyle\sum_{l=2}^d H_l^2+(d+1)\displaystyle\sum_{l<k}H_lH_k-2d\beta^2(a_2\cdots a_d)^{d-1}\notag\end{equation}
\begin{eqnarray}&&=\frac{d}{2\nu^2}H_R^2 +\frac{(d-2)}{2}L-2d\beta^2R^{(d-1)/\nu}\\
\end{eqnarray}
therefore (\ref{eq: confBVVphi-R}) can be written $V\circ \phi=$
\begin{equation}\frac{1}{R^{2/\nu}} \left[\frac{(d-1)}{4\kappa}\left(\displaystyle\sum_{l=2}^d H_l^2+(d+1) \displaystyle\sum_{l<k}H_lH_k -2d\beta^2(a_2\cdots a_d)^{d-1} \right) -\frac{\dot\phi^2}{2}  \right]-\rho -\frac{\Lambda}{\kappa}\label{eq: confBIVphi-HlHk}\end{equation}
so that $(I_0)$ holds.

To conclude the proof we must also show that the equations $(I_1), \dots, (I_d)$ hold.  In the converse direction the hypothesis of the converse of Theorem \ref{thm: EFE-EMP} holds, applied with constants 
$N, B_0, \dots, B_N$ and functions $\lambda_0(\tau), \dots, \lambda_N(\tau)$ as indicated in Table \ref{tb: confBIEMP}.  Since $\tau(t), \varphi(\tau), R(t)$ and $\phi(t)$ defined in (\ref{eq: confBVdottau-Yvarphi-Q}), (\ref{eq: confBVR-Y}) and (\ref{eq: confBVphi-varphi}) are consistent with the converse implication of Theorem \ref{thm: EFE-EMP}, applied with $a(t), \delta  $ and $\varepsilon $ as in Table \ref{tb: confBVEMP}, by this theorem and by the definition (\ref{eq: confBVrho-varrhop-textlhookp}) of $\rho(t), p(t)$ the  scale factor  equation (\ref{eq: AEFE}) holds for constants $\delta  , \varepsilon , A_0, \dots, A_N$ and functions $G_0(t), \dots, G_N(t)$ according to Table \ref{tb: confBVEMP}.  That is, we have regained equation (\ref{eq: confBVEMPI0minusIiwrtHRsimplified}).  Now solving (\ref{eq: confBVEMPVphi-R}) for $R^{d/\nu}\rho(t)$ and substituting this into (\ref{eq: confBVEMPI0minusIiwrtHRsimplified}), we obtain
\begin{equation}
\dot{H}_R-\frac{d}{2\nu}H_R^2-\frac{\nu(d-2) }{2}L+2\nu\beta^2R^{(d-1)/\nu}=\frac{-2\nu \kappa}{(d-1)}\left[\frac{\dot\phi^2}{2}+\frac{(d-1)}{4\kappa}\left(\frac{d}{2\nu^2}H_R^2 \right.\right.\notag\end{equation}
\begin{equation}\left.\left. +\frac{(d-2)}{2}L-2d\beta^2R^{(d-1)/\nu} \right)+R^{d/\nu}\left(-V\circ\phi+p-\frac{\Lambda}{\kappa}\right)\right].
\end{equation}
Collecting terms, multiplying the equation times $\frac{2}{\nu}$, and using that by definitions (\ref{eq: confBVa1-Rexp}) and (\ref{eq: confBVai-a1exp}) of $a_2, \dots, a_d$ we have that $R=(a_2\cdots a_d)^\nu$,
\begin{equation}
\frac{2}{\nu}\dot{H}_R-\frac{d}{2\nu^2}H_R^2-\frac{(d-2) }{2}L-2(d-2)\beta^2R^{(d-1)/\nu}\qquad\qquad\qquad\qquad\qquad\notag\end{equation}
\begin{equation}\qquad\qquad\qquad\qquad= \frac{-4 \kappa}{(d-1)}\left[\frac{\dot\phi^2}{2}+(a_2\cdots a_d)^d\left(-V\circ\phi+p-\frac{\Lambda}{\kappa}\right)\right]   .
\label{eq: confBVlhsshouldequallhsIi>0}\end{equation}
The left-hand side of this equation is in fact equal to the left-hand side of each of the Einstein equations $(I_1)$ and $(I_i)$ for $i\in\{2,\dots, d\}$.  To see this, we use (\ref{eq: confBVdotHidotH2-dotHR})-(\ref{eq: confBVEMPsumHl^2-HRL}) to obtain 
\begin{equation}
\displaystyle\sum_{l=2}^d\left(2\dot{H}_l - H_l^2\right) -(d+1)\displaystyle\sum_{l<k}H_lH_k -2(d-2)\beta^2(a_2\cdots a_d)^{d-1}\qquad\qquad\qquad\notag\end{equation}
\begin{eqnarray}
&&=\frac{2}{\nu}\dot{H}_R - \frac{d}{2\nu^2}H_R^2 - \frac{(d-2)}{2}L - 2(d-2)\beta^2R^{(d-1)/\nu}\label{eq: confBVEMPlhsI1-HR}
\end{eqnarray}
and 
\begin{equation}
-2\dot{H}_i-\displaystyle\sum_{l=2}^d H_l^2+\frac{2d}{(d-1)}\displaystyle\sum_{l=2}^d \dot{H}_l  -(d+1) \displaystyle\sum_{ l<k}H_lH_k -2(d-2)\beta^2(a_2\cdots a_d)^{d-1} \qquad\qquad\qquad\notag
\end{equation}
\begin{eqnarray}
&&=\frac{2}{\nu}\dot{H}_R-\frac{d}{2\nu^2}H_R^2 - \frac{(d-2)}{2}L-2(d-2)\beta^2 R^{(d-1)/\nu}\label{eq: confBVlhsIi-HR}
\end{eqnarray}
for $i\in\{2,\dots,d\}$.  By (\ref{eq: confBVEMPlhsI1-HR})  and (\ref{eq: confBVlhsshouldequallhsIi>0}) equation $(I_1)$ holds, and by (\ref{eq: confBVlhsIi-HR}) and (\ref{eq: confBVlhsshouldequallhsIi>0}) equation $(I_i)$ holds for $i\in\{2,\dots,d\}$.
This proves the theorem.
\end{proof}

\subsection{Reduction to generalized EMP with classical term}

We take $\rho=p=0$ and choose parameter $q=1/2\nu$ in Theorem \ref{thm: confBVEMP} to find that solving the Bianchi V Einstein equations
\begin{equation}
\displaystyle\sum_{l=2}^d H_l^2 +(d+1)\displaystyle\sum_{l<k}H_lH_k-2d\beta^2(a_2\cdots a_d)^{d-1}\notag
\end{equation}
\begin{equation}\label{eq: confBVEFEI0Idrhopzero}
\stackrel{(I_0)'}{=} 
\frac{4\kappa}{(d-1)}\left[\frac{\dot{\phi}^2}{2}+(a_2\cdots a_d)^d\left( V\circ\phi+\frac{\Lambda}{\kappa}\right) \right]\end{equation}
\begin{equation}\notag
\displaystyle\sum_{l=2}^d\left(2\dot{H}_l - H_l^2\right) - (d+1)\displaystyle\sum_{l<k}H_lH_k - 2(d-2)\beta^2(a_2\cdots a_d)^{d-1}\end{equation}
\begin{equation}\notag
\stackrel{(I_1)'}{=} 
\frac{ 4\kappa}{(d-1)}\left[-\frac{\dot{\phi}^2}{2}+(a_2\cdots a_d)^d\left( V\circ\phi+\frac{\Lambda}{\kappa}\right) \right]\notag
\end{equation}
\begin{equation}\notag
- 2\dot{H}_i - \displaystyle\sum_{l=2}^d H_l^2 + \frac{2d}{(d-1)}\displaystyle\sum_{l=2}^d \dot{H}_l  - (d+1) \displaystyle\sum_{ l<k}H_lH_k - 2(d-2)\beta^2(a_2\cdots a_d)^{d-1} 
\end{equation}
\begin{equation}
\stackrel{(I_i)'}{=} 
\frac{ 4\kappa}{(d-1)}\left[-\frac{\dot{\phi}^2}{2}+(a_2\cdots a_d)^d \left( V\circ\phi +\frac{\Lambda}{\kappa}\right) \right]\notag
\end{equation}
is equivalent to solving the classical EMP equation
\begin{equation}Y''(\tau)+Q(\tau)Y(\tau)=\frac{(d-2)L}{4\theta^2Y(\tau)^{ 1+2d }} 
 - \frac{  \beta^2}{\theta^2Y(\tau)^3 }
\label{eq: confBVEMPq=1/2nu}\end{equation}
for constants $\theta,L>0$ and $\beta\in\mathds{R}$.  The solutions of $(I_0)', (I_1)',\dots,(I_d)'$ in (\ref{eq: confBVEFEI0Idrhopzero}) and of (\ref{eq: confBVEMPq=1/2nu}) are related by
\begin{equation}R(t)=Y(\tau(t))^{2\nu} \qquad\mbox{ and }\qquad \varphi '(\tau)^2= \frac{(d-1)}{\kappa} Q(\tau)\label{eq: confBVclassEMPa-Yvarphi-Qq=1/2nu}\end{equation}
for $q\neq 0, \phi(t)=\varphi(\tau(t)), R(t)\stackrel{def.}{=} \left(a_2(t)\cdots a_d(t)\right)^{2\nu}$ and 
\begin{equation}\dot\tau(t)=\theta R(t)^{(d+1)/2\nu}=\theta Y(\tau(t))^{d+1},\label{eq: confBIclassEMPdottau-a-Yq=1/2nu}\end{equation}
for any $\theta>0$.  Also the constant
\begin{equation}L\stackrel{def.}{=}\frac{2}{(d-2)}\displaystyle\sum_{3\leq l<k\leq d}\mu_l\mu_k-\displaystyle\sum_{j=3}^d\mu_j^2,\label{eq: confBVL-mulmukq=1/2nu}\end{equation}
where $\mu_i\in\mathds{R}$ are such that $a_i(t)=\omega_ie^{\mu_i t}a_2(t)$ for some $c_i\in\mathds{R},  i\in\{3,\dots, d\}$.
In the converse direction 
\begin{equation}a_2(t)=R(t)^{1/\nu(d-1)}(\omega_3\cdots\omega_de^{(\mu_3+\cdots+\mu_d)t})^{-1/(d-1)}\label{eq: confBVEMPX-Ralphaq=1/2nu}\end{equation}
and   
\begin{equation}V(\phi(t))=\left[ \frac{(d-1)}{8\kappa}\left( 4d\theta^2 (Y')^2
+\frac{(d-2)L}{ Y^{2d}} - \frac{4 d\beta^2}{ Y^{2}}\right)
 -\frac{\theta^2}{2} (\varphi')^2 Y^{2} 
  -\frac{\Lambda}{\kappa}\right]\circ\tau(t)\label{eq: twotermsconfBVdefnV}\end{equation}

\begin{example}
For $d=3$, $\theta=\beta=\nu=1$ and $L=-12$, (\ref{eq: confBVEMPq=1/2nu}) becomes $Y''(\tau)+Q(\tau)Y(\tau)= - 3/Y(\tau)^7 - 1/Y(\tau)^3$.  We take the solution $Y(\tau)=(4\tau^2-1)^{1/4}$ in equation  (\ref{eq: exactEMPtwotermssoln}) and refer to (\ref{eq:  taueqnforysolnwithtwoterms}) - (\ref{eq:  Ytauforysolntwoterms}) to obtain solution $\tau(t)=\frac{-1}{2}coth(2(t-t_0))$ of the differential equation $\dot\tau(t)=Y(\tau(t))^4=4\tau(t)^2-1$.  Therefore by (\ref{eq: Ytauforysolntwoterms}) we have
\begin{eqnarray}
R(t)&=&Y(\tau(t))^2\notag\\
&=&csch(2(t-t_0))
\end{eqnarray}
for $t>t_0$, and 
\begin{eqnarray}
a_2(t)&=&R(t)^{1/2}e^{-\sqrt{3}t} = \sqrt{ csch(2(t-t_0)) } e^{-\sqrt{3}t }\notag\\
a_3(t)&=&e^{2\sqrt{3}t}a_2(t)=\sqrt{ csch(2(t-t_0)) } e^{\sqrt{3}t }\end{eqnarray}
where we have taken $\mu_3=2\sqrt{3}$ so that $L=-12= - \mu_3^2$.  Since $Q(\tau)=0=\varphi'(\tau)$,
\begin{equation}\phi(t)\stackrel{def.}{=}\varphi(\tau(t)) = \phi_0\end{equation}
for constant $\phi_0\in\mathds{R}$ and by (\ref{eq: twotermsconfBVdefnV}), (\ref{eq: Ytauforysolntwoterms}), (\ref{eq: Yprimeforysolntwoterms})
\begin{eqnarray}
V(\phi(t))&=&\left[ \frac{3}{\kappa}\left(  (Y')^2
-\frac{1}{ Y^{6}} - \frac{1}{ Y^{2}}\right)
  -\frac{\Lambda}{\kappa}\right]\circ\tau(t)\notag\\
  &=& \frac{3}{\kappa}sinh(2(t-t_0)) \left(    cosh^2(2(t-t_0))  - sinh^2(2(t-t_0)) - 1 \right) -\frac{\Lambda}{\kappa}\notag\\
  &=&-\frac{\Lambda}{\kappa}.\end{eqnarray}
  That is, we obtain a vacuum solution.
\end{example}

\subsection{Another Bianchi V metric}

Although it is not included as a special case of the above metric (\ref{eq: confBVdmetric}), we will additionally consider the metric $ds^2= - X(t)Y(t)dt^2+X(t)Y(t)dx^2+X(t)^2e^{\beta x}dy^2+Y(t)^2e^{\beta x}dz^2$ for $X(t),Y(t)>0$ with the energy-momentum tensor $T_{ij}=T^{(1)}_{ij}$ to be that of a minimally coupled scalar field $\phi$ with a potential $V$ as in (\ref{eq:  Tijminimallycoupledphi}).  This will show how the methodology developed in this thesis can be applied to this additional form of the Bianchi V metric.  The Einstein equations $g^{ij}G_{ij}=-\kappa g^{ij}T_{ij}+\Lambda$, multiplied by $2|g_{00}|=2XY$, in this case are
\begin{eqnarray}\notag H_X^2+H_Y^2+4H_XH_Y-\frac{3}{2}\beta^2&\stackrel{(I_0)'}{=}& \kappa \left[ \dot\phi^2 + 2XY V\circ\phi\right]\\
\notag  2\dot{H}_X+H_X^2+2\dot{H}_Y+H_Y^2-\frac{1}{2}\beta^2 &\stackrel{(I_1)'}{=}& \kappa \left[  -\dot\phi^2 +2XY V\circ\phi\right]
\\
\notag \dot{H}_X+3\dot{H}_Y+2H_Y^2-\frac{1}{2}\beta^2 &\stackrel{(I_2)'}{=}&\kappa \left[  -\dot\phi^2 +2XY V\circ\phi\right]\\
\notag \dot{H}_Y+3\dot{H}_X+2H_X^2-\frac{1}{2}\beta^2 &\stackrel{(I_3)'}{=}&\kappa \left[  -\dot\phi^2 +2XY V\circ\phi\right]\end{eqnarray}
where as usual $H_X(t)=\dot{X}(t)/X(t)$ and $H_Y(t)=\dot{Y}(t)/Y(t)$.  Forming the linear combination $3(I_0)'-(I_1)'-(I_2)'-(I_3)'$ and dividing by 6, we obtain 
\begin{equation}2H_XH_Y-\frac{1}{2}\beta^2 -\dot{H}_X-\dot{H}_Y=\kappa\dot\phi^2.\label{eq: JacobslincombEFE}\end{equation}
Equating the left sides of $(I_1)$, $(I_2)$ we obtain $\dot\eta+\frac{1}{\nu}\eta H_R=0$ for $\eta(t)\stackrel{def.}{=}H_X(t)-H_Y(t)$, $R(t)\stackrel{def.}{=}\left(X(t)Y(t)\right)^{\nu}$ for some $\nu\neq 0$, and $H_R(t)=\dot{R}(t)/R(t)$.  By Lemma A.1 with $\mu=1/\nu\neq 0$ this shows that $\eta XY$ is a constant function.  Setting $D=\frac{\nu}{2}\eta^2X^2Y^2$ and writing (\ref{eq: JacobslincombEFE}) in terms of $R$, we obtain
\begin{equation}\dot{H}_R-\frac{1}{2\nu}H_R^2+\nu\kappa\dot\phi^2=-\frac{D}{R^{2/\nu}}-\frac{\beta^2\nu}{2}.\label{eq: JacobslincomboEFEwrtR}\end{equation}
Next we apply Theorem \ref{thm: EFE-EMP} with substitutions made according to the following table.
   \begin{table}[ht]
\centering
\caption{{ Theorem \ref{thm: EFE-EMP} applied to a third Bianchi V}\label{tb: JacobsBVEMP}}
\vspace{.2in}
\begin{tabular}{r | l c r | l}
In Theorem & substitute & & In Theorem& substitute \\[4pt]
\hline
\raisebox{-5pt}{$a(t)$} & \raisebox{-5pt}{$R(t)$} && \raisebox{-5pt}{$N$ }& \raisebox{-5pt}{$2$}\\[8pt]
$\delta  $ &$-\frac{1}{2\nu}$      &&    $\varepsilon $ & $\nu\kappa$\\[8pt]
$G_0$ & $-D$    &&     $A_0$ & $2/\nu$ \\[8pt]
 $G_{1}$ & $ -\beta^2\nu/2$  &&      $A_{1}$   &$0$\\[8pt]
 $ \lambda_0$ &$-{qD}/{\theta^2}$       &&      $B_0$ &$(3+q\nu)/q\nu$ \\[8pt] 
$\lambda_{1}$&${-\beta^2\nu q}/{2\theta^2}$ &&        $B_{1}$&  $(q\nu+1)/q\nu$ \\[6pt]
\hline
\end{tabular}
\end{table}   

\break

By Theorem \ref{thm: EFE-EMP} we obtain the generalized EMP equation
\begin{equation}Y''(\tau)+Q(\tau)Y(\tau)=\frac{-qD}{\theta^2Y(\tau)^{(3+q\nu)/q\nu}} - \frac{\beta^2\nu q}{2\theta^2 Y(\tau)^{(q\nu+1)/q\nu}}\label{eq: JacobsEMP}\end{equation}
for $Y(\tau)=R(f(\tau))^q$, $q\neq 0$ and $Q(\tau)=q\nu\kappa\varphi'(\tau)^2$ where $f(\tau)$ is the inverse of $\tau(t)$ satisfying $\dot\tau(t)=\theta R(t)^{(2q\nu +1)/2\nu}$ and $\varphi(\tau)=\phi(f(\tau))$.  Conversely, given a solution $Y(\tau)>0, Q(\tau)$ to (\ref{eq: JacobsEMP}) one solves for $\tau(t)$ and $\sigma(t)$ in the equations $\dot\tau(t)=\theta Y(\tau(t))^{(2\nu q+1) / 2\nu q}$ and $\dot\sigma(t)=1/\dot\tau(t)^{2/(2q\nu+1)}$ for some $\theta>0$.  Also in the converse direction we define $\phi(t)=\varphi(\tau(t))$, $R(t)=Y(\tau(t))^{1/q}$, $X(t)=R(t)^{1/2\nu}e^{c\sigma(t)}$, $Y(t)=R(t)^{1/2\nu}e^{-c\sigma(t)}$ for a constant $c$ such that $c^2=D\theta^{2/(1+2q\nu)}/2\nu$, and the potential
\begin{equation}V(\phi(t))=\left[\frac{1}{\kappa}\left( \frac{3\theta^2}{4\nu^2q^2}(Y')^2 - \frac{D}{2\nu Y^{3/q\nu}} - \frac{3\beta^2}{4Y^{1/q\nu}}\right) -\frac{\theta^2}{2}Y^2(\varphi')^2\right]\circ\tau(t).\label{eq: JacobsV}\end{equation}
Taking $\nu=1, q=1/2, D=6$ and $ \beta=-2$, (\ref{eq: JacobsEMP}) becomes $Y''(\tau)+Q(\tau)Y(\tau)=-3/Y(\tau)^7-1/Y(\tau)^3$.  We take solution $Y(\tau)=\left(\frac{4}{\theta^2}\tau^2-1\right)^{1/4}$ and use (\ref{eq: tauforysolnswithtwotermsr=2})-(\ref{eq: Ytauforysolntwotermsr=2}) to obtain the solution $\tau(t)=\frac{\theta}{2}\cosh(2(t-t_0)$ of the differential equation $\dot\tau(t)=\theta Y(\tau(t))^2$.  We also solve for $\dot\sigma(t)=1/\dot\tau(t)$ to obtain $\sigma(t)=\frac{1}{2\theta}\ln(tanh(t-t_0))$.  Therefore by (\ref{eq: Ytauforysolntwotermsr=2}) we have 
\begin{equation}R(t)=Y(\tau(t))^{2}=sinh(2(t-t_0))\end{equation}
for $t>t_0$, and 
\begin{equation}X(t)^2=R(t)e^{2\sqrt{3\theta}\sigma(t)}=\sinh(2(t-t_0))tanh^{\sqrt{3}/\sqrt{\theta}}(t-t_0)\end{equation}
\begin{equation}Y(t)^2=R(t)e^{-2\sqrt{3\theta}\sigma(t)}=\sinh(2(t-t_0))tanh^{-\sqrt{3}/\sqrt{\theta}}(t-t_0).\end{equation}
Since $Q(\tau)=0$, $\phi(t)=\phi_0$ is constant and by (\ref{eq: JacobsV}), (\ref{eq: Ytauforysolntwotermsr=2}) and (\ref{eq: Yprimeforysolntwotermsr=2}),
\begin{eqnarray}
V(\phi(t))&=&0
.\end{eqnarray}
By identifying $X(t)^2, Y(t)^2$ and $X(t)Y(t)$ here with $A_1(t), A_2(t)$ and $A_3(t)$ in \cite{Joseph}, we obtain the Joseph vacuum solution with the constants $k_1=k_2=k_3=1$.

\section{In terms of a Schr\"odinger-Type Equation}

To reformulate Einstein's field equations $(I_0), \dots, (I_d)$ in (\ref{eq: confBVEFEI0Id}) in terms of an equation with one less non-linear term than that which is provided by the generalized EMP formulation, one can apply Corollary \ref{cor: EFE-NLSAzeroEnonzero} to the difference $d(I_0)-\displaystyle\sum_{i=1}^d(I_i)$ (and similar to above, define $V\circ\phi$ in $u-$notation to be such that $(I_0)$ holds).   Below is the resulting statement.

\begin{thm}\label{thm: confBVNLS}
Suppose you are given twice differentiable functions $a_2(t), \dots, a_d(t)>0$, a once differentiable function $\phi(t)$, and also functions $\rho(t), p(t), V(x)$ which satisfy the Einstein equations $(I_0),\dots,(I_d)$ for some $\Lambda\in\mathds{R}, d\in\mathds{N}\backslash\{0,1\}, \kappa\in\mathds{R}\backslash\{0\}$.  
Denote \begin{equation}R(t)\stackrel{def.}{=} \left(a_2(t)\cdots a_d(t)\right)^\nu\label{eq: confBVNLSR-a}\end{equation}for some $\nu\neq 0$,  
then the functions
\begin{eqnarray}
u(\sigma)&=&R(\sigma+t_0)^{-d/2\nu}\label{eq: confBVNLSu-R}\\
P(\sigma)&=&\frac{d\kappa}{(d-1)}\psi '(\sigma)^2\label{eq: confBVNLSP-psi}
\end{eqnarray}
solve the Schr\"odinger-type equation
\begin{equation}u''(\sigma)+\left[E-P(\sigma)\right]u(\sigma)=\frac{d\beta^2}{u^{(d-2)/d}}+
\frac{ d \kappa(\uprho(\sigma)+\mathrm{p}(\sigma))}{(d-1) u(\sigma)}
\label{eq: confBVNLS}\end{equation}
for
\begin{equation}\psi(\sigma)=\phi(\sigma+t_0)\label{eq: confBVNLSpsi-phi}\end{equation}
\begin{equation}\uprho(\sigma)=\rho(\sigma+t_0), \ \mathrm{p}(\sigma)=p(\sigma+t_0). \label{eq: confBVNLSuprho-rhormp-p}\end{equation} 
and where 
\begin{equation}E\stackrel{def.}{=}\frac{d(d-2)}{4}\left(\frac{2}{(d-2)}\displaystyle\sum_{l<k}\mu_l\mu_k-\displaystyle\sum_{j=3}^d\mu_j^2\right)\label{eq: confBVNLSE-mu}\end{equation}
for constants $\mu_j$ such that $a_j(t)=\omega_je^{\mu_j t}a_2(t)$ for some $\omega_j>0, j\in\{2, \dots, d\}$.

Conversely, suppose you are given a twice differentiable function $u(\sigma)>0$, and also functions $P(\sigma)$ and $\uprho(\sigma), \mathrm{p}(\sigma)$ which solve (\ref{eq: confBVNLS}) for some constants $E<0,  \kappa\in\mathds{R}\backslash\{0\}$ and $d\in\mathds{N}\backslash\{0,1\}$.  In order to construct functions which solve $(I_0),\dots, (I_d)$, 
If $\psi(\sigma)$ is such that  
\begin{equation}\psi '(\sigma)^2= \frac{(d-1)}{ d \kappa} P(\sigma),\label{eq: confBVNLSpsi-P}\end{equation}
 constants $\mu_i, i\in\{3, \dots, d\}$  satisfy
\begin{equation}E{=}\frac{d(d-2)}{4}\left(\frac{2}{(d-2)}\displaystyle\sum_{l<k}\mu_l\mu_k-\displaystyle\sum_{j=3}^d\mu_j^2\right),\label{eq: confBINLSE-muconverse}\end{equation}
and
\begin{equation}R(t)=u(t-t_0)^{-2\nu/d}.\label{eq: confBVNLSR-u}\end{equation} 
Then the functions
\begin{equation}a_2(t)=R(t)^{1/\nu (d-1)}(\omega_3\cdots\omega_de^{(\mu_2+\cdots+\mu_d)t})^{-1/(d-1)}\label{eq: confBVNLSa1-R}\end{equation}
\begin{equation}a_i(t)=\omega_ie^{\mu_i t}a_2(t)\label{eq: confBVNLSai-a1exp}\end{equation} 
\begin{equation}\phi(t)=\psi(t-t_0)\label{eq: confBVNLSphi-psi}\end{equation}
\begin{equation}\rho(t)=\uprho(t-t_0),\qquad\qquad p(t)=\mathrm{p}(t-t_0)\label{eq: confBVNLSprho-t-t0}\end{equation}
and\\
$V(\phi(t))$
\begin{equation}
=\left[\frac{(d-1)}{2\kappa}\left(
\frac{1}{d}(u')^2
+u^2\frac{E}{d}-d\beta^2u^{2/d}\right)
-\frac{1}{2}(\psi ')^2u^2
-\uprho-\frac{\Lambda}{\kappa}
\right]\circ(t-t_0)
\label{eq: confBVNLSVphi-u}
\end{equation}
satisfy the equations $(I_0),\dots, (I_d)$ for any $\omega_i>0, 2\leq i\leq d$.\end{thm}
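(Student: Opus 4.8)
The plan is to follow the proof of Theorem~\ref{thm: confBVEMP} almost verbatim through the reduction to a scale factor equation, and then apply Corollary~\ref{cor: EFE-NLSAzeroEnonzero} in place of Theorem~\ref{thm: EFE-EMP}. Because the conformal factor $|g_{00}|=(a_2\cdots a_d)^d$ forces the coefficient $\delta=-d/2\nu\neq0$ and the constant source term in the reduced equation carries power $A=0$, it is precisely the $A=0,\ \delta\neq0$ corollary that is relevant, and its effect is to send that constant term to the linear term $E\,u(\sigma)$ of the Schr\"odinger-type equation. This is what produces $E=\tfrac{d(d-2)}{4}L$ in (\ref{eq: confBVNLSE-mu}): with $G=\nu(d-2)L/2$ and $\delta=-d/2\nu$ one has $E=-\delta G=d(d-2)L/4$. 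I would present the substitutions in a table analogous to Table~\ref{tb: confBVEMP}, taking $a(t)=R(t)$, $\delta=-d/2\nu$, $\varepsilon=2\nu\kappa/(d-1)$, the constant $A=0$ term $G=\nu(d-2)L/2$, the term $G_1=-2\nu\kappa(\rho+p)/(d-1)$ with $A_1=-d/\nu$, and the constant term $G_2=-2\nu\beta^2$ with $A_2=-(d-1)/\nu$; the resulting exponents $C_1=A_1/\delta-1=1$ and $C_2=A_2/\delta-1=(d-2)/d$ match the two nonlinear powers in (\ref{eq: confBVNLS}), and $F_1=\delta G_1$, $F_2=\delta G_2=d\beta^2$ match its coefficients.

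For the forward implication I would first equate the left-hand sides of the identical-right-hand-side equations $(I_i)$ and $(I_j)$ for $i,j\in\{2,\dots,d\}$ to obtain $\dot H_j=\dot H_2$ and hence $a_j=\omega_j e^{\mu_j t}a_2$, exactly as in (\ref{eq: confBVEMPequatelhss})--(\ref{eq: confBVEMPaj=cjemuta1}). Forming the linear combination $d(I_0)-\sum_{i=1}^d(I_i)$ and running the same simplifications used in the proof of Theorem~\ref{thm: confBVEMP} reduces the system to the scale factor equation (\ref{eq: confBVEMPI0minusIiwrtHRsimplified}), which simultaneously confirms that $L$ (equivalently $E$) is a genuine constant. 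With $R(t)$, $u(\sigma)$, $P(\sigma)$ and $\psi(\sigma)$ as defined in the statement matching the forward notation of Corollary~\ref{cor: EFE-NLSAzeroEnonzero}, that corollary then yields (\ref{eq: confBVNLS}) with the powers and coefficients read off from the table.

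For the converse I would assume a solution of (\ref{eq: confBVNLS}) and first verify $(I_0)$. Differentiating $R(t)=u(t-t_0)^{-2\nu/d}$ gives $H_R$, differentiating $\phi(t)=\psi(t-t_0)$ gives $\dot\phi$, and substituting these together with the definition of $\rho$ into (\ref{eq: confBVNLSVphi-u}) rewrites $V\circ\phi$ so that the bracketed quantity is exactly the left-hand side of $(I_0)$; here I would invoke the identities (\ref{eq: confBVEMPHlHk-HR}) and (\ref{eq: confBVEMPsumHl^2-HRL}), which remain valid because the definitions of $a_2,\dots,a_d$ in terms of $R$ and the $\mu_i$ are unchanged from Theorem~\ref{thm: confBVEMP}. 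To obtain $(I_1),\dots,(I_d)$, the converse of Corollary~\ref{cor: EFE-NLSAzeroEnonzero} regains (\ref{eq: confBVEMPI0minusIiwrtHRsimplified}); solving the $V$-expression for $R^{d/\nu}\rho$, substituting, multiplying by $2/\nu$ and rearranging recovers the common left-hand side, which by (\ref{eq: confBVEMPlhsI1-HR}) and (\ref{eq: confBVlhsIi-HR}) equals the left-hand side of $(I_1)$ and of each $(I_i)$, $i\in\{2,\dots,d\}$.

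The main obstacle is purely bookkeeping: keeping the conformal factor $R^{d/\nu}$ and the two distinct source powers straight, so that the $\beta^2$ term lands at exponent $(d-1)/\nu$ (hence $C_2=(d-2)/d$ rather than the $1$ of the matter term) and the constant term at $A=0$ (hence becomes $E$ rather than a nonlinear term). This is simplified considerably by the $A=0,\ \delta\neq0$ corollary, which collapses the $\tau$--$\sigma$ reparametrization to the mere time shift $\sigma(f_\sigma(\tau(t)))=t-t_0$, so that no auxiliary reparametrization equation must be solved and the remaining work is the power matching already carried out in the proof of Theorem~\ref{thm: confBVEMP}.
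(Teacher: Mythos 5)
Your proposal matches the paper's proof essentially step for step: the same reduction of $d(I_0)-\sum_i(I_i)$ to the scale factor equation via the relations $a_j=\omega_je^{\mu_jt}a_2$, the same application of the $A=0$, $\delta\neq0$ corollary with $\delta=-d/2\nu$, $\varepsilon=2\nu\kappa/(d-1)$, and the same identification of the constant term (power $A=0$) with $E$ and of the $\beta^2$ and matter terms with $C_2=(d-2)/d$ and $C_1=1$; your table entries agree with the paper's Table of substitutions once one notes $G=\nu(d-2)L/2=2\nu E/d$. The converse argument, reusing the identities from the EMP theorem to verify $(I_0)$ and then $(I_1),\dots,(I_d)$, is likewise the paper's own route, so the proposal is correct and not a different approach.
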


\break

\begin{proof}
This proof will implement Corollary \ref{cor: EFE-NLSAzeroEnonzero} with constants and functions as indicated in the following table.

\begin{table}[ht]
\centering
\caption{{ Corollary \ref{cor: EFE-NLSAzeroEnonzero} applied to conformal Bianchi V}}\label{tb: confBVNLS}
\vspace{.2in}
\begin{tabular}{r | l c r | l}
In Corollary  & substitute & & In Corollary & substitute \\[4pt]
\hline
\raisebox{-5pt}{$a(t)$} & \raisebox{-5pt}{$R(t)$}      &&    \raisebox{-5pt}{$N$} & \raisebox{-5pt}{$2$}\\[8pt]
$\delta  $ & $-d/2\nu$      &&   $\varepsilon $ &${2\nu \kappa}/{(d-1)}$\\[8pt]
 $G(t)$ & $\mbox{constant } 2\nu E/d$    &&     $A$ & $0$ \\[8pt]
 $G_{1}(t)$ & $ \frac{-2\nu  \kappa}{(d-1)}(\rho(t)+p(t))$  &&      $A_{1}$   &$-d/\nu$\\[8pt]
 $G_{2}(t)$ & constant $-2\nu\beta^2$  &&      $A_{2}$   &$-(d-1)/\nu$\\[8pt]
$F_{1}(\sigma)$&$ \frac{d \kappa}{(d-1)} (\uprho(\sigma)+\mathrm{p}(\sigma)) $ &&        $C_{1}$&  $1$ \\[6pt]
$F_{2}(\sigma)$&constant $ d\beta^2$ &&        $C_{2}$&  $  (d-2)/d $ \\[6pt]
\hline
\end{tabular}
\end{table}

Much of this proof will rely on computations that are exactly the same as those seen in the proof of Theorem \ref{thm: confBVEMP} (the generalized EMP formulation of conformally Bianchi V).  Therefore we will restate the relevant results here, but point the reader to the details in the proof of Theorem \ref{thm: confBVEMP}.

To prove the forward implication, we assume to be given functions which solve the  Einstein field equations $(I_0),\dots, (I_d)$.  Since the right-hand sides of Einstein equation $(I_i)$ are all the same for $i\in\{1,\dots, d\}$, we begin by equating the left-hand side of $(I_{2})$ with the left-hand side of any $(I_j)$  for $j\in\{2,\dots, d\}$ since it will give us a simplifying relation among the scale factors $a_2(t), \dots, a_d(t)$.   Exactly this was done in (\ref{eq: confBVEMPequatelhss})-(\ref{eq: confBVEMPaj=cjemuta1}) so that again we obtain
\begin{equation}H_j=H_2+\mu_j\label{eq: confBVNLSHj-H1+mu}\end{equation}
and
\begin{equation}a_j(t)=\omega_je^{\mu_j t}a_2(t)\label{eq: confBVNLSaj-a1}\end{equation}
for $\omega_j>0, \mu_j\in\mathds{R}, j\in\{2, \dots, d\}$, and $\mu_2=0, \omega_2=1$.  Following the arguments given in (\ref{eq: confBVlinearcomboEFEHlHk})-(\ref{eq: confBVEMPI0minusIiwrtHRsimplified}), we form the linear combination $d(I_0)-\displaystyle\sum_{i=1}^d(I_i)$ and write everything in terms of $R=(a_2\cdots a_d)^{\nu}$ as defined in (\ref{eq: confBVNLSR-a}) to obtain 
\begin{equation}\dot{H}_R-\frac{d}{2\nu}H_R^2-\frac{2 \nu}{d}E+2\nu\beta^2R^{(d-1)/\nu} =\frac{-2\nu\kappa}{(d-1)}\left[\dot\phi^2+R^{d/\nu}(\rho+p)\right].\label{eq: confBVNLSI0minusIiwrtHRsimplified}\end{equation}
Since 
\begin{equation}L\stackrel{def.}{=}\frac{2}{(d-2)}\displaystyle\sum_{l<k}\mu_l\mu_k-\displaystyle\sum_{j=2}^d\mu_j^2  \ \Rightarrow \ \frac{\nu (d-2)}{2}L=\frac{2\nu}{d}E\mbox{ (by (\ref{eq: confBVNLSE-mu}))}\label{eq: confBINLSL-mu}\end{equation}
and 
\begin{equation}H_R\stackrel{def.}{=}\frac{\dot{R}}{R},\label{eq: confBINLSHR-R}\end{equation}
we again have that
\begin{equation}H_2=\frac{1}{\nu (d-1)}H_R-\frac{1}{(d-1)}\displaystyle\sum_{j=2}^d\mu_j\label{eq: confBVNLSH2-HR}\end{equation}
\begin{equation}\dot{H}_2=\frac{1}{\nu (d-1)}\dot{H}_R.\label{eq: confBVNLSdotH1-dotHR}\end{equation}
This shows that $R(t), \phi(t), \rho(t)$ and $p(t)$ satisfy the hypothesis of Corollary \ref{cor: EFE-NLSAzeroEnonzero}, applied with  constants $\epsilon, \varepsilon , N, A, A_1 \dots, A_{N}$ and functions $a(t), G(t),G_1(t),\dots,G_{N}(t)$ according to Table \ref{tb: confBVNLS}.  Since $
u(\sigma), P(\sigma)$ and $\psi(\sigma)$ defined in (\ref{eq: confBVNLSu-R})-(\ref{eq: confBVNLSP-psi}) and (\ref{eq: confBVNLSpsi-phi})  are equivalent to that in the forward implication of Corollary \ref{cor: EFE-NLSAnonzeroEzero}, by this corollary and by definition (\ref{eq: confBVNLSuprho-rhormp-p}) of $\uprho(\sigma), \mathrm{p}(\sigma)$, the Schr\"odinger-type equation (\ref{eq: CNLSAZERO}) holds for constants $C_1, \dots, C_{N}$ and functions $F_1(\sigma), \dots, F_{N}(\sigma)$ as indicated in Table \ref{tb: confBVNLS}.  This proves the forward implication.

To prove the converse implication, we assume to be given functions which solve the Schr\"odinger-type equation (\ref{eq: confBVNLS}) and we will show that equations $(I_0), \dots, (I_d)$ are satisfied. 
To show that $(I_0)$ is satisfied, we differentiate the definition of $R(t)$ in (\ref{eq: confBVNLSR-u}) to obtain
\begin{eqnarray}\dot{R}(t)
&=&-\frac{2\nu}{d} u(t-t_0)^{-2\nu/d-1}u'(t-t_0).\label{eq: confBVNLSdotR-u}\end{eqnarray}
Dividing by $R(t)$,
\begin{equation}H_R\stackrel{def.}{=}\frac{\dot{R}}{R}=-\frac{2\nu}{d} \frac{u'(t-t_0)}{u(t-t_0)}.\label{eq: confBVNLSHR-u}\end{equation}
Differentiating the definition (\ref{eq: confBVNLSphi-psi}) of $\phi(t)$ 
\begin{equation}
\dot{\phi}(t)=\psi '(t-t_0).\label{eq: confBVNLSphi-u}
\end{equation}
Using (\ref{eq: confBVNLSHR-u}) and (\ref{eq: confBVNLSphi-u}), and also the definitions (\ref{eq: confBVNLSR-u}) and (\ref{eq: confBVNLSprho-t-t0}) of $R(t)$ and $\rho(t)$ respectively, the definition (\ref{eq: confBINLSVphi-u}) of $V\circ\phi$ can be written as
\begin{equation}V\circ\phi= \frac{1}{R^{d/\nu}} \left[\frac{(d-1)}{4\kappa}\left(\frac{d}{2\nu^2}H_R^2 +\frac{2}{d}E- 2d\beta^2R^{(d-1)/\nu} \right) -\frac{\dot\phi^2}{2}   \right]-\rho-\frac{\Lambda}{\kappa}.\label{eq: confBVVphi-R}\end{equation}
The quantity in parenthesis here is in fact equal to the left-hand-side of equation $(I_0)$.  
To see this, note that the definitions of $a_2(t), a_i(t)$ in (\ref{eq: confBVNLSa1-R}), (\ref{eq: confBVNLSai-a1exp}) and also $H_R\stackrel{def.}{=}\dot{R}/R$ are the same as those in Theorem \ref{thm: confBVEMP}.  Therefore we may follow the arguments given in (\ref{eq: confBVHR-H1})-(\ref{eq: confBVsumHl^2-HR}) to see that the identities
\begin{equation}\displaystyle\sum_{l<k}H_lH_k=
\frac{(d-2)}{2\nu^2(d-1)}H_R^2  +\frac{2}{d(d-1)}E,\label{eq: confBVNLSIlhsI0-HRsimplifiedwithL}\end{equation}
\begin{equation}
\displaystyle\sum_{l=2}^d\dot{H}_l=\notag
\frac{1}{\nu}\dot{H}_R \label{eq: confBVNLSsumdotHl-HR}
\end{equation}
and
\begin{equation}
\displaystyle\sum_{l=2}^dH_l^2=\frac{1}{\nu^2(d-1)}H_R^2-\frac{4}{d(d-1)}E \label{eq: confBVNLSsumHl^2-HR}
\end{equation}
hold in the converse direction (since (\ref{eq: confBVL-mulconverse}) in Theorem \ref{thm: confBVEMP} and (\ref{eq: confBINLSE-muconverse}) here show that $L=4E/d(d-2)$).
This shows that the left-hand side of $(I_0)$ is equal to
\begin{equation}\displaystyle\sum_{l=2}^d H_l^2+(d+1)\displaystyle\sum_{l<k}H_lH_k-2d\beta^2(a_2\cdots a_d)^{d-1}\notag\end{equation}
\begin{eqnarray}&&=\frac{d}{2\nu^2}H_R^2 +\frac{2}{d}E-2d\beta^2R^{(d-1)/\nu}
\end{eqnarray}
and therefore (\ref{eq: confBVVphi-R}) can be written as
$V\circ \phi=$
\begin{equation}\frac{1}{R^{d/\nu}} \left[\frac{(d-1)}{4\kappa}\left(\displaystyle\sum_{l=2}^d H_l^2+(d+1) \displaystyle\sum_{l<k}H_lH_k -2d\beta^2(a_2\cdots a_d)^{d-1} \right) -\frac{\dot\phi^2}{2}   \right]-\rho-\frac{\Lambda}{\kappa}\label{eq: confBIVphi-HlHk}\end{equation}
which shows that $(I_0)$ holds in the converse direction.

To conclude the proof we must also show that the equations $(I_1), \dots, (I_d)$ hold.   In the converse direction the hypothesis of the converse of Corollary \ref{cor: EFE-NLSAzeroEnonzero} holds, applied with constants $N, C_1, \dots, C_{N}$ and functions $F_1(\sigma), \dots, F_{N}(\sigma)$ as indicated in Table \ref{tb: confBVNLS}.  Since  $
\psi(\sigma), R(t)$ and $\phi(t)$ defined in (\ref{eq: confBVNLSpsi-P}), (\ref{eq: confBVNLSR-u}) and (\ref{eq: confBVNLSphi-psi}) are consistent with the converse implication of Corollary \ref{cor: EFE-NLSAzeroEnonzero}, applied with $a(t)$ and $\delta  , \varepsilon $ as in Table \ref{tb: confBVNLS}, by this corollary and by the definition (\ref{eq: confBVNLSprho-t-t0}) of $\rho(t), p(t)$ the  scale factor  equation (\ref{eq: CEFEAZERO}) holds for constants $\delta  , \varepsilon , A, A_1, \dots, A_{N}$ and functions $G(t),G_1(t),\dots,G_{N}(t)$ according to Table \ref{tb: confBVNLS}.  That is, we have regained (\ref{eq: confBVNLSI0minusIiwrtHRsimplified}).  Now solving (\ref{eq: confBVVphi-R}) for $R^{d/\nu}\rho(t)$ and substituting this into (\ref{eq: confBVNLSI0minusIiwrtHRsimplified}), we obtain
\begin{equation}\dot{H}_R-\frac{d}{2\nu}H_R^2-\frac{2\nu}{d}E+2\nu\beta^2R^{(d-1)/\nu} =\frac{-2\nu\kappa}{(d-1)}\left[\frac{\dot\phi^2}{2}+R^{d/\nu}\left(-V\circ\phi+p -\frac{\Lambda}{\kappa}\right) \right.\notag\end{equation}
\begin{equation}\left.+ \frac{(d-1)}{4\kappa}\left(\frac{d}{2\nu^2}H_R^2 + \frac{2}{d}E - 2d\beta^2R^{(d-1)/\nu}\right)\right].\label{eq: confBVNLSI0minusIiwrtHRnotsimplified}\end{equation}
Simplifying and multiplying by $2/\nu$,
\begin{equation}\frac{2}{\nu}\dot{H}_R-\frac{d}{2\nu^2}H_R^2-\frac{2}{d}E-2(d-2)\beta^2R^{(d-1)/\nu}\qquad\qquad\qquad\qquad \qquad\qquad\notag\end{equation}
\begin{equation}\qquad\qquad\qquad\qquad\qquad=\frac{-4\kappa}{(d-1)}\left[\frac{\dot\phi^2}{2}+R^{d/\nu}\left(-V\circ\phi+p -\frac{\Lambda}{\kappa}\right) \right] . \label{eq: confBVNLSrhsIiwrtHRsimplified}\end{equation}
As noted above, the computations  (\ref{eq: confBVEMPnewlhsI0})-(\ref{eq: confBVsumHl^2-HR}) from Theorem \ref{thm: confBVEMP} still hold in this theorem, in the converse direction.  Therefore by (\ref{eq:  confBVNLSIlhsI0-HRsimplifiedwithL})-(\ref{eq: confBVNLSsumHl^2-HR})  the left-hand side of  $(I_1)$ is equal to
\begin{equation}
\displaystyle\sum_{l=2}^d\left(2\dot{H}_l -  H_l^2\right) - (d+1)\displaystyle\sum_{l<k}H_lH_k - 2(d-2)\beta^2(a_2\cdots a_d)^{d-1}\qquad\qquad\qquad\notag\end{equation}
\begin{eqnarray}
&&=\frac{2}{\nu}\dot{H}_R - \frac{d}{2\nu^2}H_R^2 - \frac{2}{d}E - 2(d-2)\beta^2R^{(d-1)/\nu}\label{eq: confBVNLSlhsI1-HR}
\end{eqnarray}
and the left-hand side of $(I_i)$  is equal to
\begin{equation}
- 2\dot{H}_i - \displaystyle\sum_{l=2}^d H_l^2 + \frac{2d}{(d-1)}\displaystyle\sum_{l=2}^d \dot{H}_l   - (d+1) \displaystyle\sum_{ l<k}H_lH_k  - 2(d-2)\beta^2(a_2\cdots a_d)^{d-1} \qquad\qquad\qquad\notag
\end{equation}
\begin{eqnarray}
&&=\frac{2}{\nu}\dot{H}_R - \frac{d}{2\nu^2}H_R^2  -  \frac{2}{d}E - 2(d-2)\beta^2 R^{(d-1)/\nu}\label{eq: confBVNLSlhsIi-HR}
\end{eqnarray}
for $i\in\{2,\dots,d\}$.  By (\ref{eq: confBVNLSrhsIiwrtHRsimplified}), (\ref{eq: confBVNLSlhsI1-HR}) and (\ref{eq: confBVNLSlhsIi-HR}), $(I_i)$ hold in the converse direction for $i\in\{1,\dots,d\}$.  This proves the theorem.
\end{proof}

\appendix 
\setcounter{chapter}{0}

\chapter{A short lemma}  

\begin{appxlem}\label{lem: shortlemma}
For any differentiable function $f(t)$, any positive differentiable function $R(t)$, and constant $\mu\in\mathds{R}$,
\begin{equation}f(t)R(t)^\mu \mbox{ is a constant}\end{equation}
if and only if
\begin{equation}\dot{f}(t)+\mu f(t) H(t)=0\end{equation}
where $H(t)\stackrel{def.}{=}\frac{\dot{R}(t)}{R(t)}$.
\end{appxlem}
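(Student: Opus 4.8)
The statement is an elementary consequence of the product and chain rules together with the positivity of $R$, so the plan is to set $g(t)\stackrel{def.}{=}f(t)R(t)^\mu$ and simply compute its derivative. First I would differentiate, using the product rule on the two factors and the chain rule on $R(t)^\mu$, to obtain
\begin{equation}
\dot{g}(t)=\dot{f}(t)R(t)^\mu+\mu f(t)R(t)^{\mu-1}\dot{R}(t).\notag
\end{equation}
Since $R(t)>0$ by hypothesis, the quantity $R(t)^{\mu-1}$ is well-defined and I can factor $R(t)^\mu$ out of both terms, writing $\dot{R}(t)/R(t)=H(t)$ in the second term, to get
\begin{equation}
\dot{g}(t)=R(t)^\mu\left[\dot{f}(t)+\mu f(t)H(t)\right].\notag
\end{equation}

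The key observation is then that $R(t)^\mu$ is strictly positive (again because $R(t)>0$), hence never zero, so $\dot{g}(t)=0$ holds for all $t$ if and only if the bracketed factor $\dot{f}(t)+\mu f(t)H(t)$ vanishes for all $t$. To finish both directions I would invoke the standard fact that a differentiable function on an interval is constant precisely when its derivative is identically zero. For the forward implication, if $g(t)=f(t)R(t)^\mu$ is constant then $\dot{g}\equiv 0$, and dividing by the nonzero $R(t)^\mu$ yields $\dot{f}+\mu f H=0$. For the converse, if $\dot{f}+\mu f H=0$ then the displayed factorization gives $\dot{g}\equiv 0$, so $g$ is constant.

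There is no real obstacle here; the only point requiring a word of care is the implicit assumption that the common domain of $f$ and $R$ is an interval (or at least connected), so that a vanishing derivative genuinely forces the function to be constant rather than merely locally constant. Since throughout the thesis these functions are defined on $t\in\mathds{R}$, this hypothesis is automatically satisfied, and I would note it only in passing. The positivity of $R$ is used twice and is essential: once to make $R^\mu$ differentiable for arbitrary real $\mu$, and once to guarantee $R^\mu\neq 0$ so that the derivative of $g$ and the expression $\dot{f}+\mu f H$ vanish simultaneously.
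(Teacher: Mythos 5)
Your proof is correct and follows essentially the same route as the paper's: differentiate $f(t)R(t)^\mu$ by the product and chain rules, then use the positivity of $R(t)$ to divide by $R(t)^\mu$ and obtain the equivalence. The extra remark about the domain being an interval is a reasonable (if minor) point of care that the paper leaves implicit.
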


\begin{proof}
$f(t)R(t)^\mu$ is a constant function if and only if
\begin{equation}\frac{d}{dt}\left(f(t)R(t)^\mu\right)=0.\end{equation}
Or equivalently,
\begin{equation}\dot{f}(t)R(t)^\mu+\mu f(t)R(t)^{\mu-1}\dot{R}(t)=0.\label{eq: lemmadotfR^mu+stuff=0}\end{equation}
Since $R(t)$ is positive, (\ref{eq: lemmadotfR^mu+stuff=0}) holds if and only if the same equation divided by $R(t)^\mu$ holds.  That is,
\begin{equation}\dot{f}(t)+\mu f(t)H(t)=0.\label{eq: lemmadotf+mufH=0}\end{equation}
\end{proof}

\appendix 
\setcounter{chapter}{1}
\chapter{The Einstein tensor in $d+1$ dimensions}

In this thesis we consider a number of different metrics on pseudo-Riemannian spacetime manifolds of arbitrary dimension.  
For a fixed dimension, one can use a computer program, like Mathematica or Maple, to compute the Einstein tensor $G_{ij}\stackrel{def.}{=}R_{ij}-\frac{1}{2}Rg_{ij}$ in terms of the Ricci tensor $R_{ij}$ and the scalar curvature $R$.
However on a space of arbitrary dimension, the use of computer programs involves a bit of guesswork (to the best of the author's knowledge) since one must choose a few fixed dimensions in which to compute $G_{ij}$, then deduce a more general form for $G_{ij}$ for arbitrary dimension, and in all cases computing power will limit the ability to check one's formula for very high dimension.  

We will show a by-hand method for computing $G_{ij}$ on a manifold of arbitrary dimension, which is manageable to apply at least when the metric is diagonal with coefficient functions that depend only on a few of the coordinate variables.  As an example, we will use the conformal Bianchi V metric 
\begin{equation}ds^2=-(a_2\cdots a_d)^ddt^2+(a_2\cdots a_d)dx_1^2+a_2^{d-1}e^{2\beta x_1}dx_2^2+\cdots + a_d^{d-1}e^{2\beta x_1}dx_d^2\label{eq: AppxBVmetric}\end{equation}
for $a_i=a_i(t), \beta\neq 0$,  and $i,j\in\{0, 1, \dots, d\}$, as in Chapter 7.  

We will compute the Ricci tensor in three pieces, each computed directly from Christoffel symbols of the second kind $\Gamma_{ij}^k$, by
\begin{equation}R_{ij}=R_{ij}^{(1)}+R_{ij}^{(2)}-R_{ij}^{(3)}\label{eq: AppxRicci-Gamma}\end{equation}
for 
\begin{equation}R_{ij}^{(1)}\stackrel{def.}{=}\displaystyle\sum_{k=0}^d\left(\Gamma^k_{kj,i}-\Gamma^k_{ij,k}\right),\label{eq: AppxRicci1-Gamma}\end{equation}
\begin{equation}R_{ij}^{(2)}\stackrel{def.}{=}\displaystyle\sum_{m=0}^d\displaystyle\sum_{n=0}^d \Gamma^n_{im}\Gamma^m_{nj},\label{eq: AppxRicci2-Gamma}\end{equation}
and
\begin{equation}R_{ij}^{(3)}\stackrel{def.}{=}\displaystyle\sum_{m=0}^d\displaystyle\sum_{n=0}^d\Gamma_{ij}^m\Gamma^n_{nm}\label{eq: AppxRicci3-Gamma}\end{equation}
where $,i$ denotes differentiation $\frac{\partial}{\partial x_i}$ with respect to $x_i$.  $\Gamma_{ij}^k$ are given in terms of the metric components by 
\begin{equation}\Gamma_{ij}^k\stackrel{def.}{=}\frac{1}{2}\displaystyle\sum_{s=0}^d g^{sk}\left(g_{si,j}-g_{ij,s}+g_{js,i}\right).\label{eq: AppxGammaij-gij}\end{equation}

We will first form matrices of Christoffel symbols and their derivatives.   Since the metric (\ref{eq: AppxBVmetric}) is diagonal, the sum in (\ref{eq: AppxGammaij-gij}) reduces to
\begin{equation}\Gamma_{ij}^k\stackrel{def.}{=}\frac{1}{2}g^{kk}\left(g_{ki,j}-g_{ij,k}+g_{jk,i}\right),\label{eq: AppxGammaij-gijdiagonal}
\end{equation}
which is only nonzero if at least two of $i,j,k$ are equal.  Since the metric is symmetric, if $k=i$ then 
\begin{equation}\Gamma_{kj}^k=\Gamma_{jk}^k=\frac{1}{2}g^{kk}g_{kk,j}\label{eq: AppxGammakjk}\end{equation}
and if $i=j$ then
\begin{equation}\Gamma_{ii}^k=\frac{1}{2}g^{kk}(2g_{ki,i}-g_{ii,k}).\label{eq: Gammaiik}\end{equation}
We denote by $\left[\Gamma_{ij}^k\right]$ the matrix with rows indexed by $i$, columns indexed by $j$, and $k$ fixed, and we use (\ref{eq: Gammaiik}) to compute the diagonal entries  and  (\ref{eq: AppxGammakjk}) to compute the nonzero non-diagonal entries of this matrix to obtain

\begin{equation}\left[\Gamma_{ij}^0\right]=\left(\begin{array}{cccccc}
\frac{d}{2}\displaystyle\sum_{l=2}^dH_l&&&&&\\
&\frac{1}{2(a_2\cdots a_d)^{d-1}}\displaystyle\sum_{l=2}^dH_l&&&&\\
&&A_2&&&\\
&&&\ddots &&\\
&&&&A_d\\
\end{array}\right)\label{eq: AppxGammaij0}\end{equation}
for  $A_i\stackrel{def.}{=}\frac{(d-1)a_i^{d-1}H_ie^{2\beta x_1}}{2(a_2\cdots a_d)^d}$, $2\leq i\leq d$,

\begin{equation}\left[\Gamma_{ij}^1\right]=\left(
\begin{array}{cccccc}
0&\frac{1}{2}\displaystyle\sum_{l=2}^d H_l&&&&\\
\frac{1}{2}\displaystyle\sum_{l=2}^d H_l&0&&&&\\
&&B_2&&&\\
&&&\ddots &&\\
&&&&B_d\\
\end{array}
\right)\label{eq: AppxGammaij1}\end{equation}
for $B_i\stackrel{def.}{=}-\beta\frac{a_i^{d-1}e^{2\beta x_1}}{(a_2\cdots a_d)}$, $2\leq i\leq d$, and 

\begin{equation}\left[\Gamma_{ij}^k\right]=\left(\begin{array}{cccccc}
0                              &0                                                &\cdots &\frac{(d-1)}{2} H_k \mbox{\small \ in $k^{th}$ column }            &\cdots&0\\
0                              &0                                                &\cdots&\beta\mbox{\small \ in $k^{th}$ column }                          &\cdots&0                             \\
\vdots                      &\vdots                                        &&&&\\
\frac{(d-1)}{2} H_k  \mbox{\small \ in $k^{th}$ row}                      &\beta\mbox{\small \ in $k^{th}$ row}&    &                                                                                          &            &             \\
       \vdots            &\vdots                                          &            &&&\\
0                & 0                                              &                                   &                                                          &             &        \\
\end{array}\right)\notag\end{equation}
\begin{equation}\label{eq: AppxGammaijk}\end{equation}
for each $k\in\{2, \dots ,d\}$.  

Taking derivatives of the Christoffel symbols in (\ref{eq: AppxGammaij0}), (\ref{eq: AppxGammaij1}) and (\ref{eq: AppxGammaijk}) with respect to $x_0=t$ (denoted by dot), we get

\begin{equation}\left[\Gamma_{ij,0}^0\right]=\left(\begin{array}{cccccc}
\frac{d}{2}\displaystyle\sum_{l=2}^d\dot{H}_l&&&&&\\
&\hspace{-.4in}\frac{1}{2(a_2\cdots a_d)^{d-1}}\left(\displaystyle\sum_{l=2}^d \dot{H}_l -(d-1) \left(  \displaystyle\sum_{l=2}^d H_l\right)^2\right)  &&&&\\
&& C_2&&&\\
&&& &  \ddots&\\
&&&&&C_d\\
\end{array}\right)\label{eq: AppxGammaij0,0}
\end{equation}
for $C_i\stackrel{def.}{=}\dot{A}_i=\frac{(d-1)a_i^{d-1}e^{2\beta x_1}}{2(a_2\cdots a_d)^d}\left((d-1)H_i^2+\dot{H}_i-dH_i\displaystyle\sum_{l=2}^d H_l \right), 2\leq i\leq d$,

\begin{equation}\left[\Gamma_{ij,0}^1\right]=\left(
\begin{array}{cccccc}
0&\frac{1}{2}\displaystyle\sum_{l=2}^d \dot{H}_l&&&&\\
\frac{1}{2}\displaystyle\sum_{l=2}^d \dot{H}_l&0&&&&\\
&& D_2&&&\\
&&&&\ddots&\\
&&&&&D_d\\
\end{array}
\right)\label{eq: AppxGammaij1,0}
\end{equation}
for $D_i\stackrel{def.}{=}\dot{B}_i=\frac{-\beta a_i^{d-1}e^{2\beta x_1}}{(a_2\cdots a_d)}\left(  (d-1)H_i-\displaystyle\sum_{l=2}^dH_l \right), 2\leq i\leq d$ and for $2\leq k\leq d$,

\begin{equation}\left[\Gamma_{ij,0}^k\right]=\left(\begin{array}{cccccc}
0                              &0                                                &\cdots &\frac{(d-1)}{2} \dot{H}_k \mbox{\small \ in $k^{th}$ column }            &\cdots&0\\
0                              &0                                                &\cdots&0                  &\cdots&0                             \\
\vdots                      &\vdots                                        &&&&\\
\frac{(d-1)}{2} \dot{H}_k  \mbox{\small \ in $k^{th}$ row}                      &0&    &                                                                                          &            &             \\
       \vdots            &\vdots                                          &            &&&\\
0                & 0                                              &                                   &                                                          &             &        \\
\end{array}\right).\label{eq: AppxGammaijk,0}\end{equation}

Taking derivatives of the Christoffel symbols in (\ref{eq: AppxGammaij0}), (\ref{eq: AppxGammaij1}) and (\ref{eq: AppxGammaijk}) with respect to $x_1$, 

\begin{equation}\left[\Gamma_{ij,1}^0\right]=\left(\begin{array}{cccccc}
0&&&&&\\
&0&&&&\\
&& E_2&&&\\
&&&\ddots &&\\
&&&&E_d\\
\end{array}\right)\label{eq: AppxGammaij0,1}\end{equation}
for $E_i\stackrel{def.}{=}\frac{\partial}{\partial x_1}A_i=\frac{\beta(d-1)a_i^{d-1}H_ie^{2\beta x_1}}{(a_2\cdots a_d)^d}, 2\leq i\leq d$,

\begin{equation}\left[\Gamma_{ij,1}^1\right]=\left(
\begin{array}{cccccc}
0&&&&&\\
&0&&&&\\
&&F_2&&&\\
&&&\ddots &&\\
&&&&F_d\\
\end{array}
\right)\label{eq: AppxGammaij1,1}\end{equation}
for $F_i\stackrel{def.}{=}\frac{\partial}{\partial x_1}B_i=-2\beta^2\frac{ a_i^{d-1}e^{2\beta x_1}}{(a_2\cdots a_d)}, 2\leq i\leq d$, and also 
\begin{equation}\left[\Gamma_{ij,1}^k\right]=0.\label{eq: AppxGammaijk,1}\end{equation}
for each $k\in\{2, \dots ,d\}$.

To form the first sum in $R_{ij}^{(1)}$ we note that the metric, and therefore the Christoffel symbols, depend only on $x_0, x_1$ and we use the matrices (\ref{eq: AppxGammaij0,0}) and (\ref{eq: AppxGammaij1,1}) to obtain
$\left[\displaystyle\sum_{k=0}^d\Gamma_{ij,k}^k\right]=\left[\displaystyle\sum_{k=0}^d\Gamma_{ij,0}^0+\displaystyle\sum_{k=0}^d\Gamma_{ij,1}^1\right]=$

\begin{equation}\left(\begin{array}{cccccc}
\frac{d}{2}\displaystyle\sum_{l=2}^d\dot{H}_l&&&&&\\
&\hspace{-.4in}\frac{1}{2(a_2\cdots a_d)^{d-1}}\left(\displaystyle\sum_{l=2}^d \dot{H}_l -(d-1) \left(  \displaystyle\sum_{l=2}^d H_l\right)^2\right)  &&&&\\
&&& J_2&&\\
&&& &\ddots&\\
&&&&&J_d\\
\end{array}\right)\label{eq: AppxGammaij0,0+Gammaij1,1}
\end{equation}
for $J_i\stackrel{def.}{=}C_i+F_i=\frac{a_i^{d-1}e^{2\beta x_1}}{(a_2\cdots a_d)^d}\left(\frac{(d-1)^2}{2}H_i^2+\frac{(d-1)}{2}\dot{H}_i-\frac{d(d-1)}{2}H_i\displaystyle\sum_{l=2}^d H_l  - 2\beta^2(a_2\cdots a_d)^{d-1} \right) $ for $2\leq i\leq d$.
To form the second sum in $R_{ij}^{(1)}$, again since the Christoffel symbols depend only on the variables $x_0,x_1$, all nonzero entries of the resulting matrix are contained in the rows indexed by $i=0,1$.  Therefore
\begin{equation}\left[\displaystyle\sum_{k=0}^d\Gamma_{kj,i}^k\right]=
\left(
\begin{array}{c}
 \overrightarrow{ \displaystyle\sum_{k=0}^d\Gamma_{kj,0}^k }\\
\vspace{-.2in} \\
\overrightarrow{ \displaystyle\sum_{k=0}^d\Gamma_{kj,1}^k} \\
\overrightarrow{0}\\
\vdots\\
\overrightarrow{0}
\end{array}
\right).\label{eq: AppxGammakjk,iRowVectors}
\end{equation}
For a fixed column $j$,  $ \displaystyle\sum_{k=0}^d\Gamma_{kj,0}^k=\Gamma_{0j,0}^0+\cdots+\Gamma_{dj,0}^d$ is the sum of the $i=0$ entry of the $j^{th}$ column of (\ref{eq: AppxGammaij0,0}), the $i=1$ entry of the $j^{th}$ column of (\ref{eq: AppxGammaij1,0}), the $i=2$ entry of the $j^{th}$ column of (\ref{eq: AppxGammaijk,0}) with $k=2$, and so on until finally the $i=d$ entry of the $j^{th}$ column of (\ref{eq: AppxGammaijk,0}) with $k=d$.  A similar methodology of summing entries of (\ref{eq: AppxGammaij0,1}), (\ref{eq: AppxGammaij1,1}) and (\ref{eq: AppxGammaijk,1}) will give the column entries of the row vector $\overrightarrow{ \displaystyle\sum_{k=0}^d\Gamma_{kj,1}^k}$.   Therefore (\ref{eq: AppxGammakjk,iRowVectors}) becomes

\begin{equation}\left[\displaystyle\sum_{k=0}^d\Gamma_{kj,i}^k\right]=
\left(
\begin{array}{cccc}
d\displaystyle\sum_{l=2}^d\dot{H}_l    &           &           &   \\ 
                                                                                           &   0     &            & \\
                                                                                           &           &\ddots &\\
                                                                                           &            &           & 0
 \end{array}
\right).\label{eq: AppxGammakjk,i}\end{equation}

\noindent Subtracting (\ref{eq: AppxGammaij0,0+Gammaij1,1}) from (\ref{eq: AppxGammakjk,i}), we obtain $R^{(1)}\stackrel{def.}{=}\left[R_{ij}^{(1)}\right]=$
\begin{equation} \left(\begin{array}{cccccc}
\frac{d}{2}\displaystyle\sum_{l=2}^d\dot{H}_l&&&&&\\
&\hspace{-.4in}\frac{-1}{2(a_2\cdots a_d)^{d-1}}\left(\displaystyle\sum_{l=2}^d \dot{H}_l - (d-1) \left(  \displaystyle\sum_{l=2}^d H_l\right)^2\right)  &&&&\\
&&& -J_2&&\\
&&& &\ddots&\\
&&&&&-J_d\\
\end{array}\right).\label{eq: AppxRij1}
\end{equation}

Next we define matrices 
\begin{equation}\upgamma^{(j)}=\left[\upgamma^{(j)}_{mk}\right]\stackrel{def.}{=}\Gamma^m_{jk},\label{eq: AppxDefngamma-Gamma}\end{equation}
indexed by $j\in\{0, 1, \dots, d\}$, which will help to compute the double sums  $R_{ij}^{(2)}$ and $R_{ij}^{(3)}$ in (\ref{eq: AppxRicci2-Gamma}) and (\ref{eq: AppxRicci3-Gamma}). 
For example, the matrix $\upgamma^{(0)}$ has rows that are equal to the $i=0$ rows of the matrices (\ref{eq: AppxGammaij0}), (\ref{eq: AppxGammaij1}) and (\ref{eq: AppxGammaijk}) of Christoffel symbols, the matrix $\upgamma^{(1)}$ has rows that are equal to the $i=1$ rows of the matrices (\ref{eq: AppxGammaij0}), (\ref{eq: AppxGammaij1}) and (\ref{eq: AppxGammaijk}), etc.  Therefore we obtain
\begin{equation}
\upgamma^{(0)}=
\left(
\begin{array}{ccccc}
\frac{d}{2}\displaystyle\sum_{l=2}^d H_l &                                                                           &                                &              &\\
                                                                      &   \frac{1}{2}\displaystyle\sum_{l=2}^d H_l   &                                &             &\\
                                                                      &                                                                           & \frac{(d-1)}{2}H_2 &              &\\
                                                                      &                                                                           &                               &\ddots   &\\
                                                                      &                                                                           &                               &              & \frac{(d-1)}{2}H_d\\
\end{array}
\right),
\label{eq: Appxlambdamk(0)}
\end{equation}

\begin{equation}
\upgamma^{(1)}=
\left(
\begin{array}{ccccc}
0                                                                      &   \frac{1}{2(a_2\cdots a_d)^{d-1}}\displaystyle\sum_{l=2}^d H_l  &                                &              &\\
 \frac{1}{2}\displaystyle\sum_{l=2}^d H_l &0                                                                                                          &                                &             &\\
                                                                      &                                                                                                           & \beta &              &\\
                                                                      &                                                                                                           &                               &\ddots   &\\
                                                                      &                                                                                                            &                              &              & \beta\\
\end{array}
\right),
\label{eq: Appxlambdamk(1)}
\end{equation}
and 
\begin{equation}
\upgamma^{(j)}=
\left(
\begin{array}{cccccc}
0                                                                                           &  0                                                                 & \hspace{-.5in} \cdots  &  A_j\mbox{\small \ in the $j^{th}$ column}  &\cdots&0 \\
0                                                                                           &  0                                                                 &  \hspace{-.5in}\cdots  &  B_j\mbox{\small \ in the $j^{th}$ column}  &\cdots&0\\
\vdots                                                                                   & \vdots                                                          &               &                                                                            &            &\\
\frac{(d-1)}{2}H_j \mbox{\small \ in the $j^{th}$ row}  &\beta\mbox{\small \ in the $j^{th}$ row}  &               &                                                                            &            &\\
\vdots                                                                                   & \vdots                                                          &                &                                                                           &            &\\
0                                                                                           &0                                                                    &                &                                                                           &           &
\end{array}
\right)\notag
\end{equation}
\begin{equation}\label{eq: Appxlambdamk(i)}
\end{equation}
for each $2\leq j\leq d$.

Since $\Gamma_{ij}^k=\Gamma_{ji}^k$, the double sum $R_{ij}^{(2)}$ in (\ref{eq: AppxRicci2-Gamma}) is 
\begin{equation}
R_{ij}^{(2)}= \displaystyle\sum_{m=0}^d\displaystyle\sum_{n=0}^d \Gamma^n_{im}\Gamma^m_{nj}= \displaystyle\sum_{m=0}^d\displaystyle\sum_{n=0}^d \Gamma^n_{im}\Gamma^m_{jn}= \displaystyle\sum_{m=0}^d\displaystyle\sum_{n=0}^d \gamma^{(i)}_{nm}\gamma^{(j)}_{mn}=Tr \left(\upgamma^{(i)}\upgamma^{(j)}\right).
  \end{equation}
\newpage

That is,  $R^{(2)}\stackrel{def.}{=}\left[R_{ij}^{(2)}\right]=$

\begin{equation}
 \left(
 \begin{array}{ccccc}
 \frac{(d^2+1)}{4}\left(\displaystyle\sum_{l=2}^d H_l\right)^2+\frac{(d-1)^2}{4}\displaystyle\sum_{l=2}^dH_l^2 
           & \frac{\beta(d-1)}{2}\displaystyle\sum_{l=2}^d H_l  & & &\\
 \frac{\beta(d-1)}{2}\displaystyle\sum_{l=2}^d H_l  
           & \hspace{-.5in} \frac{1}{2(a_2\cdots a_d)^{d-1}}\left(\displaystyle\sum_{l=2}^d H_l\right)^2+\beta^2(d-1)  & & &\\
  & &K_2 &           & \\
  & &   &\ddots & \\
  & &   &            &K_d
 \end{array}
 \right)\notag
\end{equation}
\begin{equation}\label{eq: AppxRij2}\end{equation}
for $K_i=(d-1)H_iA_i+2\beta B_i=\frac{a_i^{d-1}e^{2\beta x_1}}{(a_2\cdots a_d)^d}\left(\frac{(d-1)^2}{2}H_i^2-2\beta^2(a_2\cdots a_d)^{d-1}\right), 2\leq i\leq d$. 

Again since $\Gamma^k_{ij}=\Gamma^k_{ji}$, the double sum $R_{ij}^{(3)}$ in (\ref{eq: AppxRicci3-Gamma}) is
\begin{equation}R_{ij}^{(3)}=\displaystyle\sum_{m=0}^d\displaystyle\sum_{n=0}^d\Gamma_{ij}^m\Gamma_{nm}^n
=   \displaystyle\sum_{m=0}^d\displaystyle\sum_{n=0}^d\Gamma_{ji}^m\Gamma_{mn}^n
=   \displaystyle\sum_{m=0}^d\displaystyle\sum_{n=0}^d   \upgamma^{(j)}_{mi}  \upgamma^{(m)}_{nn}
=  \displaystyle\sum_{m=0}^d  \upgamma^{(j)}_{mi} \  Tr \ \upgamma^{(m)}  .
\end{equation}
That is, the $j^{th}$ column of the matrix $R^{(3)}\stackrel{def.}{=}
\left[R_{ij}^{(3)}\right]$ is equal to $\left(\upgamma^{(j)}\right)^T\cdot\upgamma$ or equivalently $\upgamma^{T}\cdot\upgamma^{(j)}$
where $\upgamma$ is the column vector
\begin{equation}\upgamma\stackrel{def.}{=}\left(
\begin{array}{c}
Tr \ \upgamma^{(0)}\\
Tr \ \upgamma^{(1)}\\
Tr \ \upgamma^{(2)}\\
\vdots\\
Tr \ \upgamma^{(d)}
\end{array}
\right)
=
\left(
\begin{array}{c}
d\displaystyle\sum_{l=2}^d H_l \\
\beta(d-1)\\
0\\
\vdots\\
0
\end{array}
\right)
\end{equation}
and ${}^T$ denotes transposition.  Therefore $R^{(3)}=$
\begin{equation}
\left(
\begin{array}{ccccc}
\frac{d^2}{2}\left(\displaystyle\sum_{l=2}^d H_l\right)^2  & \frac{\beta(d-1)}{2}\displaystyle\sum_{l=2}^d H_l                                                     &        &              &        \\ 
\frac{\beta(d-1)}{2}\displaystyle\sum_{l=2}^d H_l                   & \frac{d}{2(a_2\cdots a_d)^{d-1}}\left(\displaystyle\sum_{l=2}^d H_l\right)^2      &        &              &        \\
                                                                                                         &                                                                                                                                            &L_2 &              &        \\
                                                                                                         &                                                                                                                                            &         &\ddots  &        \\
                                                                                                         &                                                                                                                                            &          &            & L_d\\
\end{array}
\right)\label{eq: AppxRij3}
\end{equation}
for
\begin{eqnarray}L_i&=&dA_i\displaystyle\sum_{l=2}^dH_l+\beta(d-1) B_i\notag\\
&=&\frac{a_i^{d-1}e^{2\beta x_1}}{(a_2\cdots a_d)^d}\left(\frac{d(d-1)}{2}H_i\displaystyle\sum_{l=2}^d H_l - \beta^2(d-1)(a_2\cdots a_d)^{d-1}\right).\notag\\
\end{eqnarray}
By (\ref{eq: AppxRij1}), (\ref{eq: AppxRij2}) and (\ref{eq: AppxRij3}), the coefficients of the Ricci tensor (\ref{eq: AppxRicci-Gamma})  are
\begin{equation}\left[R_{ij}\right]=R^{(1)}+R^{(2)}-R^{(3)}=\notag\end{equation}
\begin{equation}
 \left(
 \begin{array}{ccccc}
 M&&&&\\
&N & & &\\
  & &P_2  &  & \\
   &&         &\ddots & \\
   &&         &            &P_d
 \end{array}
 \right)
\end{equation}
for
\begin{equation}M\stackrel{def.}{=} \frac{d}{2}\displaystyle\sum_{l=2}^d\dot{H}_l+\frac{(1-d^2)}{2}\displaystyle\sum_{2\leq l<k\leq d}H_lH_k-\frac{(d-1)}{2}\displaystyle\sum_{l=2}^d H_l^2,\end{equation}
\begin{equation} 
N\stackrel{def.}{=}\frac{-1}{2(a_2\cdots a_d)^{d-1}}\displaystyle\sum_{l=2}^d\dot{H}_l +\beta^2(d-1) \end{equation} 
and 
\begin{equation}
P_i\stackrel{def.}{=}-J_i+K_i-L_i=\frac{a_i^{d-1}e^{2\beta x_1}}{(a_2\cdots a_d)^d}\left( -\frac{(d-1)}{2}\dot{H}_i+\beta^2(d-1)(a_2\cdots a_d)^{d-1} \right),
\end{equation}
$2\leq i\leq d$, and where we have used that $\left(\displaystyle\sum_{l=2}^d H_l\right)^2=\displaystyle\sum_{l=2}^d H_l^2+2\displaystyle\sum_{2\leq l<k\leq d} H_lH_k$.

The scalar curvature is defined as $R\stackrel{def.}{=}\displaystyle\sum_{k=0}^d R_k^k\stackrel{def.}{=}\displaystyle\sum _{k=0}^d\displaystyle\sum_{l=0}^d g^{kl}R_{lk}$ therefore for our example,
\begin{eqnarray}
R&=&\frac{-1}{(a_2\cdots a_d)^d}M+\frac{1}{(a_2\cdots a_d)}N+\displaystyle\sum_{i=2}^d \frac{P_i}{a_i^{d-1}e^{2\beta x_1}}\notag\\
&=&\frac{1}{(a_2\cdots a_d)^d}\left(
\frac{(d-1)}{2}\displaystyle\sum_{l=2}^d H_l^2-d\displaystyle\sum_{l=2}^d\dot{H}_l
+\frac{(d^2-1)}{2}\displaystyle\sum_{2\leq l<k\leq d}H_lH_k\right.\notag\\
&&\qquad\qquad\qquad\qquad\qquad\qquad\qquad\qquad\qquad\qquad \left.
+\beta^2d(d-1)(a_2\cdots a_d)^{d-1}
 \right)\notag\\
\end{eqnarray}
Finally, the nonzero coefficients of the Einstein tensor $G_{ij}=R_{ij}-\frac{1}{2}Rg_{ij}$ are
\begin{equation} G_{00}=-\frac{(d-1)}{4}\left(
\displaystyle\sum_{l=2}^d H_l^2 +(d+1)\displaystyle\sum_{l<k}H_lH_k-2d\beta^2(a_2\cdots a_d)^{d-1}\right)
\end{equation}
$G_{11}=\frac{-(d-1)}{4(a_2\cdots a_d)^{d-1}}\cdot$
\begin{equation}
\left(\displaystyle\sum_{l=2}^d (H_l^2-2\dot{H}_l) +(d+1)\displaystyle\sum_{ l<k}H_lH_k +2(d-2)\beta^2(a_2\cdots a_d)^{d-1}\right)\notag
\end{equation}
and $G_{ii}=\frac{-(d-1)a_i^{d-1}e^{2\beta x_1}}{4(a_2\cdots a_d)^d}\cdot$
\begin{equation}
\hspace{-.2in}\left(2\dot{H}_i+\displaystyle\sum_{l=2}^d \left(H_l^2-\frac{2d}{(d-1)} \dot{H}_l\right)  +(d+1) \displaystyle\sum_{l<k}H_lH_k +2(d-2)\beta^2(a_2\cdots a_d)^{d-1} \right)\notag\end{equation}
for $2\leq i,l, k\leq d$.

\appendix 
\setcounter{chapter}{2}
\chapter{The non-positivity of $\sum_{l<k}c_lc_k$}

For the Bianchi I and Bianchi V models considered in Chapters \ref{ch: BI} and \ref{ch: BV}, we make use of the quantity 
\begin{equation}y\stackrel{def.}{=}\hspace{-.3cm}\displaystyle\sum_{1\leq l<k\leq d}c_lc_k\label{eq: y-clck}\end{equation}
for arbitrary constants $c_1, \dots, c_d\in\mathds{R}$ and $d\in\mathds{N}$ such that 
\begin{equation}c_1+\cdots +c_d=0.\label{eq: sumcl=0}\end{equation}
 We will show that $y$ is non-positive for all values of $c_1, \dots, c_d, d$.  

By (\ref{eq: sumcl=0}), we obtain
\begin{eqnarray}y&=&\hspace{-.4cm}\displaystyle\sum_{1\leq l < k \leq d-1}\hspace{-.4cm}c_lc_k+c_d\displaystyle\sum_{l=1}^{d-1}c_l\notag\\
&=&\hspace{-.4cm}\displaystyle\sum_{1\leq l < k \leq d-1}\hspace{-.4cm}c_lc_k-\left( \displaystyle\sum_{l=1}^{d-1}c_l\right)^2\notag\\
&=&\hspace{-.4cm}\displaystyle\sum_{1\leq l < k \leq d-1}\hspace{-.4cm}c_lc_k-\displaystyle\sum_{l=1}^{d-1}c_l^2-2\hspace{-.4cm}\displaystyle\sum_{1\leq l<k\leq d-1}\hspace{-.4cm}c_lc_k\notag\\
&=&-\displaystyle\sum_{l=1}^{d-1}c_l^2-\hspace{-.4cm}\displaystyle\sum_{1\leq l<k\leq d-1}\hspace{-.4cm}c_lc_k
\end{eqnarray}
so that by the following Lemma, $y\leq 0$.

\begin{appxlem}\label{lem: nonnegativesum}
The sum
\begin{equation}z_{M,N}\stackrel{def.}{=}\displaystyle\sum_{l=1}^M c_l^2+\frac{2}{N}\hspace{-.1cm}\displaystyle\sum_{1\leq l< k\leq M}\hspace{-.4cm}c_lc_k\end{equation}
is non-negative for all $M, N\in\mathds{N}$ and $c_1,\dots, c_M\in\mathds{R}$.
\end{appxlem}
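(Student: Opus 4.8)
The plan is to reduce the claim to a single algebraic identity rather than attempt an induction. The key tool is the elementary relation
\begin{equation}
\left(\displaystyle\sum_{l=1}^M c_l\right)^2=\displaystyle\sum_{l=1}^M c_l^2+2\hspace{-.2cm}\displaystyle\sum_{1\leq l<k\leq M}\hspace{-.3cm}c_lc_k,
\end{equation}
which lets me solve for the cross term as $\sum_{l<k}c_lc_k=\frac{1}{2}\left[(\sum_l c_l)^2-\sum_l c_l^2\right]$. First I would substitute this expression into the definition of $z_{M,N}$, so that the double sum is eliminated entirely in favor of $\sum_l c_l^2$ and $(\sum_l c_l)^2$.

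Carrying out the substitution and collecting the two resulting terms, I expect to obtain
\begin{equation}
z_{M,N}=\displaystyle\sum_{l=1}^M c_l^2+\frac{1}{N}\left[\left(\displaystyle\sum_{l=1}^M c_l\right)^2-\displaystyle\sum_{l=1}^M c_l^2\right]=\frac{N-1}{N}\displaystyle\sum_{l=1}^M c_l^2+\frac{1}{N}\left(\displaystyle\sum_{l=1}^M c_l\right)^2.
\end{equation}
This rewrites $z_{M,N}$ as a nonnegative linear combination of the two manifestly nonnegative quantities $\sum_l c_l^2$ and $(\sum_l c_l)^2$. The conclusion then follows: since $N\in\mathds{N}$ gives $N\geq 1$, both coefficients $\frac{N-1}{N}\geq 0$ and $\frac{1}{N}>0$, while each of $\sum_l c_l^2$ (a sum of squares) and $(\sum_l c_l)^2$ (a square) is nonnegative, so $z_{M,N}\geq 0$ for all $M,N\in\mathds{N}$ and all real $c_1,\dots,c_M$.

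There is no serious obstacle here; the only point requiring a moment's care is the observation that $N\geq 1$, which is exactly what guarantees the coefficient $(N-1)/N$ is nonnegative (for $N=1$ it vanishes, and the bound still holds, recovering $z_{M,1}=(\sum_l c_l)^2\geq 0$). I would remark that this lemma is precisely what is needed in the main text, where it is applied to conclude $y=-\sum_{l=1}^{d-1}c_l^2-\sum_{1\leq l<k\leq d-1}c_lc_k=-z_{d-1,\,2}\leq 0$, establishing the non-positivity of $\sum_{l<k}c_lc_k$ under the constraint $\sum_l c_l=0$.
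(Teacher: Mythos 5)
Your proof is correct, and it is genuinely simpler than the one in the paper. You observe that $\sum_{1\leq l<k\leq M}c_lc_k=\frac{1}{2}\left[\left(\sum_{l=1}^M c_l\right)^2-\sum_{l=1}^M c_l^2\right]$ and substitute once to get the closed form
\begin{equation}
z_{M,N}=\frac{N-1}{N}\displaystyle\sum_{l=1}^M c_l^2+\frac{1}{N}\left(\displaystyle\sum_{l=1}^M c_l\right)^2,
\end{equation}
which is a nonnegative combination of squares whenever $N\geq 1$; this settles the lemma in one line with no induction. The paper instead proceeds by induction on $M$: it treats the base cases $M=1,2$ by completing the square, and for the inductive step it peels off $c_M$ via the expansion of $\left(\frac{c_1}{N}+\cdots+\frac{c_{M-1}}{N}+c_M\right)^2$, which expresses $z_{M,N}$ as that perfect square plus $\frac{N^2-1}{N^2}$ times a quantity of the form $z_{M-1,N+1}$, to which the inductive hypothesis applies. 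Both arguments are valid, but yours is shorter, avoids the bookkeeping of the term counts that the paper has to verify, and makes the exact structure of $z_{M,N}$ transparent (in particular it immediately recovers the paper's special case $z_{M,1}=\left(\sum_l c_l\right)^2$ and shows that equality $z_{M,N}=0$ for $N\geq 2$ forces all $c_l=0$). Your closing remark about the application is also accurate: the main text needs $y=-z_{d-1,2}\leq 0$ under the constraint $\sum_{l=1}^d c_l=0$, which is exactly what the lemma delivers.
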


\begin{proof}
We will prove this Lemma by induction on $M$.  We begin with $M=1$, for which $z_{1,N}=c_1^2\geq 0$ for all $c_1\in\mathds{R}$ and for all $N\in\mathds{N}$.  For $M=2$, we complete the square to obtain
\begin{eqnarray}z_{2,N}&=&c_1^2+c_2^2+\frac{2}{N}c_1c_2\notag\\
&=&c_2^2 + \left(\frac{2}{N}c_1\right)c_2+\left(c_1^2\right)\notag\\
&=&\left(c_2+\frac{c_1}{N}\right)^2-\frac{c_1^2}{N^2}+c_1^2\notag\\
&=&\left(c_2+\frac{c_1}{N}\right)^2+\left(\frac{N^2-1}{N^2}\right)c_1^2.
\end{eqnarray}
Therefore $z_{2,N}\geq 0$ for all $N\geq 1$.   By induction on $M$, we assume that 
\begin{equation}z_{M-1,N}=\displaystyle\sum_{l=1}^{M-1}c_l^2+\frac{2}{N}\hspace{-.1cm}\displaystyle\sum_{1\leq l<k\leq M-1}\hspace{-.4cm}c_lc_k\geq 0\mbox{ for all $N$},\label{eq: positivityinductiveassumption}\end{equation}
and we consider $z_{M,N}$.  If $N=1$, then
\begin{eqnarray}z_{M,1}&=&\displaystyle\sum_{l=1}^M c_l^2+2\hspace{-.4cm}\displaystyle\sum_{1\leq l<k\leq M}\hspace{-.4cm}c_lc_k\notag\\
&=&\left(\displaystyle\sum_{l=1}^M c_l\right)^2\geq 0.\end{eqnarray}
If $N=2$,
\begin{equation}z_{M,2}=\displaystyle\sum_{l=1}^M c_l^2+\hspace{-.4cm}\displaystyle\sum_{1\leq l<k\leq M}\hspace{-.4cm}c_lc_k.\label{eq: zM2}
\end{equation}
Now, consider the quantity
\begin{equation}\left(\frac{c_1}{2}+\frac{c_2}{2}+\cdots+\frac{c_{M-1}}{2}+c_M\right)^2 = \displaystyle\sum_{l=1}^{M-1} \frac{c_l^2}{4}+c_M^2+2c_M\displaystyle\sum_{l=1}^{M-1}\left(\frac{c_l}{2}\right)+2\hspace{-.4cm}\displaystyle\sum_{1\leq l<k\leq M-1}\hspace{-.4cm}\left(\frac{c_l}{2}\right)\left(\frac{c_k}{2}\right).\label{eq: positivityconsideredquantity}\end{equation}
One can double check that (\ref{eq: positivityconsideredquantity}) contains the correct number of terms by noting that 
\begin{eqnarray}&&(M-1)+1+2(M-1)+2\frac{(M-2)(M-1)}{2}\notag\\
&&\qquad\qquad=M+2M-2+M^2-3M+2\notag\\
&&\qquad\qquad =M^2 .\label{eq: nonnegativitynumberofterms}\end{eqnarray}
The last term in (\ref{eq: nonnegativitynumberofterms}) is due to the fact that the sum $\hspace{-.1cm}\displaystyle\sum_{1\leq l<k\leq M-1}\hspace{-.1cm}$ contains $(M-2)(M-1)/2$ terms.  One can see this by
\begin{eqnarray}\displaystyle\sum_{1\leq l<k\leq M-1}1&=& \displaystyle\sum_{k=2}^{M-1}\displaystyle\sum_{l=1}^{k-1}1\notag\\
&=& \displaystyle\sum_{k=2}^{M-1}(k-1)\notag\\
&=&\displaystyle\sum_{i=1}^{M-2}i\notag\\
&=&\frac{(M-2)(M-1)}{2}.
\end{eqnarray}
Simplifying (\ref{eq: positivityconsideredquantity}) we obtain
\begin{equation}\left(\frac{c_1}{2}+\frac{c_2}{2}+\cdots+\frac{c_{M-1}}{2}+c_M\right)^2 = \displaystyle\sum_{l=1}^{M-1} \frac{c_l^2}{4}+c_M^2+c_M\displaystyle\sum_{l=1}^{M-1}c_l+\frac{1}{2}\hspace{-.1cm}\displaystyle\sum_{1\leq l<k\leq M-1}\hspace{-.4cm}c_l c_k\label{eq: positivityconsideredquantitysimplified}\end{equation}
so that we may write (\ref{eq: zM2}) as
\begin{eqnarray}z_{M,2}&=&\left(\frac{c_1}{2}+\frac{c_2}{2}+\cdots+\frac{c_{M-1}}{2}+c_M\right)^2+\frac{3}{4}\displaystyle\sum_{l=1}^{M-1}c_l^2+\frac{1}{2}\hspace{-0cm}\displaystyle\sum_{1\leq l<k\leq M-1}\hspace{-.4cm}c_lc_k\notag\\
&=&\left(\frac{c_1}{2}+\frac{c_2}{2}+\cdots+\frac{c_{M-1}}{2}+c_M\right)^2+\frac{3}{4}\left(\displaystyle\sum_{l=1}^{M-1}c_l^2+\frac{2}{3}\hspace{-0cm}\displaystyle\sum_{1\leq l<k\leq M-1}\hspace{-.4cm}c_lc_k\right).\notag\\
\label{eq: zM2reducedtoinduction}\end{eqnarray}
By the inductive assumption on $M$ in (\ref{eq: positivityinductiveassumption}), the second quantity in parenthesis on the right-hand side of (\ref{eq: zM2reducedtoinduction}) is non-negative.  Therefore $z_{M,2}$ is non-negative.  Finally, to rewrite
\begin{equation}z_{M,N}=\displaystyle\sum_{l=1}^Mc_l^2+\frac{2}{N}\hspace{-.1cm}\displaystyle\sum_{1\leq l<k\leq M}c_lc_k\label{eq: positivityzMN}\end{equation}
for general $N$ we consider the quantity 
\begin{equation}\left(\frac{c_1}{N}+\frac{c_2}{N}+\cdots+\frac{c_{M-1}}{N}+c_M\right)^2 = \displaystyle\sum_{l=1}^{M-1} \frac{c_l^2}{N^2}+c_M^2+2c_M\displaystyle\sum_{l=1}^{M-1}\left(\frac{c_l}{N}\right)+2\hspace{-.4cm}\displaystyle\sum_{1\leq l<k\leq M-1}\hspace{-.4cm}\left(\frac{c_l}{N}\right)\left(\frac{c_k}{N}\right).\label{eq: positivityconsideredquantityN}\end{equation}
As in (\ref{eq: nonnegativitynumberofterms}), one can double check that the right-hand side of (\ref{eq: positivityconsideredquantityN}) contains $M^2$ terms.
Simplifying (\ref{eq: positivityconsideredquantityN}) we obtain
\begin{equation}\left(\frac{c_1}{N}+\frac{c_2}{N}+\cdots+\frac{c_{M-1}}{N}+c_M\right)^2 = \frac{1}{N^2}\displaystyle\sum_{l=1}^{M-1} c_l^2+c_M^2+\frac{2}{N}c_M\displaystyle\sum_{l=1}^{M-1}c_l +\frac{2}{N^2}\hspace{-0cm}\displaystyle\sum_{1\leq l<k\leq M-1}\hspace{-.4cm} c_l c_k\label{eq: positivityconsideredquantityNsimplified}\end{equation}
so that we may write (\ref{eq: positivityzMN}) as $z_{M,N} = $
\begin{eqnarray}&&\hspace{-1cm}\left(\frac{c_1}{N}+\frac{c_2}{N}+\cdots+\frac{c_{M-1}}{N}+c_M\right)^2+\frac{(N^2-1)}{N^2}\displaystyle\sum_{l=1}^{M-1}c_l^2+2\left(\frac{1}{N}-\frac{1}{N^2}\right)\hspace{-0cm}\displaystyle\sum_{1\leq l<k\leq M-1}\hspace{-.4cm}c_lc_k\notag\\
&=&\left(\frac{c_1}{N}+\frac{c_2}{N}+\cdots+\frac{c_{M-1}}{N}+c_M\right)^2+\frac{(N^2-1)}{N^2}\left(\displaystyle\sum_{l=1}^{M-1}c_l^2+\frac{2}{(N+1)}\hspace{-0cm}\displaystyle\sum_{1\leq l<k\leq M-1}\hspace{-.4cm}c_lc_k\right).\notag\\
\label{eq: zMNreducedtoinduction}\end{eqnarray}
By the inductive assumption on $M$ in (\ref{eq: positivityinductiveassumption}), the second quantity in parenthesis on the right-hand side of (\ref{eq: zMNreducedtoinduction}) is non-negative.  This proves the Lemma.
\end{proof}

\appendix 
\setcounter{chapter}{3}

\chapter{Exact solutions to EMP equations}  

Below in Table  \ref{tb: exactEMP}, we list some exact solutions of the generalized EMP equation with one non-linear term
\begin{equation}Y''(\tau)+Q(\tau)Y(\tau)=\frac{\lambda_1}{Y(\tau)^{B_1}}\label{eq: N=0EMP}\end{equation}
for constants $\lambda_1, B_1$.  In Table  \ref{tb: exactEMP}, $a_0,b_0,c_0, d_0, Q_0\in\mathds{R}$ are constants.

In special cases when the EMP reduces to a classical EMP equation 
\begin{equation}Y''(\tau)+Q(\tau)Y(\tau)=\frac{\lambda}{Y^3},\label{eq: FRLWSpecCasesClassicalEMP}\end{equation}
for $\lambda\in\mathds{R}$, exact solutions to the classical EMP can be obtained by a superposition principle:  if $Y_1(\tau)$ and $Y_2(\tau)$ are two linearly independent solutions of the homogeneous equation $Y''(\tau)+Q(\tau)Y(\tau)=0$, then their Wronskian $W(Y_1,Y_2)\stackrel{def.}{=}\left|\begin{array}{cc}Y_1&Y_1'\\Y_2&Y_2'\end{array}\right|$ is constant and $Y(\tau)=\left(A Y_1^2(\tau)+B Y_2^2(\tau)+2C Y_1(\tau)Y_2(\tau)\right)^{1/2}$ is a solution to (\ref{eq: FRLWSpecCasesClassicalEMP}) where $A,B,C\in\mathds{R}$ are constants that satisfy $AB-C^2=\lambda/W(Y_1,Y_2)^2$.  Some of the entries in Table \ref{tb: exactEMP} are obtained from this superposition principle.

We also record solution
\begin{equation}Y(\tau)=\left(\frac{4}{\theta^2}\tau^2-1\right)^{1/4}\label{eq: exactEMPtwotermssoln}\end{equation}
to the generalized EMP with two non-linear terms
\begin{equation}Y''(\tau)+Q(\tau)Y(\tau)=\frac{\lambda_1=-1/\theta^2}{Y(\tau)^{B_1=3}}+\frac{\lambda_2=-3/\theta^2}{Y(\tau)^{B_2=7}}\label{eq: exactEMPtwoterms}\end{equation}
for $Q=0$ and constant $\theta>0$.

   \begin{table}[ht]
\centering
\caption{ Exact Solutions of EMP\label{tb: exactEMP}}
\vspace{.2in}
\begin{tabular}{c | r | l | l | l}
&$Y(\tau)$ & $Q(\tau)$ &$ \lambda_1$&$B_1$\\[4pt]
\hline
$1$&$cos(\sqrt{Q_0}\tau)$&$Q_0>0$&$0$&n/a\\[8pt]
$2$&$sin(\sqrt{Q_0}\tau)$&$Q_0>0$&$0$&n/a\\[8pt]
$3$&$\left(a_0 cos^2(\sqrt{Q_0}\tau)+b_0sin^2(\sqrt{Q_0}\tau)\right.\quad$&$$&$$&$$\\
$$&$\quad\quad\left.+2c_0 cos(\sqrt{Q_0}\tau)sin(\sqrt{Q_0}\tau)\right)^{1/2}$&$Q_0>0$&$Q_0(a_0b_0-c_0^2)$&3\\[8pt]
$4$&$a_0\tau^{d_0}$&$\frac{d_0(1-d_0)}{\tau^2}$&$0$&n/a\\[8pt]
$5$&$\left(a_0\tau^{2d_0}+b_0\tau^{2(1-d_0)}+2c_0\tau\right)^{1/2}$&$\frac{d_0(1-d_0)}{\tau^2}$&$(1-2d_0)^2(a_0b_0-c_0^2)$&3\\[8pt]
$6$&$\tau$&$\frac{\lambda_1}{\tau^{B_1+1}}$&$\lambda_1$&$B_1$\\[8pt]
$7$&$\left(a_0^2\tau^2\pm b_0^2\right)^{1/2}$&$\frac{\lambda_1\mp a_0^2b_0^2}{\left(a_0^2\tau^2\pm b_0^2\right)^2}$&$\lambda_1$&$3$\\[6pt]
\hline
\end{tabular}
\end{table}  

In order to map an exact solution $Y(\tau)$ of a generalized or classical EMP equation to an exact solution of Einstein's equations via the correspondences in this thesis, one must first solve for $\tau(t)$ in the differential equation 
\begin{equation}\dot\tau(t)=Y(\tau(t))^{r_0}\label{eq: dottau-Ytau^s0}\end{equation}
 for some constant $r_0\neq 0$ that is dependent on the cosmological model.  For the EMP reformulations of Bianchi I and Bianchi V one must then solve for $\sigma(t)$ in 
 \begin{equation}\dot\sigma(t)=\frac{1}{\dot\tau(t)^{s_0}}\label{eq: dotsigma=1/dottau^s0}\end{equation}
 for some constant $s_0\neq 0$.   Also, in each cosmological model, in order to find the scale factor $\phi(t)$, one must first integrate some constant multiple of the square root of the function $Q(\tau)$ (and then compose the result with $\tau(t)$).   That is, to find $\phi(t)$ one must first compute
 \begin{equation}\varphi(\tau) = \displaystyle\int \sqrt{\alpha_0 Q(\tau) } d\tau\label{eq: varphigeneral}\end{equation}
 for some constant $\alpha_0$.   Therefore for each solution in Table \ref{tb: exactEMP}, we now record some solutions to (\ref{eq: dottau-Ytau^s0}), (\ref{eq: dotsigma=1/dottau^s0}) and (\ref{eq: varphigeneral}).

For $Y(\tau)$ from line 1 in Table \ref{tb: exactEMP}, equation (\ref{eq: dottau-Ytau^s0}) for $r_0=1$ is
\begin{equation}\dot\tau(t)=\cos\left(\sqrt{Q_0}\tau(t)\right)\label{eq: taueqnforY=cosr=1}\end{equation}
which has solution
\begin{equation}\tau(t)=\frac{2}{\sqrt{Q_0}}Arctan\left(tanh\left(\frac{\sqrt{Q_0}}{2}(t-t_0)\right)\right)\label{eq: tauforY=cosr=1}\end{equation}
for $t_0\in\mathds{R}$.  One can check this by noting that
\begin{equation}\dot\tau(t)=Y(\tau(t))^1=\frac{sech^2\left(\frac{\sqrt{Q_0}}{2}(t-t_0)\right)}{1+tanh^2\left(\frac{\sqrt{Q_0}}{2}(t-t_0)\right)}=sech\left(\sqrt{Q_0}(t-t_0)\right).\label{eq: dottauforY=cosr=1}\end{equation}
Also, we record that
\begin{eqnarray}Y'(\tau(t))&=& - \sqrt{Q_0}\sin\left(2Arctan\left(tanh\left(\frac{\sqrt{Q_0}}{2}(t-t_0) \right)\right)\right)\notag\\
&=&\frac{-2\sqrt{Q_0}tanh\left(\frac{\sqrt{Q_0}}{2}(t-t_0) \right)}{1+tanh^2\left(\frac{\sqrt{Q_0}}{2}(t-t_0) \right)}\notag\\
&=&-\sqrt{Q_0}tanh\left(\sqrt{Q_0}(t-t_0) \right)\label{eq: YprimetauforY=cos}
\end{eqnarray}
 since $sin(2Arctan(x))=2cos(Arctan(x))sin(Arctan(x))=\frac{2x}{1+x^2}$.

For $Y(\tau)$ from line 2 in Table \ref{tb: exactEMP}, equation (\ref{eq: dottau-Ytau^s0}) for $r_0=1$ is
\begin{equation}\dot\tau(t)=\sin\left(\sqrt{Q_0}\tau(t)\right)\label{eq: taueqnforY=sinr=1}\end{equation}
which has solution
\begin{equation}\tau(t)=\frac{2}{\sqrt{Q_0}}Arctan\left(e^{\sqrt{Q_0}(t-t_0)}\right)\label{eq: tauforY=sinr=1}\end{equation}
for $t_0\in\mathds{R}$.  One can check this by noting that
\begin{equation}\dot\tau(t)=Y(\tau(t))^1=\frac{2e^{\sqrt{Q_0}(t-t_0)}}{1+e^{2\sqrt{Q_0}(t-t_0)}}=sech\left(\sqrt{Q_0}(t-t_0)\right).\label{eq: dottauforY=sinr=1}\end{equation}
Also, we record that
\begin{eqnarray}
Y'(\tau(t))&=&\sqrt{Q_0}\cos\left( 2Arctan\left(e^{\sqrt{Q_0}(t-t_0)}\right)  \right)\notag\\
&=&\frac{   \sqrt{Q_0}\left(1-  e^{2\sqrt{Q_0}(t-t_0)}     \right)}{\left(1+   e^{2\sqrt{Q_0}(t-t_0)}  \right)}   \notag\\
&=&\frac{   \sqrt{Q_0}\left(e^{-\sqrt{Q_0}(t-t_0)}-  e^{\sqrt{Q_0}(t-t_0)}     \right)}{\left(e^{-\sqrt{Q_0}(t-t_0)}+   e^{\sqrt{Q_0}(t-t_0)}  \right)} \notag\\
&=&- \sqrt{Q_0}tanh\left( \sqrt{Q_0}(t-t_0) \right)\label{eq: YprimetauforY=sin}
\end{eqnarray}
since $cos(2Arctan(x))=cos^2(Arctan(x))-sin^2(Arctan(x))=\frac{1-x^2}{1+x^2}$.

For $Q(\tau)=Q_0>0$ from lines 1-3, (\ref{eq: varphigeneral}) becomes
\begin{equation}\varphi(\tau)=\displaystyle\int \sqrt{\alpha_0 Q_0} d\tau = \sqrt{\alpha_0Q_0} \tau +\beta_0\label{eq: varphiforQ>0}\end{equation}
for integration constant $\beta_0\in\mathds{R}$. 

Composing (\ref{eq: varphiforQ>0}) with $\tau(t)$ in (\ref{eq: tauforY=cosr=1}),
\begin{equation}\phi(t)\stackrel{def.}{=}\varphi(\tau(t))= 
2\sqrt{\alpha_0}Arctan\left(tanh\left(\frac{\sqrt{Q_0}}{2}(t-t_0)\right)
\right) +\beta_0.\label{eq: phiforY=cosr=1}\end{equation}

Composing (\ref{eq: varphiforQ>0}) with $\tau(t)$ in (\ref{eq: tauforY=sinr=1}),
\begin{equation}\phi(t)\stackrel{def.}{=} \varphi(\tau(t)) = 2\sqrt{\alpha_0}Arctan\left(e^{\sqrt{Q_0}(t-t_0)}\right) + \beta_0.\label{eq: phiforY=sinr=1}\end{equation}

For $Y(\tau)$ from line 4 in Table \ref{tb: exactEMP}, equation (\ref{eq: dottau-Ytau^s0}) is
\begin{equation}\dot\tau(t)=\left(a_0\tau(t)^{d_0}\right)^{r_0}\label{eq: taueqnforY=powerr=general}\end{equation}
which has solution
\begin{equation}\tau(t)=\left((1-r_0d_0)a_0^{r_0}(t-t_0)\right)^{\frac{1}{1-r_0d_0}}\label{eq: tauforY=powerr=general}\end{equation}
for $t_0\in\mathds{R}$ and $r_0\neq 1/d_0$.  One can check this by noting that 
\begin{equation}\dot\tau(t)=Y(\tau(t))^{r_0}=a_0^{r_0}\left((1-r_0d_0)a_0^{r_0}(t-t_0)\right)^{\frac{r_0d_0}{1-r_0d_0}}.\label{eq: dottauforY=powerr=general}\end{equation}
Also, we record that
\begin{eqnarray}
Y'(\tau(t))&=&d_0a_0\tau(t)^{d_0-1}\notag\\
&=&d_0a_0\left((1-r_0d_0)a_0^{r_0}(t-t_0)\right)^{\frac{d_0-1}{1-r_0d_0}}.\label{eq: YprimetauforY=powerr=general}
\end{eqnarray}

For $Y(\tau)$ from line 4 in Table \ref{tb: exactEMP}, equation (\ref{eq: dottau-Ytau^s0}) for $r_0=1/d_0$ is
\begin{equation}\dot\tau(t)=\tau(t)\label{eq: taueqnforY=powerr0d0=1}\end{equation}
which has solution
\begin{equation}\tau(t)=a_0e^{t-t_0}\label{eq: tauforY=powerr0d0=1}\end{equation}
for $a_0, t_0\in\mathds{R}$.  One can check this by noting that
\begin{equation}\dot\tau(t)=Y(\tau(t))^{r_0=1/d_0}=a_0e^{t-t_0}.\label{eq: dottauforY=powerr0d0=1}\end{equation}
Also, we record that
\begin{eqnarray}
Y'(\tau(t))&=&d_0\tau(t)^{d_0-1}\notag\\
&=&d_0  a_0^{d_0-1}e^{(d_0-1)(t-t_0)}
.\end{eqnarray}

For  $Y(\tau)$ from line 5 in Table \ref{tb: exactEMP} with $b_0=d_0=0$, equation (\ref{eq: dottau-Ytau^s0}) for $r_0=1$ is
\begin{equation}\dot\tau(t)=(a_0+2c_0\tau(t))^{1/2}\label{eq: taueqnforY=superpowerb=0r=1}\end{equation}
which has solution
\begin{equation}\tau(t)=\frac{1}{2}\left(c_0(t-t_0)^2-\frac{a_0}{c_0}\right)\label{eq: tauforY=superpowerb=0r=1}\end{equation}
for $c_0>0$ and $t_0\in\mathds{R}$.  One can check this by noting that
\begin{equation}\dot\tau(t)=Y(\tau(t))=c_0(t-t_0).\label{eq: dottauforY=superpowerb=0r=1}\end{equation}
Also, we record that
\begin{eqnarray}
Y'(\tau(t))&=&\frac{c_0}{(a_0+2c_0\tau(t))^{1/2}}\notag\\
&=&\frac{1}{t-t_0}\label{eq: YprimetauforY=superpowerb=d=0r=1}
\end{eqnarray}
for $t>t_0$ and $c_0>0$.
Equation (\ref{eq: dotsigma=1/dottau^s0}) for $s_0=1$ then becomes
\begin{equation}\dot\sigma(t)=\frac{1}{c_0(t-t_0)}.\label{eq: sigmaeqnforY=superpowerb=0s=1}\end{equation}
Integrating, we obtain
\begin{equation}\sigma(t)=\frac{1}{c_0}\ln(c_0(t-t_0)).\label{eq: sigmaforY=superpowerb=0s=1}\end{equation}

For $Y(\tau)$ from line 5 in Table \ref{tb: exactEMP} with $b_0>0$ and $d_0=0$, equation (\ref{eq: dottau-Ytau^s0}) for $r_0=1$ is
\begin{equation}\dot\tau(t)=(a_0+b_0\tau(t)^2+2c_0\tau(t))^{1/2}\label{eq: taueqnforY=superpowerr=1}\end{equation}
which has solution
\begin{equation}\tau(t)=\frac{1}{4b_0^{3/2}}\left(b_0e^{\sqrt{b_0}(t-t_0)}-4\lambda e^{-\sqrt{b_0}(t-t_0)}-4\sqrt{b_0}c_0\right)\label{eq: tauforY=superpowerr=1}\end{equation}
for $t_0\in\mathds{R}$ and $\lambda\stackrel{def.}{=}a_0b_0-c_0^2$.  One can check this by noting that 
\begin{equation}\dot\tau(t)=Y(\tau(t))=\frac{1}{4}e^{\sqrt{b_0}(t-t_0)}+\frac{\lambda}{b_0}e^{-\sqrt{b_0}(t-t_0)}.\label{eq: dottauforY=superpowerr=1}\end{equation}
Also, we record that
\begin{eqnarray}
Y'(\tau(t))&=&\frac{b_0\tau(t)+c_0}{Y(\tau(t))}\notag\\
&=&\sqrt{b_0}\frac{\left(b_0e^{\sqrt{b_0}(t-t_0)}-4\lambda e^{-\sqrt{b_0}(t-t_0)}\right)}{\left(b_0e^{\sqrt{b_0}(t-t_0)}+4\lambda e^{-\sqrt{b_0}(t-t_0)}\right)}.\label{eq: YprimetauforY=superpowerd=0b>0r=1}
\end{eqnarray}

For $s_0=1$, (\ref{eq: dotsigma=1/dottau^s0})  shows that
\begin{eqnarray}
\sigma(t)&=&\left(\frac{b_0}{\lambda}\right) \displaystyle\int \frac{e^{\sqrt{b_0}(t-t_0)} }{\left(1 - \frac{b_0}{ -4\lambda}e^{2\sqrt{b_0}(t-t_0)}\right)}dt \notag\\
&=&\left. \left(\frac{\sqrt{b_0}}{\lambda}\right) \displaystyle\int \frac{1 }{\left(1 - \frac{b_0}{ -4\lambda}u^2\right)}du \right|_{u=e^{\sqrt{b_0}(t-t_0)}}\notag\\
&=&\left. \sqrt{ \frac{-4\lambda}{ b_0}}  \left(\frac{\sqrt{b_0}}{\lambda}\right) Arccoth\left( \sqrt{  \frac{b_0}{ -4\lambda}}u\right)  \right|_{u=e^{\sqrt{b_0}(t-t_0)}}\notag\\
&=& \frac{-2}{\sqrt{-\lambda}} Arccoth\left( \sqrt{  \frac{b_0}{ -4\lambda}}e^{\sqrt{b_0}(t-t_0)}\right)
\label{eq: sigmaforY=superpowerr=1s=1}
\end{eqnarray}
for $\lambda<0$.

For $Y(\tau)$ from line 5 in Table \ref{tb: exactEMP} with $a_0=b_0=0$ and $c_0=1/2$, the equation $\dot\tau(t)=\theta Y(\tau(t))$ is
\begin{equation}\dot\tau(t)=\theta\sqrt{ \tau(t) }\label{eq: taueqnforY=sqrttau}\end{equation}
which has solution
\begin{equation}\tau(t)=\frac{\theta^2}{4}(t-t_0)^2\label{eq: tauforY=sqrttau}\end{equation}
for $t>t_0\in\mathds{R}$ and $\theta>0$.  One can check this by calculating
\begin{equation}\dot\tau(t)=\theta Y(\tau(t))=\frac{\theta^2}{2}(t-t_0).\label{eq: dottauforY=sqrttau}\end{equation}
Also, we record that 
\begin{equation}Y'(\tau(t))=\frac{1}{2\sqrt{\tau(t)}}=\frac{1}{\theta (t-t_0)}.\label{eq: YprimetauforY=sqrttau}\end{equation}

For $Q(\tau)$ from lines 4-5 in Table \ref{tb: exactEMP}, (\ref{eq: varphigeneral}) becomes
\begin{equation}\varphi(\tau)=\displaystyle\int \frac{\sqrt{ \alpha_0 d_0(1-d_0)}}{\tau} d\tau = \sqrt{ \alpha_0 d_0(1-d_0)}\ln(\tau) +\beta_0\label{eq: varphiforQ=1/tau^2}\end{equation}
for $\tau>0$, $0\leq d_0\leq 1$ and integration constant $\beta_0\in\mathds{R}$.

Composing (\ref{eq: varphiforQ=1/tau^2}) with $\tau(t)$ in (\ref{eq: tauforY=powerr=general}),
\begin{equation}\phi(t)\stackrel{def.}{=}\varphi(\tau(t)) =\frac{ \sqrt{ \alpha_0 d_0(1-d_0)} }{ (1-r_0d_0) }\ln\left[    \left((1-r_0d_0)(t-t_0)\right) \right] + \beta_0\label{eq: phiforY=powerr=general}\end{equation}
for $r_0<d_0$ and $t>t_0$.

Composing (\ref{eq: varphiforQ=1/tau^2}) with $\tau(t)$ in (\ref{eq: tauforY=sqrttau}),
\begin{equation}\phi(t)\stackrel{def.}{=}\varphi(\tau(t))=2\sqrt{ \alpha_0 d_0(1-d_0)}\ln(t-t_0) + \beta_0
\label{eq: phiforY=sqrttau}\end{equation}
for $t>t_0$.

For $Y(\tau)$ from line 6 in Table \ref{tb: exactEMP}, we solve for $\tau(t)$ in the equation
\begin{equation}\dot\tau(t)=\theta Y(\tau(t))=\theta \tau(t)\label{eq: taueqnforY=tauwithB}\end{equation}
which has solution
\begin{equation}\tau(t)=a_0e^{\theta (t-t_0)}\label{eq: tauforY=tauwithB}\end{equation}
for $a_0,t_0\in\mathds{R}$ and $\theta>0$.  For $Q(\tau)$ from line 6 in Table \ref{tb: exactEMP}, 
(\ref{eq: varphigeneral}) becomes
\begin{equation}\varphi(\tau)=\displaystyle\int \frac{\sqrt{\alpha_0 \lambda_1}}{\tau^{(B_1+1)/2}} d\tau = \frac{2\sqrt{\alpha_0\lambda_1}}{(1-B_1)\tau^{(B_1-1)/2}}+\beta_0\label{eq: varphiforQ=L/tauB} \end{equation}
for $\beta_0\in\mathds{R}$.  Composing (\ref{eq: varphiforQ=L/tauB}) with $\tau(t)$ in (\ref{eq: tauforY=tauwithB}),
\begin{equation}\phi(t)\stackrel{def.}{=}\varphi(\tau(t))= \frac{2\sqrt{\alpha_0\lambda_1}}{(1-B_1)}a_0^{(1-B_1)/2}e^{\theta(1-B_1)(t-t_0)/2}+\beta_0.\label{eq: phiforY=tauwithB}\end{equation}

For $Y(\tau)$ from line 7 in Table \ref{tb: exactEMP} with the plus sign and $a_0,b_0>0$, equation (\ref{eq: dottau-Ytau^s0}) for $r_0=1$ is
\begin{equation}\dot\tau(t)=\left(a_0^2\tau(t)^2+b_0^2\right)^{1/2}\label{eq: taueqnforY=sqrttau^2+b^2}\end{equation}
which has solution
\begin{equation}\tau(t)=\frac{b_0}{a_0}\sinh(a_0 (t-t_0))\label{eq: tauforY=sqrttau^2+b^2}\end{equation}
for $t_0\in\mathds{R}$.  One can check this by computing
\begin{equation}\dot\tau(t)=Y(\tau(t))=b_0 \cosh(a_0 (t-t_0)).\label{eq: dottauforY=sqrttau^2+b^2}\end{equation}
Also, we record that
\begin{equation}Y'(\tau(t))= a_0 \tanh(a_0(t-t_0)).\label{eq: YprimetauforY=sqrttau^2+b^2}\end{equation}

For $Q(\tau)$ from line 7 in Table \ref{tb: exactEMP} with the plus sign, (\ref{eq: varphigeneral}) becomes
\begin{equation}\varphi(\tau)=\displaystyle\sqrt{\alpha_0(\lambda_1- a_0^2b_0^2)}\int\frac{d\tau}{a_0^2\tau^2 + b_0^2}=\frac{\sqrt{\alpha_0(\lambda_1- a_0^2b_0^2)}}{a_0b_0}Arctan\left(\frac{a_0}{b_0}\tau\right)+\beta_0  \label{eq: varphiforQ=const/quadraticplus}\end{equation}
for $\tau > - b_0/a_0$ and assuming $a_0,b_0, (\lambda_1 - a_0^2b_0^2)>0$ and $\beta_0\in\mathds{R}$.  Composing (\ref{eq:  varphiforQ=const/quadraticplus}) with $\tau(t)$ in (\ref{eq: tauforY=sqrttau^2+b^2}),
\begin{equation}\phi(t)\stackrel{def.}{=}\varphi(\tau(t))=\frac{\sqrt{\alpha_0(\lambda_1- a_0^2b_0^2)}}{a_0b_0}Arctan\left(\sinh(a_0 (t-t_0))\right)+\beta_0. \label{eq: phiforY=sqrttau^2+b^2r0=1}\end{equation}

For $Y(\tau)$ from line 7 in Table \ref{tb: exactEMP} with the minus sign and $a_0,b_0>0$, equation (\ref{eq: dottau-Ytau^s0}) for $r_0=1$ is
\begin{equation}\dot\tau(t)=\left(a_0^2\tau(t)^2-b_0^2\right)^{1/2}\label{eq: taueqnforY=sqrttau^2-b^2}\end{equation}
which has solution
\begin{equation}\tau(t)=\frac{b_0}{a_0}\cosh(a_0 (t-t_0))\label{eq: tauforY=sqrttau^2-b^2}\end{equation}
for $t_0\in\mathds{R}$.  One can check this by computing
\begin{equation}\dot\tau(t)=Y(\tau(t))=b_0 \sinh(a_0 (t-t_0)).\label{eq: dottauforY=sqrttau^2-b^2}\end{equation}
Also, we record that
\begin{equation}Y'(\tau(t))= a_0 \coth(a_0(t-t_0))\label{eq: YprimetauforY=sqrttau^2+b^2}\end{equation}
for $t>t_0$.

For $Q(\tau)$ from line 7 in Table \ref{tb: exactEMP} with the minus sign, (\ref{eq: varphigeneral}) becomes
\begin{equation}\varphi(\tau)=\displaystyle\sqrt{\alpha_0(\lambda_1+ a_0^2b_0^2)}\int\frac{d\tau}{a_0^2\tau^2 - b_0^2}=-\frac{\sqrt{\alpha_0(\lambda_1+ a_0^2b_0^2)}}{a_0b_0}Arctanh\left(\frac{a_0}{b_0}\tau\right)+\beta_0 \label{eq: varphiforQ=const/quadraticminus}\end{equation}
for $\tau >  b_0/a_0$ and assuming $a_0,b_0, (\lambda_1 + a_0^2b_0^2)>0$ and $\beta_0\in\mathds{R}$.  Composing (\ref{eq:  varphiforQ=const/quadraticminus}) with $\tau(t)$ in (\ref{eq: tauforY=sqrttau^2-b^2}),
\begin{equation}\phi(t)\stackrel{def.}{=}\varphi(\tau(t))=-\frac{\sqrt{\alpha_0(\lambda_1+ a_0^2b_0^2)}}{a_0b_0}Arctanh\left(\cosh(a_0 (t-t_0))\right)+\beta_0. \label{eq: phiforY=sqrttau^2-b^2r0=1}\end{equation}

For $Y(\tau)$ in (\ref{eq: exactEMPtwotermssoln}) with $\theta=1$, equation (\ref{eq: dottau-Ytau^s0}) for $r_0=4$ is 
\begin{equation}\dot\tau(t)=4\tau(t)^2-1\label{eq: taueqnforysolnwithtwoterms}\end{equation}
which has solution
\begin{equation}\tau(t)=\frac{-1}{2}coth(2(t-t_0))\label{eq: tauforysolnswithtwoterms}\end{equation}
for $t_0\in\mathds{R}$.  One can check this by noting that 
\begin{equation}\dot\tau(t)=Y(\tau(t))^4= csch^2(2(t-t_0)) \label{eq: Ytauforysolntwoterms}\end{equation}
by the identity $coth^2(x)-1=csch^2(x)$.    Also, we record that
\begin{eqnarray}
Y'(\tau(t))&=&\frac{2\tau(t)}{(4\tau(t)^2-1)^{3/4}}\notag\\
&=&\frac{ - coth(2(t-t_0))}{csch^{3/2}(2(t-t_0))}\notag\\
&=& - cosh(2(t-t_0)) \sqrt{sinh(2(t-t_0))}\label{eq: Yprimeforysolntwoterms}
\end{eqnarray}
for $t>t_0$.

For $Y(\tau)$ in (\ref{eq: exactEMPtwotermssoln}), we solve for $\tau(t)$ in the equation 
\begin{equation}\dot\tau(t)=\theta Y(\tau(t))^{2}=\theta \left(4\tau(t)^2-1\right)^{1/2}\label{eq: taueqnforysolnwithtwotermsr=2}\end{equation}
which has solution
\begin{equation}\tau(t)=\frac{\theta}{2}cosh(2(t-t_0))\label{eq: tauforysolnswithtwotermsr=2}\end{equation}
for $t_0\in\mathds{R}$.  One can check this by noting that 
\begin{equation}\dot\tau(t)=\theta Y(\tau(t))^2= \theta sinh(2(t-t_0)) \label{eq: Ytauforysolntwotermsr=2}\end{equation}
by the identity $cosh^2(x)-1=sinh^2(x)$.    Also, we record that
\begin{eqnarray}
Y'(\tau(t))&=&\frac{2\tau(t)}{\theta^2 \left(\frac{4}{\theta^2}\tau(t)^2-1\right)^{3/4}}\notag\\
&=&\frac{ cosh(2(t-t_0))}{\theta sinh^{3/2}(2(t-t_0))}\notag\\
&=& \frac{ coth(2(t-t_0))}{\theta \sqrt{sinh(2(t-t_0))}}\label{eq: Yprimeforysolntwotermsr=2}
\end{eqnarray}
for $t>t_0$.

\appendix 
\setcounter{chapter}{4}

\chapter{Exact solutions to NLS equations}  

In the table below, we list some exact solutions of the NLS
\begin{equation}u''(\sigma)+[E-P(\sigma)]u(\sigma)=\frac{F_1}{u(\sigma)^{C_1}}\label{eq: N=2NLS}\end{equation}
for constants $E, F_1, C_1$.  Note that in Table \ref{tb: exactNLS}, $a_0,b_0,c_0,d_0\in\mathds{R}$ are constants, and any one solution $u(\sigma)$ may solve (\ref{eq: N=2NLS}) for a few distinct sets of $P(\sigma), E, F_1, C_1$.
   \begin{table}[ht]
\centering
\caption{ Exact Solutions of NLS\label{tb: exactNLS}}
\vspace{.2in}
\begin{tabular}{c | r | l | l | l | l}
&$u(\sigma)$ & $P(\sigma)$ &$E$&$ F_1$&$ C_1$\\[4pt]
\hline
1&\raisebox{-5pt}{$a_0\sigma^2+b_0\sigma+c_0$} & \raisebox{-5pt}{$(2a_0+d_0)/(a_0\sigma^2+b_0\sigma+c_0)$} &\raisebox{-5pt}{$0$}&    \raisebox{-5pt}{$-d_0$} &\raisebox{-5pt}{$0$} \\[8pt]
2&$a_0\cos^2(b_0\sigma)$ &$2b_0^2 tan^2(b_0\sigma)$&$2b_0^2$&$0$&n/a\\[8pt]
&&$4b_0^2 tan^2(b_0\sigma)$&$0$& $-2b_0^2a_0$&$0$\\[8pt]
3&$a_0 tanh(b_0\sigma)$&$c_0$&$c_0+2b_0^2$&$2b_0^2/a_0^2$&$-3$\\[8pt]
4&$a_0e^{-\sqrt{-c_0}\sigma}-b_0e^{\sqrt{-c_0}\sigma}$&$0$&$c_0<0$&$0$&n/a\\[8pt]
5&$(a_0/\sigma)e^{c_0\sigma^2/2}$&$c_0^2\sigma^2+2/\sigma^2+b_0$&$c_0+b_0$&$0$&n/a\\[8pt]
6&$ - a_0\cosh^2(b_0\sigma)$&$2b_0^2\tanh(b_0\sigma)^2+c_0$&$c_0-2b_0^2$&$0$&n/a\\[8pt]
7&$a_0 / \sigma^{b_0}$&  $  \frac{b_0(b_0+1)}{\sigma^2}+c_0  $  &$c_0$&$0$&n/a\\[6pt]
\hline
\end{tabular}
\end{table}

In order to map an exact solution $u(\sigma)$ of a non-linear (or linear) Schr\"odinger-type equation to an exact solution of Einstein's equations via the correspondences in this thesis, one may first need to solve for $\sigma(t)$ in the differential equation
\begin{equation}\dot\sigma(t)=\frac{1}{\theta}u(\sigma(t))\label{eq: dotsigma-usigma^r0}\end{equation}
for $\theta>0$.
Also the scalar field $\phi(t)$ is obtained by integrating a constant multiple of the square root of  $P(\sigma)$ (and then composing the result with $\sigma(t)$ or $\sigma(f_\sigma(\tau(t)))$ ).  That is, to find $\phi(t)$ one must first compute
\begin{equation}\psi (\sigma)=\displaystyle\int \sqrt{\alpha_0 P(\sigma)} d\sigma\label{eq: psigeneral}\end{equation}
for some constant $\alpha_0$.  Therefore for each solution in Table \ref{tb: exactNLS}, we now record some solutions to (\ref{eq: dotsigma-usigma^r0}) 
and (\ref{eq: psigeneral}).

For $u(\sigma)$ from line 1 in Table \ref{tb: exactNLS} with $a_0=0$, equation (\ref{eq: dotsigma-usigma^r0}) with $\theta=1$ is
\begin{equation}\dot\sigma(t)=b_0\sigma(t)+c_0\label{eq: sigmaeqnforu=linearr=1}\end{equation}
which has solution
\begin{equation}\sigma(t)=e^{b_0(t-t_0)}-\frac{c_0}{b_0}\label{eq: sigmaforu=linear}\end{equation}
for $b_0\neq 0$ and $t_0\in\mathds{R}$.  One can check this by computing that
\begin{equation}\dot\sigma(t)=u(\sigma(t))=b_0e^{b_0(t-t_0)}.\label{eq: dotsigmaanduforu=linearr=1}\end{equation}
Also, we note that
\begin{equation}u'(\sigma(t))=b_0.\label{eq: uprimesigmaforu=linearr=1}\end{equation}

For $u(\sigma)$ from line 1 in Table \ref{tb: exactNLS}, equation (\ref{eq: dotsigma-usigma^r0}) with $\theta=1$  is
\begin{equation}\dot\sigma(t)=a_0\sigma(t)^2+b_0\sigma(t)+c_0\label{eq: sigmaeqnforu=quadraticr=1}\end{equation}
which has solution
\begin{equation}\sigma(t)=\frac{1}{2a_0}\left(\sqrt{-\Delta} \ tan\left[\frac{\sqrt{-\Delta}}{2}(t-t_0)\right]-b_0\right)
\label{eq: sigmaforu=quadraticr=1}\end{equation}
for $t_0\in\mathds{R}$, $a_0>0$ and discriminant $\Delta=b_0^2-4a_0c_0<0$.  One can check this by computing that
\begin{equation}\dot\sigma(t)=u(\sigma(t))=\frac{-\Delta}{4a_0}sec^2\left[\frac{\sqrt{-\Delta}}{2}(t-t_0)\right].\label{eq: dotsigmaandusigmaforu=quadraticr=1}\end{equation}
Also, we record that
\begin{equation}u'(\sigma(t))=\sqrt{-\Delta} \ tan\left[\frac{\sqrt{-\Delta}}{2}(t-t_0)\right].\label{eq: uprimesigmaforu=quadraticr=1}\end{equation}

For $P(\sigma)$ from line 1 in Table \ref{tb: exactNLS} with $a_0>0$, (\ref{eq: psigeneral}) becomes
\begin{eqnarray}\psi(\sigma)&=& \displaystyle\int \sqrt{ \frac{\alpha_0(2a_0+d_0)}{a_0\sigma^2 +b_0\sigma+c_0}} d\sigma  \notag\\
&=& \sqrt{\frac{\alpha_0}{a_0}(2a_0+d_0)} \ \ln\left[2\left(\sigma + \frac{b_0}{2a_0} + \sqrt{\sigma^2+\frac{b_0}{a_0}\sigma+\frac{c_0}{a_0}}\right)\right] + \beta_0\label{eq: psiforP=1/quadratic}\end{eqnarray}
for $\Delta = b_0^2-4a_0c_0 < 0$  and integration constant $\beta_0\in\mathds{R}$.

\noindent Composing (\ref{eq: psiforP=1/quadratic}) with (\ref{eq: sigmaforu=quadraticr=1}),
\begin{eqnarray}\phi(t)&\stackrel{def.}{=}&\psi(\sigma(t))=
 \sqrt{\frac{\alpha_0}{a_0}(2a_0+d_0)} \ \ln\left[\frac{1}{a_0}\left(\sqrt{-\Delta} \ tan\left[\frac{\sqrt{-\Delta}}{2}(t-t_0)\right] \right.\right.\notag\\
 &&\qquad\qquad\qquad\left.\left.+ \sqrt{ -\Delta} sec\left[\frac{\sqrt{-\Delta}}{2}(t-t_0)\right] \right)\right] + \beta_0.\label{eq: phiforu=quadraticr=1}\end{eqnarray}

For $u(\sigma)$ from line 2 in Table \ref{tb: exactNLS}, equation (\ref{eq: dotsigma-usigma^r0})  is
\begin{equation}\dot\sigma(t)=\frac{a_0}{\theta} \cos^2(b_0\sigma(t))\label{eq: sigmaeqnforu=cos^2r=1}\end{equation}
which has solution
\begin{equation}\sigma(t)=\frac{1}{b_0}Arctan\left(\frac{b_0a_0}{\theta}(t-t_0)\right)\label{eq: sigmaforu=cos^2}\end{equation}
for $t_0\in\mathds{R}$.  One can check this by noting that
\begin{equation}\dot\sigma(t)=\frac{1}{\theta}u(\sigma(t))=\frac{a_0 \theta}{\theta^2+a_0^2b_0^2(t-t_0)^2}.\label{eq: dotsigmaforu=cos^2r=1}\end{equation}
For the record, we also compute
\begin{eqnarray}u'(\sigma(t))&=& -a_0b_0  \sin(2b_0\sigma(t))\notag\\
&=& -a_0b_0  \sin\left(2Arctan\left(\frac{b_0a_0}{\theta}(t-t_0)\right)\right)\notag\\
&=&   \frac{-2b_0^2a_0^2(t-t_0)\theta}{\theta^2+a_0^2b_0^2(t-t_0)^2}\label{eq: uprimesigmaforu=cos^2r=1}
\end{eqnarray}
 since $sin(2Arctan(x))=2cos(Arctan(x))sin(Arctan(x))=\frac{2x}{1+x^2}$.
 
For $P(\sigma)$ from line 2 in Table \ref{tb: exactNLS}, (\ref{eq: psigeneral}) becomes
\begin{eqnarray}\psi(\sigma)
&=&\sqrt{\alpha_0} \displaystyle\int \sqrt{   2b_0^2\tan^2(b_0\sigma)  } d\sigma  \notag\\
&=&\sqrt{\alpha_0} \displaystyle\int \sqrt{   2}b_0 \tan(b_0\sigma)   d\sigma  \notag\\
&=&\sqrt{2\alpha_0} \ln\left[ \sec(b_0\sigma) \right] + \beta_0
\label{eq: psiforcos^2with2btanandc=0}\end{eqnarray}
for $b_0,\sigma>0$ and integration constant $\beta_0\in\mathds{R}$.

\noindent Composing (\ref{eq: psiforcos^2with2btanandc=0}) with (\ref{eq: sigmaforu=cos^2}),
\begin{eqnarray}
\phi(t)&=&\psi(\sigma(t))\notag\\
&=&\sqrt{2\alpha_0} \ln\left[ \sec\left(Arctan(b_0a_0(t-t_0)) \right) \right] + \beta_0\notag\\
&=&\sqrt{2\alpha_0}\ln\left[   \sqrt{b_0^2a_0^2(t-t_0)^2+1}   \right]+\beta_0\notag\\
&=&\sqrt{\frac{\alpha_0}{2}}\ln\left[   \frac{b_0^2a_0^2}{\theta^2}(t-t_0)^2+1   \right]+\beta_0.\label{eq: phiforu=cos^2r=1withc=0}
\end{eqnarray}
 since $sec(Arctan(x))=1/cos(Arctan(x))=\sqrt{1+x^2}$
 
 For $P(\sigma)=4b_0^2\tan^2(b_0\sigma)$ for solution 2 in the table, (\ref{eq: psigeneral}) becomes
 \begin{eqnarray}\psi(\sigma)=\sqrt{\alpha_0}2b_0 \displaystyle\int tan(b_0\sigma) d\sigma=-2\sqrt{\alpha_0}\ln(\cos(b_0 \sigma))+\beta_0 \label{eq: psiforu=cos^2P=tan}
 \end{eqnarray}
for $0<b_0\sigma<\pi/2$ and $\beta_0\in\mathds{R}$.

\noindent Composing (\ref{eq: psiforu=cos^2P=tan}) with (\ref{eq: sigmaforu=cos^2}),
\begin{eqnarray}
\phi(t)&=&-2\sqrt{\alpha_0}\ln\left(\cos\left(Arctan\left(\frac{b_0a_0}{\theta}(t-t_0)\right)\right)\right)\notag\\
&=&\sqrt{\alpha_0}\ln\left(  1+ \frac{b_0^2a_0^2}{\theta^2}(t-t_0)^2 \right)\label{eq: phiforu=cos^2P=tan}
\end{eqnarray}

For $u(\sigma)$ from line 3 in Table \ref{tb: exactNLS}, equation (\ref{eq: dotsigma-usigma^r0})  with $\theta=1$ is
\begin{equation}\dot\sigma(t)=a_0 \tanh(b_0\sigma(t))\label{eq: sigmaeqnforu=tanh}\end{equation}
which has solution
\begin{equation}\sigma(t)=\frac{1}{b_0}Arcsinh\left(e^{a_0b_0(t-t_0)}\right)\label{eq: sigmaforu=tanhr=1}\end{equation}
for $t_0\in\mathds{R}$.  One can check this by computing that
\begin{equation}\dot\sigma(t)=u(\sigma(t))=a_0tanh\left(Arcsinh\left(e^{a_0b_0(t-t_0)}\right)\right)=\frac{a_0e^{a_0b_0(t-t_0)}}{\sqrt{1+e^{2a_0b_0(t-t_0)}}}.\label{eq: dotsigmausigmaforu=tanh}\end{equation}
Also, we will use that 
\begin{eqnarray}u'(\sigma(t))&=&a_0b_0 sech^2(b_0\sigma(t))\notag\\
&=&\frac{a_0b_0}{ cosh^2(Arcsinh\left(e^{a_0b_0(t-t_0)}\right))}\notag\\
&=&\frac{a_0b_0}{ 1+e^{2a_0b_0(t-t_0)}}\notag\\
&=&\frac{a_0b_0}{2} e^{-a_0b_0(t-t_0)}  sech(a_0b_0(t-t_0)) \label{eq: uprimesigmaforu=tanhr=1}
\end{eqnarray}
since $cosh^2(Arcsinh(x))=1+x^2$.

For $P(\sigma)=c_0$ from line 3 in Table \ref{tb: exactNLS}, (\ref{eq: psigeneral}) becomes
\begin{equation}\psi(\sigma)=\displaystyle\int \sqrt{\alpha_0c_0}d\sigma = \sqrt{\alpha_0c_0}\sigma+\beta_0\label{eq: psiforu=tanh}\end{equation}
for integration constant $\beta_0\in\mathds{R}$.

\noindent Composing (\ref{eq: psiforu=tanh}) with (\ref{eq: sigmaforu=tanhr=1}),
\begin{equation}
\phi(t)=\psi(\sigma(t))=\frac{\sqrt{\alpha_0c_0}}{b_0}Arcsinh\left(e^{a_0b_0(t-t_0)}\right)+\beta_0.\label{eq: phiforu=tanhr=1}
\end{equation}

For $u(\sigma)$ from line 4 in Table \ref{tb: exactNLS} with $b_0=0$, equation (\ref{eq: dotsigma-usigma^r0})  with $\theta=1$  is
\begin{equation}\dot\sigma(t)=a_0e^{- \sqrt{-c_0}\sigma(t)}\label{eq: sigmaeqnforu=-expr=general}\end{equation}
which has solution
\begin{equation}\sigma(t)=\frac{1}{\sqrt{-c_0}}\ln\left(\sqrt{-c_0}a_0(t-t_0)\right)\label{eq: sigmaforu=-exprgeneral}\end{equation}
for $a_0>0$ and $t_0\in\mathds{R}$.  One can check this by noting that
\begin{equation}\dot\sigma(t)=u(\sigma(t))=\frac{1}{\sqrt{-c_0}(t-t_0)}.\label{eq: dotsigmausigmaforu=-expr=general}\end{equation}
Also, we record that
\begin{eqnarray}
u'(\sigma(t))&=&-a_0\sqrt{-c_0}e^{-\sqrt{-c_0}\sigma(t)}\notag\\
&=&\frac{- 1}{ (t-t_0)}.\label{eq: uprimesigmaforu=-expr=general}
\end{eqnarray}

For $u(\sigma)$ from line 4 in Table \ref{tb: exactNLS}, equation (\ref{eq: dotsigma-usigma^r0}) for $\theta=1$ is 
\begin{equation}\dot\sigma(t)=a_0e^{-\sqrt{-c_0} \sigma(t)}-b_0e^{\sqrt{-c_0}\sigma(t)}\label{eq: sigmaeqnforu=linearcomboexpr=1}\end{equation}
which has solution
\begin{equation}\sigma(t)=\frac{1}{\sqrt{-c_0}}\ln\left(\sqrt{\frac{a_0}{b_0}}tanh(\sqrt{-a_0b_0c_0}(t-t_0))\right)\label{eq: sigmaforu=linearcomboexpr=1}\end{equation}
for $a_0,b_0>0$ and $t_0\in\mathds{R}$.  One can check this by computing that
\begin{equation}\dot\sigma(t)=u(\sigma(t))=2\sqrt{a_0b_0} \ csch(2\sqrt{-a_0b_0c_0}(t-t_0)).\label{eq: dotsigmausigmaforu=linearcomboexpr=1}\end{equation}
We also record that
\begin{eqnarray}
u'(\sigma(t))&=&-\sqrt{-c_0}\left(a_0e^{-\sqrt{-c_0} \sigma(t)}+b_0e^{\sqrt{-c_0}\sigma(t)}\right)\notag\\
&=&-\sqrt{-c_0a_0b_0}\left(coth(\sqrt{-a_0b_0c_0}(t-t_0)) +tanh(\sqrt{-a_0b_0c_0}(t-t_0))   \right)\notag\\
&=&-2\sqrt{-c_0a_0b_0} \ coth(2\sqrt{-a_0b_0c_0}(t-t_0))\label{eq: uprimesigmaforu=linearcomboexpr=1}
\end{eqnarray}
since $coth(x)+tanh(x)=2coth(2x)$.

For $u(\sigma)$ from line 5 in Table \ref{tb: exactNLS}, equation (\ref{eq: dotsigma-usigma^r0})  with $\theta=1$  is 
\begin{equation}\dot\sigma(t)=\frac{a_0}{\sigma(t)}e^{ c_0 \sigma(t)^2/2}\label{eq: sigmaeqnforu=e^x^2/xr=1cnegative}\end{equation}
which has solution
\begin{equation}\sigma(t)=\sqrt{\frac{-2}{c_0}\ln(-c_0 a_0(t-t_0))}\label{eq: sigmaforu=xe^x^2r=1}\end{equation}
for $c_0<0$ and $t_0\in\mathds{R}$.  One can check this by computing that
\begin{equation}\dot\sigma(t)=u(\sigma(t))=\frac{1}{\sqrt{-2c_0}(t-t_0)\sqrt{\ln(-c_0 a_0(t-t_0))}}.\label{eq: dotsigmausigmaforu=e^x^2/xr=1cnegative}\end{equation}
We also record that 
\begin{eqnarray}
u'(\sigma(t))&=&a_0 e^{c_0\sigma(t)^2/2}\left( c_0 - \frac{1}{\sigma(t)^2}  \right)\notag\\
&=&\frac{-1}{(t-t_0)}  \left( 1 + \frac{1}{2\ln(-c_0 a_0(t-t_0))}  \right).   \label{eq: uprimesigmaforu=e^x^2/xr=1cnegative}
\end{eqnarray}

For $P(\sigma)$ from line 5 in Table \ref{tb: exactNLS}, (\ref{eq: psigeneral}) becomes
\begin{eqnarray}\psi(\sigma)
&=&\sqrt{\alpha_0} \displaystyle\int \sqrt{  c_0^2\sigma^2+\frac{2}{\sigma^2}+b_0   } d\sigma  \notag\\
&=&\frac{ \sqrt{\alpha_0}}{2}\left(\sqrt{c_0^2\sigma^4+2+b_0\sigma^2} + \sqrt{2}\ln[\sigma^2] - \sqrt{2} \ln\left[ 2\sqrt{2}\sqrt{c_0^2\sigma^4+b_0\sigma^2+2}\right.\right.\notag\\
&&\quad \left. \left. +b_0\sigma^2+4\right]+\frac{b_0}{2c_0}\ln\left[2\sqrt{c_0^2\sigma^4+b_0\sigma^2+2}+\frac{b_0}{c_0}+2c_0\sigma^2\right]\right)+\beta_0
\label{eq: psiforP=c^2x^2+2/x^2+b}
\end{eqnarray}
for $\sigma>0$ and integration constant $\beta_0\in\mathds{R}$.

\noindent Composing (\ref{eq: psiforP=c^2x^2+2/x^2+b}) with (\ref{eq: sigmaforu=xe^x^2r=1}),
\begin{eqnarray}
&&\phi(t)=\psi(\sigma(t))=\notag\\
&=&\sqrt{ \frac{ \alpha_0}{2} } \left(\sqrt{2 \ln^2(-c_0 a_0(t-t_0)) + \frac{b_0}{-c_0}\ln(-c_0 a_0(t-t_0)) + 1} \right.\notag\\
&&\quad + \ln\left[\frac{2}{-c_0}\ln(-c_0 a_0(t-t_0))\right] - \ln\left[\frac{2b_0}{-c_0}\ln(-c_0 a_0(t-t_0))+4\right. \notag\\
&&\quad  + \left. 4\sqrt{2 \ln^2(-c_0 a_0(t-t_0)) + \frac{b_0}{-c_0}\ln(-c_0 a_0(t-t_0)) + 1} \right] \notag\\
&&\quad +\frac{b_0}{c_02\sqrt{2}}\ln\left[2\sqrt{2}\sqrt{  2 \ln^2(-c_0 a_0(t-t_0)) + \frac{b_0}{-c_0}\ln(-c_0 a_0(t-t_0)) + 1} \right.\notag\\
&&\left.\left.+\frac{b_0}{c_0}-4\ln(-c_0 a_0(t-t_0))\right]\right)+\beta_0
\label{eq: phiforu=xe^x^2r=1cnegative}\end{eqnarray}
for $c_0<0$ and $t> - 1/c_0a_0 +t_0$.

For $u(\sigma)$ from line 5 in Table \ref{tb: exactNLS} with $c_0=0$, equation (\ref{eq: dotsigma-usigma^r0}) with $\theta=1$  is
\begin{equation}\dot\sigma(t)=\frac{a_0}{\sigma(t)}\label{eq: sigmaeqnforu=e^x^2/x}
\end{equation}
which has solution
\begin{equation}\sigma(t)=\sqrt{ 2a_0(t-t_0) }\label{eq: sigmaforu=e^x^2/x}\end{equation}
for $a_0>0$ and $t>t_0\in\mathds{R}$.  One can check this by computing that
\begin{equation}\dot\sigma(t)=u(\sigma(t))=\frac{a_0}{ \sqrt{ 2a_0(t-t_0) } }.\label{eq: dotsigmausigmaforu=e^x^2/x}\end{equation}
We also record that
\begin{eqnarray}
u'(\sigma(t))&=&\frac{-a_0}{\sigma(t)^2}\notag\\
&=&\frac{-1}{ 2(t-t_0) }\label{eq: uprimesigmaforu=e^x^2/x}
\end{eqnarray}

For $P(\sigma)$ from line 5 in Table \ref{tb: exactNLS} with $c_0=0$, (\ref{eq: psigeneral}) becomes
\begin{eqnarray}\psi(\sigma)
&=&\sqrt{\alpha_0} \displaystyle\int \sqrt{\frac{2}{\sigma^2}+b_0   } d\sigma  \notag\\
&=&\sqrt{\alpha_0} \left(\sqrt{b_0\sigma^2+2}+\sqrt{2}\ln\left[\frac{\sigma}{\sqrt{2}+\sqrt{b_0\sigma^2+2} }\right]
\right)+\beta_0\label{eq: psiforP=2/x^2+b}
\end{eqnarray}
for integration constant $\beta_0\in\mathds{R}$.

\noindent Composing (\ref{eq: psiforP=2/x^2+b}) with (\ref{eq: sigmaforu=e^x^2/x}) for $c_0=0$,
\begin{eqnarray} 
\psi(t)&=&\psi(\sigma(t))\notag\\
&=&\sqrt{\alpha_0} \left(\sqrt{g(t)}+\frac{1}{\sqrt{2}}\ln\left[ 2a_0(t-t_0) \right]-\sqrt{2}\ln\left[\sqrt{2}+\sqrt{g(t)}\right]
\right)+\beta_0\label{eq: phiforu=1/sigmargeneral} \ \ \ \ \ \ \ \ \ \ 
\end{eqnarray}
for $g(t)= 2a_0b_0(t-t_0) +2$.

For $u(\sigma)$ from line 6 in Table \ref{tb: exactNLS}, equation (\ref{eq: dotsigma-usigma^r0})  with $\theta=1$ is
\begin{equation}\dot\sigma(t)= - a_0 \cosh^2(b_0\sigma(t))\label{eq: sigmaeqnforu=cosh^2r=1}\end{equation}
which has solution
\begin{equation}\sigma(t)=\frac{- 1}{b_0}Arctanh(b_0a_0(t-t_0))\label{eq: sigmaforu=cosh^2}\end{equation}
for $t_0\in\mathds{R}$.  One can check this by noting that
\begin{equation}\dot\sigma(t)=u(\sigma(t))=\frac{a_0}{a_0^2b_0^2(t-t_0)^2-1}.\label{eq: dotsigmaforu=cosh^2r=1}\end{equation}
For the record, we also compute
\begin{eqnarray}u'(\sigma(t))&=&  - a_0b_0  \sinh(2b_0\sigma(t))\notag\\
&=& - a_0b_0  \sinh(2Arctanh(b_0a_0(t-t_0)))\notag\\
&=&   \frac{2b_0^2a_0^2(t-t_0)}{a_0^2b_0^2(t-t_0)^2-1}\label{eq: uprimesigmaforu=cosh^2r=1}
\end{eqnarray}
 since $sinh(2Arctanh(x))=2cosh(Arctanh(x))sinh(Arctanh(x))=\frac{2x}{1-x^2}$.
 
 For $P(\sigma)$ from line 6 in Table \ref{tb: exactNLS}, (\ref{eq: psigeneral}) becomes
\begin{eqnarray}\psi(\sigma)
&=&\sqrt{\alpha_0} \displaystyle\int \sqrt{   2b_0^2\tanh^2(b_0\sigma)+c_0  } d\sigma  \notag\\
&=&-\sqrt{2\alpha_0} \ln\left[ 2 \left( \sqrt{2} b_0 \tanh(b_0\sigma) + \sqrt{ c_0 + 2 b_0^2 \tanh^2(b_0\sigma)}\right)\right]\notag\\
&& \quad + \frac{\sqrt{\alpha_0}}{b_0} \sqrt{c_0 + 2b_0^2} Arctanh\left[ \frac{ \sqrt{c_0 + 2b_0^2} tanh(b_0\sigma) }{\sqrt{ c_0  + 2b_0^2 tanh^2(b_0\sigma)} }\right] + \beta_0\label{eq: psiforcosh^2with2btan}\end{eqnarray}
for $c_0\geq - 2b_0^2$ and integration constant $\beta_0\in\mathds{R}$.

 \noindent Composing (\ref{eq: psiforcosh^2with2btan}) with (\ref{eq: sigmaforu=cosh^2}),
\begin{eqnarray}
&&\phi(t)=\psi(\sigma(t)) = \notag\\
&&-\sqrt{2\alpha_0} \ln\left[ 2 \left(  \sqrt{ c_0 + 2 b_0^4  a_0^2(t-t_0)^2  } -\sqrt{2} b_0^2a_0(t-t_0) \right)\right]\notag\\
&& \quad - \frac{\sqrt{\alpha_0}}{b_0} \sqrt{c_0 + 2b_0^2} Arctanh\left[ \sqrt{ \frac{ (c_0 + 2b_0^2)b_0^2a_0^2(t-t_0)^2 }{ c_0 + 2b_0^4a_0^2(t-t_0)^2} }\right] + \beta_0.\label{eq: phiforu=cosh^2r=1}\end{eqnarray}

For $u(\sigma)$ from line 7  in Table \ref{tb: exactNLS}, equation (\ref{eq: dotsigma-usigma^r0}) is
\begin{equation}\dot\sigma(t)=\frac{a_0}{\theta \sigma(t)^{b_0}},\label{eq: sigmaeqnforu=x^b}\end{equation}
which has solution
\begin{equation}\sigma(t)=\left( \frac{(1+b_0)a_0}{\theta}(t-t_0) \right)^{1/(1+b_0)}\label{eq: sigmaforu=x^b}\end{equation}
for $\theta>0$ and $t_0\in\mathds{R}$.  One can check this by computing 
\begin{equation}\dot\sigma(t)=\frac{1}{\theta}u(\sigma(t))=\frac{a_0}{\theta}\left( \frac{(1+b_0)a_0}{\theta}(t-t_0) \right)^{-b_0/(1+b_0)}.\label{eq: dotsigmaforu=x^b}\end{equation}
Also, we will use that 
\begin{equation}u'(\sigma(t))=\frac{-b_0\theta}{(1+b_0)(t-t_0)}.\label{eq: uprimesigmaforu=x^b}\end{equation}

For $P(\sigma)$ from line 7, (\ref{eq: psigeneral}) is
\begin{eqnarray}\psi(\sigma)
&=&\sqrt{\alpha_0}\displaystyle\int \sqrt{ \frac{b_0(b_0+1)}{\sigma^2}+c_0} \ d\sigma \notag\\
&=&\beta_0+\sqrt{\alpha_0}\left( \sqrt{b_0(b_0+1)+c_0\sigma^2} \right.\notag\\
&&\left.- \sqrt{b_0(b_0+1)} \log\left( \frac{\sqrt{b_0(b_0+1)}+\sqrt{b_0(b_0+1)+c_0\sigma^2}}{\sigma} \right) \right) \label{eq: psiforu=x^b}\end{eqnarray}
for $b_0>0$ and $\beta_0\in\mathds{R}$.
Composing (\ref{eq: psiforu=x^b}) with (\ref{eq: sigmaforu=x^b}),
\begin{eqnarray}\phi(t)&=&  \beta_0+\sqrt{\alpha_0}\left( \sqrt{b_0(b_0+1)+c_0\sigma(t)^2} \right.\notag\\
&&\left.- \sqrt{b_0(b_0+1)} \log\left( \frac{\sqrt{b_0(b_0+1)}+\sqrt{b_0(b_0+1)+c_0\sigma(t)^2}}{\sigma(t)} \right) \right) 
\label{eq: phiforu=x^b}\end{eqnarray}
for $\sigma(t)$ in (\ref{eq: sigmaforu=x^b}).

\appendix 
\setcounter{chapter}{5}

\chapter{Extra conservation equation}  

As noted in section 1.2, the correspondences in this thesis do not rely on $T_{ij}^{(2)}$ in (\ref{eq: T2matrix}) satisfying the conservation equation (\ref{eq: T2divzero}).  For each theorem in Chapters 3-7, we now record the EMP or NLS analogue of (\ref{eq: T2divzero}) for $l=0$.  For an arbitrary diagonal metric $g_{ij}$ one can compute, using the definition of the Christoffel symbols in (\ref{eq: christoffel}), that the equation 
\begin{equation}div(T^{(2)})_{l=0}=0\label{eq: divzeroinappx}\end{equation}
for $T_{ij}^{(2)}$ in (\ref{eq:  T2matrix}) takes the form
\begin{equation}\dot\rho+\left(\frac{1}{2}\displaystyle\sum_{i=1}^d g^{ii}\dot{g}_{ii}\right)(\rho+p)=0\label{eq: wrtdiaggij}\end{equation}
where dot denotes differentiation with respect to $x_0=t$.

For the FRLW metric (\ref{eq: FRLWmetric}), the conservation equation (\ref{eq: wrtdiaggij}) is
\begin{equation}\dot\rho+dH(\rho+p)=0\label{eq: FRLWEFEcons}\end{equation}
for $H(t)=\dot{a}(t)/a(t)$.  

Composing (\ref{eq: FRLWEFEcons}) with $f(\tau)$ from the EMP Theorem \ref{thm: FRLWEFE-EMP} and multiplying  by $f'(\tau)$ , we obtain
$\dot\rho(f(\tau))f'(\tau)+dH(f(\tau))f'(\tau)(\rho(f(\tau))+p(f(\tau)) )=0.$
That is, by the relation (\ref{eq: FRLWEMPH-Ytau}) and the definition (\ref{eq: FRLWEMPdottau-Yvarphi-Q}) of $\tau(t)$, we obtain the analogue conservation equation
\begin{equation}
\varrho'+\frac{d}{q}H_Y(\varrho+\textup{ \textlhookp})=0
\end{equation}
in the EMP variables, for $H_Y(\tau)\stackrel{def.}{=}Y'(\tau)/Y(\tau)$, $Y(\tau)=a(f(\tau))^q$ ($q\neq 0$) and $\varrho(\tau)=\rho(f(\tau)), \textup{ \textlhookp}(\tau)=p(f(\tau))$.

Composing (\ref{eq: FRLWEFEcons}) with $g(\sigma)$ from the NLS Theorem \ref{thm: FRLWEFE-NLS} and multiplying by $g'(\sigma)$, we obtain 
$\dot\rho(g(\sigma))g'(\sigma)+dH(g(\sigma))g'(\sigma)(\rho(g(\sigma))+p(g(\sigma)) )=0.$
That is, by the relation (\ref{eq: FRLWNLSH-u}) and the definition (\ref{eq: FRLWNLSdotsigma-upsi-P}) of $\sigma(t)$, we obtain the analogue conservation equation
\begin{equation}
\uprho' - dH_u(\uprho+\mathrm{p})=0
\end{equation}
in the NLS variables, for $H_u(\sigma)\stackrel{def.}{=}u'(\sigma)/u(\sigma)$, $u(\sigma)=1/a(g(\sigma))$ and $\uprho(\sigma)=\rho(g(\sigma)), \mathrm{p}(\sigma)=p(g(\sigma))$.

Composing (\ref{eq: FRLWEFEcons}) with $g(\sigma)$ from the alternate NLS Theorem \ref{thm: FRLWEFE-altNLS} and multiplying by $g'(\sigma)$ we obtain 
$\dot\rho(g(\sigma))g'(\sigma)+dH(g(\sigma))g'(\sigma)(\rho(g(\sigma))+p(g(\sigma)) )=0.$
That is, by the relation (\ref{eq: FRLWNLSaltH-u}) and the definition (\ref{eq: FRLWNLSaltdotsigma-upsi-P}) of $\sigma(t)$, we obtain the analogue equation
\begin{equation}
\uprho' - \frac{2d}{n_j}H_u(\uprho+\mathrm{p})=0
\end{equation}
in the alternate NLS variables for $H_u(\sigma)\stackrel{def.}{=}u'(\sigma)/u(\sigma)$, $u(\sigma)=1/a(g(\sigma))^{n_j/2}$ ($n_j\neq 0$) and $\uprho(\sigma)=\rho(g(\sigma)), \mathrm{p}(\sigma)=p(g(\sigma))$.

For the Bianchi I metric (\ref{eq: BImetric}), the conservation equation (\ref{eq: wrtdiaggij}) is
\begin{equation}\dot\rho+\frac{1}{\nu}H_R(\rho+p)=0\label{eq: BIEFEcons}\end{equation}
for $H_R(t)=\dot{R}(t)/R(t)$ and $R(t)=(X_1(t)\cdots X_d(t))^{\nu}$.

Composing (\ref{eq: BIEFEcons}) with $f(\tau)$ from the EMP Theorem \ref{thm: BIEMP} and multiplying by $f'(\tau)$ we obtain
$\dot\rho(f(\tau))f'(\tau)+\frac{1}{\nu}H_R(f(\tau))f'(\tau)(\rho(f(\tau))+p(f(\tau)) )=0.$
That is, by the relation (\ref{eq: BIEMPH-Y}) and the definition (\ref{eq: BIEMPdottau-Yvarphi-Q}) of $\tau(t)$ we obtain
\begin{equation}
\varrho' + \frac{1}{\nu q}H_Y(\varrho+\textup{\textlhookp})=0
\end{equation}
in the EMP variables for $H_Y(\tau)\stackrel{def.}{=}Y'(\tau)/Y(\tau)$, $Y(\tau)=R(f(\tau))^q$ ($q\neq 0$) and $\varrho(\tau)=\rho(f(\tau)), \textup{ \textlhookp}(\tau)=p(f(\tau))$.

Composing (\ref{eq: BIEFEcons}) with $g(\sigma)$ from the NLS Theorem \ref{thm: BINLS}, multiplying by $g'(\sigma)$ and using (\ref{eq: BINLSH-u}) and (\ref{eq: BINLSdotsigma-upsi-P}) we obtain
\begin{equation}
\uprho' - H_u( \uprho+\mathrm{p})=0
\end{equation}
for $H_u(\sigma)\stackrel{def.}{=}u'(\sigma)/u(\sigma)$, $u(\sigma)=1/R(g(\sigma))^{1/\nu}$ ($\nu\neq 0$) and $\uprho(\sigma)=\rho(g(\sigma)), \mathrm{p}(\sigma)=p(g(\sigma))$.

For the conformal version (\ref{eq: confBImetric}) of the Bianchi I metric, the conservation equation (\ref{eq: wrtdiaggij}) is
\begin{equation}\dot\rho+\frac{1}{\nu}H_R(\rho+p)=0\label{eq: confBIEFEcons}\end{equation}
for $H_R(t)=\dot{R}(t)/R(t)$ and $R(t)=(a_1(t)\cdots a_d(t))^{\nu}$.

Composing (\ref{eq: confBIEFEcons}) with $f(\tau)$ from Theorem \ref{thm: confBIEMP} and using (\ref{eq: confBIdottau-Yvarphi-Q}), (\ref{eq: confBIHR-Y}) we get the analogue conservation equation
\begin{equation}
\varrho'+\frac{1}{q\nu}H_Y(\varrho+\textup{\textlhookp})=0
\end{equation}
in the EMP variables for $H_Y(\tau)\stackrel{def.}{=}Y'(\tau)/Y(\tau)$, $Y(\tau)=R(f(\tau))^q$ ($q\neq 0$) and $\varrho(\tau)=\rho(f(\tau)), \textup{ \textlhookp}(\tau)=p(f(\tau))$.

Composing (\ref{eq: confBIEFEcons}) with $g(\sigma)\stackrel{def.}{=}\sigma+t_0$ from Theorem \ref{thm: confBINLS}
we obtain 
$\dot\rho(\sigma+t_0)+\frac{1}{\nu}H_R(\sigma+t_0)(\rho(\sigma+t_0)+p(\sigma+t_0) )=0.$
By (\ref{eq: confBINLSHR-u}), we obtain
\begin{equation}
\uprho' - H_u(\uprho+\mathrm{p})=0
\end{equation}
for the conservation equation in NLS variables for the Bianchi I metric (\ref{eq: confBImetric}), where $H_u(\sigma)\stackrel{def.}{=}u'(\sigma)/u(\sigma)$, $u(\sigma)=1/R(\sigma+t_0)^{1/\nu}$ and $\uprho(\sigma)=\rho(\sigma+t_0), \mathrm{p}(\sigma)=p(\sigma+t_0)$.

For the Bianchi V metric (\ref{eq: BVmetric}), the conservation equation (\ref{eq: wrtdiaggij}) is
\begin{equation}\dot\rho+\frac{1}{\nu}H_R(\rho+p)=0\label{eq: BVEFEcons}\end{equation}
for $H_R(t)=\dot{R}(t)/R(t)$ and $R(t)=(X_1(t)\cdots X_d(t))^{\nu}$.

Composing (\ref{eq: BVEFEcons}) with $f(\tau)$ from the EMP Theorem \ref{thm: BVEMP} and multiplying by $f'(\tau)$ we obtain
$\dot\rho(f(\tau))f'(\tau)+\frac{1}{\nu}H_R(f(\tau))f'(\tau)(\rho(f(\tau))+p(f(\tau)) )=0.$
That is, by the relation (\ref{eq: BVBIEMPH-Y}) and the definition (\ref{eq: BVBIEMPdottau-Yvarphi-Q}) of $\tau(t)$ we obtain
\begin{equation}
\varrho' + \frac{1}{\nu q}H_Y(\varrho+\textup{\textlhookp})=0
\end{equation}
in the EMP variables for $H_Y(\tau)\stackrel{def.}{=}Y'(\tau)/Y(\tau)$, $Y(\tau)=R(f(\tau))^q$ ($q\neq 0$) and $\varrho(\tau)=\rho(f(\tau)), \textup{ \textlhookp}(\tau)=p(f(\tau))$.

Composing (\ref{eq: BVEFEcons}) with $g(\sigma)$ from the NLS Theorem \ref{thm: BVNLS}, multiplying by $g'(\sigma)$ and using (\ref{eq: BVBINLSH-u}) and (\ref{eq: BVBINLSdotsigma-upsi-P}) we obtain
\begin{equation}
\uprho' - H_u( \uprho+\mathrm{p})=0
\end{equation}
for $H_u(\sigma)\stackrel{def.}{=}u'(\sigma)/u(\sigma)$, $u(\sigma)=1/R(g(\sigma))^{1/\nu}$ ($\nu\neq 0$) and $\uprho(\sigma)=\rho(g(\sigma)), \mathrm{p}(\sigma)=p(g(\sigma))$.

Composing (\ref{eq: BVEFEcons}) with $g(\sigma)$ from the alternate Bianchi V NLS Theorem \ref{thm: altBVNLS} and multiplying by $g'(\sigma)$ we obtain 
$\dot\rho(g(\sigma))g'(\sigma)+\frac{1}{\nu}H_R(g(\sigma))g'(\sigma)(\rho(g(\sigma))+p(g(\sigma)) )=0.$
That is, by the relation (\ref{eq: altBVBINLSH-u}) and the definition (\ref{eq: altBVBINLSdotsigma-upsi-P}) of $\sigma(t)$, we obtain the analogue equation
\begin{equation}
\uprho' - H_u(\uprho+\mathrm{p})=0
\end{equation}
in for $H_u(\sigma)\stackrel{def.}{=}u'(\sigma)/u(\sigma)$, $u(\sigma)=1/R(g(\sigma))^{1/\nu}$ ($\nu\neq 0$) and $\uprho(\sigma)=\rho(g(\sigma)), \mathrm{p}(\sigma)=p(g(\sigma))$.

For the conformal version (\ref{eq: confBVdmetric}) of the Bianchi V metric, the conservation equation (\ref{eq: wrtdiaggij}) is
\begin{equation}\dot\rho+\frac{d}{2\nu}H_R(\rho+p)=0\label{eq: confBVEFEcons}\end{equation}
for $H_R(t)=\dot{R}(t)/R(t)$ and $R(t)=(a_2(t)\cdots a_d(t))^{\nu}$.

Composing (\ref{eq: confBVEFEcons}) with $f(\tau)$ from Theorem \ref{thm: confBVEMP} and using (\ref{eq: confBVdottau-Yvarphi-Q}), (\ref{eq: confBVHR-Y}) we get the analogue conservation equation
\begin{equation}
\varrho'+\frac{d}{2q\nu}H_Y(\varrho+\textup{\textlhookp})=0
\end{equation}
in the EMP variables for $H_Y(\tau)\stackrel{def.}{=}Y'(\tau)/Y(\tau)$, $Y(\tau)=R(f(\tau))^q$ ($q\neq 0$) and $\varrho(\tau)=\rho(f(\tau)), \textup{ \textlhookp}(\tau)=p(f(\tau))$.

Composing (\ref{eq: confBVEFEcons}) with $g(\sigma)\stackrel{def.}{=}\sigma+t_0$ from Theorem \ref{thm: confBVNLS}
we obtain 
$\dot\rho(\sigma+t_0)+\frac{d}{2\nu}H_R(\sigma+t_0)(\rho(\sigma+t_0)+p(\sigma+t_0) )=0.$
By (\ref{eq: confBVNLSHR-u}), we obtain
\begin{equation}
\uprho' - H_u(\uprho+\mathrm{p})=0
\end{equation}
for the conservation equation in NLS variables for the Bianchi V metric in (\ref{eq: confBImetric}) for $H_u(\sigma)\stackrel{def.}{=}u'(\sigma)/u(\sigma)$, $u(\sigma)=1/R(\sigma+t_0)^{2\nu/d}$ and $\uprho(\sigma)=\rho(\sigma+t_0), \mathrm{p}(\sigma)=p(\sigma+t_0)$.


\begin{thebibliography}{99}



\bibitem{BJ} R. Bali and C. Jain, Bianchi I inflationary universe in general relativity, {\it Pramana - Journal of Physics} {\bf 59} (2002), No. 1, pp. 1-7.

\bibitem{BBL} J. D. Barrow, A. B. Burd and D. Lancaster, Three-dimensional classical spacetimes, {\it Classical and Quantum Gravity} {\bf 3} (1986), No. 4, pp. 551-567.

\bibitem{Bazeia} D. Bazeia, Scalar fields:  from domain walls to nanotubes and fulurenes, {\it Brazilian Journal of Physics} {\bf 32} (2002), No. 4, pp. 869-879, e-print arXiv:hep-th/0203050v2.  



\bibitem{CardenasCampo} V. H. C\'ardenas and S. del Campo, Scalar field potentials for cosmology, {\it Physical Review D} {\bf 69} (2004), No. 8, 083508-083513, e-print arXiv:astro-ph/0401031v1.

\bibitem{1} T. Christodoulakis, T. Grammenos, C. Helias, P. Kevrekidis,
  G. Papadopoulos and F. Williams,  On $3+1-$dimensional scalar field
  cosmologies, from {\it Progress in General Relativity and Quantum Cosmology Research},
  Nova Science Publishers (2006), pp. 37-48, e-print arXiv:gr-qc/0408056v1.
  
  \bibitem{1.5} T. Christodoulakis, T. Grammenos, C. Helias, P. Kevrekidis and A. Spanou, Decoupling of the general scalar field mode and the solution space for Bianchi type I and V cosmologies, {\it Journal of Mathematical Physics} {\bf 47} (2006), 042505.


\bibitem{Dabrowska} J. W. Dabrowska, A. Khare and U. P. Sukhatme, Explicit wavefunctions for shape-invariant potentials by operator techniques, {\it Journal of Physics A} {\bf 21} (1988), L195-L200.

  \bibitem{Dadhich} N. Dadhich, Derivation of the Raychaudhuri Equation, (2005), arXiv: gr-gc/0511123 v2.



\bibitem{JDconfBI} J. D'Ambroise, EMP and linear Schr\"odinger models for a conformally Bianchi I cosmology, to appear in {\it {International Journal of Pure and Applied Mathematics}}, e-print arXiv: hep-th/0809.4817.

  \bibitem{JD1} J. D'Ambroise, A Schr\"odinger formulation of Bianchi I scalar field cosmology, {\it International Journal of Pure and Applied Mathematics} {\bf 42} (2008), No. 3, pp. 405-410, e-print arXiv: hep-th/0711.3916v1.
  
  \bibitem{JDVarna} J. D'Ambroise, EMP reformulations of Einstein's equations as an application of a property of suitable second order differential equations, from {\it Proceedings of the 8th International Workshop on Lie Theory and Its Applications in Physics}, Varna, Bulgaria (2009), to appear.

\bibitem{JDFW} J. DÕAmbroise and F. L. Williams, A Non-linear Schr\"odinger Type Formulation of FLRW Scalar Field Cosmology, {\it International Journal of Pure and Applied Mathematics} 
{\bf 34} (2007), No. 1, pp. 117-126, e-print arXiv: hep-th/0609125v1.


\bibitem{EM} G. Ellis and M. Madsen, Exact scalar field cosmologies, {\it Classical and Quantum Gravity} {\bf 8} (1991), pp. 667-676.

\bibitem{ndimFRLW} A. Garcia and S. Carlip, $n$-dimensional generalizations of the Friedmann-Robertson-Walker cosmology, {\it Physics Letters B} {\bf 645} (2007), No. 2-3, pp. 101-107.

\bibitem{GarcCatCamp} A. Garcia, M. Cataldo and S. del Campo, Relationship between $(2+1)$ and $(3+1)$-Friedmann-Robertson-Walker cosmologies, {\it General Relativity and Gravitation} {\bf 37} (2005), No. 4, pp. 685-711, e-print arXiv:hep-th/0309098v1.


\bibitem{Gumjud} B. Gumjudpai, Power-law expansion cosmology in Schr\"odinger-type formulation, {\it Astroparticle Physics} {\bf 30} (2008), No. 4, pp. 186-191.

\bibitem{Gumjudexact} B. Gumjudpai, Scalar field exact solutions for non-flat FLRW cosmology:  a technique from non-linear Schr\"odinger-type formulation, {\it General Relativity and Gravitation} {\bf 41} (2009), No. 2, pp. 249-265.


\bibitem{HL} R. Hawkins and J. Lidsey, The Ermakov-Pinney equation in scalar field cosmologies, {\it Physical Review D} {\bf 66} (2002), 023523-023531, e-print arXiv: astro-ph/0112139v2.

\bibitem{Joseph} V. Joseph, A spatially homogeneous gravitational field, {\it Mathematical Proceedings of the Cambridge Philosophical Society} {\bf 62} (1966), pp. 87-89.

\bibitem{4} P. Kevrekidis and F. Williams, On $2+1-$dimensional
  Friedmann-Robertson-Walker universes: an Ermakov-Pinney equation
  approach, {\it Classical and Quantum Gravity} {\bf 20} (2003), L177-L184.
  



\bibitem{Levai} G. Levai, A search for shape-invariant solvable potentials, {\it Journal of Physics A} {\bf 22} (1989), pp. 689-702.

\bibitem{L} J. Lidsey, Multiple and anisotropic inflation with exponential potentials, {\it Classical and Quantum Gravity} {\bf 9} (1992), pp. 1239-1253.


  
  \bibitem{Lima} J. Lima, Note on solving for the dynamics of the universe, {\it American Journal of Physics} {\bf 69} (2001), No. 12, pp. 1245-1247, e-print arXiv:astro-ph/0109215v1.
  
  \bibitem{LP} D. Lorenz-Petzold, Higher-Dimensional Bianchi Type-V Cosmologies, {\it Letters in Mathematical Physics} {\bf 10} (1985), pp. 279-282.

  
  \bibitem{Milne} W. Milne, The Numerical Determination of Characteristic Numbers, {\it { Physical Review }} {\bf 35} (1930), pp. 863-870.

  
\bibitem{OT} M. {\"O}zer and M. Taha, Exact solutions in string-motivated
  scalar field cosmology, {\it Physical Review D} {\bf 45} (1992) R997-R999.
  
  \bibitem{Padman} T. Padmanabhan, Accelerated expansion of the universe driven by tachyonic matter, {\it Physical Review D} {\bf 66} (2002), D021301-D021305,  e-print arXiv:hep-th/0204150.


\bibitem{lsode} K. Radhakrishnan and A. C. Hindmarsh, {\it Description and Use of LSODE, the Livermore Solver for Ordinary Differential Equations}, NASA Reference Publication (1993), Glenn Research Center, Document ID 19940030753, e-print $http://ntrs.nasa.gov/archive/nasa/casi.ntrs.nasa.gov/$ $19940030753\_1994$ $030753.pdf$.


\bibitem{RRS} R. De Ritis, C. Rubano and P. Scudellaro, Some considerations on inflationary cosmology with matter and scalar field, {\it Il Nuovo Cimento B} {\bf 108} (1993), No. 11, pp. 1243-1252.



\bibitem{8} F. L. Williams, Einstein field equations: An alternate
  approach towards exact solutions for an FRW universe, from {\it Proceedings of
  the Sixth Alexander Friedmann International Seminar on Gravitation and
  Cosmology}, Carg{\`e}se, France (2004), {\it International Journal of Modern Physics
  A} {\bf 20} (2005), pp. 2481-2484.  


\bibitem{FWEMPBI} F. L. Williams, An EMP model of Bianchi I cosmology, from {\it Proceedings of the eleventh Marcel Grossmann Meeting On General Relativity}, Berlin, Germany, 2006 (2008), Vol. 3, pp. 2222-2224.






\end{thebibliography}
\end{document}